\setlist{noitemsep,topsep=0pt,parsep=0pt,partopsep=0pt,listparindent=\parindent}
\tikzset{every fit/.append style=text badly centered}
\newcommand{\Holant}{\operatorname{Holant}}
\newcommand{\PlHolant}{\operatorname{Pl-Holant}}
\newcommand{\holant}[2]{\Holant(#1\mid #2)}
\newcommand{\plholant}[2]{\PlHolant(#1\mid #2)}
\newcommand{\AllDistinct}{\textsc{All-Distinct}}
\newcommand{\AD}{\operatorname{AD}}
\newcommand{\subMat}[4]{\mbox{\scriptsize $\begin{smallmatrix} #1 & #4 \\ #2 & #3 \end{smallmatrix}$}}
\newenvironment{remark}{\medskip{\bfseries \noindent Remark:}}{\par\medskip}{\par\medskip}
\def\borderColor{blue!60}
\def\arrowType{new open triangle 60}
\def\scale{0.6}
\def\nodeDist{1.4cm}
\tikzstyle{internal} = [draw, fill, shape=circle]
\tikzstyle{external} = [shape=circle]
\tikzstyle{square}   = [draw, fill, rectangle, inner sep=5pt]
\tikzstyle{oval}     = [draw, fill, ellipse, minimum width=15pt, inner sep=2.5pt]
\tikzstyle{triangle} = [draw, fill, regular polygon, regular polygon sides=3, inner sep=2.5pt] 
\title{The Complexity of Counting Edge Colorings\\and a Dichotomy for Some Higher Domain Holant Problems}
\author{
 Jin-Yi Cai\\
 \scriptsize University of Wisconsin--Madison\\
 \footnotesize \texttt{jyc@cs.wisc.edu}
 \and
 Heng Guo\\
 \scriptsize University of Wisconsin--Madison\\
 \footnotesize \texttt{hguo@cs.wisc.edu}
 \and
 Tyson Williams\\
 \scriptsize University of Wisconsin--Madison\\
 \footnotesize \texttt{tdw@cs.wisc.edu}
}
\date{} 
\begin{document}
\maketitle

\begin{abstract}
 We show that an effective version of Siegel's Theorem on finiteness of integer solutions and an application of elementary Galois theory
 are key ingredients in a complexity classification of some Holant problems.
 These Holant problems, denoted by $\Holant(f)$,
 are defined by a symmetric ternary function $f$ that is invariant under any permutation of the $\kappa \ge 3$ domain elements.
 We prove that $\Holant(f)$ exhibits a complexity dichotomy.
 This dichotomy holds even when restricted to planar graphs.
 A special case of this result is that counting edge $\kappa$-colorings is $\SHARPP$-hard over planar $3$-regular graphs for $\kappa \ge 3$.
 In fact, we prove that counting edge $\kappa$-colorings is $\SHARPP$-hard over planar $r$-regular graphs for all $\kappa \ge r \ge 3$.
 The problem is polynomial-time computable in all other parameter settings.
 The proof of the dichotomy theorem for $\Holant(f)$ depends on the fact that a specific polynomial $p(x, y)$ has an explicitly listed finite set of integer solutions,
 and the determination of the Galois groups of some specific polynomials.
 In the process,
 we also encounter the Tutte polynomial, medial graphs, Eulerian partitions, Puiseux series,
 and a certain lattice condition on the (logarithm of) the roots of polynomials.
\end{abstract}

\section{Introduction}

What do Siegel's Theorem and Galois theory have to do with complexity theory?
In this paper,
we show that an effective version of Siegel's Theorem on finiteness of integer solutions and an application of elementary Galois theory
are key ingredients in a chain of steps that lead to a complexity classification of some counting problems.
More specifically,
we consider a certain class of counting problems that are expressible as Holant problems
with an arbitrary domain of size $\kappa$ over $3$-regular graphs,
and prove a dichotomy theorem for this class of problems.
This dichotomy also holds when restricted to planar graphs.
Among other things, the proof of the dichotomy theorem depends on 
the following: (A) a specific polynomial
$p(x,y) = x^5 - 2 x^3 y - x^2 y^2 - x^3 + x y^2 + y^3 - 2 x^2 - x y$
has only the following integer solutions
$(x,y) = (-1,1), (0,0), (1,-1), (1,2), (3,3)$,
and (B) the determination of the Galois groups of some specific polynomials.
In the process,
we also encounter the Tutte polynomial, medial graphs, Eulerian partitions, Puiseux series,
and a certain lattice condition on the (logarithm of) the roots of polynomials such as $p(x,y)$.

A special case of this dichotomy theorem is the problem of counting edge colorings over planar $3$-regular graphs using $\kappa$ colors.
In this case,
the corresponding constraint function is the $\AllDistinct_{3,\kappa}$ function,
which takes value~$1$ when all three inputs from $[\kappa]$ are distinct and~$0$ otherwise.
We further prove that the problem using $\kappa$ colors over $r$-regular graphs is $\SHARPP$-hard for all $\kappa \ge r \ge 3$,
even when restricted to planar graphs.
The problem is polynomial-time computable in all other parameter settings.
This solves a long-standing open problem.

We give a brief description of the framework of Holant problems~\cite{CLX11c, CLX12, CLX09a, CLX11d}.
The problem $\Holant(\mathcal{F})$,
defined by a set of functions $\mathcal{F}$,
takes as input a \emph{signature grid} $\Omega = (G, \pi)$,
where $G = (V,E)$ is a graph,
$\pi$ assigns each $v \in V$ a function $f_v \in \mathcal{F}$,
and $f_v$ maps $[\kappa]^{\deg(v)}$ to $\mathbb{C}$ for some integer $\kappa \ge 2$.
An edge $\kappa$-labeling $\sigma : E \to [\kappa]$ gives an evaluation $\prod_{v \in V} f_v(\sigma \mid_{E(v)})$,
where $E(v)$ denotes the incident edges of $v$ and $\sigma \mid_{E(v)}$ denotes the restriction of $\sigma$ to $E(v)$.
The counting problem on the instance $\Omega$ is to compute
\[
 \Holant(\Omega, \mathcal{F}) = \sum_{\sigma : E \to [\kappa]} \prod_{v \in V} f_v\left(\sigma \mid_{E(v)}\right).
\]
Counting edge $\kappa$-colorings on $r$-regular graphs amounts to setting $f_v = \AllDistinct_{r,\kappa}$ for all $v$.

An edge $\kappa$-coloring of a graph $G$ is an edge $\kappa$-labeling of $G$ such that any two incident edges have different colors.
A fundamental problem in graph theory is to determine how many colors are required to edge color $G$.
The obvious lower bound is $\Delta(G)$,
the maximum degree of the graph.
By Vizing's Theorem~\cite{Viz65},
an edge coloring using just $\Delta(G) + 1$ colors always exists.
Whether $\Delta(G)$ colors suffice depends on the graph $G$.

Consider the edge coloring problem over $3$-regular graphs.
It follows from the parity condition (Lemma~\ref{lem:k=r:parity_condition})
that any graph containing a bridge does not have an edge $3$-coloring.
For bridgeless planar graphs,
Tait~\cite{Tai80} showed that the existence of an edge $3$-coloring is equivalent to the Four-Color Theorem.
Thus, the answer for the decision problem over planar $3$-regular graphs is that there is an edge $3$-coloring iff the graph is bridgeless.

Without the planarity restriction,
determining if a $3$-regular graph has an edge $3$-coloring is $\NP$-complete~\cite{Hol81}.
This hardness extends to finding an edge $\kappa$-coloring over $\kappa$-regular graphs for all $\kappa \ge 3$~\cite{LG83}.
However, these reductions are not parsimonious,
and, in fact, it is claimed that no parsimonious reduction exists unless $\P = \NP$~\cite[p.~118]{Wel93}.
The counting complexity of this problem has remained open.

We prove that counting edge colorings over planar regular graphs is $\SHARPP$-hard.%
\footnote{Vizing's Theorem is for simple graphs.
In Holant problems as well as counting complexity such as Graph Homomorphism or counting CSP,
one typically considers multigraphs (i.e.~self-loops and parallel edges are allowed).
However, our hardness result for counting edge $3$-colorings over planar $3$-regular graphs also holds for simple graphs (Theorem~\ref{thm:edge_coloring:k=r=3_simple}).}

\begin{theorem} \label{thm:edge_coloring}
 \#$\kappa$-\textsc{EdgeColoring} is $\SHARPP$-hard over planar $r$-regular graphs for all $\kappa \ge r \ge 3$.
\end{theorem}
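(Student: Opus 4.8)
The plan is to prove the base case $r = 3$ from the main dichotomy theorem, and then reduce all larger $r$ to smaller ones by planar gadget constructions.

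\emph{The case $r = 3$.} Counting edge $\kappa$-colorings over (planar) $3$-regular graphs is precisely $\Holant(\AD_{3,\kappa})$ over (planar) graphs, since the ternary signature $\AD_{3,\kappa}$ can only sit at a degree-$3$ vertex. As $\AD_{3,\kappa}$ is symmetric and invariant under every permutation of $[\kappa]$, the dichotomy theorem applies; it then suffices to observe that for every $\kappa \ge 3$ the signature $\AD_{3,\kappa}$ is not one of the explicitly listed tractable cases, so $\Holant(\AD_{3,\kappa})$ is $\SHARPP$-hard over planar graphs. This proves the theorem when $r = 3$.

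\emph{Reducing larger $r$ to smaller $r$ (odd case).} For odd $r \ge 5$ I would build, out of copies of $\AD_{r,\kappa}$, a planar gadget with three dangling edges whose signature is a nonzero scalar multiple of $\AD_{3,\kappa}$, and then replace every vertex of a planar $3$-regular instance by this gadget; the result is a planar $r$-regular instance, yielding a reduction from the $r = 3$ case. The key building block is a degree-lowering \emph{cap}: a planar fragment, every vertex of degree $r$, with two dangling edges, realizing a binary signature $a\,(=) + b\,(\ne)$. Attaching a cap to two edges of an $\AD_{r,\kappa}$-vertex lowers its effective degree by two, because the all-distinct constraint discards the $(=)$ part while the $(\ne)$ part contributes a color-independent multiplicative constant; iterating $(r-3)/2$ times reduces a single $\AD_{r,\kappa}$-vertex to an effective $\AD_{3,\kappa}$ up to a nonzero factor, and placing the three live ports consecutively around the central vertex keeps the gadget planar. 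For the simplest cap --- two degree-$r$ vertices joined by $r-1$ parallel edges with one dangling edge each --- one computes $b = (\kappa-2)(\kappa-3)\cdots(\kappa-r)$, so this works whenever $\kappa > r$.

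\emph{Two obstacles.} First, a parity obstruction: any gadget assembled solely from the $r$-ary signature $\AD_{r,\kappa}$ has an even number of dangling edges, so no three-port gadget exists when $r$ is even and the scheme above fails outright. For even $r$ I would instead reduce to the $4$-regular case --- counting edge $\kappa$-colorings of planar $4$-regular graphs --- proving that case separately (where the correspondence between the medial graph, Eulerian partitions, and the Tutte polynomial is the natural tool for obtaining $\SHARPP$-hardness), and then lowering from $r$ down to $4$ with the same cap construction. Second, and I expect this to be the main obstacle, the boundary case $\kappa = r$: here the color budget has no slack (a proper $\kappa$-edge-coloring of a $\kappa$-regular graph is a $1$-factorization), the coefficient $b = (\kappa-2)(\kappa-3)\cdots(\kappa-r)$ vanishes, and the simple cap produces the identically-zero signature. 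One then needs a more elaborate gadget that explicitly threads the $\kappa - 3$ spectator colors through its interior while still contributing a nonzero constant; if no single clean constant can be arranged, the fallback is to recover the original count by polynomial interpolation over a family of such gadgets, and in all of this one must be careful to keep the constructions planar. Designing these gadgets at $\kappa = r$ is, I expect, the crux of the argument.
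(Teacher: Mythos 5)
There are two genuine gaps. First, your base case is circular for $\kappa = 3$. You derive hardness of $\PlHolant(\AD_{3,\kappa})$ from Theorem~\ref{thm:dichotomy}, but the paper proves the edge-coloring theorem first and then uses it inside the dichotomy proof: for $\langle 0,0,1\rangle$ on domain size $3$, both $a + (\kappa-1)b = 0$ and $b^2 - 4bc - (\kappa-3)c^2 = 0$ hold, so the dichotomy routes through Lemma~\ref{lem:unary:dichotomy:AD-like}, whose $\kappa = 3$ branch cites Theorem~\ref{thm:edge_coloring:k=r} directly. That citation is not removable, because the interpolation argument in that lemma requires $\kappa - 1$ and $\kappa - 2$ to be relatively prime integers greater than $1$, which fails at $\kappa = 3$. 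Second --- and you flagged this yourself --- the case $\kappa = r$, where your cap coefficient vanishes, is the actual crux, and the proposal gives no construction for it. The paper's resolution: Ellis-Monaghan's identity (Lemma~\ref{lem:tutte_connection}) expresses $\kappa\operatorname{T}(G;\kappa+1,\kappa+1)$ as a weighted sum over Eulerian partitions of the directed medial graph; as a planar Holant problem on $4$-regular graphs this is $\PlHolant(\langle 2,1,0,1,0\rangle)$ with succinct type $\tau_{\text{color}}$ (Lemma~\ref{lem:holant_connection}), hence $\SHARPP$-hard (Corollary~\ref{cor:k=r:21010_hard}). Then a quaternary $\{\AD_{\kappa,\kappa}\}$-gadget --- two $\AD_{\kappa,\kappa}$-vertices joined by $\kappa - 2$ parallel edges, with signature $\langle 0,1,1,0,0\rangle$ --- is fed into a recursive construction whose iterates have succinct signature $\langle y+1,\, y,\, 0,\, 1,\, 0\rangle$ with $y = ((\kappa-1)^{2s}-1)/\kappa$, and polynomial interpolation recovers $\langle 2,1,0,1,0\rangle$ at $y=1$. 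You guessed interpolation would be the fallback, which is right, but identifying the target signature and a working gadget is the substance of the argument, and neither appears in your proposal.

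A smaller point: for even $r$ with $\kappa > r$, your plan to prove a separate $4$-regular base case via the medial graph and then cap down from $r$ to $4$ does not go through as stated. The medial-graph argument yields hardness of $\PlHolant(\langle 2,1,0,1,0\rangle)$, not of $\PlHolant(\AD_{4,\kappa})$; the gadget bridging the two is the one above, and it is specific to $\kappa = r$ --- for $\kappa > 4$ the analogous two-$\AD_{4,\kappa}$-vertex gadget has a different signature, and the parity lemma used to kill the $P_0$ entry of the succinct type requires $\kappa$-regularity. The paper sidesteps the parity obstruction at the instance level rather than the gadget level: compute a planar pairing of the $3$-regular instance (Lemma~\ref{lem:planar_pairing}), insert a $\langle 1,1\rangle$-vertex on each pairing edge to obtain a $4$-regular instance with the same Holant up to a nonzero factor, and then realize each degree-$4$ vertex from $\AD_{r,\kappa}$ with self-loops (Lemma~\ref{lem:reduction_to_AD3}); no separate $4$-regular base case is needed. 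Your parity remark is also imprecise: an $\{\AD_{r,\kappa}\}$-gate on $n$ vertices with $m$ internal edges has $nr - 2m$ dangling edges, which is forced to be even only when $r$ itself is even. The cap construction you describe for $\kappa > r$ is otherwise sound and matches the spirit of Lemma~\ref{lem:k>r:edge_coloring:binary:interpolation} and Lemma~\ref{lem:reduction_to_AD3}.
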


\noindent
This theorem is proved in Theorem~\ref{thm:edge_coloring:k=r} for $\kappa = r$
and Theorem~\ref{thm:edge_coloring:k>r} for $\kappa > r$.

The techniques we develop to prove Theorem~\ref{thm:edge_coloring} naturally extend to a class of Holant problems with domain size $\kappa \ge 3$ over planar $3$-regular graphs.
Functions such as $\AllDistinct_{3,\kappa}$ are symmetric,
which means that they are invariant under any permutation of its~$3$ inputs.
But $\AllDistinct_{3,\kappa}$ has another invariance---it is invariant under any permutation of the $\kappa$ domain elements.
We call the second property \emph{domain invariance}.

A ternary function that is both symmetric and domain invariant is specified by three values,
which we denote by $\langle a,b,c \rangle$.
The output is $a$ when all inputs are the same,
the output is $c$ when all inputs are distinct,
and the output is $b$ when two inputs are the same but the third input is different.

We prove a dichotomy theorem for such functions with complex weights.

\begin{theorem} \label{thm:dichotomy:simple}
 Suppose $\kappa \ge 3$ is the domain size and $a,b,c \in \mathbb{C}$.
 Then $\PlHolant(\langle a,b,c \rangle)$ is either computable in polynomial time or $\SHARPP$-hard.
 Furthermore, given $a,b,c$, there is a polynomial-time algorithm that decides whether $\langle a,b,c \rangle$ is in polynomial time or $\SHARPP$-hard.
\end{theorem}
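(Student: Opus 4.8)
The plan is to classify $\PlHolant(\langle a,b,c\rangle)$ by reducing it, via standard Holant machinery, to instances that either collapse to a tractable case or can be shown $\SHARPP$-hard through a chain of gadget constructions, ultimately bottoming out at a known hard problem (such as counting edge $\kappa$-colorings, i.e.\ $\AllDistinct_{3,\kappa}=\langle 0,0,1\rangle$, whose hardness is Theorem~\ref{thm:edge_coloring}). First I would dispose of the degenerate and obviously tractable families: if the signature $\langle a,b,c\rangle$ factors through a unary function, or corresponds to a generalized equality/disequality (e.g.\ $b=c$, making the function depend only on whether all inputs agree), or otherwise lies in an affine or product-type tractable class, then $\PlHolant$ is polynomial-time computable, typically by a Fisher--Kasteleyn--Temperley-style argument in the planar setting or by a direct combinatorial formula. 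These cases should be cut out first so that the remaining analysis deals with a genuinely "generic" ternary signature.

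Next, for the remaining signatures I would use interpolation and gadget composition to pin down hardness. The key tool is that from a single vertex signature on a $3$-regular planar graph one can build, by connecting copies in planar gadgets, a rich family of derived signatures; repeated composition of a fixed gadget and polynomial interpolation (à la Valiant--Vadhan--Cai) lets one realize enough signatures to simulate a problem already known to be $\SHARPP$-hard. Concretely, I expect the argument to route through the binary signatures obtainable by "merging" two edges of $\langle a,b,c\rangle$, analyze the eigenvalues of the associated transfer matrix, and show that unless those eigenvalues satisfy a degenerate multiplicative relation (which lands us back in a tractable case) one can interpolate all diagonal binary signatures and hence reduce from a hard $\#\mathrm{CSP}$-type or edge-coloring-type problem. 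The domain-invariance symmetry is what keeps this finite-dimensional and tractable: the relevant transfer matrices act on a low-dimensional space of $S_\kappa$-invariant functions, so the whole classification reduces to understanding a bounded number of algebraic conditions on $a,b,c$ (and on $\kappa$).

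The decidability clause — that one can decide tractability versus hardness from $a,b,c$ in polynomial time — then follows because the tractable cases are cut out by explicitly checkable polynomial (in)equalities in $a,b,c$ (and congruence/divisibility conditions on $\kappa$), so the algorithm simply tests membership in each tractable family and declares $\SHARPP$-hardness otherwise. This is where the number-theoretic and Galois-theoretic ingredients advertised in the abstract enter: the boundary between "the interpolation succeeds" and "we fall into a tractable family" is governed by whether certain polynomials (the characteristic polynomials of the gadget transfer matrices, parametrized by $a,b,c$ and $\kappa$) have roots whose logarithms satisfy a lattice/multiplicative-independence condition, and the finiteness of the exceptional parameter settings is exactly where the effective Siegel's Theorem (controlling integer points on the curve $p(x,y)=0$) and the Galois-group computations are used to certify that the only obstructions are the explicitly listed ones.

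The main obstacle I anticipate is precisely this last point: showing that the interpolation step fails only on a provably complete, explicitly enumerated list of exceptional $(a,b,c,\kappa)$. Generic interpolation arguments produce a "failure set" defined by the vanishing of some resultant or by a multiplicative dependence among eigenvalues; turning that into a finite, fully classified list — and handling each exceptional case by a separate ad hoc gadget or by recognizing it as tractable — is the technically hardest part, and it is what forces the use of Siegel's Theorem and the determination of Galois groups rather than a purely elementary argument. The reduction from $\AllDistinct_{3,\kappa}$ for the base hardness (Theorem~\ref{thm:edge_coloring}) and the polynomial-time recognition of the tractable families are comparatively routine once the algebraic boundary is understood.
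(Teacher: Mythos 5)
Your outline captures the broad strategy — carve out tractable cases, then prove hardness by interpolation over gadget constructions with an eigenvalue/lattice analysis, using Siegel's theorem and Galois theory at the boundary — and this is indeed the shape of the paper's proof. But a few of your load-bearing claims are off in ways that matter.

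First, the tractability side does not use any Fisher--Kasteleyn--Temperley / matchgate argument. The tractable families in the paper's explicit dichotomy (Theorem~\ref{thm:dichotomy}) are the degenerate case $a=b=c$, the $\kappa=3$ signature $\langle a,b,a\rangle$, (an orthogonal transform of) the ternary \textsc{Equality} $\langle 1,0,0\rangle$, a $\kappa=3$ affine case $\langle 1,0,\omega\rangle$ with $\omega^3=1$, and a $\kappa=4$ case that is a graph homomorphism defined by a Hadamard matrix. All of these are tractable even without the planarity restriction, via affine/product-type algorithms or the graph-homomorphism dichotomy, not via Pfaffians. Planarity in this paper is used to \emph{widen} the hardness side (the base hard problems are evaluations of the Tutte polynomial over planar graphs, Theorem~\ref{thm:tutte}), not to create new tractable cases.

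Second, you frame the exceptional parameter settings as forcing a collapse "back to a tractable case." In the paper this is often false: when the interpolation or gadget construction fails at some step (e.g., when $\mathfrak{B}=0$, or when $\langle 1\rangle$ cannot be constructed, or when $\mathfrak{A}=0$), the proof does not default to tractability but instead must produce a \emph{separate hardness argument} for that exceptional subcase (Corollary~\ref{cor:unary:dichotomy:a+(k-3)b-(k-2)c=0}, Lemmas~\ref{lem:unary:dichotomy:AD-like}, \ref{lem:unary:<(k-1)(k-2),-(k-2),2>}, \ref{lem:ternary:3k-1k-3-3_hard_k>3}, \ref{lem:ternary:-201}). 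This is precisely where the Puiseux-series/effective-Siegel argument and the Galois-group computations actually live: they certify the lattice condition for the ``stuck'' cases $\langle 3(\kappa-1),\kappa-3,-3\rangle$ where the local-holographic-transformation route is unavailable. Your sketch treats the failure locus as if it were automatically contained in the tractable set, which would make the number theory unnecessary.

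Third, a smaller point: the hardness reductions do not ultimately bottom out at edge coloring. Edge coloring ($\PlHolant(\AD_{3,\kappa})$) is itself a theorem proved in this paper from the Tutte-polynomial starting points (the diagonal point $(\kappa+1,\kappa+1)$ via the directed-medial-graph / weighted Eulerian-partition connection, and the chromatic-polynomial point $(1-\kappa,0)$). The dichotomy's own hardness chain goes directly back to Theorem~\ref{thm:tutte}, and the staging is specifically: construct $\langle 1\rangle$, then interpolate all succinct binaries $\langle x,y\rangle$ (via the anti-gadget generalization and Eigenvalue Shifted Triples), then construct a ternary $\langle a',b',b'\rangle$ with $a'\ne b'$ and use the Bobby Fischer gadget to get $=_4$ and hence vertex coloring. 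None of that staging appears in your sketch, and the middle stage (interpolating \emph{all} binaries from a single ternary plus $\langle 1\rangle$) is where most of the case analysis is concentrated.

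So: right genre of proof, right tools named, but you have the role of planarity inverted, and you have missed that the exceptional failure cases require their own $\SHARPP$-hardness proofs rather than falling into the tractable list.
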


\noindent
See Theorem~\ref{thm:dichotomy} for an explicit listing of the tractable cases.
Note that counting edge $\kappa$-colorings over $3$-regular graphs is the special case when $\langle a,b,c \rangle = \langle 0,0,1 \rangle$.

There is only one previous dichotomy theorem for higher domain Holant problems~\cite{CLX13} (see Theorem~\ref{thm:tractable:holant-star}).
The important difference is that the present work is for general domain size $\kappa \ge 3$ while the previous result is for domain size $\kappa = 3$.
When restricted to domain size~$3$,
the result in~\cite{CLX13} assumes that all unary functions are available,
while this dichotomy does not assume that; however it does assume domain invariance.
Dichotomy theorems for an arbitrary domain size are generally difficult to prove.
The Feder-Vardi Conjecture for decision Constraint Satisfaction Problems (CSP) is still open~\cite{FV99}.
It was a major achievement to prove this conjecture for domain size~$3$~\cite{Bul06}.
The counting CSP dichotomy was proved after a long series of work~\cite{BD07, Bul08, BG05, DGJ09, BDGJR09, CLX09a, CCL11, CHL12, DR10, GHLX11, CK12, CC12}.

Our proof of Theorem~\ref{thm:dichotomy:simple} has many components,
and a number of new ideas are introduced in this proof.
We discuss some of these ideas and give an outline of our proof in Section~\ref{sec:outline}.
In Section~\ref{sec:preliminaries},
we review basic terminology and define the notation of a \emph{succinct signature}.
Section~\ref{sec:coloring} contains our proof of Theorem~\ref{thm:edge_coloring} about edge coloring.
In Section~\ref{sec:tractable},
we discuss the tractable cases of Theorem~\ref{thm:dichotomy:simple}.
In Section~\ref{sec:interpolation},
we extend our main proof technique of polynomial interpolation.
Then in Sections~\ref{sec:ternary},~\ref{sec:unary}, and~\ref{sec:binary},
we develop our hardness proof
and tie everything together in Section~\ref{sec:dichotomy},

\section{Proof Outline and Techniques} \label{sec:outline}

As usual,
the difficult part of a dichotomy theorem is to carve out \emph{exactly} the tractable problems in the class,
and prove all the rest $\SHARPP$-hard.
A dichotomy theorem for Holant problems has the additional difficulty that
some tractable problems are only shown to be tractable under a holographic transformation,
which can make the appearance of the problem rather unexpected.
For example,
we show in Section~\ref{sec:tractable} that the problem
$\Holant(\langle -3 - 4 i, 1, -1 +  2 i \rangle)$ on domain size~$4$ is tractable.
Despite its appearance,
this problem is intimately connected with a tractable graph homomorphism problem defined by the Hadamard matrix
$\left[\begin{smallmatrix*}[r] 1 & -1 & -1 & -1 \\ -1 & 1 & -1 & -1 \\ -1 & -1 & 1 & -1 \\ -1 & -1 & -1 & 1 \end{smallmatrix*}\right]$.
In order to understand all problems in a Holant problem class,
we must deal with such problems.
Dichotomy theorems for graph homomorphisms and for counting CSP do not have to deal with as varied a class of such problems,
since they implicitly assume all \textsc{Equality} functions are available and must be preserved.
This restricts the possible transformations.

After isolating a set of tractable problems,
our $\SHARPP$-hardness results in both Theorem~\ref{thm:edge_coloring} and Theorem~\ref{thm:dichotomy:simple}
are obtained by reducing from evaluations of the Tutte polynomial over planar graphs.
A dichotomy is known for such problems (Theorem~\ref{thm:tutte}).

The chromatic polynomial, 
a specialization of the Tutte polynomial (Proposition~\ref{prop:k>r:chromatic_tutte}),
is concerned with vertex colorings.
On domain size $\kappa$,
one starting point of our hardness proofs is the chromatic polynomial,
for the problem of counting vertex colorings using at most $\kappa$ colors.
By the planar dichotomy for the Tutte polynomial,
this problem is $\SHARPP$-hard for all $\kappa \ge 3$.

Another starting point for our hardness reductions is the evaluation of the Tutte polynomial at an integer diagonal point $(x,x)$,
which is $\SHARPP$-hard for all $x \ge 3$ by the same planar Tutte dichotomy.
These are new starting places for reductions involving Holant problems.
These problems were known to have a so-called state-sum expression (Lemma~\ref{lem:tutte_connection}),
which is a sum over weighted Eulerian partitions.
This sum is not over the original planar graph but over its directed medial graph,
which is always a planar $4$-regular graph (Figure~\ref{fig:directed_medial_graph_example}).
We show that this state-sum expression is naturally expressed as a Holant problem with a particular quaternary constraint function (Lemma~\ref{lem:holant_connection}).

To reduce from these two problems,
we execute the following strategy.
First, we attempt to construct the unary constraint function $\langle 1 \rangle$,
which takes value~$1$ on all $\kappa$ inputs (Lemma~\ref{lem:unary:construct_<1>}).
Second, we attempt to interpolate all succinct binary signatures assuming that we have $\langle 1 \rangle$ (Section~\ref{sec:binary}).
(See Section~\ref{sec:preliminaries} for the definition of a succinct signature.)
Lastly, we attempt to construct a ternary signature with a special property assuming that all these binary signatures are available (Lemma~\ref{lem:ternary:construct_abb}).
At each step,
there are some problems specified by certain signatures $\langle a,b,c \rangle$ for which our attempts fail.
In such cases,
we directly obtain a dichotomy without the help of additional signatures.
See Figure~\ref{fig:outline} for a flow chart of hardness reductions.

\begin{figure}[p]
 \tikzstyle{block} = [rectangle, draw, fill=blue!20, text centered, rounded corners, minimum height=2em]
 \centering
 \begin{tikzpicture}[scale=1, transform shape, node distance=2.5cm, semithick]
  \node [block, text width=7em]                                         (main)    {$\Holant(\langle a,b,c \rangle)$};
  \node [block, text width=8em,  below of=main]                         (unary)   {Attempts~1 and~2\\Lemma~\ref{lem:unary:construct_<1>}};
  \node [block, text width=5em,  below of=unary]                        (binary1) {Attempt~1\\Lemma~\ref{lem:binary:general}};
  \node [block, text width=14em, below of=binary1, node distance=1.7cm] (binary2)
        {Attempt~2\\
        Cases~1,~2,~3,~4,~5\\
        Lemmas~\ref{lem:binary:2b+(k-2)c=0},
        \ref{lem:binary:2a+2(2k-3)b+(k-2)^2c=0},
        \ref{lem:binary:a+3(k-1)b+(k-2)(k-1)c=0},
        \ref{lem:binary:k=3_and_b=0},
        \ref{lem:binary:k=3_and_2a+3b+4c=0}};
  \node [block, text width=8em,  below of=binary2, node distance=1.9cm] (binary3) {Attempts~3 and~4\\All Cases\\Lemma~\ref{lem:appendix:binary}};
  \node [block, text width=6em,  below of=binary3] (ternary) {Attempt~1\\Lemma~\ref{lem:ternary:construct_abb}};
  \node [block, text width=10em, below of=ternary] (Fischer) {Bobby Fischer Gadget\\Lemma~\ref{lem:k>r:interpolate_equality4}};
  \node [block, text width=13em, below of=Fischer] (vertex)  {Counting Vertex $\kappa$-Colorings\\Corollary~\ref{cor:k>r:abb_unary_binaries}};
  \node [draw=\borderColor,thick,dashed,rounded corners,inner xsep=6pt,inner ysep=6pt,fit = (unary), label=right:{Fail}]    (uborder) {};
  \node [draw=\borderColor,thick,dashed,rounded corners,inner xsep=6pt,inner ysep=6pt,fit = (binary1)  (binary2) (binary3)] (bborder) {};
  \node [draw=\borderColor,thick,dashed,rounded corners,inner xsep=6pt,inner ysep=6pt,fit = (ternary)]                      (tborder) {};
  \node [text centered, minimum height=2em, text width=7em, left of=bborder, node distance=4.6cm] (btext)
        {\large Interpolate\\all $\langle x,y \rangle$\\Corollary~\ref{cor:binary:interpolate}};
  \path let
         \p1 = (btext),
         \p2 = (unary)
        in
         node [text centered, minimum height=2em, text width=7em] (utext) at (\x1, \y2) {\large Construct $\langle 1 \rangle$};
  \path let
         \p1 = (btext),
         \p2 = (ternary)
        in
         node [text centered, minimum height=2em, text width=9em] (ttext) at (\x1, \y2) {\large Construct $\langle a,b,b \rangle$\\with \quad $a \ne b$};
  \node [      right of=uborder] (uright)   {};
  \node [above right of=uright]  (urightAR) {};
  \node [below right of=uright]  (urightBR) {};
  \node [above of=urightBR] (ADlikeY)  {};
  \node [below of=urightAR] (DiffDomY) {};
  \node [      right of=bborder] (bright)   {};
  \node [block, text width=6em, right of=bright, node distance=3.3cm] (Bfail) {Corollary~\ref{cor:unary:dichotomy:a+(k-3)b-(k-2)c=0}};
  \path let
         \p1 = (Bfail),
         \p2 = (ADlikeY)
        in
         node [block, text width=5em] (ADlike) at (\x1, \y2) {Lemma~\ref{lem:unary:dichotomy:AD-like}};
  \path let
         \p1 = (Bfail),
         \p2 = (DiffDomY)
        in
         node [block, text width=5em] (DiffDom) at (\x1, \y2) {Lemma~\ref{lem:unary:<(k-1)(k-2),-(k-2),2>}};
  \node [block, text width=8em, right of=ternary, node distance=4.5cm] (Afail)
        {Construct\\$\langle 3 (\kappa - 1), \kappa - 3, -3 \rangle$\\Lemma~\ref{lem:ternary:fixed_point}};
  \node [block, text width=6.5em, right of=Bfail, node distance=2.8cm] (CWEP) {Counting\\Weighted\\Eulerian\\Partitions\\Corollary~\ref{cor:ternary:201000100_hard}};
  \path let
         \p1 = (CWEP),
         \p2 = (Afail)
        in
         node [block, text width=6em, node distance=4cm] (Ahard) at (\x1, \y2) {Lemmas~\ref{lem:ternary:3k-1k-3-3_hard_k>3} and~\ref{lem:ternary:-201}};
  \path (main)    edge[ultra thick,->] (uborder)
        (uborder) edge[ultra thick,->] node[label=right:{Succeed}] {} (bborder)
        (bborder) edge[ultra thick,->] node[label=right:{Succeed}] {} (tborder)
        (tborder) edge[ultra thick,->] node[label=right:{Succeed}] {} (Fischer)
        (Fischer) edge[ultra thick,->] (vertex)
        (DiffDom) edge[ultra thick,->] (ADlike)
        (uborder) edge[ultra thick,->] (ADlike)
        (uborder) edge[ultra thick,->] (DiffDom)
        (bborder) edge[ultra thick,->] node[label=above:{Fail},label=below:{$\mathfrak{B}=0$}] {} (Bfail)
        (tborder) edge[ultra thick,->] node[label=above:{Fail},label=below:{$\mathfrak{A}=0$}] {} (Afail)
        (Afail)   edge[ultra thick,->] (Ahard)
        (Bfail)   edge[ultra thick,->] (DiffDom)
        (Ahard)   edge[ultra thick,->] (CWEP)
        (ADlike)  edge[ultra thick,->,out=0,in=90] (CWEP);
 \end{tikzpicture}
 \caption{Flow chart of hardness reductions in our proof of Theorem~\ref{thm:dichotomy:simple} going back to our two starting points of hardness.}
 \label{fig:outline}
\end{figure}
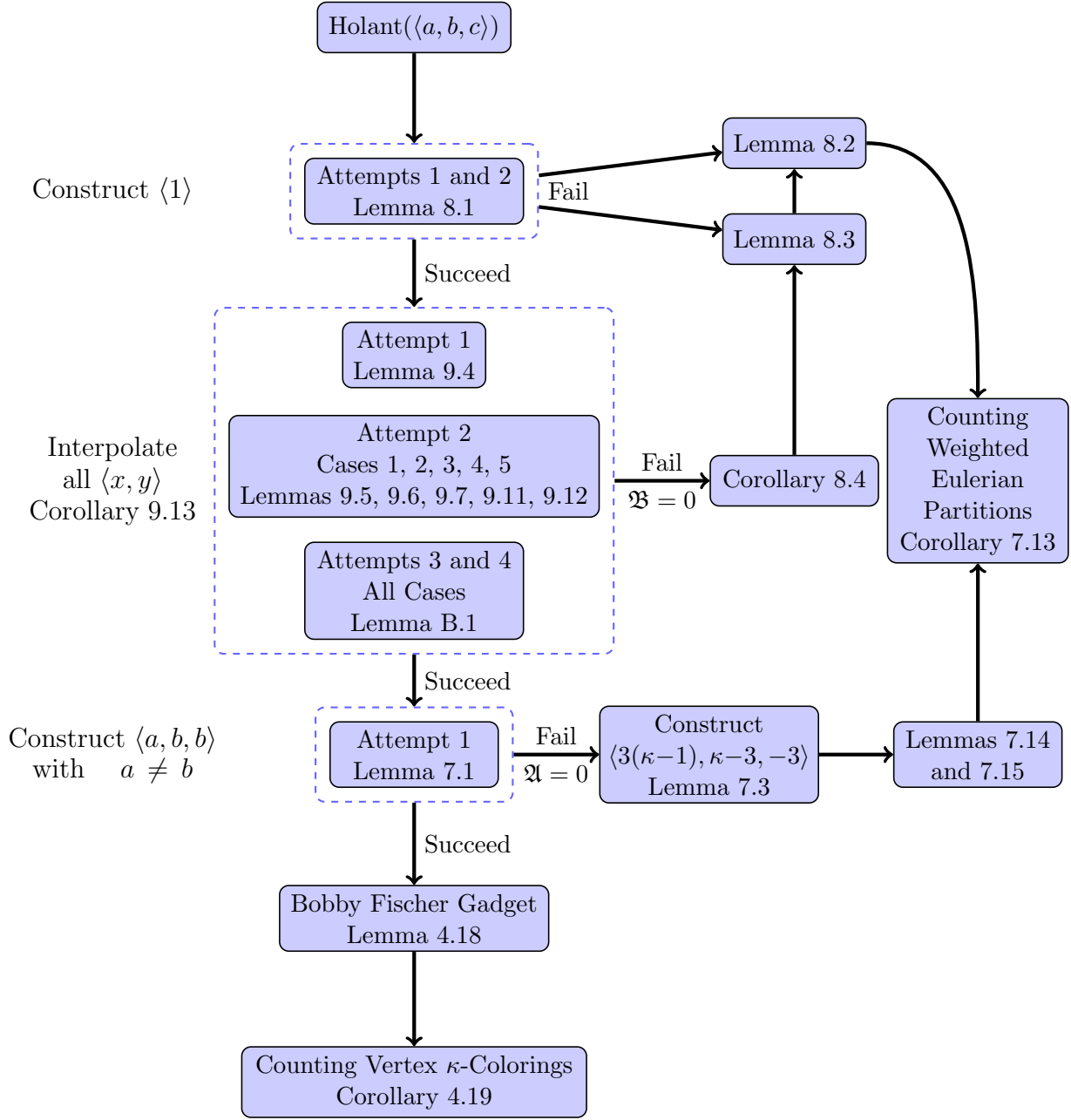

Below we highlight some of our proof techniques.

\paragraph{Interpolation within an orthogonal subspace}
We develop the ability to interpolate when faced with some nontrivial null spaces inherently present in interpolation constructions.
In any construction involving an initial signature and a recurrence matrix,
it is possible that the initial signature is orthogonal to some row eigenvectors of the recurrence matrix.
Previous interpolation results always attempt to find a construction that avoids this.
In the present work,
this avoidance seems impossible.
In Section~\ref{sec:interpolation},
we prove an interpolation result that can succeed in this situation to the greatest extent possible. 
We prove that one can interpolate any signature provided that it is orthogonal to the same set of row eigenvectors,
and the relevant eigenvalues satisfy a lattice condition (Lemma~\ref{lem:interpolate_all_not_orthogonal}).

\paragraph{Satisfy lattice condition via Galois theory}
A key requirement for this interpolation to succeed is the lattice condition (Definition~\ref{def:interpolation:lattice_condition}),
which involves the roots of the characteristic polynomial of the recurrence matrix.
We use Galois theory to prove that our constructions satisfy this condition.
If a polynomial has a large Galois group,
such as $S_n$ or $A_n$,
and its roots do not all have the same complex norm,
then we show that its roots satisfy the lattice condition (Lemma~\ref{lem:lattice_condition:Sn_An}).

\paragraph{Effective Siegel's Theorem via Puiseux series}
We need to determine the Galois groups for an infinite family of polynomials,
one for each domain size.
If these polynomials are irreducible,
then we can show they all have the full symmetric group as their Galois group,
and hence fulfill the lattice condition.
We suspect that these polynomials are all irreducible but are unable to prove it.

A necessary condition for irreducibility is the absence of any linear factor.
This infinite family of polynomials,
as a single bivariate polynomial in $(x, \kappa)$,
defines an algebraic curve,
which has genus~$3$.
By a well-known theorem of Siegel~\cite{Sie29},
there are only a finite number of integer values of $\kappa$ for which the corresponding polynomial has a linear factor.
However this theorem and others like it are not \emph{effective} in general.
There are some effective versions of Siegel's Theorem that can be applied to the algebraic curve,
but the best general effective bound is over $10^{20,000}$~\cite{Wal92} and hence cannot be checked in practice.
Instead, we use Puiseux series to show that this algebraic curve has exactly five explicitly listed integer solutions (Lemma~\ref{lem:ternary:lattice:no_linear}).

\paragraph{Eigenvalue Shifted Triples}
For a pair of eigenvalues,
the lattice condition is equivalent to the statement that the ratio of these eigenvalues is not a root of unity.
A sufficient condition is that the eigenvalues have distinct complex norms.
We prove three results,
each of which is a different way to satisfy this sufficient condition.
Chief among them is the technique we call an \emph{Eigenvalue Shifted Triple} (EST).
These generalize the technique of Eigenvalue Shifted Pairs from~\cite{KC10}.
In an EST,
we have three recurrence matrices,
each of which differs from the other two by a nonzero additive multiple of the identity matrix.
Provided these two multiples are linearly independent over $\R$,
we show at least one of these matrices has eigenvalues with distinct complex norms (Lemma~\ref{lem:binary:EST}).
(However determining which one succeeds is a difficult task;
but we need not know that).

\paragraph{E Pluribus Unum}
When the ratio of a pair of eigenvalues is a root of unity,
it is a challenge to effectively use this failure condition.
Direct application of this cyclotomic condition is often of limited use.
We introduce an approach that uses this cyclotomic condition effectively.
A direct recursive construction involving these two eigenvalues only creates a finite number of different signatures.
We reuse all of these signatures in a multitude of new interpolation constructions (Lemma~\ref{lem:binary:general:root_of_unity}),
one of which we hope will succeed.
If the eigenvalues in all of these constructions also satisfy a cyclotomic condition,
then we obtain a more useful condition than any of the previous cyclotomic conditions.
This idea generalizes the anti-gadget technique~\cite{CKW12},
which only reuses the ``last'' of these signatures.

\paragraph{Local holographic transformation}
One reason to obtain all succinct binary signatures is for use in the gadget construction known as
a local holographic transformation (Figure~\ref{fig:gadget:ternary:local_holographic_transformation}).
This construction mimics the effect of a holographic transformation applied on a single signature.
In particular, 
using this construction,
we attempt to obtain a succinct ternary signature of the form $\langle a,b,b \rangle$,
where $a \not = b$ (Lemma~\ref{lem:ternary:construct_abb}).
This signature turns out to have some magical properties in the Bobby Fischer gadget,
which we discuss next.

\paragraph{Bobby Fischer gadget}
Typically,
any combinatorial construction for higher domain Holant problems produces very intimidating looking expressions that are nearly impossible to analyze.
In our case,
it seems necessary to consider a construction that has to satisfy multiple requirements involving at least nine polynomials.
However,
we are able to combine the signature $\langle a,b,b \rangle$,
where $a \not = b$,
with a succinct binary signature of our choice in a special construction that we call the \emph{Bobby Fischer gadget} (Figure~\ref{fig:gadget:k>r:fischer}).
This gadget is able to satisfy seven conditions using just one degree of freedom (Lemma~\ref{lem:k>r:interpolate_equality4}).
This ability to satisfy a multitude of constraints simultaneously in one magic stroke reminds us of some unfathomably brilliant moves by Bobby Fischer,
the chess genius extraordinaire.

\section{Preliminaries} \label{sec:preliminaries}

\subsection{Problems and Definitions}

The framework of Holant problems is defined for functions mapping any $[\kappa]^n \to R$ for a finite $\kappa$ and some commutative semiring $R$.
In this paper, we investigate some complex-weighted $\Holant$ problems on domain size $\kappa \ge 3$.
A constraint function, or \emph{signature}, of arity $n$, maps from $[\kappa]^n \to \mathbb{C}$.
For consideration of models of computation,
functions take complex algebraic numbers.

A \emph{signature grid} $\Omega = (G, \pi)$ of $\Holant(\mathcal{F})$ consists of a graph $G = (V,E)$,
where $\pi$ assigns each vertex $v \in V$ and its incident edges with some $f_v \in \mathcal{F}$ and its input variables.
We say $\Omega$ is a \emph{planar signature grid} if $G$ is planar,
where the variables of $f_v$ are ordered counterclockwise.
The Holant problem on instance $\Omega$ is to evaluate $\Holant(\Omega; \mathcal{F}) = \sum_{\sigma} \prod_{v \in V} f_v(\sigma \mid_{E(v)})$,
a sum over all edge labelings $\sigma: E \to [\kappa]$,
where $E(v)$ denotes the incident edges of $v$ and $\sigma \mid_{E(v)}$ denotes the restriction of $\sigma$ to $E(v)$.

A function $f_v$ can be represented by listing its values in lexicographical order as in a truth table,
which is a vector in $\mathbb{C}^{\kappa^{\deg(v)}}$,
or as a tensor in $(\mathbb{C}^{\kappa})^{\otimes \deg(v)}$.
In this paper, we consider symmetric signatures.
An example of which is the \textsc{Equality} signature $=_r$ of arity $r$.
Sometimes we represent $f$ as a matrix $M_f$ that we call its \emph{signature matrix},
which has row index $(x_1, \dotsc, x_t)$ and column index $(x_k, \dotsc, x_{t+1})$ (in reverse order) for some $t$ that will be clear from context.

A Holant problem is parametrized by a set of signatures.

\begin{definition}
 Given a set of signatures $\mathcal{F}$,
 we define the counting problem $\Holant(\mathcal{F})$ as:

 Input: A \emph{signature grid} $\Omega = (G, \pi)$;

 Output: $\Holant(\Omega; \mathcal{F})$.
\end{definition}

\noindent
The problem $\PlHolant(\mathcal{F})$ is defined similarly using a planar signature grid.

A signature $f$ of arity $n$ is \emph{degenerate} if there exist unary signatures $u_j \in \mathbb{C}^\kappa$ ($1 \le j \le n$)
such that $f = u_1 \otimes \cdots \otimes u_n$.
A symmetric degenerate signature has the from $u^{\otimes n}$.
For such signatures, it is equivalent to replace it by $n$ copies of the corresponding unary signature.
Replacing a signature $f \in \mathcal{F}$ by a constant multiple $c f$,
where $c \ne 0$, does not change the complexity of $\Holant(\mathcal{F})$.
It introduces a global nonzero factor to $\Holant(\Omega; \mathcal{F})$.

We allow $\mathcal{F}$ to be an infinite set.
For $\Holant(\mathcal{F})$ to be tractable,
the problem must be computable in polynomial time even when the description of the signatures in the input $\Omega$ are included in the input size.
In contrast,
we say $\Holant(\mathcal{F})$ is $\SHARPP$-hard if there exists a finite subset of $\mathcal{F}$ for which the problem is $\SHARPP$-hard.
The same definitions apply for $\PlHolant(\mathcal{F})$ when $\Omega$ is a planar signature grid.
We say a signature set $\mathcal{F}$ is tractable (resp.~$\SHARPP$-hard)
if the corresponding counting problem $\Holant(\mathcal{F})$ is tractable (resp.~$\SHARPP$-hard).
We say $\mathcal{F}$ is tractable (resp.~$\SHARPP$-hard) for planar problems
if $\PlHolant(\mathcal{F})$ tractable (resp.~$\SHARPP$-hard).
Similarly for a signature $f$,
we say $f$ is tractable (resp.~$\SHARPP$-hard) if $\{f\}$ is.

We follow the usual conventions about polynomial time Turing reduction $\le_T$ and polynomial time Turing equivalence $\equiv_T$.
We use $I_n$ and $J_n$ to denote the $n$-by-$n$ identity matrix and $n$-by-$n$ matrix of all $1$'s respectively.

\subsection{Holographic Reduction}

To introduce the idea of holographic reductions, it is convenient to consider bipartite graphs.
For a general graph, we can always transform it into a bipartite graph while preserving the Holant value, as follows.
For each edge in the graph, we replace it by a path of length two.
(This operation is called the \emph{2-stretch} of the graph and yields the edge-vertex incidence graph.)
Each new vertex is assigned the binary \textsc{Equality} signature $=_2$.

We use $\holant{\mathcal{F}}{\mathcal{G}}$ to denote the Holant problem on bipartite graphs $H = (U,V,E)$,
where each vertex in $U$ or $V$ is assigned a signature in $\mathcal{F}$ or $\mathcal{G}$, respectively.
Signatures in $\mathcal{F}$ are considered as row vectors (or covariant tensors);
signatures in $\mathcal{G}$ are considered as column vectors (or contravariant tensors)~\cite{DP91}.
Similarly, $\plholant{\mathcal{F}}{\mathcal{G}}$ denotes the Holant problem using a planar bipartite signature grid.

For a $\kappa$-by-$\kappa$ matrix $T$ and a signature set $\mathcal{F}$,
define $T \mathcal{F} = \{g \mid \exists f \in \mathcal{F}$ of arity $n,~g = T^{\otimes n} f\}$, similarly for $\mathcal{F} T$.
Whenever we write $T^{\otimes n} f$ or $T \mathcal{F}$,
we view the signatures as column vectors;
similarly for $f T^{\otimes n} $ or $\mathcal{F} T$ as row vectors.

Let $T$ be an invertible $\kappa$-by-$\kappa$ matrix.
The holographic transformation defined by $T$ is the following operation:
given a signature grid $\Omega = (H, \pi)$ of $\holant{\mathcal{F}}{\mathcal{G}}$,
for the same bipartite graph $H$,
we get a new grid $\Omega' = (H, \pi')$ of $\holant{\mathcal{F} T}{T^{-1} \mathcal{G}}$ by replacing each signature in
$\mathcal{F}$ or $\mathcal{G}$ with the corresponding signature in $\mathcal{F} T$ or $T^{-1} \mathcal{G}$.
Valiant's Holant Theorem~\cite{Val08} (see also~\cite{CC07a}) is easily generalized to domain size $\kappa \ge 3$.

\begin{theorem}
 Suppose $\kappa \ge 3$ is the domain size.
 If $T \in \mathbb{C}^{\kappa \times \kappa}$ is an invertible matrix,
 then $\Holant(\Omega; \mathcal{F} \mid \mathcal{G}) = \Holant(\Omega'; \mathcal{F} T \mid T^{-1} \mathcal{G})$.
\end{theorem}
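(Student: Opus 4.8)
The plan is to treat the Holant value as a single tensor contraction along the edges of $H$ and to insert a resolution of the identity on each edge. Write $H = (U, V, E)$ with every $u \in U$ carrying a covariant tensor (row vector) $f_u$ and every $v \in V$ a contravariant tensor (column vector) $g_v$, so that
\[
 \Holant(\Omega; \mathcal{F} \mid \mathcal{G}) \ = \ \sum_{\sigma : E \to [\kappa]} \ \prod_{u \in U} f_u(\sigma \mid_{E(u)}) \ \prod_{v \in V} g_v(\sigma \mid_{E(v)}).
\]
Since $H$ is bipartite, every edge $e$ has exactly one endpoint in $U$ and one in $V$, and the label $\sigma(e)$ occurs as exactly one argument of the $U$-signature on $e$ and exactly one argument of the $V$-signature on $e$; this is the structural fact that makes the rearrangement below unambiguous.

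First I would insert $I_\kappa = T T^{-1}$ on every edge: replace the constraint that the two occurrences of $\sigma(e)$ agree by
\[
 \delta_{x_e,\, x'_e} \ = \ \sum_{y_e \in [\kappa]} T_{x_e,\, y_e}\, (T^{-1})_{y_e,\, x'_e},
\]
where $x_e$ is the (now free) $U$-side argument on $e$, $x'_e$ the $V$-side argument, and $y_e$ a fresh summation index attached to $e$. Performing this on all edges and expanding, $\Holant(\Omega; \mathcal{F} \mid \mathcal{G})$ becomes a sum over all the $x_e$, $x'_e$, and $y_e$ of $\prod_u f_u((x_e)_{e \in E(u)}) \cdot \prod_v g_v((x'_e)_{e \in E(v)}) \cdot \prod_e T_{x_e, y_e}(T^{-1})_{y_e, x'_e}$. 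Now reindex the product over edges by endpoints: each factor $T_{x_e, y_e}$ belongs to the unique $U$-endpoint of $e$, each factor $(T^{-1})_{y_e, x'_e}$ to the unique $V$-endpoint. Summing out the old indices contracts, at each $u$, the tensor $f_u$ with one copy of $T$ on each of its $\deg(u)$ legs, producing $f_u T^{\otimes \deg(u)} \in \mathcal{F}T$ viewed as a function of the new labels $(y_e)_{e \in E(u)}$; likewise, at each $v$, it contracts $g_v$ with one copy of $T^{-1}$ per leg, producing $(T^{-1})^{\otimes \deg(v)} g_v \in T^{-1}\mathcal{G}$. What remains is exactly
\[
 \sum_{\sigma' : E \to [\kappa]} \ \prod_{u \in U} \bigl(f_u T^{\otimes \deg(u)}\bigr)(\sigma' \mid_{E(u)}) \ \prod_{v \in V} \bigl((T^{-1})^{\otimes \deg(v)} g_v\bigr)(\sigma' \mid_{E(v)}) \ = \ \Holant(\Omega'; \mathcal{F}T \mid T^{-1}\mathcal{G}),
\]
where $\Omega'$ is the grid on the same graph $H$ with the transformed signature assignment and $\sigma'$ runs over the new edge labels $(y_e)_e$.

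Invertibility of $T$ enters in three places: it is what allows $I_\kappa = T T^{-1}$, it guarantees that $\mathcal{F}T$ and $T^{-1}\mathcal{G}$ are bona fide signature sets so that $\Omega'$ is well defined, and it makes the construction reversible (running it with $T^{-1}$ in place of $T$ recovers $\Omega$). Since the graph $H$ is untouched --- and in the planar case so is its embedding --- the same argument yields the corresponding statement for $\PlHolant$. The only delicate point is purely organizational: one must keep track of which tensor leg each copy of $T$ or $T^{-1}$ attaches to and in which variance (row versus column), and verify that bipartiteness lets each inserted identity split cleanly into one $T$ toward $U$ and one $T^{-1}$ toward $V$. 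I expect that bookkeeping, rather than any genuine mathematical difficulty, is the main thing to get right; alternatively one may first check the identity on the atomic case of a single edge joining one $f \in \mathcal{F}$ to one $g \in \mathcal{G}$ and then extend by multiplicativity of the Holant over the tensor-product structure of the grid.
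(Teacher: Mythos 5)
Your proof is correct, and it is the standard argument behind Valiant's Holant Theorem (the resolution-of-the-identity $I_\kappa = TT^{-1}$ inserted on every edge, with the $T$ factor absorbed into the $U$-side covariant tensor and the $T^{-1}$ factor into the $V$-side contravariant tensor). The paper does not prove this theorem itself --- it simply cites Valiant and Cai--Choudhary and remarks that the Boolean-domain proof ``is easily generalized to domain size $\kappa \ge 3$'' --- and what you have written out is exactly that generalization, carried through carefully with the bipartite structure making the $T$/$T^{-1}$ assignment unambiguous. Your suggestion at the end to reduce to the atomic single-edge case and then appeal to multiplicativity is a clean alternative packaging of the same argument and would also work.
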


Therefore, an invertible holographic transformation does not change the complexity of the Holant problem in the bipartite setting.
Furthermore, there is a special kind of holographic transformation, the orthogonal transformation,
that preserves the binary equality and thus can be used freely in the standard setting.
For $\kappa = 2$, this first appeared in~\cite{CLX09a} as Theorem~2.2.

\begin{theorem} \label{thm:ortho_holo_trans}
 Suppose $\kappa \ge 3$ is the domain size.
 If $T \in \mathbb{C}^{\kappa \times \kappa}$ is an orthogonal matrix (i.e.~$T \transpose{T} = I_\kappa$),
 then $\Holant(\Omega; \mathcal{F}) = \Holant(\Omega'; T \mathcal{F})$.
\end{theorem}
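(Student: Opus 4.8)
The plan is to deduce this from the general Holant Theorem (the bipartite version stated just above) by the standard ``$2$-stretch'' device. First I would recall that for a graph $G$ in the standard (non-bipartite) setting of $\Holant(\mathcal{F})$, one may replace every edge by a path of length two, placing the binary equality signature $=_2$ on each new (degree-$2$) vertex; this yields a bipartite signature grid on $H = (U, V, E')$, with the original vertices carrying signatures from $\mathcal{F}$ on one side and copies of $=_2$ on the other side, and it does not change the Holant value. Thus $\Holant(\Omega; \mathcal{F}) = \Holant(\widehat{\Omega}; \mathcal{F} \mid \{=_2\})$, where $\widehat{\Omega}$ is the bipartite grid obtained by the $2$-stretch.

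Next I would apply the bipartite Holant Theorem with the transformation matrix $T$: it gives $\Holant(\widehat{\Omega}; \mathcal{F} \mid \{=_2\}) = \Holant(\widehat{\Omega}'; \mathcal{F} T \mid T^{-1} \{=_2\})$, where on the right each signature in $\mathcal{F}$ has been replaced by the corresponding signature in $\mathcal{F} T$ (viewing signatures in $\mathcal{F}$ as row vectors, so the arity-$n$ signature $f$ becomes $f T^{\otimes n}$), and $=_2$ has been replaced by $(T^{-1})^{\otimes 2}(=_2)$. The crucial computation is that the binary equality signature is, as a symmetric $2$-tensor, the identity matrix $I_\kappa$, and the orthogonality hypothesis $T\transpose{T} = I_\kappa$ is exactly the statement $(T^{-1})^{\otimes 2}(I_\kappa) = T^{-1} I_\kappa \transpose{(T^{-1})} = T^{-1}\transpose{(T^{-1})} = (\transpose{T} T)^{-1} = I_\kappa$; so $=_2$ is preserved under the transformation defined by an orthogonal $T$. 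Hence the right-hand grid is again of the form (signatures from $\mathcal{F} T$) $\mid \{=_2\}$.

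Finally I would undo the $2$-stretch on the right-hand side: since every degree-$2$ vertex still carries $=_2$, contracting each such path back to a single edge returns a standard (non-bipartite) signature grid $\Omega'$ over the signature set $T\mathcal{F}$ (written on the left as row vectors, i.e.\ with $f \mapsto f T^{\otimes n}$, which is what $T\mathcal{F}$ denotes here), with the same Holant value. Chaining the three equalities gives $\Holant(\Omega; \mathcal{F}) = \Holant(\Omega'; T\mathcal{F})$, as claimed. The only point requiring care—the ``main obstacle,'' such as it is—is the bookkeeping of the row-vector versus column-vector convention and verifying that $T\transpose{T} = I_\kappa$ is precisely the condition that makes $=_2$ invariant; everything else is a direct invocation of the (already generalized) Holant Theorem together with the reversible $2$-stretch reduction. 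I would also remark that the identical argument applies verbatim in the planar setting, since the $2$-stretch and its inverse preserve planarity and the cyclic order of edges at each vertex, giving $\PlHolant(\Omega;\mathcal{F}) = \PlHolant(\Omega'; T\mathcal{F})$ as well.
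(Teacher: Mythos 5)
Your proof is correct and it is the standard argument the paper implicitly relies on: the paper itself sets up the $2$-stretch machinery in the preceding paragraphs, states the theorem, and cites the $\kappa = 2$ case from earlier work without spelling out a proof. The heart of the matter — that $T\transpose{T} = I_\kappa$ is precisely the condition under which $(T^{-1})^{\otimes 2}(=_2) = =_2$, so the bipartite Holant Theorem applies while preserving the equality side — you have exactly right, and your remark that the argument goes through verbatim for planar grids is also correct.

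One bookkeeping slip is worth flagging. You write at the end that $f \mapsto f T^{\otimes n}$ ``is what $T\mathcal{F}$ denotes here,'' but by the paper's explicit convention $T\mathcal{F}$ is the column-vector transformation $f \mapsto T^{\otimes n} f$, whereas $f \mapsto f T^{\otimes n}$ is $\mathcal{F}T$. As written, your chain of equalities lands at $\Holant(\Omega'; \mathcal{F}T)$, not at the stated $\Holant(\Omega'; T\mathcal{F})$. This is harmless for the theorem because $\mathcal{F}T = \transpose{T}\mathcal{F} = T^{-1}\mathcal{F}$ when $T$ is orthogonal, and $T^{-1}$ is again orthogonal, so you have proven an equivalent statement (the claim with $T$ replaced by $T^{-1}$); but to produce the theorem in the form stated, either perform the $2$-stretch with $\mathcal{F}$ on the contravariant side, i.e.\ $\holant{=_2}{\mathcal{F}}$, and transform by $T^{-1}$, or apply the bipartite Holant Theorem with matrix $T^{-1}$ in place of $T$.
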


Since the complexity of a signature is unchanged by a nonzero constant multiple,
we also call a transformation $T$ such that $T \transpose{T} = \lambda I$ for some $\lambda \neq 0$ an orthogonal transformation.
Such transformations do not change the complexity of a problem.

\subsection{Realization}

One basic notion used throughout the paper is realization.
We say a signature $f$ is \emph{realizable} or \emph{constructible} from a signature set $\mathcal{F}$
if there is a gadget with some dangling edges such that each vertex is assigned a signature from $\mathcal{F}$,
and the resulting graph, when viewed as a black-box signature with inputs on the dangling edges, is exactly $f$.
If $f$ is realizable from a set $\mathcal{F}$, then we can freely add $f$ into $\mathcal{F}$ preserving the complexity.

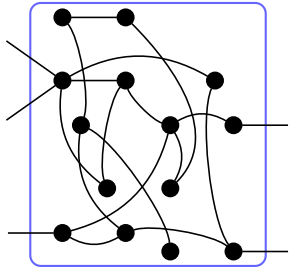
\begin{figure}[t]
 \centering
 \begin{tikzpicture}[scale=\scale,transform shape,node distance=\nodeDist,semithick]
  \node[external]  (0)                     {};
  \node[internal]  (1) [below right of=0]  {};
  \node[external]  (2) [below left  of=1]  {};
  \node[internal]  (3) [above       of=1]  {};
  \node[internal]  (4) [right       of=3]  {};
  \node[internal]  (5) [below       of=4]  {};
  \node[internal]  (6) [below right of=5]  {};
  \node[internal]  (7) [right       of=6]  {};
  \node[internal]  (8) [below       of=6]  {};
  \node[internal]  (9) [below       of=8]  {};
  \node[internal] (10) [right       of=9]  {};
  \node[internal] (11) [above right of=6]  {};
  \node[internal] (12) [below left  of=8]  {};
  \node[internal] (13) [left        of=8]  {};
  \node[internal] (14) [below left  of=13] {};
  \node[external] (15) [left        of=14] {};
  \node[internal] (16) [below left  of=5]  {};
  \path let
         \p1 = (15),
         \p2 = (0)
        in
         node[external] (17) at (\x1, \y2) {};
  \path let
         \p1 = (15),
         \p2 = (2)
        in
         node[external] (18) at (\x1, \y2) {};
  \node[external] (19) [right of=7]  {};
  \node[external] (20) [right of=10] {};
  \path (1) edge                             (5)
            edge[bend left]                 (11)
            edge[bend right]                (13)
            edge node[near start] (e1) {}   (17)
            edge node[near start] (e2) {}   (18)
        (3) edge                             (4)
        (4) edge[out=-45,in=45]              (8)
        (5) edge[bend right, looseness=0.5] (13)
            edge[bend right, looseness=0.5]  (6)
        (6) edge[bend left]                  (8)
            edge[bend left]                  (7)
            edge[bend left]                 (14)
        (7) edge node[near start] (e3) {}   (19)
       (10) edge[bend right, looseness=0.5] (12)
            edge[bend left,  looseness=0.5] (11)
            edge node[near start] (e4) {}   (20)
       (12) edge[bend left]                 (16)
       (14) edge node[near start] (e5) {}   (15)
            edge[bend right]                (12)
       (16) edge[bend left,  looseness=0.5]  (9)
            edge[bend right, looseness=0.5]  (3);
  \begin{pgfonlayer}{background}
   \node[draw=\borderColor,thick,rounded corners,fit = (3) (4) (9) (e1) (e2) (e3) (e4) (e5)] {};
  \end{pgfonlayer}
 \end{tikzpicture}
 \caption{An $\mathcal{F}$-gate with 5 dangling edges.}
 \label{fig:Fgate}
\end{figure}

Formally, such a notion is defined by an $\mathcal{F}$-gate~\cite{CLX09a, CLX10}.
An $\mathcal{F}$-gate is similar to a signature grid $(G, \pi)$ for $\Holant(\mathcal{F})$ except that $G = (V,E,D)$ is a graph with some dangling edges $D$.
The dangling edges define external variables for the $\mathcal{F}$-gate.
(See Figure~\ref{fig:Fgate} for an example.)
We denote the regular edges in $E$ by $1, 2, \dotsc, m$, and denote the dangling edges in $D$ by $m+1, \dotsc, m+n$.
Then we can define a function $\Gamma$ for this $\mathcal{F}$-gate as
\[
 \Gamma(y_1, y_2, \dotsc, y_n) = \sum_{x_1, x_2, \dotsc, x_m \in [\kappa]} H(x_1, x_2, \dotsc, x_m, y_1, y_2, \dotsc, y_n),
\]
where $(y_1, \dotsc, y_n) \in [\kappa]^n$ denotes a labeling on the dangling edges
and $H(x_1, \dotsc, x_m,$ $y_1, \dotsc, y_n)$ denotes the value of the signature grid on a labeling of all edges in $G$,
which is the product of evaluations at all internal vertices.
We also call this function $\Gamma$ the signature of the $\mathcal{F}$-gate.

An $\mathcal{F}$-gate is planar if the underlying graph $G$ is a planar graph,
and the dangling edges,
ordered counterclockwise corresponding to the order of the input variables,
are in the outer face in a planar embedding.
A planar $\mathcal{F}$-gate can be used in a planar signature grid as if it is just a single vertex with the particular signature.

Using the idea of planar $\mathcal{F}$-gates,
we can reduce one planar Holant problem to another.
Suppose $g$ is the signature of some planar $\mathcal{F}$-gate.
Then $\PlHolant(\mathcal{F} \cup \{g\}) \leq_T \PlHolant(\mathcal{F})$.
The reduction is simple.
Given an instance of $\PlHolant(\mathcal{F} \cup \{g\})$,
by replacing every appearance of $g$ by the $\mathcal{F}$-gate,
we get an instance of $\PlHolant(\mathcal{F})$.
Since the signature of the $\mathcal{F}$-gate is $g$,
the Holant values for these two signature grids are identical.

Our main results are about symmetric signatures
(i.e.~signatures that are invariant under any permutation of inputs).
However, we also need some asymmetric signatures in our proofs.
When a gadget has an asymmetric signature,
we place a diamond on the edge corresponding to the first input.
The remaining inputs are ordered counterclockwise around the vertex.
(See Figure~\ref{fig:gadget:k=r:arity_reduction} for an example.)

We note that even for a very simple signature set $\mathcal{F}$,
the signatures for all $\mathcal{F}$-gates can be quite complicated and expressive.

\subsection{Succinct Signatures} \label{subsec:succinct_signature}

An arity $r$ signature on domain size $\kappa$ is fully specified by $\kappa^r$ values.
However, some special cases can be defined using far fewer values.
Consider the signature $\AllDistinct_{r,\kappa}$ of arity $r$ on domain size $\kappa$ that outputs~$1$ when all inputs are distinct and~$0$ otherwise.
We also denote this signature by $\AD_{r,\kappa}$.
In addition to being symmetric,
it is also invariant under any permutation of the $\kappa$ domain elements.
We call the second property \emph{domain invariance}.
The signature of an $\mathcal{F}$-gate in which all signatures in $\mathcal{F}$ are domain invariant is itself domain invariant.

\begin{definition}[Succinct signature] \label{def:succinct_signature}
 Let $\tau = (P_1, P_2, \dotsc, P_\ell)$ be a partition of $[\kappa]^r$ listed in some order.
 We say that $f$ is a \emph{succinct signature} of type $\tau$ if $f$ is constant on each $P_i$.
 A set $\mathcal{F}$ of signatures is of type $\tau$ if every $f \in \mathcal{F}$ has type $\tau$.
 We denote a succinct signature $f$ of type $\tau$ by $\langle f(P_1), \dotsc, f(P_\ell) \rangle$,
 where $f(P) = f(x)$ for any $x \in P$.
 
 Furthermore, we may omit~$0$ entries.
 If $f$ is a succinct signature of type $\tau$,
 we also say $f$ is a \emph{succinct signature} of type $\tau'$ with length $\ell'$,
 where $\tau'$ lists $\ell'$ parts of the partition $\tau$
 and we write $f$ as $\langle f_1, f_2, \dotsc, f_{\ell'} \rangle$,
 provided all nonzero values $f(P_i)$ are listed.
 When using this notation,
 we will make it clear which zero entries have been omitted.
\end{definition}

For example,
a symmetric signature in the Boolean domain (i.e.~$\kappa = 2$) has been denoted in previous work~\cite{CGW13} by $[f_0, f_1, \dotsc, f_r]$,
where $f_w$ is the output on inputs of Hamming weight $w$.
This corresponds to the succinct signature type $(P_0, P_1, \dotsc, P_r)$,
where $P_w$ is the set of inputs of Hamming weight $w$.
A similar succinct signature notation was used for symmetric signatures on domain size~$3$~\cite[p.~1282]{CLX13}.

We prove a dichotomy theorem for $\PlHolant(f)$ when $f$ is a succinct ternary signature of type $\tau_3$ on domain size $\kappa \ge 3$.
For $\kappa \ge 3$,
the succinct signature of type $\tau_3 = (P_1, P_2, P_3)$ is a partition of $[\kappa]^3$ with $P_i = \{(x,y,z) \in [\kappa]^3 : |\{x, y, z\}| = i\}$ for $1 \le i \le 3$.
The notation $\{x, y, z\}$ denotes a multiset and $|\{x, y, z\}|$ denotes the number of distinct elements in it.
Succinct signatures of type $\tau_3$ are exactly the symmetric and domain invariant ternary signatures.
In particular,
the succinct ternary signature for $\AD_{3,\kappa}$ is $\langle 0,0,1 \rangle$.

We use several other succinct signature types as well.
For domain invariant unary signatures,
there are only two signatures up to a nonzero scalar.
Using the trivial partition that contains all inputs,
we denote these two succinct unary signatures as $\langle 0 \rangle$ and $\langle 1 \rangle$ and say that they have succinct type $\tau_1$.
We also need a succinct signature type for domain invariant binary signatures.
Such signatures are necessarily symmetric.
We call their succinct signature type $\tau_2 = (P_1, P_2)$,
where $P_i = \{(x,y) \in [\kappa]^2 : |\{x, y\}| = i\}$ for $1 \le i \le 2$.

We note that the number of succinct signature types for arity $r$ signatures on domain size $\kappa$
that are both symmetric and domain invariant is the number of partitions of $r$ into at most $\kappa$ parts.
This is related to the partition function from number theory,
which is not to be confused with the partition function with its origins in statistical mechanics
and has been intensively studied in complexity theory of counting problems.

While there are some other succinct signature types that we define later as needed,
there is one more important type that we define here.
Any quaternary signature $f$ that is domain invariant has a succinct signature of length at most~$15$.
When a signature has both vertical and horizontal symmetry,
there is a shorter succinct signature that has only length~$9$.
We say a signature $f$ has vertical symmetry if $f(w,x,y,z) = f(x,w,z,y)$ and horizontal symmetry if $f(w,x,y,z) = f(z,y,x,w)$.
For example,
the signature of the gadget in Figure~\ref{fig:gadget:k>r:fischer} has both vertical and horizontal symmetry.
Accordingly, let
$\tau_4 = (
P_{\subMat{1}{1}{1}{1}},
P_{\subMat{1}{1}{1}{2}},
P_{\subMat{1}{1}{2}{2}},
P_{\subMat{1}{1}{2}{3}},
P_{\subMat{1}{2}{1}{2}},
P_{\subMat{1}{2}{1}{3}},
P_{\subMat{1}{2}{2}{1}},
P_{\subMat{1}{2}{3}{1}},
P_{\subMat{1}{2}{3}{4}})$
be a type of succinct quaternary signature with partitions
\begin{align*}
 P_{\subMat{1}{1}{1}{1}} &= \{(w,x,y,z) \in [\kappa]^4 \st w = x = y = z\},\\
 P_{\subMat{1}{1}{1}{2}} &= \left\{(w,x,y,z) \in [\kappa]^4 \middle|
 \begin{array}{r}
  (w = x = y \ne z) \lor (w = x = z \ne y)\\
  {} \lor (w = y = z \ne x) \lor (x = y = z \ne w)
 \end{array}
 \right\},\\
 P_{\subMat{1}{1}{2}{2}} &= \{(w,x,y,z) \in [\kappa]^4 \st w = x \ne y = z\},\\
 P_{\subMat{1}{1}{2}{3}} &= \{(w,x,y,z) \in [\kappa]^4 \st (w = x \ne y \ne z \ne x) \lor (y = z \ne w \ne x \ne z)\},\\
 P_{\subMat{1}{2}{1}{2}} &= \{(w,x,y,z) \in [\kappa]^4 \st w = y \ne x = z\},\\
 P_{\subMat{1}{2}{1}{3}} &= \{(w,x,y,z) \in [\kappa]^4 \st (w = y \ne x \ne z \ne y) \lor (x = z \ne w \ne y \ne z)\},\\
 P_{\subMat{1}{2}{2}{1}} &= \{(w,x,y,z) \in [\kappa]^4 \st w = z \ne x = y\},\\
 P_{\subMat{1}{2}{3}{1}} &= \{(w,x,y,z) \in [\kappa]^4 \st (w = z \ne x \ne y \ne z) \lor (x = y \ne w \ne z \ne y)\}, \text{ and}\\
 P_{\subMat{1}{2}{3}{4}} &= \{(w,x,y,z) \in [\kappa]^4 \st w, x, y, z \text{ are all distinct}\}.
\end{align*}

\section{Counting Edge \texorpdfstring{$\kappa$}{kappa}-Colorings over Planar \texorpdfstring{$r$}{r}-Regular Graphs} \label{sec:coloring}

In this section,
we show that counting edge $\kappa$-colorings over planar $r$-regular graphs is $\SHARPP$-hard provided $\kappa \ge r \ge 3$.
When this condition fails to hold,
the problem is trivially tractable.
There are two cases depending on whether $\kappa = r$ or not.

\subsection{The Case \texorpdfstring{$\kappa = r$}{kappa-equal-r}}

When $\kappa = r$,
we reduce from evaluating the Tutte polynomial of a planar graph at the positive integer points on the diagonal $x = y$.
For $x \ge 3$, evaluating the Tutte polynomial of a planar graph at $(x,x)$ is $\SHARPP$-hard.

\begin{theorem}[Theorem~5.1 in~\cite{Ver05}] \label{thm:tutte}
 For $x, y \in \mathbb{C}$,
 evaluating the Tutte polynomial at $(x,y)$ is $\SHARPP$-hard over planar graphs
 unless $(x - 1) (y - 1) \in \{1, 2\}$ or $(x,y) \in \{(1,1), (-1, -1), (\omega, \omega^2), (\omega^2, \omega)\}$,
 where $\omega = e^{2 \pi i / 3}$.
 In each exceptional case,
 the computation can be done in polynomial time.
\end{theorem}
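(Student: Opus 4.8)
\emph{Setup.} I would prove the two directions separately, organizing the hardness side by the parameter $q = (x-1)(y-1)$, so that each ``hyperbola'' $H_q = \{(x,y) : (x-1)(y-1) = q\}$ is treated as a unit.

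\emph{Tractability.} For each exceptional case I would exhibit a known polynomial-time algorithm. On the hyperbola $q = 1$ the $q$-state Potts partition function collapses: writing $T(G;x,y) = (y-1)^{-r(E)} Z_{\mathrm{Potts}}(G;q,y-1)$ and specializing $q=1$ gives $Z_{\mathrm{Potts}}(G;1,v) = \sum_{A\subseteq E} v^{|A|} = (1+v)^{|E|}$, so $T$ has a closed monomial form along $q=1$. On the hyperbola $q = 2$ the evaluation equals, up to easily computed prefactors, the partition function of the zero-field Ising model, which is computable in polynomial time for planar graphs by the Fisher--Kasteleyn--Temperley reduction to counting perfect matchings in an associated planar graph; this is the one tractable family that genuinely uses planarity. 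At $(1,1)$ the value counts spanning trees (more precisely, the product over components of their spanning-tree counts), computed by the Matrix--Tree theorem. At $(-1,-1)$ one has the Rosenstiehl--Read formula $T(G;-1,-1) = (-1)^{|E|}(-2)^{d}$, where $d$ is the dimension over $\mathbb{F}_2$ of the bicycle space of $G$, computable by Gaussian elimination. Finally, at $(\omega,\omega^2)$ and $(\omega^2,\omega)$ an identity of Tutte expresses the evaluation in terms of $\mathbb{F}_3$-linear-algebraic data of $G$, again polynomial time.

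\emph{Hardness: hyperbolas carrying a combinatorial seed.} Fix $(x_0,y_0)$ outside the tractable set and let $q = (x_0-1)(y_0-1)$. Parametrizing $H_q$ by the random-cluster weight $v = y-1$ (so $x = 1 + q/v$), the map $v \mapsto T(G; 1 + q/v, 1 + v)$ is, up to a fixed monomial, a Laurent polynomial $f_G(v)$ whose degree is bounded by $|E(G)|$. The two planarity-preserving edge operations---$k$-thickening (each edge $\to k$ parallel edges) and $k$-stretching (each edge $\to$ a path of $k$ edges)---keep us on $H_q$ and act on the weight by $v \mapsto (1+v)^k - 1$ and by $v \mapsto q\bigl(x_0^{\,k}-1\bigr)^{-1}$ respectively (one orbit driven by powers of $y_0$, the other by powers of $x_0$). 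Since $(x_0,y_0)$ is not one of the isolated tractable points, at least one of these orbits is infinite, so an oracle for $T(\,\cdot\,;x_0,y_0)$, queried on polynomially many thickenings or stretchings of an input $G$, returns $f_G$ at enough distinct arguments to interpolate it, hence to evaluate $T(G;\,\cdot\,)$ at any point of $H_q$. Combining this with a hard seed settles every hyperbola with an integer $q \ge 3$: the point $(1-q,0)\in H_q$ gives, up to an easy factor, the number of proper $q$-colorings of a planar graph, which is $\SHARPP$-hard, and planar duality ($x\leftrightarrow y$) handles the transposed seeds.

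\emph{The main obstacle.} The hard part is the hyperbolas with no combinatorial seed---$q$ a non-integer real, $q$ negative, $q$ non-real, together with the degenerate locus $q = 0$ (the lines $x=1$ and $y=1$ minus $(1,1)$). Thickening and stretching cannot change $q$, and the clique-based gadgets of the non-planar Tutte dichotomy (Jaeger--Vertigan--Welsh) are unavailable here; indeed the planar classification really does differ, since $q=2$ is tractable for planar graphs but $\SHARPP$-hard in general. The remedy, and the technical core of Vertigan's theorem, is to construct a richer supply of planar two-terminal gadgets whose series/parallel composition realizes weight transformations that do move $q$, chosen so that the resulting substitution induces an algebraically generic enough map for a second round of polynomial interpolation to transport hardness from an already-classified hyperbola to the target one; a case analysis over the regions of $\mathbb{R}^2$ (and over $\mathbb{C}$) then shows every non-exceptional point is reached. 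I would pay particular attention to the points sitting just next to the tractable locus---near $q=1$, $q=2$, and the four isolated points---since that is where the gadget constructions are most delicate and where an error would misclassify a hard point as tractable or conversely.
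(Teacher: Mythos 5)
This statement is imported verbatim as Theorem~5.1 of Vertigan's paper~\cite{Ver05}; the paper under review does not prove it, and uses it only as a black-box starting point for its own hardness reductions. So there is no in-paper proof to compare your attempt against, and you should not expect one --- a full proof of this result is itself a substantial paper.

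Taken on its own terms, your sketch correctly reconstructs the overall architecture of the known proof: the closed forms and polynomial-time algorithms at the exceptional points (including the fact that $q=2$ is the one hyperbola that is genuinely easier in the planar world, via FKT), interpolation along a fixed hyperbola $H_q$ by thickenings and stretchings, and hard seeds such as the chromatic point $(1-q,0)$ for integer $q\ge 3$ together with planar duality. You also correctly identify the real obstacle: thickening and stretching never change $q$, so hyperbolas without a combinatorial seed (non-integer, negative, non-real $q$, and the degenerate locus $q=0$) require planar two-terminal gadgets that \emph{do} move $q$, and the non-planar clique gadgets of Jaeger--Vertigan--Welsh are unavailable. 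But at exactly that point your argument becomes a promissory note --- ``construct a richer supply of planar two-terminal gadgets \dots chosen so that the resulting substitution induces an algebraically generic enough map'' --- rather than a proof. That is not a small gap to paper over: it is the entire content of Vertigan's Sections~4--7, including a delicate case analysis near $q=1$, $q=2$, and the four isolated points, and it is precisely where naive gadget attempts fail (a transfer that lands you on $q=2$ or $q=1$ proves nothing). So: your high-level decomposition is sound and honestly labeled, but as written this is an outline of a known proof, not a proof, and it would not suffice as a self-contained replacement for the citation to~\cite{Ver05}.
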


To state the connection with the diagonal of the Tutte polynomial,
we need to consider Eulerian subgraphs in directed medial graphs.
We say a graph is Eulerian (di)graph if every vertex has even degree (resp.~in-degree equal to out-degree),
but connectedness is not required.
Now recall the definition of a medial graph and its directed variant.

\begin{definition}[cf. Section~4 in~\cite{Ell04}]
 For a connected plane graph $G$ (i.e.~a planar embedding of a connected planar graph),
 its \emph{medial graph} $G_m$ has a vertex on each edge of $G$
 and two vertices in $G_m$ are joined by an edge for each face of $G$ in which their corresponding edges occur consecutively.
 
 The \emph{directed medial graph} $\vec{G}_m$ of $G$ colors the faces of $G_m$ black or white depending on whether they contain or do not contain, respectively,
 a vertex of $G$.
 Then the edges of the medial graph are directed so that the black face is on the left.
\end{definition}

\begin{figure}[t]
 \centering
 \def\medialNodeDist{2.5cm}
 \tikzstyle{open}   = [draw, black, fill=white, shape=circle]
 \tikzstyle{closed} = [draw,        fill,       shape=circle]
 \subcaptionbox{\label{subfig:planar_graph}}{
  \begin{tikzpicture}[scale=\scale,transform shape,node distance=\medialNodeDist,semithick]
   \node[closed] (0)              {};
   \node[closed] (1) [right of=0] {};
   \node[closed] (2) [above of=0] {};
   \node[closed] (3) [above of=1] {};
   \node[closed] (4) [above of=2] {};
   \path (0) edge[out=-45, in=-135]               node[external] (m0) {} (1)
             edge[out= 45, in= 135]               node[external] (m1) {} (1)
             edge                                 node[external] (m2) {} (2)
         (1) edge                                 node[external] (m3) {} (3)
         (2) edge                                 node[external] (m4) {} (3)
             edge                                 node[external] (m5) {} (4)
         (3) edge[out=125, in=  55, looseness=30] node[external] (m6) {} (3);
   \path (m0) edge[white, densely dashed, out= 135, in=-135]                (m1)
              edge[white, densely dashed, out=  45, in= -45]                (m1)
              edge[white, densely dashed, out=-145, in=-135, looseness=1.7] (m2)
              edge[white, densely dashed, out= -35, in= -45, looseness=1.7] (m3)
         (m1) edge[white, densely dashed]                                   (m2)
              edge[white, densely dashed]                                   (m3)
         (m2) edge[white, densely dashed]                                   (m4)
              edge[white, densely dashed, out= 135, in=-135]                (m5)
         (m3) edge[white, densely dashed]                                   (m4)
              edge[white, densely dashed, out=  45, in=  15]                (m6)
         (m4) edge[white, densely dashed]                                   (m5)
              edge[white, densely dashed, out=  90, in= 165]                (m6)
         (m5) edge[white, densely dashed, out= 125, in=  55, looseness=30]  (m5)
         (m6) edge[white, densely dashed, out=-125, in= -55, looseness=15]  (m6);
  \end{tikzpicture}}
 \qquad
 \qquad
 \subcaptionbox{\label{subfig:superimposed}}{
  \begin{tikzpicture}[scale=\scale,transform shape,node distance=\medialNodeDist,semithick]
   \node[closed] (0)              {};
   \node[closed] (1) [right of=0] {};
   \node[closed] (2) [above of=0] {};
   \node[closed] (3) [above of=1] {};
   \node[closed] (4) [above of=2] {};
   \path (0) edge[out=-45, in=-135]               node[open] (m0) {} (1)
             edge[out= 45, in= 135]               node[open] (m1) {} (1)
             edge                                 node[open] (m2) {} (2)
         (1) edge                                 node[open] (m3) {} (3)
         (2) edge                                 node[open] (m4) {} (3)
             edge                                 node[open] (m5) {} (4)
         (3) edge[out=125, in=  55, looseness=30] node[open] (m6) {} (3);
   \path (m0) edge[densely dashed, out= 135, in=-135]                (m1)
              edge[densely dashed, out=  45, in= -45]                (m1)
              edge[densely dashed, out=-145, in=-135, looseness=1.7] (m2)
              edge[densely dashed, out= -35, in= -45, looseness=1.7] (m3)
         (m1) edge[densely dashed]                                   (m2)
              edge[densely dashed]                                   (m3)
         (m2) edge[densely dashed]                                   (m4)
              edge[densely dashed, out= 135, in=-135]                (m5)
         (m3) edge[densely dashed]                                   (m4)
              edge[densely dashed, out=  45, in=  15]                (m6)
         (m4) edge[densely dashed]                                   (m5)
              edge[densely dashed, out=  90, in= 165]                (m6)
         (m5) edge[densely dashed, out= 125, in=  55, looseness=30]  (m5)
         (m6) edge[densely dashed, out=-125, in= -55, looseness=15]  (m6);
  \end{tikzpicture}}
 \qquad
 \qquad
 \subcaptionbox{\label{subfig:medial_graph}}{
  \begin{tikzpicture}[scale=\scale,transform shape,node distance=\medialNodeDist,semithick]
   \node[external] (0)              {};
   \node[external] (1) [right of=0] {};
   \node[external] (2) [above of=0] {};
   \node[external] (3) [above of=1] {};
   \node[external] (4) [above of=2] {};
   \path (0) edge[white, out=-45, in=-135]               node[open] (m0) {} (1)
             edge[white, out= 45, in= 135]               node[open] (m1) {} (1)
             edge[white]                                 node[open] (m2) {} (2)
         (1) edge[white]                                 node[open] (m3) {} (3)
         (2) edge[white]                                 node[open] (m4) {} (3)
             edge[white]                                 node[open] (m5) {} (4)
         (3) edge[white, out=125, in=  55, looseness=30] node[open] (m6) {} (3);
   \path (m0) edge[densely dashed, out= 135, in=-135]                (m1)
              edge[densely dashed, out=  45, in= -45]                (m1)
              edge[densely dashed, out=-145, in=-135, looseness=1.7] (m2)
              edge[densely dashed, out= -35, in= -45, looseness=1.7] (m3)
         (m1) edge[densely dashed]                                   (m2)
              edge[densely dashed]                                   (m3)
         (m2) edge[densely dashed]                                   (m4)
              edge[densely dashed, out= 135, in=-135]                (m5)
         (m3) edge[densely dashed]                                   (m4)
              edge[densely dashed, out=  45, in=  15]                (m6)
         (m4) edge[densely dashed]                                   (m5)
              edge[densely dashed, out=  90, in= 165]                (m6)
         (m5) edge[densely dashed, out= 125, in=  55, looseness=30]  (m5)
         (m6) edge[densely dashed, out=-125, in= -55, looseness=15]  (m6);
  \end{tikzpicture}}
 \caption{A plane graph~(\protect\subref{subfig:planar_graph}), its medial graph~(\protect\subref{subfig:medial_graph}),
 and the two graphs superimposed~(\protect\subref{subfig:superimposed}).}
 \label{fig:medial_graph_example}
\end{figure}

\begin{figure}[t]
 \centering
 \def\medialNodeDist{2.5cm}
 \tikzstyle{open}   = [draw, black, fill=white, shape=circle]
 \tikzstyle{closed} = [draw,        fill,       shape=circle]
 \tikzstyle{invisibleVertex} = [shape=circle]
 \tikzstyle{invisibleEdge} = [draw opacity=0]
 \subcaptionbox{\label{subfig:planar_graph2}}{
  \begin{tikzpicture}[scale=\scale,transform shape,node distance=\medialNodeDist,>=\arrowType,semithick]
   \node[closed] (0)              {};
   \node[closed] (1) [right of=0] {};
   \node[closed] (2) [below of=1] {};
   \node[closed] (3) [left  of=2] {};
   \path (0) edge node[invisibleVertex] (m0) {} (1)
         (1) edge node[invisibleVertex] (m1) {} (2)
         (2) edge node[invisibleVertex] (m2) {} (3)
         (3) edge node[invisibleVertex] (m3) {} (0);
   \path (m0) edge[invisibleEdge, ->, densely dashed]                                                                        (m1)
              edge[invisibleEdge, <-, densely dashed, out=  45, in=  45, looseness=3, overlay] node[invisibleVertex] (e0) {} (m1)
         (m1) edge[invisibleEdge, ->, densely dashed]                                                                        (m2)
              edge[invisibleEdge, <-, densely dashed, out= -45, in= -45, looseness=3, overlay] node[invisibleVertex] (e1) {} (m2)
         (m2) edge[invisibleEdge, ->, densely dashed]                                                                        (m3)
              edge[invisibleEdge, <-, densely dashed, out=-135, in=-135, looseness=3, overlay] node[invisibleVertex] (e2) {} (m3)
         (m3) edge[invisibleEdge, ->, densely dashed]                                                                        (m0)
              edge[invisibleEdge, <-, densely dashed, out= 135, in= 135, looseness=3, overlay] node[invisibleVertex] (e3) {} (m0);
   \node[invisibleVertex, below=0cm of e0] {};
   \node[invisibleVertex, left =0cm of e1] {};
   \node[invisibleVertex, above=0cm of e2] {};
   \node[invisibleVertex, right=0cm of e3] {};
  \end{tikzpicture}}
 \qquad
 \qquad
 \subcaptionbox{\label{subfig:superimposed2}}{
  \begin{tikzpicture}[scale=\scale,transform shape,node distance=\medialNodeDist,>=\arrowType,semithick]
   \node[closed] (0)              {};
   \node[closed] (1) [right of=0] {};
   \node[closed] (2) [below of=1] {};
   \node[closed] (3) [left  of=2] {};
   \path (0) edge node[open] (m0) {} (1)
         (1) edge node[open] (m1) {} (2)
         (2) edge node[open] (m2) {} (3)
         (3) edge node[open] (m3) {} (0);
   \path (m0) edge[->, densely dashed]                                                                        (m1)
              edge[<-, densely dashed, out=  45, in=  45, looseness=3, overlay] node[invisibleVertex] (e0) {} (m1)
         (m1) edge[->, densely dashed]                                                                        (m2)
              edge[<-, densely dashed, out= -45, in= -45, looseness=3, overlay] node[invisibleVertex] (e1) {} (m2)
         (m2) edge[->, densely dashed]                                                                        (m3)
              edge[<-, densely dashed, out=-135, in=-135, looseness=3, overlay] node[invisibleVertex] (e2) {} (m3)
         (m3) edge[->, densely dashed]                                                                        (m0)
              edge[<-, densely dashed, out= 135, in= 135, looseness=3, overlay] node[invisibleVertex] (e3) {} (m0);
   \node[invisibleVertex, below=0cm of e0] {};
   \node[invisibleVertex, left =0cm of e1] {};
   \node[invisibleVertex, above=0cm of e2] {};
   \node[invisibleVertex, right=0cm of e3] {};
   \begin{scope}[on background layer, overlay]
    \fill[fill=black!20] (m0.  45) to [out=  45, in=  45, looseness=3] (m1.  45) to (m1.center) to (m0.center) to (m0.  45);
    \fill[fill=black!20] (m1. -45) to [out= -45, in= -45, looseness=3] (m2. -45) to (m2.center) to (m1.center) to (m1. -45);
    \fill[fill=black!20] (m2.-135) to [out=-135, in=-135, looseness=3] (m3.-135) to (m3.center) to (m2.center) to (m2.-135);
    \fill[fill=black!20] (m3. 135) to [out= 135, in= 135, looseness=3] (m0. 135) to (m0.center) to (m3.center) to (m3. 135);
   \end{scope}
  \end{tikzpicture}}
 \qquad
 \qquad
 \subcaptionbox{\label{subfig:directed_medial_graph}}{
  \begin{tikzpicture}[scale=\scale,transform shape,node distance=\medialNodeDist,>=\arrowType,semithick]
   \node[invisibleVertex] (0)              {};
   \node[invisibleVertex] (1) [right of=0] {};
   \node[invisibleVertex] (2) [below of=1] {};
   \node[invisibleVertex] (3) [left  of=2] {};
   \path (0) edge[invisibleEdge] node[draw opacity=100, open] (m0) {} (1)
         (1) edge[invisibleEdge] node[draw opacity=100, open] (m1) {} (2)
         (2) edge[invisibleEdge] node[draw opacity=100, open] (m2) {} (3)
         (3) edge[invisibleEdge] node[draw opacity=100, open] (m3) {} (0);
   \path (m0) edge[->, densely dashed]                                                                        (m1)
              edge[<-, densely dashed, out=  45, in=  45, looseness=3, overlay] node[invisibleVertex] (e0) {} (m1)
         (m1) edge[->, densely dashed]                                                                        (m2)
              edge[<-, densely dashed, out= -45, in= -45, looseness=3, overlay] node[invisibleVertex] (e1) {} (m2)
         (m2) edge[->, densely dashed]                                                                        (m3)
              edge[<-, densely dashed, out=-135, in=-135, looseness=3, overlay] node[invisibleVertex] (e2) {} (m3)
         (m3) edge[->, densely dashed]                                                                        (m0)
              edge[<-, densely dashed, out= 135, in= 135, looseness=3, overlay] node[invisibleVertex] (e3) {} (m0);
   \node[invisibleVertex, below=0cm of e0] {};
   \node[invisibleVertex, left =0cm of e1] {};
   \node[invisibleVertex, above=0cm of e2] {};
   \node[invisibleVertex, right=0cm of e3] {};
   \begin{scope}[on background layer, overlay]
    \fill[fill=black!20] (m0.  45) to [out=  45, in=  45, looseness=3] (m1.  45) to (m1.center) to (m0.center) to (m0.  45);
    \fill[fill=black!20] (m1. -45) to [out= -45, in= -45, looseness=3] (m2. -45) to (m2.center) to (m1.center) to (m1. -45);
    \fill[fill=black!20] (m2.-135) to [out=-135, in=-135, looseness=3] (m3.-135) to (m3.center) to (m2.center) to (m2.-135);
    \fill[fill=black!20] (m3. 135) to [out= 135, in= 135, looseness=3] (m0. 135) to (m0.center) to (m3.center) to (m3. 135);
   \end{scope}
  \end{tikzpicture}}
 \caption{A plane graph~(\protect\subref{subfig:planar_graph2}), its directed medial graph~(\protect\subref{subfig:directed_medial_graph}),
 and both superimposed~(\protect\subref{subfig:superimposed2}).}
 \label{fig:directed_medial_graph_example}
\end{figure}

Figures~\ref{fig:medial_graph_example} and~\ref{fig:directed_medial_graph_example} give examples of a medial graph and a directed medial graph respectively.
Notice that the (directed) medial graph is always a planar $4$-regular graph.

The connection with the diagonal of the Tutte polynomial is due to Ellis-Monaghan.
A monochromatic vertex is a vertex with all its incident edges having the same color.

\begin{lemma}[Equation~(17) in~\cite{Ell04}] \label{lem:tutte_connection}
 Suppose $G$ is a connected plane graph and $\vec{G}_m$ is its directed medial graph.
 For $\kappa \in \N$,
 let $\mathcal{C}(\vec{G}_m)$ be the set of all edge $\kappa$-labelings of $\vec{G}_m$
 so that each (possibly empty) set of monochromatic edges forms an Eulerian digraph.
 Then
 \begin{equation}
  \kappa \operatorname{T}(G; \kappa+1, \kappa+1)
  = \sum_{c\; \in\; \mathcal{C}(\vec{G}_m)} 2^{m(c)}, \label{equ:Tutte}
 \end{equation}
 where $m(c)$ is the number of monochromatic vertices in the coloring $c$.
\end{lemma}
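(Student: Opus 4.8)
The plan is to route the computation through the directed circuit partition (Martin) polynomial of the $4$-regular graph $\vec{G}_m$ and then apply the classical identity linking it to the diagonal of the Tutte polynomial of $G$. Set up the bridge object first. Since $\vec{G}_m$ has in-degree and out-degree $2$ at every vertex, at each vertex there are exactly two \emph{directed transitions}: a pairing of the two incoming half-edges with the two outgoing half-edges. A \emph{transition system} $S$ chooses one directed transition at every vertex; it decomposes $\vec{G}_m$ into a family of directed closed trails, and we write $c(S)$ for their number. Define $\vec{J}(\vec{G}_m; t) = \sum_S t^{c(S)}$, the sum over all $2^{|V(\vec{G}_m)|}$ transition systems.

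The combinatorial core of the lemma is the identity $\sum_{c \in \mathcal{C}(\vec{G}_m)} 2^{m(c)} = \vec{J}(\vec{G}_m; \kappa)$, which I would prove by a bijection. On one side take pairs $(S, \phi)$ where $S$ is a transition system and $\phi$ assigns a color in $[\kappa]$ to each directed closed trail of $S$; on the other side take pairs consisting of a labeling $c \in \mathcal{C}(\vec{G}_m)$ together with a choice, at each monochromatic vertex of $c$, of one of the two directed transitions consistent with $c$. From $(S, \phi)$ one gets an edge $\kappa$-labeling $c$ by giving each edge the color of the trail through it; along a trail the color is constant, so at every vertex the two in-edges and two out-edges split into one or two color classes, each with equal in- and out-degree, whence $c \in \mathcal{C}(\vec{G}_m)$, and reading off $S$ at the monochromatic vertices of $c$ recovers the extra data. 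Conversely, given $c \in \mathcal{C}(\vec{G}_m)$: at a vertex that is not monochromatic, the Eulerian condition forces the two in-edges to receive distinct colors that also appear on the two out-edges, so there is a unique color-respecting directed transition there; at a monochromatic vertex both directed transitions respect $c$. Thus a choice of one transition per monochromatic vertex determines $S$, and $\phi$ is then forced (each trail of $S$ is monochromatic under $c$, and $\phi$ must be that color). Summing over $\phi$ with $S$ fixed gives $\sum_S \kappa^{c(S)} = \vec{J}(\vec{G}_m; \kappa)$, while the fiber over each $c$ has size $2^{m(c)}$, which proves the identity.

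It then remains to invoke the classical relation $\vec{J}(\vec{G}_m; t) = t \cdot \operatorname{T}(G; t+1, t+1)$ for a connected plane graph $G$ (the plane, directed case of the transition-polynomial-to-Tutte correspondence of Martin and Las Vergnas, which is exactly the content underlying~\cite{Ell04}). Setting $t = \kappa$ and combining with the previous paragraph gives $\kappa\, \operatorname{T}(G; \kappa+1, \kappa+1) = \sum_{c \in \mathcal{C}(\vec{G}_m)} 2^{m(c)}$, as claimed.

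I expect the main work to be in establishing (or carefully citing) that classical identity: one proves it by matching the deletion--contraction recursion $\operatorname{T}(G) = \operatorname{T}(G/e) + \operatorname{T}(G \setminus e)$, valid when $e$ is neither a bridge nor a loop, with the effect on $\vec{J}(\vec{G}_m; t)$ of the two orientation-consistent smoothings at the vertex of $\vec{G}_m$ corresponding to $e$, and then checking the base cases (loops, bridges, edgeless graphs) against $t\,\operatorname{T}(G; t+1, t+1)$, using that the black-face-on-the-left orientation makes precisely those two smoothings directed. The bijective step above, though the conceptual heart, is essentially bookkeeping once one isolates the key local fact: at a vertex of $\vec{G}_m$ a color class with two in-edges and only one out-edge is \emph{not} Eulerian, so the only color patterns permitted by $\mathcal{C}(\vec{G}_m)$ are ``all four incident edges one color'' (a monochromatic vertex, contributing the factor $2$) or ``two colors, each occurring once among the in-edges and once among the out-edges'' (which forces the transition).
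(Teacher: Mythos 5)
The paper does not prove this lemma; it is quoted verbatim from Ellis-Monaghan (Equation~(17) in~\cite{Ell04}), so there is no internal proof to compare against. Your argument is correct, and it is in fact the standard derivation underlying that reference: identify $\sum_{c} 2^{m(c)}$ with the directed transition (Martin) polynomial $\sum_S t^{c(S)}$ at $t=\kappa$ by the fiber-counting bijection you describe, using the key local observation that the Eulerian condition at a $(2,2)$-regular vertex permits only the patterns ``all four one color'' (both transitions consistent, weight $2$) or ``two colors, each once in and once out'' (unique consistent transition); then invoke the Martin--Las Vergnas identity $\sum_S t^{c(S)} = t\cdot \operatorname{T}(G;t+1,t+1)$ for the directed medial graph of a connected plane graph. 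The one place to be slightly careful is the normalization of the Martin polynomial (several appear in the literature, shifted by one in the exponent or the argument), but the version you chose, $\vec{J}(\vec{G}_m;t)=\sum_S t^{c(S)}$ with $\vec{J}(\vec{G}_m;t)=t\,\operatorname{T}(G;t+1,t+1)$, is consistent and gives the stated identity directly.
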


The Eulerian partitions in $\mathcal{C}(\vec{G}_m)$ have the property that the subgraphs induced by each partition do not intersect (or crossover) each other
due to the orientation of the edges in the medial graph.
We call the counting problem defined by the sum on the right-hand side of~(\ref{equ:Tutte})
counting weighted Eulerian partitions over planar $4$-regular graphs.
This problem also has an expression as a Holant problem using a succinct signature.
To define this succinct signature,
it helps to know the following basic result about edge colorings.

When the number of available colors coincides with the regularity parameter of the graph,
the cuts in any coloring satisfy a well-known parity condition.
This parity condition follows from a more general parity argument (see (1.2) and the Parity Argument on page~95 in~\cite{SSTF12}).
We state this simpler parity condition and provide a short proof for completeness.

\begin{lemma}[Parity Condition] \label{lem:k=r:parity_condition}
 Let $G$ be a $\kappa$-regular graph and consider a cut $C$ in $G$.
 For any edge $\kappa$-coloring of $G$,
 \[c_1 \equiv c_2 \equiv \dotsb \equiv c_\kappa \pmod{2},\]
 where $c_i$ is the number of edges in $C$ colored $i$ for $1 \le i \le \kappa$.
\end{lemma}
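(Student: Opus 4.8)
The plan is to prove this by a double-counting argument on one side of the cut. Write $C = E(S, \overline{S})$ as the set of edges with exactly one endpoint in a fixed vertex subset $S \subseteq V$. Since $G$ is $\kappa$-regular and the coloring is a proper edge $\kappa$-coloring, at every vertex $v$ the $\kappa$ incident edges receive the $\kappa$ colors each exactly once. In particular, for each color $i \in [\kappa]$ and each $v \in S$, exactly one edge incident to $v$ has color $i$.

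Next I would count, for a fixed color $i$, the number $N_i$ of incidences $(v, e)$ with $v \in S$, $e \ni v$, and $e$ colored $i$. Counting by vertices gives $N_i = |S|$, by the observation above. Counting by edges: an edge colored $i$ with both endpoints in $S$ contributes $2$, an edge colored $i$ in the cut $C$ contributes $1$, and an edge colored $i$ with no endpoint in $S$ contributes $0$. Hence $N_i = 2 m_i + c_i$, where $m_i$ is the number of $i$-colored edges internal to $S$. Therefore $c_i = |S| - 2 m_i \equiv |S| \pmod 2$. Since the right-hand side $|S|$ does not depend on $i$, we conclude $c_1 \equiv c_2 \equiv \dotsb \equiv c_\kappa \pmod 2$, which is exactly the claim.

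There is no real obstacle here; the only point that needs care is the standing assumption that an edge $\kappa$-coloring of a $\kappa$-regular graph is necessarily ``perfect'' at every vertex, i.e.\ uses all $\kappa$ colors at each vertex---this is immediate from properness together with $\deg(v) = \kappa$. One should also note that the argument is symmetric in $S$ and $\overline{S}$ (it gives $c_i \equiv |\overline{S}| \pmod 2$ as well, consistent because $|S| + |\overline{S}| = |V|$ and $|V|\kappa = 2|E|$ forces $|V|$ even when $\kappa$ is odd, though we do not need this). The same bookkeeping is the combinatorial heart of the more general parity argument cited from~\cite{SSTF12}, specialized here to a single cut.
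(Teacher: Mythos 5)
Your proof is correct, but it takes a genuinely different route from the paper's. The paper fixes two colors $i,j$, deletes all edges of other colors, and observes that the surviving graph is a disjoint union of $\{i,j\}$-alternating cycles (a Kempe-chain argument); since every cycle crosses the cut $C$ an even number of times, $c_i + c_j$ is even and hence $c_i \equiv c_j \pmod 2$. You instead double-count, for a single color $i$, the incidences between vertices of one side $S$ of the cut and $i$-colored edges: counting by vertices gives $|S|$, counting by edges gives $2m_i + c_i$, so $c_i \equiv |S| \pmod 2$ independently of $i$. Your argument buys something slightly stronger for free --- it pins down the common parity as $|S| \bmod 2$ rather than merely showing all the $c_i$ agree --- and it treats one color at a time rather than pairs. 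The paper's argument is the more standard one in edge-coloring theory (it is the mechanism behind Kempe chains and the cited Parity Argument of~\cite{SSTF12}) and generalizes naturally to statements about interchanging two color classes, but for this particular lemma either route is clean and complete.
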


\begin{proof}
 Consider two distinct colors $i$ and $j$.
 Remove from $G$ all edges not colored $i$ or $j$.
 The resulting graph is a disjoint union of cycles consisting of alternating colors $i$ and $j$.
 Each cycle in this graph must cross the cut $C$ an even number of times.
 Therefore, $c_i \equiv c_j \pmod{2}$.
\end{proof}

Consider all quaternary $\{\AD_{\kappa,\kappa}\}$-gates on domain size $\kappa \ge 3$.
These gadgets have a succinct signature of type $\tau_\text{color} =
(
 P_{\subMat{1}{1}{1}{1}},
 P_{\subMat{1}{1}{2}{2}},
 P_{\subMat{1}{2}{1}{2}},
 P_{\subMat{1}{2}{2}{1}},
 P_{\subMat{1}{2}{3}{4}},
 P_0
)$,
where
\begin{align*}
 P_{\subMat{1}{1}{1}{1}} &= \{(w,x,y,z) \in [\kappa]^4 \st w = x = y = z\},\\
 P_{\subMat{1}{1}{2}{2}} &= \{(w,x,y,z) \in [\kappa]^4 \st w = x \ne y = z\},\\
 P_{\subMat{1}{2}{1}{2}} &= \{(w,x,y,z) \in [\kappa]^4 \st w = y \ne x = z\},\\
 P_{\subMat{1}{2}{2}{1}} &= \{(w,x,y,z) \in [\kappa]^4 \st w = z \ne x = y\},\\
 P_{\subMat{1}{2}{3}{4}} &= \{(w,x,y,z) \in [\kappa]^4 \st w, x, y, z \text{ are distinct}\}, \text{ and}\\
 P_0 &= [\kappa]^4
 - P_{\subMat{1}{1}{1}{1}}
 - P_{\subMat{1}{1}{2}{2}}
 - P_{\subMat{1}{2}{1}{2}}
 - P_{\subMat{1}{2}{2}{1}}
 - P_{\subMat{1}{2}{3}{4}}.
\end{align*}
Any quaternary signature of an $\{\AD_{\kappa,\kappa}\}$-gate is constant on the first five parts of $\tau_\text{color}$ since $\AD_{\kappa,\kappa}$ is domain invariant.
Using Lemma~\ref{lem:k=r:parity_condition},
we can show that the entry corresponding to $P_0$ is~$0$.

\begin{lemma} \label{lem:coloring:k=r:f(P_0)=0}
 Suppose $\kappa \ge 3$ is the domain size
 and let $F$ be a quaternary $\{\AD_{\kappa,\kappa}\}$-gate with succinct signature $f$ of type $\tau_\text{color}$.
 Then $f(P_0) = 0$.
\end{lemma}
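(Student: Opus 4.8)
The plan is to read off the combinatorial meaning of $f$ and then force it to vanish via the parity condition of Lemma~\ref{lem:k=r:parity_condition}. Because every internal vertex of the gadget $F$ carries $\AD_{\kappa,\kappa}$, the value $f(w,x,y,z)$ is exactly the number of edge $\kappa$-labelings of $F$ that restrict to $(w,x,y,z)$ on the four dangling edges and give pairwise distinct colors to the edges at every internal vertex. So it suffices to show that no such labeling exists once $(w,x,y,z)\in P_0$.

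First I would pin down $P_0$. Classifying a tuple of $[\kappa]^4$ by the multiset of its coordinates, there are exactly five shapes, indexed by the partitions of $4$: $(4)$, $(3,1)$, $(2,2)$, $(2,1,1)$, and $(1,1,1,1)$. The part $P_{\subMat{1}{1}{1}{1}}$, the three parts $P_{\subMat{1}{1}{2}{2}}$, $P_{\subMat{1}{2}{1}{2}}$, $P_{\subMat{1}{2}{2}{1}}$ (one for each of the three ways to split the four positions into two pairs), and $P_{\subMat{1}{2}{3}{4}}$ together account for precisely, and disjointly, the tuples of shapes $(4)$, $(2,2)$, and $(1,1,1,1)$. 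Hence $P_0$ consists exactly of the tuples of shape $(3,1)$ (one color three times, another once) or $(2,1,1)$ (one color twice, two further colors once each).

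Next I would put the dangling-edge gadget into the hypothesis of Lemma~\ref{lem:k=r:parity_condition}, which needs an honest $\kappa$-regular graph. Take two disjoint copies $F_1,F_2$ of $F$ and, for each $i\in\{1,2,3,4\}$, merge the $i$-th dangling edge of $F_1$ with the $i$-th dangling edge of $F_2$ into one edge $e_i$. Every internal vertex keeps degree $\kappa$, so the resulting (multi)graph $G'$ is $\kappa$-regular, and $C=\{e_1,e_2,e_3,e_4\}$ is a cut separating $V(F_1)$ from $V(F_2)$. If $f(w,x,y,z)\ne 0$, pick a valid labeling $\sigma$ of $F$ with dangling-edge colors $(w,x,y,z)$ and use $\sigma$ on both copies: the colors match on every $e_i$ and every internal vertex of $G'$ still sees $\kappa$ distinct colors, so this is a proper edge $\kappa$-coloring of $G'$ in which $e_i$ carries the color of the $i$-th dangling edge. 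Applying Lemma~\ref{lem:k=r:parity_condition} to $C$ then gives $c_1\equiv\cdots\equiv c_\kappa\pmod 2$, where $c_k$ is the number of coordinates of $(w,x,y,z)$ equal to $k$.

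Finally the arithmetic, which is the only place needing care. If the $c_k$ all share a parity, they are either all even or all odd. All even forces at most two of them to be nonzero (each such one at least $2$, total $4$), so the shape is $(4)$ or $(2,2)$. All odd forces every color to be used, so $\kappa\le\sum_k c_k=4$; as three odd numbers cannot sum to $4$, the only remaining case is $\kappa=4$ with $(c_1,c_2,c_3,c_4)=(1,1,1,1)$, i.e.\ shape $(1,1,1,1)$. None of $(4)$, $(2,2)$, $(1,1,1,1)$ lies in $P_0$, which contradicts $(w,x,y,z)\in P_0$; hence the defining sum for $f$ is empty on $P_0$, and $f(P_0)=0$. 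The main obstacle, such as it is, is this parity bookkeeping together with the mild packaging that turns the dangling-edge gadget into a $\kappa$-regular graph so Lemma~\ref{lem:k=r:parity_condition} applies; one could instead skip the doubling and rerun that lemma's argument inside $F$, where the edges carrying either of two fixed colors form a disjoint union of cycles and of paths whose endpoints are exactly the dangling edges of those two colors, forcing $c_i+c_j$ even for all $i\ne j$.
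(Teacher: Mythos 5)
Your proposal is correct and matches the paper's argument: both double the gadget $F$ along its dangling edges to obtain a $\kappa$-regular graph, apply Lemma~\ref{lem:k=r:parity_condition} to the resulting $4$-edge cut, and conclude that a tuple in $P_0$ cannot satisfy the parity condition. You additionally spell out what the paper leaves implicit --- that $P_0$ is exactly the tuples of shape $(3,1)$ or $(2,1,1)$, and that the all-even/all-odd case analysis (with the observation that three odd numbers cannot sum to $4$, disposing of the $\kappa>4$ and $\kappa=3$ odd cases) rules these out --- which is a welcome bit of bookkeeping but not a different route.
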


\begin{proof}
 Let $\sigma_0 \in  P_0$ be an edge $\kappa$-labeling of the external edges of $F$.
 Assume for a contradiction that $\sigma_0$ can be extended to an edge $\kappa$-coloring of $F$.
 We form a graph $G$ from two copies of $F$ (namely, $F_1$ and $F_2$) by connecting their corresponding external edges.
 Then $G$ has a coloring $\sigma$ that extends $\sigma_0$.
 Consider the cut $C = (F_1, F_2)$ in $G$ whose cut set contains exactly those edges labeled by $\sigma_0$.
 By Lemma~\ref{lem:k=r:parity_condition},
 the counts of the colors assigned by $\sigma_0$ must satisfy the parity condition.
 However, this is a contradiction since no edge $\kappa$-labeling in $P_0$ satisfies the parity condition.
\end{proof}

By Lemma~\ref{lem:coloring:k=r:f(P_0)=0},
we denote a quaternary signature $f$ of an $\{\AD_{\kappa,\kappa}\}$-gate by the succinct signature
$\langle
 f(P_{\subMat{1}{1}{1}{1}}),
 f(P_{\subMat{1}{1}{2}{2}}),
 f(P_{\subMat{1}{2}{1}{2}}),
 f(P_{\subMat{1}{2}{2}{1}}),
 f(P_{\subMat{1}{2}{3}{4}})
\rangle$ of type $\tau_\text{color}$,
which has the entry for $P_0$ omitted.%
\footnote{If $\kappa > 4$,
then Lemma~\ref{lem:k=r:parity_condition} further implies that these signatures are also~0 on $P_{\subMat{1}{2}{3}{4}}$.
However, when $\kappa = 4$, this value might be nonzero. The $\AD_{4,4}$ signature is an example of this.
Instead of using this observation that depends on $\kappa$ in our proof,
we only construct gadgets such that their signatures are~$0$ on $P_{\subMat{1}{2}{3}{4}}$ for any value of $\kappa$.}
When $\kappa = 3$,
$P_{\subMat{1}{2}{3}{4}}$ is empty and we define its entry in the succinct signature to be~$0$.

\begin{lemma} \label{lem:holant_connection}
 Let $G$ be a connected plane graph and let $G_m$ be the medial graph of $G$.
 Then
 \[
  \kappa \operatorname{T}(G; \kappa+1, \kappa+1) = \PlHolant(G_m; \langle 2,1,0,1,0 \rangle),
 \]
 where the Holant problem has domain size $\kappa$ and $\langle 2,1,0,1,0 \rangle$ is a succinct signature of type $\tau_\text{color}$.
\end{lemma}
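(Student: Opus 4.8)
The plan is to connect the right-hand side of Lemma~\ref{lem:tutte_connection} with the Holant value of the medial graph $G_m$ under the succinct signature $\langle 2,1,0,1,0 \rangle$, by reinterpreting edge $\kappa$-labelings of the directed medial graph $\vec G_m$ as edge $\kappa$-labelings of the (undirected) medial graph $G_m$. First I would observe that $G_m$ and $\vec G_m$ have the same underlying graph, so an edge $\kappa$-labeling $c : E(G_m) \to [\kappa]$ is the same thing as an edge $\kappa$-labeling of $\vec G_m$. Each vertex $v$ of the $4$-regular graph $G_m$ comes from a face of $G$ where two edges of $G$ meet, and the four incident edges of $v$ in $\vec G_m$ have a prescribed cyclic pattern of two in-edges and two out-edges (so that the black face—the one containing a vertex of $G$—is on the left). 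Relative to the counterclockwise ordering $(w,x,y,z)$ of the four edges at $v$ that the planar signature grid uses, the two in-edges are opposite each other and the two out-edges are opposite each other; after fixing the convention one checks that the incoming pair is $\{w,y\}$ and the outgoing pair is $\{x,z\}$ (or the reflected version—the precise identification is read off from Figure~\ref{fig:directed_medial_graph_example} and the ``black face on the left'' rule).

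Next I would evaluate the succinct signature $f=\langle 2,1,0,1,0 \rangle$ of type $\tau_{\text{color}}$ at each local configuration $(w,x,y,z)$ and check that, summed over all labelings, $\PlHolant(G_m; f)$ equals $\sum_{c \in \mathcal{C}(\vec G_m)} 2^{m(c)}$. The key local claim is: a labeling $c$ lies in $\mathcal{C}(\vec G_m)$—i.e., the monochromatic edges at every vertex form an Eulerian digraph (in-degree $=$ out-degree at each vertex, locally)—if and only if at each vertex $v$ the pattern $(w,x,y,z)$ has an even contribution, and the nonzero entries of $f$ record exactly the allowed local patterns. Concretely: the all-equal pattern $P_{\subMat{1}{1}{1}{1}}$ has value $2$ (this is where the factor $2^{m(c)}$ comes from, one factor of $2$ per monochromatic vertex); the pattern where the incoming pair agrees and the outgoing pair agrees but the two pairs differ, $P_{\subMat{1}{2}{1}{2}}$ (in my orientation convention $w=y\ne x=z$), is the ``straight-through'' pairing of two monochromatic color classes and has value $1$; the pattern $P_{\subMat{1}{1}{2}{2}}$ (adjacent edges equal, $w=x\ne y=z$) would force an in-edge to be matched with an out-edge of a different class in a way that is still balanced, also value $1$; and every pattern that breaks the in-degree/out-degree balance of some color class—including $P_{\subMat{1}{2}{2}{1}}$ wait, no: I need to be careful—$P_{\subMat{1}{2}{2}{1}}$ is $w=z\ne x=y$, which pairs the two in-edges' colors with the two out-edges' colors ``crosswise.'' One of the two ``cross'' pairings is consistent with the orientation (Euler-balanced) and gets value $1$, the other is not and gets value $0$; which is which is dictated by $\tau_{\text{color}}$, and $P_0$ (patterns with three or four distinct colors among $w,x,y,z$, or the ``bad'' two-pair pattern) gets value $0$, matching Lemma~\ref{lem:coloring:k=r:f(P_0)=0} in spirit. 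Thus the product $\prod_v f(c|_{E(v)})$ is $0$ unless $c \in \mathcal{C}(\vec G_m)$, in which case it is $\prod_{v \text{ monochromatic}} 2 = 2^{m(c)}$. Summing over $c$ gives $\sum_{c\in\mathcal{C}(\vec G_m)} 2^{m(c)}$, which by Lemma~\ref{lem:tutte_connection} equals $\kappa\,\mathrm{T}(G;\kappa+1,\kappa+1)$.

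The main obstacle I expect is the bookkeeping of orientations: pinning down exactly which of the three two-pair patterns $P_{\subMat{1}{1}{2}{2}}, P_{\subMat{1}{2}{1}{2}}, P_{\subMat{1}{2}{2}{1}}$ corresponds to an Euler-balanced crossing at a medial vertex under the counterclockwise input ordering and the ``black face on the left'' directing rule, and verifying that a global edge $\kappa$-labeling has all its monochromatic subgraphs Eulerian exactly when the balance holds at every vertex locally. The ``only if'' direction of the membership condition (locally balanced at every vertex $\Rightarrow$ each monochromatic color class is an Eulerian digraph) is immediate since being Eulerian is a purely local (degree) condition on each vertex; the content is checking that the nonzero support of $\langle 2,1,0,1,0 \rangle$ under $\tau_{\text{color}}$ is precisely the set of locally balanced patterns, with the right multiplicities. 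Once the single-vertex correspondence is established from Figure~\ref{fig:directed_medial_graph_example}, the global statement follows by taking the product over all vertices and summing over all labelings, and the factor $\kappa$ on the left is inherited unchanged from Lemma~\ref{lem:tutte_connection}.
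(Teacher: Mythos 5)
Your overall strategy is the same as the paper's: treat an edge $\kappa$-labeling of $G_m$ as a labeling of $\vec{G}_m$, and verify that the product of $\langle 2,1,0,1,0 \rangle$ over the vertices reproduces the weight $2^{m(c)}$ for $c \in \mathcal{C}(\vec{G}_m)$ and vanishes otherwise, then invoke Lemma~\ref{lem:tutte_connection}. That part is right. But the local Euler-balance analysis you carry out is backwards at exactly the decisive point.

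You assert that $P_{\subMat{1}{2}{1}{2}}$ (that is, $w=y\ne x=z$, with $\{w,y\}$ the opposite in-pair and $\{x,z\}$ the opposite out-pair under the ``in, out, in, out'' cyclic orientation) is a balanced ``straight-through'' pattern and should receive value $1$. It is exactly the opposite: in this pattern one color occupies both in-edges (in-degree $2$, out-degree $0$) and the other occupies both out-edges, so \emph{neither} monochromatic class is Eulerian at $v$; this is the forbidden pattern, and it is precisely why the third entry of $\langle 2,1,0,1,0 \rangle$ is $0$. The two Euler-balanced two-color patterns are the adjacent-pair ones: $P_{\subMat{1}{1}{2}{2}}$ ($w=x\ne y=z$) and $P_{\subMat{1}{2}{2}{1}}$ ($w=z\ne x=y$), each of which puts one in-edge and one out-edge in each color class; both receive value $1$, matching the second and fourth entries. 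You also lump a ``bad two-pair pattern'' into $P_0$, but $P_0$ consists only of the three-color patterns; all three two-pair patterns have dedicated slots in $\tau_\text{color}$. Once you fix this identification the argument does go through as you sketched, and it coincides with the paper's proof.
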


\begin{proof}
 Let $f = \langle 2,1,0,1,0 \rangle$.
 By Lemma~\ref{lem:tutte_connection}, we only need to prove that
 \begin{equation} \label{equ:coloring:Holant}
  \sum_{c\; \in\; \mathcal{C}(\vec{G}_m)} 2^{m(c)} = \PlHolant(G_m; f),
 \end{equation}
 where the notation is from Lemma~\ref{lem:tutte_connection}.

 Each $c \in \mathcal{C}(\vec{G}_m)$ is also an edge $\kappa$-labeling of $G_m$.
 At each vertex $v \in V(\vec{G}_m)$,
 the four incident edges are assigned at most two distinct colors by $c$.
 If all four edges are assigned the same color,
 then the constraint $f$ on $v$ contributes a factor of~$2$ to the total weight.
 This is given by the value in the first entry of $f$.
 Otherwise, there are two different colors, say $x$ and $y$.
 Because the orientation at $v$ in $\vec{G}_m$ is cyclically ``in, out, in, out'',
 the coloring around $v$ can only be of the form $xxyy$ or $xyyx$.
 These correspond to the second and fourth entries of $f$.
 Therefore, every term in the summation on the left-hand side of~$(\ref{equ:coloring:Holant})$ appears (with the same nonzero weight) in the summation $\PlHolant(G_m; f)$.

 It is also easy to see that every nonzero term in $\PlHolant(G_m; f)$ appears in the sum on the left-hand side of~(\ref{equ:coloring:Holant})
 with the same weight of~$2$ to the power of the number of monochromatic vertices.
 In particular,
 any coloring with a vertex that is cyclically colored $xyxy$ for different colors $x$ and $y$ does not contribute because $f(P_{\subMat{1}{2}{1}{2}}) = 0$.
\end{proof}

\begin{remark}
 This result shows that this planar Holant problem is at least as hard as computing the Tutte polynomial at the point $(\kappa+1, \kappa+1)$ over planar graphs,
 which implies $\SHARPP$-hardness.
 Of course they are equally difficult in the sense that both are $\SHARPP$-complete.
 In fact,
 they are more directly related since every $4$-regular plane graph is the medial graph of some plane graph.
\end{remark}

By Theorem~\ref{thm:tutte} and Lemma~\ref{lem:holant_connection},
the problem $\PlHolant(\langle 2,1,0,1,0 \rangle)$ is $\SHARPP$-hard.
We state this as a corollary.

\begin{corollary} \label{cor:k=r:21010_hard}
 Suppose $\kappa \ge 3$ is the domain size.
 Let $\langle 2,1,0,1,0 \rangle$ be a succinct quaternary signature of type $\tau_\text{color}$.
 Then $\PlHolant(\langle 2,1,0,1,0 \rangle)$ is $\SHARPP$-hard.
\end{corollary}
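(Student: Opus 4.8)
The plan is to read the desired $\SHARPP$-hardness directly off Lemma~\ref{lem:holant_connection} together with the planar Tutte dichotomy (Theorem~\ref{thm:tutte}). First I would fix the domain size $\kappa \ge 3$ and consider the diagonal point $(x,y) = (\kappa+1,\kappa+1)$ of the Tutte polynomial. To invoke Theorem~\ref{thm:tutte} I must check that this point avoids every exceptional tractable case: since $(x-1)(y-1) = \kappa^2 \ge 9$, it is not in $\{1,2\}$; and since $\kappa+1 \ge 4$ is a real integer, $(\kappa+1,\kappa+1)$ is none of $(1,1)$, $(-1,-1)$, $(\omega,\omega^2)$, $(\omega^2,\omega)$. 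Hence evaluating the Tutte polynomial at $(\kappa+1,\kappa+1)$ over planar graphs is $\SHARPP$-hard.

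Next I would turn this into a polynomial-time Turing reduction to $\PlHolant(\langle 2,1,0,1,0 \rangle)$. The $\SHARPP$-hardness of the Tutte evaluation may be taken to hold already over \emph{connected} planar graphs (a disconnected instance splits over its components and the Tutte polynomial is multiplicative, so this restriction costs nothing). Given a connected plane graph $G$, form its medial graph $G_m$, which is a planar $4$-regular graph and therefore a legitimate instance of $\PlHolant(\langle 2,1,0,1,0 \rangle)$ on domain size $\kappa$, where $\langle 2,1,0,1,0 \rangle$ is the fixed succinct signature of type $\tau_\text{color}$ (the omitted $P_0$ entry being justified by Lemma~\ref{lem:coloring:k=r:f(P_0)=0}). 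By Lemma~\ref{lem:holant_connection}, a single oracle call returns $\PlHolant(G_m; \langle 2,1,0,1,0 \rangle) = \kappa\,\operatorname{T}(G; \kappa+1,\kappa+1)$, and dividing by the nonzero constant $\kappa$ recovers $\operatorname{T}(G; \kappa+1,\kappa+1)$. Composing the two reductions yields the claimed hardness; note that since $\langle 2,1,0,1,0 \rangle$ is a single signature, $\SHARPP$-hardness of $\{\langle 2,1,0,1,0 \rangle\}$ is exactly what we need.

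Since both key ingredients are already in hand, there is essentially no hard step here; the only things to be careful about are the bookkeeping items above — verifying that $(\kappa+1,\kappa+1)$ is a hard point of the Tutte polynomial for every $\kappa \ge 3$, and observing that restricting to connected $G$ loses no generality so that Lemma~\ref{lem:holant_connection} applies verbatim. Accordingly I would present the argument as a short one-paragraph proof. (As the remark following Lemma~\ref{lem:holant_connection} observes, every planar $4$-regular graph is in fact a medial graph, so the reduction can be upgraded to an equivalence, but that stronger statement is not needed for the corollary.)
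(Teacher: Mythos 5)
Your proof is correct and matches the paper's own argument exactly: the paper states the corollary as an immediate consequence of Theorem~\ref{thm:tutte} and Lemma~\ref{lem:holant_connection}, and you have simply fleshed out the bookkeeping (verifying that $(\kappa+1,\kappa+1)$ avoids every exceptional case and that restricting to connected $G$ is harmless).
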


With this connection established,
we can now show that counting edge colorings is $\SHARPP$-hard over planar regular graphs when the number of colors and the regularity parameter coincide.

\begin{figure}[t]
 \centering
 \begin{tikzpicture}[scale=\scale,transform shape,node distance=\nodeDist,semithick]
  \node[external] (0)              {};
  \node[external] (1) [below of=0] {};
  \node[internal] (2) [right of=0] {};
  \node[internal] (3) [right of=1] {};
  \node[external] (4) [right of=2] {};
  \node[external] (5) [right of=3] {};
  \path (0.west) edge[postaction={decorate, decoration={
                                             markings,
                                             mark=at position 0.70 with {\arrow[>=diamond,white] {>}; },
                                             mark=at position 0.70 with {\arrow[>=open diamond]  {>}; } } }] (2)
        (1.west) edge             (3)
        (2)      edge[very thick] (3)
                 edge             (4.east)
        (3)      edge             (5.east);
  \begin{pgfonlayer}{background}
   \node[draw=\borderColor,thick,rounded corners,inner xsep=12pt,inner ysep=8pt,fit = (2) (3)] {};
  \end{pgfonlayer}
 \end{tikzpicture}
 \caption{Quaternary gadget used in the interpolation construction below.
 All vertices are assigned $\AD_{\kappa,\kappa}$.
 The bold edge represents $\kappa - 2$ parallel edges.}
 \label{fig:gadget:k=r:arity_reduction}
\end{figure}
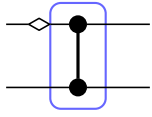

\begin{figure}[t]
 \centering
 \captionsetup[subfigure]{labelformat=empty}
 \subcaptionbox{$N_1$}{
  \begin{tikzpicture}[scale=\scale,transform shape,node distance=\nodeDist,semithick]
   \node[external] (0)                    {};
   \node[external] (1) [right       of=0] {};
   \node[square]   (2) [below right of=1] {};
   \node[external] (3) [below left  of=2] {};
   \node[external] (4) [left        of=3] {};
   \node[external] (5) [above right of=2] {};
   \node[external] (6) [right       of=5] {};
   \node[external] (7) [below right of=2] {};
   \node[external] (8) [right       of=7] {};
   \path (0) edge[out=   0, in=135, postaction={decorate, decoration={
                                                           markings,
                                                           mark=at position 0.4   with {\arrow[>=diamond, white] {>}; },
                                                           mark=at position 0.4   with {\arrow[>=open diamond]   {>}; },
                                                           mark=at position 0.999 with {\arrow[>=diamond, white] {>}; },
                                                           mark=at position 1.0   with {\arrow[>=open diamond]   {>}; } } }] (2)
         (2) edge[out=-135, in=  0] (4)
             edge[out=  45, in=180] (6)
             edge[out= -45, in=180] (8);
   \begin{pgfonlayer}{background}
    \node[draw=\borderColor,thick,rounded corners,fit = (1) (3) (5) (7)] {};
   \end{pgfonlayer}
  \end{tikzpicture}}
 \qquad
 \subcaptionbox{$N_2$}{
  \begin{tikzpicture}[scale=\scale,transform shape,node distance=\nodeDist,semithick]
   \node[external]  (0)                    {};
   \node[external]  (1) [right       of=0] {};
   \node[square]    (2) [below right of=1] {};
   \node[external]  (3) [below left  of=2] {};
   \node[external]  (4) [left        of=3] {};
   \node[external]  (5) [right       of=2] {};
   \node[square]    (6) [right       of=5] {};
   \node[external]  (7) [above right of=6] {};
   \node[external]  (8) [right       of=7] {};
   \node[external]  (9) [below right of=6] {};
   \node[external] (10) [right       of=9] {};
   \path (0) edge[out=   0, in=135, postaction={decorate, decoration={
                                                           markings,
                                                           mark=at position 0.4   with {\arrow[>=diamond, white] {>}; },
                                                           mark=at position 0.4   with {\arrow[>=open diamond]   {>}; },
                                                           mark=at position 0.999 with {\arrow[>=diamond, white] {>}; },
                                                           mark=at position 1.0   with {\arrow[>=open diamond]   {>}; } } }] (2)
         (2) edge[out=-135, in=  0]  (4)
             edge[bend left,        postaction={decorate, decoration={
                                                           markings,
                                                           mark=at position 0.999 with {\arrow[>=diamond, white] {>}; },
                                                           mark=at position 1.0   with {\arrow[>=open diamond]   {>}; } } }] (6)
             edge[bend right]        (6)
         (6) edge[out=  45, in=180]  (8)
             edge[out= -45, in=180] (10);
   \begin{pgfonlayer}{background}
    \node[draw=\borderColor,thick,rounded corners,fit = (1) (3) (7) (9)] {};
   \end{pgfonlayer}
  \end{tikzpicture}}
 \qquad
 \subcaptionbox{$N_{s+1}$}{
  \begin{tikzpicture}[scale=\scale,transform shape,node distance=\nodeDist,semithick]
   \node[external]  (0)                     {};
   \node[external]  (1) [above right of=0]  {};
   \node[external]  (2) [below right of=0]  {};
   \node[external]  (3) [below right of=1]  {};
   \node[external]  (4) [below right of=3]  {};
   \node[external]  (5) [above right of=3]  {};
   \node[external]  (6) [right       of=4]  {};
   \node[external]  (7) [right       of=5]  {};
   \node[external]  (8) [left        of=0]  {};
   \node[square]    (9) [left        of=8]  {};
   \node[external] (10) [above left  of=9]  {};
   \node[external] (11) [below left  of=9]  {};
   \node[external] (12) [left        of=10] {};
   \node[external] (13) [left        of=11] {};
   \path let
          \p1 = (1),
          \p2 = (2)
         in
          node[external] at (\x1, \y1 / 2 + \y2 / 2) {\Huge $N_s$};
   \path let
          \p1 = (0)
         in
          node[external] (14) at (\x1 + 2, \y1 + 10) {};
   \path let
          \p1 = (0)
         in
          node[external] (15) at (\x1 + 2, \y1 - 10) {};
   \path let
          \p1 = (3)
         in
          node[external] (16) at (\x1 - 2, \y1 + 10) {};
   \path let
          \p1 = (3)
         in
          node[external] (17) at (\x1 - 2, \y1 - 10) {};
   \path (12) edge[out=   0, in= 135, postaction={decorate, decoration={
                                                           markings,
                                                           mark=at position 0.4   with {\arrow[>=diamond, white] {>}; },
                                                           mark=at position 0.4   with {\arrow[>=open diamond]   {>}; },
                                                           mark=at position 0.999 with {\arrow[>=diamond, white] {>}; },
                                                           mark=at position 1.0   with {\arrow[>=open diamond]   {>}; } } }] (9)
          (6) edge[out= 180, in= -45] (17)
          (7) edge[out= 180, in=  45] (16)
          (9) edge[out=  45, in= 145, postaction={decorate, decoration={
                                                           markings,
                                                           mark=at position 0.9 with {\arrow[>=diamond, white] {>}; },
                                                           mark=at position 0.9   with {\arrow[>=open diamond]   {>}; } } }] (14)
              edge[out= -45, in=-145] (15)
              edge[out=-135, in=   0] (13);
   \begin{pgfonlayer}{background}
    \node[draw=\borderColor,thick,densely dashed,rounded corners,fit = (0) (1.south) (2.north) (3)] {};
    \node[draw=\borderColor,thick,rounded corners,fit = (4) (5) (10) (11)] {};
   \end{pgfonlayer}
  \end{tikzpicture}}
 \caption{Recursive construction to interpolate $\langle 2,1,0,1,0 \rangle$.
 The vertices are assigned the signature of the gadget in Figure~\ref{fig:gadget:k=r:arity_reduction}.}
 \label{fig:gadget:linear_interpolation:quaternary}
\end{figure}

\begin{theorem} \label{thm:edge_coloring:k=r}
 \#$\kappa$-\textsc{EdgeColoring} is $\SHARPP$-hard over planar $\kappa$-regular graphs for all $\kappa \ge 3$.
\end{theorem}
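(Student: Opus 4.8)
The plan is to reduce from $\PlHolant(\langle 2,1,0,1,0 \rangle)$, which is $\SHARPP$-hard by Corollary~\ref{cor:k=r:21010_hard}, to $\PlHolant(\AD_{\kappa,\kappa})$; since a planar signature grid in which every vertex carries $\AD_{\kappa,\kappa}$ is precisely a planar $\kappa$-regular graph asking for its number of edge $\kappa$-colorings, this yields the theorem. The reduction is by polynomial interpolation built from the planar quaternary $\{\AD_{\kappa,\kappa}\}$-gadget $F$ of Figure~\ref{fig:gadget:k=r:arity_reduction}. First I would compute the signature of $F$: at each of its two $\AD_{\kappa,\kappa}$ vertices the distinctness requirement forces the two incident dangling edges together with the $\kappa-2$ internal parallel edges to use all $\kappa$ colors, so on \emph{each} side the internal edges receive exactly the complement of the two dangling colors. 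Hence $F$ is nonzero only when the unordered pair of dangling colors on one side equals that on the other side (and the two colors on each side are distinct), and in that case its value is $(\kappa-2)!$, the number of bijections between the parallel edges and the remaining colors.

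Regarding $F$ as a $\kappa^{2}\times\kappa^{2}$ signature matrix $M_F$ acting on color pairs, the description above shows $M_F$ is block diagonal: on the $\kappa$ ``diagonal'' pairs it equals $(\kappa-2)!\,(J_\kappa - I_\kappa)$, and on the $\kappa(\kappa-1)$ ``off-diagonal'' pairs it equals $(\kappa-2)!$ times the transposition operator $(a,b)\mapsto(b,a)$. Thus $M_F$ has exactly three distinct eigenvalues, $\mu_1=(\kappa-2)!(\kappa-1)$, $\mu_2=(\kappa-2)!$, and $\mu_3=-(\kappa-2)!$, all nonzero for $\kappa\ge 3$. Moreover the target matrix $M_{\langle 2,1,0,1,0\rangle}$ has the \emph{same} block structure --- it is $I_\kappa + J_\kappa$ on the diagonal pairs and the identity on the off-diagonal pairs --- so it commutes with $M_F$ and they are simultaneously diagonalized: if $\Pi_1,\Pi_2,\Pi_3$ denote the projections onto the eigenspaces of $M_F$ for $\mu_1,\mu_2,\mu_3$ respectively, then $M_{\langle 2,1,0,1,0\rangle} = (\kappa+1)\Pi_1 + \Pi_2 + \Pi_3$.

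The chain $N_s$ of $s$ copies of $F$ in Figure~\ref{fig:gadget:linear_interpolation:quaternary} is a planar $\{\AD_{\kappa,\kappa}\}$-gadget with signature matrix $M_F^{\,s} = \mu_1^{\,s}\Pi_1 + \mu_2^{\,s}\Pi_2 + \mu_3^{\,s}\Pi_3$ (the precise way consecutive copies are glued is immaterial, since $F$ is invariant under swapping its two internal vertices). Given a planar instance $\Omega$ of $\PlHolant(\langle 2,1,0,1,0\rangle)$ with $n$ vertices, replacing every vertex by $N_s$ gives a planar $\kappa$-regular instance $\Omega_s$ of $\PlHolant(\AD_{\kappa,\kappa})$, and by multilinearity of the Holant in the vertex signatures $\Holant(\Omega_s) = \sum_{\vec\jmath\in\{1,2,3\}^{n}}\bigl(\textstyle\prod_{k}\mu_{j_k}\bigr)^{s} c_{\vec\jmath}$, where $c_{\vec\jmath}$ depends only on $\Omega$ (it is the Holant of $\Omega$ with $\Pi_{j_k}$ placed at vertex $k$), while $\Holant(\Omega;\langle 2,1,0,1,0\rangle) = \sum_{\vec\jmath}\bigl(\textstyle\prod_{k}\nu_{j_k}\bigr) c_{\vec\jmath}$ with $\nu_1 = \kappa+1$ and $\nu_2 = \nu_3 = 1$. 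Since $\prod_{k}\mu_{j_k} = \pm(\kappa-2)!^{\,n}(\kappa-1)^{a}$ depends only on the number $a$ of indices $k$ with $j_k=1$ and on the parity of the number with $j_k=3$, and $\kappa-1\ge 2$ makes these $2n+1$ numbers pairwise distinct and nonzero, the evaluations $\Holant(\Omega_1),\dots,\Holant(\Omega_{2n+1})$ give a nonsingular generalized Vandermonde system; solving it recovers $C_a := \sum_{\vec\jmath:\#\{k:j_k=1\}=a} c_{\vec\jmath}$ for each $a$, whence $\Holant(\Omega;\langle 2,1,0,1,0\rangle) = \sum_{a=0}^{n}(\kappa+1)^{a} C_a$. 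All $\Omega_s$ have size polynomial in $|\Omega|$, so this is a polynomial-time Turing reduction.

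The one point requiring care is the coincidence $\mu_3 = -\mu_2$: the ratio $\mu_2/\mu_3$ equals the root of unity $-1$, so a naive three-term Vandermonde in $\{\mu_1,\mu_2,\mu_3\}$ applied after substituting into all $n$ vertices simultaneously would be singular. The remedy above --- additionally tracking the parity of the number of vertices assigned to the $\mu_3$-eigenspace, which only doubles the number of evaluations needed --- sidesteps this, and since we are interpolating a single fixed signature rather than an infinite family, the general lattice-condition machinery of Section~\ref{sec:interpolation} is not required here; I expect this bookkeeping to be the main (modest) obstacle, with the explicit computation of $M_F$ and its eigenvalues being routine. Together with Corollary~\ref{cor:k=r:21010_hard}, the reduction shows that $\PlHolant(\AD_{\kappa,\kappa})$ --- that is, \#$\kappa$-\textsc{EdgeColoring} over planar $\kappa$-regular graphs --- is $\SHARPP$-hard.
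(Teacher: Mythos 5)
Your proof is correct and uses the same starting point (Corollary~\ref{cor:k=r:21010_hard}), the same gadget $F$ from Figure~\ref{fig:gadget:k=r:arity_reduction}, and the same chain construction, but the interpolation analysis is genuinely different from the paper's. You work directly with the $\kappa^2\times\kappa^2$ signature matrix $M_F$, diagonalize it into the three eigenspaces ($(\kappa-1)(\kappa-2)!$ on the all-ones vector of diagonal pairs, $(\kappa-2)!$ on symmetric off-diagonal pairs, $-(\kappa-2)!$ on the rest), observe that the target $M_{\langle 2,1,0,1,0\rangle}$ commutes with $M_F$ and is therefore a scalar on each of these eigenspaces, and then resolve the $\mu_2/\mu_3=-1$ root-of-unity obstruction by tracking both the $\mu_1$-count $a$ and the parity of the $\mu_3$-count, yielding a nonsingular $(2n+1)$-point generalized Vandermonde system. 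The paper instead works in the $5$-dimensional succinct type $\tau_{\text{color}}$, applies a Jordan decomposition of the $5\times 5$ recurrence matrix ($\Lambda=\diag(\kappa-1,-1,1,-1,1)$), and disposes of the $-1$ eigenvalue by the neater trick of restricting to even chain lengths $t=2s$: then $(-1)^t=1$, and $f_t=\tfrac{1}{\kappa}\langle x+\kappa-1,\,x-1,\,0,\,\kappa,\,0\rangle$ with $x=(\kappa-1)^t$ is \emph{affine} in $x$, so the Holant is a degree-$n$ polynomial in $x$ recoverable from $n+1$ evaluations and read off at $x=\kappa+1$. Both resolutions of the cyclotomic eigenvalue ratio work; the paper's costs fewer oracle calls and writes everything in one variable, while yours makes explicit the spectral reason the interpolation must succeed — the target lies in the span of exactly the projectors the chain can produce — which the paper's proof handles implicitly through its choice of succinct coordinates. (One small point of bookkeeping you gloss over: the Vandermonde returns the sums $C_{a,+}$ and $C_{a,-}$ over $\vec\jmath$ with fixed $\mu_3$-parity, and your $C_a$ is their sum; stating this explicitly would tighten the last step.)
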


\begin{proof}
 Let $\kappa$ be the domain size of all Holant problems in this proof
 and let $\langle 2,1,0,1,0 \rangle$ be a succinct quaternary signature of type $\tau_\text{color}$.
 We reduce from $\PlHolant(\langle 2,1,0,1,0 \rangle)$ to $\PlHolant(\AD_{\kappa,\kappa})$,
 which denotes the problem of counting edge $\kappa$-colorings in planar $\kappa$-regular graphs as a Holant problem.
 Then by Corollary~\ref{cor:k=r:21010_hard},
 we conclude that $\PlHolant(\AD_{\kappa,\kappa})$ is $\SHARPP$-hard.

 Consider the gadget in Figure~\ref{fig:gadget:k=r:arity_reduction},
 where the bold edge represents $\kappa - 2$ parallel edges.
 We assign $\AD_{\kappa,\kappa}$ to both vertices.
 Up to a nonzero factor of $(\kappa-2)!$,
 this gadget has the succinct quaternary signature $f = \langle 0,1,1,0,0 \rangle$ of type $\tau_\text{color}$.
 Now consider the recursive construction in Figure~\ref{fig:gadget:linear_interpolation:quaternary}.
 All vertices are assigned the signature $f$.
 Let $f_s$ be the succinct quaternary signature of type $\tau_\text{color}$ for the $s$th gadget of the recursive construction.
 Then $f_1 = f$ and $f_s = M^s f_0$,
 where
 \[
  M =
  \begin{bmatrix}
   0 & \kappa - 1 & 0 & 0 & 0 \\
   1 & \kappa - 2 & 0 & 0 & 0 \\
   0 &          0 & 0 & 1 & 0 \\
   0 &          0 & 1 & 0 & 0 \\
   0 &          0 & 0 & 0 & 1
  \end{bmatrix}
  \qquad
  \text{and}
  \qquad
  f_0 =
  \begin{bmatrix}
   1 \\
   0 \\
   0 \\
   1 \\
   0
  \end{bmatrix}.
 \]
 The signature $f_0$ is simply the succinct quaternary signature of type $\tau_\text{color}$ for two parallel edges.
 We can express $M$ via the Jordan decomposition $M = P \Lambda P^{-1}$,
 where
 \[
  P =
  \begin{bmatrix}
   1 & 1 - \kappa & 0 &  0 & 0 \\
   1 &          1 & 0 &  0 & 0 \\
   0 &          0 & 1 &  1 & 0 \\
   0 &          0 & 1 & -1 & 0 \\
   0 &          0 & 0 &  0 & 1
  \end{bmatrix}
 \]
 and $\Lambda = \diag(\kappa - 1, -1, 1, -1, 1)$.
 Then for $t = 2 s$, we have
 \[
  f_t
  =
  P
  \begin{bmatrix}
   \kappa - 1 &  0 & 0 &  0 & 0 \\
            0 & -1 & 0 &  0 & 0 \\
            0 &  0 & 1 &  0 & 0 \\
            0 &  0 & 0 & -1 & 0 \\
            0 &  0 & 0 &  0 & 1
  \end{bmatrix}^t
  P^{-1}
  \begin{bmatrix}
   1 \\
   0 \\
   0 \\
   1 \\
   0
  \end{bmatrix}
  =
  P
  \begin{bmatrix}
   x & 0 & 0 & 0 & 0 \\
   0 & 1 & 0 & 0 & 0 \\
   0 & 0 & 1 & 0 & 0 \\
   0 & 0 & 0 & 1 & 0 \\
   0 & 0 & 0 & 0 & 1
  \end{bmatrix}
  P^{-1}
  \begin{bmatrix}
   1 \\
   0 \\
   0 \\
   1 \\
   0
  \end{bmatrix}
  =
  \begin{bmatrix}
   y+1 \\
   y \\
   0 \\
   1 \\
   0
  \end{bmatrix},
 \]
 where $x = (\kappa - 1)^t$ and $y = \frac{x-1}{\kappa}$.
 
 Consider an instance $\Omega$ of $\PlHolant(\langle 2,1,0,1,0 \rangle)$ on domain size $\kappa$.
 Suppose $\langle 2,1,0,1,0 \rangle$ appears $n$ times in $\Omega$.
 We construct from $\Omega$ a sequence of instances $\Omega_t$ of $\PlHolant(f)$ indexed by $t$,
 where $t = 2 s$ with $s \ge 0$.
 We obtain $\Omega_t$ from $\Omega$ by replacing each occurrence of $\langle 2,1,0,1,0 \rangle$ with the gadget $f_t$.
 
 As a polynomial in $x = (\kappa - 1)^t$, $\PlHolant(\Omega_t; f)$ is independent of $t$ and has degree at most $n$ with integer coefficients.
 Using our oracle for $\PlHolant(f)$,
 we can evaluate this polynomial at $n+1$ distinct points $x = (\kappa - 1)^{2s}$ for $0 \le s \le n$.
 Then via polynomial interpolation,
 we can recover the coefficients of this polynomial efficiently.
 Evaluating this polynomial at $x = \kappa + 1$ (so that $y = 1$) gives the value of $\PlHolant(\Omega; \langle 2,1,0,1,0 \rangle)$, as desired.
\end{proof}

\begin{remark}
 For $\kappa = 3$,
 the interpolation step is actually unnecessary since the succinct signature of $f_2$ happens to be $\langle 2,1,0,1,0 \rangle$.
\end{remark}

When $\kappa = 3$,
it is easy to extend Theorem~\ref{thm:edge_coloring:k=r} by further restricting to simple graphs,
i.e.~graphs without self-loops or parallel edges.

\begin{theorem} \label{thm:edge_coloring:k=r=3_simple}
 \#$3$-\textsc{EdgeColoring} is $\SHARPP$-hard over simple planar $3$-regular graphs.
\end{theorem}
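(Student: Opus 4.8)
The plan is to obtain the simple-graph statement by a Turing reduction from the multigraph version already established in Theorem~\ref{thm:edge_coloring:k=r} with $\kappa = 3$. Given an arbitrary planar $3$-regular multigraph $H$, I would repeatedly excise ``bad'' local configurations (self-loops and parallel edges), each time recording an easily computable multiplicative correction, until what remains is a simple planar $3$-regular graph on which a single oracle call suffices. Since each excision strictly decreases the number of edges, only polynomially many steps are needed.

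First I would dispose of two degenerate situations. If $H$ has a self-loop, then it has no proper edge $3$-coloring (equivalently, the corresponding $\AD_{3,3}$-signature grid evaluates to $0$), so we output $0$. If $H$ has two vertices joined by three parallel edges, then by $3$-regularity these two vertices form a whole connected component with exactly $3! = 6$ proper edge $3$-colorings; we delete that component and multiply the final answer by $6$. The main step is the treatment of a \emph{digon}: two vertices $u \ne v$ joined by exactly two parallel edges $e_1, e_2$. Let $e_3$ (resp.\ $e_4$) be the third edge at $u$ (resp.\ $v$) and let $u', v'$ be their other endpoints; the two reductions above let me assume $u', v' \notin \{u, v\}$ (if $e_3$ had an endpoint $u$ it would be a loop, and if its other endpoint were $v$ we would be in the triple-edge case). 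Form $H'$ by deleting $u, v, e_1, e_2, e_3, e_4$ and inserting a new edge $e'$ between $u'$ and $v'$; drawing $e'$ inside the planar region formerly occupied by the digon keeps $H'$ planar, and $H'$ is again $3$-regular with strictly fewer edges.

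The correctness of the digon step is where the Parity Condition does the work: applying Lemma~\ref{lem:k=r:parity_condition} to the cut of $H$ separating $\{u, v\}$ from the rest of the graph (whose cut set is $\{e_3, e_4\}$) forces $c(e_3) = c(e_4)$ in every proper edge $3$-coloring $c$ of $H$, and conversely, given that common color on $e_3, e_4$, the digon $\{e_1, e_2\}$ can be colored in exactly the two ways using the remaining two colors. Thus proper edge $3$-colorings of $H$ correspond two-to-one to proper edge $3$-colorings of $H'$ (in the exceptional case $u' = v'$ the edge $e'$ is a loop and both counts are $0$, which is consistent). Iterating these three rules terminates at a planar $3$-regular multigraph with neither self-loops nor parallel edges, i.e.\ a simple planar $3$-regular graph, together with an accumulated factor that is a power of $2$ times a power of $6$; one oracle call, times this factor, recovers the number of proper edge $3$-colorings of the original $H$. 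Combined with Theorem~\ref{thm:edge_coloring:k=r}, this yields $\SHARPP$-hardness over simple planar $3$-regular graphs. I expect the only fiddly part to be the bookkeeping of the sub-cases (triple edges; the possibility $u' = v'$; the new edge $e'$ recreating a parallel pair, which is harmless because it is eliminated in a subsequent round) and verifying at each step that planarity and $3$-regularity are preserved; the core combinatorial identity $N(H) = 2\,N(H')$ for digons is immediate from the parity lemma.
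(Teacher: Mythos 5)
Your proposal follows essentially the same route as the paper: reduce from Theorem~\ref{thm:edge_coloring:k=r}, then iteratively eliminate self-loops, triple edges, and digons, using the factor-of-$2$ identity for digon contraction and multiplying back the constants. The one place you are actually more careful than the paper is the case $u' = v'$: the paper simply asserts ``$u' \ne v'$,'' which need not hold in general (e.g.\ a triangle with one doubled edge attached to the rest of the graph through its apex), whereas you correctly observe that in that case both sides are $0$, so the identity $N(H) = 2\,N(H')$ still holds and the iteration terminates correctly.
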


\begin{proof}
 By Theorem~\ref{thm:edge_coloring:k=r},
 it suffices to efficiently compute the number of edge $3$-colorings of a planar $3$-regular graph $G$ that might have self-loops and parallel edges.
 Furthermore, we can assume that $G$ is connected since the number of edge colorings is multiplicative over connected components.
 If $G$ contains a self-loop,
 then there are no edge colorings in $G$,
 so assume $G$ contains no self-loops.
 If $G$ also contains no parallel edges,
 then $G$ is simple and we are done.
 
 Thus, assume that $G$ contains $n$ vertices and parallel edges between some distinct vertices $u$ and $v$.
 If $u$ and $v$ are connected by three edges,
 then this constitutes the whole graph,
 which has six edge $3$-colorings.
 Otherwise, $u$ and $v$ are connected by two edges.
 Then there exist vertices $u'$ and $v'$ such that
 $u$ and $u'$ are connected by a single edge,
 $v$ and $v'$ are connected by a single edge,
 and $u' \ne v'$.
 In any edge $3$-coloring of $G$,
 it is easy to see that the edges $(u,u')$ and $(v,v')$ must be assigned the same color.
 By removing $u$, $v$, and their incident edges while adding an edge between $u'$ and $v'$,
 we have a planar $3$-regular graph $G'$ on $n - 2$ vertices with half as many edge colorings as $G$.
 Then by induction,
 we can efficiently compute the number of edge $3$-colorings in $G'$.
\end{proof}

In Appendix~\ref{sec:appendix:invariant},
we give an alternative proof of Theorem~\ref{thm:edge_coloring:k=r},
which uses the interpolation techniques we develop in Section~\ref{sec:interpolation}.
The purpose of Appendix~\ref{sec:appendix:invariant} is to show how a recursive construction in an interpolation proof
can be used to form a hypothesis about possible invariance properties.
One example of an invariance property is that any planar $\{\AD_{\kappa,\kappa}\}$-gate
with a succinct quaternary signature $\langle a,b,c,d,e \rangle$ of type $\tau_\text{color}$ must satisfy $a + c = b + d$ (Lemma~\ref{lem:k=r:invariant}).

\subsection{The Case \texorpdfstring{$\kappa > r$}{kappa-greater-than-r}} \label{subsec:k>r}

Now we consider $\kappa > r \ge 3$.
This time, we reduce from the problem of counting vertex $\kappa$-colorings over planar graphs.
This problem is also $\SHARPP$-hard by the same dichotomy for the Tutte polynomial (Theorem~\ref{thm:tutte})
since the chromatic polynomial is a specialization the Tutte polynomial.

\begin{proposition}[Proposition~6.3.1 in~\cite{BO92}] \label{prop:k>r:chromatic_tutte}
 Let $G = (V,E)$ be a graph.
 Then $\chi(G; \lambda)$,
 the chromatic polynomial of $G$,
 is expressed as a specialization of the Tutte polynomial via the relation
 \[
  \chi(G; \lambda) = (-1)^{|V| - k(G)} \lambda^{k(G)} \operatorname{T}(G; 1 - \lambda, 0),
 \]
 where $k(G)$ is the number of connected components of the graph $G$.
\end{proposition}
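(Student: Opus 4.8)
The plan is to prove the identity by matching the two standard ``spanning-subgraph'' expansions, one for each polynomial. First I would recall the rank--nullity (Whitney rank-generating) form of the Tutte polynomial: for $A \subseteq E$ let $k(V,A)$ denote the number of connected components of the spanning subgraph $(V,A)$ and let $r(A) = |V| - k(V,A)$ be its rank; then
\[
 \operatorname{T}(G; x, y) = \sum_{A \subseteq E} (x-1)^{r(E)-r(A)}\,(y-1)^{|A|-r(A)}.
\]
On the chromatic side I would use Whitney's inclusion--exclusion expansion
\[
 \chi(G; \lambda) = \sum_{A \subseteq E} (-1)^{|A|}\, \lambda^{k(V,A)},
\]
which, if one does not take it as known, follows in two lines by expanding $\prod_{uv \in E}\bigl(1 - [c(u)=c(v)]\bigr)$ over all $\lambda^{|V|}$ colorings $c$ and swapping the order of summation, since the colorings constant on every component of $(V,A)$ number exactly $\lambda^{k(V,A)}$.

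The first substantive step is to re-express the exponent of $\lambda$ in Whitney's formula through ranks. Since $k(V,A) = |V| - r(A)$ and $r(E) = |V| - k(G)$, we have $k(V,A) = k(G) + \bigl(r(E) - r(A)\bigr)$, hence
\[
 \chi(G;\lambda) = \lambda^{k(G)} \sum_{A \subseteq E} (-1)^{|A|}\, \lambda^{\,r(E)-r(A)}.
\]
The second step is to put $x = 1 - \lambda$ and $y = 0$ into the Tutte expansion, giving
\[
 \operatorname{T}(G; 1-\lambda, 0) = \sum_{A \subseteq E} (-\lambda)^{r(E)-r(A)}\,(-1)^{|A|-r(A)}.
\]
The third step is pure sign bookkeeping: multiply by $(-1)^{|V|-k(G)}\lambda^{k(G)}$, note that $(-1)^{|V|-k(G)} = (-1)^{r(E)}$, and collect the powers of $-1$ on each $A$-term. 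The total exponent is $r(E) + \bigl(r(E)-r(A)\bigr) + \bigl(|A|-r(A)\bigr) = 2r(E) + |A| - 2r(A) \equiv |A| \pmod 2$, which reduces the right-hand side to $\lambda^{k(G)}\sum_{A} (-1)^{|A|}\lambda^{r(E)-r(A)}$, precisely the expression for $\chi(G;\lambda)$ obtained in the first step; this completes the proof.

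The only place where care is genuinely needed is this sign arithmetic --- correctly identifying $(-1)^{|V|-k(G)}$ with $(-1)^{r(E)}$ and checking that the per-subset sign collapses to $(-1)^{|A|}$ --- together with supplying the short proof of Whitney's expansion if one prefers not to cite it. An alternative, equally routine route is induction on $|E|$ via deletion--contraction: compare $\chi(G) = \chi(G \setminus e) - \chi(G/e)$ for a non-loop edge $e$ (with $\chi(G) = 0$ when $G$ has a loop) against the Tutte recurrences $\operatorname{T}(G) = \operatorname{T}(G\setminus e) + \operatorname{T}(G/e)$ for an ordinary edge, $\operatorname{T}(G) = x\,\operatorname{T}(G/e)$ for a bridge, and $\operatorname{T}(G) = y\,\operatorname{T}(G\setminus e)$ for a loop, with base case the edgeless graph on $n$ vertices, where $\chi = \lambda^n$, $\operatorname{T} = 1$, and $k(G) = n$. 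There the bookkeeping obstacle moves instead to tracking how $|V|$ and $k(G)$ change under contraction in the bridge and ordinary cases.
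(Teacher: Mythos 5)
Your proof is correct. The paper does not itself prove this proposition — it is cited verbatim from Brylawski and Oxley~\cite{BO92} (Proposition~6.3.1) — so there is no ``paper's proof'' to compare against, but the argument you give is the standard one. Matching Whitney's inclusion--exclusion expansion $\chi(G;\lambda) = \sum_{A \subseteq E} (-1)^{|A|}\lambda^{k(V,A)}$ against the corank--nullity form of $\operatorname{T}$ at $(1-\lambda,0)$ is precisely how this identity is usually established, and your sign bookkeeping is right: $(-1)^{|V|-k(G)} = (-1)^{r(E)}$, the per-subset sign exponent $2r(E) - 2r(A) + |A|$ reduces mod $2$ to $|A|$, and the $\lambda$-exponent $k(G) + (r(E)-r(A))$ is exactly $k(V,A)$. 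The deletion--contraction alternative you sketch also works, provided one is careful that the $|V|$ and $k(G)$ prefactors change only under contraction (where $|V|$ drops by one and $k$ is preserved for a non-loop edge, so the exponent $|V|-k(G)$ drops by one, supplying the sign that turns the Tutte ``$+$'' recurrence into the chromatic ``$-$'' recurrence).
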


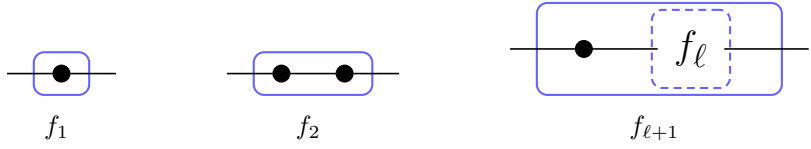
\begin{figure}[t]
 \centering
 \captionsetup[subfigure]{labelformat=empty}
 \subcaptionbox{$f_1$}{
  \begin{tikzpicture}[scale=\scale,transform shape,node distance=\nodeDist,semithick]
   \node[external] (0)              {};
   \node[internal] (1) [right of=0] {};
   \node[external] (2) [right of=1] {};
   \path (0) edge (1)
         (1) edge (2);
   \begin{pgfonlayer}{background}
    \node[draw=\borderColor,thick,rounded corners,inner xsep=12pt,inner ysep=8pt,fit = (1)] {};
   \end{pgfonlayer}
  \end{tikzpicture}}
 \qquad
 \subcaptionbox{$f_2$}{
  \begin{tikzpicture}[scale=\scale,transform shape,node distance=\nodeDist,semithick]
   \node[external] (0)              {};
   \node[internal] (1) [right of=0] {};
   \node[internal] (2) [right of=1] {};
   \node[external] (3) [right of=2] {};
   \path (0) edge (1)
         (1) edge (2)
         (2) edge (3);
   \begin{pgfonlayer}{background}
    \node[draw=\borderColor,thick,rounded corners,inner xsep=12pt,inner ysep=8pt,fit = (1) (2)] {};
   \end{pgfonlayer}
  \end{tikzpicture}}
 \qquad
 \subcaptionbox{$f_{\ell+1}$}{
  \begin{tikzpicture}[scale=\scale,transform shape,node distance=\nodeDist,semithick]
   \node[external] (0)              {};
   \node[external] (1) [right of=0] {};
   \node[external] (2) [right of=1] {};
   \node[external] (3) [right of=2] {\Huge $f_\ell$};
   \node[external] (4) [right of=3] {};
   \node[external] (5) [right of=4] {};
   \path (0) edge node[internal] (e1) {} (3)
         (3) edge (5);
   \begin{pgfonlayer}{background}
    \node[draw=\borderColor,thick,densely dashed,rounded corners,fit = (3)] {};
    \node[draw=\borderColor,thick,rounded corners,inner xsep=12pt,inner ysep=8pt,fit = (1) (3) (4)] {};
   \end{pgfonlayer}
  \end{tikzpicture}}
 \caption{Recursive construction to interpolate any succinct binary signature of type $\tau_2$.
 All vertices are assigned the same succinct binary signature of type $\tau_2$.}
 \label{fig:gadget:k>r:binary:interpolation}
\end{figure}

The first step in the proof is to interpolate every possible binary signature that is domain invariant,
which are necessarily symmetric.
These signatures have the succinct signature type $\tau_2$.

\begin{lemma} \label{lem:k>r:binary:interpolation:eigenvalues}
 Suppose $\kappa \ge 3$ is the domain size and let $x,y \in \mathbb{C}$.
 If we assign the succinct binary signature $\langle x,y \rangle$ of type $\tau_2$
 to every vertex of the recursive construction in Figure~\ref{fig:gadget:k>r:binary:interpolation},
 then the corresponding recurrence matrix is $\left[\begin{smallmatrix} x & (\kappa - 1) y \\ y & x + (\kappa - 2) y \end{smallmatrix}\right]$
 with eigenvalues $x + (\kappa - 1) y$ and $x - y$.
\end{lemma}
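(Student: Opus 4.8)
The plan is to recognize the succinct binary signature $\langle x,y\rangle$ of type $\tau_2$ as a concrete $\kappa$-by-$\kappa$ matrix, track how that matrix transforms under the recursive construction of Figure~\ref{fig:gadget:k>r:binary:interpolation}, and then read off the eigenvalues. First I would note that, as a signature matrix, $\langle x,y\rangle$ is exactly $M = (x-y)I_\kappa + yJ_\kappa$: it takes value $x$ when the two inputs agree (the diagonal) and value $y$ when they differ (off-diagonal). The base gadget $f_1$ is a single such vertex, so its signature matrix is $M$, and $f_{\ell+1}$ is obtained from $f_\ell$ by attaching one new vertex (again carrying $\langle x,y\rangle$) to one of its two dangling edges. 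Since gluing a dangling edge of one binary gadget to an input of another is matrix multiplication of signature matrices, $M_{f_{\ell+1}} = M\,M_{f_\ell}$, hence $M_{f_\ell} = M^\ell$. Because $I_\kappa$ and $J_\kappa$ span a commutative subalgebra closed under multiplication (using $J_\kappa^2 = \kappa J_\kappa$), every $M^\ell$ is again of the form $(\text{const})I_\kappa + (\text{const})J_\kappa$, so each $f_\ell$ genuinely is a succinct signature of type $\tau_2$; write $f_\ell = \langle a_\ell,b_\ell\rangle$, i.e.\ $M_{f_\ell} = (a_\ell-b_\ell)I_\kappa + b_\ell J_\kappa$.

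Next I would translate the relation $M_{f_{\ell+1}} = M\,M_{f_\ell}$ into the two succinct coordinates. Expanding $\big((x-y)I_\kappa + yJ_\kappa\big)\big((a_\ell-b_\ell)I_\kappa + b_\ell J_\kappa\big)$ with $J_\kappa^2 = \kappa J_\kappa$ and collecting the coefficients of $I_\kappa$ and $J_\kappa$ yields, after a short calculation, $a_{\ell+1} = x\,a_\ell + (\kappa-1)y\,b_\ell$ and $b_{\ell+1} = y\,a_\ell + \big(x+(\kappa-2)y\big)\,b_\ell$. Thus the recurrence matrix is $\left[\begin{smallmatrix} x & (\kappa - 1) y \\ y & x + (\kappa - 2) y \end{smallmatrix}\right]$, as claimed.

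Finally, for the eigenvalues I would compute the trace, $2x+(\kappa-2)y$, and the determinant, $x\big(x+(\kappa-2)y\big) - (\kappa-1)y^2 = x^2 + (\kappa-2)xy - (\kappa-1)y^2$, and observe that the latter factors as $\big(x+(\kappa-1)y\big)(x-y)$; hence the two eigenvalues are $x+(\kappa-1)y$ and $x-y$. (One can also see this conceptually: these are precisely the eigenvalues of $M=(x-y)I_\kappa + yJ_\kappa$ itself---value $x+(\kappa-1)y$ on the all-ones vector and $x-y$ on its orthogonal complement---and the recurrence matrix is just the action of $M$ on the two-dimensional algebra spanned by $I_\kappa$ and $J_\kappa$, so it can only have eigenvalues drawn from those of $M$.) There is no substantial obstacle here; the only points requiring a little care are fixing the direction of composition in the recursive construction---which is harmless since $M$ and $M_{f_\ell}$ lie in the commutative algebra generated by $I_\kappa$ and $J_\kappa$---and checking the factorization of the determinant.
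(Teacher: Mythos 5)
Your proposal is correct and takes essentially the same approach as the paper: both reduce the recursive construction to left-multiplication by the $2\times 2$ recurrence matrix in the succinct coordinates of type $\tau_2$. You additionally spell out the derivation of the recurrence matrix (via the representation $(x-y)I_\kappa + yJ_\kappa$ and the relation $J_\kappa^2 = \kappa J_\kappa$), which the paper's proof asserts without showing, and you verify the eigenvalues via the characteristic polynomial and a conceptual check through the $\kappa\times\kappa$ signature matrix, whereas the paper exhibits explicit eigenvectors $\left[\begin{smallmatrix}1\\1\end{smallmatrix}\right]$ and $\left[\begin{smallmatrix}1-\kappa\\1\end{smallmatrix}\right]$; the two routes are interchangeable.
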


\begin{proof}
 Let $f_\ell$ be the signature of the $\ell$th gadget in this construction.
 To obtain $f_{\ell+1}$ from $f_\ell$,
 we view $f_\ell$ as a column vector and multiply it by the recurrence matrix $M = \left[\begin{smallmatrix} x & (\kappa - 1) y \\ y & x + (\kappa - 2) y \end{smallmatrix}\right]$.
 In general, we have $f_\ell = M^\ell f_0$,
 where $f_0$ is the initial signature,
 which is just a single edge and has the succinct binary signature $\langle 1,0 \rangle$ of type $\tau_2$.
 The (column) eigenvectors of $M$ are $\left[\begin{smallmatrix} 1 \\ 1 \end{smallmatrix}\right]$
 and $\left[\begin{smallmatrix} 1 - \kappa \\ 1 \end{smallmatrix}\right]$ with eigenvalues $x + (\kappa - 1) y$ and $x - y$ respectively,
 as claimed.
\end{proof}

The success of interpolation depends on these eigenvalues and the relationship between the recurrence matrix and the initial signature of the construction.
To show that the interpolation succeeds,
we use a result from~\cite{GW12},
the full version of~\cite{GW13}.
This result is about interpolating unary signatures on a Boolean domain for planar Holant problems,
but the same proof applies equally well for higher domains using a binary recursive construction (like that in Figure~\ref{fig:gadget:k>r:binary:interpolation})
and a succinct signature type with length~$2$.

\begin{lemma}[Lemma~4.4 in~\cite{GW12}] \label{lem:k>r:binary:interpolate}
 Suppose $\mathcal{F}$ is a set of signatures and $\tau$ is a succinct signature type with length~$2$.
 If there exists an infinite sequence of planar $\mathcal{F}$-gates defined by an initial succinct signature $s \in \mathbb{C}^{2 \times 1}$ of type $\tau$
 and recurrence matrix $M \in \mathbb{C}^{2 \times 2}$ satisfying the following conditions,
 \begin{enumerate}
  \item $\det(M) \ne 0$;
  \item $\det([s\ M s]) \ne 0$;
  \item $M$ has infinite order modulo a scalar;
 \end{enumerate}
 then
 \[
  \PlHolant(\mathcal{F} \union \{\langle x,y \rangle\}) \le_T \PlHolant(\mathcal{F}),
 \]
 for any $x,y \in \mathbb{C}$,
 where $\langle x,y \rangle$ is a succinct binary signature of type $\tau$.
\end{lemma}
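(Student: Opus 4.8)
The plan is to use the standard polynomial-interpolation strategy for Holant problems, carried out in the two-dimensional space of succinct signatures of type $\tau$. Fix an instance $\Omega$ of $\PlHolant(\mathcal{F}\cup\{g\})$ with $g=\langle x,y\rangle$, and let $n$ be the number of vertices of $\Omega$ assigned $g$; if $n=0$ there is nothing to prove. For each $\ell\ge 0$ let $N_\ell$ be the $\ell$-th planar $\mathcal{F}$-gate in the given sequence, whose succinct signature is $f_\ell=M^\ell s$, and let $\Omega_\ell$ be the planar signature grid obtained from $\Omega$ by replacing every occurrence of $g$ with a copy of $N_\ell$. Each $\Omega_\ell$ is a legal instance of $\PlHolant(\mathcal{F})$ of size polynomial in $|\Omega|$ and $\ell$, so using the oracle for $\PlHolant(\mathcal{F})$ we may compute all of $\PlHolant(\Omega_0;\mathcal{F}),\dots,\PlHolant(\Omega_n;\mathcal{F})$. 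The goal is to recover $\PlHolant(\Omega;\mathcal{F}\cup\{g\})$ from these $n+1$ numbers.

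First suppose $M$ has two distinct eigenvalues $\lambda_1,\lambda_2$, so $M$ is diagonalizable with column eigenvectors $v_1,v_2$; condition~1 gives $\lambda_1,\lambda_2\ne 0$. Write $s=c_1v_1+c_2v_2$; expressing $[s\ Ms]$ in the eigenbasis shows condition~2 amounts to $c_1c_2\ne 0$. Since the succinct signatures of type $\tau$ form a two-dimensional space, also write the target as $g=d_1v_1+d_2v_2$. The Holant sum is multilinear in the signature at each vertex, so expanding each of the $n$ copies of $g$ as $d_1v_1+d_2v_2$ and collecting terms according to the number $j$ of copies that select $v_1$ gives
\[
  \PlHolant(\Omega;\mathcal{F}\cup\{g\})=\sum_{j=0}^{n}d_1^{\,j}d_2^{\,n-j}\,c_j ,
  \qquad
  c_j:=\sum_{\substack{S\subseteq[n]\\ |S|=j}}\PlHolant(\Omega^{(S)}),
\]
where $\Omega^{(S)}$ is the grid carrying $v_1$ on the copies indexed by $S$ and $v_2$ on the rest. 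The $v_i$ need not be realizable signatures; the $c_j$ are treated purely as unknowns. Performing the same expansion with $f_\ell=c_1\lambda_1^{\ell}v_1+c_2\lambda_2^{\ell}v_2$ in place of $g$ produces
\[
  \PlHolant(\Omega_\ell;\mathcal{F})=\sum_{j=0}^{n}\bigl(c_1^{\,j}c_2^{\,n-j}c_j\bigr)\,r_j^{\,\ell},
  \qquad r_j:=\lambda_1^{\,j}\lambda_2^{\,n-j}=\lambda_2^{\,n}\,(\lambda_1/\lambda_2)^{j}.
\]
By condition~3 the ratio $\lambda_1/\lambda_2$ is not a root of unity, so $r_0,\dots,r_n$ are pairwise distinct and the $(n+1)\times(n+1)$ matrix $(r_j^{\ell})_{0\le\ell,j\le n}$ is an invertible Vandermonde matrix. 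Solving this linear system, with right-hand side supplied by the oracle, recovers every $c_1^{\,j}c_2^{\,n-j}c_j$, hence every $c_j$ since $c_1c_2\ne 0$, and then the first identity yields $\PlHolant(\Omega;\mathcal{F}\cup\{g\})$. Every step runs in polynomial time.

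It remains to handle a repeated eigenvalue. Then $M=\lambda I+N$ with $\lambda\ne 0$ by condition~1 and $N$ nilpotent, $N^2=0$; moreover $N\ne 0$, since $M=\lambda I$ would make $Ms$ a scalar multiple of $s$ and violate condition~2. A short computation shows condition~2 is equivalent to $s\notin\ker N$, so $\{s,Ns\}$ is a basis, and $f_\ell=M^\ell s=\lambda^{\ell}s+\ell\lambda^{\ell-1}(Ns)$. Writing $g=\alpha s+\beta(Ns)$ and repeating the multilinear expansion with the quantities $C_j$ defined exactly as the $c_j$ above but built from $s$ and $Ns$, one finds that $\lambda^{-\ell n}\PlHolant(\Omega_\ell;\mathcal{F})$ equals a polynomial in $\ell$ of degree at most $n$ whose coefficients are the scalars $\lambda^{-k}C_{n-k}$; ordinary polynomial interpolation at $\ell=0,1,\dots,n$ recovers all $C_j$, and hence $\PlHolant(\Omega;\mathcal{F}\cup\{g\})=\sum_j\alpha^{\,j}\beta^{\,n-j}C_j$, again in polynomial time.

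The hard part is less any single conceptual step than getting the bookkeeping right: (i) the multilinear re-expansion of the Holant value in the eigenbasis, together with the observation that every $\Omega_\ell$ uses only $\mathcal{F}$-gates even though the $v_i$ themselves may be non-realizable; (ii) verifying that conditions~1--3 are precisely what make the resulting linear system non-degenerate, with condition~3 doing the essential job of keeping the Vandermonde nodes distinct and condition~2 supplying the nonzero factors needed to extract the $c_j$; and (iii) replacing the exponential interpolation by polynomial interpolation in the non-scalar repeated-eigenvalue case. Since a $2\times 2$ matrix over $\mathbb{C}$ is either diagonalizable or a single Jordan block, these two cases are exhaustive.
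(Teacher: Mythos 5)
Your proof is correct, and it follows the same interpolation template the paper uses for the $n$-dimensional generalization in Lemma~\ref{lem:interpolate_all_not_orthogonal}: decompose the initial signature and the target in a convenient basis, use multilinearity of the Holant sum to expand both as $\sum_j (\text{weights})\cdot c_j$ over the same unknowns $c_j$, and observe that the $\Omega_\ell$-values form a nonsingular linear system (Vandermonde when the eigenvalues are distinct) in those unknowns. Conditions~1 and~2 give you $\lambda_1\lambda_2\ne 0$ and $c_1c_2\ne 0$ so the unknowns are recoverable, and condition~3 keeps the Vandermonde nodes $r_j=\lambda_1^j\lambda_2^{n-j}$ distinct. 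The paper does not actually prove this lemma here (it cites [GW12] and proves only the diagonalizable $n$-dimensional version in Lemma~\ref{lem:interpolate_all_not_orthogonal}), but your argument is the expected one.

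One genuine value-add in your write-up is the non-diagonalizable branch. The paper's Lemma~\ref{lem:interpolate_all_not_orthogonal} simply assumes $M$ is diagonalizable, so a reader specializing it to two dimensions would not immediately have the repeated-eigenvalue case. You correctly observe that under condition~2 the only way a repeated eigenvalue can occur is a single Jordan block, that $f_\ell=\lambda^\ell s+\ell\lambda^{\ell-1}Ns$, and that $\lambda^{-\ell n}\PlHolant(\Omega_\ell;\mathcal F)$ then becomes a degree-$n$ polynomial in $\ell$ with coefficients $\lambda^{-k}C_{n-k}$, so ordinary polynomial interpolation suffices. (Your last sentence is slightly loose: a $2\times 2$ matrix can be diagonalizable with a repeated eigenvalue, namely $\lambda I$; the point is that condition~2 rules that subcase out, which you do establish in the body of the argument.) One small thing you leave implicit but should state is that the gates $N_\ell$ in the given sequence have size polynomial in $\ell$, so that each $\Omega_\ell$ for $\ell\le n$ is a polynomial-size instance; this is part of what ``infinite sequence of planar $\mathcal F$-gates defined by $s$ and $M$'' is taken to mean, but it is worth a sentence.
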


Consider the recursive construction in Figure~\ref{fig:gadget:k>r:binary:interpolation}.
To every vertex,
we assign the succinct binary signature $\langle x,y \rangle$.
Since the initial signature is $s = \langle 1,0 \rangle$,
the determinant of the matrix $[s\ M s]$ is simply $y$.
In order to interpolate all binary succinct signatures of type $\tau_2$,
we need to satisfy the second condition of Lemma~\ref{lem:k>r:binary:interpolate},
which is $y \ne 0$.
However, when $y = 0$, the recurrence matrix is a scalar multiple of the identity matrix,
which implies that the eigenvalues are the same.
For two dimensional interpolation using a matrix with a full basis of eigenvectors, as is the case here,
the third condition of Lemma~\ref{lem:k>r:binary:interpolate} is equivalent to the ratio of the eigenvalues not being a root of unity.
In particular,
the eigenvalues cannot be the same.
Therefore, when using the recursive construction in Figure~\ref{fig:gadget:k>r:binary:interpolation},
it suffices to satisfy the first and third conditions of Lemma~\ref{lem:k>r:binary:interpolate}.
We state this as a corollary.

\begin{corollary} \label{cor:k>r:binary:interpolate}
 Suppose $\mathcal{F}$ is a set of signatures.
 Let $s = \langle 1,0 \rangle$ of type $\tau_2$ be the initial succinct binary signature
 and let $M \in \mathbb{C}^{2 \times 2}$ be the recurrence matrix
 for some infinite sequence of planar $\mathcal{F}$-gates defined by the recursive construction in Figure~\ref{fig:gadget:k>r:binary:interpolation}.
 If $M$ satisfies the following conditions,
 \begin{enumerate}
  \item $\det(M) \ne 0$;
  \item $M$ has infinite order modulo a scalar;
 \end{enumerate}
 then
 \[
  \PlHolant(\mathcal{F} \union \{\langle x,y \rangle\}) \le_T \PlHolant(\mathcal{F}),
 \]
 for any $x,y \in \mathbb{C}$,
 where $\langle x,y \rangle$ is a succinct binary signature of type $\tau_2$.
\end{corollary}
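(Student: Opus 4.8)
The plan is to derive this immediately from Lemma~\ref{lem:k>r:binary:interpolate} by checking that the pair $(s,M)$, with $s=\langle 1,0\rangle$ of type $\tau_2$, satisfies all three hypotheses of that lemma. Two of the three come for free: condition~1 of Lemma~\ref{lem:k>r:binary:interpolate}, $\det(M)\ne 0$, is condition~1 of the corollary, and condition~3, that $M$ has infinite order modulo a scalar, is condition~2 of the corollary. So the only thing left to prove is condition~2 of Lemma~\ref{lem:k>r:binary:interpolate}, namely $\det([s\ Ms])\ne 0$.

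To establish this, I would use that $M$ is, by assumption, the recurrence matrix of the construction in Figure~\ref{fig:gadget:k>r:binary:interpolation}, so by Lemma~\ref{lem:k>r:binary:interpolation:eigenvalues} there is a succinct binary signature $\langle a,b\rangle$ of type $\tau_2$ (the signature placed at every vertex of that construction, which is a planar $\mathcal{F}$-gate signature) with
\[
 M=\left[\begin{smallmatrix} a & (\kappa-1)b \\ b & a+(\kappa-2)b \end{smallmatrix}\right].
\]
Since $s=\left[\begin{smallmatrix}1\\0\end{smallmatrix}\right]$, we get $Ms=\left[\begin{smallmatrix}a\\b\end{smallmatrix}\right]$ and hence $\det([s\ Ms])=\det\left[\begin{smallmatrix}1&a\\0&b\end{smallmatrix}\right]=b$. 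Thus condition~2 of Lemma~\ref{lem:k>r:binary:interpolate} is exactly the statement $b\ne 0$.

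It remains to rule out $b=0$ using condition~2 of the corollary: if $b=0$, then $M=a\,I$ is a scalar multiple of the identity, hence has order~$1$ modulo a scalar, contradicting the hypothesis that $M$ has infinite order modulo a scalar. (Equivalently, this degenerate case is the only one in which the two eigenvalues $a+(\kappa-1)b$ and $a-b$ coincide, since $\kappa\ge 3$; so whenever $b\ne 0$ the matrix $M$ is diagonalizable with distinct eigenvalues, consistent with the two-dimensional interpolation picture.) Therefore $b\ne 0$, all three hypotheses of Lemma~\ref{lem:k>r:binary:interpolate} hold, and we obtain $\PlHolant(\mathcal{F}\union\{\langle x,y\rangle\})\le_T\PlHolant(\mathcal{F})$ for every $x,y\in\mathbb{C}$.

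There is no genuine obstacle here: this is a bookkeeping corollary, and its whole content---already sketched in the paragraph preceding the statement---is the observation that hypothesis~2 of Lemma~\ref{lem:k>r:binary:interpolate} is automatically subsumed by the infinite-order hypothesis once one knows the specific off-diagonal form of $M$ from Lemma~\ref{lem:k>r:binary:interpolation:eigenvalues}. The only point meriting a word of care is keeping straight the two uses of the bracket notation: the signature $\langle a,b\rangle$ built into the recursive gadget, which determines $M$, versus the target signature $\langle x,y\rangle$ we wish to realize.
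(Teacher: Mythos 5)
Your proposal is correct and follows exactly the paper's own argument: compute $\det([s\ Ms]) = b$ from the explicit form of $M$ given by Lemma~\ref{lem:k>r:binary:interpolation:eigenvalues}, observe that $b=0$ would force $M$ to be a scalar matrix contradicting the infinite-order hypothesis, and conclude that all three conditions of Lemma~\ref{lem:k>r:binary:interpolate} are satisfied. The only difference is notational (you write $\langle a,b\rangle$ where the paper writes $\langle x,y\rangle$ for the vertex signature, wisely sidestepping the clash with the target signature $\langle x,y\rangle$ in the statement).
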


\begin{figure}[t]
 \centering
 \begin{tikzpicture}[scale=\scale,transform shape,node distance=\nodeDist,semithick]
  \node[external] (0)              {};
  \node[internal] (1) [right of=0] {};
  \node[internal] (2) [right of=1] {};
  \node[external] (3) [right of=2] {};
  \path (0) edge              (1)
        (1) edge[ultra thick] (2)
        (2) edge              (3);
  \begin{pgfonlayer}{background}
   \node[draw=\borderColor,thick,rounded corners,inner xsep=12pt,inner ysep=8pt,fit = (1) (2)] {};
  \end{pgfonlayer}
 \end{tikzpicture}
 \caption{Binary gadget used in the interpolation construction of Figure~\ref{fig:gadget:k>r:binary:interpolation}.
 Both vertices are assigned $\AD_{r,\kappa}$ and the bold edge represents $r - 1$ parallel edges.}
 \label{fig:gadget:k>r:binary}
\end{figure}
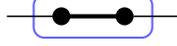

\begin{lemma} \label{lem:k>r:edge_coloring:binary:interpolation}
 Suppose $\kappa$ is the domain size with $\kappa > r$ for any integer $r \ge 3$, and $x,y \in \mathbb{C}$.
 Let $\mathcal{F}$ be a signature set containing $\AD_{r,\kappa}$.
 Then
 \[
  \PlHolant(\mathcal{F} \union \{\langle x,y \rangle\}) \le_T \PlHolant(\mathcal{F}),
 \]
 where $\langle x,y \rangle$ is a succinct binary signature of type $\tau_2$.
\end{lemma}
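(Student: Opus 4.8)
The plan is to exhibit one concrete binary $\mathcal{F}$-gate whose recurrence matrix, when inserted into the chain construction of Figure~\ref{fig:gadget:k>r:binary:interpolation}, meets the two hypotheses of Corollary~\ref{cor:k>r:binary:interpolate}. Interpolation of every succinct binary signature $\langle x,y\rangle$ of type $\tau_2$ then follows immediately from that corollary.

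First I would evaluate the succinct binary signature of the gadget in Figure~\ref{fig:gadget:k>r:binary}, which joins two copies of $\AD_{r,\kappa}$ by $r-1$ parallel edges and leaves one dangling edge at each vertex. Since $\AD_{r,\kappa}$ is domain invariant, so is this gadget, hence its signature has type $\tau_2$. If the two dangling edges receive the same color $i$, the $r-1$ internal edges must take $r-1$ distinct colors from $[\kappa]\setminus\{i\}$, contributing $\prod_{j=1}^{r-1}(\kappa-j)$; if the dangling edges receive distinct colors, the internal edges must additionally avoid the second color, contributing $\prod_{j=2}^{r}(\kappa-j)$. Factoring out the common nonzero scalar $\prod_{j=2}^{r-1}(\kappa-j)$ (every factor is at least $2$ since $\kappa>r$, and the product is nonempty since $r\ge 3$), the gadget realizes, up to a nonzero constant, the succinct binary signature $\langle \kappa-1,\ \kappa-r\rangle$ of type $\tau_2$.

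Next I would substitute this gadget for each vertex of the recursive construction in Figure~\ref{fig:gadget:k>r:binary:interpolation}; this is legitimate since $\AD_{r,\kappa}\in\mathcal{F}$ makes the whole construction a planar $\mathcal{F}$-gate. By Lemma~\ref{lem:k>r:binary:interpolation:eigenvalues} with $(x,y)=(\kappa-1,\kappa-r)$, the recurrence matrix has eigenvalues $x+(\kappa-1)y=(\kappa-1)(\kappa-r+1)$ and $x-y=r-1$. Both are positive integers when $\kappa>r\ge 3$, so the determinant of the recurrence matrix is their nonzero product; and since $(\kappa-1)(\kappa-r+1)\ge 2r>r-1$ the two eigenvalues are distinct, so the matrix has a full basis of eigenvectors and its eigenvalue ratio is a positive real number different from $1$, hence not a root of unity. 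As explained in the discussion preceding Corollary~\ref{cor:k>r:binary:interpolate}, that is exactly the condition that the recurrence matrix has infinite order modulo a scalar. Both hypotheses of Corollary~\ref{cor:k>r:binary:interpolate} therefore hold, yielding $\PlHolant(\mathcal{F}\cup\{\langle x,y\rangle\})\le_T\PlHolant(\mathcal{F})$ for all $x,y\in\mathbb{C}$.

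The only genuine work here is the combinatorial count that identifies the gadget signature; the infinite-order check is trivial because the two eigenvalues are positive reals of unequal magnitude. I expect this to be the easiest instance of a pattern that recurs throughout the paper: in the later ternary and binary constructions of Sections~\ref{sec:ternary} and~\ref{sec:binary} the analogous eigenvalues are genuinely complex, and verifying the corresponding lattice (root-of-unity) condition there is precisely where the Galois-theoretic and effective-Siegel machinery advertised in Section~\ref{sec:outline} becomes necessary.
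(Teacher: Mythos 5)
Your proof is correct and follows essentially the same path as the paper's: you use the gadget of Figure~\ref{fig:gadget:k>r:binary}, identify its signature (after scaling) as $\langle \kappa-1,\ \kappa-r\rangle$, read off the eigenvalues $(\kappa-1)(\kappa-r+1)$ and $r-1$ from Lemma~\ref{lem:k>r:binary:interpolation:eigenvalues}, and conclude via Corollary~\ref{cor:k>r:binary:interpolate}. The only cosmetic difference is that you argue distinctness of eigenvalues by the crude bound $(\kappa-1)(\kappa-r+1)\ge 2r>r-1$ while the paper observes equality would force $\kappa\in\{0,r\}$; both are fine.
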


\begin{proof}
 Let $(n)_k = n (n-1) \dotsm (n-k+1)$ be the $k$th falling power of $n$.
 Consider the gadget in Figure~\ref{fig:gadget:k>r:binary}.
 We assign $\AD_{r,\kappa}$ to both vertices.
 The succinct binary signature of type $\tau_2$ for this gadget is
 $f = \langle (\kappa - 1)_{r - 1}, (\kappa - 2)_{r - 1} \rangle$.
 Up to a nonzero factor of $(\kappa - 2)_{r - 2}$,
 we have the signature $f' = \frac{1}{(\kappa - 2)_{r - 2}} f = \langle \kappa - 1, \kappa - r \rangle$.
 
 Consider the recursive construction in Figure~\ref{fig:gadget:k>r:binary:interpolation}.
 We assign $f'$ to all vertices.
 By Lemma~\ref{lem:k>r:binary:interpolation:eigenvalues},
 the eigenvalues of corresponding recurrence matrix are $(r - 1) > 0$ and $(\kappa - 1) (\kappa - r + 1) > 0$.
 Thus, $M$ is nonsingular.
 Furthermore, the eigenvalues are not equal since $\kappa \not\in \{0,r\}$.
 Therefore,
 we are done by Corollary~\ref{cor:k>r:binary:interpolate}.
\end{proof}

Next we show that $\PlHolant(\AD_{r,\kappa})$ is at least as hard as $\PlHolant(\AD_{3,\kappa})$.
To overcome a difficulty when $r$ is even,
we apply the following result,
which uses the notion of a planar pairing.

\begin{definition}[Definition~11 in~\cite{GW13}]
 A \emph{planar pairing} in a graph $G = (V, E)$ is a set of edges $P \subset V \times V$
 such that $P$ is a perfect matching in the graph $(V, V \times V)$,
 and the graph $(V, E \union P)$ is planar.
\end{definition}

\begin{lemma}[Lemma~12 in~\cite{GW13}] \label{lem:planar_pairing}
 For any planar 3-regular graph $G$,
 there exists a planar pairing that can be computed in polynomial time.
\end{lemma}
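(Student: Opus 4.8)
The plan is to reduce to the connected case and then build the pairing by walking around a spanning tree, joining vertices in the cyclic order in which the walk meets them, and drawing the new edges inside a thin neighbourhood of the tree so that no crossings arise. First I would reduce to $G$ connected: every connected component of a $3$-regular graph has an even number of vertices (its degree sum is both $3$ times its order and twice its size), so a planar pairing of $G$ is obtained as the disjoint union of planar pairings of its components, drawn in disjoint regions of the plane. So assume from now on that $G$ is connected with $n = |V(G)|$ even.

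Next I would fix a planar embedding of $G$ together with a spanning tree $T$. As a connected plane forest, $T$ has a single face, whose boundary walk $C$ --- the contour of $T$ --- traverses each edge of $T$ exactly twice and passes through each vertex $v$ exactly $\deg_T(v) \ge 1$ times. Realize $C$ as a simple closed curve $\gamma$ inside a thin neighbourhood $D$ of $T$, where $D$ is a closed topological disk containing $T$ and having no non-tree edge of $G$ in its interior (all such edges lie in the single face of $T$). For each vertex $v$, pick one occurrence of $v$ on $\gamma$; reading these $n$ points in the cyclic order in which $\gamma$ meets them lists $n$ distinct vertices $u_1, u_2, \dots, u_n$, and I set $P = \{\{u_1,u_2\},\{u_3,u_4\},\dots,\{u_{n-1},u_n\}\}$. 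Since each vertex supplies exactly one $u_i$, the set $P$ is a perfect matching on $V(G)$.

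To finish I would draw $G \cup P$ without crossings: realize each pairing edge $\{u_{2i-1},u_{2i}\}$ as a curve in $D$ that hugs the arc of $\gamma$ between the chosen occurrences of $u_{2i-1}$ and $u_{2i}$ that contains no other chosen occurrence. Because the matched points are cyclically consecutive, these $n/2$ arcs of $\gamma$ are pairwise disjoint, so the pairing curves are pairwise non-crossing; staying close to $\gamma$, each avoids the edges of $T$; staying inside $D$, each avoids the non-tree edges of $G$. Therefore $G \cup P$ is planar. Each ingredient --- a planar embedding, a spanning tree, the contour walk, the choice of occurrences, and the pairing --- is computable in polynomial time, so $P$ is found in polynomial time.

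The step I expect to be the actual work is the last one: turning ``hug the contour inside a thin neighbourhood'' into a rigorous crossing-free drawing. One has to verify that the non-tree edges can be assumed to lie outside $D$, that each pairing curve can be extended from its point on $\gamma$ to the corresponding vertex inside the correct corner of the embedding's rotation system without meeting the tree or non-tree edges incident there, and that cyclically consecutive chosen occurrences are indeed joined by an arc of $\gamma$ meeting no other chosen occurrence. This is standard plane topology --- essentially inserting the new edges into the rotation system one corner at a time --- but it is where all the care goes; the combinatorial heart, namely the pairing itself, is immediate once the contour is available.
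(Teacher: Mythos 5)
The gap is real, and it is more than a matter of making the drawing rigorous: the combinatorial construction itself can fail. Your assertion that the non-tree edges of $G$ may be taken to lie outside $D$ is impossible to arrange, because every non-tree edge has both endpoints at vertices of $T$, which lie in the interior of $D$; near each endpoint the non-tree edge must run out from the vertex, across the annular region $D \setminus T$, and through $\gamma$. A curve that hugs an arc of $\gamma$ containing such an exit point must cross the non-tree edge. Put intrinsically: the interior of any pairing edge $\{v,v'\}$ has to lie in a single \emph{face of $G$}, not merely in the single face of $T$; the non-tree edges partition the latter into the former, and nothing in the contour-hugging construction forces $v$ and $v'$ to share a face of $G$.

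Indeed an arbitrary choice of occurrence can produce a pairing $P$ for which $G\cup P$ is non-planar. Take $G$ to be the cube $Q_3$ with spanning tree the Hamiltonian path $000$--$001$--$011$--$010$--$110$--$111$--$101$--$100$. The contour walk is $000,001,011,010,110,111,101,100,101,111,110,010,011,001$; choosing the occurrences at positions $0,1,3,5,7,8,10,12$ gives $P=\{\{000,001\},\{010,111\},\{100,101\},\{110,011\}\}$. The pairs $\{010,111\}$ and $\{110,011\}$ are the two diagonals of the quadrilateral face $\{010,011,111,110\}$, and contracting the opposite face $\{000,001,101,100\}$ of $G\cup P$ produces a $K_5$ minor, so $G\cup P$ is not planar. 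Thus the occurrence of each vertex cannot be chosen arbitrarily; a correct argument has to choose the occurrences with care (or proceed quite differently) so that every pairing edge lands as a chord of a single face of $G$.
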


\begin{lemma} \label{lem:reduction_to_AD3}
 Suppose $\kappa$ is the domain size with $\kappa > r$ for any integer $r \ge 3$.
 Then
 \[
  \PlHolant(\AD_{3,\kappa}) \le_T \PlHolant(\AD_{r,\kappa}).
 \]
\end{lemma}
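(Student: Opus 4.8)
The plan is to realize $\AD_{3,\kappa}$ as a constant multiple of the signature of a small planar gadget built from one copy of $\AD_{r,\kappa}$ together with binary signatures, treating the parity of $r$ separately. First note that Lemma~\ref{lem:k>r:edge_coloring:binary:interpolation} applied with $\mathcal{F} = \{\AD_{r,\kappa}\}$ gives $\PlHolant(\{\AD_{r,\kappa}\} \cup \{\langle x,y\rangle\}) \le_T \PlHolant(\AD_{r,\kappa})$ for every $x,y \in \mathbb{C}$; in particular the binary all-ones signature $J := \langle 1,1\rangle$ of type $\tau_2$ is available, so it suffices to reduce $\PlHolant(\AD_{3,\kappa})$ to $\PlHolant(\{\AD_{r,\kappa}, J\})$. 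The basic move is: if two of the dangling edges of an $\AD_{r,\kappa}$ vertex are joined through a fresh degree-$2$ vertex carrying $J$ and summed out, the two corresponding arguments disappear and the signature is multiplied by a nonzero constant; crucially the outcome is again an All-Distinct signature on the surviving edges, since $\AD_r$ already forces all arguments to be distinct. Performed on consecutive pairs of dangling edges, this keeps everything planar.

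When $r$ is odd, $r-3$ is even. Given an arbitrary instance $G$ of $\PlHolant(\AD_{3,\kappa})$ (a planar $3$-regular graph), I would replace each vertex $v$ by an $\AD_{r,\kappa}$ vertex, designate three of its edges as ``real'' in the cyclic order inherited from $v$, and join the remaining $r-3$ edges in consecutive pairs through $J$-vertices. The resulting planar gadget has three dangling edges and signature $c'\,\AD_{3,\kappa}$, where $c' = (\kappa-3)(\kappa-4)\cdots(\kappa-r+1) \ne 0$ because $\kappa > r$ (with $c'=1$ when $r=3$). Plugging these gadgets back into the places of the vertices of $G$ produces a planar instance $\Omega'$ of $\PlHolant(\{\AD_{r,\kappa},J\})$ with $\PlHolant(\Omega') = (c')^{|V(G)|}\,\PlHolant(G;\AD_{3,\kappa})$, so a single oracle call recovers the desired value.

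When $r$ is even, $r-3$ is odd, so the same local pairing leaves one ``leftover'' dangling edge at every vertex gadget, which then has signature $c\,\AD_{4,\kappa}$ with $c = (\kappa-4)\cdots(\kappa-r+1) \ne 0$ (empty product, $c=1$, if $r=4$). To dispose of the leftover edges I would invoke Lemma~\ref{lem:planar_pairing} to compute a planar pairing $P$ of $G$ (which exists since $G$ is planar $3$-regular, so $|V(G)|$ is even), and for each pair $\{u,v\} \in P$ connect the leftover edge of $u$'s gadget to that of $v$'s gadget through a $J$-vertex, routing the connection along the drawing of the edge $uv \in P$. Since $(V(G), E(G) \cup P)$ is planar and each gadget can be built with its leftover edge occupying the cyclic slot of $P$'s edge at that vertex, the whole construction $\Omega'$ is planar. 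For any proper edge-coloring of $G$, the two leftover colors at $u$ and $v$ range, independently (as $J$ imposes no constraint), over the $\kappa-3$ colors unused at their respective vertices, contributing a factor $(\kappa-3)^2$; an improper coloring contributes $0$ on both sides. Hence $\PlHolant(\Omega') = (c(\kappa-3))^{|V(G)|}\,\PlHolant(G;\AD_{3,\kappa})$, a nonzero constant multiple since $\kappa > r$, and again one oracle call finishes the reduction.

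The main obstacle, and the reason for the case split, is the even case: there the obvious single-$\AD_{r,\kappa}$-vertex gadget has arity $4$ rather than $3$, and shedding one further edge would seem to require a unary $\langle 1\rangle$, which is not obviously constructible from $\AD_{r,\kappa}$ and binary signatures alone. Routing the ``half'' of that unary which each vertex needs into its planar-pairing partner---so that the only global effect is multiplication by the constant $(\kappa-3)^2$ per pair, while planarity is preserved by drawing along the pairing edges---is the device that makes the even case go through, and verifying these two points (the per-pair contribution is independent of the coloring, and the planar embedding survives) is where the care is needed.
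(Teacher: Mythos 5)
Your proposal is correct and takes essentially the same approach as the paper's: both first obtain $\langle 1,1\rangle$ via Lemma~\ref{lem:k>r:edge_coloring:binary:interpolation}, then strip pairs of edges from $\AD_{r,\kappa}$ by forming self-loops capped with $\langle 1,1\rangle$ vertices to reach $\AD_{3,\kappa}$ (odd $r$) or $\AD_{4,\kappa}$ (even $r$), and finally invoke the planar pairing of Lemma~\ref{lem:planar_pairing}, placing a $\langle 1,1\rangle$ on each pairing edge, to absorb the leftover fourth dangling edge in the even case. Your explicit per-pair factor $(\kappa-3)^2$ is just the paper's local factor $\kappa-3$ counted once at each endpoint of a pairing edge.
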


\begin{proof}
 By Lemma~\ref{lem:k>r:edge_coloring:binary:interpolation},
 we can assume that $\langle 1,1 \rangle$ is available.
 Take $\AD_{r,\kappa}$ and first form $t = \ceil{\frac{r-4}{2}}$ self-loops.
 Then add a new vertex on each self-loop and assign $\langle 1,1 \rangle$ to each of these new vertices.
 Up to a nonzero factor of $(\kappa - 3)_{2t}$,
 the resulting signature is $\AD_{3,\kappa}$ if $r$ is odd and $\AD_{4,\kappa}$ if $r$ is even.
 To reduce from $r = 3$ to $r = 4$,
 we use a planar pairing,
 which can be efficiently computed by Lemma~\ref{lem:planar_pairing}.
 We add a new vertex on each edge in a planar pairing and assign $\langle 1,1 \rangle$ to each of these new vertices.
 Then up to a nonzero factor of $\kappa - 3$,
 the signature at each vertex of the initial graph is effectively $\AD_{3,\kappa}$.
\end{proof}

The succinct binary signature $\langle 1 - \kappa, 1 \rangle$ of type $\tau_2$ has a special property.
Let $u$ be any constant unary signature,
which has a succinct signature of type $\tau_1$.
If $\langle 1 - \kappa, 1 \rangle$ is connected to $u$,
then the resulting unary signature is identically~$0$.

This observation is the key to what follows.
We use it in the next lemma to achieve what would appear to be an impossible task.
The requirements, if duly specified,
would result in multiple conditions to be satisfied by nine separate polynomials pertaining to some construction in place of the gadget in Figure~\ref{fig:gadget:k>r:fischer}.
And yet we are able to use just one degree of freedom to cause seven of the polynomials to vanish,
satisfying most of these conditions.
In addition, the other two polynomials are not forgotten.
They are nonzero and their ratio is not a root of unity,
which allows interpolation to succeed.

This ability to satisfy a multitude of constraints simultaneously in one magic stroke
reminds us of some unfathomably brilliant moves by Bobby Fischer,
the chess genius extraordinaire,
and so we name this gadget (Figure~\ref{fig:gadget:k>r:fischer}) the \emph{Bobby Fischer gadget}.

This gadget is the new idea that allows us to prove Theorem~\ref{thm:edge_coloring:k>r}.

\begin{figure}[t]
 \centering
 \begin{tikzpicture}[scale=\scale,transform shape,node distance=\nodeDist,semithick]
  \node[square]   (0)              {};
  \node[internal] (1) [above of=0] {};
  \node[internal] (2) [below of=0] {};
  \node[external] (3) [left  of=1] {};
  \node[external] (4) [left  of=2] {};
  \node[external] (5) [right of=1] {};
  \node[external] (6) [right of=2] {};
  \path (0) edge (1)
            edge (2)
        (1) edge[postaction={decorate, decoration={
                                             markings,
                                             mark=at position 0.70 with {\arrow[>=diamond,white] {>}; },
                                             mark=at position 0.70 with {\arrow[>=open diamond]  {>}; } } }] node[near start] (e1) {} (3)
            edge node[near start] (e2) {} (5)
        (2) edge node[near start] (e3) {} (4)
            edge node[near start] (e4) {} (6);
  \begin{pgfonlayer}{background}
   \node[draw=\borderColor,thick,rounded corners,inner xsep=8pt,inner ysep=8pt,fit = (0) (1) (2)] {};
  \end{pgfonlayer}
 \end{tikzpicture}
 \caption{The Bobby Fischer gadget, which achieves many objectives using only a single degree of freedom.}
 \label{fig:gadget:k>r:fischer}
\end{figure}
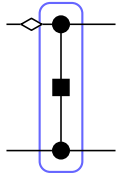

\begin{lemma} \label{lem:k>r:interpolate_equality4}
 Suppose $\kappa \ge 3$ is the domain size and $a, b \in \mathbb{C}$.
 Let $\mathcal{F}$ be a signature set containing the succinct ternary signature $\langle a,b,b \rangle$ of type $\tau_3$
 and the succinct binary signature $\langle 1 - \kappa, 1 \rangle$ of type $\tau_2$.
 If $a \ne b$,
 then
 \[
  \PlHolant(\mathcal{F} \union \{=_4\}) \le_T \PlHolant(\mathcal{F}).
 \]
\end{lemma}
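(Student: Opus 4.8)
The plan is to realize $=_4$ by polynomial interpolation, with the Bobby Fischer gadget of Figure~\ref{fig:gadget:k>r:fischer} as the basic building block: assign $\langle 1-\kappa,1\rangle$ to the square vertex and $\langle a,b,b\rangle$ to each of the two degree-$3$ internal vertices. First I would compute the quaternary signature $F$ of this gadget by brute force. Writing $\langle a,b,b\rangle$ as $b$ times the all-ones tensor plus $(a-b)$ times $=_3$, and using that $\langle 1-\kappa,1\rangle$, viewed as the $\kappa\times\kappa$ matrix $J_\kappa-\kappa I_\kappa$, annihilates the all-ones vector, one finds that almost everything cancels and
\[
 F(w,x,y,z) = (a-b)^2\bigl(\delta_{wx}\,\delta_{yz} - \kappa\,\delta_{w=x=y=z}\bigr),
\]
where $w,y$ are the external legs at one internal vertex and $x,z$ those at the other, labeled so that $\{w,y\}$ is the left pair of legs of the gadget and $\{x,z\}$ the right pair. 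This collapse is precisely the ``magic'' referred to in the outline: only two of the nine $\tau_4$-entries of $F$ survive (the values $1-\kappa$ on the all-equal part and $1$ on the part where $w=x\ne y=z$), and it is the special choice $\langle 1-\kappa,1\rangle$ that forces this.

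Reading $F$, in a suitable ordering, as a $\kappa^2\times\kappa^2$ matrix whose rows are indexed by the left pair of legs and columns by the right pair, the term $\delta_{wx}\delta_{yz}$ becomes the identity $I_{\kappa^2}$ and $\delta_{w=x=y=z}$ becomes the orthogonal projection $P_U$ onto $U=\operatorname{span}\{e_{(i,i)}:i\in[\kappa]\}$, so $M_F=(a-b)^2(I-\kappa P_U)$, with eigenvalue $(a-b)^2(1-\kappa)$ on $U$ and $(a-b)^2$ on $U^\perp$. Chaining $s$ copies of the gadget in series from left to right yields a planar $\mathcal{F}$-gate $N_s$ with $M_{N_s}=M_F^{\,s}=(a-b)^{2s}\bigl((1-\kappa)^s P_U + P_{U^\perp}\bigr)$; writing $I=P_U+P_{U^\perp}$ and translating back to signatures, the signature of $N_s$ equals $(a-b)^{2s}\bigl(W + z_s\,(=_4)\bigr)$, where $W$ is the ``two parallel wires'' signature $\delta_{wx}\delta_{yz}$, the signature $=_4$ corresponds to $P_U$, and $z_s:=(1-\kappa)^s-1$. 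Since $a\ne b$ we have $(a-b)^{2s}\ne 0$, and since $\kappa\ge 3$ we have $|1-\kappa|=\kappa-1\ge 2$, so the powers $(1-\kappa)^s$, and hence the numbers $z_s$, are pairwise distinct.

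Now I would run the interpolation exactly as in the proof of Theorem~\ref{thm:edge_coloring:k=r}. Given an instance $\Omega$ of $\PlHolant(\mathcal{F}\union\{=_4\})$ in which $=_4$ occurs $n$ times, let $\Omega_s$ be obtained by replacing every occurrence by the gadget $N_s$; this is a planar instance of $\PlHolant(\mathcal{F})$, so $\PlHolant(\Omega_s)$ can be computed with the oracle for $\PlHolant(\mathcal{F})$. Expanding the signature of each $N_s$ as $(a-b)^{2s}W+(a-b)^{2s}z_s(=_4)$ and using multilinearity of the Holant sum,
\[
 \PlHolant(\Omega_s) = (a-b)^{2sn}\sum_{j=0}^{n} z_s^{\,j}\,C_j,
\]
where $C_j$ is the sum, over the $\binom{n}{j}$ ways of keeping exactly $j$ of the occurrences as $=_4$ and replacing the other $n-j$ by $W$, of the corresponding Holant values; in particular $C_n=\PlHolant(\Omega;\mathcal{F}\union\{=_4\})$ is exactly the quantity we want. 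Thus $\PlHolant(\Omega_s)/(a-b)^{2sn}$ is a polynomial of degree at most $n$ in the variable $z_s$; evaluating it at the $n+1$ distinct points $z_1,\dots,z_{n+1}$ and solving the resulting (generalized Vandermonde) linear system recovers all $C_j$, and in particular $C_n$. One should also check planarity: when $=_4$ at a vertex with incident edges $\alpha,\beta,\gamma,\delta$ in counterclockwise order is replaced by $N_s$, the ``two parallel wires'' component reconnects these as the two non-crossing edges $\alpha$--$\delta$ and $\beta$--$\gamma$, so the planar embedding survives and every $\Omega_s$ is a legitimate planar signature grid.

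The crux of the argument is the first step: establishing the clean closed form $M_F=(a-b)^2(I-\kappa P_U)$ for the Bobby Fischer gadget. The cancellation is delicate and hinges on combining $\langle 1-\kappa,1\rangle$ with $\langle a,b,b\rangle$ in exactly this configuration (the binary signature is chosen precisely so that it kills the all-ones direction, which is what makes the seven ``unwanted'' $\tau_4$-entries vanish). Once that form is in place, the eigenvalue structure, the identity $M_{N_s}=M_F^{\,s}$, and the interpolation are all routine and parallel the argument in Theorem~\ref{thm:edge_coloring:k=r}. A secondary point that needs care is the leg-labeling bookkeeping when $N_s$ is substituted for $=_4$, so that the residual $W$-term really does correspond to a planar reconnection of the four incident edges.
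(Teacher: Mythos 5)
Your proof is correct and follows essentially the same route as the paper: both arguments exploit the fact that $\langle 1-\kappa,1\rangle = J_\kappa - \kappa I_\kappa$ annihilates the all-ones vector, so that the Bobby Fischer gadget's quaternary signature collapses to a two-entry succinct form with eigenvalues $(a-b)^2(1-\kappa)$ (on the all-equal part) and $(a-b)^2$ (on the rest), after which chaining and polynomial interpolation recover $=_4$; the paper packages the last step via Lemma~\ref{lem:k>r:binary:interpolate} rather than an explicit Vandermonde system, but the content is identical. One notational remark: your formula $\delta_{wx}\delta_{yz}$ is inconsistent with the labeling you announce (where $w,y$ are the two legs at one circle vertex and $x,z$ at the other); under that labeling the parallel-wires term should read $\delta_{wy}\delta_{xz}$ — a slip that does not affect the argument.
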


\begin{proof}
 Consider the gadget in Figure~\ref{fig:gadget:k>r:fischer}.
 We assign $\langle a,b,b \rangle$ to the circle vertices and $\langle 1 - \kappa, 1 \rangle$ to the square vertex.
 This gadget has a succinct quaternary signature of type $\tau_4$,
 which has length~$9$.
 However, all but two of the entries in this succinct signature must be~$0$.
 
 To see this,
 consider an assignment that assigns different values to the two edges incident to the circle vertex on top.
 Since the assignment to these two edges differ,
 the signature $\langle a,b,b \rangle$ contributes a factor of $b$ regardless of the value of its third edge,
 which is connected to the square vertex assigned $\langle 1 - \kappa, 1 \rangle$.
 From the perspective of $\langle 1 - \kappa, 1 \rangle$,
 this behavior is equivalent to connecting it to the succinct unary signature $b \langle 1 \rangle$ of type $\tau_1$.
 Thus, the sum over the possible assignments to this third edge is~$0$.
 The same argument shows that the two edges incident to the circle vertex on the bottom
 do not contribute anything to the Holant sum when assigned different values.
 
 Thus, any nonzero contribution to the Holant sum comes from assignments
 where the top two dangling edges are assigned the same value and the bottom two dangling edges are assigned the same value.
 There are only two entries that satisfy this requirement in the succinct quaternary signature of type $\tau_4$ for this gadget,
 which are the entries for $P_{\subMat{1}{1}{1}{1}}$ and $P_{\subMat{1}{2}{2}{1}}$.
 To compute those two entries,
 we use the following trick.
 Since the two external edges of each circle vertex must be assigned the same value,
 we think of them as just a single edge.
 Then the effective succinct binary signature of type $\tau_2$ for the circle vertices is just $\langle a,b \rangle$.
 By connecting the first $\langle a,b \rangle$ with $\langle 1-\kappa, 1 \rangle$,
 the result is $(a-b) \langle 1-\kappa, 1 \rangle$ like it is an eigenvector.
 Connecting the other copy of $\langle a,b \rangle$ to $(a-b) \langle 1-\kappa, 1 \rangle$ gives $(a-b)^2 \langle 1-\kappa, 1 \rangle$.
 This computation is expressed via the matrix multiplication
 $[b J_\kappa + (a - b) I_\kappa] [J_\kappa - \kappa I_\kappa] [b J_\kappa + (a - b) I_\kappa]
 = (a - b) [J_\kappa - \kappa I_\kappa] [b J_\kappa + (a - b) I_\kappa]
 = (a - b)^2 [J_\kappa - \kappa I_\kappa]$.
 Thus up to a nonzero factor of $(a-b)^2$,
 the corresponding succinct quaternary signature of type $\tau_4$ for this gadget is $f = \langle 1-\kappa,0,0,0,0,0,1,0,0 \rangle$.
 
 Consider the recursive construction in Figure~\ref{fig:gadget:linear_interpolation:quaternary}.
 We assign $f$ to all vertices.
 Let $f_s$ be the signature of the $s$th gadget in this construction.
 The seven entries that are~$0$ in the succinct signature of type $\tau_4$ for $f$ are also~$0$ in the succinct signature of type $\tau_4$ for $f_s$.
 Thus, we can express $f_s$ via a succinct signature of type $\tau_4'$ with length~$2$, defined as follows.
 The first two parts in $\tau_4'$ are $P_{\subMat{1}{1}{1}{1}}$
 and $P_{\subMat{1}{2}{2}{1}}$ from the succinct signature type $\tau_4$.
 The last part contains all the remaining assignments.
 Then the succinct signature for $f_s$ of type $\tau_4'$ is $M^s f_0$,
 where $M = \left[\begin{smallmatrix} 1-\kappa & 0 \\ 0 & 1 \end{smallmatrix}\right]$ and $f_0 = \langle 1,1 \rangle$,
 which is just the succinct signature of type $\tau_4'$ for two parallel edges.
 
 Clearly the conditions in Lemma~\ref{lem:k>r:binary:interpolate} hold,
 so we can interpolate any succinct signature of type $\tau_4'$.
 In particular,
 we can interpolate our target signature $=_4$,
 which is $\langle 1,0 \rangle$ when expressed as a succinct signature of type $\tau_4'$.
\end{proof}

\begin{remark}
 The nine polynomials mentioned before Lemma~\ref{lem:k>r:interpolate_equality4}
 correspond to the nine entries of some quaternary gadget with a succinct signature of type $\tau_4$.
 In light of Lemma~\ref{lem:k>r:edge_coloring:binary:interpolation},
 this gadget might involve many succinct binary signatures $\langle x,y \rangle$ of type $\tau_2$ for various choices of $x,y \in \mathbb{C}$.
 Each distinct binary signature provides an additional degree of freedom to these polynomials.
 Our construction in Figure~\ref{fig:gadget:k>r:fischer} only requires one binary signature $\langle x,y \rangle$
 and we use our one degree of freedom to set $\frac{x}{y} = 1 - \kappa$.
\end{remark}

With the aid of the succinct unary signature $\langle 1 \rangle$ of type $\tau_1$
and the succinct binary signature $\langle 0,1 \rangle$ of type $\tau_2$,
the assumptions in the previous lemma are sufficient to prove $\SHARPP$-hardness.

\begin{corollary} \label{cor:k>r:abb_unary_binaries}
 Suppose $\kappa \ge 3$ is the domain size and $a,b \in \mathbb{C}$.
 Let $\mathcal{F}$ be a signature set containing the succinct ternary signature $\langle a, b, b \rangle$ of type $\tau_3$,
 the succinct unary signature $\langle 1 \rangle$ of type $\tau_1$,
 and the succinct binary signatures $\langle 1 - \kappa, 1 \rangle$ and $\langle 0,1 \rangle$ of type $\tau_2$.
 If $a \ne b$,
 then $\PlHolant(\mathcal{F})$ is $\SHARPP$-hard.
\end{corollary}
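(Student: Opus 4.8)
The plan is to reduce from counting vertex $\kappa$-colorings over planar graphs, that is, from evaluating the chromatic polynomial $\chi(\cdot\,;\kappa)$. By Proposition~\ref{prop:k>r:chromatic_tutte}, for a planar graph $G$ this quantity is a fixed nonzero multiple of the Tutte evaluation $\operatorname{T}(G;1-\kappa,0)$, and at the point $(x,y)=(1-\kappa,0)$ one has $(x-1)(y-1)=\kappa\notin\{1,2\}$ whenever $\kappa\ge 3$, while $(1-\kappa,0)$ is none of the four exceptional points of Theorem~\ref{thm:tutte}. Hence counting vertex $\kappa$-colorings over planar graphs is $\SHARPP$-hard for every $\kappa\ge 3$, and it suffices to reduce this problem to $\PlHolant(\mathcal{F})$.

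First I would apply Lemma~\ref{lem:k>r:interpolate_equality4}: since $\mathcal{F}$ contains the succinct ternary signature $\langle a,b,b\rangle$ with $a\ne b$ and the succinct binary signature $\langle 1-\kappa,1\rangle$, that lemma gives $\PlHolant(\mathcal{F}\cup\{=_4\})\le_T\PlHolant(\mathcal{F})$, so $=_4$ may be treated as available. Next I would build the \textsc{Equality} signature $=_d$ of every arity $d\ge 1$ as a planar $\mathcal{F}$-gate: $=_1=\langle 1\rangle$ is given; connecting $\langle 1\rangle$ to one dangling edge of $=_4$ produces $=_3$ and connecting it to two dangling edges produces $=_2$; and chaining copies of $=_3$ in a path, identifying one edge of consecutive copies, produces $=_d$ for every $d\ge 3$. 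This is exactly where the hypothesis $\langle 1\rangle\in\mathcal{F}$ is used: $=_4$ on its own generates only equalities of even arity, and $\langle 1\rangle$ removes the parity restriction. Since $=_d$ is totally symmetric, the cyclic order in which the $d$ dangling edges emerge does not affect its value, so such a gadget can be drawn with its dangling edges appearing in any prescribed cyclic order on the outer face.

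With $\{=_d:d\ge 1\}$ realizable from $\mathcal{F}$ and $\neq_2=\AD_{2,\kappa}=\langle 0,1\rangle$ already in $\mathcal{F}$, I would then encode the chromatic polynomial by the standard reduction that expresses a counting CSP as a planar Holant problem with all equality signatures: given a connected plane graph $G=(V,E)$, subdivide every edge once, assign $\langle 0,1\rangle$ to each new (degree-two) subdivision vertex and $=_{\deg_G(v)}$ to each original vertex $v$, ordering the edges around $v$ as in the embedding of $G$. The result is a planar signature grid built from signatures in or realizable from $\mathcal{F}$, and for a labeling $\sigma$ of its edges the product of vertex evaluations is $1$ precisely when $\sigma$ is constant around each original vertex (forced by the equalities) and assigns distinct values to the two halves of each original edge (forced by the disequalities), and $0$ otherwise; these labelings are in bijection with the proper $\kappa$-colorings of $G$, so the Holant value equals $\chi(G;\kappa)$. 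Therefore counting vertex $\kappa$-colorings over planar graphs $\le_T\PlHolant(\mathcal{F})$, and $\PlHolant(\mathcal{F})$ is $\SHARPP$-hard.

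All of the substance of this corollary has already been packed into Lemma~\ref{lem:k>r:interpolate_equality4}, the Bobby Fischer gadget; the remaining steps are routine. The one point that calls for care is the planar bookkeeping in the middle step — checking that the all-arity equality gadgets can be realized planarly with their dangling edges in a prescribed cyclic order, and that subdividing the edges of $G$ preserves planarity — but both are standard facts, so I do not anticipate a genuine obstacle.
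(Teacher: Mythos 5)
Your proof is correct and takes essentially the same route as the paper: invoke Lemma~\ref{lem:k>r:interpolate_equality4} for $=_4$, contract with $\langle 1 \rangle$ to get $=_3$ and then all equalities, and combine with $\ne_2 = \langle 0,1 \rangle$ to encode counting vertex $\kappa$-colorings on planar graphs, which is $\SHARPP$-hard via Proposition~\ref{prop:k>r:chromatic_tutte} and Theorem~\ref{thm:tutte}. You spell out the planarity bookkeeping and the check that $(1-\kappa,0)$ avoids the exceptional points of the Tutte dichotomy a bit more explicitly than the paper does, but the argument is the same one.
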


\begin{proof}
 By Lemma~\ref{lem:k>r:interpolate_equality4},
 we have $=_4$.
 Connecting $\langle 1 \rangle$ to $=_4$ gives $=_3$.
 With $=_3$,
 we can construct the equality signatures of every arity.
 Along with the binary disequality signature $\ne_2$,
 which is the succinct binary signature $\langle 0,1 \rangle$ of type $\tau_2$,
 we can express the problem of counting the number of vertex $\kappa$-colorings over planar graphs.
 By Proposition~\ref{prop:k>r:chromatic_tutte},
 this is, up to a nonzero factor, the problem of evaluating the Tutte polynomial at $(1 - \kappa, 0)$,
 which is $\SHARPP$-hard by Theorem~\ref{thm:tutte}.
\end{proof}

Now we can show that counting edge colorings is $\SHARPP$-hard over planar regular graphs when there are more colors than the regularity parameter.

\begin{figure}[t]
 \centering
 \begin{tikzpicture}[scale=\scale,transform shape,node distance=\nodeDist,semithick]
  \node [external] (0)              {};
  \node [square]   (1) [below of=0] {};
  \node [internal] (2) [below of=1] {};
  \path (2) ++(-150:\nodeDist) node [square] (3) {} ++(-150:\nodeDist) node [external] (4) {};
  \path (2) ++( -30:\nodeDist) node [square] (5) {} ++( -30:\nodeDist) node [external] (6) {};
  \path (0) edge (1)
        (1) edge (2)
        (2) edge (3)
            edge (5)
        (3) edge (4)
        (5) edge (6);
  \begin{pgfonlayer}{background}
   \node[draw=\borderColor,thick,rounded corners,inner xsep=12pt,inner ysep=12pt,fit = (1) (3) (5)] {};
  \end{pgfonlayer}
 \end{tikzpicture}
 \caption{Local holographic transformation gadget construction for a ternary signature.}
 \label{fig:gadget:k>r:local_holographic_transformation}
\end{figure}
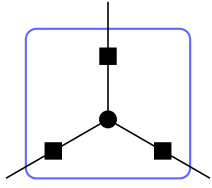

\begin{theorem} \label{thm:edge_coloring:k>r}
 \#$\kappa$-\textsc{EdgeColoring} is $\SHARPP$-hard over planar $r$-regular graphs for all $\kappa > r \ge 3$.
\end{theorem}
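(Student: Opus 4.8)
The plan is to reduce everything to the case $r=3$ and then push $\AD_{3,\kappa}$ through the gadget chain built above, ending at the planar Tutte dichotomy. By Lemma~\ref{lem:reduction_to_AD3} we have $\PlHolant(\AD_{3,\kappa}) \le_T \PlHolant(\AD_{r,\kappa})$, so it suffices to prove that $\PlHolant(\AD_{3,\kappa})$ is $\SHARPP$-hard; note $\kappa > r \ge 3$ forces $\kappa \ge 4$, which we will need for nondegeneracy below. Starting from $\mathcal{F} = \{\AD_{3,\kappa}\}$, Lemma~\ref{lem:k>r:edge_coloring:binary:interpolation} (with $r=3$) lets us adjoin to $\mathcal{F}$, at no cost in complexity, every succinct binary signature $\langle x,y \rangle$ of type $\tau_2$; in particular the disequality $\langle 0,1 \rangle$ (i.e. $\ne_2$), the distinguished signature $\langle 1-\kappa, 1 \rangle$, and $\langle 1,1 \rangle$.

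Next I would construct the unary $\langle 1 \rangle$ of type $\tau_1$: connecting $\langle 1,1 \rangle$ to two of the three edges of $\AD_{3,\kappa}$ produces the unary signature $(\kappa-1)(\kappa-2)\langle 1 \rangle$, which is a nonzero multiple of $\langle 1 \rangle$ since $\kappa \ge 4$ (this is also the content of Lemma~\ref{lem:unary:construct_<1>} in this case). The crux is then to build a succinct ternary signature of the form $\langle a,b,b \rangle$ with $a \ne b$. For this I would use the local holographic transformation gadget of Figure~\ref{fig:gadget:k>r:local_holographic_transformation}, placing $\AD_{3,\kappa}$ at the internal vertex and the \emph{same} binary signature $\langle x,y \rangle$ (with $\kappa \times \kappa$ matrix $B = (x-y) I_\kappa + y J_\kappa$) at each of its three square vertices. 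Since $B$ commutes with every $\kappa \times \kappa$ permutation matrix and $\AD_{3,\kappa}$ is domain invariant, the gadget signature $B^{\otimes 3}\AD_{3,\kappa}$ is again symmetric and domain invariant, hence equals $\langle a,b,c \rangle$ for some homogeneous cubics $a,b,c$ in $x,y$. One then chooses the projective ratio $x : y$ so that $b = c$ while $a \ne b$, uniformly in all $\kappa \ge 4$; this is Lemma~\ref{lem:ternary:construct_abb}, and one must confirm that $\AD_{3,\kappa} = \langle 0,0,1 \rangle$ does not lie among the exceptional signatures for which this step fails.

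Having $\langle a,b,b \rangle$ with $a \ne b$ together with $\langle 1 \rangle$, $\langle 1-\kappa, 1 \rangle$, and $\langle 0,1 \rangle$ in $\mathcal{F}$, I would invoke Corollary~\ref{cor:k>r:abb_unary_binaries}: through the Bobby Fischer gadget (Lemma~\ref{lem:k>r:interpolate_equality4}) this yields $=_4$, then $=_3$ and equalities of all arities, whence together with $\ne_2$ one expresses counting planar vertex $\kappa$-colorings, which is $\SHARPP$-hard by Proposition~\ref{prop:k>r:chromatic_tutte} and Theorem~\ref{thm:tutte}. This shows $\PlHolant(\AD_{3,\kappa})$ is $\SHARPP$-hard, and hence so is $\PlHolant(\AD_{r,\kappa})$, i.e.\ \#$\kappa$-\textsc{EdgeColoring} over planar $r$-regular graphs.

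I expect the main obstacle to be the ternary construction: making the local holographic transformation actually output some $\langle a,b,b \rangle$ with $a \ne b$ means solving $b(x,y)=c(x,y)$, ruling out $a=b$ there, and doing so simultaneously for the whole family $\kappa \ge 4$ rather than one value at a time. This is also where the hypothesis $\kappa > r$ — equivalently $\kappa \ge 4$ when $r=3$ — is essential, since it is precisely what keeps the leading coefficients of $a,b,c$ from vanishing.
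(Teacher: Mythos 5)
Your proposal is correct and matches the paper's proof almost step for step: reduce to $r=3$ via Lemma~\ref{lem:reduction_to_AD3}, interpolate all binaries via Lemma~\ref{lem:k>r:edge_coloring:binary:interpolation}, construct $\langle 1 \rangle$ by capping two edges of $\AD_{3,\kappa}$ with $\langle 1,1\rangle$, build a ternary $\langle a,b,b\rangle$ with $a\ne b$ via the local holographic transformation gadget, and finish with Corollary~\ref{cor:k>r:abb_unary_binaries}. The only cosmetic difference is that the paper bypasses the general Lemma~\ref{lem:ternary:construct_abb} by directly plugging $\langle 3-\kappa,1\rangle$ into the gadget and computing $f = 2(\kappa-2)\langle -(\kappa-3)(\kappa-1),1,1\rangle$, whereas you invoke Lemma~\ref{lem:ternary:construct_abb}; your check that $\mathfrak{A}=2\ne0$ and $\mathfrak{B}=-(\kappa-2)\ne0$ for $\AD_{3,\kappa}$ with $\kappa\ge4$ is exactly what is needed, so this is fine.
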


\begin{proof}
 By Lemma~\ref{lem:reduction_to_AD3},
 it suffices to consider $r = 3$.
 By Lemma~\ref{lem:k>r:edge_coloring:binary:interpolation},
 we can assume that any succinct binary signature of type $\tau_2$ is available.
 
 Consider the gadget in Figure~\ref{fig:gadget:k>r:local_holographic_transformation}.
 We assign $\AD_{3,\kappa}$ to the circle vertex and $\langle 3 - \kappa, 1 \rangle$ to the square vertices.
 By Lemma~\ref{lem:compute:ternary:holographic_transformation},
 the succinct ternary signature of type $\tau_3$ for this gadget is $f = 2 (\kappa-2) \langle -(\kappa-3)(\kappa-1), 1, 1 \rangle$.
 
 Now take two edges of $\AD_{3,\kappa}$ and connect them to the two edges of $\langle 1,1 \rangle$.
 Up to a nonzero factor of $(\kappa - 1) (\kappa - 2)$,
 this gadget has the succinct unary signature $\langle 1 \rangle$ of type $\tau_1$.
 Then we are done by Corollary~\ref{cor:k>r:abb_unary_binaries}.
\end{proof}

\section{Tractable Problems} \label{sec:tractable}

In the rest of the paper,
we adapt and extend our previous proof techniques to obtain a dichotomy for $\PlHolant(\langle a,b,c \rangle)$,
where $\langle a,b,c \rangle$ is a succinct ternary signature of type $\tau_3$ on domain size $\kappa \ge 3$,
for any $a,b,c \in \mathbb{C}$.
In this section,
we show how to compute a few of these problems in polynomial time.

\subsection{Previous Dichotomy Theorem}

There is only one previous dichotomy theorem for higher domain Holant problems.
It is a dichotomy for a single symmetric ternary signature on domain size $\kappa = 3$ in the framework of Holant$^*$ problems,
which means that all unary signatures are assumed to be freely available.

In Theorem~\ref{thm:tractable:holant-star},
the notation $f ^\frown g$ denotes the signature that results from
connecting one edge incident to a vertex assigned the signature $f$ to one edge incident to a vertex assigned the signature $g$.
When $f$ and $g$ are both unary signatures,
which are represented by vectors,
then the resulting $0$-ary signature is just a scalar.

\begin{theorem}[Theorem~3.1 in~\cite{CLX13}] \label{thm:tractable:holant-star}
 Let $f$ be a symmetric ternary signature on domain size~$3$.
 Then $\Holant^*(f)$ is $\SHARPP$-hard unless $f$ is of one of the following forms,
 in which case, the problem is computable in polynomial time.
 \begin{enumerate}
  \item There exists $\alpha, \beta, \gamma \in \mathbb{C}^3$ that are mutually orthogonal
  (i.e.~$\alpha ^\frown \beta = \alpha ^\frown \gamma = \beta ^\frown \gamma = 0$)
  and \[f = \alpha^{\otimes 3} + \beta^{\otimes 3} + \gamma^{\otimes 3};\]
  \item There exists $\alpha, \beta_1, \beta_2 \in \mathbb{C}^3$
  such that $\alpha ^\frown \beta_1 = \alpha ^\frown \beta_2 = \beta_1 ^\frown \beta_1 = \beta_2 ^\frown \beta_2 = 0$ and
  \[f = \alpha^{\otimes 3} + \beta_1^{\otimes 3} + \beta_2^{\otimes 3};\]
  \item There exists $\beta, \gamma \in \mathbb{C}^3$ and $f_\beta \in (\mathbb{C}^{3})^{\otimes 3}$
  such that $\beta \ne \mathbf{0}$, $\beta ^\frown \beta = 0$, $f_\beta ^\frown \beta = \mathbf{0}$,
  and \[f = f_\beta + \beta^{\otimes 2} \otimes \gamma + \beta \otimes \gamma \otimes \beta + \gamma \otimes \beta^{\otimes 2}.\]
 \end{enumerate}
\end{theorem}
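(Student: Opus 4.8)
The plan is to reduce the whole classification to a classical piece of algebraic geometry and then treat the tractable and the hard orbits in turn. The first step is to exploit the fact that $\Holant^*$ makes all unary signatures available, so that its complexity is invariant under the action of $GL_3(\mathbb{C})$: for every invertible $T$ one has $\Holant^*(f) \equiv_T \Holant^*(T^{\otimes 3} f)$. Identifying the symmetric ternary signature $f$ with the cubic form $F_f(x_1,x_2,x_3)=\sum_{i,j,k} f_{ijk}\,x_i x_j x_k$, the $GL_3(\mathbb{C})$-orbits are exactly the linear-equivalence classes of plane cubics, and I would invoke the classical normal forms: the zero form; a triple line $\ell^3$; a double line plus a line $\ell^2 m$; three distinct concurrent lines; a triangle of three non-concurrent lines; a conic plus a tangent line; a conic plus a transverse chord; the cuspidal cubic; the nodal cubic; and the one-parameter family of smooth cubics, e.g. in Hesse form $x^3+y^3+z^3+\lambda\,xyz$. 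It then suffices to settle the complexity orbit by orbit.

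For tractability I would check that the tractable orbits are exactly those matched by the three stated forms, giving an explicit algorithm for each. After transforming Form~1 to three orthogonal directions it becomes $e_1^{\otimes 3}+e_2^{\otimes 3}+e_3^{\otimes 3}$: every vertex forces all incident edges into a single common ``color class,'' the Holant value factors over connected components, and it is polynomial-time computable even with arbitrary unaries present; this also captures the degenerate rank-$\le 2$ specializations. Forms~2 and~3 are recognition patterns for the remaining tractable orbits. In Form~2 two of the summand directions are isotropic, and a holographic transformation sending the relevant null directions to coordinate axes again turns $f$ into a combination of equality-like pieces whose weighted count is polynomial-time. Form~3, built around a nonzero isotropic $\beta$ for which the contraction of $f_\beta$ against $\beta$ vanishes, is the ``local affine''-type case: transforming $\beta$ to a standard null direction decouples one variable in a controlled way, leaving an affine / $\bmod 2$-type sum, exactly as for Boolean-domain signatures annihilated by a null vector. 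Each of these algorithms I would write out in detail; none is difficult.

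For every orbit not on the tractable list I would prove $\SHARPP$-hardness by $\mathcal{F}$-gate reductions, using the free unaries. The master move is to collapse the domain from $3$ to $2$: choose unaries supported on, or identifying, two domain values so that $f$ — possibly after first realizing an auxiliary binary gadget — restricts to a Boolean-domain signature, and then invoke the known Boolean-domain $\Holant^*$ dichotomy; alternatively, reduce from a $\SHARPP$-hard evaluation of the Tutte or chromatic polynomial, or of a weighted graph-homomorphism partition function. Concretely, for an irreducible cubic other than the Fermat form I would connect a generic unary to one input of $f$ to obtain a non-degenerate symmetric binary signature, iterate it in a recursive construction (as in this paper), and show the recurrence matrix is nonsingular with eigenvalue ratio not a root of unity, so that interpolation produces a signature set embedding a hard Boolean Holant problem; for the reducible non-tractable cubics (a triangle of lines, a conic plus a transverse chord, and so on) I would build bespoke gadgets realizing, say, a two-valued $=_3$ or a disequality $\ne_2$, and again land in a known hard class.

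The hard side is where essentially all the work lies, and two difficulties stand out. First, the smooth cubics form a continuous family parametrized by $\lambda$ (equivalently the $j$-invariant), so hardness must be proved \emph{uniformly}: the interpolation and gadget constructions succeed for all but finitely many $\lambda$ because their success conditions (nonvanishing determinants, eigenvalue ratios avoiding roots of unity) are Zariski-open, and the finitely many exceptional values — together with the singular irreducible cubics — have to be dispatched by separate, hand-built reductions. Second, carving out \emph{exactly} the tractable orbits forces a careful analysis of the boundary configurations where the orthogonality and isotropy conditions defining Forms~1--3 are nearly but not quite met; a generic perturbation of a tractable product signature is hard, so these borderline orbits absorb the bulk of the case analysis and demand the most delicate gadget constructions.
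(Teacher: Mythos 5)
The paper does not prove this statement at all: it is Theorem~3.1 of~\cite{CLX13} (Cai, Lu, and Xia), quoted verbatim and used as an external black box. There is simply no ``paper's own proof'' to compare against; the current paper's only contribution here is to specialize the theorem to domain-invariant signatures (Corollary~\ref{cor:tractable:holant-star}), which is a routine matching of the $\langle a,b,c \rangle$ parameters to the three listed forms.

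As for your proposal on its own terms: the opening observation is sound and is indeed the central tool in~\cite{CLX13} --- $\Holant^*$ complexity is invariant under arbitrary $GL_3(\mathbb{C})$ holographic transformations because the free unaries transform contragrediently, so the classification does reduce to orbits of symmetric ternary tensors, i.e.~of cubic forms. What you have written beyond that, however, is a research plan rather than a proof. The heaviest lifting in~\cite{CLX13} is precisely the part you defer: showing hardness \emph{uniformly} across the one-parameter family of smooth cubics, and ruling out interpolation failures at the root-of-unity exceptional points (where Zariski-openness alone does not give you a finite, verifiable list). Your sketch acknowledges these as ``the bulk of the case analysis'' but gives no mechanism for resolving them, which is exactly where~\cite{CLX13} spends dozens of pages on explicit polynomial conditions and lattice/cyclotomic arguments. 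There is also a small technical wrinkle in the orbit-classification framing: the Hesse-form smooth cubics are not a finite orbit list, and the three tractable forms in the theorem statement are not literally orbits (Form~3 describes a constraint on contractions, not a normal form), so the clean ``orbit-by-orbit'' dichotomy you propose would need to be reformulated before it could be executed. In short, the proposal identifies the right invariance and the right classical geometry, but it does not yet constitute a proof, and it cannot be compared to this paper's argument because this paper offers none.
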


Some domain invariant signatures are tractable by Theorem~\ref{thm:tractable:holant-star}.

\begin{corollary} \label{cor:tractable:holant-star}
 Suppose the domain size is~$3$ and $a,b,\lambda \in \mathbb{C}$.
 Let $f$ be a succinct ternary signature of type $\tau_3$.
 Then $\Holant(f)$ is computable in polynomial time when $f$ has one of the following forms:
 \begin{enumerate}
  \item $f = \hphantom{3} \lambda \langle 1, 0,  0 \rangle \hphantom{-}
  = \lambda \left[(1,0, 0)^{\otimes 3} + (0, 1,0)^{\otimes 3} + ( 0,0,1)^{\otimes 3}\right]$; \label{case:cor:tractable:holant-star:equality}
  \item $f =           3  \lambda \langle -5, -2, 4 \rangle
  = \lambda \left[(1,-2,-2)^{\otimes 3} + (-2,1,-2)^{\otimes 3} + (-2,-2,1)^{\otimes 3}\right]$; \label{case:cor:tractable:holant-star:-5-24}
  \item $f = \hphantom{3 \lambda} \langle a, b,  a \rangle \hphantom{-}
  = \hphantom{\lambda \big[}\frac{a + 2 b}{3}  (1,1, 1)^{\otimes 3} + \frac{a - b}{3} \left[(1, \omega, \omega^2)^{\otimes 3} + (1, \omega^2, \omega)^{\otimes 3}\right]$, \label{case:cor:tractable:holant-star:aba}
 \end{enumerate}
 where $\omega$ is a primitive third root of unity.
\end{corollary}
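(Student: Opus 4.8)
The plan is, for each of the three cases, to take the explicit symmetric tensor decomposition of $f$ already displayed in the statement, verify it by evaluating both sides on the three orbits $P_1,P_2,P_3$ of $\tau_3$, and then observe that this decomposition places $f$ into one of the tractable forms of Theorem~\ref{thm:tractable:holant-star}. Since making all unary signatures freely available only makes the problem harder, we have $\Holant(f)\le_T\Holant^*(f)$, so each verification yields a polynomial-time algorithm for $\Holant(f)$. Throughout, the pairing $\alpha ^\frown \beta$ on $\mathbb{C}^3$ is the \emph{bilinear} form $\sum_i\alpha_i\beta_i$ (no complex conjugation), as it arises from connecting two unary signatures.

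Before the case analysis I would record the scaling bookkeeping common to all cases: over $\mathbb{C}$ every scalar has a cube root, so a term $\mu\, v^{\otimes 3}$ equals $(\mu^{1/3}v)^{\otimes 3}$, and rescaling a vector changes none of the $^\frown$-orthogonality or $^\frown$-isotropy relations among a family of vectors. Hence it is enough to check those relations for the unscaled vectors appearing in the displayed decompositions; a vanishing coefficient merely deletes the corresponding summand (in the extreme, leaving a degenerate signature, which is tractable regardless).

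For case~\ref{case:cor:tractable:holant-star:equality}, $f$ is $\lambda$ times the arity-$3$ equality signature, whose decomposition $e_1^{\otimes 3}+e_2^{\otimes 3}+e_3^{\otimes 3}$ over the standard basis is immediate, and the $e_i$ are pairwise $^\frown$-orthogonal, so we are in form~(1) of Theorem~\ref{thm:tractable:holant-star}. For case~\ref{case:cor:tractable:holant-star:-5-24}, evaluating $(1,-2,-2)^{\otimes 3}+(-2,1,-2)^{\otimes 3}+(-2,-2,1)^{\otimes 3}$ on $P_1,P_2,P_3$ gives the succinct signature $\langle -15,-6,12\rangle$, so multiplying by $\lambda$ gives $3\lambda\langle -5,-2,4\rangle=f$; and $(1,-2,-2) ^\frown (-2,1,-2)=-2-2+4=0$, so by the symmetry of the three vectors they are pairwise $^\frown$-orthogonal and we are again in form~(1). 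For case~\ref{case:cor:tractable:holant-star:aba}, evaluating $\frac{a+2b}{3}(1,1,1)^{\otimes 3}+\frac{a-b}{3}\bigl[(1,\omega,\omega^2)^{\otimes 3}+(1,\omega^2,\omega)^{\otimes 3}\bigr]$ on $P_1,P_2,P_3$ and using $1+\omega+\omega^2=0$ and $\omega^3=1$ yields $\langle a,b,a\rangle=f$; with $\alpha=(1,1,1)$, $\beta_1=(1,\omega,\omega^2)$, $\beta_2=(1,\omega^2,\omega)$ one has $\alpha ^\frown \beta_1=\alpha ^\frown \beta_2=1+\omega+\omega^2=0$ and $\beta_1 ^\frown \beta_1=\beta_2 ^\frown \beta_2=1+\omega^2+\omega^4=0$, so after absorbing a cube root of $\frac{a+2b}{3}$ into $\alpha$ and of $\frac{a-b}{3}$ into $\beta_1,\beta_2$ this is form~(2) of Theorem~\ref{thm:tractable:holant-star}.

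These are all routine evaluations, so there is no genuine obstacle; the only points needing attention are that the pairing is the non-Hermitian bilinear form (so the relevant isotropy identity is $1+\omega^2+\omega^4=0$, not a statement about complex norms), and that case~\ref{case:cor:tractable:holant-star:aba} falls under the \emph{isotropic} form~(2) of Theorem~\ref{thm:tractable:holant-star} rather than the mutually-orthogonal form~(1), since $\beta_1 ^\frown \beta_2=1+\omega^3+\omega^3=3\ne 0$.
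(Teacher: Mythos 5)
Your proof is correct and is essentially the argument the paper intends, since the paper leaves the verification implicit and the decompositions are already displayed in the statement. You correctly reduce $\Holant(f)\le_T\Holant^*(f)$, verify the displayed tensor decompositions by direct evaluation on representatives of $P_1,P_2,P_3$, and check that cases~(1) and~(2) of the corollary satisfy the mutual $^\frown$-orthogonality of form~1 of Theorem~\ref{thm:tractable:holant-star} while case~(3) satisfies the isotropic conditions of form~2. Your explicit note that $\beta_1^\frown\beta_2=3\ne 0$ (so case~(3) genuinely needs form~2 and not form~1), together with the observation that the pairing $^\frown$ is bilinear rather than Hermitian, is the right care to take here, and your handling of the degenerate subcases via vanishing coefficients is also fine.
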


In Corollary~\ref{cor:tractable:holant-star},
form~\ref{case:cor:tractable:holant-star:equality} is the ternary equality signature $=_3$,
which is trivially tractable for any domain size.
Then form~\ref{case:cor:tractable:holant-star:-5-24} is just form~\ref{case:cor:tractable:holant-star:equality}
after a holographic transformation by the matrix $T = \left[\begin{smallmatrix*}[r] 1 & -2 & -2 \\ -2 & 1 & -2 \\ -2 & -2 & 1 \end{smallmatrix*}\right]$,
which is orthogonal after scaling by $\frac{1}{3}$.
This is an example of two problems that must have the same complexity by Theorem~\ref{thm:ortho_holo_trans}.

The tractability of these two problems for higher domain sizes is stated in the following corollary.

\begin{corollary} \label{cor:tractable:100}
 Suppose $\kappa \ge 3$ is the domain size and $\lambda \in \mathbb{C}$.
 Let $f$ be a succinct ternary signature of type $\tau_3$.
 Then $\Holant(f)$ is computable in polynomial time if $f$ has one of the following forms:
 \begin{enumerate}
  \item $f = \lambda               \langle 1,0,0 \rangle$; \label{case:cor:tractable:equality:equality}
  \item $f = \lambda T^{\otimes 3} \langle 1,0,0 \rangle = \lambda \kappa \langle \kappa^2 - 6 \kappa + 4, -2 (\kappa - 2), 4 \rangle$, \label{case:cor:tractable:equality:trans}
 \end{enumerate}
 where $T = \kappa I_\kappa - 2 J_\kappa$.
\end{corollary}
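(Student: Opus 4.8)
The plan is to treat the two forms separately, reducing each to an evident base case and then verifying the explicit succinct signature. For Form~\ref{case:cor:tractable:equality:equality}, the succinct signature $\langle 1,0,0\rangle$ of type $\tau_3$ is exactly the ternary equality signature $=_3$, and since replacing a signature by a nonzero scalar multiple does not change the complexity, it suffices to show $\Holant(=_3)$ is tractable. This is immediate: in any signature grid $\Omega=(G,\pi)$, every vertex forces its three incident edges to receive a common value, so an edge labeling contributes $1$ if it is constant on each connected component of $G$ and $0$ otherwise; thus $\Holant(\Omega;\,=_3)=\kappa^{k(G)}$, where $k(G)$ is the number of connected components, which is polynomial-time computable.

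For Form~\ref{case:cor:tractable:equality:trans}, first I would verify that $T=\kappa I_\kappa-2J_\kappa$ is an orthogonal transformation up to a scalar. Since $T$ is symmetric and $J_\kappa^2=\kappa J_\kappa$, we compute $T\transpose{T}=(\kappa I_\kappa-2J_\kappa)^2=\kappa^2 I_\kappa-4\kappa J_\kappa+4\kappa J_\kappa=\kappa^2 I_\kappa$. By the remark following Theorem~\ref{thm:ortho_holo_trans}, a transformation $T$ with $T\transpose{T}=\mu I$ for $\mu\ne0$ may be applied freely in the standard (non-bipartite) setting, so $\Holant\bigl(T^{\otimes 3}\langle 1,0,0\rangle\bigr)\equiv_T\Holant(\langle 1,0,0\rangle)$, and tractability follows from Form~\ref{case:cor:tractable:equality:equality}; the extra scalars $\mu\ne0$ and $\lambda\ne0$ are immaterial.

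It remains to check the explicit formula. Writing $=_3=\sum_{i=1}^{\kappa}e_i^{\otimes 3}$ in the standard basis, we have $T^{\otimes 3}\langle 1,0,0\rangle=\sum_{i=1}^{\kappa}(Te_i)^{\otimes 3}$ with $Te_i=\kappa e_i-2\mathbf{1}$, where $\mathbf{1}$ is the all-ones vector, so the transformed signature is
\[
 g(x,y,z)=\sum_{i=1}^{\kappa}\bigl(\kappa[i=x]-2\bigr)\bigl(\kappa[i=y]-2\bigr)\bigl(\kappa[i=z]-2\bigr).
\]
This $g$ is visibly invariant under permutations of the three inputs and under permutations of $[\kappa]$, hence a succinct signature of type $\tau_3$; evaluating the three cases---$|\{x,y,z\}|=1$, $|\{x,y,z\}|=2$, and $|\{x,y,z\}|=3$---gives the values $\kappa^3-6\kappa^2+4\kappa$, $-2\kappa(\kappa-2)$, and $4\kappa$ respectively, so $g=\kappa\langle \kappa^2-6\kappa+4,\,-2(\kappa-2),\,4\rangle$ and, together with the factor $\lambda$, the claimed expression. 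The argument is essentially bookkeeping; the only point I would state carefully is the appeal to Theorem~\ref{thm:ortho_holo_trans}, which is legitimate precisely because $T\transpose{T}$ is a nonzero multiple of $I_\kappa$, so $T$ preserves the binary equality up to a scalar.
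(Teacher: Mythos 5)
Your proof is correct and follows essentially the paper's intended argument: Form~1 is the trivial tractability of $=_3$ (the Holant is $\kappa^{k(G)}$), and Form~2 follows from the fact that $T = \kappa I_\kappa - 2 J_\kappa$ satisfies $T\transpose{T} = \kappa^2 I_\kappa$, so by Theorem~\ref{thm:ortho_holo_trans} the transformation preserves complexity. Your explicit computation of the three values $\kappa(\kappa^2-6\kappa+4)$, $-2\kappa(\kappa-2)$, and $4\kappa$ for the transformed signature also checks out.
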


Note that $T = \kappa I_\kappa - 2 J_\kappa$ is an orthogonal matrix after scaling by $\frac{1}{\kappa}$.

\subsection{Affine Signatures}

Let $\omega$ be a primitive third root of unity.
Consider the ternary signature $f(x,y,z)$ with succinct ternary signature $\langle 1,0,c \rangle$ of type $\tau_3$ on domain size~$3$,
where $c^3 = 1$.
Its support is an affine subspace of $\mathbb{Z}_3$ defined by $x + y + z = 0$.
Furthermore, consider the quadratic polynomial $q_c(x, y, z) = \lambda_c (x y + x z + y z)$,
where $\lambda_1 = 0$, $\lambda_\omega = 2$, and $\lambda_{\omega^2} = 1$.
Then $\omega^{q_c(x, y, z)}$ agrees with $f$ when $x + y + z = 0$.
This function $f$ is an example of a ternary domain affine signature.

\begin{definition}
 A $k$-ary function $f(x_1, \dotsc, x_k)$ is \emph{affine} on domain size~$3$ if it has the form
 \[
  \lambda \cdot \chi_{A x = 0} \cdot e^{\frac{2 \pi i}{3} q(x)},
 \]
 where $\lambda \in \mathbb{C}$,
 $x = \transpose{(x_1, x_2, \dotsc, x_k, 1)}$,
 $A$ is a matrix over $\Z_3$,
 $q(x) \in \Z_3$ is a quadratic polynomial,
 and $\chi$ is a 0-1 indicator function such that $\chi_{A x = 0}$ is~$1$ iff $A x = 0$.
 We use $\mathscr{A}$ to denote the set of all affine functions.
\end{definition}

Like the Boolean domain affine signatures~\cite{CCLL10},
the ternary domain affine signatures are tractable.

\begin{lemma} \label{lem:tractable:k=3_affine}
 Suppose the domain size is~$3$.
 Then $\Holant(\mathscr{A})$ is computable in polynomial time.
\end{lemma}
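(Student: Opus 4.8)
The plan is to reduce $\Holant(\Omega)$, for a signature grid $\Omega$ over $\mathscr{A}$, to the evaluation of a single normalized quadratic Gauss sum over $\Z_3$, and then to show that such a sum is computable in polynomial time.

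First I would aggregate the local affine signatures into one global affine form. Let $\omega = e^{2\pi i/3}$ and introduce a variable $x_e \in \Z_3$ for each edge $e \in E$. Writing each $f_v = \lambda_v\,\chi_{A_v x = 0}\,\omega^{q_v(x)}$ in terms of the edge variables incident to $v$, the product $\prod_v f_v(\sigma \mid_{E(v)})$ equals $\bigl(\prod_v \lambda_v\bigr)\,\chi_{Ax = 0}\,\omega^{Q(x)}$, where $Ax = 0$ is the conjunction of all the local affine constraints (a single affine system over $\Z_3$ in the variables $\{x_e\}_{e\in E}$) and $Q = \sum_v q_v$ is a single quadratic polynomial over $\Z_3$ in $\{x_e\}$; both have size polynomial in the size of $\Omega$. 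Hence
\[
 \Holant(\Omega) = \Bigl(\prod_v \lambda_v\Bigr) \sum_{x\,:\,Ax=0} \omega^{Q(x)}.
\]
By Gaussian elimination over $\Z_3$ (polynomial time), either $Ax=0$ is infeasible, whence the sum and $\Holant(\Omega)$ are $0$, or I obtain a parametrization $x = x_0 + U t$ of its solution set by $t \in \Z_3^d$. Substituting yields $\Holant(\Omega) = \bigl(\prod_v \lambda_v\bigr)\sum_{t\in\Z_3^d}\omega^{\widetilde Q(t)}$, where $\widetilde Q(t) = Q(x_0+Ut)$ is a quadratic polynomial whose coefficients I compute explicitly.

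It then remains to evaluate $S := \sum_{t\in\Z_3^d}\omega^{\widetilde Q(t)}$. Represent the quadratic part of $\widetilde Q$ by a symmetric matrix over $\Z_3$; since $3$ is odd, I can diagonalize it by an invertible $\Z_3$-linear change of variables, using the identity $t_i t_j = (t_i+t_j)^2 - (t_i-t_j)^2$ (valid mod $3$, since $4 \equiv 1$) to manufacture a square term whenever only mixed terms remain, then splitting it off Gram--Schmidt style. An invertible substitution is a bijection of $\Z_3^d$, so it leaves $S$ unchanged, and in the new coordinates $\widetilde Q$ separates as $c_0 + \sum_{i=1}^d (c_i t_i^2 + b_i t_i)$. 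Therefore
\[
 S = \omega^{c_0}\prod_{i=1}^d\Bigl(\sum_{s\in\Z_3}\omega^{c_i s^2 + b_i s}\Bigr),
\]
a product of $d$ one-variable sums, each of which is one of finitely many explicit values ($0$ or $3$ when $c_i = 0$; otherwise a cube root of unity times the quadratic Gauss sum $\sum_{s}\omega^{c_i s^2} \in \{1+2\omega,\,1+2\omega^2\}$ when $c_i \ne 0$). Multiplying by $\prod_v \lambda_v$ gives $\Holant(\Omega)$, all in polynomial time.

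The only mildly technical point is the diagonalization step: one must check that a quadratic polynomial over $\Z_3$ can always be brought, after an invertible linear substitution, to the separated form above — this follows from the $t_i t_j$ identity together with induction on the number of variables, peeling off one square term at a time and recursing on the orthogonal complement, and handling any residual coordinates on which the quadratic part vanishes coordinatewise as in the $c_i=0$ case — and that the running parametrization and quadratic form stay of polynomial size throughout. Everything else is routine linear algebra over the fixed finite field $\Z_3$.
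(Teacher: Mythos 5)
Your proposal is correct and follows the same overall plan as the paper's proof: aggregate the local affine signatures into a single global one $\chi_{Ax=0}\,\omega^{Q(x)}$ (using closure of $\mathscr{A}$ under multiplication), use Gaussian elimination over $\Z_3$ to detect infeasibility or parametrize the solution set by free variables, substitute to reduce to an unconstrained quadratic exponential sum over $\Z_3^d$, and evaluate that sum in polynomial time. The one place you diverge is the final step: the paper simply cites Theorem~1 of~\cite{CLX14} for polynomial-time evaluation of $\sum_{t\in\Z_3^d}\omega^{\widetilde Q(t)}$, whereas you prove it from scratch by diagonalizing the quadratic form over $\Z_3$ (using $t_it_j \equiv (t_i+t_j)^2-(t_i-t_j)^2 \pmod 3$ and odd characteristic to peel off square terms), reducing to a product of $d$ univariate Gauss sums, and evaluating each explicitly as $0$, $3$, or a root of unity times $1+2\omega$ or $1+2\omega^2$. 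This is a reasonable trade-off: the paper's route is shorter and stays modular by delegating the exponential-sum evaluation to a prior result, while yours is self-contained and elementary, at the cost of having to carefully argue the inductive diagonalization (including the degenerate coordinates where the quadratic part vanishes) and bound the sizes of the intermediate parametrizations. Both are valid, and your diagonalization argument is essentially the standard proof of the result the paper cites.
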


\begin{proof}
 Given an instance of $\Holant(\mathscr{A})$,
 the output can be expressed as the summation of a single function $F = \chi_{Ax = 0} \cdot e^{\frac{2 \pi i}{3} q(x_1, x_2, \dotsc, x_k)}$
 since $\mathscr{A}$ is closed under multiplication.
 In polynomial time,
 we can solve the linear system $A x = 0$ over $\Z_3$ and decide if it is feasible.
 If the linear system is infeasible,
 then the function is the identically~$0$ function,
 so the output is just~$0$.
 
 Otherwise, the linear system is feasible (including possibly vacuous).
 Without loss of generality,
 we can assume that $y_1, y_2, \dotsc, y_s$ are independent variables over $\Z_3$ while all others are dependent variables,
 where $0 \le s \le k$.
 Each dependent variable can be expressed by an affine linear form of $y_1, y_2, \dotsc, y_s$.
 We can substitute for all of the dependent variables in $q(x_1, x_2, \dotsc, x_k)$,
 which gives a new quadratic polynomial $q'(y_1, y_2, \dotsc, y_s)$.
 Thus, we have
 \begin{equation} \label{eqn:tractable:affine:remove_support}
  \sum_{x_1, \dotsc, x_k \in \Z_3} \chi_{Ax = 0} \cdot e^{\frac{2 \pi i}{3} q (x_1, x_2, \dotsc, x_k)}
  =
  \sum_{y_1, \dotsc, y_s \in \Z_3}                     e^{\frac{2 \pi i}{3} q'(y_1, y_2, \dotsc, y_s)}.
 \end{equation}
 Then the right-hand side of~(\ref{eqn:tractable:affine:remove_support}) is computable in polynomial time by Theorem~1 in~\cite{CLX14}.
\end{proof}

After multiplying the function $\langle 1,0,c \rangle$ by a scalar,
we obtain the succinct ternary signature $\langle a,0,c \rangle$ of type $\tau_3$ such that $a^3 = c^3$.
Since undergoing an orthogonal transformation does not change the complexity of the problem by Theorem~\ref{thm:ortho_holo_trans},
we obtain the following corollary of the previous result.

\begin{corollary} \label{cor:tractable:k=3_a^3=c^3}
 Suppose the domain size is~$3$ and $a,c \in \mathbb{C}$.
 Let $T \in \mathbf{O}_3(\mathbb{C})$ and let $\langle a,0,c \rangle$ be a succinct ternary signature of type $\tau_3$.
 If $a^3 = c^3$,
 then $\Holant(T^{\otimes 3} \langle a,0,c \rangle)$ is computable in polynomial time.
\end{corollary}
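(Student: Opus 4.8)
The plan is to peel off, one at a time, the orthogonal transformation and the scalar, and then recognize the remaining signature as affine. First I would observe that $T \in \mathbf{O}_3(\mathbb{C})$ means $T\transpose{T} = I_3$, so Theorem~\ref{thm:ortho_holo_trans}, applied with $\mathcal{F} = \{\langle a,0,c\rangle\}$ (whose single signature has arity~$3$, so that $T\mathcal{F} = \{T^{\otimes 3}\langle a,0,c\rangle\}$), shows that $\Holant(T^{\otimes 3}\langle a,0,c\rangle)$ and $\Holant(\langle a,0,c\rangle)$ are the same counting problem. Hence it suffices to prove that $\Holant(\langle a,0,c\rangle)$ is computable in polynomial time whenever $a^3 = c^3$.

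Next I would dispose of the degenerate case: if $a = 0$, then $c^3 = a^3 = 0$ forces $c = 0$, so $\langle a,0,c\rangle$ is the identically-zero signature and the problem is trivial. So assume $a \ne 0$. Since replacing a signature by a nonzero constant multiple does not change the complexity of the Holant problem, I would pass to $\langle 1, 0, c'\rangle$ with $c' = c/a$, which satisfies $(c')^3 = c^3/a^3 = 1$; thus $c' \in \{1, \omega, \omega^2\}$, where $\omega$ is a primitive cube root of unity.

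Finally I would invoke the observation recorded just before Lemma~\ref{lem:tractable:k=3_affine}: when $(c')^3 = 1$, the succinct ternary signature $\langle 1, 0, c'\rangle$ of type $\tau_3$ on domain size~$3$ is affine. The verification is a short finite check over $(x,y,z) \in \Z_3^3$: the nonzero values of $\langle 1,0,c'\rangle$ occur exactly on the inputs with $x+y+z = 0$ (when all three coordinates are equal, or all three are distinct; in the two-equal case $x+y+z \equiv 2x+z \not\equiv 0 \pmod 3$), and on this affine subspace the value agrees with $e^{\frac{2\pi i}{3} q_{c'}(x,y,z)}$ for the quadratic form $q_{c'} = \lambda_{c'}(xy+xz+yz)$, where $\lambda_1 = 0$, $\lambda_\omega = 2$, $\lambda_{\omega^2} = 1$. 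Therefore $\langle 1,0,c'\rangle \in \mathscr{A}$, so $\Holant(\langle 1,0,c'\rangle)$ is a special case of $\Holant(\mathscr{A})$ and is computable in polynomial time by Lemma~\ref{lem:tractable:k=3_affine}. Composing the three reductions yields the corollary.

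I do not expect any real obstacle. All the work is carried by Theorem~\ref{thm:ortho_holo_trans} and by Lemma~\ref{lem:tractable:k=3_affine} (which itself rests on Theorem~1 in~\cite{CLX14} for evaluating the resulting exponential sum over $\Z_3$); the only hand computation is the membership $\langle 1,0,c'\rangle \in \mathscr{A}$, which is exactly the calculation already sketched in the paragraph preceding Lemma~\ref{lem:tractable:k=3_affine}.
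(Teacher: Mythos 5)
Your proof is correct and follows essentially the same route as the paper: apply the orthogonal holographic transformation (Theorem~\ref{thm:ortho_holo_trans}), scale $\langle a,0,c\rangle$ to $\langle 1,0,c/a\rangle$ with $(c/a)^3 = 1$, and then invoke the affine tractability established just before and in Lemma~\ref{lem:tractable:k=3_affine}. The only addition beyond what the paper writes out is your explicit treatment of the trivial $a = 0$ case, which is a harmless bit of thoroughness.
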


For domain size~$3$,
the only orthogonal matrix $T$ such that $T^{\otimes 3} \langle a,b,c \rangle$
is still a succinct ternary signature of type $\tau_3$ is
$\pm \frac{1}{3} \left[\begin{smallmatrix*}[r] 1 & -2 & -2 \\ -2 & 1 & -2 \\ -2 & -2 & 1 \end{smallmatrix*}\right]$.
However, the tractability in Corollary~\ref{cor:tractable:k=3_a^3=c^3} holds for any orthogonal matrix $T$.

We introduce another affine signature.
It can be considered as a signature of arity~$4$ on the Boolean domain.
When placed in a planar signature grid,
its input variables are listed in a cyclic order $x_1, x_2, y_2, y_1$ counterclockwise.
We then consider it as a binary signature on domain size~$4$,
where the two variables $(x_1, x_2)$ and $(y_1, y_2)$ range over the four values in $\{0,1\}^2$.
Notice the reversal of the order $y_2, y_1$.
This is to allow a planar connection between these signatures.
We list its values as the matrix
$H_4 = \left[\begin{smallmatrix*}[r] 1 & -1 & -1 & -1 \\ -1 & 1 & -1 & -1 \\ -1 & -1 & 1 & -1 \\ -1 & -1 & -1 & 1 \end{smallmatrix*}\right]$,
which is an Hadamard matrix,
where the row index is $(x_1, x_2)$ and the column index is $(y_1, y_2)$,
both ordered lexicographically.
A closed form expression showing that this is an affine signature on the Boolean domain is
$f(x_1, x_2, y_2, y_1) = (-1)^{q(x_1, x_2, y_1, y_2)}$,
where $q$ is the quadratic polynomial
\begin{equation} \label{equ:prelim:affine:quad_poly}
 q(x_1, x_2, y_1, y_2) = x_1 + x_2 + x_1 x_2 + y_1 + y_2 + y_1 y_2 +x_1 y_2 + x_2 y_1.
\end{equation}

As a binary signature on domain size~$4$,
$f$ has the succinct signature $\langle 1, -1 \rangle$ of type $\tau_2$.
The fact that $f$ is an affine signature on the Boolean domain shows that the Holant problem defined by $f$ on domain size~$4$ is tractable.
This follows from Theorem~1.4 in~\cite{CLX14},
or the more general graph homomorphism dichotomy theorems~\cite{GGJT10,CCL13}.

We are interested in this problem because its tractability implies the tractability of a set of problems defined by a succinct ternary signature of type $\tau_3$.

\begin{lemma} \label{lem:tractable:u^21u}
 Suppose the domain size is~$4$ and $\lambda, \mu \in \mathbb{C}$.
 Let $\langle \mu^2, 1, \mu \rangle$ be a succinct ternary signature of type $\tau_3$.
 If $\mu = -1 + \varepsilon 2 i$ with $\varepsilon = \pm 1$,
 then $\Holant(\lambda \langle \mu^2, 1, \mu \rangle)$ is computable in polynomial time.
\end{lemma}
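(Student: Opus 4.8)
The plan is to exhibit $\lambda\langle \mu^2,1,\mu\rangle$ as a holographic transformation of the ternary equality signature $=_3$ and thereby reduce $\Holant(\lambda\langle \mu^2,1,\mu\rangle)$ to a tractable graph homomorphism problem on domain size~$4$. Since a nonzero constant multiple does not change the complexity, I would assume $\lambda = 1$ (the case $\lambda=0$ being trivial), and use the $2$-stretch to rewrite $\Holant(\langle \mu^2,1,\mu\rangle)$ as $\Holant(\langle \mu^2,1,\mu\rangle \mid {=_2})$, where $=_2$ has signature matrix $I_4$. Write $\mu = -1 + \varepsilon 2 i$ with $\varepsilon = \pm 1$, put $s := -1 - \varepsilon 2 i$, and set $T = \alpha I_4 + J_4$ with $\alpha := s - 1 = -2 - \varepsilon 2 i$. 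This $T$ is symmetric, and its eigenvalues are $\alpha$ (on $\mathbf{1}^{\perp}$, of multiplicity~$3$) and $\alpha + 4 = 2 - \varepsilon 2 i$ (on $\mathbf{1}$), both nonzero, so $T$ is invertible.

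The computational heart of the proof is a pair of identities, both consequences of the single relation $\alpha^2 + 4\alpha + 8 = 0$, equivalently $s^2 + 2 s + 5 = 0$, which is precisely the equation defining $\mu = -1 + \varepsilon 2 i$. First, $T^{\otimes 3}({=_3}) = \sum_{w=1}^{4}(T e_w)^{\otimes 3} = \sum_{w=1}^{4}(\alpha e_w + \mathbf{1})^{\otimes 3}$ is domain invariant, and a direct count over the three parts of $\tau_3$ gives its succinct signature as $\langle s^3 + 3,\ s^2 + s + 2,\ 3 s + 1\rangle$; reducing modulo $s^2 + 2 s + 5$ turns this into $\langle 13 - s,\ -(s+3),\ 3 s + 1\rangle$, which one checks (again using $s^2 = -2 s - 5$) is proportional to $\langle \mu^2, 1, \mu\rangle$ with $\mu = -(3 s + 1)/(s + 3) = -1 + \varepsilon 2 i$. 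Second, $T \transpose{T} = T^2 = \alpha^2 I_4 + (2\alpha + 4) J_4 = \varepsilon 4 i\,(2 I_4 - J_4) = \varepsilon 4 i\, H_4$, where $H_4$ is the Hadamard matrix, i.e.\ the succinct binary signature $\langle 1,-1\rangle$ of type $\tau_2$ on domain size~$4$.

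With these in hand, I would invoke the Holant Theorem (generalized to domain size~$4$): transforming $\Holant(\langle \mu^2,1,\mu\rangle \mid {=_2})$ by $R = T^{-1}$ sends the left-hand signature $\langle \mu^2,1,\mu\rangle \propto T^{\otimes 3}({=_3})$ to a nonzero multiple of $=_3$, and sends $=_2$ (with matrix $I_4$) to the binary signature with matrix $T\transpose{T} = \varepsilon 4 i\, H_4$. Hence, up to nonzero scalars, $\Holant(\langle \mu^2,1,\mu\rangle) \equiv_T \Holant({=_3} \mid H_4)$. Un-stretching, an instance of $\Holant({=_3}\mid H_4)$ is exactly (up to a global factor $c^{|E(G)|}$) the graph homomorphism count $\hom(G, H_4)$ over a $3$-regular graph $G$, and this is computable in polynomial time: under the identification $[4] = \{0,1\}^2$, the binary signature $H_4$ is the Boolean affine signature $(-1)^{q}$ of~$(\ref{equ:prelim:affine:quad_poly})$ and $=_3$ on $[4]$ decomposes into two independent Boolean copies of $=_3$ (both affine), so the whole problem is an all-affine Boolean Holant problem, tractable by Theorem~1.4 of~\cite{CLX14}; alternatively, tractability follows from the graph homomorphism dichotomy theorems~\cite{GGJT10, CCL13}. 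Therefore $\Holant(\lambda\langle \mu^2,1,\mu\rangle)$ is in polynomial time.

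The main obstacle is the verification that the \emph{same} transformation $T$ simultaneously (a) carries $=_3$ to $\langle \mu^2,1,\mu\rangle$ up to a scalar and (b) carries the edge signature $=_2$ to a scalar multiple of the tractable $H_4$; the apparent coincidence that makes both succeed is exactly that $\mu = -1 \pm 2 i$ are the two roots of $s^2 + 2 s + 5$. Beyond that, the remaining care is bookkeeping: getting the direction of the holographic transformation right (transforming by $R = T^{-1}$, not $T$), and confirming that the transformed binary equality is a genuine nonzero scalar multiple of $H_4$ so that the non-bipartite setting is recovered and the cited tractability applies verbatim.
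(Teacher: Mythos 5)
Your proof is correct and follows essentially the same route as the paper: a holographic transformation by a matrix of the form $\alpha I_4 + J_4$ that simultaneously sends $=_3$ to a scalar multiple of $\langle\mu^2,1,\mu\rangle$ and sends $=_2$ to a scalar multiple of the Hadamard matrix $H_4$, then invoking a graph-homomorphism/affine tractability result. Your $T$ differs from the paper's only by a nonzero scalar (yours is $-2$ times theirs) and you apply the Holant theorem in the opposite direction with the bipartite sides swapped, which cancels out; your explicit verification of $T^{\otimes 3}({=_3}) \propto \langle\mu^2,1,\mu\rangle$ via reduction modulo $s^2+2s+5$ is a nice addition to what the paper leaves implicit.
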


\begin{proof}
 Let $T = \left[\begin{smallmatrix*}[r] x & y & y & y \\ y & x & y & y \\ y & y & x & y \\ y & y & y & x \end{smallmatrix*}\right]$,
 where $x = -\frac{3 + \varepsilon i}{2}$ and $y = \frac{1 - \varepsilon i}{2}$.
 Then up to a factor of $\lambda^n$ on graphs with $n$ vertices,
 the output of $\Holant(\lambda \langle \mu^2, 1, \mu \rangle)$ is the same as the output for
 \begin{align*}
  \Holant(\langle \mu^2, 1, \mu \rangle)
  &= \Holant(\langle -3 - \varepsilon 4 i, 1, -1 + \varepsilon 2 i \rangle)\\
  &\equiv_T \holant{{=}_2}{T^{\otimes 3} ({=}_3)}\\
  &= \holant{({=}_2) T^{\otimes 2}}{{=}_3}\\
  &= \holant{2 \langle 1,-1 \rangle}{{=}_3}\\
  &\le_T \holant{\langle 1,-1 \rangle}{\{=_k \st k \in \Z^+\}}.
 \end{align*}
 Since $\holant{\langle 1,-1 \rangle}{\{=_k \st k \in \Z^+\}}$ is the Holant expression
 for the graph homomorphism problem defined by the Hadamard matrix
 $\left[\begin{smallmatrix*}[r] 1 & -1 & -1 & -1 \\ -1 & 1 & -1 & -1 \\ -1 & -1 & 1 & -1 \\ -1 & -1 & -1 & 1 \end{smallmatrix*}\right]$,
 we can finish the proof by applying the dichotomy theorems for symmetric matrices in~\cite{GGJT10,CCL13}.
 For example,
 this problem is tractable by Theorem~1.2 in~\cite{GGJT10} (see also~\cite{CLX14}),
 where the quadratic representation is $q(x_1, x_2, y_1, y_2)$ from~(\ref{equ:prelim:affine:quad_poly}).
\end{proof}

We restate this lemma as a simple corollary for later convenience.

\begin{corollary} \label{cor:tractable:u^21u_a+5b+2c=5b^2+2bc+c^2=0}
 Suppose the domain size is~$4$ and $a,b,c \in \mathbb{C}$.
 Let $\langle a,b,c \rangle$ be a succinct ternary signature of type $\tau_3$.
 If $a + 5 b + 2 c = 0$ and $5 b^2 + 2 b c + c^2 = 0$,
 then $\Holant(\langle a,b,c \rangle)$ is computable in polynomial time.
\end{corollary}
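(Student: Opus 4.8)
The plan is to show that the two algebraic constraints on $(a,b,c)$ force $\langle a,b,c \rangle$ to be a nonzero scalar multiple of a signature of the form $\langle \mu^2, 1, \mu \rangle$ with $\mu = -1 + \varepsilon 2 i$, $\varepsilon = \pm 1$, and then simply invoke Lemma~\ref{lem:tractable:u^21u}. So the whole statement reduces to an elementary algebraic normalization.

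First I would dispose of the degenerate case $b = 0$. Substituting into $5 b^2 + 2 b c + c^2 = 0$ gives $c^2 = 0$, hence $c = 0$, and then $a + 5 b + 2 c = 0$ gives $a = 0$. Thus $\langle a,b,c \rangle$ is the identically zero ternary signature, and $\Holant(\langle 0,0,0 \rangle)$ is trivially computable in polynomial time (the Holant value is $0$ on any nonempty signature grid). Now assume $b \ne 0$. Since multiplying a signature by a nonzero constant does not change the complexity of the corresponding Holant problem, it suffices to analyze $\langle a/b, 1, c/b \rangle$. Setting $m = c/b$, the condition $5 b^2 + 2 b c + c^2 = 0$ becomes $m^2 + 2 m + 5 = 0$, whose roots are exactly $m = -1 + \varepsilon 2 i$ for $\varepsilon = \pm 1$. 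The condition $a + 5 b + 2 c = 0$ becomes $a/b = -5 - 2 m$, and a direct computation gives $-5 - 2 m = -5 - 2(-1 + \varepsilon 2 i) = -3 - \varepsilon 4 i = (-1 + \varepsilon 2 i)^2 = m^2$. Hence $\langle a/b, 1, c/b \rangle = \langle \mu^2, 1, \mu \rangle$ with $\mu = -1 + \varepsilon 2 i$, so $\langle a,b,c \rangle = b \, \langle \mu^2, 1, \mu \rangle$.

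Finally I would apply Lemma~\ref{lem:tractable:u^21u} with $\lambda = b$ to conclude that $\Holant(\langle a,b,c \rangle) = \Holant(b \, \langle \mu^2, 1, \mu \rangle)$ is computable in polynomial time. There is no genuine obstacle in this argument: the only points requiring a little care are keeping track of the scalar $b$ under normalization and treating the $b = 0$ case separately, while the real content of the statement is entirely carried by Lemma~\ref{lem:tractable:u^21u} (and ultimately by the tractability of the Hadamard graph-homomorphism problem it reduces to, via Theorem~1.2 in~\cite{GGJT10}).
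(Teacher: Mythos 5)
Your proposal is correct and takes essentially the same approach as the paper: both reduce to Lemma~\ref{lem:tractable:u^21u} by showing that the two constraints force $\langle a,b,c \rangle$ to be a scalar multiple of $\langle \mu^2,1,\mu \rangle$ with $\mu = -1 \pm 2i$. The only cosmetic difference is that you normalize by dividing by $b$ (which requires the separate $b=0$ case), whereas the paper multiplies by the fixed nonzero scalar $\mu$ and writes the signature as $c\langle \mu^2,1,\mu \rangle$, which sidesteps the case split.
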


\begin{proof}
 Since $a = -5 b - 2 c$ and $b = \frac{1}{5} (-1 \pm 2 i) c$,
 after scaling by $\mu = -1 \mp 2 i$,
 we have $\mu \langle a,b,c \rangle = c \langle \mu^2, 1, \mu \rangle$
 and are done by Lemma~\ref{lem:tractable:u^21u}.
\end{proof}

\section{An Interpolation Result} \label{sec:interpolation}

The goal of this section is to generalize an interpolation result from~\cite{CLX12},
which we rephrase using our notion of a succinct signature~(cf.~Lemma~\ref{lem:k>r:binary:interpolate}).

\begin{theorem}[Theorem~3.5 in~\cite{CLX12}] \label{thm:interpolation:pervious}
 Suppose $\mathcal{F}$ is a set of signatures and $\tau$ is a succinct signature type with length~$3$.
 If there exists an infinite sequence of planar $\mathcal{F}$-gates defined by an initial succinct signature $s \in \mathbb{C}^{3 \times 1}$ of type $\tau$
 and a recurrence matrix $M \in \mathbb{C}^{3 \times 3}$ with eigenvalues $\alpha$, $\beta$, and $\gamma$ satisfying the following conditions:
 \begin{enumerate}
  \item $\det(M) \ne 0$;
  \item $s$ is not orthogonal to any row eigenvector of $M$;
  \item for all $(i,j,k) \in \Z^3 - \{(0,0,0)\}$ with $i+j+k=0$,
  we have $\alpha^i \beta^j \gamma^k \ne 1$;
 \end{enumerate}
 then
 \[
  \PlHolant(\mathcal{F} \union \{f\}) \le_T \PlHolant(\mathcal{F}),
 \]
 for any succinct ternary signature $f$ of type $\tau$.
\end{theorem}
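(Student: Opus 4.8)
The plan is to run the standard polynomial-interpolation argument for Holant problems, with the three hypotheses supplying exactly what that argument needs: condition~(1) keeps a certain family of eigenvalue-products nonzero, condition~(3) keeps them pairwise distinct, and condition~(2) keeps the relevant change-of-basis coefficients nonzero so that the system we build can be inverted.

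First I would record the easy consequences of the hypotheses. Applying condition~(3) to $(1,-1,0)$, $(0,1,-1)$ and $(1,0,-1)$ shows that $\alpha,\beta,\gamma$ are pairwise distinct, so $M$ is diagonalizable: write $M = P\,\diag(\alpha,\beta,\gamma)\,P^{-1}$, where the columns $v_1,v_2,v_3$ of $P$ are (column) eigenvectors and the rows of $P^{-1}$ are the row eigenvectors. Let $N_\ell$ denote the $\ell$-th planar $\mathcal{F}$-gate of the given family, so that its succinct signature of type $\tau$ is the vector $f_\ell = M^\ell s \in \mathbb{C}^3$. Let $(c_1,c_2,c_3)$ be the entries of $P^{-1}s$; condition~(2) says precisely that $c_1,c_2,c_3$ are all nonzero, and $f_\ell = c_1\alpha^\ell v_1 + c_2\beta^\ell v_2 + c_3\gamma^\ell v_3$.

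Next I would set up the interpolation. Given an instance $\Omega$ of $\PlHolant(\mathcal{F}\cup\{f\})$ in which $f$ appears at $n$ vertices, let $\Omega_\ell$ be obtained by replacing every occurrence of $f$ with the planar $\mathcal{F}$-gate $N_\ell$; then $\Omega_\ell$ is a legal instance of $\PlHolant(\mathcal{F})$ of size polynomial in $|\Omega|$ and $\ell$. Since the Holant value is a multilinear function of the (succinct) signatures placed at the vertices, substituting the eigenbasis expansion of $f_\ell$ at each of the $n$ vertices and grouping terms according to the multiplicities $(i,j,k)$ with which $v_1,v_2,v_3$ are used gives
\[
 \PlHolant(\Omega_\ell) \;=\; \sum_{\substack{i+j+k=n\\ i,j,k\ge 0}} c_1^{\,i}c_2^{\,j}c_3^{\,k}\,\bigl(\alpha^{i}\beta^{j}\gamma^{k}\bigr)^{\ell}\,C_{i,j,k},
\]
where $C_{i,j,k}$ is the sum, over all assignments of $v_1,v_2,v_3$ to the $n$ vertices with those multiplicities, of the Holant value of the resulting planar $\mathcal{F}$-grid; each $C_{i,j,k}$ is a constant independent of $\ell$. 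Put $\mu_{i,j,k} = \alpha^{i}\beta^{j}\gamma^{k}$. By condition~(1) no eigenvalue vanishes, so every $\mu_{i,j,k}\ne 0$; and if $\mu_{i,j,k}=\mu_{i',j',k'}$ then $\alpha^{i-i'}\beta^{j-j'}\gamma^{k-k'}=1$ with $(i-i')+(j-j')+(k-k')=0$, whence condition~(3) forces $(i,j,k)=(i',j',k')$. Thus the $\mu_{i,j,k}$ are pairwise distinct.

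Finally I would solve and reconstruct. There are $\binom{n+2}{2}$ triples $(i,j,k)$, so evaluating $\PlHolant(\Omega_\ell)$ for $\ell = 0,1,\dots,\binom{n+2}{2}-1$ using the oracle for $\PlHolant(\mathcal{F})$ yields a linear system whose coefficient matrix is Vandermonde in the distinct nonzero nodes $\mu_{i,j,k}$, hence invertible; solving it recovers every product $c_1^{\,i}c_2^{\,j}c_3^{\,k}C_{i,j,k}$, and dividing by $c_1^{\,i}c_2^{\,j}c_3^{\,k}$ (valid since the $c_t$ are nonzero) recovers every $C_{i,j,k}$. Now expand the target $f$ in the eigenbasis, $f = d_1v_1 + d_2v_2 + d_3v_3$ (the $v_t$ span $\mathbb{C}^3$ since the eigenvalues are distinct); the same multilinear expansion shows $\PlHolant(\Omega) = \sum_{i+j+k=n} d_1^{\,i}d_2^{\,j}d_3^{\,k}C_{i,j,k}$, which we now evaluate directly. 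The whole procedure is a polynomial-time Turing reduction. The only delicate point is the pairwise distinctness and nonvanishing of the $\mu_{i,j,k}$: this is exactly where conditions~(1) and~(3) are used, and it is precisely to cover the differences $(i-i',\,j-j',\,k-k')$ — which need not be nonnegative — that condition~(3) is stated for all integer triples summing to zero rather than only for the exponent triples themselves. Everything else is the routine interpolation/Vandermonde machinery.
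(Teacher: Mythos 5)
Your proof is correct and matches, almost step for step, the paper's own argument for the generalization Lemma~\ref{lem:interpolate_all_not_orthogonal} specialized to $n=\ell=3$ (the paper cites Theorem~3.5 from~\cite{CLX12} but notes its lemma restricted to $n=\ell=3$ yields a simpler proof, and that is exactly your route). The only cosmetic difference is bookkeeping: you expand $M^\ell s$ in the unscaled eigenbasis $v_1,v_2,v_3$ and later divide out the products $c_1^{\,i}c_2^{\,j}c_3^{\,k}$, whereas the paper absorbs the constants by expanding in the rescaled basis $\mathbf{t}_t'=\alpha_t\mathbf{t}_t$, so the Vandermonde solve immediately returns the coefficients needed to evaluate at $f$.
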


Our generalization of this result is designed to relax the second condition so that $s$ can be orthogonal to some row eigenvectors of $M$.
Suppose $r$ is a row eigenvector of $M$, with eigenvalue $\lambda$, that is orthogonal to $s$ (i.e.~the dot product $r \cdot s$ is~$0$).
Consider $M^k s$, the $k$th signature in the infinite sequence defined by $M$ and $s$.
This signature is also orthogonal to $r$ since $r \cdot M^k s = \lambda^k r \cdot s = 0$.
We do not know of any way of interpolating a signature using this infinite sequence that is not also orthogonal to $r$.
On the other hand,
we would like to interpolate those signatures that do satisfy this orthogonality condition.
Our interpolation result gives a sufficient condition to achieve this.

We assume our $n$-by-$n$ matrix $M$ is diagonalizable,
i.e., it has $n$ linearly independent (row and column) eigenvectors.
We do not assume that $M$ necessarily has $n$ distinct eigenvalues (although this would be a sufficient condition for it to be diagonalizable).
The relaxation of the second condition is that, for some positive integer $\ell$,
the initial signature $s$ is \emph{not} orthogonal to exactly $\ell$ of these linearly independent row eigenvectors of $M$.
To satisfy this condition,
we use a two-step approach.
First, we explicitly exhibit $n - \ell$ linearly independent row eigenvectors of $M$ that are orthogonal to $s$.
Then we use the following lemma to show that the remaining row eigenvectors of $M$ are not orthogonal to $s$.
The justification for this approach is that the eigenvectors orthogonal to $s$ are often simple to express
while the eigenvectors not orthogonal to $s$ tend to be more complicated.

\begin{lemma} \label{lem:2nd_condition_implication}
 For $n \in \Z^+$,
 let $s \in \mathbb{C}^{n \times 1}$ be a vector and let $M \in \mathbb{C}^{n \times n}$ be a diagonalizable matrix.
 If $\rank([s\ M s\ \ldots\ M^{n-1} s]) \ge \ell$,
 then for any set of $n$ linearly independent row eigenvectors,
 $s$ is not orthogonal to at least $\ell$ of them.
\end{lemma}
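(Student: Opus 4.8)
The plan is to reduce the statement to an elementary rank inequality, by diagonalizing $M$ through the very set of row eigenvectors under consideration.

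First I would fix an arbitrary set $r_1, \dots, r_n$ of $n$ linearly independent row eigenvectors of $M$, say $r_i M = \lambda_i r_i$, and let $R \in \mathbb{C}^{n \times n}$ be the matrix whose $i$-th row is $r_i$; it is invertible since the $r_i$ are independent. The eigenvector equations say precisely $R M = \Lambda R$ with $\Lambda = \diag(\lambda_1, \dots, \lambda_n)$, hence $M = R^{-1} \Lambda R$ and $M^k = R^{-1} \Lambda^k R$ for all $k \ge 0$. Setting $c = R s \in \mathbb{C}^{n \times 1}$ --- whose $i$-th coordinate is exactly the dot product $r_i \cdot s$ --- we obtain $M^k s = R^{-1} \Lambda^k c$, so that
\[
 [\, s \ \ M s \ \ \cdots \ \ M^{n-1} s \,] = R^{-1} \, \diag(c_1, \dots, c_n) \, W,
\]
where $W \in \mathbb{C}^{n \times n}$ has entries $W_{ik} = \lambda_i^{\,k}$ for $0 \le k \le n-1$.

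Second, since $R^{-1}$ is invertible, the rank of the left-hand side equals $\rank(\diag(c) W)$. The $i$-th row of $\diag(c) W$ equals $c_i (1, \lambda_i, \dots, \lambda_i^{n-1})$, which is the zero row whenever $c_i = 0$; thus $\diag(c) W$ has at most $|\{ i : c_i \ne 0 \}|$ nonzero rows, and therefore $\rank(\diag(c) W) \le |\{ i : c_i \ne 0 \}|$. Combined with the hypothesis $\rank([\, s \ \ M s \ \ \cdots \ \ M^{n-1} s \,]) \ge \ell$, this gives $\ell \le |\{ i : r_i \cdot s \ne 0 \}|$, which is exactly the claim.

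The argument is short, and I expect no serious obstacle; the only point requiring a little care is the choice to diagonalize $M$ using the matrix $R$ formed from the \emph{specific} eigenvectors in question rather than some fixed eigenbasis, since this is what makes each $r_i \cdot s$ literally a coordinate of $R s$ and hence makes the conclusion uniform over all choices of $n$ independent row eigenvectors. Diagonalizability of $M$ is used exactly once, to guarantee that a full set of $n$ linearly independent row eigenvectors exists and that the factorization $M = R^{-1} \Lambda R$ is available; the special case $\ell = n$ recovers the non-orthogonality hypothesis (condition~2) of Theorem~\ref{thm:interpolation:pervious}, namely that the interpolation matrix has full rank.
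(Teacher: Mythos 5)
Your proof is correct, and it is essentially the same argument as the paper's, phrased directly rather than by contradiction: both rest on the observation that a row eigenvector orthogonal to $s$ annihilates every column $M^k s$ of the Krylov-type matrix $[s\ Ms\ \cdots\ M^{n-1}s]$, which bounds its rank. You package this as the explicit factorization $[s\ Ms\ \cdots\ M^{n-1}s] = R^{-1}\diag(c)W$ and count nonzero rows of $\diag(c)W$, whereas the paper stacks the orthogonal eigenvectors into a matrix $N$, notes $N[s\ Ms\ \cdots\ M^{n-1}s]=0$, and derives the same rank bound by a dimension count on the null space of $N$.
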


\begin{proof}
 Since $M$ is diagonalizable,
 it has $n$ linearly independent eigenvectors.
 Suppose for a contradiction that there exists a set of $n$ linearly independent row eigenvectors of $M$
 such that $s$ is orthogonal to $t > n - \ell$ of them.
 Let $N \in \mathbb{C}^{t \times n}$ be the matrix whose $t$ rows are the row eigenvectors of $M$ that are orthogonal to $s$.
 Then $N [s\ M s\ \ldots\ M^{n-1} s]$ is the zero matrix.
 From this,
 it follows that $\rank([s\ M s\ \ldots\ M^{n-1} s]) < \ell$, a contradiction.
\end{proof}

The third condition of Theorem~\ref{thm:interpolation:pervious} is also known as the lattice condition.

\begin{definition} \label{def:interpolation:lattice_condition}
 Fix some $\ell \in \N$.
 We say that $\lambda_1, \lambda_2, \dotsc, \lambda_\ell \in \mathbb{C} - \{0\}$ satisfy the \emph{lattice condition}
 if for all $x \in \Z^\ell - \{\mathbf{0}\}$ with $\sum_{i=1}^\ell x_i = 0$,
 we have $\prod_{i=1}^\ell \lambda_i^{x_i} \ne 1$.
\end{definition}

When $\ell \ge 3$,
we use Galois theory to show that the lattice condition is satisfied.
The idea is that the lattice condition must hold if the Galois group of the polynomial,
whose roots are the $\lambda_i$'s,
is large enough.
In~\cite{CLX12}, for the special case $n = \ell = 3$,
it was shown that the roots of most cubic polynomials satisfy the lattice condition using this technique.

\begin{lemma}[Lemma~5.2 in~\cite{CLX12}] \label{lem:interpolation:lattice_condition:cubic}
 Let $f(x) \in \Q[x]$ be an irreducible cubic polynomial.
 Then the roots of $f(x)$ satisfy the lattice condition iff $f(x)$ is not of the form $a x^3 + b$ for some $a,b \in \Q$.
\end{lemma}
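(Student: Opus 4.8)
The plan is to prove both directions. For the easy direction, suppose $f(x) = ax^3 + b$. Then if $\zeta$ is a root, the other two roots are $\omega\zeta$ and $\omega^2\zeta$ where $\omega = e^{2\pi i/3}$. Taking $x = (1,-1,0) \in \Z^3$ with coordinate sum zero, the product $\zeta^{1} (\omega\zeta)^{-1} (\omega^2\zeta)^0 = \omega^{-1} \ne 1$ — wait, that is already not $1$, so I must be more careful: the lattice condition \emph{fails} precisely when some such product equals $1$, and here $\zeta / (\omega\zeta) = \omega^{-1} = \omega^2$, which is a nontrivial cube root of unity, hence $(\zeta/(\omega\zeta))^3 = 1$. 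Taking instead $x = (3,-3,0)$ gives $\prod \lambda_i^{x_i} = \omega^{-3} = 1$, so the lattice condition fails. Thus if $f$ has the form $ax^3+b$ the roots do \emph{not} satisfy the lattice condition.

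For the harder direction, assume $f(x) \in \Q[x]$ is irreducible of degree $3$ and not of the form $ax^3+b$; I must show the three roots $\lambda_1,\lambda_2,\lambda_3$ satisfy the lattice condition. Suppose for contradiction that $\prod_i \lambda_i^{x_i} = 1$ for some nonzero $x \in \Z^3$ with $\sum x_i = 0$; write $x = (p, q, r)$ with $p+q+r=0$. The strategy is to use the Galois group $\mathrm{Gal}(K/\Q)$ of the splitting field $K$ of $f$, which (since $f$ is irreducible cubic) is either $A_3 \cong \Z/3\Z$ or $S_3$, acting transitively on $\{\lambda_1,\lambda_2,\lambda_3\}$. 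Applying all Galois automorphisms to the relation $\lambda_1^p \lambda_2^q \lambda_3^r = 1$ produces a system of multiplicative relations among the $\lambda_i$; combining them (for instance, multiplying the orbit of relations under the $3$-cycle) and using $\sum x_i = 0$ forces a relation of the form $(\lambda_i/\lambda_j)^m = 1$ for suitable exponents, or more directly forces all symmetric combinations to collapse. The key point is that transitivity lets me symmetrize: taking the product over the $A_3$-orbit of the relation yields $(\lambda_1\lambda_2\lambda_3)^{p+q+r} = 1$, which is vacuous, so instead I subtract pairs of Galois conjugates of the relation to obtain relations purely among ratios $\lambda_i/\lambda_j$, and then show such a ratio being a root of unity, together with irreducibility, forces $f$ to be $ax^3+b$ — e.g. if $\lambda_2/\lambda_1 = \omega$ is a primitive cube root of unity then applying a transposition in $S_3$ (or arguing directly in the $A_3$ case) shows the three roots are $\lambda_1, \omega\lambda_1, \omega^2\lambda_1$, whence $f(x) = a(x^3 - \lambda_1^3)$, contradiction. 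Other cases (ratio a $k$th root of unity for $k \ne 3$, or $k=1$, or $k=2$) are ruled out because a transitive subgroup of $S_3$ permuting $\{\lambda_1,\lambda_2,\lambda_3\}$ cannot be consistent with, say, $\lambda_1 = \lambda_2$ (irreducibility forbids repeated roots) or with $\lambda_2 = -\lambda_1$ while $\lambda_3$ is a further Galois conjugate.

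I expect the main obstacle to be the careful bookkeeping in the harder direction: extracting, from a single multiplicative relation $\lambda_1^p\lambda_2^q\lambda_3^r = 1$ with $p+q+r=0$ and from its images under the Galois action, the conclusion that some ratio $\lambda_i/\lambda_j$ is a genuine root of unity, and then pinning down exactly which root of unity is forced by a transitive group action on three elements. One has to handle the $A_3$ and $S_3$ cases with slightly different arguments (in the $S_3$ case the extra transpositions immediately symmetrize the ratios; in the $A_3$ case one leans on the $3$-cycle and on the fact that $K/\Q$ is then totally real or a specific cyclic cubic field). Since this lemma is quoted verbatim as Lemma~5.2 from~\cite{CLX12}, I would in practice cite that proof; the sketch above records the structure of the argument for completeness.
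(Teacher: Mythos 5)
The paper states this result as Lemma~5.2 of~\cite{CLX12} and does not reprove it, so there is no internal proof to compare against; I evaluate your sketch on its own terms.

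Your easy direction is fine: for $f = a x^3 + b$ the roots are $\zeta, \omega\zeta, \omega^2\zeta$ with $\omega = e^{2\pi i/3}$, and the exponent vector $(3,-3,0)$ (sum zero, nonzero) gives $\zeta^3(\omega\zeta)^{-3} = \omega^{-3} = 1$, so the lattice condition fails.

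In the hard direction the overall plan (act by the Galois group, extract a single ratio that must be a root of unity, then argue the ratio is forced to be a primitive cube root) is the right one, but two pieces of the sketch do not hold up as written. First, you worry that the $A_3$ case needs separate handling because transpositions are unavailable, and you wave at total reality without using it. In fact the cyclic conjugates alone already suffice: from $\lambda_1^p\lambda_2^q\lambda_3^r = 1$ and its $3$-cycle image $\lambda_1^r\lambda_2^p\lambda_3^q = 1$, the integer combination (raise the first to the power $q$, the second to the power $-r$, multiply) yields
\[
 (\lambda_1/\lambda_2)^{pq - r^2} = 1, \qquad pq - r^2 = -(p^2 + pq + q^2) \ne 0,
\]
where the last nonvanishing uses $p + q + r = 0$ and positive definiteness of $p^2 + pq + q^2$. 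So a single-ratio relation is always available, for either $A_3$ or $S_3$, and your $A_3$ aside is moot. Second, and more seriously, once you know $\zeta := \lambda_2/\lambda_1$ is a root of unity, your sketch jumps to ``if $\zeta$ is a primitive cube root then $f = a(x^3 - \lambda_1^3)$, contradiction'' and dismisses the other orders by appeal to transitivity, but that appeal only excludes $\zeta \in \{1,-1\}$; it says nothing about, say, $\zeta = i$ or a primitive sixth root. The missing ingredient is the rationality-of-coefficients argument: since Galois conjugates of a root of unity are roots of unity, all $|\lambda_i|$ are equal; if all $\lambda_i$ are real this is immediately impossible (three distinct reals in a two-element set $\{\pm\rho\}$); otherwise take $\lambda_1$ real and $\lambda_3 = \overline{\lambda_2}$, so $\lambda_2 = \zeta\lambda_1$, $\lambda_3 = \zeta^{-1}\lambda_1$, and the monic cubic is $x^3 - s\lambda_1 x^2 + s\lambda_1^2 x - \lambda_1^3$ with $s = 1 + \zeta + \zeta^{-1}$. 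Rationality of the middle two coefficients together with $\lambda_1 \notin \Q$ forces $s = 0$, which pins $\zeta$ down as a primitive cube root of unity and $f = x^3 - \lambda_1^3$, the excluded form. Without this step your case analysis is genuinely incomplete.
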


In the following lemma, 
we show that if the Galois group for a polynomial of degree $n$ is one of the two largest possible groups, $S_n$ or $A_n$,
then its roots satisfy the lattice condition provided these roots do not all have the same complex norm.

\begin{lemma} \label{lem:lattice_condition:Sn_An}
 Let $f$ be a polynomial of degree $n \ge 2$ with rational coefficients.
 If the Galois group of $f$ over $\Q$ is $S_n$ or $A_n$ and the roots of $f$ do not all have the same complex norm,
 then the roots of $f$ satisfy the lattice condition.
\end{lemma}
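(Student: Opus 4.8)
The plan is to argue by contradiction, exploiting the $\Q$-irreducibility of the standard $(n-1)$-dimensional representation of both $S_n$ and $A_n$. Write $\lambda_1,\dots,\lambda_n$ for the roots of $f$, which we may assume are nonzero (otherwise the lattice condition is not even defined), and let $G \le S_n$ be the Galois group acting on the roots by permutation. Suppose, toward a contradiction, that the lattice condition fails. Consider the rank-$(n-1)$ lattice $\Lambda_0 = \{x \in \Z^{n} : \sum_i x_i = 0\}$ and, inside it, the sublattice of multiplicative relations
\[
 R := \Bigl\{\, x \in \Lambda_0 \;:\; \prod_{i=1}^{n} \lambda_i^{x_i} = 1 \,\Bigr\},
\]
which by assumption contains some $x \ne \mathbf{0}$. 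The first step is to note that $R$ is stable under the permutation action of $G$ on $\Z^{n}$: if $\sigma \in G$ sends $\lambda_i$ to $\lambda_{\pi(i)}$ and $\prod_i \lambda_i^{x_i} = 1$, then applying $\sigma$ gives $\prod_i \lambda_{\pi(i)}^{x_i} = 1$, so the coordinate-permuted vector again lies in $R$; and permuting coordinates preserves coordinate sums. Hence $R \otimes_{\Z} \Q$ is a $\Q[G]$-submodule of $V := \Lambda_0 \otimes_{\Z} \Q$, the standard permutation module of dimension $n-1$.

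The heart of the argument is that $V$ is irreducible as a $\Q[G]$-module when $G = S_n$ or $G = A_n$. For $G = S_n$ this is classical (the Specht module for the partition $(n-1,1)$ is irreducible over $\Q$). For $G = A_n$ with $n \ge 4$, the restriction of the standard representation to $A_n$ remains absolutely irreducible, since $(n-1,1)$ is not self-conjugate when $n \ge 4$, and it is realized over $\Q$; the two remaining cases are immediate, as $A_2$ is trivial with $\dim V = 1$, and $A_3 \cong \Z/3$ acts on $V \cong \Q(\omega)$ by ring multiplication, which has no proper nonzero submodules. Therefore $R \otimes \Q$ is either $0$ or all of $V$; since $R \ne \{\mathbf{0}\}$, it equals $V$, i.e.\ $R$ has finite index $m := [\Lambda_0 : R]$ in $\Lambda_0$.

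To conclude, fix any $i \ne j$. The difference $\mathbf{e}_i - \mathbf{e}_j$ of standard basis vectors lies in $\Lambda_0$, so $m(\mathbf{e}_i - \mathbf{e}_j) \in R$; unwinding the definition of $R$, this says precisely $\lambda_i^{m} = \lambda_j^{m}$, hence $|\lambda_i|^{m} = |\lambda_j|^{m}$, and therefore $|\lambda_i| = |\lambda_j|$ because $m \ge 1$ and both quantities are nonnegative reals. As $i$ and $j$ were arbitrary, all roots of $f$ have a common complex norm, contradicting the hypothesis; this establishes the lemma.

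The step I expect to be the main obstacle is the representation-theoretic input: verifying that the standard module $V$ is $\Q$-irreducible for $A_n$ (not only $S_n$) and handling the small cases $n = 2, 3$ by hand. A further point that needs care is that $R$ is stable under the \emph{permutation} action of $G$ coming from its action on the roots---not some abstract $G$-action---and that a nonzero $G$-stable sublattice of $\Lambda_0$ automatically has finite index once $V$ is known to be irreducible. Everything downstream of that, namely reading off $\lambda_i^m = \lambda_j^m$ and passing to absolute values, is routine.
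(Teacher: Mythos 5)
Your proof is correct, and it takes a genuinely different and more conceptual route than the paper's. The paper argues directly: given a relation $a_1^{x_1}\dotsb a_n^{x_n}=1$ with $\sum x_i = 0$, it separates the positive and negative exponents, slides a window of length $s+t$ along the norm-sorted roots until it captures a strict inequality, and then uses an element of $A_n$ to place the small-norm roots on the side with positive exponents and the large-norm roots on the other; taking norms then yields a strict inequality between the two sides, a contradiction. Your argument instead packages all multiplicative relations with zero exponent sum into a Galois-stable sublattice $R \le \Lambda_0 = \{x \in \Z^n : \sum_i x_i = 0\}$, notes that $\Lambda_0 \otimes_\Z \Q$ is the standard $(n-1)$-dimensional permutation module, and invokes its $\Q$-irreducibility for both $S_n$ and $A_n$ (the $n \ge 4$ case via non-self-conjugacy of $(n-1,1)$, with $A_2$ and $A_3$ handled directly). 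The conclusion $m\Lambda_0 \subseteq R$ with $m = [\Lambda_0 : R] < \infty$ then forces $\lambda_i^m = \lambda_j^m$ for all $i,j$, hence equal norms. Your representation-theoretic input is correctly verified, including the subtle $A_3$ case where $\Q$-irreducibility holds despite reducibility over $\mathbb{C}$. Each approach has merit: the paper's is entirely elementary and self-contained, while yours is shorter once the standard fact is granted and is more robust---it would apply verbatim to any Galois group whose natural permutation action makes the standard module $\Q$-irreducible, not just $S_n$ and $A_n$.
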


\begin{proof}
 We consider $A_n$ since the same argument applies to $S_n \supset A_n$.
 For $1 \le i \le n$,
 let $a_i$ be the roots of $f$ such that $|a_1| \le \dotsb \le |a_n|$.
 By assumption, as least one of these inequalities is strict.
 Suppose for a contradiction that these roots fail to satisfy the lattice condition.
 This means there exists $x \in \Z^n - \{\mathbf{0}\}$ satisfying $\sum_{i=1}^n x_i = 0$ such that
 \begin{equation} \label{eqn:lattice_condition_fails}
  a_1^{x_1} \dotsb a_n^{x_n} = 1.
 \end{equation}
 
 Since $x$ is not all~$0$,
 it must contain some positive entries and some negative entries.
 We can rewrite~(\ref{eqn:lattice_condition_fails}) as $b_1^{y_1} \dotsb b_s^{y_s} = c_1^{z_1} \dotsb c_t^{z_t}$,
 where $s, t \ge 1$, $b_1, \ldots, b_s, c_1, \ldots, c_t$ are $s+t$ distinct members from $\{a_1, \ldots, a_n\}$,
 $y_i > 0$ for $1 \le i \le s$, $z_i > 0$ for $1 \le i \le t$, and
 $y_1 + \dotsb + y_s = z_1 + \dotsb + z_t$.
 We omit factors in~(\ref{eqn:lattice_condition_fails}) with exponent~$0$.
 
 If $n=2$,
 then $s=t=1$ and $|b_1| = |c_1|$. 
 This is a contradiction to the assumption that roots of $f$ do not all have the same complex norm.
 Otherwise, assume $n \ge 3$.
 If $s = t = 1$,
 then $|b_1| = |c_1|$ again. 
 We apply 3-cycles from $A_n$ to conclude that all roots of $f$ have the same complex norm, a contradiction.
 Otherwise $s + t > 2$.
 Without loss of generality,
 suppose $s \ge t$,
 which implies $s \ge 2$.
 Pick $j \in \{0, \dotsc, n-s-t\}$ such that $|a_{j+1}| \le \dotsb \le |a_{j+s+t}|$ contains a strict inequality.
 We permute the roots so that $b_i = a_{j+i}$ for $1 \le i \le s$ and $c_i = a_{j+s+i}$ for $1 \le i \le t$
 (or possibly swapping $b_1$ and $b_2$ if necessary to ensure the permutation is in $A_n$).
 Then taking the complex norm of both sides gives a contradiction.
\end{proof}

\begin{remark}
 This result can simplify the interpolation arguments in~\cite{CLX12}.
 Since each of their cubic polynomials is irreducible,
 the corresponding Galois groups are transitive subgroups of $S_3$,
 namely $S_3$ or $A_3$ (and in fact  by inspection, they are all $S_3$).
 Then Lemma~4.5 from~\cite{KC10_arXiv} (the full version of~\cite{KC10}) shows that the eigenvalues of these polynomials do not all have the same complex norm.
 Thus,
 the roots of all polynomials exhibited in~\cite{CLX12} satisfy the lattice condition by Lemma~\ref{lem:lattice_condition:Sn_An}.
\end{remark}

In the current paper,
we apply Lemma~\ref{lem:lattice_condition:Sn_An} to an infinite family of quintic polynomials that we encounter in Section~\ref{sec:ternary}.
If the polynomials are irreducible,
then we are able to apply this lemma.
Unfortunately,
we are unable to show that all these polynomials are irreducible and thus also have to consider the possible ways that they could factor.
Nevertheless, we are still able to show that all these polynomials satisfy the lattice condition.

To conclude,
we state and prove our new interpolation result.

\begin{lemma} \label{lem:interpolate_all_not_orthogonal}
 Suppose $\mathcal{F}$ is a set of signatures and $\tau$ is a succinct signature type with length $n \in \Z^+$.
 If there exists an infinite sequence of planar $\mathcal{F}$-gates defined by an initial succinct signature $s \in \mathbb{C}^{n \times 1}$ of type $\tau$
 and a recurrence matrix $M \in \mathbb{C}^{n \times n}$ satisfying the following conditions,
 \begin{enumerate}
  \item $M$ is diagonalizable with $n$ linearly independent eigenvectors;
  \item $s$ is not orthogonal to exactly $\ell$ of these linearly independent row eigenvectors of $M$ with eigenvalues $\lambda_1, \dotsc, \lambda_\ell$;
  \item $\lambda_1, \dotsc, \lambda_\ell$ satisfy the lattice condition;
 \end{enumerate}
 then
 \[
  \PlHolant(\mathcal{F} \union \{f\}) \le_T \PlHolant(\mathcal{F})
 \]
 for any succinct signature $f$ of type $\tau$ that is orthogonal to the $n - \ell$ of these linearly independent eigenvectors of $M$ to which $s$ is also orthogonal.
\end{lemma}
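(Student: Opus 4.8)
The strategy is to reduce to the case where $s$ is not orthogonal to \emph{any} row eigenvector, and then invoke the previous interpolation machinery. Concretely, let $r_1, \dots, r_\ell$ be the row eigenvectors to which $s$ is \emph{not} orthogonal, with eigenvalues $\lambda_1, \dots, \lambda_\ell$, and let $r_{\ell+1}, \dots, r_n$ be the remaining row eigenvectors, to which both $s$ and the target signature $f$ are orthogonal. Since $M$ is diagonalizable, $\mathbb{C}^n$ decomposes as the direct sum of the (column) eigenspaces, and dually $\mathbb{C}^n$ is spanned by $r_1, \dots, r_n$. The plan is to restrict all the action to the $\ell$-dimensional coordinate system dual to $r_1, \dots, r_\ell$: write $s$ and $f$ in the eigenbasis, observe that both have zero components along the eigendirections paired with $r_{\ell+1}, \dots, r_n$ (for $s$ by hypothesis~2, for $f$ by the hypothesis in the conclusion, using that the column eigenvector $v_i$ pairs nontrivially only with $r_i$ under a suitable normalization), so effectively $M$ acts on the $\ell$-dimensional space spanned by the eigenvectors $v_1, \dots, v_\ell$ with eigenvalues $\lambda_1, \dots, \lambda_\ell$.

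Next, I would set up the interpolation exactly as in the proof of Theorem~\ref{thm:interpolation:pervious} / Lemma~\ref{lem:k>r:binary:interpolate}, but inside this $\ell$-dimensional invariant subspace $W = \operatorname{span}(v_1, \dots, v_\ell)$. Given an instance $\Omega$ of $\PlHolant(\mathcal{F} \cup \{f\})$ with $f$ appearing $m$ times, form instances $\Omega_k$ by replacing each occurrence of $f$ with the $k$th gadget $M^k s$ in the sequence. Expanding multilinearly, $\PlHolant(\Omega_k; \mathcal{F})$ is a polynomial in the "generalized eigenvalue monomials" of $M^k$ restricted to $W$; after collecting terms, the standard argument shows it is a linear combination, with unknown coefficients, of products $\lambda_1^{k e_1} \cdots \lambda_\ell^{k e_\ell}$ over the multi-indices $e$ with $\sum e_i = m$. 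Because all of $s$, $f$, and each $M^k s$ lie in $W$ (up to the components that pair to zero against the relevant row eigenvectors), the value $\PlHolant(\Omega; f)$ we want is one particular evaluation of this same polynomial system — namely the one obtained by writing $f$'s $W$-components in place of the $\lambda$-power data. The lattice condition on $\lambda_1, \dots, \lambda_\ell$ guarantees the monomials $\lambda_1^{k e_1} \cdots \lambda_\ell^{k e_\ell}$ are pairwise distinct as functions of $k$: if two multi-indices $e, e'$ with $\sum e_i = \sum e'_i = m$ gave the same value for all $k$, then $\prod \lambda_i^{e_i - e'_i} = 1$ with $\sum(e_i - e'_i) = 0$, contradicting the lattice condition unless $e = e'$. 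Hence the associated (generalized Vandermonde) system is nonsingular, and querying the oracle at enough values of $k$ recovers all coefficients, whence $\PlHolant(\Omega; f)$ by one further evaluation. This yields the Turing reduction $\PlHolant(\mathcal{F} \cup \{f\}) \le_T \PlHolant(\mathcal{F})$.

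The main obstacle I anticipate is bookkeeping the orthogonality/duality correctly: one must use a \emph{compatible} pair of bases of row and column eigenvectors so that $r_i \cdot v_j = 0$ for $i \ne j$ (possible because $M$ is diagonalizable, so left and right eigenspaces for distinct eigenvalues are orthogonal, and within a repeated eigenvalue one can choose dual bases), and then argue that "orthogonal to $r_{\ell+1}, \dots, r_n$" precisely means "lies in $W = \operatorname{span}(v_1, \dots, v_\ell)$". A subtlety is that $M$ may have repeated eigenvalues among $\lambda_1, \dots, \lambda_\ell$, so "$v_i$ pairs nontrivially only with $r_i$" must be phrased per-eigenspace rather than per-eigenvector; but this does not affect the counting of monomials, since it is the \emph{eigenvalues} (with multiplicity collapsed appropriately) that enter the Vandermonde argument, and the lattice condition is stated for the list $\lambda_1, \dots, \lambda_\ell$ which already accounts for this. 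Once the reduction to the $\ell$-dimensional invariant subspace is cleanly established, the remainder is the now-standard polynomial interpolation argument, with the lattice condition playing exactly the role it plays in Theorem~\ref{thm:interpolation:pervious}.
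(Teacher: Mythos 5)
Your proposal is correct and follows essentially the same route as the paper's proof: decompose in a dual pair of row/column eigenbases, observe that $s$, $M^k s$, and $f$ all lie in the $\ell$-dimensional span $W = \operatorname{span}(v_1,\dotsc,v_\ell)$ (with $v_i$ scaled by $\alpha_i = r_i \cdot s$ so that $M^k s$ has coordinates $\lambda_i^k$), expand multilinearly over the $n_f$ occurrences to obtain a linear system in the monomials $\lambda_1^{e_1}\dotsm\lambda_\ell^{e_\ell}$ with $\sum e_i = n_f$, and invoke the lattice condition for nonsingularity of the resulting Vandermonde system. Your worry about repeated eigenvalues among $\lambda_1,\dotsc,\lambda_\ell$ is in fact moot, since the lattice condition already forces these $\ell$ eigenvalues to be pairwise distinct.
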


\begin{proof}
 Let $\lambda_{1}, \dotsc, \lambda_n$ be the $n$ eigenvalues of $M$,
 with possible repetition.
 Since $M$ is diagonalizable,
 we can write $M$  as $T \Lambda T^{-1}$,
 where $\Lambda$ is the diagonal matrix $\left[\begin{smallmatrix} B_1 & \mathbf{0} \\ \mathbf{0} & B_2 \end{smallmatrix}\right]$
 with $B_1 = \diag(\lambda_1, \dotsc, \lambda_\ell)$ and
 $B_2 = \diag(\lambda_{\ell+1}, \dotsc, \lambda_n)$.
 Notice that the columns of $T$ are the column eigenvectors of $M$ and the rows of $T^{-1}$ are the row eigenvectors of $M$.
 Let $\mathbf{t}_i$ be the $i$th column $T$ and let $T^{-1} s = \transpose{[\alpha_1\ \ldots\ \alpha_n]}$.
 Then $\alpha_i \ne 0$ for $1 \le i \le \ell$ and $\alpha_i = 0$ for $\ell < i \le n$,
 since $s$ is not orthogonal to exactly the first $\ell$ row eigenvectors of $M$.
 
 Now we can write
 \begin{align*}
  M^k s
  &= T \begin{bmatrix} B_1^k & \mathbf{0} \\ \mathbf{0} & B_2^k \end{bmatrix} T^{-1} s
  = T \begin{bmatrix} B_1^k & \mathbf{0} \\ \mathbf{0} & B_2^k \end{bmatrix} \left[\begin{smallmatrix} \alpha_1 \\ \vdots \\ \alpha_\ell \\ 0 \\ \vdots \\ 0 \end{smallmatrix}\right]
  = T \diag(\alpha_1 \lambda_1^k, \dotsc, \alpha_\ell \lambda_\ell^k, 0, \dotsc, 0)\\
  &= T \diag(\alpha_1, \dotsc, \alpha_\ell, 0, \dotsc, 0) \left[\begin{smallmatrix} \lambda_1^k \\ \vdots \\ \lambda_\ell^k \\ 0 \\ \vdots \\ 0 \end{smallmatrix}\right]
  = [\alpha_1 \mathbf{t}_1, \dotsc, \alpha_\ell \mathbf{t}_\ell, \mathbf{0}, \dotsc, \mathbf{0}] \left[\begin{smallmatrix}  \lambda_1^k \\ \vdots \\ \lambda_\ell^k \\ 0 \\ \vdots \\ 0 \end{smallmatrix}\right].
 \end{align*}
 For $1 \le i \le \ell$,
 let $\mathbf{t}_i' = \alpha_i \mathbf{t}_i$.
 Both the columns of $T$ and the rows of $T^{-1}$ are linearly independent.
 From $T^{-1} T = I_m$,
 we see that $\mathbf{t}_i$ for $1 \le i \le \ell$ is orthogonal to the last $n - \ell$ rows of $T^{-1}$.
 Thus $\operatorname{span}\{\mathbf{t}_1, \dotsc, \mathbf{t}_\ell\} = \operatorname{span}\{\mathbf{t}_1', \dotsc, \mathbf{t}_\ell'\}$
 is precisely the space of vectors orthogonal to the last $n - \ell$ rows of $T^{-1}$.

 Consider an instance $\Omega$ of $\PlHolant(\mathcal{F} \union \{f\})$.
 Let $V_f$ be the subset of vertices assigned $f$ with $n_f = |V_f|$.
 Since $f$ is orthogonal to any row eigenvector of $M$ to which $s$ is also orthogonal,
 we have $f \in \operatorname{span}\{\mathbf{t}_1', \dotsc, \mathbf{t}_\ell'\}$.
 Let $f = \beta_1 \mathbf{t}_1' + \dotsb + \beta_\ell \mathbf{t}_\ell'$.
 Then $\PlHolant(\Omega; \mathcal{F} \union \{f\})$ is a homogeneous polynomial in the $\beta_i$'s of total degree $n_f$.
 For $y = (y_1, \ldots, y_\ell) \in \N^\ell$,
 let $c_y$ be the coefficient of $\beta_1^{y_1} \dotsb \beta_\ell^{y_\ell}$ in $\PlHolant(\Omega; \mathcal{F} \union \{f\})$ so that
 \[
  \PlHolant(\Omega; \mathcal{F} \union \{f\}) = \sum_{y_1 + \dotsb + y_\ell = n_f} c_y \beta_1^{y_1} \dotsb \beta_\ell^{y_\ell}.
 \]
 
 We construct from $\Omega$ a sequence of instances $\Omega_k$ of $\PlHolant(\mathcal{F})$ indexed by $k \in \N$.
 We obtain $\Omega_k$ from $\Omega$ by replacing each occurrence of $f$ with $M^k s$, for $k \ge 0$.
 Then
 \[
  \PlHolant(\Omega_k; \mathcal{F}) = \sum_{y_1 + \dotsb + y_\ell = n_f} c_y \left(\lambda_1^{y_1} \dotsb \lambda_\ell^{y_\ell}\right)^k.
 \]
 Note that, crucially, the same $c_y$ coefficients appear.
 We treat this as a linear system with the $c_y$'s as the unknowns.
 The coefficient matrix is  a Vandermonde matrix of order $\binom{n_f + \ell - 1}{\ell - 1}$,
 which is polynomial in $n_f$ and thus polynomial in the size of $\Omega$.
 It is nonsingular if every $\lambda_1^{y_1} \dotsb \lambda_\ell^{y_\ell}$ is distinct,
 which is indeed the case since $\lambda_1, \dotsc, \lambda_\ell$ satisfy the lattice condition.
 
 Therefore, we can solve for the $c_y$'s in polynomial time and compute $\PlHolant(\Omega; \mathcal{F} \union \{f\})$.
\end{proof}

\begin{remark}
 When restricted to $n = \ell = 3$,
 this proof is simpler than the one given in~\cite{CLX12} for Theorem~\ref{thm:interpolation:pervious}
 due to our implicit use of a local holographic transformation
 (i.e.~the writing of $f$ as a linear combination of $\mathbf{t}_1', \dotsc, \mathbf{t}_\ell'$
 and expressing $\PlHolant(\Omega; \mathcal{F} \union \{f\})$ in terms of this).
\end{remark}

\section{Puiseux series, Siegel's Theorem, and Galois theory} \label{sec:ternary}

We prove our main dichotomy theorem in three stages.
This section covers the last stage,
which assumes that all succinct binary signatures of type $\tau_2$ are available.
Among the ways we utilize this assumption is
to build the gadget known as a local holographic transformation (see Figure~\ref{fig:gadget:ternary:local_holographic_transformation}),
which is the focus of Section~\ref{subsec:ternary:construct}.
Then in Section~\ref{subsec:ternary:<(k-1),k-3,-3>},
our efforts are largely spent proving that a certain interpolation succeeds.
To that end,
we employ Galois theory aided by an effective version of Siegel's theorem,
which is made possible by analyzing Puiseux series expansions.

We define the following expressions which appear throughout the rest of the paper:
\begin{alignat}{2}
 \mathfrak{A} &= a - 3b + 2 c; \label{eqn:ternary:frakA}\\
 \mathfrak{B} &= \mathfrak{A} + \kappa (b - c)                &&~~ = a +   (\kappa - 3) b -              (\kappa - 2) c; \qquad \qquad \text{and} \label{eqn:ternary:frakB}\\
 \mathfrak{C} &= \mathfrak{B} + \kappa [2 b + (\kappa - 2) c] &&~~ = a + 3 (\kappa - 1) b + (\kappa - 1) (\kappa - 2) c. \label{eqn:ternary:frakC}
\end{alignat}

\subsection{Constructing a Special Ternary Signature} \label{subsec:ternary:construct}

We construct one of two special ternary signatures.
Either we construct a signature of the form $\langle a,b,b \rangle$ with $a \ne b$ and can finish the proof with Corollary~\ref{cor:k>r:abb_unary_binaries}
or we construct $\langle 3 (\kappa - 1), \kappa - 3, -3 \rangle$.
With this latter signature,
we can interpolate the weighted Eulerian partition signature.

A key step in our dichotomy theorem occurred back in Section~\ref{subsec:k>r} through Lemma~\ref{lem:k>r:interpolate_equality4} with the Bobby Fischer gadget.
To apply this lemma,
we need to construct a gadget with a succinct ternary signature of type $\tau_3$ such that the last two entries are equal and different from the first.
This is the goal of the next lemma,
which assumes $\mathfrak{B} \ne 0$.
We will determine the complexity of the case $\mathfrak{B} = 0$ in Corollary~\ref{cor:unary:dichotomy:a+(k-3)b-(k-2)c=0} without using the results from this section.

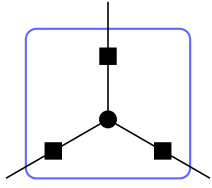
\begin{figure}[t]
 \centering
 \begin{tikzpicture}[scale=\scale,transform shape,node distance=\nodeDist,semithick]
  \node [external] (0)              {};
  \node [square]   (1) [below of=0] {};
  \node [internal] (2) [below of=1] {};
  \path (2) ++(-150:\nodeDist) node [square] (3) {} ++(-150:\nodeDist) node [external] (4) {};
  \path (2) ++( -30:\nodeDist) node [square] (5) {} ++( -30:\nodeDist) node [external] (6) {};
  \path (0) edge (1)
        (1) edge (2)
        (2) edge (3)
            edge (5)
        (3) edge (4)
        (5) edge (6);
  \begin{pgfonlayer}{background}
   \node[draw=\borderColor,thick,rounded corners,inner xsep=12pt,inner ysep=12pt,fit = (1) (3) (5)] {};
  \end{pgfonlayer}
 \end{tikzpicture}
 \caption{Local holographic transformation gadget construction for a ternary signature.}
 \label{fig:gadget:ternary:local_holographic_transformation}
\end{figure}

\begin{lemma} \label{lem:ternary:construct_abb}
 Suppose $\kappa \ge 3$ is the domain size and $a,b,c \in \mathbb{C}$.
 Let $\mathcal{F}$ be a signature set containing the succinct ternary signature $\langle a,b,c \rangle$ of type $\tau_3$
 and the succinct binary signature $\langle x,y \rangle$ of type $\tau_2$ for all $x,y \in \mathbb{C}$.
 If $\mathfrak{A} \mathfrak{B} \ne 0$,
 then there exist $a', b' \in \mathbb{C}$ satisfying $a' \ne b'$ such that
 \[
  \PlHolant(\mathcal{F} \union \{\langle a',b',b' \rangle\}) \le_T \PlHolant(\mathcal{F})
 \]
 where $\langle a',b',b' \rangle$ is a succinct ternary signature of type $\tau_3$.
\end{lemma}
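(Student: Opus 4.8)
The plan is to realize the target signature as the output of the local holographic transformation gadget of Figure~\ref{fig:gadget:ternary:local_holographic_transformation}: put $\langle a,b,c\rangle$ on the circle vertex and one common succinct binary signature $\langle x,y\rangle$ of type $\tau_2$ on each of the three square vertices. Writing $T=(x-y)I_\kappa+yJ_\kappa$ for the signature matrix of $\langle x,y\rangle$, this gadget computes $T^{\otimes3}\langle a,b,c\rangle$, which is symmetric and (recall that a gadget built from domain-invariant signatures is domain invariant) hence a succinct ternary signature $\langle a',b',c'\rangle$ of type $\tau_3$. Since every $\langle x,y\rangle$ is available by hypothesis, this planar gadget gives $\PlHolant(\mathcal F\cup\{\langle a',b',c'\rangle\})\le_T\PlHolant(\mathcal F)$ for every choice of $x,y$, so it suffices to choose $x,y$ making $\langle a',b',c'\rangle$ of the form $\langle a',b',b'\rangle$ with $a'\ne b'$. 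The key input I would use is the computation (Lemma~\ref{lem:compute:ternary:holographic_transformation}) that in the coordinates $\mathfrak{A},\mathfrak{B},\mathfrak{C}$ of~(\ref{eqn:ternary:frakA})--(\ref{eqn:ternary:frakC}) this transformation is \emph{diagonal}: with $\alpha=x-y$ and $\mu=x+(\kappa-1)y$ the two eigenvalues of $T$ (as in Lemma~\ref{lem:k>r:binary:interpolation:eigenvalues}),
\[
 (\mathfrak{A},\mathfrak{B},\mathfrak{C})\ \longmapsto\ (\alpha^3\mathfrak{A},\ \alpha^2\mu\,\mathfrak{B},\ \mu^3\mathfrak{C}).
\]
I would also record the elementary observation that a succinct ternary signature of type $\tau_3$ has its last two entries equal precisely when $\mathfrak{A}=\mathfrak{B}$, since $\mathfrak{B}-\mathfrak{A}=\kappa(b-c)$ by~(\ref{eqn:ternary:frakB}) and $\kappa\ge3$.

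With these in hand, the argument is a short case split. If $\mathfrak{A}=\mathfrak{B}$ already then $b=c$, and since in that case $\mathfrak{A}=a-b\ne0$ we get $a\ne b$, so I can take $\langle a',b',b'\rangle=\langle a,b,c\rangle$ itself and the reduction is trivial. If $\mathfrak{A}\ne\mathfrak{B}$ (equivalently $b\ne c$), I would set $y=1$ and $\alpha=x-1=\kappa\mathfrak{B}/(\mathfrak{A}-\mathfrak{B})$; this is well-defined and nonzero because $\mathfrak{B}\ne0$. A one-line check then gives $\mu=\alpha+\kappa=\kappa\mathfrak{A}/(\mathfrak{A}-\mathfrak{B})\ne0$ (using $\mathfrak{A}\ne0$) and $\alpha\mathfrak{A}=\mu\mathfrak{B}$, hence $\alpha^3\mathfrak{A}=\alpha^2\mu\,\mathfrak{B}$; so the transformed signature satisfies $\mathfrak{A}'=\mathfrak{B}'$, i.e.\ $b'=c'$. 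Since also $\mathfrak{A}'=\alpha^3\mathfrak{A}\ne0$, we have $a'\ne b'$. Thus $\langle a',b',b'\rangle$ is realized from $\mathcal F$ using only the binary signature $\langle\alpha+1,1\rangle\in\mathcal F$, which is exactly the asserted reduction; this $\langle a',b',b'\rangle$ is then what feeds the Bobby Fischer gadget of Lemma~\ref{lem:k>r:interpolate_equality4}, explaining why its last two entries must coincide and differ from the first.

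I do not expect a serious obstacle in this lemma. The only real content is the diagonalization recorded in Lemma~\ref{lem:compute:ternary:holographic_transformation}, which I would sanity-check against the worked instance $T^{\otimes3}\langle0,0,1\rangle=2(\kappa-2)\langle-(\kappa-3)(\kappa-1),1,1\rangle$ from the proof of Theorem~\ref{thm:edge_coloring:k>r}; after that everything reduces to the elementary case analysis and the explicit choice $\alpha/y=\kappa\mathfrak{B}/(\mathfrak{A}-\mathfrak{B})$ above. The one genuinely load-bearing hypothesis is $\mathfrak{B}\ne0$: when $\mathfrak{B}=0$ the construction collapses, since $\alpha=0$ forces $T^{\otimes3}\langle a,b,c\rangle$ to be a multiple of $\langle1,1,1\rangle$, with $\mathfrak{A}'=0$ and hence $a'=b'$; this is why that case is excluded here and settled separately in Corollary~\ref{cor:unary:dichotomy:a+(k-3)b-(k-2)c=0}.
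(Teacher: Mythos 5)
Your proof is correct and uses the same gadget (Figure~\ref{fig:gadget:ternary:local_holographic_transformation}) and the same computational input (Lemma~\ref{lem:compute:ternary:holographic_transformation}) as the paper; the only difference is presentation. The paper writes the transformed differences as $a'-b'=(x-y)^2[2\mathfrak{D}+\mathfrak{A}(x-y)]$ and $b'-c'=(x-y)^2\mathfrak{D}$ with $\mathfrak{D}=(b-c)(x-y)+\mathfrak{B}y$, then chooses $x=\mathfrak{B}+y$, $y=-(b-c)$ so that $\mathfrak{D}=0$ at once and $a'-b'=\mathfrak{A}\mathfrak{B}^3\ne0$; this single choice handles $b=c$ and $b\ne c$ uniformly, with no case split. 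You instead recast the action on $(\mathfrak{A},\mathfrak{B},\mathfrak{C})$ as a diagonal map $(\alpha^3\mathfrak{A},\alpha^2\mu\mathfrak{B},\mu^3\mathfrak{C})$ and solve $\alpha\mathfrak{A}=\mu\mathfrak{B}$, which forces the separate case $b=c$ because you normalize $y=1$. The two choices of $(x,y)$ agree up to scaling by $-(b-c)$, so the constructions are literally the same gadget instance when $b\ne c$; your diagonalization viewpoint is a nice structural gloss on the same computation, but the paper's normalization is marginally cleaner since it avoids the case split.
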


\begin{proof}
 Consider the gadget in Figure~\ref{fig:gadget:ternary:local_holographic_transformation}.
 We assign $\langle a,b,c \rangle$ to the circle vertex and $\langle x,y \rangle$ to the square vertices for some $x,y \in \mathbb{C}$ of our choice,
 to be determined shortly.
 By Lemma~\ref{lem:compute:ternary:holographic_transformation},
 the succinct ternary signature of type $\tau_3$ for the resulting gadget is $\langle a',b',c' \rangle$,
 where
 \begin{align*}
  a' - b' = (x - y)^2 [2 \mathfrak{D} + \mathfrak{A} (x - y)]
            \qquad \text{and} \qquad
  b' - c' = (x - y)^2 \mathfrak{D}
 \end{align*}
 with $\mathfrak{D} = (b - c) (x - y) + \mathfrak{B} y$.
 We pick $x = \mathfrak{B} + y$ and $y = -(b - c)$ so that $\mathfrak{D} = 0$ and thus $b' - c' = 0$.
 Then the first difference simplifies to $a' - b' = \mathfrak{A} \mathfrak{B}^3 \ne 0$.
 This signature has the desired properties,
 so we are done.
\end{proof}

The previous proof fails when $\mathfrak{A} = 0$
because such signatures are invariant set-wise under this type of local holographic transformation.
With the exception of a single point,
we can use this same gadget construction to reduce between any two of these points.

\begin{lemma} \label{lem:ternary:normalize_preparation}
 Suppose $\kappa \ge 3$ is the domain size and $b,c,s,t \in \mathbb{C}$.
 Let $\mathcal{F}$ be a signature set containing the succinct ternary signature $\langle 3 b - 2 c, b, c \rangle$ of type $\tau_3$
 and the succinct binary signature $\langle x,y \rangle$ of type $\tau_2$ for all $x,y \in \mathbb{C}$.
 If $b \ne c$, $3 b + (\kappa - 3) c \ne 0$, and $3 s + (\kappa - 3) t \ne 0$,
 then
 \[
  \PlHolant(\mathcal{F} \union \{\langle 3 s - 2 t, s, t \rangle\}) \le_T \PlHolant(\mathcal{F}),
 \]
 where $\langle 3 s - 2 t, s, t \rangle$ is a succinct ternary signature of type $\tau_3$.
\end{lemma}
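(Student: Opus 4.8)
The plan is to reuse the local holographic transformation gadget of Figure~\ref{fig:gadget:ternary:local_holographic_transformation}, exactly as in the proof of Lemma~\ref{lem:ternary:construct_abb}, but now aiming to land on the prescribed target signature rather than merely some $\langle a',b',b'\rangle$. Assign $\langle 3b-2c,b,c\rangle$ to the circle vertex and the succinct binary signature $\langle x,y\rangle$ to each square vertex, with $x,y$ to be chosen. Write $M = (x-y)I_\kappa + yJ_\kappa$ for the $\kappa\times\kappa$ matrix of $\langle x,y\rangle$; it is symmetric, with eigenvalue $\mu_1 := x+(\kappa-1)y$ on the all-ones vector $\mathbf{1}$ and eigenvalue $\mu_2 := x-y$ on $\mathbf{1}^\perp$, and the gadget signature is $M^{\otimes 3}$ applied to $\langle 3b-2c,b,c\rangle$. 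First I would record, using Lemma~\ref{lem:compute:ternary:holographic_transformation} together with the difference formulas $a'-b' = (x-y)^2[2\mathfrak{D}+\mathfrak{A}(x-y)]$, $b'-c' = (x-y)^2\mathfrak{D}$, $\mathfrak{D} = (b-c)(x-y)+\mathfrak{B}y$ already established there, and the identity $\mathfrak{C}(g) = \frac{1}{\kappa}\langle\mathbf{1}^{\otimes 3},g\rangle$, that the output signature $\langle a',b',c'\rangle$ satisfies
\[
 \mathfrak{A}' = \mu_2^3\,\mathfrak{A}, \qquad \mathfrak{B}' = \mu_1\mu_2^2\,\mathfrak{B}, \qquad \mathfrak{C}' = \mu_1^3\,\mathfrak{C},
\]
where primes denote the quantities~\eqref{eqn:ternary:frakA}--\eqref{eqn:ternary:frakC} of the output. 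In short, $\mathfrak{A},\mathfrak{B},\mathfrak{C}$ are the eigen-coordinates of this family of local holographic transformations.

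For the input $\langle 3b-2c,b,c\rangle$ one checks $\mathfrak{A} = 0$, $\mathfrak{B} = \kappa(b-c)$, and $\mathfrak{C} = \kappa\bigl(3b+(\kappa-3)c\bigr)$, so the hypotheses $b\ne c$ and $3b+(\kappa-3)c\ne 0$ say exactly $\mathfrak{B}\ne 0$ and $\mathfrak{C}\ne 0$; likewise the target has $\mathfrak{A}_\star = 0$, $\mathfrak{B}_\star = \kappa(s-t)$, and $\mathfrak{C}_\star = \kappa\bigl(3s+(\kappa-3)t\bigr)\ne 0$. Since $\mathfrak{A}' = \mu_2^3\cdot 0 = 0$, the output is automatically of the form $\langle 3b'-2c',b',c'\rangle$, and because $(b,c)\mapsto(\mathfrak{B},\mathfrak{C})$ is an invertible linear map on the subspace $\mathfrak{A}=0$, two such signatures are nonzero scalar multiples of each other iff their pairs $(\mathfrak{B},\mathfrak{C})$ are proportional by a nonzero factor. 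Thus it suffices to choose $x,y$ so that $(\mathfrak{B}',\mathfrak{C}') = \lambda(\mathfrak{B}_\star,\mathfrak{C}_\star)$ for some $\lambda\ne 0$; by the eigenvalue identities this amounts to making $\mu_1\ne 0$ (so that $\mathfrak{C}' = \mu_1^3\mathfrak{C}\ne 0$ and $\lambda = \mathfrak{C}'/\mathfrak{C}_\star$ is forced and nonzero) and solving $(\mu_2/\mu_1)^2 = \mathfrak{B}_\star\mathfrak{C}/(\mathfrak{C}_\star\mathfrak{B}) =: R$.

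Here $R$ is a well-defined complex number (both hypotheses make the denominator nonzero), and the only possible obstruction to realizing a given ratio $\mu_2/\mu_1$ is the value $1$, since $\mu_2/\mu_1 = 1$ would force $\kappa = 0$. But at most one of the two square roots of $R$ equals $1$, so I would take $\zeta$ to be a square root of $R$ with $\zeta\ne 1$, set $y = 1$ and $x = \bigl(1+\zeta(\kappa-1)\bigr)/(1-\zeta)$; a short computation gives $\mu_1 = \kappa/(1-\zeta)\ne 0$ and $\mu_2 = \zeta\mu_1$. When $R\ne 0$ we also get $\mu_2\ne 0$, $M$ is invertible, and the identities match the output to $\langle 3s-2t,s,t\rangle$ up to the nonzero factor $\lambda$. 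When $R = 0$ --- which happens precisely when $s = t$ (and then $s = t\ne 0$ by $\mathfrak{C}_\star\ne 0$) --- we have $\zeta = 0$, $\mu_2 = 0$, $M$ is a nonzero multiple of $J_\kappa$, and the output is a nonzero multiple of $\mathbf{1}^{\otimes 3} = \langle 1,1,1\rangle$, again proportional to $\langle s,s,s\rangle$. Since the gadget is planar and realizing it reduces $\PlHolant(\mathcal{F}\union\{\langle 3s-2t,s,t\rangle\})$ to $\PlHolant(\mathcal{F})$ (scaling a signature not affecting complexity), the lemma follows.

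I expect the only real work to be establishing the three eigenvalue identities cleanly: the $\mathfrak{C}'$ identity is immediate from $M\mathbf{1} = \mu_1\mathbf{1}$, while $\mathfrak{A}'$ and $\mathfrak{B}'$ come from substituting $\mathfrak{A} = 0$ and $\mathfrak{B} = \kappa(b-c)$ into the difference formulas and simplifying $\mathfrak{D}$ to $(b-c)\mu_1$. The remaining care is bookkeeping --- verifying that $R$ fails to be finite and nonzero only in the $s=t$ case, and that the single value of $\langle 3s-2t,s,t\rangle$ genuinely unreachable by this construction is $\langle 3(\kappa-1),\kappa-3,-3\rangle$ (the case $\mathfrak{C}_\star = 0$), which is exactly why that hypothesis is imposed.
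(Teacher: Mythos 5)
Your proof is correct and takes a genuinely different, more structural route than the paper's. The paper's proof works directly with the explicit formulas for $\hat{b}$ and $\hat{c}$ from Lemma~\ref{lem:compute:ternary:holographic_transformation}: it fixes $x = y + \sqrt{s-t}$ to control $\hat{b}-\hat{c}$, then solves a nondegenerate quadratic in $y$ to set $\hat{c}$, and only at the end checks that the overall prefactor $x+(\kappa-1)y$ does not vanish (by plugging the bad value of $y$ back into the quadratic). You instead expose what is really going on: the local holographic transformation with symmetric matrix $M = yJ_\kappa + (x-y)I_\kappa$ acts \emph{diagonally} on the coordinates $(\mathfrak{A},\mathfrak{B},\mathfrak{C})$ with weights $(\mu_2^3, \mu_1\mu_2^2, \mu_1^3)$, and on the subspace $\mathfrak{A}=0$ the map $(b,c)\mapsto(\mathfrak{B},\mathfrak{C})$ is a linear isomorphism, so hitting the target up to scaling reduces to solving $(\mu_2/\mu_1)^2 = R$ with $\mu_1\ne 0$. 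I checked the three eigenvalue identities against the paper's difference formulas (they do hold in full generality, not just when $\mathfrak{A}=0$, as you assert), and the $R=0$ case ($s=t$, $s\ne 0$, output a nonzero multiple of $\mathbf{1}^{\otimes 3}$) is handled correctly. This buys a cleaner explanation of why the hypothesis $3s+(\kappa-3)t\ne 0$ is exactly what is needed --- the single genuinely unreachable projective point on $\mathfrak{A}=0$ is $\mathfrak{C}_\star = 0$ --- whereas the paper's calculation makes that exclusion look like an incidental nonvanishing check. The paper's version has the virtue of staying entirely within formulas already computed and requiring no additional identity-verification.

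One imprecision worth noting: you assert that ``$\mu_2/\mu_1 = 1$ would force $\kappa=0$,'' but $\mu_2=\mu_1$ forces $\kappa y = 0$, i.e., $y=0$, not $\kappa=0$; your statement is only correct relative to the particular choice $y=1$ that you fix. The ratio $\mu_2/\mu_1 = 1$ is in fact achievable (take $y=0$, $x=1$). This does not affect correctness, since all you actually need is a square root $\zeta$ of $R$ with $\zeta\ne 1$, and at most one of $\pm\sqrt{R}$ can equal $1$ --- but the stated reason should be that your parametrization $y=1$, $x = (1+\zeta(\kappa-1))/(1-\zeta)$ requires $\zeta\ne 1$, not that such a ratio is obstructed in general. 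Alternatively you could parametrize by $(\mu_1,\mu_2)$ directly (the map $(x,y)\mapsto(\mu_1,\mu_2)$ is a linear isomorphism with determinant $-\kappa$), set $\mu_1=1$, $\mu_2=\zeta$ for either square root, and avoid the issue entirely.
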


\begin{proof}
 Consider the gadget in Figure~\ref{fig:gadget:ternary:local_holographic_transformation}.
 We assign $\langle 3 b - 2 c, b, c \rangle$ to the circle vertex and $\langle x,y \rangle$ to the square vertices for some $x,y \in \mathbb{C}$ of our choice,
 to be determined shortly.
 By Lemma~\ref{lem:compute:ternary:holographic_transformation},
 the signature of this gadget is $f = [x + (\kappa - 1) y] \langle 3 \hat{b} - 2 \hat{c}, \hat{b}, \hat{c} \rangle$,
 where
 \begin{align*}
  \hat{b} &=   b x^2
        + 2 [2 b + (\kappa - 3) c] x y
        + [(3 \kappa - 5) b + (\kappa^2 - 5 \kappa + 6) c] y^2
        \qquad \text{and}\\
  \hat{c} &=   c x^2
        + 2 [3 b + (\kappa - 4) c] x y
        + [(3 \kappa - 6) b + (\kappa^2 - 5 \kappa + 7) c] y^2.
 \end{align*}
 
 We note that the difference $\hat{b} - \hat{c}$ nicely factors as
 \[
  \hat{b} - \hat{c} = (b - c) (x - y)^2.
 \]
 We pick $x = y + \sqrt{s - t}$ so that $\hat{b} - \hat{c} = (b - c) (s - t)$ is the desired difference $s - t$ up to a nonzero factor of $b - c$.
 Then we want to set $\hat{c}$ to be $(b - c) t$.
 With $x = y + \sqrt{s - t}$,
 we can simplify $(b - c) t - \hat{c}$ to
 \begin{equation} \label{eqn:ternary:normalize_preparation}
  (b - c) t - \hat{c}
  = -\kappa [3 b + (\kappa - 3) c] y^2 - 2 \sqrt{s-t} [3 b + (\kappa - 3) c] y + b t - c s.
 \end{equation}
 Since $\kappa [3 b + (\kappa - 3) c] \ne 0$,~(\ref{eqn:ternary:normalize_preparation}) is a nontrivial quadratic polynomial in $y$,
 so we can set $y$ such that this expression vanishes.
 Then the signature is $f = (b - c) [x + (\kappa - 1) y] \langle 3 s - 2 t, s, t \rangle$.
 It remains to check that $x + (\kappa - 1) y \ne 0$.
 
 If $x + (\kappa - 1) y = 0$,
 then $y = -\frac{\sqrt{s-t}}{\kappa}$.
 However, plugging this into~(\ref{eqn:ternary:normalize_preparation}) gives $\frac{(b-c) [3 s + (\kappa - 3) t]}{k} \ne 0$,
 so $x + (\kappa - 1) y$ is indeed nonzero.
\end{proof}

If $\mathfrak{A} = 0$ and $3 b + (\kappa - 3) c = 0$,
then $-3 \langle a,b,c \rangle$ simplifies to $c \langle 3 (\kappa - 1), \kappa - 3, -3 \rangle$,
which is a failure condition of the previous lemma.
The reason is that this signature is pointwise invariant under such local holographic transformations.
However, a different ternary construction can reach this point.

\begin{figure}[t]
 \centering
 \begin{tikzpicture}[scale=\scale,transform shape,node distance=\nodeDist,semithick]
  \node [external] (0)              {};
  \node [internal] (1) [below of=0] {};
  \path (1) ++(-120:\nodeDist) node [internal] (2) {} ++(-150:\nodeDist) node [external] (3) {};
  \path (1) ++( -60:\nodeDist) node [internal] (4) {} ++( -30:\nodeDist) node [external] (5) {};
  \path (0) edge (1)
        (1) edge (2)
            edge (4)
        (2) edge (3)
            edge (4)
        (4) edge (5);
  \begin{pgfonlayer}{background}
   \node[draw=\borderColor,thick,rounded corners,inner xsep=12pt,inner ysep=12pt,fit = (1) (2) (4)] {};
  \end{pgfonlayer}
 \end{tikzpicture}
 \caption{Triangle gadget used to construct $\langle 3 (\kappa - 1), \kappa - 3, -3 \rangle$.}
 \label{fig:gadget:ternary:triangle}
\end{figure}
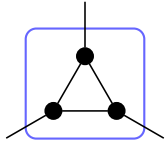

\begin{lemma} \label{lem:ternary:fixed_point}
 Suppose $\kappa \ge 3$ is the domain size and $b,c \in \mathbb{C}$.
 Let $\mathcal{F}$ be a signature set containing the succinct ternary signature $\langle 3 b - 2 c, b, c \rangle$ of type $\tau_3$
 and the succinct binary signature $\langle x,y \rangle$ of type $\tau_2$ for every $x,y \in \mathbb{C}$.
 If $b \ne c$,
 then
 \[
  \PlHolant(\mathcal{F} \union \{\langle 3 (\kappa - 1), \kappa - 3, -3 \rangle\}) \le_T \PlHolant(\mathcal{F}),
 \]
 where $\langle 3 (\kappa - 1), \kappa - 3, -3 \rangle$ is a succinct ternary signature of type $\tau_3$.
\end{lemma}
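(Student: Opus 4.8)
## Proof proposal for Lemma~\ref{lem:ternary:fixed_point}

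\textbf{Overall approach.} The plan is to realize $\langle 3(\kappa-1),\kappa-3,-3\rangle$ by interpolation, using the triangle gadget of Figure~\ref{fig:gadget:ternary:triangle} as the building block of a recursive construction, and then invoke the interpolation machinery of Section~\ref{sec:interpolation} (specifically Lemma~\ref{lem:interpolate_all_not_orthogonal}, possibly via Lemma~\ref{lem:2nd_condition_implication}). The point $\langle 3(\kappa-1),\kappa-3,-3\rangle$ is special precisely because it is a fixed point of the local holographic transformations used in Lemmas~\ref{lem:ternary:normalize_preparation} and~\ref{lem:ternary:construct_abb}, so a genuinely different gadget is needed; the triangle gadget (three copies of the ternary signature with a dangling edge at each vertex, arranged in a triangle) is the natural candidate since it produces a new ternary signature while mixing the three succinct entries nontrivially.

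\textbf{Key steps, in order.} First, I would compute the succinct ternary signature of the triangle gadget of Figure~\ref{fig:gadget:ternary:triangle} when each vertex is assigned $\langle 3b-2c,b,c\rangle$ (this is the ``$\mathfrak{A}=0$'' slice, parametrized so that $a=3b-2c$); more generally I expect we should first attach the available binary signatures $\langle x,y\rangle$ on the internal edges to get an extra degree of freedom, producing a family of ternary signatures parametrized by $(x,y)$. Second, using the Jordan/diagonal form already available for $2\times 2$ recurrences on type-$\tau_3$ signatures restricted to the affine subspace cut out by $\mathfrak{A}=0$ (the entries of such signatures satisfy the linear relation $a-3b+2c=0$, so effectively we have a $2$-dimensional recurrence, or a $3$-dimensional one with one trivially-handled coordinate), I would set up the recurrence matrix $M$ for iterating the triangle gadget and identify its eigenvalues. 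Third, I would verify the hypotheses of the interpolation lemma: that $M$ is diagonalizable, that the initial signature $s$ (a single edge, or two parallel edges, restricted appropriately) is non-orthogonal to exactly the right eigenvectors — here Lemma~\ref{lem:2nd_condition_implication} reduces this to a rank computation on $[s\ Ms\ \ldots]$ — and crucially that the relevant eigenvalues satisfy the lattice condition of Definition~\ref{def:interpolation:lattice_condition}. Fourth, since $\langle 3(\kappa-1),\kappa-3,-3\rangle$ lies in the fixed-point locus and hence in the span of the appropriate eigenvectors, Lemma~\ref{lem:interpolate_all_not_orthogonal} then yields the reduction. Throughout, the hypothesis $b\neq c$ should be exactly what guarantees the relevant binary factors (of the form $(x-y)$ or $b-c$) appearing in denominators are nonzero, and hence that the gadget values used are nonzero.

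\textbf{The main obstacle.} I expect the lattice condition to be the hard part. For a $2\times 2$ recurrence (ratio of two eigenvalues), the lattice condition just says the ratio is not a root of unity; one must show the two eigenvalues of $M$ — which will be explicit rational (or low-degree algebraic) functions of $\kappa$ and of the chosen binary parameters — never have ratio a root of unity for any integer $\kappa\ge 3$. If a single fixed choice of the binary signature $\langle x,y\rangle$ does not suffice (because the ratio is a root of unity for some bad $\kappa$), the E~Pluribus~Unum idea and the Eigenvalue Shifted Triple technique from Section~\ref{sec:outline} would be deployed: vary the additive shift provided by the binary signature to produce several recurrences, at least one of which has eigenvalues of distinct complex norm (Lemma~\ref{lem:binary:EST}) and hence satisfies the lattice condition. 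A secondary subtlety is bookkeeping the third coordinate of the type-$\tau_3$ recurrence: on the locus $\mathfrak{A}=0$ the signatures form a proper invariant subspace, so one must either restrict to a $2$-dimensional succinct type or check that the extra eigenvector does not interfere with the non-orthogonality count — a routine but necessary verification.
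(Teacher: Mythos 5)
You correctly identify that $\langle 3(\kappa-1),\kappa-3,-3\rangle$ is pointwise invariant under local holographic transformations and that the triangle gadget of Figure~\ref{fig:gadget:ternary:triangle} is the right tool to escape that fixed-point locus, but the interpolation plan you build around it does not go through. The triangle gadget maps $\langle 3b-2c,b,c\rangle$ to $\langle 3b'-2c',b',c'\rangle$ where $b',c'$ are \emph{cubic} in $(b,c)$ (Lemma~\ref{lem:compute:ternary:triangle}); iterating it produces degrees $3,9,27,\dots$, so there is no linear recurrence $f_s = M^s f_0$ and hence no recurrence matrix $M$, no eigenvalues, and nothing to which Lemma~\ref{lem:interpolate_all_not_orthogonal} or Lemma~\ref{lem:binary:EST} could be applied. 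The interpolation machinery of Section~\ref{sec:interpolation} is tailored to linear recurrences (gadgets that act linearly on the succinct signature entries, e.g.\ binary-through-binary or quaternary-through-quaternary compositions), and a ternary-in/ternary-out gadget does not supply one.

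The paper sidesteps this entirely by not interpolating at all. It first dispatches the degenerate case $3b+(\kappa-3)c=0$: there $b\ne c$ forces $c\ne 0$, and $\langle 3b-2c,b,c\rangle$ already equals $-\frac{c}{3}\langle 3(\kappa-1),\kappa-3,-3\rangle$. Otherwise, Lemma~\ref{lem:ternary:normalize_preparation} (which uses the local holographic transformation gadget, which \emph{is} linear in the ternary entries) realizes \emph{every} $\langle 3s-2t,s,t\rangle$ with $3s+(\kappa-3)t\ne 0$, giving two continuous parameters. Feeding such a signature into the triangle gadget a \emph{single} time yields $\langle 3s'-2t',s',t'\rangle$ by Lemma~\ref{lem:compute:ternary:triangle}, and the factorization $s'-t'=\kappa(s-t)^2[4s+(\kappa-4)t]$ lets one set $s=\frac{1-(\kappa-4)t}{4}$, after which $t'+3(s-t)^2=0$ reduces to a nontrivial cubic in $t$ (equation~(\ref{eqn:ternary:fixed_point})) and hence has a root; a final check that this root satisfies $s\ne t$ and $3s+(\kappa-3)t\ne 0$ completes the construction. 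The target is thus realized directly, with no lattice condition, diagonalizability, non-orthogonality rank computation, or E~Pluribus~Unum step. Your intuition that the available binary signatures supply exactly the needed degrees of freedom is right; the missing idea is to spend those degrees of freedom (packaged modularly as Lemma~\ref{lem:ternary:normalize_preparation}) to hit the target on the nose after one application of the triangle gadget, rather than to attempt a recurrence that the gadget's nonlinearity forbids.
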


\begin{proof}
 If $3 b + (\kappa - 3) c = 0$,
 then up to a nonzero factor of $\frac{-c}{3}$,
 $\langle 3 b - 2 c, b, c \rangle$ is already the desired signature.
 Otherwise, $3 b + (\kappa - 3) c \ne 0$.
 By Lemma~\ref{lem:ternary:normalize_preparation},
 we have $\langle 3 s - 2 t, s, t \rangle$ for any $s,t \in \mathbb{C}$ satisfying $3 s + (\kappa - 3) t \ne 0$.
 
 Consider the gadget in Figure~\ref{fig:gadget:ternary:triangle}.
 We assign $\langle 3 s - 2 t, s, t \rangle$ to vertices for some $s, t \in \mathbb{C}$ satisfying $3 s + (\kappa - 3) t \ne 0$ of our choice,
 to be determined shortly.
 By Lemma~\ref{lem:compute:ternary:triangle},
 the signature of this gadget is $\langle 3 s' - 2 t', s', t' \rangle$,
 where
 \begin{align*}
  s' &=   (5 \kappa + 14) s^3
        + (\kappa^2 + 9 \kappa - 42) s^2 t
        + (7 \kappa^2 - 33 \kappa + 42) s t^2
        + (\kappa - 2) (\kappa^2 - 6 \kappa + 7) t^3,
        \qquad \text{and}\\
  t' &=   (\kappa + 14) s^3
        + 21 (\kappa - 2) s^2 t
        + 3 (3 \kappa^2 - 15 \kappa + 14) s t^2
        + (\kappa^3 - 9 \kappa^2 + 23 \kappa - 14) t^3.
 \end{align*}
 It suffices to pick $s$ and $t$ satisfying $3 s + (\kappa - 3) t \ne 0$ such that $s' = \kappa - 3$ and $t' = -3$ up to a common nonzero factor.
 
 We note that the difference $s' - t'$ factors as
 \[
  s' - t' = \kappa (s - t)^2 [4 s + (\kappa - 4) t].
 \]
 We pick $s = \frac{-(\kappa - 4) t + 1}{4}$ so that $s' - t' = \kappa (s - t)^2$ is the desired difference $\kappa$ up to a factor of $(s - t)^2$.
 Then we want to set $t'$ to be $-3 (s - t)^2$.
 With $s = \frac{-(\kappa - 4) t + 1}{4}$,
 we can simplify $-3 (s - t)^2 - t'$ to
 \begin{equation} \label{eqn:ternary:fixed_point}
  -3 (s - t)^2 - t'
  = \frac{1}{64} \left[\kappa^3 (\kappa - 2) t^3 - 3 \kappa^2 (\kappa + 2) t^2 + 3 \kappa (\kappa - 10) t - (\kappa + 26)\right].
 \end{equation}
 Since $\kappa \ge 3$,~(\ref{eqn:ternary:fixed_point}) is a nontrivial cubic polynomial in $t$,
 so we can set $t$ such that this expression vanishes.
 Then $\langle 3 s' - 2 t', s', t' \rangle = (s - t)^2 \langle 3 (\kappa - 1), \kappa - 3, -3 \rangle$.
 It remains to check that $s \ne t$ and $3 s + (\kappa - 3) t \ne 0$.
 
 If $s = t$,
 then $t = \frac{1}{\kappa}$.
 Plugging this into~(\ref{eqn:ternary:fixed_point}) gives~$-1$,
 so $s \ne t$.
 If $3 s + (\kappa - 3) t = 0$,
 then $t = -\frac{3}{\kappa}$.
 Plugging this into~(\ref{eqn:ternary:fixed_point}) gives $1 - \kappa \ne 0$,
 so $3 s + (\kappa - 3) t \ne 0$.
\end{proof}

\subsection{Dose of an effective Siegel's Theorem and Galois theory} \label{subsec:ternary:<(k-1),k-3,-3>}

It suffices to show that $\langle 3 (\kappa - 1), \kappa - 3, -3 \rangle$ is $\SHARPP$-hard for all $\kappa \ge 3$.
The general strategy is to use interpolation.
However, proving that this interpolation succeeds presents a significant challenge.

Consider the polynomial $p(x,y) \in \Z[x,y]$ defined by
\begin{align*}
 p(x,y)
 &= x^5 - 2 x^3 y - x^2 y^2 - x^3 +  x y^2 + y^3  - 2 x^2 - x y\\
 &= x^5 - (2 y + 1) x^3 - (y^2 + 2) x^2 + y (y-1) x + y^3.
\end{align*}
We consider $y$ as an integer parameter $y \ge 4$,
and treat $p(x,y)$ as an infinite family of quintic polynomials in $x$ with integer coefficients.
We want to show that the roots of all these quintic polynomials satisfy the lattice condition.
First, we determine the number of real and nonreal roots.

\begin{lemma} \label{lem:ternary:3r2c_roots}
 For any integer $y \ge 1$,
 the polynomial $p(x,y)$ in $x$ has three distinct real roots and two nonreal complex conjugate roots.
\end{lemma}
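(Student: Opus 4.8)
The plan is to combine a sign-change (Descartes/intermediate-value) analysis with a discriminant-positivity argument to pin down exactly three real and two nonreal roots. First I would rewrite $p(x,y)$ in the displayed second form, $p(x,y) = x^5 - (2y+1)x^3 - (y^2+2)x^2 + y(y-1)x + y^3$, and evaluate it at a few strategically chosen points to locate sign changes. Since $y \ge 1$, the leading behavior gives $p(x,y) \to +\infty$ as $x \to +\infty$ and $p(x,y) \to -\infty$ as $x \to -\infty$. I would check $p(0,y) = y^3 > 0$, and look for a negative value at some moderately large positive $x$ (e.g. evaluate at $x = y$ or $x = \sqrt{2y}$-type points where the negative cubic and quadratic terms dominate the quintic) and at some negative $x$; the goal is to exhibit sign patterns $-,+,-,+$ across four consecutive test points, which forces at least three real roots by the intermediate value theorem.

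Second, to show there are \emph{exactly} three real roots (hence exactly two nonreal, necessarily a conjugate pair since the coefficients are real), I would argue that a fourth or fifth real root is impossible. The cleanest route is via the discriminant: $p(x,y)$ has exactly three real roots and one conjugate pair of nonreal roots iff its discriminant $\operatorname{disc}_x(p)$ is negative (for a quintic, five real roots or one real plus two conjugate pairs both give positive discriminant; three real plus one conjugate pair gives negative discriminant — provided there are no repeated roots, which I would note separately). So I would compute $\Delta(y) := \operatorname{disc}_x\big(p(x,y)\big)$, a polynomial in $y$, and show $\Delta(y) < 0$ for all integers $y \ge 1$ (in fact for all real $y \ge 1$, which is cleaner). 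This reduces to bounding a single univariate polynomial inequality. An alternative, avoiding the quintic discriminant, is to study $p$ and $p'$ together: show $p'(x,y)$ (a quartic in $x$) has exactly two real roots for $y \ge 1$, so $p$ has at most two critical points, hence at most three real roots; combined with the three sign changes above this gives exactly three.

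Third, I would confirm the three real roots are \emph{distinct} (as the statement asserts): repeated real roots would be common roots of $p$ and $p'$, equivalently zeros of $\Delta(y)$ or of $\gcd(p,p')$; showing $\Delta(y) \neq 0$ for $y \ge 1$ handles this, and it is subsumed by the strict inequality $\Delta(y) < 0$. Distinctness of the nonreal pair from each other and from the real roots is automatic once $\Delta(y) \neq 0$.

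The main obstacle I anticipate is the discriminant computation: $\operatorname{disc}_x(p)$ for a bivariate quintic is a sizeable polynomial in $y$, and proving $\Delta(y) < 0$ for all $y \ge 1$ may require care in organizing the estimate (grouping terms, factoring out powers of $y$, bounding low-order terms against the dominant one). If the discriminant route proves unwieldy, the $p'$-has-two-real-roots route is the fallback: the quartic $p'(x,y) = 5x^4 - 3(2y+1)x^2 - 2(y^2+2)x + y(y-1)$ is more tractable, and one can analyze $p''$ (a cubic) to count critical points of $p'$. Either way, the qualitative conclusion is robust; the work is in making one univariate sign estimate rigorous and clean.
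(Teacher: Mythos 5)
Your discriminant route is exactly the paper's proof: compute $\operatorname{disc}_x\bigl(p(x,y)\bigr)$, show it is negative for all integers $y \ge 1$ (the paper does this by substituting $y = z+1$ and observing that every coefficient of the resulting polynomial in $z$ is positive), and then invoke the sign rule for separable real polynomials to conclude there is exactly one conjugate pair, hence three distinct real roots. Note, however, that your first step (sign-change bookkeeping to get at least three real roots) is superfluous once the discriminant is known to be negative and nonzero, since for a separable quintic the sign of the discriminant already pins down the count as three real plus one conjugate pair; the paper omits it, and you should too.
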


\begin{proof}
 Up to a factor of $-4 y^2$,
 the discriminant of $p(x,y)$ (with respect to $x$) is
 \[
  27 y^{11} - 4 y^{10} + 726 y^9 - 493 y^8 + 2712 y^7 - 400 y^6 - 2503 y^5 + 475 y^4 + 956 y^3 - 904 y^2 + 460 y + 104.
 \]
 By replacing $y$ with $z + 1$,
 we get
 \begin{gather*}
    27 z^{11} + 293 z^{10} + 2171 z^9 + 10316 z^8 + 33334 z^7 + 77398 z^6 + 127383 z^5 \\
  + 141916 z^4 + 102097 z^3 + 44373 z^2 + 10336 z + 1156,
 \end{gather*}
 which is positive for any $z \ge 0$.
 Thus the discriminant is negative.
 
 Therefore, $p(x,y)$ has distinct roots in $x$ for all $y \ge 1$.
 Furthermore, with a negative discriminant,
 $p(x,y)$ has $2 s$ nonreal complex conjugate roots for some odd integer $s$.
 Since $p(x,y)$ is a quintic polynomial (in $x$),
 the only possibility is $s = 1$.
\end{proof}

We suspect that for any integer $y \ge 4$,
$p(x,y)$ is in fact irreducible over $\Q$ as a polynomial in $x$.
When considering $y$ as an indeterminate,
the bivariate polynomial $p(x,y)$ is irreducible over $\Q$ and the algebraic curve it defines has genus~$3$,
so by Theorem~1.2 in~\cite{Mul99},
$p(x,y)$ is reducible over $\Q$ for at most a finite number of $y \in \Z$.
For any integer $y \ge 4$, if $p(x,y)$ is irreducible over $\Q$ 
as a polynomial in $x$, then
its Galois group is $S_5$ and its roots satisfy the lattice condition.

\begin{lemma} \label{lem:ternary:lattice:y>=4:irreducible}
 For any integer $y \ge 4$,
 if $p(x,y)$ is irreducible in $\Q[x]$,
 then the roots of $p(x,y)$ satisfy the lattice condition.
\end{lemma}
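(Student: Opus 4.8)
The plan is to show that the irreducibility hypothesis forces the Galois group of $p(x,y)$ over $\Q$ to be the full symmetric group $S_5$, and then to invoke Lemma~\ref{lem:lattice_condition:Sn_An}. Fix an integer $y \ge 4$ and suppose $p(x,y)$ is irreducible in $\Q[x]$. Let $G$ be its Galois group over $\Q$, regarded as a subgroup of $S_5$ through its action on the five roots. Because $p(x,y)$ is irreducible of degree $5$, the group $G$ is a \emph{transitive} subgroup of $S_5$; in particular $5 \mid |G|$, so by Cauchy's theorem $G$ contains an element of order $5$, i.e.\ a $5$-cycle.

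Next I would pin down complex conjugation as a transposition inside $G$. By Lemma~\ref{lem:ternary:3r2c_roots}, $p(x,y)$ has three distinct real roots and one conjugate pair of nonreal roots. Complex conjugation is an automorphism of the splitting field that fixes $\Q$, hence lies in $G$; it fixes each of the three real roots and swaps the two nonreal roots, so as a permutation of the five roots it is exactly one transposition. Now a transitive subgroup of $S_5$ that contains both a $5$-cycle and a transposition is all of $S_5$: after replacing the $5$-cycle by a suitable power we may assume it is $(1\,2\,3\,4\,5)$ and the transposition is $(1\,2)$, and $\langle (1\,2),(1\,2\,3\,4\,5)\rangle = S_5$. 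Hence $G = S_5$.

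Finally, the roots of $p(x,y)$ do not all have the same complex norm: there are three \emph{distinct} real roots, and for any fixed modulus $r$ at most the two numbers $\pm r$ can realize it, so three distinct reals cannot all have equal absolute value. Thus the hypotheses of Lemma~\ref{lem:lattice_condition:Sn_An} are met with $n = 5$ and Galois group $S_5$, and we conclude that the roots of $p(x,y)$ satisfy the lattice condition.

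The only genuinely delicate point is the group-theoretic step that a transitive subgroup of $S_5$ containing a transposition must be $S_5$; all the rest is bookkeeping supported by Lemma~\ref{lem:ternary:3r2c_roots} (for the root signature, via the negative discriminant) and Lemma~\ref{lem:lattice_condition:Sn_An} (for passing from the Galois group to the lattice condition). I would either cite this standard fact about primitive/transitive subgroups of $S_p$ or include the two-line generation argument above.
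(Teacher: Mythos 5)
Your proof is correct and follows essentially the same route as the paper: use Lemma~\ref{lem:ternary:3r2c_roots} to get exactly three distinct real and two complex-conjugate roots, conclude the Galois group is $S_5$, observe that three distinct reals cannot share a modulus, and finish with Lemma~\ref{lem:lattice_condition:Sn_An}. The only difference is cosmetic: where you unfold the argument (Cauchy's theorem for a $5$-cycle, complex conjugation as the transposition, $\langle (1\,2),(1\,2\,3\,4\,5)\rangle = S_5$), the paper simply cites the well-known fact that an irreducible polynomial of prime degree with exactly two nonreal roots has full symmetric Galois group.
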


\begin{proof}
 By Lemma~\ref{lem:ternary:3r2c_roots},
 $p(x,y)$ has three distinct real roots and two nonreal complex conjugate roots.
 With three distinct real roots,
 we know that not all the roots have the same complex norm.
 It is well-known that an irreducible polynomial of prime degree $n$ with exactly two nonreal roots has $S_n$ as a Galois group over $\Q$
 (for example Theorem~10.15 in~\cite{Ste03}).
 Then we are done by Lemma~\ref{lem:lattice_condition:Sn_An}.
\end{proof}

We know of just five values of $y \in \Z$ for which $p(x,y)$ is reducible as a polynomial in $x$:
\[
 p(x,y) =
 \begin{cases}
  (x - 1) (x^4 + x^3 + 2 x^2 - x + 1)     & y = -1\\
  x^2 (x^3 - x - 2)                       & y =  0\\
  (x + 1) (x^4 - x^3 - 2 x^2 - x + 1)     & y =  1\\
  (x - 1) (x^2 - x - 4) (x^2 + 2 x + 2)   & y =  2\\
  (x - 3) (x^4 + 3 x^3 + 2 x^2 - 5 x - 9) & y =  3.
 \end{cases}
\]
These five factorizations also give five integer solutions to $p(x,y) = 0$.
It is a well-known theorem of Siegel~\cite{Sie29} that an algebraic curve of genus at least~$1$ has only a finite number of integral points.
For this curve of genus~$3$,
Faltings' Theorem~\cite{Fal83} says that there can be only a finite number of rational points.
However these theorems are not \emph{effective} in general.
There are some effective versions of Siegel's Theorem that can be applied to our polynomial,
but the best effective bound that we can find is over $10^{20,000}$~\cite{Wal92} and hence cannot be checked in practice.

However, it is shown in the next lemma that in fact these five are the only integer solutions.
In particular, for any integer $y \ge 4$,
$p(x,y)$ does not have a linear factor in $\Z[x]$, and hence by Gauss's Lemma, also no linear factor in $\Q[x]$.
The following proof is essentially due to Aaron Levin~\cite{Lev13}.
We thank Aaron for suggesting the key auxiliary function $g_2(x,y) = \frac{y^2}{x} + y - x^2 + 1$,
as well as for his permission to include the proof here.
We also thank Bjorn Poonen~\cite{Poo13} who suggested a similar proof.
After the proof, we will explain certain complications in the proof.

\begin{lemma} \label{lem:ternary:lattice:no_linear}
 The only integer solutions to $p(x,y) = 0$ are $(1,-1)$, $(0,0)$, $(-1,1)$, $(1,2)$, and $(3,3)$.
\end{lemma}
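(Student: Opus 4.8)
The plan is to reduce the Diophantine problem $p(x,y)=0$ to finitely many cases by exploiting an auxiliary factorization over $\mathbb{Q}(x)$, following the suggestion of Aaron Levin. The key observation is that any integer solution $(x,y)$ forces $x \mid y^3$ (read off the constant term of $p$ as a polynomial in $x$, namely $y^3$), so in particular $x$ divides $y$ up to the primes, and we can hope to bound things. More precisely, I would introduce the suggested auxiliary function $g_2(x,y) = \frac{y^2}{x} + y - x^2 + 1$ and look for a polynomial identity expressing some power of $y$ (or a simple rational function) in terms of $p(x,y)$, $x$, $y$, and $g_2$. The idea is that $g_2$ is designed so that $x\,g_2(x,y) = y^2 + xy - x^3 + x$ is a genuine polynomial, and one checks that modulo the curve $p(x,y)=0$ the quantity $g_2$ takes a controlled form — ideally $g_2$ is itself forced to be a root of a monic integer polynomial of bounded degree in $x$, or the pair $(x, g_2)$ satisfies a relation of much lower degree than the original genus-$3$ curve.

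Concretely, the steps would be: \textbf{(1)} Establish the divisibility $x \mid y^3$ for any integer solution, and separately analyze the sign/growth behavior: for $|x|$ large, compare $p(x,y)=0$ against the dominant terms to show $y$ is pinned to roughly $x^{5/3}$ (since $y^3 \approx x^5$ when the two top-degree monomials balance), which already confines solutions to a thin region. \textbf{(2)} Use the auxiliary function: compute the resultant or an explicit linear-algebra combination of $p(x,y)$ with the relation $x\,g_2 - (y^2+xy-x^3+x) = 0$ to eliminate the high-degree terms, obtaining a new polynomial relation $h(x, g_2) = 0$ (or $h(x,y,g_2)=0$) of small degree. \textbf{(3)} Because $g_2$ must be a rational number whose denominator divides $x$, and the new relation is low-degree, bound the number of possibilities for $x$ directly (e.g. $g_2$ integral or half-integral forces $x$ to lie in a finite explicit set), then for each such $x$ solve the resulting low-degree polynomial in $y$ and check integrality. \textbf{(4)} Verify that exactly the five listed pairs survive, cross-checking against the five factorizations already displayed.

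The main obstacle I anticipate is \textbf{finding the right algebraic combination in step (2)}: the genus-$3$ curve does not obviously simplify, and the cleverness lies in choosing $g_2$ (and possibly a second auxiliary function, as Bjorn Poonen's alternative suggests) so that the elimination collapses the degree enough to make a direct bound feasible. One has to be careful that $g_2 = \frac{y^2}{x} + y - x^2 + 1$ is only defined when $x \neq 0$ (the case $x=0$ giving $y^3 = 0$, hence $(0,0)$, must be handled separately), and that clearing denominators introduces spurious factors of $x$ that need to be tracked. There may also be a subtlety that the argument naturally produces rational points and one must invoke Gauss's Lemma (as the lemma statement hints) to pass between $\mathbb{Z}[x]$ and $\mathbb{Q}[x]$ when concluding the absence of linear factors for $y \ge 4$. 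Once the degree is reduced, the remaining work is a routine but careful finite case check, and the "certain complications" the paper promises to explain afterward are presumably exactly these denominator/sign bookkeeping issues together with the need to rule out a narrow band of medium-sized $x$ by hand.
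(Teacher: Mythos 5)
Your proposal gestures in the right direction---Runge's method, the auxiliary function $g_2$, and handling $x=0$ separately---but there are several genuine gaps that would prevent the argument from closing.

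\textbf{First, the divisibility you cite is too weak.} From the constant term (in $x$) being $y^3$ you get $x \mid y^3$, but for $g_2(x,y) = y^2/x + y - x^2 + 1$ to be an integer at a solution $(a,b)$ you need $a \mid b^2$. The paper proves this stronger statement with a $p$-adic order argument: if $\operatorname{ord}_p(a) = e$, $\operatorname{ord}_p(b) = f$ with $2f < e$, then rewriting $p(a,b)=0$ as $a^2(a^3 - 2ab - a - b^2 - 2) = -b^3 - ab(b-1)$ and comparing orders (the left side has order $\ge 2e$, the right has order exactly $3f$ because $p \nmid b-1$) yields $2e > 3f$, a contradiction with $3f \ge e$. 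Without $a \mid b^2$ your entire premise that $g_2$ is an integer fails.

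\textbf{Second, the mechanism you propose in step (2) is not how the argument actually runs.} There is no polynomial elimination relation $h(x,g_2)=0$ of low degree to be found; the curve has genus $3$ and the function $g_2$ is nonconstant on it, so no such algebraic collapse occurs. Instead, the proof fixes a branch $y_i(x)$ of the curve at infinity (via its Puiseux expansion), computes explicit algebraic over- and under-approximations $y_i^-(x) < y_i(x) < y_i^+(x)$ (truncated Puiseux series), and then shows that $g_1(x, y_i(x))$ or $g_2(x, y_i(x))$ is pinned strictly inside an open interval between consecutive integers for all integers $x \le -3$ or $x \ge 17$. This gives an absolute (not merely asymptotic) non-integrality, and a finite check covers $-2 \le x \le 16$.

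\textbf{Third, your asymptotics are wrong.} You assert $y \approx x^{5/3}$ from balancing $x^5$ against $y^3$, but the Newton polygon of $p$ at infinity gives three branches: one with $y \sim x^2$ (balancing $-x^2y^2 + y^3$) and two with $y \sim \pm x^{3/2}$ (balancing $x^5 - x^2 y^2$). These distinct growth rates are exactly why \emph{two} auxiliary functions are needed: $g_1(x,y) = y - x^2$ handles the $y \sim x^2$ branch (and also the unique real branch when $x \le -3$), while $g_2$ handles the two $y \sim \pm x^{3/2}$ branches when $x \ge 17$. You never mention $g_1$, and without it the $y \sim x^2$ branch is not controlled by $g_2$ alone, since $g_2$ does not tend to a bounded non-integer limit on that branch. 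So even granting the divisibility and the Runge framework, your plan covers at most one of the three branches.
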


\begin{proof}
 Clearly these five points are solutions to $p(x,y) = 0$.
 For $a \in \Z$ with $-3 < a < 17$,
 one can directly check that $p(a,y) = 0$ has no other integer solutions in $y$.

 Let $(a,b) \in \Z^2$ be a solution to $p(x,y) = 0$ with $a \ne 0$.
 We claim $a \divides b^2$.
 By definition of $p(x,y)$,
 clearly $a \divides b^3$.
 If $p$ is a prime that divides $a$,
 then let $\operatorname{ord}_p(a) = e$ and $\operatorname{ord}_p(b) = f$ be the exact orders with which $p$ divides $a$ and $b$ respectively.
 Then $f \ge 1$ since $3 f \ge e$ and our claim is that $2 f \ge e$.
 Suppose for a contradiction that $2 f < e$.
 From $p(a,b) = 0$,
 we have
 \[
  a^2 (a^3 - 2 a b - a - b^2 - 2) = -b^3 - a b (b - 1).
 \]
 The order with respect to $p$ of the left-hand side is
 \[
  \operatorname{ord}_p\left(a^2 (a^3 - 2 a b - a - b^2 - 2)\right)
  \ge \operatorname{ord}_p\left(a^2\right)
  = 2 e.
 \]
 Since $p$ is relatively prime to $b-1$,
 $\operatorname{ord}_p\left(a b (b - 1) \right) = e +f > 3f$,
 and therefore the order of the right-hand side with respect to $p$ is
 \[
  \operatorname{ord}_p\left(-b^3 - a b (b - 1)\right)
  = \operatorname{ord}_p(b^3) 
  = 3 f.
 \]
 However, $2 e > 3 f$, a contradiction.
 This proves the claim.
 
 Now consider the functions $g_1(x,y) = y - x^2$ and
 $g_2(x,y) = \frac{y^2}{x}  + y - x^2 + 1$.
 Whenever $(a,b) \in \Z^2$ is a solution to $p(x,y) = 0$ with $a \neq 0$,
 $g_1(a,b)$ and $g_2(a,b)$ are integers.
 However,
 we show that if $a \le -3$ or $a \ge 17$,
 then either $g_1(a,b)$ or $g_2(a,b)$ is not an integer.
 
 Let $c_2 = -(x-1) x$, $c_1 = -x (2 x^2 + 1)$, and $c_0 = x^2 (x^3 - x - 2)$ so that $p(x,y) = y^3 + c_2 y^2 + c_1 y + c_0$.
 Then the discriminant of $p(x,y)$ with respect to $y$ is
 \begin{align}
  \operatorname{disc}_y(p(x,y))
  &= c_2^2 c_1^2 - 4 c_1^3 - 4 c_2^3 c_0 - 27 c_0^2 + 18 c_2 c_1 c_0 \notag\\
  &= (x-1) x^3 (4 x^7 + 5 x^6 + x^5 + 45 x^4 + 151 x^3 + 163 x^2 + 67 x - 4). \label{eqn:ternary:lattice:no_linear}
 \end{align}
 Suppose $x \le -3$.
 Replacing $x$ with $-z - 1$ in~(\ref{eqn:ternary:lattice:no_linear}) gives
 \[
  -(z+1)^3 (z+2) (4 z^7 + 23 z^6 + 55 z^5 + 25 z^4 + 21 z^3 + 39 z^2 + 17 z + 14).
 \]
 This is clearly negative (for $z \ge 0$),
 so~(\ref{eqn:ternary:lattice:no_linear}) is negative.
 Thus $p(x,y)$ only has one real root as a polynomial in $y$.
 Let $y_1(x)$ be that root and consider $y_1^-(x) = x^2 + 2 x^{-1}$ and $y_1^+(x) = x^2 + 2 x^{-1} + 2 x^{-2}$.
 We have $p(x, y_1^-(x)) = -2 x^2 + 6 + 4 x^{-1} + 8 x^{-3} < 0$.
 Also $p(x, y_1^+(x)) = 6 + 18 x^{-1} + 16 x^{-2} + 12 x^{-3} + 24 x^{-4} + 24 x^{-5} + 8 x^{-6} > 0$.
 Hence $y_1^-(x) < y_1(x) < y_1^+(x)$,
 and all three are positive since $y_1^-(x)$ is positive.
 Then for $x \le -3$,
 \[
  -1
  < 2 x^{-1}
  = g_1(x, y_1^-(x))
  < g_1(x, y_1(x))
  < g_1(x, y_1^+(x))
  = 2 x^{-1} + 2 x^{-2}
  < 0,
 \]
 so $g_1(x,y_1(x))$ is not an integer.
 Therefore, $y_1(x)$, the only real root for any integer $x \le -3$, is not an integer.
 
 Now suppose $x \ge 17$.
 Then~(\ref{eqn:ternary:lattice:no_linear}) is positive and there are three distinct real roots.
 Similar to the previous argument,
 we have $p(x, y_1^-(x)) < 0$ and $p(x, y_1^+(x)) > 0$.
 Hence there is some root $y_1(x)$ in the open interval $(y_1^-(x), y_1^+(x))$.
 All three terms $y_1^-(x) < y_1(x) < y_1^+(x)$ are positive because $y_1^-(x) > 0$.
 Then
 \[
  0
  < 2 x^{-1}
  = g_1(x, y_1^-(x))
  < g_1(x, y_1(x))
  < g_1(x, y_1^+(x))
  = 2 x^{-1} + 2 x^{-2}
  < 1,
 \]
 so $g_1(x, y_1(x))$ is not an integer.
 
 There are two more real roots.
 Consider
 \begin{align*}
  y_2^-(x) &= x^{3/2} - \frac{1}{2} x + \frac{1}{8} x^{1/2} - \frac{65}{128} x^{-1/2} - 2 x^{-1}
  \qquad \text{and}\\
  y_2^+(x) &= x^{3/2} - \frac{1}{2} x + \frac{1}{8} x^{1/2} - \frac{65}{128} x^{-1/2}.
 \end{align*}
 Replacing $x$ with $(z + 2)^2$ in
 \begin{align*}
  p(x, y_2^-(x)) = {}
  &2 x^{5/2}
  - \frac{2495}{512} x^2
  + \frac{1087}{512} x^{3/2}
  - \frac{19569}{16384} x
  - \frac{8579}{16384} x^{1/2}
  + \frac{126847}{32768}
  + \frac{1452419}{131072} x^{-1/2}\\
  &- \frac{317}{256} x^{-1}
  + \frac{2871103}{2097152} x^{-3/2}
  - \frac{12675}{8192} x^{-2}
  - \frac{195}{32} x^{-5/2}
  - 8 x^{-3}
 \end{align*}
 gives
 \[
  \frac{1}{2097152 (z+2)^6}
  \left(
   \begin{array}{lll}
    4194304 z^{11}
    + 82055168 z^{10}
    + 722808832 z^9
    + 3774605184 z^8\\
    {}+ 12935149184 z^7
    + 30375187136 z^6
    + 49489164080 z^5
    + 55372934880 z^4\\
    {}+ 41238374079 z^3
    + 19431701370 z^2
    + 5465401844 z
    + 812262392
   \end{array}
  \right),
 \]
 which is clearly positive ($z \ge 0$).
 Thus, $p(x, y_2^-(x)) > 0$.
 Also
 \begin{align*}
  p(x, y_2^+(x)) &=\\
  -2 x^{5/2}
  &- \frac{447}{512} x^2
  - \frac{193}{512} x^{3/2}
  - \frac{3185}{16384} x
  + \frac{20605}{16384} x^{1/2}
  - \frac{4225}{32768}
  + \frac{12675}{131072} x^{-1/2}
  - \frac{274625}{2097152} x^{-3/2}\\
  &< 0.
 \end{align*}
 Hence there is some root $y_2(x)$ in the open interval $(y_2^-(x), y_2^+(x))$.
 All three terms  $y_2^-(x) < y_2(x) < y_2^+(x)$ are positive because $y_2^-(x) > 0$.
 Hence, for $x \ge 17$,
 \begin{align*}
  -1
  < {}& -4 x^{-1/2} - \frac{65}{512} x^{-1} - \frac{1}{2} x^{-3/2} + \frac{4225}{16384} x^{-2} + \frac{65}{32} x^{-5/2} + 4 x^{-3}\\
  = {}&g_2(x, y_2^-(x))
  <    g_2(x, y_2(x))
  <    g_2(x, y_2^+(x))
  =    -\frac{65}{512} x^{-1} + \frac{4225}{16384} x^{-2}
  <    0,
 \end{align*}
 so $g_2(x, y_2(x))$ is not an integer.
 
 Finally, consider
 \begin{align*}
  y_3^-(x) &= -x^{3/2} - \frac{1}{2} x - \frac{1}{8} x^{1/2} + \frac{65}{128} x^{-1/2} - x^{-1}
  \qquad \text{and}\\
  y_3^+(x) &= -x^{3/2} - \frac{1}{2} x - \frac{1}{8} x^{1/2} + \frac{65}{128} x^{-1/2} - \frac{1}{2} x^{-1}.
 \end{align*}
 We have
 \begin{align*}
  p(x, y_3^-(x)) = {}
  &-\frac{1471}{512} x^2
   - \frac{447}{512} x^{3/2}
   - \frac{11377}{16384} x
   - \frac{6013}{16384} x^{1/2}
   + \frac{94079}{32768}
   - \frac{339331}{131072} x^{-1/2}
   - \frac{61}{512} x^{-1}\\
   &- \frac{511807}{2097152} x^{-3/2}
   - \frac{12675}{16384} x^{-2}
   + \frac{195}{128} x^{-5/2}
   - x^{-3}\\
   < {}&0.
 \end{align*}
 Replacing $x$ with $(z + 3)^2$ in
 \begin{align*}
  p(x, y_3^+(x)) = {}
  &x^{5/2}
  - \frac{959}{512} x^2
  - \frac{127}{512} x^{3/2}
  - \frac{7281}{16384} x
  - \frac{13309}{16384} x^{1/2}
  + \frac{53119}{32768}
  - \frac{77699}{131072} x^{-1/2}\\
  &+ \frac{67}{1024} x^{-1}
  + \frac{78017}{2097152} x^{-3/2}
  - \frac{12675}{32768} x^{-2}
  + \frac{195}{512} x^{-5/2}
  - \frac{1}{8} x^{-3}
 \end{align*}
 gives
 \[
  \frac{1}{2097152 (z+3)^6}
  \left(
   \begin{array}{lll}
    2097152 z^{11}
    + 65277952 z^{10}
    + 919728128 z^9
    + 7736969088 z^8\\
    {}+ 43137332608 z^7
    + 167175471424 z^6
    + 458797435600 z^5
    + 889807335920 z^4\\
    {}+ 1191781601633 z^3
    + 1045691960361 z^2
    + 537771428331 z
    + 121660965323
   \end{array}
  \right),
 \]
 which is clearly positive  ($z \ge 0$).
 Thus, $p(x, y_3^+(x)) > 0$.
 Hence there is some root $y_3(x)$ in the open interval $(y_3^-(x), y_3^+(x))$.
 All three terms $y_3^-(x) < y_3(x) < y_3^+(x)$ are negative because $y_3^+(x) < 0$.
 Furthermore, the partial derivative $\frac{\partial g_2(x,y)}{\partial y} = 2 x^{-1} y + 1$
 and $\frac{\partial^2 g_2(x,y)}{\partial y^2} = 2 x^{-1} > 0$.
 Thus $\frac{\partial g_2(x,y)}{\partial y} \le \frac{\partial g_2(x,y)}{\partial y} \mid_{y=y_3^+(x)} = -2 x^{1/2} - \frac{1}{4} x^{-1/2} + \frac{65}{64} x^{-3/2} - x^{-2} < 0$, 
 for all $y \in (-\infty, y_3^+(x)]$.
 Thus, $g_2(x,y)$ is decreasing monotonically in $y$ over the interval $(-\infty, y_3^+(x)]$.
 Then
 \begin{align*}
  0
  < {}& x^{-1/2} - \frac{65}{512} x^{-1} + \frac{1}{8} x^{-3/2} + \frac{4225}{16384} x^{-2} - \frac{65}{128} x^{-5/2} + \frac{1}{4} x^{-3}\\
  = {}&g_2(x, y_3^+(x))
  < g_2(x, y_3(x))
  < g_2(x, y_3^-(x))\\
  = {}& 2 x^{-1/2} - \frac{65}{512} x^{-1} + \frac{1}{4} x^{-3/2} + \frac{4225}{16384} x^{-2} - \frac{65}{64} x^{-5/2} + x^{-3}
  < 1,
 \end{align*}
 so $g_2(x, y_3(x))$ is not an integer.
 To complete the proof,
 notice that the intervals $(y_1^-(x), y_1^+(x))$, $(y_2^-(x), y_2^+(x))$, and $(y_3^-(x), y_3^+(x))$ are disjoint.
 Therefore, we have shown that none of the three roots is an integer for any integer $x \ge 17$.
\end{proof}

\begin{remark}
 One can obtain the Puiseux series expansions for $p(x,y)$,
 which are
 \begin{alignat*}{2}
  y_1(x) &= x^2 + 2 x^{-1} + 2 x^{-2} - 6 x^{-4} - 18 x^{-5} + O(x^{-6}) \qquad &&\text{for $x \in \R$,}\\
  y_2(x) &=  x^{3/2} - \frac{1}{2} x + \frac{1}{8} x^{1/2} - \frac{65}{128} x^{-1/2} - x^{-1} - \frac{1471}{1024} x^{-3/2} - x^{-2} + O(x^{-5/2}) \qquad &&\text{for $x>0$, and}\\
  y_3(x) &= -x^{3/2} - \frac{1}{2} x - \frac{1}{8} x^{1/2} + \frac{65}{128} x^{-1/2} - x^{-1} + \frac{1471}{1024} x^{-3/2} - x^{-2} + O(x^{-5/2}) \qquad &&\text{for $x>0$}.
 \end{alignat*}
 These series converge to the actual roots of $p(x,y)$ for large $x$.
 The basic idea of the proof---called Runge's method---is that,
 for example, when we substitute $y_2(x)$ in $g_2(x, y)$,
 we get $g_2(x, y_2(x)) = O(x^{-1/2})$,
 where the multiplier in the $O$-notation is bounded both above and below by a nonzero constant in absolute value.
 Thus for large $x$,
 this cannot be an integer.
 However, for integer solutions $(x,y)$ of $p(x,y)$,
 this must be an integer.

 We note that the expressions for the $y_i^+(x)$ and $y_i^-(x)$ are the truncated or rounded Puiseux series expansions.
 The reason we discuss $y_i^+(x)$ and $y_i^-(x)$ is because we want to prove an absolute bound,
 instead of the asymptotic bound implied by the $O$-notation.
\end{remark}

By Lemma~\ref{lem:ternary:lattice:no_linear},
if $p(x,y)$ is reducible over $\Q$ as a polynomial in $x$ for any integer $y \ge 4$,
then the only way it can factor is as a product of an irreducible quadratic and an irreducible cubic. 
The next lemma handles this possibility.

\begin{lemma} \label{lem:ternary:lattice:y>=4:reducible}
 For any integer $y_0 \ge 4$,
 if $p(x,y_0)$ is reducible over $\Q$,
 then the roots of $p(x,y_0)$ satisfy the lattice condition.
\end{lemma}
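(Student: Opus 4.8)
The plan is to reduce the five-variable lattice condition for the roots of $p(x,y_0)$ to the three-variable lattice condition for the roots of the cubic factor (which is already available through Lemma~\ref{lem:interpolation:lattice_condition:cubic}) by means of a little Galois theory, and then to clear up a short residual Diophantine computation controlled by the explicit coefficients of $p$ together with the hypothesis $y_0 \ge 4$.

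First I would record the shape of a reducible factorization. By Lemma~\ref{lem:ternary:lattice:no_linear}, $p(x,y_0)$ has no linear factor over $\Q$ for $y_0 \ge 4$, so reducibility forces $p(x,y_0) = q(x)\,r(x)$ with $q$ an irreducible quadratic and $r$ an irreducible cubic; since $p$ is monic over $\Z$, Gauss's lemma lets us take $q$ and $r$ monic in $\Z[x]$. Write $\mu_1,\mu_2$ for the roots of $q$ (so $\mu_1\mu_2 = t := q(0)$) and $\nu_1,\nu_2,\nu_3$ for the roots of $r$ (so $\nu_1\nu_2\nu_3 = -w$ with $w := r(0)$); note $tw = y_0^3$, and by Lemma~\ref{lem:ternary:3r2c_roots} the two nonreal conjugate roots of $p$ lie inside a single one of these two factors. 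Next I would note that $r$ is not of the form $a x^3 + b$: after monic normalization such an $r$ equals $x^3 + w$, and then matching the vanishing coefficient of $x^4$ in $q r$ forces the linear coefficient of $q$ to vanish, whence the coefficient of $x$ in $q r$ is $0$ while in $p(x,y_0)$ it equals $y_0(y_0-1) \ne 0$. Hence by Lemma~\ref{lem:interpolation:lattice_condition:cubic} (equivalently, by Lemma~\ref{lem:lattice_condition:Sn_An}, since $r$ has either three distinct real roots or a unique real root and so its roots are not all of the same complex norm) the roots $\nu_1,\nu_2,\nu_3$ satisfy the lattice condition.

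The heart of the argument is then the following. I would first show that $\mu_1/\mu_2$ is not a root of unity: if it were, then $\mu_1/\mu_2 + \mu_2/\mu_1 = \sigma^2/t - 2 \in \Q$ (where $\sigma$ is the linear coefficient of $q$) shows it has order $2$, $3$, $4$, or $6$, i.e.\ $\sigma^2 \in \{0, t, 2t, 3t\}$; the case $\sigma = 0$ is impossible already for $y_0 \ge 2$ (then $q = x^2 + t$, coefficient matching gives $w = -(y_0^2+2)$ and $tw = y_0^3$, so $y_0^2 + 2 \mid 2y_0$, which is absurd), and each of the three shapes $\sigma^2 = t, 2t, 3t$ is ruled out by a similar but longer coefficient/divisibility computation in $p(x,y_0) = q r$ using $y_0 \ge 4$ (with a finite set of small $y_0$ checked by hand). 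Now suppose for contradiction that the five roots fail the lattice condition, say $\mu_1^{a_1}\mu_2^{a_2}\nu_1^{b_1}\nu_2^{b_2}\nu_3^{b_3} = 1$ with $(a_1,a_2,b_1,b_2,b_3) \ne \mathbf{0}$ and $a_1 + a_2 + b_1 + b_2 + b_3 = 0$. The Galois group of $p$ always contains an automorphism $\tau$ fixing $\mu_1,\mu_2$ and acting on $\nu_1,\nu_2,\nu_3$ as a $3$-cycle: the pointwise stabilizer of $\{\mu_1,\mu_2\}$ surjects onto the Galois group of $r$ when $\Q(\mu_1) \ne \Q(\sqrt{\operatorname{disc}(r)})$, and onto its even part $A_3$ in the remaining ``entangled'' case, and $3$-cycles are present in both. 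Dividing the $\tau$-image of the relation by the relation leaves a zero-sum relation among $\nu_1,\nu_2,\nu_3$ alone, which the cubic lattice condition forces to be trivial; hence $b_1 = b_2 = b_3 =: b$, the relation collapses to $\mu_1^{a_1}\mu_2^{a_2} = (-w)^{-b} \in \Q$, and applying the conjugation $\mu_1 \leftrightarrow \mu_2$ gives $(\mu_1/\mu_2)^{a_1 - a_2} = 1$, so $a_1 = a_2 =: a$. Then $t^a(-w)^b = 1$ together with $2a + 3b = 0$ reduces to $w^2 = t^3$, hence (with $tw = y_0^3$) to $t^5 = y_0^6$, which forces $y_0 = n^5$, $t = n^6$, $w = n^9$; substituting these into the remaining coefficient equations of $p(x,y_0) = q r$ and eliminating the one remaining unknown shows there is no solution with $n \ge 2$, while $n = 1$ is excluded by $y_0 \ge 4$. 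This contradiction completes the proof.

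I expect the Diophantine exclusions---ruling out the four ``special'' shapes of the quadratic factor, and in particular excluding $t^5 = y_0^6$ effectively rather than via an ineffective finiteness theorem---to be the main obstacle; by contrast, the Galois-theoretic collapse of a five-root relation to a cubic relation is routine once one knows, from Lemma~\ref{lem:ternary:3r2c_roots}, where the nonreal roots sit.
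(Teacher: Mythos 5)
Your proposal is correct and relies on the same three pillars as the paper's proof: (i) the cubic factor is not of the form $ax^3+b$, so its roots satisfy the lattice condition by Lemma~\ref{lem:interpolation:lattice_condition:cubic}; (ii) the ratio of the two quadratic roots is not a root of unity, established via the identical $\phi(d)\mid 2$ list and explicit coefficient exclusions; and (iii) after Galois descent the only surviving integer relation forces $y_0 = n^5$, which is ruled out by a local (prime-power) argument on the coefficient of $x$. The one genuine difference is in the Galois step. The paper splits into two cases according to whether the quadratic remains irreducible over the cubic's splitting field, and in each case manufactures the needed automorphism differently (a conjugation fixing $\Q_g$ pointwise in one case, a transposition of the cubic roots that also swaps $\alpha,\beta$ in the other). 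You instead observe that in all cases the pointwise stabilizer of $\{\mu_1,\mu_2\}$ in the Galois group of $p$ contains a $3$-cycle on $\nu_1,\nu_2,\nu_3$ — surjecting onto $\Gal(r)$ when $\Q(\mu_1)\neq\Q(\sqrt{\operatorname{disc} r})$, and onto $A_3$ in the entangled case — so you can kill the cubic exponents uniformly before touching the quadratic. This is a clean reorganization that buys you a single argument where the paper has two, at no extra cost.

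Two small points you should tidy. First, $t^a(-w)^b=1$ with $2a+3b=0$ gives only $w^2 = \pm t^3$ (not $w^2 = t^3$); both sign choices still force $y_0 = n^5$ with $|t| = n^6$, $|w| = n^9$, and the paper's coefficient computation carries the $\pm$ signs through. Second, your parenthetical claim that a unique real root for $r$ "so its roots are not all of the same complex norm" is only valid because you have already excluded $r = x^3 + c$: an irreducible cubic with a lone real root and all roots of equal norm would have its real root generate $\Q(\sqrt[3]{-r(0)})$, forcing $r = x^3 + r(0)$. Since you establish $r \ne x^3+c$ anyway, this is fine, but the implication should be stated rather than asserted. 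Neither issue affects the structure of your proof.
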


\begin{proof}
 Let $q(x) = p(x,y_0)$ for a fixed integer $y_0 \ge 4$.
 Suppose that $q(x) = f(x) g(x)$,
 where $f(x), g(x) \in \Q[x]$ are monic polynomials of degree at least~$1$.
 By Lemma~\ref{lem:ternary:lattice:no_linear},
 the degree of each factor must be at least~$2$.
 Then without loss of generality,
 let $f(x)$ and $g(x)$ be quadratic and cubic polynomials respectively,
 both of which are irreducible over $\Q$.
 By Gauss' Lemma,
 we can further assume $f(x), g(x) \in \Z[x]$.
 
 Let $\Q_f$ and $\Q_g$ denote the splitting fields over $\Q$ of $f$ and $g$ respectively.
 Suppose $\alpha, \beta$ are the roots of $f(x)$ and $\gamma, \delta, \epsilon$ are the roots of $g(x)$.
 Of course none of these roots are~$0$.
 Suppose there exist $i,j,k,m,n \in \Z$ such that 
 \begin{equation} \label{eqn:ternary:lattice:y>=4:reducible}
  \alpha^i \beta^j = \gamma^k \delta^m \epsilon^n 
  \qquad \text{and} \qquad
  i+j = k+m+n.
 \end{equation}
 We want to show that $i=j=k=m=n=0$.

 We first show that if $i=j$ and $k=m=n$,
 then $i=j=k=m=n=0$.
 By~(\ref{eqn:ternary:lattice:y>=4:reducible}),
 we have $(\alpha \beta)^i = (\gamma \delta \epsilon)^k$ and $2 i = 3 k$.
 Suppose $i \neq 0$,
 then also $k \neq 0$.
 We can write $i = 3 t$ and $k = 2 t$ for some nonzero $t \in \Z$.
 Let $A = \alpha \beta$ and $B = \gamma \delta \epsilon$.
 Then both $A$ and $B$ are integers and $A B = y_0^3$.
 From $A^{3t} = B^{2t}$,
 we have $A^3 = \pm B^2$.
 Then $y_0^6 = A^2 B^2 = \pm A^5$,
 and since $y_0 > 3$,
 there is a nonzero integer $s > 1$ such that $y_0 = s^5$.
 This implies $A = \pm s^6$ and $B = \pm s^9$ (with the same $\pm$ sign).
 Then $f(x) = x^2 + c_1 x \pm s^6$, $g(x) = x^3 + c_2' x^2 + c_1' x \pm s^9$,
 and $q(x) = x^5 - (2 s^5 + 1) x^3 - (s^{10} + 2) x^2 + s^5 (s^5-1) x + s^{15}$.
 We consider the coefficient of $x$ in $q(x) = f(x) g(x)$.
 This is $s^{10} - s^5 = \pm c_1' s^6 \pm c_1 s^9$.
 Since $s > 1$,
 there is a prime $p$ such that $p^u \divides s$ and $p^{u+1} \not |\; s$,
 for some $u \ge 1$.
 But then $p^{6u}$ divides $s^5 = s^{10} \pm c_1' s^6 \pm c_1 s^9$.
 This is a contradiction.
 Hence $i=j$ and $k=m=n$ imply $i=j=k=m=n=0$.

 Now we claim that $\omega = \alpha / \beta$ is not a root of unity.
 For a contradiction,
 suppose that $\omega$ is a primitive $d$th root of unity.
 Since $\omega \in \Q_f$,
 which is a degree~$2$ extension over $\Q$,
 we have $\phi(d) \divides 2$,
 where $\phi(\cdot)$ is Euler's totient function.
 Hence $d \in \{1,2,3,4,6\}$.
 The quadratic polynomial $f(x)$ has the form $x^2 - (1 + \omega) \beta x + \omega \beta^2 \in \Z[x]$.
 Hence $r = \frac{(1 + \omega)}{\omega \beta} \in \Q$.
 We prove the claim separately according to whether $r=0$ or not.

 If $r = 0$,
 then  $\omega = -1$ and $d = 2$.
 In this case,
 $f(x)$ has the form $x^2 + a$ for some $a \in \Z$.
 It is easy to check that $q(x)$ has no such polynomial factor in $\Z[x]$ unless $y_0 = 0$.
 In fact,
 suppose $x^2 + a \divides q(x)$ in $\Z[x]$.
 Then $q(x) = (x^2 + a)(x^3 + bx + c)$ since the coefficient of $x^4$ in $q(x)$ is~$0$.
 Also $a + b = -(2 y_0 + 1)$,
 $c= -(y_0^2 +2)$,
 $a b = y_0 (y_0 - 1)$ and $a c = y_0^3$.
 It follows that $a$ and $b$ are the two roots of the quadratic polynomial $X^2 + (2 y_0 + 1) X + y_0^2 - y_0 \in \Z[X]$.
 Since $a, b \in \Z$, 
 the discriminant $8 y_0 + 1$ must be a perfect square,
 and in fact an odd perfect square $(2 z - 1)^2$ for some $z \in \Z$.
 Thus $y_0 = z (z - 1) / 2$.
 By the quadratic formula,
 $a = -y_0 + z -1$ or $-y_0 - z$.
 On the other hand,
 $a = a c / c = -y_0^3 / (y_0^2 + 2)$.
 In both cases,
 this leads to a polynomial in $z$ in $\Z[z]$ that has no integer solutions other than $z = 0$,
 which gives $y_0 = 0$.

 Now suppose $r \neq 0$.
 Plugging $r$ back in $f(x)$,
 we have $f(x) = x^2 - (2 + \omega + \omega^{-1}) r^{-1} x + (2 + \omega + \omega^{-1}) r^{-2}$.
 The quantity $2 + \omega + \omega^{-1} = 4,1,2,3$ when $d = 1,3,4,6$ respectively.
 Since $(2 + \omega + \omega^{-1}) r^{-2} \in \Z$,
 the rational number $r^{-1}$ must be an integer when $d = 3,4,6$ and half an integer when $d = 1$.
 In all cases,
 it is easy to check that a polynomial $f(x)$ of the specified form does not divide $q(x)$ unless $y = 0$ or $y = -1$.
 Thus, we have proved the claim that $\omega = \alpha / \beta$ is not a root of unity.




 Next consider the case that $f(x)$ is irreducible over $\Q_g$.
 Let $E$ be the splitting field of $f$ over $\Q_g$.
 Then $[E:\Q_g] = 2$.
 Therefore,
 there exists an automorphism $\tau \in \Gal(E / \Q_g)$ that swaps $\alpha$ and $\beta$ but fixes $\Q_g$ and thus fixes $\gamma,\delta,\epsilon$ pointwise.
 By applying $\tau$ to~(\ref{eqn:ternary:lattice:y>=4:reducible}),
 we have $\alpha^j \beta^i = \gamma^k \delta^m \epsilon^n$.
 Dividing by~(\ref{eqn:ternary:lattice:y>=4:reducible}) gives $(\alpha / \beta)^{j-i} = 1$.
 Since $\alpha / \beta$ is not a root of unity,
 we get $i=j$.
 Hence we have $(\alpha \beta)^i = \gamma^k \delta^m \epsilon^n$.
 The order of $\Gal(\Q_g / \Q)$ is $[\Q_g:\Q]$,
 which is divisible by~$3$.
 Thus $\Gal(\Q_g / \Q) \subseteq S_3$ contains an element of order~$3$,
 which must act as a 3-cycle on $\gamma,\delta,\epsilon$.
 Since $\alpha \beta \in \Q$,
 applying this cyclic permutation
 gives $(\alpha \beta)^i = \gamma^m \delta^n \epsilon^k$.
 Therefore $\gamma^{k-m} \delta^{m-n} \epsilon^{n-k} = 1$.
 Notice that $(k-m) + (m-n) + (n-k) = 0$.

 It can be directly checked that $q(x)$ is not divisible by any $x^3 + c \in \Z[x]$,
 and therefore by Lemma~\ref{lem:interpolation:lattice_condition:cubic},
 the roots $\gamma, \delta, \epsilon$ of the cubic polynomial $g(x)$ satisfy the lattice condition.
 Therefore, $k=m=n$.
 Again, we have shown that $i=j$ and $k=m=n$ imply $i=j=k=m=n=0$. 

 The last case is when $f(x)$ splits in $\Q_g[x]$.
 Then $\Q_f$ is a subfield of $\Q_g$, and $2 = [\Q_f: \Q] | [\Q_g: \Q]$.
 Therefore $[\Q_g:\Q] = 6$ and $\Gal(\Q_g / \Q) = S_3$.
 Since $\Q_f$ is normal over $\Q$,
 being a splitting field of a separable polynomial in characteristic~$0$, 
 by the fundamental theorem of Galois theory,
 the corresponding subgroup for $\Q_f$ is $\Gal(\Q_g / \Q_f)$,
 which is a normal subgroup of $S_3$ with index~$2$.
 Such a subgroup of $S_3$ is unique, namely $A_3$.
 In particular,
 the transposition $\tau'$ that swaps $\gamma$ and $\delta$ but fixes $\epsilon$ is an element in $\Gal(\Q_g / \Q) = S_3$ but not in $\Gal(\Q_g / \Q_f) = A_3$.
 This transposition must fix $\alpha$ and $\beta$ setwise but not pointwise.
 Hence, it must swap $\alpha$ and $\beta$.

 By applying $\tau'$ to~(\ref{eqn:ternary:lattice:y>=4:reducible}),
 we have $\alpha^j \beta^i = \gamma^m \delta^k \epsilon^n$.
 Then dividing these two equations gives $(\alpha / \beta)^{i-j} = (\delta / \gamma)^{m-k}$.
 Similarly,
 by considering the transposition that switches $\gamma$ and $\epsilon$ and fixes $\delta$,
 we get $(\alpha / \beta)^{i-j} = (\gamma / \epsilon)^{k-n}$.
 By combining these two equations,
 we have $\gamma^{n-m} \delta^{m-k} \epsilon^{k-n} = 1$.
 Note that $(n-m) + (m-k) + (k-n) = 0$.

 As we noted above,
 the roots of the irreducible $g(x)$ satisfy the lattice condition,
 so we conclude that $k=n=m$.
 From $(\alpha / \beta)^{i-j} = (\delta / \gamma)^{m-k} = 1$,
 we get $i=j$ since $\alpha / \beta$ is not a root of unity.
 We conclude that $i=j=k=m=n=0$,
 so the roots of $q(x)$ satisfy the lattice condition.
\end{proof}

Even though $p(x,3) = (x - 3) (x^4 + 3 x^3 + 2 x^2 - 5 x - 9)$ is reducible,
its roots still satisfy the lattice condition.
To show this,
we utilize a few results, Theorem~\ref{thm:ternary:lattice:dedekind},
Lemma~\ref{lem:ternary:lattice:transitive_transpositiion_p-cycle_Sn},
and Lemma~\ref{lem:ternary:lattice:quartic_root_norms}.

The first is a well-known theorem of Dedekind.

\begin{theorem}[Theorem~4.37~\cite{Jac85}] \label{thm:ternary:lattice:dedekind}
 Suppose $f(x) \in \Z[x]$ is a monic polynomial of degree $n$.
 For a prime $p$,
 let $f_p(x)$ be the corresponding polynomial in $\Z_p[x]$.
 If $f_p(x)$ has distinct roots and factors over $\Z_p[x]$ as a product of irreducible factors with degrees $d_1, d_2, \dotsc, d_r$,
 then the Galois group of $f$ over $\Q$ contains an element with cycle type $(d_1, d_2, \dotsc, d_r)$.
\end{theorem}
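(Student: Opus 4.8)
Theorem~\ref{thm:ternary:lattice:dedekind} is Dedekind's classical theorem, so the plan is to reproduce the standard proof via $p$-adic completions. Let $\alpha_1,\dots,\alpha_n\in\overline{\mathbb{Q}}$ be the roots of $f$, let $K=\mathbb{Q}(\alpha_1,\dots,\alpha_n)$ be the splitting field, and regard $G=\operatorname{Gal}(K/\mathbb{Q})$ as a subgroup of $S_n$ via its action on the roots; this is the Galois group of $f$ appearing in the statement, and the cycle type of an element of $G$ is the cycle type of the induced permutation of $\{1,\dots,n\}$. Note that ``$f_p$ has distinct roots'' forces $f$ to be separable and, equivalently, $p\nmid\operatorname{disc}(f)$; this is the only way the hypothesis is used.

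First I would pass to $\mathbb{Q}_p$. Factor $f$ over $\mathbb{Q}_p$; by Gauss's lemma (applicable since $f$ is monic) this yields $f=h_1\cdots h_r$ with each $h_j\in\mathbb{Z}_p[x]$ monic and irreducible over $\mathbb{Q}_p$. Since $\bar f=f_p\in\mathbb{F}_p[x]$ is separable, the factorization form of Hensel's lemma applies: the coprime factorization $f_p=g_1\cdots g_r$ into distinct irreducibles over $\mathbb{F}_p$ lifts uniquely, so after renumbering $\bar h_j=g_j$, and in particular $\deg h_j=\deg g_j=d_j$. For a root $\beta_j$ of $h_j$, the field $\mathbb{Q}_p(\beta_j)$ then has degree $d_j$ over $\mathbb{Q}_p$ and residue field containing $\mathbb{F}_p[x]/(g_j)\cong\mathbb{F}_{p^{d_j}}$; comparing degrees, $\mathbb{Q}_p(\beta_j)/\mathbb{Q}_p$ is the (unique) unramified extension of degree $d_j$, hence Galois with cyclic Galois group generated by the Frobenius, acting simply transitively on the $d_j$ roots of $h_j$, i.e.\ as a single $d_j$-cycle.

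Next I would locate the relevant element inside $G$. Fix an embedding $K\hookrightarrow\overline{\mathbb{Q}_p}$ and let $K_v\subseteq\overline{\mathbb{Q}_p}$ be the compositum of $\mathbb{Q}_p$ with the images of the $\alpha_i$; then $K_v$ is the compositum of the fields $\mathbb{Q}_p(\beta_j)$, hence unramified over $\mathbb{Q}_p$, so $\operatorname{Gal}(K_v/\mathbb{Q}_p)$ is cyclic, generated by a Frobenius element $\varphi$. Restriction to $K$ identifies $\operatorname{Gal}(K_v/\mathbb{Q}_p)$ with the decomposition subgroup $D\le G$ of the chosen place; let $\sigma\in D\subseteq G$ be the image of $\varphi$. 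As an element of $G$, $\sigma$ permutes $\alpha_1,\dots,\alpha_n$ and stabilizes the set of roots of each $h_j$, acting on those $d_j$ roots as the Frobenius of the cyclic unramified extension $\mathbb{Q}_p(\beta_j)/\mathbb{Q}_p$, that is, as a $d_j$-cycle. Since the roots of $f$ are the disjoint union of the root sets of the $h_j$, the permutation $\sigma$ has cycle type $(d_1,\dots,d_r)$, which is exactly the assertion. (An equivalent route avoids $\mathbb{Q}_p$: from $p\nmid\operatorname{disc}(f)$ one gets that $p$ is unramified in each $\mathbb{Q}(\alpha_i)$ and hence in $K$, so for a prime $\mathfrak P\mid p$ of $\mathcal{O}_K$ the decomposition group maps isomorphically onto $\operatorname{Gal}\!\big((\mathcal{O}_K/\mathfrak P)/\mathbb{F}_p\big)$; taking $\sigma$ to be the arithmetic Frobenius and reducing $\sigma(\alpha_i)=\alpha_{\pi(i)}$ modulo $\mathfrak P$ gives $\bar\alpha_{\pi(i)}=\bar\alpha_i^{\,p}$, so $\pi\in G$ realizes the $p$-power permutation of the distinct roots of $f_p$, whose cycle type is $(d_1,\dots,d_r)$ because the $d_j$ roots of $g_j$ form one Frobenius orbit.)

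The main obstacle is not conceptual but the weight of the black boxes invoked: the factorization (Hensel--Rychl\'ik) form of Hensel's lemma, the classification of unramified local extensions as cyclic, and the identification of the local Galois group with a decomposition subgroup of the global one (equivalently, surjectivity of the decomposition group onto the residue-field Galois group in the unramified case). All of these are standard — indeed they are essentially the contents of the cited section of~\cite{Jac85} — and the surrounding argument, translating ``$f_p$ separable'' into ``unramified at $p$'' and then reading cycle type off a Frobenius, is routine bookkeeping.
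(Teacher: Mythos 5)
The paper does not prove this statement; it is cited verbatim as Theorem~4.37 of Jacobson~\cite{Jac85}, and no proof is given in the text. So there is no "paper's own proof" to compare against.

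That said, your proof is correct and is the standard argument for Dedekind's theorem. Both routes you sketch are sound: the $p$-adic route (Hensel lift the coprime factorization of $f \bmod p$, observe each local factor field is the unramified extension of degree $d_j$ because inertia degree equals the field degree, take the Frobenius of the compositum and restrict to the global splitting field) and the global route via decomposition groups (note $p \nmid \operatorname{disc}(f)$ forces $p$ unramified in the splitting field, so a decomposition group at a prime above $p$ surjects isomorphically onto the residue Galois group, and the arithmetic Frobenius induces the $p$-power map on reduced roots, whose orbits correspond exactly to the irreducible factors of $f_p$). One small point worth flagging for precision: in the $p$-adic route you want to say explicitly that the compositum $K_v$ contains all the roots of $f$ and not merely the $\beta_j$'s, which is clear because each $\mathbb{Q}_p(\beta_j)$ is already Galois (being unramified), so it contains all roots of $h_j$, and hence $K_v$ contains all roots of $f$ — you implicitly rely on this when asserting that $\sigma$ acts on the $d_j$ roots of $h_j$ as a full $d_j$-cycle, but it deserves a sentence. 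Otherwise the black boxes you invoke (Hensel's factorization lemma, classification of unramified local extensions, the decomposition-group isomorphism in the unramified case) are exactly the right ones and are indeed what the cited chapter of Jacobson establishes.
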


With the second result,
we can show that $x^4 + 3 x^3 + 2 x^2 - 5 x - 9$ has Galois group $S_4$ over $\Q$.

\begin{lemma}[Lemma on page~98 in~\cite{Gal73}] \label{lem:ternary:lattice:transitive_transpositiion_p-cycle_Sn}
 For $n \ge 2$,
 let $G$ be a subgroup of $S_n$.
 If $G$ is transitive, contains a transposition,
 and contains a $p$-cycle for some prime $p > n / 2$,
 then $G = S_n$.
\end{lemma}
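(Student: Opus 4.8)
The plan is to prove the statement in two stages: first show that the hypotheses force $G$ to act primitively on $\{1,\dots,n\}$, and then invoke the classical argument of Jordan that a primitive group containing a transposition must be all of $S_n$.

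For the primitivity step, suppose for contradiction that $G$ preserves a nontrivial block system, with blocks of common size $d$ where $2 \le d \le n/2$; such a $d$ exists because the blocks partition $\{1,\dots,n\}$ into at least two parts of equal size. Let $\sigma \in G$ be a $p$-cycle with $p$ prime and $p > n/2$, so in particular $p > d$. Consider the permutation $\bar\sigma$ that $\sigma$ induces on the set of blocks. Since $\sigma^p = \mathrm{id}$, every cycle of $\bar\sigma$ has length $1$ or $p$. If $\bar\sigma$ had a cycle of length $p$, then $\sigma$ would move every one of the $pd$ points lying in those $p$ blocks (a point in one block is sent to a point of a different block), contradicting that $\sigma$ moves only $p$ points, as $pd > p$. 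Hence $\sigma$ fixes each block setwise, so $\sigma$ is the product of the permutations $\sigma|_B$ it induces on the individual blocks $B$; each $\sigma|_B$ has order dividing $p$, but $|B| = d < p$ forces $\sigma|_B = \mathrm{id}$. Then $\sigma = \mathrm{id}$, contradicting that $\sigma$ is a $p$-cycle. Therefore $G$ is primitive.

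For the second stage, form the graph $\Gamma$ on vertex set $\{1,\dots,n\}$ with an edge $\{i,j\}$ precisely when the transposition $(i\,j)$ lies in $G$. Since $G$ acts on its own transpositions by conjugation and $g (i\,j) g^{-1} = (g(i)\, g(j))$, the graph $\Gamma$ is $G$-invariant; because $G$ is transitive, the connected components of $\Gamma$ all have the same size and thus form a block system. By primitivity, $\Gamma$ is either connected or edgeless, and since $G$ contains at least one transposition by hypothesis, $\Gamma$ is connected. Finally, whenever $(i\,j), (j\,k) \in G$ with $i,j,k$ distinct, the identity $(i\,j)(j\,k)(i\,j) = (i\,k)$ shows $(i\,k) \in G$; iterating this along paths in the connected graph $\Gamma$ shows $(i\,j) \in G$ for every pair $i \ne j$. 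Hence $G$ contains all transpositions, and $G = S_n$.

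The step I expect to be the main obstacle is establishing primitivity: this is where all three hypotheses interact (transitivity produces the block structure, and the $p$-cycle with $p > n/2$ is precisely what is needed to destroy it), and care is needed with the counting inequality $pd > p$ and with the argument that $\sigma$ must restrict trivially to each block. The Jordan argument in the second stage is standard once primitivity is in hand.
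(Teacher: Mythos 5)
The paper cites this result from~\cite{Gal73} without reproducing a proof, so there is no in-paper argument to compare against. Your proof is correct and is the standard two-step argument for this classical fact. The primitivity step is sound: if a nontrivial block system has blocks of size $d$ with $2 \le d \le n/2$, then a $p$-cycle $\sigma$ with $p > n/2$ would have to induce a permutation of the blocks whose cycle lengths divide $p$; a $p$-cycle of blocks is impossible (it would force $\sigma$ to move at least $pd > p$ points, or alternatively there are only $n/d < p$ blocks), so $\sigma$ stabilizes each block, and since each block has size $d < p$ the restriction of $\sigma$ to each block has order dividing the prime $p$ with all cycle lengths at most $d < p$, forcing $\sigma = \mathrm{id}$, a contradiction. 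The second step is Jordan's argument: the graph of transpositions in $G$ is $G$-invariant, its components form a block system, primitivity plus the existence of one edge forces connectivity, and the conjugation identity $(i\,j)(j\,k)(i\,j) = (i\,k)$ then propagates transpositions along paths to give all of $S_n$. One minor stylistic remark: in the primitivity step you do not actually need the counting inequality $pd > p$ — the cruder observation that the number of blocks $n/d \le n/2 < p$ already rules out a $p$-cycle on blocks — but your version is equally valid.
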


In the contrapositive,
the third result shows that the roots of $x^4 + 3 x^3 + 2 x^2 - 5 x - 9$ do not all have the same complex norm.

\begin{lemma}[Lemma~D.2 in~\cite{CKW12}] \label{lem:ternary:lattice:quartic_root_norms}
 If all roots of $x^4 + a_3 x^3 + a_2 x^2 + a_1 x + a_0 \in \mathbb{C}[x]$ have the same complex norm,
 then $a_2 |a_1|^2 = |a_3|^2 \overline{a_2} a_0$.
\end{lemma}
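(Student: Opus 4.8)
The plan is to use the symmetry of the root configuration directly via Vieta's formulas, organized through a conjugate-reciprocal identity so that the algebra stays short. Write $p(x) = x^4 + a_3 x^3 + a_2 x^2 + a_1 x + a_0 = \prod_{i=1}^{4} (x - r_i)$ and assume $|r_i| = \rho$ for all $i$. If $\rho = 0$ then every $r_i = 0$, so $a_0 = a_1 = a_2 = a_3 = 0$ and the asserted identity reads $0 = 0$; hence I may assume $\rho > 0$, so that no root vanishes and $a_0 = \prod_i r_i \neq 0$.

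The engine of the proof is the relation $\overline{r_i} = \rho^2 / r_i$. I would compute $x^4 \, p(\rho^2/x)$ in two ways. Direct substitution gives
\[
 x^4 \, p(\rho^2/x) = a_0 x^4 + a_1 \rho^2 x^3 + a_2 \rho^4 x^2 + a_3 \rho^6 x + \rho^8,
\]
while factoring gives $x^4 \, p(\rho^2/x) = \prod_i (\rho^2 - r_i x) = \big(\prod_i r_i\big) \prod_i (x - \rho^2/r_i) = a_0 \, \bar p(x)$, where $\bar p(x) = x^4 + \overline{a_3} x^3 + \overline{a_2} x^2 + \overline{a_1} x + \overline{a_0}$ is the polynomial with conjugated coefficients, whose roots are exactly the $\overline{r_i} = \rho^2/r_i$. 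Matching coefficients of these two expressions yields $|a_0|^2 = \rho^8$ together with $a_1 \rho^2 = a_0 \overline{a_3}$ and $a_2 \rho^4 = a_0 \overline{a_2}$ (the remaining relation $a_3 \rho^6 = a_0 \overline{a_1}$ is just the conjugate of the previous one and is not needed).

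Finally I combine these. From $a_1 \rho^2 = a_0 \overline{a_3}$ and $|a_0|^2 = \rho^8$ I get $|a_1|^2 \rho^4 = |a_0|^2 |a_3|^2 = \rho^8 |a_3|^2$, hence $|a_1|^2 = \rho^4 |a_3|^2$. Multiplying $a_2 \rho^4 = a_0 \overline{a_2}$ through by $|a_3|^2$ and substituting $\rho^4 |a_3|^2 = |a_1|^2$ on the left-hand side gives $a_2 |a_1|^2 = |a_3|^2 \, \overline{a_2} \, a_0$, as desired. If one prefers to avoid the reciprocal-polynomial device, exactly the same three scalar relations drop out by writing each $\overline{a_k}$ as ($\pm$) the $k$-th elementary symmetric function of the $\overline{r_i} = \rho^2/r_i$ and clearing the resulting factors of $1/r_i$ against $a_0 = \prod_i r_i$. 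There is no real obstacle here beyond careful bookkeeping; the only points that need attention are the degenerate case $\rho = 0$ and the choice of a formulation (the identity for $x^4 p(\rho^2/x)$) that produces the required relations with minimal computation.
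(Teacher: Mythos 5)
Your proof is correct. The paper cites this lemma from an external reference (Lemma~D.2 of~\cite{CKW12}) and does not reproduce a proof, so there is nothing in the paper to compare against; but your argument is a clean and valid way to derive the identity. The engine is exactly right: from $|r_i|=\rho$ one gets $\overline{r_i}=\rho^2/r_i$, and packaging the four Vieta identities through the single polynomial identity $x^4 p(\rho^2/x) = a_0\,\bar p(x)$ (where $\bar p$ has conjugated coefficients) extracts the three needed scalar relations $\rho^8 = |a_0|^2$, $a_1\rho^2 = a_0\overline{a_3}$, $a_2\rho^4 = a_0\overline{a_2}$ with almost no bookkeeping; taking norms in the second gives $|a_1|^2 = \rho^4|a_3|^2$, and substituting into the third gives the claim. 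One tiny imprecision: the redundant relation $a_3\rho^6 = a_0\overline{a_1}$ is the conjugate of $a_1\rho^2 = a_0\overline{a_3}$ (after using $\rho^8=|a_0|^2$), not of the $a_2$ relation immediately preceding it; this does not affect the argument since you correctly discard it. The degenerate case $\rho=0$ is also handled properly.
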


\begin{theorem} \label{thm:ternary:lattice:y=3}
 The roots of $p(x,3) = (x - 3) (x^4 + 3 x^3 + 2 x^2 - 5 x - 9)$ satisfy the lattice condition.
\end{theorem}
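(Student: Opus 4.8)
The plan is to split the root set of $p(x,3)$ as $\{3\}\cup\{\alpha_1,\alpha_2,\alpha_3,\alpha_4\}$, where the $\alpha_i$ are the roots of the quartic $q(x)=x^4+3x^3+2x^2-5x-9$, first establish the lattice condition for the four $\alpha_i$ by a Galois-theoretic argument, and then bootstrap to the full five-element set using that $3$ is rational.

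The first step is to show that $q$ is irreducible over $\Q$ with Galois group $S_4$. Irreducibility follows by checking that $q(x)\bmod 5$ is $x^4+3x^3+2x^2+1$, which has no root in $\mathbb{F}_5$ and no factorization into two quadratics over $\mathbb{F}_5$ (a short comparison of coefficients), hence is irreducible there; consequently $q$ is irreducible over $\Q$, its Galois group $G$ acts transitively on the $\alpha_i$, and indeed $G$ contains a $4$-cycle. Next, $q(x)\equiv x(x^3+2x+1)\pmod 3$ with $x^3+2x+1$ irreducible over $\mathbb{F}_3$ (no root) and the factorization squarefree, so by Dedekind's theorem (Theorem~\ref{thm:ternary:lattice:dedekind}) $G$ contains a $3$-cycle. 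Finally, $q(x)\equiv(x-3)(x-7)(x^2+7)\pmod{13}$, with $x^2+7$ irreducible over $\mathbb{F}_{13}$ (since $-7$ is a quadratic non-residue) and the three factors pairwise coprime, so Dedekind's theorem gives a transposition in $G$. Since $G$ is transitive, contains a transposition, and contains a $p$-cycle for the prime $p=3>4/2$, Lemma~\ref{lem:ternary:lattice:transitive_transpositiion_p-cycle_Sn} forces $G=S_4$.

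Having identified $G=S_4$, I would apply Lemma~\ref{lem:ternary:lattice:quartic_root_norms} in contrapositive form: the coefficients $a_3=3$, $a_2=2$, $a_1=-5$, $a_0=-9$ of $q$ are real and satisfy $a_2|a_1|^2=50\neq-162=|a_3|^2\overline{a_2}a_0$, so the roots of $q$ do not all have the same complex norm. Lemma~\ref{lem:lattice_condition:Sn_An} then shows that $\alpha_1,\alpha_2,\alpha_3,\alpha_4$ satisfy the lattice condition. To incorporate the rational root, suppose $3^{x_0}\alpha_1^{x_1}\alpha_2^{x_2}\alpha_3^{x_3}\alpha_4^{x_4}=1$ with integer exponents summing to $0$. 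Acting by the element of $G$ that $3$-cycles $\alpha_1,\alpha_2,\alpha_3$ and fixes $3$ and $\alpha_4$, and dividing by the original relation, kills the $3^{x_0}$ and $\alpha_4^{x_4}$ factors and leaves $\alpha_1^{x_1-x_3}\alpha_2^{x_2-x_1}\alpha_3^{x_3-x_2}=1$, whose exponents sum to $0$; the lattice condition for the $\alpha_i$ gives $x_1=x_2=x_3$, and the analogous $3$-cycle on $\{\alpha_1,\alpha_2,\alpha_4\}$ gives $x_1=x_2=x_4=:m$. Since $\alpha_1\alpha_2\alpha_3\alpha_4=-9$ and $x_0=-4m$, the relation reduces to $(-1)^m3^{-2m}=1$, and taking complex absolute values forces $m=0$, hence $x_0=0$; all exponents vanish, so the roots of $p(x,3)$ satisfy the lattice condition.

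The main obstacle is the determination $G=S_4$: one must locate primes of good reduction realizing a transposition and a $3$-cycle (and, for transitivity, an irreducible or $4$-cycle reduction) and verify separability so that Dedekind's theorem applies. Once $G=S_4$ is in place, the quartic-norm computation and the $3$-cycle bootstrap that brings in the rational root $3$ are routine.
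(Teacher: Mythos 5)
Your proof is correct and follows essentially the same route as the paper: the same three primes $5$, $3$, $13$ to pin down $G=S_4$ via Dedekind and Lemma~\ref{lem:ternary:lattice:transitive_transpositiion_p-cycle_Sn}, the same appeal to Lemma~\ref{lem:ternary:lattice:quartic_root_norms} (contrapositive) and Lemma~\ref{lem:lattice_condition:Sn_An} for the quartic's roots, and a Galois bootstrap to absorb the rational root $3$. The only cosmetic difference is in that last step: the paper applies the $4$-cycle $(1\,2\,3\,4)$ four times and multiplies to get $3^{4n}=(-9)^n$, forcing $n=0$ first and then invoking the lattice condition for the quartic, whereas you use two $3$-cycles to force $x_1=\dots=x_4=m$ first and then take absolute values to kill $m$; both are equally short and both are valid.
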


\begin{proof}
 Let $f(x) = x^4 + 3 x^3 + 2 x^2 - 5 x - 9$ and let $G_f$ be the Galois group of $f$ over $\Q$.
 We claim that $G_f = S_4$.
 As a polynomial over $\Z_5$,
 $f(x) \equiv x^4 + 3 x^3 + 2 x^2 + 1$ is irreducible,
 so $f(x)$ is also irreducible over $\Z$.
 By Gauss' Lemma,
 this implies irreducibility over $\Q$.
 Over $\Z_{13}$,
 $f(x)$ factors into the product of irreducibles $(x^2 + 7) (x + 6) (x + 10)$ and clearly has distinct roots,
 so by Theorem~\ref{thm:ternary:lattice:dedekind},
 $G_f$ contains a transposition.
 Over $\Z_3$,
 $f(x)$ factors into the product of irreducibles $x (x^3 + 2 x + 1)$ and has distinct roots because its discriminant is $1 \not\equiv 0 \pmod{3}$,
 so by Theorem~\ref{thm:ternary:lattice:dedekind},
 $G_f$ contains a 3-cycle.
 Then by Lemma~\ref{lem:ternary:lattice:transitive_transpositiion_p-cycle_Sn},
 $G_f = S_4$.
 
 Let $\alpha, \beta, \gamma, \delta$ be the roots of $f(x)$.
 Suppose there exist $i,j,k,\ell,n \in \Z$ satisfying $n = i + j + k + \ell$ such that $3^n = \alpha^i \beta^j \gamma^k \delta^\ell$.
 Now $G_f = S_4$ contains the 4-cycle $(1\ 2\ 3\ 4)$ that cyclically permutes the roots of $f(x)$ but fixes $\Q$.
 We apply it zero, one, two, and three times to get
 \begin{align*}
  3^n
  &= \alpha^i \beta^j  \gamma^k \delta^\ell, \\
  &= \beta^i  \gamma^j \delta^k \alpha^\ell, \\
  &= \gamma^i \delta^j \alpha^k \beta^\ell, \text{ and}  \\
  &= \delta^i \alpha^j \beta^k  \gamma^\ell.
 \end{align*}
 Then $3^{4n} = (\alpha \beta \gamma \delta)^{i+j+k+\ell} = (-9)^{i+j+k+\ell}$.
 Since $n = i+j+k+\ell$,
 this can only hold when $n = 0$.
 
 Thus, it suffices to show that the roots of $f(x)$ satisfy the lattice condition.
 By the contrapositive of Lemma~\ref{lem:ternary:lattice:quartic_root_norms},
 the roots of $f(x)$ do not all have the same complex norm.
 Then we are done by Lemma~\ref{lem:lattice_condition:Sn_An}.
\end{proof}

From Lemma~\ref{lem:ternary:lattice:y>=4:irreducible},
Lemma~\ref{lem:ternary:lattice:y>=4:reducible},
and Theorem~\ref{thm:ternary:lattice:y=3},
we obtain the following Theorem.

\begin{theorem} \label{thm:ternary:lattice}
 For any integer $y_0 \ge 3$,
 the roots of $p(x,y_0)$ satisfy the lattice condition.
\end{theorem}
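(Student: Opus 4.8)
The plan is to prove this by a short case analysis on the integer parameter $y_0 \ge 3$, invoking the three results that immediately precede the statement; the only content of the theorem is that those results are jointly exhaustive. First I would dispose of the base case $y_0 = 3$: here $p(x,3) = (x-3)(x^4 + 3x^3 + 2x^2 - 5x - 9)$ and Theorem~\ref{thm:ternary:lattice:y=3} already asserts that its roots satisfy the lattice condition.

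For each integer $y_0 \ge 4$ I would then split according to whether $p(x,y_0)$, viewed as a polynomial in $x$, is irreducible or reducible over $\Q$. In the irreducible case, Lemma~\ref{lem:ternary:lattice:y>=4:irreducible} applies directly (via the root count of Lemma~\ref{lem:ternary:3r2c_roots}, the resulting $S_5$ Galois group, and Lemma~\ref{lem:lattice_condition:Sn_An}). In the reducible case, Lemma~\ref{lem:ternary:lattice:no_linear} rules out any linear factor when $y_0 \ge 4$, so the only factorization over $\Q$ is an irreducible quadratic times an irreducible cubic, and this is exactly the hypothesis of Lemma~\ref{lem:ternary:lattice:y>=4:reducible}. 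The two subcases are mutually exclusive and cover every $y_0 \ge 4$; together with $y_0 = 3$ this proves the theorem for all $y_0 \ge 3$.

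I expect no real obstacle in the assembly itself --- the difficulty has already been absorbed into the cited lemmas. In particular, Lemma~\ref{lem:ternary:lattice:no_linear}, which pins down all five integer points on a genus-$3$ curve without an effective Siegel bound by applying Runge's method to truncated Puiseux expansions, and the Galois-theoretic bookkeeping in Lemma~\ref{lem:ternary:lattice:y>=4:reducible} controlling multiplicative relations among the roots of a quadratic-cubic factorization, are where the real work lives. Given those inputs, Theorem~\ref{thm:ternary:lattice} is simply the remark that the cases $y_0 = 3$, $y_0 \ge 4$ with $p(x,y_0)$ irreducible, and $y_0 \ge 4$ with $p(x,y_0)$ reducible are exhaustive.
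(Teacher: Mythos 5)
Your case analysis is exactly what the paper does: the paper dispatches Theorem~\ref{thm:ternary:lattice} in a single sentence by citing Theorem~\ref{thm:ternary:lattice:y=3} for $y_0 = 3$, Lemma~\ref{lem:ternary:lattice:y>=4:irreducible} for the irreducible case with $y_0 \ge 4$, and Lemma~\ref{lem:ternary:lattice:y>=4:reducible} for the reducible case with $y_0 \ge 4$ (with Lemma~\ref{lem:ternary:lattice:no_linear} already absorbed into the hypotheses of the latter). Your assembly is correct and matches the paper's route.
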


We use Theorem~\ref{thm:ternary:lattice} to prove Lemma~\ref{lem:ternary:3k-1k-3-3_hard_k>3}.
We note that the succinct signature type $\tau_4$ is a refinement of $\tau_\text{color}$,
so any succinct signature of type $\tau_\text{color}$ can also be expressed as a succinct signature of type $\tau_4$.
In particular,
the succinct signature $\langle 2,1,0,1,0 \rangle$ of type $\tau_\text{color}$ is written $\langle 2,0,1,0,0,0,1,0,0 \rangle$ of type $\tau_4$.
Then the following is a restatement of Corollary~\ref{cor:k=r:21010_hard}.

\begin{corollary} \label{cor:ternary:201000100_hard}
 Suppose $\kappa \ge 3$ is the domain size.
 Let $\langle 2,0,1,0,0,0,1,0,0 \rangle$ be a succinct quaternary signature of type $\tau_4$.
 Then $\PlHolant(\langle 2,0,1,0,0,0,1,0,0 \rangle)$ is $\SHARPP$-hard.
\end{corollary}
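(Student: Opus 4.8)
The plan is straightforward: this corollary carries no new mathematical content beyond Corollary~\ref{cor:k=r:21010_hard}, and proving it amounts to checking that the succinct signature $\langle 2,1,0,1,0 \rangle$ of type $\tau_\text{color}$, after being rewritten under the finer type $\tau_4$, is exactly $\langle 2,0,1,0,0,0,1,0,0 \rangle$.

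First I would record that $\tau_4$ refines $\tau_\text{color}$. The five parts $P_{\subMat{1}{1}{1}{1}}$, $P_{\subMat{1}{1}{2}{2}}$, $P_{\subMat{1}{2}{1}{2}}$, $P_{\subMat{1}{2}{2}{1}}$, and $P_{\subMat{1}{2}{3}{4}}$ occur verbatim in both $\tau_\text{color}$ and $\tau_4$, and the remaining part $P_0$ of $\tau_\text{color}$ decomposes as the disjoint union $P_{\subMat{1}{1}{1}{2}} \cup P_{\subMat{1}{1}{2}{3}} \cup P_{\subMat{1}{2}{1}{3}} \cup P_{\subMat{1}{2}{3}{1}}$: tuples $(w,x,y,z)$ whose multiset $\{w,x,y,z\}$ has type $3+1$ land in $P_{\subMat{1}{1}{1}{2}}$, while the six ways to place a repeated pair for multiset type $2+1+1$ are distributed two apiece among $P_{\subMat{1}{1}{2}{3}}$, $P_{\subMat{1}{2}{1}{3}}$, and $P_{\subMat{1}{2}{3}{1}}$. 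Consequently any function constant on the parts of $\tau_\text{color}$ is automatically constant on the parts of $\tau_4$, so a succinct signature of type $\tau_\text{color}$ may be freely re-expressed as one of type $\tau_4$.

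Next I would simply transcribe the values. The signature $\langle 2,1,0,1,0 \rangle$ of type $\tau_\text{color}$ takes value $2$ on $P_{\subMat{1}{1}{1}{1}}$, value $1$ on each of $P_{\subMat{1}{1}{2}{2}}$ and $P_{\subMat{1}{2}{2}{1}}$, value $0$ on each of $P_{\subMat{1}{2}{1}{2}}$ and $P_{\subMat{1}{2}{3}{4}}$, and value $0$ on all of $P_0$ (the entry for $P_0$ being omitted from the succinct notation and understood to be $0$). Listing these in the order prescribed by $\tau_4$ yields exactly $\langle 2,0,1,0,0,0,1,0,0 \rangle$. Hence $\langle 2,1,0,1,0 \rangle$ of type $\tau_\text{color}$ and $\langle 2,0,1,0,0,0,1,0,0 \rangle$ of type $\tau_4$ denote the very same quaternary function, so the two Holant problems coincide and $\PlHolant(\langle 2,0,1,0,0,0,1,0,0 \rangle)$ is $\SHARPP$-hard by Corollary~\ref{cor:k=r:21010_hard}.

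Since the whole argument is a change of notation, there is no genuine obstacle; the only point requiring care is the combinatorial bookkeeping of the second paragraph---verifying that the four parts of $\tau_4$ not already present in $\tau_\text{color}$ partition $P_0$ precisely---which is a routine case analysis on the partition type of the multiset $\{w,x,y,z\}$.
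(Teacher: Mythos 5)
Your proposal is correct and follows exactly the paper's argument: the paper's text preceding this corollary observes that $\tau_4$ refines $\tau_\text{color}$, transcribes $\langle 2,1,0,1,0 \rangle$ of type $\tau_\text{color}$ into $\langle 2,0,1,0,0,0,1,0,0 \rangle$ of type $\tau_4$, and concludes that the corollary is a restatement of Corollary~\ref{cor:k=r:21010_hard}. You merely fill in the bookkeeping (the decomposition of $P_0$ into the four $\tau_4$ parts and the placement of the two $2+1+1$ patterns into each), which the paper leaves implicit.
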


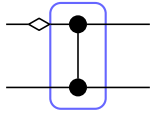
\begin{figure}[t]
 \centering
 \begin{tikzpicture}[scale=\scale,transform shape,node distance=\nodeDist,semithick]
  \node[external] (0)              {};
  \node[external] (1) [below of=0] {};
  \node[internal] (2) [right of=0] {};
  \node[internal] (3) [right of=1] {};
  \node[external] (4) [right of=2] {};
  \node[external] (5) [right of=3] {};
  \path (0.west) edge[postaction={decorate, decoration={
                                             markings,
                                             mark=at position 0.70 with {\arrow[>=diamond,white] {>}; },
                                             mark=at position 0.70 with {\arrow[>=open diamond]  {>}; } } }] (2)
        (1.west) edge (3)
        (2)      edge (3)
                 edge (4.east)
        (3)      edge (5.east);
  \begin{pgfonlayer}{background}
   \node[draw=\borderColor,thick,rounded corners,inner xsep=12pt,inner ysep=8pt,fit = (2) (3)] {};
  \end{pgfonlayer}
 \end{tikzpicture}
 \caption{Quaternary gadget used in the interpolation construction below.
 All vertices are assigned $\langle 3 (\kappa - 1), \kappa - 3, -3 \rangle$.}
 \label{fig:gadget:ternary:quaternary:I}
\end{figure}

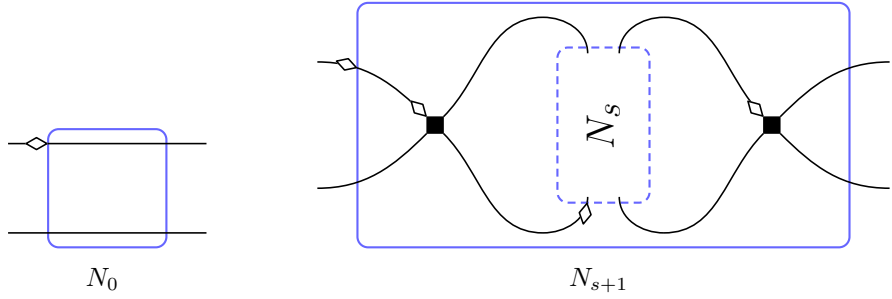
\begin{figure}[t]
 \centering
 \captionsetup[subfigure]{labelformat=empty}
 \subcaptionbox{$N_0$}{
  \begin{tikzpicture}[scale=\scale,transform shape,node distance=\nodeDist,semithick]
   \node[external] (0)                    {};
   \node[external] (1) [right       of=0] {};
   \node[external] (2) [below right of=1] {};
   \node[external] (3) [below left  of=2] {};
   \node[external] (4) [left        of=3] {};
   \node[external] (5) [above right of=2] {};
   \node[external] (6) [right       of=5] {};
   \node[external] (7) [below right of=2] {};
   \node[external] (8) [right       of=7] {};
   \path (0) edge[postaction={decorate, decoration={
                                         markings,
                                         mark=at position 0.2   with {\arrow[>=diamond, white] {>}; },
                                         mark=at position 0.2   with {\arrow[>=open diamond]   {>}; } } }] (6)
         (4) edge (8);
   \begin{pgfonlayer}{background}
    \node[draw=\borderColor,thick,rounded corners,fit = (1) (3) (5) (7)] {};
   \end{pgfonlayer}
  \end{tikzpicture}}
 \qquad
 \subcaptionbox{$N_{s+1}$}{
  \begin{tikzpicture}[scale=\scale,transform shape,node distance=\nodeDist,semithick]
   \node[external] (1)              {};
   \node[external] (2) [right of=1] {};
   \path let
          \p1 = (1),
          \p2 = (2)
         in
          node[external] (0) at (\x1 / 2 + \x2 / 2, \y1) {\Huge \begin{sideways}$N_s$\end{sideways}};
   \node[external] (3) [above of=1] {};
   \node[external] (4) [above of=2] {};
   \node[external] (5) [below of=1] {};
   \node[external] (6) [below of=2] {};
   \path let
          \p1 = (3),
          \p2 = (4)
         in
          node[external]  (7) at (3 * \x1 / 4 + \x2 / 4, \y1) {};
   \path let
          \p1 = (3),
          \p2 = (4)
         in
          node[external]  (8) at (\x1 / 4 + 3 * \x2 / 4, \y1) {};
   \path let
          \p1 = (5),
          \p2 = (6)
         in
          node[external]  (9) at (3 * \x1 / 4 + \x2 / 4, \y1) {};
   \path let
          \p1 = (5),
          \p2 = (6)
         in
          node[external] (10) at (\x1 / 4 + 3 * \x2 / 4, \y1) {};
   \node[external] (11) [above left  of=7]  {};
   \node[external] (12) [above right of=8]  {};
   \node[external] (13) [below left  of=9]  {};
   \node[external] (14) [below right of=10] {};
   \node[external] (15) [below left  of=11] {};
   \node[external] (16) [below right of=12] {};
   \node[external] (17) [      left  of=15] {};
   \node[external] (18) [      right of=16] {};
   \path let
          \p1 = (17),
          \p2 = (0)
         in
          node[square] (19) at (\x1, \y2) {};
   \path let
          \p1 = (18),
          \p2 = (0)
         in
          node[square] (20) at (\x1, \y2) {};
   \node[external] (21) [      left  of=19] {};
   \node[external] (22) [      right of=20] {};
   \node[external] (23) [      left  of=21] {};
   \node[external] (24) [      right of=22] {};
   \path let
          \p1 = (23),
          \p2 = (3)
         in
          node[external] (25) at (\x1, \y2) {};
   \path let
          \p1 = (24),
          \p2 = (4)
         in
          node[external] (26) at (\x1, \y2) {};
   \path let
          \p1 = (23),
          \p2 = (5)
         in
          node[external] (27) at (\x1, \y2) {};
   \path let
          \p1 = (24),
          \p2 = (6)
         in
          node[external] (28) at (\x1, \y2) {};
   \path (19) edge[out=  45, in=180] (11.center)
              edge[out= -45, in=180] (13.center)
              edge[out=-135, in=  0] (27)
         (25) edge[out=   0, in=135, postaction={decorate, decoration={
                                                            markings,
                                                            mark=at position 0.32 with {\arrow[>=diamond,white] {>}; },
                                                            mark=at position 0.32 with {\arrow[>=open diamond]  {>}; },
                                                            mark=at position 0.999 with {\arrow[>=diamond,white] {>}; },
                                                            mark=at position 0.999 with {\arrow[>=open diamond]  {>}; } } }] (19)
         (20) edge[out=-135, in=  0] (14.center)
              edge[out=  45, in=180] (26)
              edge[out= -45, in=180] (28)
         (12.center) edge[out=  0, in=135, postaction={decorate, decoration={
                                                            markings,
                                                            mark=at position 0.999 with {\arrow[>=diamond,white] {>}; },
                                                            mark=at position 0.999 with {\arrow[>=open diamond]  {>}; } } }] (20)
         (11.center) edge[out=  0, in= 90]  (7)
         (12.center) edge[out=180, in= 90]  (8)
         (13.center) edge[out=  0, in=-90, postaction={decorate, decoration={
                                                                  markings,
                                                                  mark=at position 0.91 with {\arrow[>=diamond,white] {>}; },
                                                                  mark=at position 0.91 with {\arrow[>=open diamond]  {>}; } } }] (9)
         (14.center) edge[out=180, in=-90] (10);
   \begin{pgfonlayer}{background}
    \node[draw=\borderColor,thick,densely dashed,rounded corners,fit = (3) (4) (5) (6)] {};
    \node[draw=\borderColor,thick,rounded corners,fit = (11) (13) (21) (22)] {};
   \end{pgfonlayer}
  \end{tikzpicture}}
 \caption{Recursive construction to interpolate the weighted Eulerian partition signature.
 The vertices are assigned the signature of the gadget in Figure~\ref{fig:gadget:ternary:quaternary:I}.}
 \label{fig:gadget:ternary:symmetric_weave_interpolation}
\end{figure}

\begin{sidewaystable}[p]
 \centering
 $\mathclap{
  \left[
   \begin{smallmatrix}
    (\kappa-1) \left(\kappa^2+9 \kappa-9\right) & 12 (\kappa-3) (\kappa-1)^2 & (\kappa-3)^2 (\kappa-1) & 2 (\kappa-3)^2 (\kappa-2) (\kappa-1) & (\kappa-3)^2 (\kappa-1) & 2 (\kappa-3)^2 (\kappa-2) (\kappa-1) & (\kappa-1) (2 \kappa-3) (4 \kappa-3) & 6 (\kappa-3) (\kappa-2) (\kappa-1)^2 & (\kappa-3)^3 (\kappa-2) (\kappa-1) \\
    3 (\kappa-3) (\kappa-1) & 3 \kappa^3-28 \kappa^2+60 \kappa-36 & -(\kappa-3) (2 \kappa-3) & -2 (\kappa-3) (\kappa-2) (2 \kappa-3) & -(\kappa-3) (2 \kappa-3) & -2 (\kappa-3) (\kappa-2) (2 \kappa-3) & 3 (\kappa-3) (\kappa-1)^2 & (\kappa-2) \left(\kappa^3-14 \kappa^2+30 \kappa-18\right) & -(\kappa-3)^2 (\kappa-2) (2 \kappa-3) \\
    (2 \kappa-3) (4 \kappa-3) & 12 (\kappa-3) (\kappa-1)^2 & (\kappa-3)^2 (\kappa-1) & 2 (\kappa-3)^2 (\kappa-2) (\kappa-1) & (\kappa-3)^2 (\kappa-1) & 2 (\kappa-3)^2 (\kappa-2) (\kappa-1) & 9 \kappa^3-26 \kappa^2+27 \kappa-9 & 6 (\kappa-3) (\kappa-2) (\kappa-1)^2 & (\kappa-3)^3 (\kappa-2) (\kappa-1) \\
    3 (\kappa-3) (\kappa-1) & 2 \left(\kappa^3-14 \kappa^2+30 \kappa-18\right) & -(\kappa-3) (2 \kappa-3) & -2 (\kappa-3) (\kappa-2) (2 \kappa-3) & -(\kappa-3) (2 \kappa-3) & -2 (\kappa-3) (\kappa-2) (2 \kappa-3) & 3 (\kappa-3) (\kappa-1)^2 & (\kappa-3) \left(\kappa^3-12 \kappa^2+22 \kappa-12\right) & -(\kappa-3)^2 (\kappa-2) (2 \kappa-3) \\
    (\kappa-3)^2 & -4 (\kappa-3) (2 \kappa-3) & 3 (\kappa-3) & 6 (\kappa-3) (\kappa-2) & \kappa^3+3 \kappa-9 & 6 (\kappa-3) (\kappa-2) & (\kappa-3)^2 (\kappa-1) & -2 (\kappa-3) (\kappa-2) (2 \kappa-3) & 3 (\kappa-3)^2 (\kappa-2) \\
    (\kappa-3)^2 & -4 (\kappa-3) (2 \kappa-3) & 3 (\kappa-3) & 6 (\kappa-3) (\kappa-2) & 3 (\kappa-3) & \kappa^3+6 \kappa^2-30 \kappa+36 & (\kappa-3)^2 (\kappa-1) & -2 (\kappa-3) (\kappa-2) (2 \kappa-3) & 3 (\kappa-3)^2 (\kappa-2) \\
    (\kappa-3)^2 & -4 (\kappa-3) (2 \kappa-3) & \kappa^3+3 \kappa-9 & 6 (\kappa-3) (\kappa-2) & 3 (\kappa-3) & 6 (\kappa-3) (\kappa-2) & (\kappa-3)^2 (\kappa-1) & -2 (\kappa-3) (\kappa-2) (2 \kappa-3) & 3 (\kappa-3)^2 (\kappa-2) \\
    (\kappa-3)^2 & -4 (\kappa-3) (2 \kappa-3) & 3 (\kappa-3) & \kappa^3+6 \kappa^2-30 \kappa+36 & 3 (\kappa-3) & 6 (\kappa-3) (\kappa-2) & (\kappa-3)^2 (\kappa-1) & -2 (\kappa-3) (\kappa-2) (2 \kappa-3) & 3 (\kappa-3)^2 (\kappa-2) \\
    (\kappa-3)^2 & -4 (\kappa-3) (2 \kappa-3) & 3 (\kappa-3) & 6 (\kappa-3) (\kappa-2) & 3 (\kappa-3) & 6 (\kappa-3) (\kappa-2) & (\kappa-3)^2 (\kappa-1) & -2 (\kappa-3) (\kappa-2) (2 \kappa-3) & (2 \kappa-3) \left(2 \kappa^2-9 \kappa+18\right)
   \end{smallmatrix}
  \right]
 }$
 \caption{Recurrence matrix for the recursive construction in the proof of Lemma~\ref{lem:ternary:3k-1k-3-3_hard_k>3}.}
 \label{tbl:ternary:symmetric_weave_interpolation}
 \thisfloatpagestyle{empty}
\end{sidewaystable}

\begin{lemma} \label{lem:ternary:3k-1k-3-3_hard_k>3}
 Suppose $\kappa \ge 4$ is the domain size.
 Then $\PlHolant(\langle 3 (\kappa - 1), \kappa - 3, -3 \rangle)$ is $\SHARPP$-hard.
\end{lemma}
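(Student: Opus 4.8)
The plan is to reduce from the problem of counting weighted Eulerian partitions over planar $4$-regular graphs. Concretely, I would show
\[
\PlHolant\bigl(\langle 2,0,1,0,0,0,1,0,0\rangle\bigr)\ \le_T\ \PlHolant\bigl(\langle 3(\kappa-1),\kappa-3,-3\rangle\bigr),
\]
where $\langle 2,0,1,0,0,0,1,0,0\rangle$ is the succinct quaternary signature of type $\tau_4$ that is $\SHARPP$-hard by Corollary~\ref{cor:ternary:201000100_hard}. The reduction is by polynomial interpolation, using only copies of $\langle 3(\kappa-1),\kappa-3,-3\rangle$, and the engine is the generalized interpolation result Lemma~\ref{lem:interpolate_all_not_orthogonal}, which is tailored precisely to the situation where the initial signature is orthogonal to some row eigenvectors of the recurrence matrix.

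First I would form the quaternary gadget of Figure~\ref{fig:gadget:ternary:quaternary:I}, all of whose vertices carry $\langle 3(\kappa-1),\kappa-3,-3\rangle$; since that ternary signature is domain invariant, so is this gadget, and a direct evaluation gives its succinct signature of type $\tau_4$ (length $9$). Then I would consider the recursive construction of Figure~\ref{fig:gadget:ternary:symmetric_weave_interpolation}, whose recurrence matrix $M\in\mathbb{C}^{9\times 9}$ is the matrix displayed in Table~\ref{tbl:ternary:symmetric_weave_interpolation} and whose initial signature $s$ is the type-$\tau_4$ succinct signature of the base gadget $N_0$ (two through-strands). The heart of the argument is the spectral analysis of $M$: I would compute its characteristic polynomial and show it factors as $q_4(x)\cdot p(x,\kappa)$, where $p$ is the quintic of Section~\ref{sec:ternary} specialized at $y=\kappa$ and $q_4$ is an explicit quartic whose roots are ``spurious'' eigenvalues. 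Using Lemma~\ref{lem:ternary:3r2c_roots} (five distinct roots of $p(x,\kappa)$ for $\kappa\ge 1$) together with a check that $\gcd\bigl(q_4(x),p(x,\kappa)\bigr)=1$ for all $\kappa\ge 4$, I conclude that $M$ has nine distinct eigenvalues and is therefore diagonalizable. (This is also the point where $\kappa\ge 4$ enters: for $\kappa=3$ the part $P_{\subMat{1}{2}{3}{4}}$ of $\tau_4$ is vacuous and $M$ has a different shape; that case is treated separately.)

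Next I would pin down how $s$ meets the row eigenvectors of $M$. I would exhibit explicitly, uniformly in $\kappa$, four linearly independent row eigenvectors — those attached to the four roots of $q_4$ — and verify that $s$ is orthogonal to each of them. To show $s$ is \emph{not} orthogonal to the remaining five (the ones belonging to the roots of $p(x,\kappa)$), I would compute that $\operatorname{rank}\bigl([s\ Ms\ \cdots\ M^{8}s]\bigr)\ge 5$ and invoke Lemma~\ref{lem:2nd_condition_implication}, giving $\ell=5$ in the notation of Lemma~\ref{lem:interpolate_all_not_orthogonal}. Since $\kappa\ge 4\ge 3$, Theorem~\ref{thm:ternary:lattice} says the five eigenvalues, being the roots of $p(x,\kappa)$, satisfy the lattice condition. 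Finally I would verify that the target signature $\langle 2,0,1,0,0,0,1,0,0\rangle$ of type $\tau_4$ is orthogonal to the same four spurious row eigenvectors, hence lies in the $\ell$-dimensional eigenspace handled by the lemma; applying Lemma~\ref{lem:interpolate_all_not_orthogonal} then gives the reduction above, and $\SHARPP$-hardness follows from Corollary~\ref{cor:ternary:201000100_hard}.

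The main obstacle I anticipate is not the lattice condition — that is exactly what Theorem~\ref{thm:ternary:lattice} (and behind it the effective Siegel/Puiseux and Galois arguments) was designed to supply — but the linear-algebra bookkeeping around the $9\times 9$ matrix $M$: showing that the five ``interesting'' eigenvalues are precisely the roots of the curve polynomial $p(x,\kappa)$ so that the number-theoretic machinery applies verbatim, producing the four spurious row eigenvectors in closed form for all $\kappa$, verifying the rank bound on $[s\ Ms\ \cdots\ M^{8}s]$ and the coprimality $\gcd(q_4,p(\cdot,\kappa))=1$ for every $\kappa\ge 4$ at once, and confirming that the target signature sits in the correct subspace. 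Each of these is a finite symbolic computation in the single parameter $\kappa$, but all must be carried out carefully, and the degenerate possibilities (a repeated eigenvalue, or the target leaking outside the good eigenspace) ruled out, for the argument to close uniformly for all $\kappa\ge 4$.
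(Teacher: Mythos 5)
Your proposal follows the same route as the paper: reduce from $\PlHolant(\langle 2,0,1,0,0,0,1,0,0\rangle)$, build the quaternary gadget of Figure~\ref{fig:gadget:ternary:quaternary:I}, run the recursive construction of Figure~\ref{fig:gadget:ternary:symmetric_weave_interpolation}, and invoke Lemma~\ref{lem:interpolate_all_not_orthogonal} with Theorem~\ref{thm:ternary:lattice} supplying the lattice condition for the five ``interesting'' eigenvalues.

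There is, however, one genuine flaw in the spectral analysis. You claim that since the quintic factor has five distinct roots and $\gcd(q_4,p)=1$, the matrix $M$ has nine distinct eigenvalues and is therefore diagonalizable. That is not true: the quartic factor is $(x-\kappa^3)^4$, so $\kappa^3$ is an eigenvalue of algebraic multiplicity $4$, and $M$ has only six distinct eigenvalues. The ``nine distinct eigenvalues'' inference breaks, and diagonalizability does not follow from the characteristic polynomial alone. The paper closes this gap differently: it exhibits four explicit linearly independent row eigenvectors, all with eigenvalue $\kappa^3$ and all orthogonal to both the initial and target signatures, which gives geometric multiplicity $4$ for $\kappa^3$; combined with the five distinct roots of the quintic factor (distinct because they satisfy the lattice condition) this yields diagonalizability. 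You would need exactly such an explicit exhibition anyway to check the orthogonality structure required by Lemma~\ref{lem:interpolate_all_not_orthogonal}, so the fix is not extra work, but the argument as you have written it does not establish the hypothesis of that lemma.

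A smaller inaccuracy: the eigenvalues of $M$ are not literally the roots of $p(x,\kappa)$. Up to the $(x-\kappa^3)^4$ factor the characteristic polynomial is a quintic $h(x,\kappa)$; after the normalization $x\mapsto\kappa^3 x$ and the shift $\kappa = y+1$ one obtains $p(x,y)$, so the five nontrivial eigenvalues are $\kappa^3$ times the roots of $p(x,\kappa-1)$, not of $p(x,\kappa)$. This does not endanger the argument, because the lattice condition is invariant under simultaneously rescaling all of the $\lambda_i$ by the same nonzero constant, but the parameter substitution needs to be $y=\kappa-1$, not $y=\kappa$, for Theorem~\ref{thm:ternary:lattice} (which requires $y_0 \ge 3$, i.e., $\kappa\ge 4$) to apply.
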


\begin{proof}
 Let $\langle 2,0,1,0,0,0,1,0,0 \rangle$ be a succinct quaternary signature of type $\tau_4$.
 We reduce from $\PlHolant(\langle 2,0,1,0,0,0,1,0,0 \rangle)$,
 which is $\SHARPP$-hard by Corollary~\ref{cor:ternary:201000100_hard}.
 
 Consider the gadget in Figure~\ref{fig:gadget:ternary:quaternary:I}.
 We assign $\langle 3 (\kappa - 1), \kappa - 3, -3 \rangle$ to the vertices.
 By Lemma~\ref{lem:compute:quaternary:I},
 the signature of this gadget is
 $f =
 \langle
  f_{\subMat{1}{1}{1}{1}},
  f_{\subMat{1}{1}{1}{2}},
  f_{\subMat{1}{1}{2}{2}},
  f_{\subMat{1}{1}{2}{3}},
  f_{\subMat{1}{2}{1}{2}},
  f_{\subMat{1}{2}{1}{3}},
  f_{\subMat{1}{2}{2}{1}},
  f_{\subMat{1}{2}{3}{1}},
  f_{\subMat{1}{2}{3}{4}}
 \rangle$ up to a nonzero factor of $\kappa$,
 where
 \begin{align*}
  f_{\subMat{1}{1}{1}{1}} &= (\kappa - 1) (\kappa + 3),\\
  f_{\subMat{1}{1}{1}{2}} &= \kappa - 3,\\
  f_{\subMat{1}{1}{2}{2}} &= 2 \kappa - 3,\\
  f_{\subMat{1}{1}{2}{3}} &= \kappa - 3,\\
  f_{\subMat{1}{2}{1}{2}} &= 2 \kappa - 3,\\
  f_{\subMat{1}{2}{1}{3}} &= \kappa - 3,\\
  f_{\subMat{1}{2}{2}{1}} &= (\kappa - 3) (\kappa + 1),\\
  f_{\subMat{1}{2}{3}{1}} &= \kappa - 3, \text{ and}\\
  f_{\subMat{1}{2}{3}{4}} &= -3.
 \end{align*}
 
 Now consider the recursive construction in Figure~\ref{fig:gadget:ternary:symmetric_weave_interpolation}.
 We assign $f$ to every vertex.
 Up to a nonzero factor of $\kappa^s$,
 let $g_s$ be the succinct signature of type $\tau_4$ for the $s$th gadget in this construction.
 Then $g_0 = \langle 1,0,0,0,0,0,1,0,0 \rangle$ and $g_s = M^s g_0$,
 where $M$ is the matrix in Table~\ref{tbl:ternary:symmetric_weave_interpolation}.
 
 The row vectors
 \begin{align*}
  ( 0, 0,0,0,-1,0,0,0,1)&,\\
  ( 0,-1,0,1,-1,0,0,1,0)&,\\
  (-1, 0,1,0,-1,0,1,0,0)&, \text{ and}\\
  ( 0, 0,0,0,-1,1,0,0,0)&
 \end{align*}
 are linearly independent row eigenvectors of $M$,
 all with eigenvalue $\kappa^3$,
 that are orthogonal to the initial signature $g_0$.
 Note that our target signature $\langle 2, 0, 1, 0, 0, 0, 1, 0, 0 \rangle$ is also orthogonal to these four row eigenvectors.
 
 Up to a factor of $(x - \kappa^3)^4$,
 the characteristic polynomial of $M$ is
 \[
  h(x,\kappa) = x^5 - \kappa^6 (2 \kappa - 1) x^3 - \kappa^9 (\kappa^2 - 2 \kappa + 3) x^2 + (\kappa-2) (\kappa-1) \kappa^{12} x + (\kappa-1)^3 \kappa^{15}.
 \]
 Since $h(\kappa^3, \kappa) = (\kappa - 3) \kappa^{17}$ and $\kappa \ge 4$,
 we know that $\kappa^3$ is not a root of $h(x,\kappa)$ as a polynomial in $x$.
 Thus,
 none of the remaining eigenvalues are $\kappa^3$.
 The roots of $h(x,\kappa)$ satisfy the lattice condition iff the roots of
 \[
  \tilde{h}(x,\kappa)
  = \frac{1}{\kappa^{15}} h(\kappa^3 x, \kappa)
  = x^5 - (2 \kappa - 1) x^3 - (\kappa^2 - 2 \kappa + 3) x^2 + (\kappa-2) (\kappa-1) x + (\kappa-1)^3
 \]
 satisfy the lattice condition.
 In $\tilde{h}(x,\kappa)$,
 we replace $\kappa$ by $y+1$ to get $p(x,y) = x^5 - (2 y + 1) x^3 - (y^2 + 2) x^2 + (y-1) y x + y^3$.
 By Theorem~\ref{thm:ternary:lattice},
 the roots $p(x,y_0)$ satisfy the lattice condition for any integer $y_0 \ge 3$.
 Thus, the roots of $\tilde{h}(x,\kappa)$ satisfy the lattice for any $\kappa \ge 4$.
 In particular,
 this means that the five eigenvalues of $M$ different from $\kappa^3$ are distinct,
 so $M$ is diagonalizable.
 
 The 5-by-5 matrix in the upper-left corner of $[g_0\ M g_0\ \ldots\ M^8 g_0]$ is
 \[
  \left[
   \begin{smallmatrix}
    1 & 9 (\kappa-1)^2 \kappa & (\kappa-1) \kappa^4 \left(\kappa^3-3 \kappa^2+11 \kappa+3\right) & (\kappa-1) \kappa^7 \left(\kappa^3+12 \kappa^2-11 \kappa+6\right) & (\kappa-1) \kappa^{10} \left(\kappa^4+4 \kappa^3-4 \kappa^2+44 \kappa-33\right) \\
    0 & 3 (\kappa-3) (\kappa-1) \kappa & -(\kappa-3) \kappa^4 \left(\kappa^2-2 \kappa-1\right) & (\kappa-3) \kappa^7 \left(3 \kappa^2-3 \kappa+2\right) & (\kappa-3) \kappa^{10} \left(\kappa^3-4 \kappa^2+16 \kappa-11\right) \\
    0 & 9 (\kappa-1)^2 \kappa & \kappa^4 \left(\kappa^4-4 \kappa^3+6 \kappa^2+4 \kappa-3\right) & \kappa^7 \left(15 \kappa^3-28 \kappa^2+11 \kappa-6\right) & \kappa^{10} \left(\kappa^5+3 \kappa^4-22 \kappa^3+72 \kappa^2-83 \kappa+33\right) \\
    0 & 3 (\kappa-3) (\kappa-1) \kappa & -(\kappa-3) (\kappa-1) \kappa^4 (\kappa+1) & 2 (\kappa-3) \kappa^7 \left(2 \kappa^2-\kappa+1\right) & (\kappa-3) (\kappa-1) \kappa^{10} \left(\kappa^2-6 \kappa+11\right) \\
    0 & (\kappa-3)^2 \kappa & (\kappa-3) \kappa^4 (\kappa+1) & (\kappa-3) \kappa^7 \left(\kappa^2-\kappa+2\right) & (\kappa-3) \kappa^{10} \left(\kappa^3-2 \kappa^2+10 \kappa-11\right)
   \end{smallmatrix}
  \right].
 \]
 Its determinant is $(\kappa-3)^3 (\kappa-1)^2 \kappa^{26} (\kappa^4+\kappa^3+17 \kappa^2+3 \kappa+2)$,
 which is nonzero since $\kappa \ge 4$.
 Thus $[g_0\ M g_0\ \ldots\ M^8 g_0]$ has rank at least~$5$,
 so by Lemma~\ref{lem:2nd_condition_implication},
 $g_0$ is not orthogonal to the five remaining row eigenvectors of $M$.
 
 Therefore, by Lemma~\ref{lem:interpolate_all_not_orthogonal},
 we can interpolate $\langle 2, 0, 1, 0, 0, 0, 1, 0, 0 \rangle$,
 which completes the proof.
\end{proof}

When $\kappa = 3$,
$\langle 3 (\kappa - 1), \kappa - 3, -3 \rangle$ simplifies to $-3 \langle -2,0,1 \rangle$.
We have a much simpler proof that this signature is $\SHARPP$-hard.

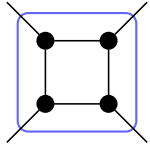
\begin{figure}[t]
 \centering
 \begin{tikzpicture}[scale=\scale,transform shape,node distance=\nodeDist,semithick]
  \node [internal] (0)                    {};
  \node [external] (1) [above left  of=0] {};
  \node [internal] (2) [      right of=0] {};
  \node [external] (3) [above right of=2] {};
  \node [internal] (4) [below       of=2] {};
  \node [external] (5) [below right of=4] {};
  \node [internal] (6) [      left  of=4] {};
  \node [external] (7) [below left  of=6] {};
  \path (0) edge (1)
            edge (2)
            edge (6)
        (2) edge (3)
            edge (4)
        (4) edge (5)
            edge (6)
        (6) edge (7);
  \begin{pgfonlayer}{background}
   \node[draw=\borderColor,thick,rounded corners,inner xsep=12pt,inner ysep=12pt,fit = (0) (4)] {};
  \end{pgfonlayer}
 \end{tikzpicture}
 \caption{Square gadget used to construct the weighted Eulerian partition signature.}
 \label{fig:gadget:ternary:square}
\end{figure}

\begin{lemma} \label{lem:ternary:-201}
 Suppose the domain size is~$3$.
 Then $\PlHolant(\langle -2,0,1 \rangle)$ is $\SHARPP$-hard.
\end{lemma}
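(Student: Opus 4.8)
The plan is to realize, on domain size $3$, the weighted Eulerian partition signature $\langle 2,0,1,0,0,0,1,0,0\rangle$ of type $\tau_4$ directly from $\langle -2,0,1\rangle$ by a single gadget, and then invoke Corollary~\ref{cor:ternary:201000100_hard}. This is the shortcut available at $\kappa=3$ and sidesteps all of the interpolation/Galois/Puiseux apparatus that Lemma~\ref{lem:ternary:3k-1k-3-3_hard_k>3} needs for $\kappa\ge 4$.

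Concretely, I would consider the square gadget of Figure~\ref{fig:gadget:ternary:square}: four vertices placed at the corners of a square, joined in a $4$-cycle, each assigned $\langle -2,0,1\rangle$ and each carrying a single dangling edge. This is a planar quaternary $\{\langle -2,0,1\rangle\}$-gate, and since $\langle -2,0,1\rangle$ is symmetric and domain invariant and the construction has the full dihedral symmetry $D_4$, its signature is constant on each class of $\tau_4$; thus it is described by a succinct quaternary signature of type $\tau_4$ (the last entry, for four distinct colors, being automatically $0$ on domain $3$). Computing the nine entries---either by citing a gadget-computation lemma analogous to Lemmas~\ref{lem:compute:ternary:triangle} and~\ref{lem:compute:quaternary:I}, or by a short direct case analysis---yields $9\langle 2,0,1,0,0,0,1,0,0\rangle$. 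Once this is in hand, the standard gadget-substitution argument (as in Section~\ref{sec:preliminaries}) gives $\PlHolant(\langle 2,0,1,0,0,0,1,0,0\rangle)\le_T\PlHolant(\langle -2,0,1\rangle)$, and $\SHARPP$-hardness follows.

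The computation itself is routine because of how rigid $\langle -2,0,1\rangle$ is on domain $3$: it is nonzero only when its three arguments are all equal (contributing $-2$) or form a permutation of $\{1,2,3\}$ (contributing $1$), so fixing the color on one internal edge of the $4$-cycle forces all the other internal colors, leaving only a handful of surviving assignments to sum for each boundary pattern. The one place where a little care is needed is the vanishing of the four ``mixed'' entries $f_{\subMat{1}{1}{1}{2}}$, $f_{\subMat{1}{1}{2}{3}}$, $f_{\subMat{1}{2}{1}{3}}$, $f_{\subMat{1}{2}{3}{1}}$: three of these reduce to infeasible color-propagation systems, while $f_{\subMat{1}{2}{1}{3}}$ receives three surviving contributions of $-2$, $4$, and $-2$, which cancel. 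I do not expect any genuine obstacle here; the only risk is bookkeeping errors in the case analysis, and one should double-check that the $D_4$ symmetry of the gadget really does force the signature to be constant on each $\tau_4$ part (equivalently, that one representative per part suffices).
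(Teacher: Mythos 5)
Your proposal is correct and is exactly the paper's proof: the paper reduces from $\PlHolant(\langle 2,0,1,0,0,0,1,0,0\rangle)$ (which is $\SHARPP$-hard by Corollary~\ref{cor:ternary:201000100_hard}) by observing that the square gadget of Figure~\ref{fig:gadget:ternary:square}, with every vertex assigned $\langle -2,0,1\rangle$, has signature $9\langle 2,0,1,0,0,0,1,0,0\rangle$. Your case analysis of the entries (in particular the cancellation $-2+4-2=0$ in $f_{\subMat{1}{2}{1}{3}}$) checks out and merely fills in the arithmetic the paper omits.
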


\begin{proof}
 Let $g = \langle 2,0,1,0,0,0,1,0 \rangle$ be a succinct quaternary signature of type $\tau_4$.
 We reduce from $\PlHolant(g)$,
 which is $\SHARPP$-hard by Corollary~\ref{cor:ternary:201000100_hard}.

 Consider the gadget in Figure~\ref{fig:gadget:ternary:square}.
 The vertices are assigned $\langle -2,0,1 \rangle$.
 Up to a factor of~$9$,
 the signature of this gadget is $g$, as desired.
\end{proof}

We summarize this section with the following result.
With all succinct binary signatures of type $\tau_2$ available
as well as the succinct unary signature $\langle 1 \rangle$ of type $\tau_1$,
any succinct ternary signature $\langle a,b,c \rangle$ of type $\tau_3$ satisfying $\mathfrak{B} \ne 0$ is $\SHARPP$-hard.

\begin{lemma} \label{lem:ternary}
 Suppose $\kappa \ge 3$ is the domain size and $a,b,c \in \mathbb{C}$.
 Let $\mathcal{F}$ be a signature set containing the succinct ternary signature $\langle a, b, c \rangle$ of type $\tau_3$,
 the succinct unary signature $\langle 1 \rangle$ of type $\tau_1$,
 and the succinct binary signature $\langle x, y \rangle$ of type $\tau_2$ for all $x,y \in \mathbb{C}$.
 If $\mathfrak{B} \ne 0$,
 then $\PlHolant(\mathcal{F})$ is $\SHARPP$-hard.
\end{lemma}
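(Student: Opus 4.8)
Since $\mathfrak{B}\neq0$ by hypothesis, the natural move is to branch on whether $\mathfrak{A}=a-3b+2c$ vanishes.

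First suppose $\mathfrak{A}\neq0$, so that $\mathfrak{A}\mathfrak{B}\neq0$. Then I would apply Lemma~\ref{lem:ternary:construct_abb}: since $\mathcal{F}$ contains $\langle a,b,c\rangle$ together with every succinct binary signature $\langle x,y\rangle$ of type $\tau_2$, there exist $a',b'\in\mathbb{C}$ with $a'\neq b'$ such that $\PlHolant(\mathcal{F}\cup\{\langle a',b',b'\rangle\})\le_T\PlHolant(\mathcal{F})$. Now $\mathcal{F}\cup\{\langle a',b',b'\rangle\}$ contains $\langle a',b',b'\rangle$ with $a'\neq b'$, the succinct unary signature $\langle1\rangle$ of type $\tau_1$, and the succinct binary signatures $\langle1-\kappa,1\rangle$ and $\langle0,1\rangle$ of type $\tau_2$ (the latter two being among the binary signatures we are given). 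Hence Corollary~\ref{cor:k>r:abb_unary_binaries} applies to $\mathcal{F}\cup\{\langle a',b',b'\rangle\}$, showing it is $\SHARPP$-hard, and composing with the above reduction shows $\PlHolant(\mathcal{F})$ is $\SHARPP$-hard.

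Now suppose $\mathfrak{A}=0$, i.e.\ $a=3b-2c$. Then $\mathfrak{B}=\mathfrak{A}+\kappa(b-c)=\kappa(b-c)$, and since $\mathfrak{B}\neq0$ we get $b\neq c$. Thus $\langle a,b,c\rangle=\langle 3b-2c,b,c\rangle$ with $b\neq c$, and since $\mathcal{F}$ also contains every succinct binary signature of type $\tau_2$, Lemma~\ref{lem:ternary:fixed_point} gives $\PlHolant(\mathcal{F}\cup\{\langle 3(\kappa-1),\kappa-3,-3\rangle\})\le_T\PlHolant(\mathcal{F})$. To finish I would split on $\kappa$: for $\kappa\geq4$, Lemma~\ref{lem:ternary:3k-1k-3-3_hard_k>3} shows $\PlHolant(\langle 3(\kappa-1),\kappa-3,-3\rangle)$ is $\SHARPP$-hard; for $\kappa=3$ we have $\langle 3(\kappa-1),\kappa-3,-3\rangle=\langle6,0,-3\rangle=-3\langle-2,0,1\rangle$, and since scaling a signature by a nonzero constant does not change the complexity, Lemma~\ref{lem:ternary:-201} shows this is $\SHARPP$-hard. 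In either case $\PlHolant(\mathcal{F})$ is $\SHARPP$-hard.

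The essential point is that all of the genuinely hard work has already been carried out: the delicate ingredient---namely that the interpolation underlying Lemma~\ref{lem:ternary:3k-1k-3-3_hard_k>3} succeeds, which required the lattice condition proved via Galois theory and the effective Siegel-type analysis of $p(x,y)$ through Theorem~\ref{thm:ternary:lattice}---is absorbed into the cited results. For the present lemma the only things to verify carefully are that the degenerate locus $\mathfrak{A}=\mathfrak{B}=0$ (equivalently $\mathfrak{A}=0$ and $b=c$) is genuinely excluded by the hypothesis $\mathfrak{B}\neq0$, so that $b\neq c$ is available when invoking Lemma~\ref{lem:ternary:fixed_point}, and that every auxiliary signature demanded by Corollary~\ref{cor:k>r:abb_unary_binaries} (respectively by the fixed-point construction) is already present in $\mathcal{F}$. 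No step here presents a real obstacle; the case split on $\mathfrak{A}$ and the subsidiary split on $\kappa$ are the only organizational care required.
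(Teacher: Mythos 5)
Your proof is correct and follows exactly the same structure as the paper's: a case split on $\mathfrak{A}=0$ versus $\mathfrak{A}\neq0$, with the first branch handled by Lemma~\ref{lem:ternary:construct_abb} plus Corollary~\ref{cor:k>r:abb_unary_binaries}, and the second branch handled by Lemma~\ref{lem:ternary:fixed_point} followed by Lemma~\ref{lem:ternary:3k-1k-3-3_hard_k>3} (for $\kappa\ge4$) or Lemma~\ref{lem:ternary:-201} (for $\kappa=3$). Your verification that $b\neq c$ follows from $\mathfrak{A}=0$ and $\mathfrak{B}\neq0$, and that all required auxiliary signatures are already in $\mathcal{F}$, is exactly the care the paper's proof implicitly relies on.
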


\begin{proof}
 Suppose $\mathfrak{A} \ne 0$.
 By Lemma~\ref{lem:ternary:construct_abb},
 we have a succinct ternary signature $\langle a', b', b' \rangle$ of type $\tau_3$ with $a' \ne b'$.
 Then we are done by Corollary~\ref{cor:k>r:abb_unary_binaries}.
 
 Otherwise, $\mathfrak{A} = 0$.
 Since $\mathfrak{B} \ne 0$,
 we have $b \ne c$.
 By Lemma~\ref{lem:ternary:fixed_point},
 we have $\langle 3 (\kappa - 1), \kappa - 3, -3 \rangle$.
 If $\kappa \ge 4$,
 then we are done by Lemma~\ref{lem:ternary:3k-1k-3-3_hard_k>3}.
 Otherwise, $\kappa = 3$ and we are done by Lemma~\ref{lem:ternary:-201}.
\end{proof}

\section{Constructing a Nonzero Unary Signature} \label{sec:unary}

The primary goal of this section is to construct the succinct unary signature $\langle 1 \rangle$ of type $\tau_1$.
However, this is not always possible.
For example,
the succinct ternary signature $\langle 0,0,1 \rangle = \AD_{3,3}$ of type $\tau_3$ (on domain size~$3$) cannot construct $\langle 1 \rangle$.
This follows from the parity condition (Lemma~\ref{lem:k=r:parity_condition}).
In such cases,
we show that the problem is either computable in polynomial time or $\SHARPP$-hard without the help of additional signatures.

Lemma~\ref{lem:unary:construct_<1>} handles two easy cases for which it is possible to construct $\langle 1 \rangle$.

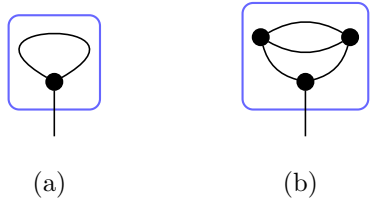
\begin{figure}[t]
 \centering
 \subcaptionbox
 {\label{subfig:gadget:unary:unary:self-loop}}{
  \begin{tikzpicture}[scale=\scale,transform shape,node distance=\nodeDist,semithick]
   \node[external] (0)              {};
   \node[internal] (1) [above of=0] {};
   \path (0) edge (1)
         (1) edge[out=150, in=30, looseness=20] node[pos=0.3] (e1) {} node[pos=0.7] (e2) {} (1);
   \begin{pgfonlayer}{background}
    \node[draw=\borderColor,thick,rounded corners,inner xsep=6pt,inner ysep=12pt,fit = (1) (e1) (e2)] {};
   \end{pgfonlayer}
  \end{tikzpicture}}
 \qquad
 \subcaptionbox
 {\label{subfig:gadget:unary:unary:parallel_edges}}{
  \begin{tikzpicture}[scale=\scale,transform shape,node distance=\nodeDist,semithick]
   \node[external] (0)                    {};
   \node[internal] (1) [above       of=0] {};
   \node[internal] (2) [above left  of=1] {};
   \node[internal] (3) [above right of=1] {};
   \path (0) edge (1)
         (1) edge[bend left ] (2)
             edge[bend right] (3)
         (2) edge[bend left ] coordinate (c1) (3)
             edge[bend right] (3);
   \begin{pgfonlayer}{background}
    \node[draw=\borderColor,thick,rounded corners,inner xsep=6pt,inner ysep=12pt,fit = (1) (2) (3) (c1)] {};
   \end{pgfonlayer}
  \end{tikzpicture}}
 \caption{Two simple unary gadgets.}
 \label{fig:gadget:unary}
\end{figure}

\begin{lemma} \label{lem:unary:construct_<1>}
 Suppose $\kappa \ge 3$ is the domain size and $a,b,c \in \mathbb{C}$.
 Let $\mathcal{F}$ be a signature set containing the succinct ternary signature $\langle a,b,c \rangle$ of type $\tau_3$.
 If $a + (\kappa - 1) b \ne 0$ or $[2 b + (\kappa - 2) c] [b^2 - 4 b c - (\kappa - 3) c^2] \ne 0$,
 then
 \[
  \PlHolant(\mathcal{F} \union \{\langle 1 \rangle\}) \le_T \PlHolant(\mathcal{F}),
 \]
 where $\langle 1 \rangle$ is a succinct unary signature of type $\tau_1$.
\end{lemma}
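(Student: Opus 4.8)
The plan is to realize $\langle 1 \rangle$ with one of the two small (and obviously planar) gadgets in Figure~\ref{fig:gadget:unary}, in a case analysis on which of the two disjuncts of the hypothesis holds. The guiding observation is that any $\{\langle a,b,c\rangle\}$-gate with a single dangling edge is domain invariant, hence has succinct signature $\lambda \langle 1 \rangle$ of type $\tau_1$ for some scalar $\lambda$; so in each case it suffices to compute $\lambda$ and check that it is nonzero, after which $\langle 1 \rangle$ is realizable (up to the scalar $1/\lambda$) and the reduction $\PlHolant(\mathcal{F} \union \{\langle 1 \rangle\}) \le_T \PlHolant(\mathcal{F})$ follows from the realizability discussion of Section~\ref{sec:preliminaries}.

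First, if $a + (\kappa-1) b \ne 0$, I would use the self-loop gadget of Figure~\ref{subfig:gadget:unary:unary:self-loop}: writing $f$ for $\langle a,b,c\rangle$, its signature is $u(w) = \sum_t f(w,t,t)$, and for fixed $w$ the single term $t = w$ contributes $a$ while each of the $\kappa - 1$ terms $t \ne w$ contributes $b$ (two-equal pattern), so $u = [a+(\kappa-1)b]\langle 1 \rangle$, a nonzero multiple of $\langle 1 \rangle$. Otherwise $a + (\kappa-1) b = 0$, and then the hypothesis forces $[2b+(\kappa-2)c][b^2-4bc-(\kappa-3)c^2] \ne 0$. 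Here I would use the theta-shaped gadget of Figure~\ref{subfig:gadget:unary:unary:parallel_edges} with $\langle a,b,c\rangle$ on all three vertices. Contracting the two vertices joined by the double edge produces the binary signature $h(p,q) = \sum_{r,s} f(p,r,s)\,f(q,r,s)$ of type $\tau_2$, say $h = \langle g_1, g_2 \rangle$; routine multiset counting gives $g_1 = a^2 + 3(\kappa-1)b^2 + (\kappa-1)(\kappa-2)c^2$ and $g_2 = 2ab + \kappa b^2 + 4(\kappa-2)bc + (\kappa-2)(\kappa-3)c^2$. Connecting $h$ to the third vertex then yields the unary $\lambda\langle 1\rangle$ with $\lambda = \sum_{p,q} f(1,p,q)\,h(p,q)$, and grouping the terms by the pattern of $(p,q)$ relative to the fixed value $1$ gives $\lambda = [a+(\kappa-1)b]\,g_1 + (\kappa-1)[2b+(\kappa-2)c]\,g_2$. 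Substituting $a = -(\kappa-1)b$ annihilates the first term and simplifies $g_2$ to $-(\kappa-2)[b^2-4bc-(\kappa-3)c^2]$, so $\lambda = -(\kappa-1)(\kappa-2)[2b+(\kappa-2)c][b^2-4bc-(\kappa-3)c^2] \ne 0$, using $\kappa \ge 3$. In either case $\langle 1\rangle$ is realizable, which is all that is needed.

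There is no deep obstacle here: the lemma reduces to two elementary multiset counts (those for $g_1$ and $g_2$) together with the bookkeeping that collapses $\sum_{p,q} f(1,p,q)\,h(p,q)$ into the stated expression for $\lambda$. The one point requiring care is that one may pass to the case $a + (\kappa-1) b = 0$ before invoking the theta gadget — this is exactly what makes the $g_1$-term of $\lambda$ drop out, so that the surviving factor matches the second disjunct of the hypothesis on the nose.
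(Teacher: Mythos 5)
Your proposal is correct and matches the paper's proof: the paper uses exactly the same two gadgets (self-loop for $a+(\kappa-1)b\ne 0$, the theta gadget otherwise), and its appendix computation (Lemma~\ref{lem:compute:unary:construct_<1>} via Lemma~\ref{lem:compute:binary:parallel_edges}) is the same decomposition you carry out, arriving at the identical $\lambda = -(\kappa-1)(\kappa-2)[2b+(\kappa-2)c][b^2-4bc-(\kappa-3)c^2]$.
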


\begin{proof}
 Suppose $a + (\kappa - 1) b \ne 0$.
 Consider the gadget in Figure~\ref{subfig:gadget:unary:unary:self-loop}.
 We assign $\langle a,b,c \rangle$ to its vertex.
 By Lemma~\ref{lem:unary:construct_<1>},
 this gadget has the succinct unary signature $\langle u \rangle$ of type $\tau_1$,
 where $u = a + (\kappa - 1) b$.
 Since $u \ne 0$,
 this signature is equivalent to $\langle 1 \rangle$.
 
 Otherwise, $a + (\kappa - 1) b = 0$,
 and $[2 b + (\kappa - 2) c] [b^2 - 4 b c - (\kappa - 3) c^2] \ne 0$.
 Consider the gadget in Figure~\ref{subfig:gadget:unary:unary:parallel_edges}.
 We assign $\langle a,b,c \rangle$ to all three vertices.
 By Lemma~\ref{lem:unary:construct_<1>},
 this gadget has the succinct unary signature $\langle u' \rangle$ of type $\tau_1$,
 where $u' = -(\kappa - 1) (\kappa - 2) [2 b + (\kappa - 2) c] [b^2 - 4 b c - (\kappa - 3) c^2]$.
 Since $u' \ne 0$,
 this signature is equivalent to $\langle 1 \rangle$.
\end{proof}

One of the failure conditions of Lemma~\ref{lem:unary:construct_<1>} is when both $a + (\kappa - 1) b = 0$ and $b^2 - 4 b c - (\kappa - 3) c^2 = 0$ hold.
In this case, $\langle a,b,c \rangle = c \langle -(\kappa - 1) (2 \pm \sqrt{\kappa + 1}), 2 \pm \sqrt{\kappa + 1}, 1 \rangle$.
If $c = 0$,
then $a = b = c = 0$ and the signature is trivial.
Otherwise, $c \ne 0$.
Then up to a nonzero factor of $c$,
this signature further simplifies to $\AD_{3,3}$ by taking the minus sign when $\kappa = 3$.
Just like $\AD_{3,3}$,
we show (in Lemma~\ref{lem:unary:dichotomy:AD-like}) that all of these signatures are $\SHARPP$-hard.

Similar to the proof of Theorem~\ref{thm:edge_coloring:k=r},
we prove the hardness in Lemma~\ref{lem:unary:dichotomy:AD-like} by reducing from counting weighted Eulerian partitions.

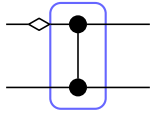
\begin{figure}[t]
 \centering
 \begin{tikzpicture}[scale=\scale,transform shape,node distance=\nodeDist,semithick]
  \node[external] (0)              {};
  \node[external] (1) [below of=0] {};
  \node[internal] (2) [right of=0] {};
  \node[internal] (3) [right of=1] {};
  \node[external] (4) [right of=2] {};
  \node[external] (5) [right of=3] {};
  \path (0.west) edge[postaction={decorate, decoration={
                                             markings,
                                             mark=at position 0.70 with {\arrow[>=diamond,white] {>}; },
                                             mark=at position 0.70 with {\arrow[>=open diamond]  {>}; } } }] (2)
        (1.west) edge (3)
        (2)      edge (3)
                 edge (4.east)
        (3)      edge (5.east);
  \begin{pgfonlayer}{background}
   \node[draw=\borderColor,thick,rounded corners,inner xsep=12pt,inner ysep=8pt,fit = (2) (3)] {};
  \end{pgfonlayer}
 \end{tikzpicture}
 \caption{Quaternary gadget used in the interpolation construction below.
 All vertices are assigned $\langle -(\kappa - 1) \gamma, \gamma, 1 \rangle$.}
 \label{fig:gadget:unary:quaternary:I}
\end{figure}

\begin{sidewaystable}[p]
 \centering
 $
  \mathclap{
  \left[
   \begin{smallmatrix}
    (\kappa -1) (\gamma -3) \gamma ^2 & -2 (\kappa -2) (\kappa -1) \gamma & (\kappa -1) (3 \gamma -1) & 2 (\kappa -2) (\kappa -1) \gamma & 0 & 0 & 0 & 0 & 0 \\
    0 & \kappa^2 (\gamma +1) - 4 \kappa \gamma + 2 (\gamma +1) & 0 & (\kappa -2) (3 \gamma -1) & -(\kappa -2) \gamma & -(\kappa -4) (\kappa -2) \gamma & -(\kappa -2) \gamma & -(\kappa -4) (\kappa -2) \gamma & 2 (\kappa -2) (\gamma -4) \gamma ^2 \\
    3 \gamma -1 & 2 (\kappa -2) \gamma & \kappa^2 (\gamma +1) + \kappa (3 \gamma -5) - 7 \gamma + 5 & -2 (\kappa -2) \gamma & 0 & 0 & 0 & 0 & 0 \\
    0 & 2 (3 \gamma -1) & 0 & (\kappa -2) \gamma (\kappa +\gamma +1) & 2 \gamma & 2 (\kappa -4) \gamma & 2 \gamma & 2 (\kappa -4) \gamma & -4 (\gamma -4) \gamma ^2 \\
    0 & -2 (\kappa -2) \gamma & 0 & 2 (\kappa -2) \gamma & -(\gamma -3) \gamma ^2 & 4 (\kappa -2) \gamma & 3 \gamma -1 & 4 (\kappa -2) \gamma & (\kappa -2) (\gamma -4) \gamma (\gamma +1) \\
    0 & -(\kappa -4) \gamma & 0 & (\kappa -4) \gamma & 2 \gamma & 2 (\kappa -4) \gamma & 2 \gamma & \kappa (3 \gamma +1) - 4 (\gamma +1) & (\gamma -4) \gamma (\gamma \kappa +\kappa -4) \\
    0 & -2 (\kappa -2) \gamma & 0 & 2 (\kappa -2) \gamma & 3 \gamma -1 & 4 (\kappa -2) \gamma & -(\gamma -3) \gamma ^2 & 4 (\kappa -2) \gamma & (\kappa -2) (\gamma -4) \gamma (\gamma +1) \\
    0 & -(\kappa -4) \gamma & 0 & (\kappa -4) \gamma & 2 \gamma & \kappa (3 \gamma +1) - 4 (\gamma +1) & 2 \gamma & 2 (\kappa -4) \gamma & (\gamma -4) \gamma (\gamma \kappa +\kappa -4) \\
    0 & 4 \gamma & 0 & -4 \gamma & \gamma +1 & 2 (\gamma \kappa +\kappa -4) & \gamma +1 & 2 (\gamma \kappa +\kappa -4) & \kappa^2 (\gamma +1) - 2 (\gamma +5) - 2 (5 \gamma -11)
   \end{smallmatrix}
  \right]
  }
 $
 \caption{The recurrence matrix $M$, up to a factor of $(\gamma + 1)$, for the recursive construction in the proof of Lemma~\ref{lem:unary:dichotomy:AD-like}.}
 \label{tbl:unary:matrix:recurrence}
 
 \vspace*{1in}
 
 $
  \left[
   \begin{smallmatrix}
    0 & 0 & 0 & 0 & 1 & -2 & 0 & 0 & 1 \\
    0 & 0 & 0 & 0 & 0 & -1 & 0 & 1 & 0 \\
    0 & -1 & 0 & 1 & -(\gamma -3) \gamma & (\gamma -3) \gamma & 0 & 0 & 0 \\
    0 & 0 & 0 & 0 & -1 & 0 & 1 & 0 & 0 \\
    0 & (\kappa -2) \gamma & 0 & -(\kappa -2) \gamma & 0 & (\kappa -2) (\gamma -1) & 0 & (\kappa -2) (\gamma -1) & (\kappa -2) (\gamma -4) (\gamma -1) \gamma \\
    0 & -(\kappa -2) \gamma & 0 & (\kappa -2) \gamma & \gamma -1 & (\kappa -2) (\gamma -1) & \gamma -1 & (\kappa -2) (\gamma -1) & 0 \\
    0 & 2 & 0 & \kappa -2 & 0 & 0 & 0 & 0 & 0 \\
    1 & 0 & \kappa -1 & 0 & 0 & 0 & 0 & 0 & 0 \\
    (\gamma -3) \gamma & \kappa^2 + \kappa (2 \gamma -7) - 2 (\gamma -5) & -(\gamma -3) \gamma & -\kappa^2 - \kappa (2 \gamma  -7) + 2 (\gamma -5) & -(\gamma -3) \gamma & -(\kappa -4) (\gamma -3) \gamma & -(\gamma -3) \gamma  & -(\kappa -4) (\gamma -3) \gamma & 2 (\gamma -4) (\gamma -3) \gamma^2
   \end{smallmatrix}
  \right]
 $
 \caption{The matrix $P$ whose rows are the row eigenvectors of the matrix in Table~\ref{tbl:unary:matrix:recurrence}.}
 \label{tbl:unary:matrix:eigenvectors}
\end{sidewaystable}

\begin{lemma} \label{lem:unary:dichotomy:AD-like}
 Suppose $\kappa \ge 3$ is the domain size and $a,b,c \in \mathbb{C}$.
 Let $\langle a,b,c \rangle$ be a succinct ternary signature of type $\tau_3$.
 If $a + (\kappa - 1) b = 0$ and $b^2 - 4 b c - (\kappa - 3) c^2 = 0$,
 then
 \[
  \langle a,b,c \rangle = c \langle -(\kappa - 1) (2 + \varepsilon \sqrt{\kappa + 1}), 2 + \varepsilon \sqrt{\kappa + 1}, 1 \rangle,
 \]
 where $\varepsilon = \pm 1$,
 and $\PlHolant(\langle a,b,c \rangle)$ is $\SHARPP$-hard unless $c = 0$,
 in which case,
 the problem is computable in polynomial time.
\end{lemma}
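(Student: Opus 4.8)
The plan is to first peel off the trivial parameter case and then run a polynomial interpolation. I would begin by checking the claimed normal form: the two hypotheses say $a = -(\kappa-1)b$, and, viewing $b^2 - 4bc - (\kappa-3)c^2 = 0$ as a quadratic in $b$, that $b = c(2 + \varepsilon\sqrt{\kappa+1})$ for some $\varepsilon = \pm 1$. If $c = 0$ this forces $b = a = 0$, so $\langle a,b,c\rangle$ is identically zero and $\PlHolant(\langle a,b,c\rangle)$ is computable in polynomial time. If $c \neq 0$, then with $\gamma = 2 + \varepsilon\sqrt{\kappa+1}$ we have $\langle a,b,c\rangle = c\,\langle -(\kappa-1)\gamma,\ \gamma,\ 1\rangle$; since a nonzero scalar does not change the complexity of a Holant problem, it suffices to prove that $\PlHolant(\langle -(\kappa-1)\gamma,\gamma,1\rangle)$ is $\SHARPP$-hard for every $\kappa \ge 3$ and every $\varepsilon = \pm 1$. (It is convenient to record here the identity $\kappa = (\gamma-1)(\gamma-3)$, which I would use later to keep all coefficients in $\mathbb{Q}(\gamma)$.)

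For the hardness I would reduce from counting weighted Eulerian partitions over planar $4$-regular graphs, i.e.\ from $\PlHolant(\langle 2,0,1,0,0,0,1,0,0\rangle)$ of type $\tau_4$, which is $\SHARPP$-hard by Corollary~\ref{cor:ternary:201000100_hard}. Assigning $\langle -(\kappa-1)\gamma,\gamma,1\rangle$ to both vertices of the quaternary gadget in Figure~\ref{fig:gadget:unary:quaternary:I} and contracting (via a direct computation, cf.\ Lemma~\ref{lem:compute:quaternary:I}) yields a succinct quaternary signature $f$ of type $\tau_4$. Iterating $f$ in a linear recursive construction (cf.\ Figure~\ref{fig:gadget:ternary:symmetric_weave_interpolation}), the $s$th gadget has signature $M^s g_0$ up to a known nonzero scalar power, where $M$ is the $9 \times 9$ recurrence matrix displayed in Table~\ref{tbl:unary:matrix:recurrence} and $g_0 = \langle 1,0,0,0,0,0,1,0,0\rangle$ is the two-parallel-edge signature.

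The interpolation will be carried out through Lemma~\ref{lem:interpolate_all_not_orthogonal}, because $g_0$ is orthogonal to several row eigenvectors of $M$. Here I would single out the six rows of the matrix $P$ in Table~\ref{tbl:unary:matrix:eigenvectors} whose first and seventh entries both vanish; these are linearly independent row eigenvectors of $M$ that are orthogonal to $g_0$, and one checks they are also orthogonal to the target $\langle 2,0,1,0,0,0,1,0,0\rangle$. This leaves exactly $\ell = 3$ active eigenvalues $\lambda_1,\lambda_2,\lambda_3$ (those of the remaining three eigenvectors). To complete the argument I would then (i) show $M$ is diagonalizable; (ii) exhibit a nonzero $3 \times 3$ minor of $[g_0\ M g_0\ \cdots\ M^8 g_0]$ so that, by Lemma~\ref{lem:2nd_condition_implication}, $g_0$ is not orthogonal to those three eigenvectors (hence it is not orthogonal to \emph{exactly} $\ell = 3$ of them); and (iii) prove that $\lambda_1,\lambda_2,\lambda_3$ satisfy the lattice condition. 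Then Lemma~\ref{lem:interpolate_all_not_orthogonal} gives $\PlHolant(\langle -(\kappa-1)\gamma,\gamma,1\rangle \cup \{\langle 2,0,1,0,0,0,1,0,0\rangle\}) \le_T \PlHolant(\langle -(\kappa-1)\gamma,\gamma,1\rangle)$, which finishes the proof.

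The main obstacle is step (iii). The intended route is to factor the characteristic polynomial of $M$ (whose coefficients are polynomials in $\gamma$, since $\kappa = (\gamma-1)(\gamma-3)$), divide out the six eigenvalues attached to the explicit eigenvectors, and analyze the remaining cubic $q(x)$: either $q$ is irreducible with Galois group $S_3$ or $A_3$ and its roots do not all have the same complex norm, so Lemma~\ref{lem:lattice_condition:Sn_An} applies, or, if some $\lambda_i$ is in fact a rational function of $\kappa$, one argues directly (in the spirit of Lemma~\ref{lem:interpolation:lattice_condition:cubic}) that no relation $\prod \lambda_i^{x_i} = 1$ with $\sum x_i = 0$ can hold. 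Because the $\lambda_i$ involve $\sqrt{\kappa+1}$, this has to be done uniformly over all $\kappa \ge 3$ and both $\varepsilon = \pm 1$ — possibly by passing to the minimal polynomial of a $\lambda_i$ over $\mathbb{Q}$ and arguing there — and this uniformity is the delicate part. Finally, a few exceptional parameter values would be dispatched separately: $\kappa = 3$ with $\gamma = 0$ is $\AD_{3,3}$, already $\SHARPP$-hard by Theorem~\ref{thm:edge_coloring:k=r}, and any value degenerating the recurrence (for instance $\gamma + 1 = 0$) needs a short ad hoc argument.
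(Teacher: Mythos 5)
Your high-level strategy matches the paper's: peel off $c = 0$, normalize to $\langle -(\kappa-1)\gamma, \gamma, 1 \rangle$ with $\gamma = 2 + \varepsilon\sqrt{\kappa+1}$, reduce from $\PlHolant(\langle 2,0,1,0,0,0,1,0,0\rangle)$ via the quaternary gadget of Figure~\ref{fig:gadget:unary:quaternary:I}, and interpolate through Lemma~\ref{lem:interpolate_all_not_orthogonal} after identifying six row eigenvectors of the recurrence matrix to which both the initial and target signatures are orthogonal. (Two small slips: the recursive construction is Figure~\ref{fig:gadget:linear_interpolation:quaternary}, not Figure~\ref{fig:gadget:ternary:symmetric_weave_interpolation}; and the right orthogonality criterion is $p_1 + p_7 = 0$, which captures the sixth row $r_9$ whose first and seventh entries are negatives of each other, not both zero.)

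There are, however, two genuine gaps. First, your $\kappa = 3$ handling is incomplete. The interpolation argument actually fails for $\kappa = 3$: the three active eigenvalues degenerate to $-1, \kappa-2, \kappa-1 = -1, 1, 2$ (up to the common scalar $(\gamma-1)^3$), and these violate the lattice condition since $(-1)^2 \cdot 1^{-2} \cdot 2^0 = 1$ with $(2,-2,0)$ summing to zero. So $\kappa=3$ must be treated entirely separately, and both signs of $\varepsilon$ occur: $\varepsilon = -1$ gives $\gamma = 0$ and the signature $\langle 0,0,1\rangle = \AD_{3,3}$, which you noted, but $\varepsilon = +1$ gives $\gamma = 4$ and the signature $\langle -8,4,1\rangle \propto \langle 8,-4,-1\rangle$, which you did not address. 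The paper dispatches this second case via the orthogonal holographic transformation $T = \frac{1}{3}\left[\begin{smallmatrix*}[r] 1 & -2 & -2 \\ -2 & 1 & -2 \\ -2 & -2 & 1 \end{smallmatrix*}\right]$, which maps $\langle 8,-4,-1\rangle$ to $\langle 0,0,1\rangle$, and then invokes Theorem~\ref{thm:edge_coloring:k=r}.

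Second, your step (iii) is far harder than what the situation actually demands. You anticipate having to factor the characteristic polynomial over $\mathbb{Q}(\gamma)$, extract a residual cubic, determine its Galois group, and worry uniformly about irrationalities like $\sqrt{\kappa+1}$. But the eigendecomposition of $M$ is explicit: $M = (\gamma-1)^3 P^{-1}\Lambda P$ with $\Lambda = \diag(-1,-1,-1,-1,\kappa-2,\kappa-2,\kappa-1,\kappa-1,\lambda)$. The three row eigenvectors to which $g_0$ is \emph{not} orthogonal ($r_4, r_6, r_8$) carry eigenvalues $-1$, $\kappa-2$, $\kappa-1$, all integers independent of $\gamma$. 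For $\kappa \ge 4$, $\kappa-2$ and $\kappa-1$ are coprime integers greater than~$1$, so the lattice condition is immediate from unique factorization; no Galois theory, Puiseux analysis, or case split on reducibility is required. The genuinely hard irrationality $\lambda = \frac{(\kappa-2)(\kappa+2\gamma-4)}{(\gamma-1)^2}$ sits in the orthogonal complement and never enters the lattice-condition check.
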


\begin{proof}
 If $c = 0$,
 then $a = b = c = 0$ so the output is always~$0$.
 Otherwise, $c \ne 0$.
 Up to a nonzero factor of $c$,
 $\langle a,b,c \rangle$ can be written as $\langle -(\kappa - 1) (2 + \varepsilon \sqrt{\kappa + 1}), 2 + \varepsilon \sqrt{\kappa + 1}, 1 \rangle$ under the given assumptions,
 where $\varepsilon = \pm 1$.
 
 Suppose $\kappa = 3$.
 If $\varepsilon = -1$,
 then we have $\langle 0,0,1 \rangle = \AD_{3,3}$ and we are done by Theorem~\ref{thm:edge_coloring:k=r}.
 Otherwise, $\varepsilon = 1$ and we have $\langle 8,-4,-1 \rangle$.
 Let $T = \frac{1}{3} \left[\begin{smallmatrix*}[r] 1 & -2 & -2 \\ -2 & 1 & -2 \\ -2 & -2 & 1 \end{smallmatrix*}\right]$,
 which is an orthogonal matrix.
 It follows from Theorem~\ref{thm:ortho_holo_trans} and Lemma~\ref{lem:compute:ternary:holographic_transformation} that
 \[
  \Holant(\langle 8,-4,-1 \rangle)
  \equiv_T \Holant(T^{\otimes 3} \langle 8,-4,-1 \rangle)
  \equiv_T \Holant(\langle 0,0,1 \rangle),
 \]
  so again we are done by Theorem~\ref{thm:edge_coloring:k=r}.
 
 Now we suppose $\kappa \ge 4$.
 Let $g = \langle 2,0,1,0,0,0,1,0,0 \rangle$ be a succinct quaternary signature of type $\tau_4$.
 We reduce from $\PlHolant(g)$ to $\PlHolant(\langle a,b,c \rangle)$.
 Then by Corollary~\ref{cor:ternary:201000100_hard},
 $\PlHolant(\langle a,b,c \rangle)$ is $\SHARPP$-hard.
 We write this signature as $\langle -(\kappa - 1) \gamma, \gamma, 1 \rangle$,
 where $\gamma = 2 + \varepsilon \sqrt{\kappa + 1}$.
 Consider the gadget in Figure~\ref{fig:gadget:unary:quaternary:I}.
 We assign $\langle -(\kappa - 1) \gamma, \gamma, 1 \rangle$ to both vertices.
 By Lemma~\ref{lem:compute:quaternary:I},
 up to a nonzero factor of $\gamma - 1$,
 this gadget has the succinct quaternary signature $f$ of type $\tau_4$,
 where
 \[
 f =
  \left\langle
   (\kappa - 1) (\gamma - 3) \gamma^2,\quad
   -(\kappa - 2) \gamma,\quad
   3 \gamma - 1,\quad
   2 \gamma,\quad
   3 \gamma - 1,\quad
   2 \gamma,\quad
   -(\gamma - 3) \gamma^2,\quad
   2 \gamma,\quad
   \gamma + 1
  \right\rangle.
 \]
 Now consider the recursive construction in Figure~\ref{fig:gadget:linear_interpolation:quaternary}.
 We assign $f$ to all vertices.
 Let $f_s$ be the succinct signature of type $\tau_4$ for the $s$th gadget in this recursive construction.
 The initial signature,
 which is just two parallel edges,
 has the succinct signature $f_0 = \langle 1,0,0,0,0,0,1,0,0 \rangle$ of type $\tau_4$.
 We can express $f_s$ as $M^s f_0$,
 where $M$ is the matrix in Table~\ref{tbl:unary:matrix:recurrence}.
 
 Consider an instance $\Omega$ of $\PlHolant(g)$.
 Suppose $g$ appears $n$ times in $\Omega$.
 We construct from $\Omega$ a sequence of instances $\Omega_s$ of $\PlHolant(f)$ indexed by $s \ge 0$.
 We obtain $\Omega_s$ from $\Omega$ by replacing each occurrence of $g$ with the gadget $f_s$.
 
 We can express $M$ as $(\gamma - 1)^3 P^{-1} \Lambda P$,
 where $P$ is the matrix in Table~\ref{tbl:unary:matrix:eigenvectors},
 \[
  \Lambda = \diag(-1, -1, -1, -1, \kappa - 2, \kappa - 2, \kappa - 1, \kappa - 1, \lambda),
 \]
 and $\lambda = \frac{(\kappa - 2) (\kappa + 2 \gamma - 4)}{(\gamma - 1)^2}$.
 The rows of $P$ are linearly independent since
 \[
  \det(P) = (\kappa - 1) (\kappa - 2)^2 (\gamma - 1)^6 (\gamma - 3)^3 \gamma \ne 0.
 \]
 For $1 \le i \le 9$,
 let $r_i$ be the $i$th row of $P$.
 Notice that the initial signature $f_0$ and the target signature $g$ are orthogonal to the same set of row eigenvectors of $M$,
 namely $\{r_1, r_2, r_3, r_5, r_7, r_9\}$.
 Up to a common factor of $(\gamma - 1)^3$,
 the eigenvalues for $M$ corresponding to $r_4$, $r_6$, and $r_8$
 (the three row eigenvectors of $M$ \emph{not} orthogonal to $f_0$)
 are~$-1$, $\kappa - 2$, and $\kappa - 1$ respectively.
 Since $\kappa \ge 4$,
 $\kappa - 2$ and $\kappa - 1$ are relatively prime and greater than~$1$,
 so these three eigenvalues satisfy the lattice condition.
 Thus by Lemma~\ref{lem:interpolate_all_not_orthogonal},
 we can interpolate $g$ as desired.
\end{proof}

\begin{remark}
 Although the matrices in Table~\ref{tbl:unary:matrix:recurrence} and Table~\ref{tbl:unary:matrix:eigenvectors} seem large,
 they are probably the smallest possible to succeed in this recursive quaternary construction.
 In fact, for quaternary signatures one would normally expect these matrices to be even larger since there are typically fifteen different entries in a domain invariant signature of arity~$4$.
\end{remark}

The other failure condition of Lemma~\ref{lem:unary:construct_<1>} is when both $a + (\kappa - 1) b = 0$ and $2 b + (\kappa - 2) c = 0$ hold.
In this case,
$\langle a,b,c \rangle = c \langle (\kappa - 1) (\kappa - 2), -(\kappa - 2), 2 \rangle$.
If this signature is connected to $\langle 1 \rangle$,
then the first entry of the resulting succinct binary signature of type $\tau_2$ is $(\kappa - 1) (\kappa - 2) \cdot 1 - (\kappa - 2) \cdot (\kappa - 1) = 0$
while the second entry is $-(\kappa - 2) \cdot 2 + 2 \cdot (\kappa - 2) = 0$.
That is, the resulting binary signature is identically~$0$.
This suggests we apply a holographic transformation such that the support of the resulting signature is only on $\kappa - 1$ of the domain elements.

If $c = 0$, then $a = b = c = 0$ and the signature is trivial.
Otherwise, $c \ne 0$.
If $\kappa = 3$,
then up to a nonzero factor of $c$,
this signature further simplifies to $\langle 2,-1,2 \rangle$,
which is tractable by case~\ref{case:cor:tractable:holant-star:aba} of Corollary~\ref{cor:tractable:holant-star}.
Otherwise $\kappa \ge 4$,
and we show the problem is $\SHARPP$-hard.

\begin{lemma} \label{lem:unary:<(k-1)(k-2),-(k-2),2>}
 Suppose $\kappa \ge 4$ is the domain size.
 Let $f = \langle (\kappa - 1) (\kappa - 2), -(\kappa - 2), 2 \rangle$ be a succinct ternary signature of type $\tau_3$.
 Then $\PlHolant(f)$ is $\SHARPP$-hard.
\end{lemma}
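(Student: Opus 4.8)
The plan is to turn the very degeneracy that blocks constructing $\langle 1\rangle$ into the engine of the reduction. Since $a+(\kappa-1)b=0$ and $2b+(\kappa-2)c=0$ for $\langle a,b,c\rangle=\langle(\kappa-1)(\kappa-2),-(\kappa-2),2\rangle$, connecting the all‑ones unary $\langle 1\rangle$ to any single edge of $f$ produces the identically‑zero binary signature. First I would choose an invertible $\kappa\times\kappa$ matrix $T$ whose inverse has the all‑ones vector as its last row, e.g. $T=\left[\begin{smallmatrix}I_{\kappa-1}&\mathbf 0\\-\transpose{\mathbf 1}&1\end{smallmatrix}\right]$, so that $T^{-1}=\left[\begin{smallmatrix}I_{\kappa-1}&\mathbf 0\\\transpose{\mathbf 1}&1\end{smallmatrix}\right]$. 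Then $g:=(T^{-1})^{\otimes 3}f$ vanishes whenever any argument equals $\kappa$ (each such value triggers an $\langle 1\rangle$‑contraction of one edge of $f$), so $g$ is supported on $[\kappa-1]^3$, where it takes the same three values $(\kappa-1)(\kappa-2),\,-(\kappa-2),\,2$ as $f$. Using the $2$‑stretch, the transformed binary equality becomes $h:=({=}_2)T^{\otimes 2}=\transpose{T}T$, whose restriction to $[\kappa-1]^2$ is $I_{\kappa-1}+J_{\kappa-1}=\langle 2,1\rangle$ of type $\tau_2$. Because every edge of the stretched signature grid is incident to a vertex carrying $g$, the color $\kappa$ never contributes, and hence $\PlHolant(f)$ is exactly the domain‑$(\kappa-1)$ problem $\plholant{\langle 2,1\rangle}{f'}$, where $f'=\langle(\kappa-1)(\kappa-2),-(\kappa-2),2\rangle$ of type $\tau_3$ is read on the domain $\kappa':=\kappa-1\ge 3$ (note $(\kappa-1)(\kappa-2)=\kappa'(\kappa'-1)$, $-(\kappa-2)=-(\kappa'-1)$).

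Next I would eliminate the nonstandard binary $\langle 2,1\rangle=I_{\kappa'}+J_{\kappa'}$. It is symmetric and invertible, and $S:=I_{\kappa'}+\tfrac{\sqrt{\kappa}-1}{\kappa-1}J_{\kappa'}$ satisfies $S^2=I_{\kappa'}+J_{\kappa'}$, so a holographic transformation by $S^{-1}$ gives $\plholant{\langle 2,1\rangle}{f'}\equiv_T\PlHolant(\hat f)$ on domain $\kappa'$ with $\hat f:=S^{\otimes 3}f'$, again of type $\tau_3$ since $S=\alpha I+\beta J$ commutes with permutations. A direct computation using Lemma~\ref{lem:compute:ternary:holographic_transformation} gives the entries $\hat f=\langle\hat a,\hat b,\hat c\rangle$; the facts I would extract are: (i) $\mathfrak B\ne 0$ for $\hat f$ on the domain $\kappa'$ — it is $\mathfrak B$ for $f'$ (which equals $\kappa'+1=\kappa$) times the nonzero scaling factor $\alpha^2(\alpha+\kappa'\beta)=\sqrt{\kappa}$; (ii) $\hat a+(\kappa'-1)\hat b=0$, since $\langle 1\rangle$ is an eigenvector of $S$ and connecting $\langle 1\rangle$ to $\hat f$ returns a scalar multiple of the disequality $\langle 0,1\rangle$; and (iii) $2\hat b+(\kappa'-2)\hat c\ne 0$.

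With $\hat f$ in hand on the domain $\kappa'\ge 3$ the remaining steps follow the established pattern. If moreover $\hat b^2-4\hat b\hat c-(\kappa'-3)\hat c^2=0$, then $\hat f$ satisfies both hypotheses of Lemma~\ref{lem:unary:dichotomy:AD-like} (and $\hat c\ne 0$, for otherwise $2\hat b+(\kappa'-2)\hat c\ne 0$ would force $\hat b\ne 0$, contradicting that equality), so $\PlHolant(\hat f)$ is $\SHARPP$‑hard and we are done. Otherwise $[2\hat b+(\kappa'-2)\hat c][\hat b^2-4\hat b\hat c-(\kappa'-3)\hat c^2]\ne 0$, so the parallel‑edge gadget of Figure~\ref{subfig:gadget:unary:unary:parallel_edges} produces a nonzero unary, i.e.\ $\langle 1\rangle$, by Lemma~\ref{lem:unary:construct_<1>}. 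Contracting $\langle 1\rangle$ into one edge of $\hat f$ yields the disequality $\langle 0,1\rangle$ of type $\tau_2$; by Lemma~\ref{lem:k>r:binary:interpolation:eigenvalues} its recurrence matrix in the construction of Figure~\ref{fig:gadget:k>r:binary:interpolation} has eigenvalues $\kappa'-1$ and $-1$, which are nonzero with ratio of modulus $\kappa'-1\ge 2$ and hence not a root of unity, so by Corollary~\ref{cor:k>r:binary:interpolate} we interpolate every succinct binary signature $\langle x,y\rangle$ of type $\tau_2$. Now $\{\hat f,\langle 1\rangle\}\cup\{\langle x,y\rangle:x,y\in\mathbb C\}$ is available and $\mathfrak B\ne 0$ for $\hat f$, so Lemma~\ref{lem:ternary} shows $\PlHolant(\hat f)$ is $\SHARPP$‑hard; unwinding the equivalences, so is $\PlHolant(f)$.

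The step I expect to be the main obstacle is the verification that $\mathfrak B\ne 0$ for the transformed signature $\hat f$ on the smaller domain: this is precisely what guarantees that descending to $\kappa-1$ colors escapes the obstruction peculiar to the original case and lets the generic hardness machinery (Lemma~\ref{lem:ternary}) apply. Beyond that, the work is the routine but careful bookkeeping of which row eigenvectors the initial and target signatures are orthogonal to in the interpolation steps, and disposing of the degenerate (``$\AD$‑like'') sub‑case via Lemma~\ref{lem:unary:dichotomy:AD-like}.
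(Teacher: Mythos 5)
Your proposal is correct, and it reaches the same destination as the paper by a genuinely different route. The paper constructs a single $\kappa\times\kappa$ orthogonal matrix $T = \left[\begin{smallmatrix} 1 & \mathbf{1} \\ \mathbf{1} & T' \end{smallmatrix}\right]$ (with $T'$ of the form $yJ_{\kappa-1}+(x-y)I_{\kappa-1}$), checks orthogonality by hand, applies Theorem~\ref{thm:ortho_holo_trans}, and then \emph{explicitly} computes $\widehat{f}=T^{\otimes 3}f$, finding that it vanishes whenever any index equals $1$ and on the remaining $\kappa-1$ values it is exactly $\langle -(\kappa-2)(2+\sqrt{\kappa}),\,2+\sqrt{\kappa},\,1\rangle$ up to scalar. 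That is literally the form appearing in the conclusion of Lemma~\ref{lem:unary:dichotomy:AD-like} with domain size $\kappa-1$, so hardness follows in one step with no case split and no appeal to Section~\ref{sec:ternary}. You instead decompose the descent to domain $\kappa-1$ into two steps --- a triangular change of basis that restricts the support and leaves the residual binary $\langle 2,1\rangle=I+J$, followed by the square root $S$ of $I+J$ --- and rather than computing $\hat f=S^{\otimes3}f'$ entrywise, you extract just enough structure ($\hat a+(\kappa'-1)\hat b=0$, $2\hat b+(\kappa'-2)\hat c\neq0$, $\mathfrak B(\hat f)\neq0$) and close with a dichotomy on whether $\hat b^2-4\hat b\hat c-(\kappa'-3)\hat c^2$ vanishes, routing the second branch through $\langle 1\rangle$, binary interpolation (Corollary~\ref{cor:k>r:binary:interpolate}), and the heavy machinery of Lemma~\ref{lem:ternary}. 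Both branches are sound and Lemma~\ref{lem:ternary} is proved earlier, so there is no circularity; the trade-off is that you pay for not computing $\hat f$ by invoking a much larger hammer in the second case (which, in fact, never occurs --- a direct computation shows the AD-like condition is automatic for the transformed signature, as the paper's explicit $\widehat f$ makes plain). One small point worth tightening: the justification of claim (ii), that $\langle 1\rangle\circ\hat f$ is proportional to $\langle 0,1\rangle$, is asserted via the eigenvector property of $\langle 1\rangle$ under $S$, but it actually depends on the specific numerology ($S^2=I+J$ together with the matrix $\kappa I_{\kappa'}-\kappa J_{\kappa'}$ obtained from $\langle 1\rangle\circ f'=\langle\kappa'-1,-2\rangle$) and deserves a short explicit eigenvalue check.
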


\begin{proof}
 Consider the matrix $T = \left[\begin{smallmatrix} 1 & \mathbf{1} \\ \mathbf{1} & T' \end{smallmatrix}\right] \in \mathbb{C}^{\kappa \times \kappa}$,
 where $T' = y J_{\kappa-1} + (x - y) I_{\kappa-1}$
 with $x = -\frac{\kappa + \sqrt{\kappa} - 1}{\sqrt{\kappa} + 1}$
 and $y = \frac{1}{\sqrt{\kappa} + 1}$.
 After scaling by $\frac{1}{\sqrt{\kappa}}$,
 we claim that $T$ is an orthogonal matrix.
 
 Let $r_i$ be the $i$th row of $\frac{1}{\sqrt{\kappa}} T$.
 First we compute the diagonal entries of $\frac{1}{\kappa} T \transpose{T}$.
 Clearly $r_1 \transpose{r_1} = 1$.
 For $2 \le i \le \kappa$,
 we have
 \[
  r_i \transpose{r_i}
  = \frac{1}{\kappa} \left[1 + x^2 + (\kappa - 2) y^2\right]
  = \frac{1}{\kappa} \left[1 + \frac{(\kappa + \sqrt{\kappa} - 1)^2}{(\sqrt{\kappa} + 1)^2} + \frac{\kappa - 2}{(\sqrt{k} + 1)^2}\right]
  = 1.
 \]
 Now we compute the off-diagonal entries.
 For $2 \le i \le \kappa$,
 we have
 \[
  r_1 \transpose{r_i}
  = \frac{1}{\kappa} \left[1 + x + (\kappa - 2) y\right]
  = \frac{1}{\kappa} \left[1 - \frac{\kappa + \sqrt{\kappa} - 1}{\sqrt{\kappa} + 1} + \frac{\kappa - 2}{\sqrt{\kappa} + 1}\right]
  = 0.
 \]
 For $2 \le i < j \le \kappa$,
 we have
 \[
  r_i \transpose{r_j}
  = \frac{1}{\kappa} \left[1 + 2 x y + (\kappa - 3) y^2\right]
  = \frac{1}{\kappa} \left[1 - \frac{2 (\kappa + \sqrt{\kappa} - 1)}{(\sqrt{\kappa} + 1)^2} + \frac{\kappa - 3}{(\sqrt{k} + 1)^2}\right]
  = 0.
 \]
 This proves the claim.

 We apply a holographic transformation by $T$ to the signature $f$ to obtain $\widehat{f} = T^{\otimes 3} f$,
 which does not change the complexity of the problem by Theorem~\ref{thm:ortho_holo_trans}.
 Since the first row of $T$ is a row of all $1$'s,
 the output of $\widehat{f}$ on any input containing the first domain element is~$0$.
 When restricted to the remaining $\kappa - 1$ domain elements,
 $\widehat{f}$ is domain invariant and symmetric,
 so it can be expressed as a succinct ternary signature of type $\tau_3$.
 
 Up to a nonzero factor of $\frac{\kappa^3}{(\sqrt{\kappa} + 1)^2}$,
 it can be verified that $\widehat{f} = \langle -(\kappa - 2) (2 + \sqrt{\kappa}), 2 + \sqrt{\kappa}, 1 \rangle$.
 One way to do this is as follows.
 We write $f = \langle a,b,2 \rangle$ and
 $T = \left[\begin{smallmatrix} 1 & \mathbf{1} \\ \mathbf{1} & T' \end{smallmatrix}\right] \in \mathbb{C}^{\kappa \times \kappa}$,
 where $T' = y J_{\kappa-1} + (x - y) I_{\kappa-1}$.
 The entries of $\widehat{f}$ are polynomials in $\kappa$ with coefficients from $\Z[a,b,x,y]$.
 The degree of these polynomials is at most~$3$ since the arity of $f$ is~$3$.
 After computing the entries of $\widehat{f}$ for domain sizes $3 \le \kappa \le 6$ as elements in $\Z[a,b,x,y]$,
 we interpolate the entries of $\widehat{f}$ as elements in $(\Z[a,b,x,y])[\kappa]$.
 Then replacing $a,b,x,y$ with their actual values gives the claimed expression for the signature.
 
 Since $\kappa \ge 4$, $\widehat{f}$ is $\SHARPP$-hard by Lemma~\ref{lem:unary:dichotomy:AD-like},
 which completes the proof.
\end{proof}

At this point,
we have achieved the broader goal of this section.
For any $a,b,c \in \mathbb{C}$ and domain size $\kappa \ge 3$,
either $\PlHolant(\langle a,b,c \rangle)$ is computable in polynomial time,
or $\PlHolant(\langle a,b,c \rangle)$ is $\SHARPP$-hard,
or we can use $\langle a,b,c \rangle$ to construct $\langle 1 \rangle$
(i.e.~the reduction $\PlHolant(\{\langle a,b,c, \rangle, \langle 1 \rangle\} \le_T \PlHolant(\langle a,b,c \rangle)$ holds).
However, Lemma~\ref{lem:unary:<(k-1)(k-2),-(k-2),2>} is easily generalized and this generalization turns out to be necessary to obtain our dichotomy.

Recall that connecting $f = \langle (\kappa - 1) (\kappa - 2), -(\kappa - 2), 2 \rangle$ to $\langle 1 \rangle$ results in an identically~$0$ signature.
This suggests that we consider the more general signature $\widetilde{f} = \alpha \langle 1 \rangle^{\otimes 3} + \beta f$
for any $\alpha \in \mathbb{C}$ and any nonzero $\beta \in \mathbb{C}$ since this does not change the complexity (as we argue in Corollary~\ref{cor:unary:dichotomy:a+(k-3)b-(k-2)c=0}).
For any $a,b,c \in \mathbb{C}$ satisfying $\mathfrak{B} = 0$ (cf.~(\ref{eqn:ternary:frakB})),
if $\alpha = \frac{2 b + (\kappa - 2) c}{\kappa}$ and $\beta = \frac{-b + c}{\kappa}$,
then $\widetilde{f} = \langle a,b,c \rangle$.
We note that the condition $\mathfrak{B} = 0$ can also be written as $(\kappa - 2) (b - c) = b - a$.
We now prove a dichotomy for the signature $\widetilde{f}$.

\begin{corollary} \label{cor:unary:dichotomy:a+(k-3)b-(k-2)c=0}
 Suppose $\kappa \ge 3$ is the domain size and $a,b,c \in \mathbb{C}$.
 Let $\langle a,b,c \rangle$ be a succinct ternary signature of type $\tau_3$.
 If $\mathfrak{B} = 0$,
 then $\PlHolant(\langle a,b,c \rangle)$ is $\SHARPP$-hard unless $b = c$ or $\kappa = 3$,
 in which case, the problem is computable in polynomial time.
\end{corollary}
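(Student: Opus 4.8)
The plan is to dispose of the two polynomial-time regimes at once and, in the remaining case $\kappa\ge 4$ with $b\ne c$, to reduce from the $\SHARPP$-hardness of $f:=\langle(\kappa-1)(\kappa-2),-(\kappa-2),2\rangle$ already established in Lemma~\ref{lem:unary:<(k-1)(k-2),-(k-2),2>}. The tractable cases are immediate: when $\kappa=3$ the hypothesis $\mathfrak{B}=0$ reads $a=c$, so $\langle a,b,c\rangle=\langle a,b,a\rangle$ is computable in polynomial time by case~\ref{case:cor:tractable:holant-star:aba} of Corollary~\ref{cor:tractable:holant-star} (tractability without the planarity restriction gives it a fortiori with it); and when $b=c$ the hypothesis forces $a=b=c$, so $\langle a,b,c\rangle=a\langle1\rangle^{\otimes3}$ is degenerate and $\PlHolant$ of it is trivial.

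For the hard case I would first record the identity that motivates the whole $\widetilde f$ discussion: using $\mathfrak{B}=0$, i.e. $a=(\kappa-2)c-(\kappa-3)b$, a direct computation gives
\[
 \langle a,b,c\rangle=\alpha\,\langle1\rangle^{\otimes3}+\beta\,f,\qquad \alpha=\tfrac{2b+(\kappa-2)c}{\kappa},\quad \beta=\tfrac{c-b}{\kappa}\ne0 .
\]
Next I would apply the holographic transformation by the matrix $T\in\mathbb{C}^{\kappa\times\kappa}$ from the proof of Lemma~\ref{lem:unary:<(k-1)(k-2),-(k-2),2>} (orthogonal after scaling by $\tfrac1{\sqrt\kappa}$), whose first row is all $1$'s. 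The needed computations are short: $T\langle1\rangle=\kappa\,\mathbf{e}_1$, and, by that same lemma, $\widehat f:=T^{\otimes3}f$ vanishes on every input that uses the first domain element while on the remaining $\kappa-1$ elements it is a nonzero multiple of $\langle-(\kappa-2)(2+\sqrt\kappa),2+\sqrt\kappa,1\rangle$. Therefore $g:=T^{\otimes3}\langle a,b,c\rangle=\alpha\kappa^3\,\mathbf{e}_1^{\otimes3}+\beta\,\widehat f$ is supported precisely on $\{2,\dots,\kappa\}^3$ together with the single point $(1,1,1)$, where $g(1,1,1)=\alpha\kappa^3$ (note $\widehat f(1,1,1)=0$); and by Theorem~\ref{thm:ortho_holo_trans}, $\PlHolant(\langle a,b,c\rangle)\equiv_T\PlHolant(g)$.

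The only substantive step is then a domain-restriction reduction $\PlHolant(\widehat f_m)\le_T\PlHolant(g)$, where $m:=\kappa-1\ge3$ and $\widehat f_m$ is $\widehat f$ read as a domain-invariant symmetric ternary signature on $m$ elements; up to a nonzero scalar it equals $\langle-(m-1)(2+\sqrt{m+1}),2+\sqrt{m+1},1\rangle$, which is $\SHARPP$-hard by Lemma~\ref{lem:unary:dichotomy:AD-like} (the case $\varepsilon=+1$ with nonzero last entry, valid for every $m\ge3$). Given a connected planar signature grid $\Omega$ for $\PlHolant(\widehat f_m)$ on domain $[m]$, identified with $\{2,\dots,\kappa\}$, with $n$ vertices (all of degree $3$), I would reinterpret it as a grid for $g$ on domain $[\kappa]$: the support of $g$ forces every vertex to have all three or none of its incident edges colored $1$, so by connectedness the only $\kappa$-colorings of nonzero $g$-weight are the all-$1$ coloring, of weight $(\alpha\kappa^3)^n$, and the colorings that avoid color $1$ entirely, whose total weight is $\beta^n\,\PlHolant(\Omega;\widehat f_m)$. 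Hence $\PlHolant(\Omega;g)=(\alpha\kappa^3)^n+\beta^n\,\PlHolant(\Omega;\widehat f_m)$, and since $\beta\ne0$ one oracle call recovers $\PlHolant(\Omega;\widehat f_m)$ (restricting to connected $\Omega$ is harmless, Holant being multiplicative over components). Chaining the reductions shows $\PlHolant(\langle a,b,c\rangle)$ is $\SHARPP$-hard. The part requiring the most care is confirming that the support of $g$ is exactly as stated, i.e. invoking correctly the two structural facts about $T$ and $\widehat f$ from Lemma~\ref{lem:unary:<(k-1)(k-2),-(k-2),2>}; there is no genuinely new difficulty, the conceptual content being just that after the transformation the stray $\mathbf{e}_1^{\otimes3}$ term only contributes to the globally monochromatic assignment and so can be peeled off with a single evaluation, with no interpolation required.
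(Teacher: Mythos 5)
Your proposal is correct, and it rests on the same decomposition $\langle a,b,c\rangle = \alpha\,\langle 1\rangle^{\otimes 3} + \beta f$ with $f = \langle(\kappa-1)(\kappa-2),-(\kappa-2),2\rangle$ and $\beta = \tfrac{c-b}{\kappa}\ne 0$, and on the same null-pairing fact $f\,^\frown\langle 1\rangle = 0$ that makes the cross-terms vanish. The paper stays in the original basis: it expands $\PlHolant(G;\langle a,b,c\rangle)$ over the $2^n$ ways of choosing $\alpha\langle 1\rangle^{\otimes 3}$ or $\beta f$ at each vertex, observes that any mixed term on a connected $3$-regular $G$ is killed by $f\,^\frown\langle 1\rangle=0$, and lands on $\alpha^n\kappa^m + \beta^n\PlHolant(G;f)$, whereupon it invokes Lemma~\ref{lem:unary:<(k-1)(k-2),-(k-2),2>} for the hardness of $f$. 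You instead conjugate first by the scaled-orthogonal $T$ of that lemma, which carries $\langle 1\rangle$ to $\kappa\,\mathbf{e}_1$ and $f$ to a signature $\widehat f$ vanishing whenever any input is $1$; the null-pairing condition then appears as a disjoint-support condition, the cross-term cancellation becomes a connectivity argument about the edges colored $1$, and you land directly on $\widehat f_m$ on domain $\kappa-1$, citing Lemma~\ref{lem:unary:dichotomy:AD-like} rather than its corollary Lemma~\ref{lem:unary:<(k-1)(k-2),-(k-2),2>}. The two routes are genuinely the same argument in two coordinate systems: yours short-circuits one level of reduction and makes the vanishing visible as support, at the cost of pulling in the explicit form of $T$; the paper's is basis-free and marginally shorter. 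Either way the final accounting identity (one monochromatic term plus $\beta^n$ times the hard Holant) and the use of $\beta\ne0$ to peel it off with a single oracle call are identical, and your observation that no interpolation is needed is exactly what the paper exploits.
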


\begin{proof}
 If $b = c$,
 then by $\mathfrak{B} = 0$ we have $a = b = c$,
 which means the signature is degenerate and the problem trivially tractable.
 If $\kappa = 3$,
 then $a = c$ and the problem is tractable by case~\ref{case:cor:tractable:holant-star:aba} of Corollary~\ref{cor:tractable:holant-star}.
 Otherwise $b \ne c$ and $\kappa \ge 4$.

 Since $\mathfrak{B} = 0$,
 it can be verified that $\langle a,b,c \rangle = \frac{2 b + (\kappa - 2) c}{\kappa} \langle 1 \rangle^{\otimes 3} + \frac{-b + c}{\kappa} f$,
 where $f = \langle (\kappa - 1) (\kappa - 2), -(\kappa - 2), 2 \rangle$.
 We show that $\PlHolant(\langle a,b,c \rangle)$ is $\SHARPP$-hard iff $\PlHolant(f)$ is.
 Since $\PlHolant(f)$ is $\SHARPP$-hard by Lemma~\ref{lem:unary:<(k-1)(k-2),-(k-2),2>},
 this proves the result.
 
 Let $G = (V,E)$ be a connected planar 3-regular graph with $n = |V|$ and $m = |E|$.
 We can view $\PlHolant(G; \langle a,b,c \rangle)$ as a sum of $2^n$ Holant computations using the signatures $\alpha \langle 1 \rangle^{\otimes 3}$ and $\beta f$.
 Each of these Holant computations considers a different assignment of either $\alpha \langle 1 \rangle^{\otimes 3}$ or $\beta f$ to each vertex.
 Since connecting $f$ to $\langle 1 \rangle$ gives an identically~$0$ signature,
 if any connected signature grid contains both $\alpha \langle 1 \rangle^{\otimes 3}$ and $\beta f$,
 then that particular Holant computation is~$0$.
 This is because a vertex of degree three assigned $\langle 1 \rangle^{\otimes 3}$ is equivalent to three vertices of degree one
 connected to the same three neighboring vertices and each assigned $\langle 1 \rangle$.
 There are only two possible assignments that could be nonzero.
 If all vertices are assigned $\alpha \langle 1 \rangle^{\otimes 3}$,
 then the Holant is $\alpha^n \kappa^m$.
 Otherwise, all vertices are assigned $\beta f$ and the Holant is $\beta^n \PlHolant(G; f)$.
 Thus, $\PlHolant(G; \alpha \langle 1 \rangle^{\otimes 3} + \beta f) = \alpha^n \kappa^m + \beta^n \PlHolant(G; f)$.
 Since $\beta \ne 0$,
 one can solve for either Holant value given the other.
\end{proof}

\section{Interpolating All Binary Signatures of Type \texorpdfstring{$\tau_2$}{tau-2}} \label{sec:binary}

In this section,
we show how to interpolate all binary succinct signatures of type $\tau_2$ in most settings.
We use two general techniques to achieve this goal.
In the first subsection,
we use a generalization of the anti-gadget technique that creates a multitude of gadgets.
They are so numerous that one is most likely to succeed.
In the second subsection,
we introduce a new technique called \emph{Eigenvalue Shifted Triples} (EST).
These generalize the technique of Eigenvalue Shifted Pairs from~\cite{KC10},
and we use EST to interpolate binary succinct signatures in cases where the anti-gadget technique cannot handle.
There are a few isolated problems for which neither technique works.
However, these problems are easily handled separately in Lemma~\ref{lem:appendix:binary} in Appendix~\ref{sec:appendix:binary}.

From Section~\ref{sec:unary},
every problem fits into one of three cases: either
(1) the problem is tractable,
(2) the problem is $\SHARPP$-hard, or
(3) we can construct the succinct unary signature $\langle 1 \rangle$ of type $\tau_1$.
Thus, many results in this section assume that $\langle 1 \rangle$ is available.

\subsection{E Pluribus Unum}

We will use Lemma~\ref{lem:k>r:binary:interpolate} to prove our binary interpolation.
The main technical difficulty is to satisfy the third condition of Lemma~\ref{lem:k>r:binary:interpolate},
which is to prove that some recurrence matrix (that defines a sequence of gadgets) has infinite order up to a scalar.
When the matrix has a finite order up to a scalar,
we can utilize this failure condition to our advantage by constructing an anti-gadget~\cite{CKW12},
which is the “last” gadget with a distinct signature (up to a scalar) in the infinite sequence of gadgets.
To make sure that we construct a multitude of nontrivial gadgets without cancellation,
we put the anti-gadget inside another gadget
(contrast the gadget in Figure~\ref{fig:gadget:binary:anti-gadget} with the gadget in Figure~\ref{subfig:gadget:binary:parallel_edges}).
From among this plethora of gadgets,
at least one must succeed under the right conditions.

Although this idea works quite well in that some gadget among those constructed does succeed,
we still must prove that one such gadget succeeds in every setting.
We aim to exhibit a recurrence matrix whose ratio of eigenvalues is not a root of unity.
We consider three related recurrence matrices at once.
The next two lemmas consider two similar situations involving the eigenvalues of three such matrices.
When applied,
these lemmas show that some recurrence matrix must have eigenvalues with distinct complex norms,
even though exactly which one among them succeeds may depend on the parameters in a complicated way.

\begin{lemma} \label{lem:binary:norm:same_argument}
 Let $d_0, d_1, d_2, \Psi \in \mathbb{C}$.
 If $d_0$, $d_1$, and $d_2$ have the same argument but are distinct,
 then for all $\rho \in \R$, there exists $i \in \{0,1,2\}$ such that $|\Psi + d_i| \ne \rho$.
\end{lemma}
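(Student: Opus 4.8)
The statement is a clean little geometry/algebra fact: if $d_0, d_1, d_2$ are three distinct points lying on a common ray through the origin, then adding a fixed complex shift $\Psi$ to each cannot make all three of $|\Psi + d_0|, |\Psi + d_1|, |\Psi + d_2|$ equal to the same real number $\rho$. My plan is to argue by contradiction: suppose $|\Psi + d_0| = |\Psi + d_1| = |\Psi + d_2| = \rho$ for some $\rho \in \R$. Then $\Psi + d_0$, $\Psi + d_1$, $\Psi + d_2$ all lie on the circle of radius $\rho$ centered at the origin, hence $d_0, d_1, d_2$ all lie on the circle of radius $\rho$ centered at $-\Psi$. So the three distinct points $d_0, d_1, d_2$ lie on the intersection of a line (the ray through the origin containing them, extended to a full line $L$) and a circle $C$. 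But a line and a circle intersect in at most two points, so we cannot have three distinct points $d_0, d_1, d_2$ in $L \cap C$ — contradiction.

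\textbf{Key steps, in order.} First, reduce to the contrapositive: assume there is a $\rho$ with $|\Psi + d_i| = \rho$ for all $i \in \{0,1,2\}$, and derive a contradiction. Second, observe $|\Psi + d_i| = \rho$ is exactly the condition $d_i \in C$ where $C = \{z \in \mathbb{C} : |z - (-\Psi)| = \rho\}$ is a circle (or, in the degenerate case $\rho = 0$, a single point, which immediately forces $d_0 = d_1 = d_2 = -\Psi$, contradicting distinctness). Third, note that since $d_0, d_1, d_2$ have the same argument, they all lie on a single line $L$ through the origin (the ray together with its opposite ray; if some $d_i = 0$ the ``argument'' is vacuous, but that case can be absorbed — if two of them are $0$ they coincide, and a single $0$ still lies on the line spanned by the other nonzero ones). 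Fourth, invoke the elementary fact $|L \cap C| \le 2$: parametrize $L$ as $\{t v : t \in \R\}$ for a unit vector $v$, substitute into $|tv + \Psi|^2 = \rho^2$, and get a real quadratic in $t$ with at most two roots. Since $d_0, d_1, d_2$ are three distinct points of $L \cap C$, this is impossible.

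\textbf{Anticipated obstacle.} There is essentially no hard step here — the only thing requiring care is the bookkeeping around degenerate configurations: when one or more of the $d_i$ equals $0$ (so its argument is undefined), and when $\rho = 0$ (so the ``circle'' $C$ degenerates to a point). Both are handled in a line or two: $\rho = 0$ forces all three $d_i$ equal to $-\Psi$, contradicting distinctness; and if some $d_i = 0$, the remaining points plus the origin still lie on a common line, so the line-meets-circle count still applies. I expect the whole proof to be a short paragraph, with the quadratic-in-$t$ computation being the only displayed equation.

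\textbf{Proof.} Suppose, for contradiction, that there exists $\rho \in \R$ with $|\Psi + d_i| = \rho$ for every $i \in \{0,1,2\}$. If $\rho = 0$ then $\Psi + d_i = 0$ for all $i$, so $d_0 = d_1 = d_2 = -\Psi$, contradicting that $d_0, d_1, d_2$ are distinct. So assume $\rho > 0$. Since $d_0, d_1, d_2$ have the same argument, there is a unit vector $v \in \mathbb{C}$ and reals $t_0, t_1, t_2$ (not all equal, indeed pairwise distinct since the $d_i$ are distinct) with $d_i = t_i v$; if some $d_i = 0$ we may still take $t_i = 0$ with $v$ chosen as the common direction of the nonzero ones (and if all are $0$ they are not distinct). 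The condition $|\Psi + d_i| = \rho$ becomes $|t_i v + \Psi|^2 = \rho^2$, i.e.
\[
 |v|^2 t_i^2 + 2 \operatorname{Re}(\overline{v} \Psi) t_i + |\Psi|^2 - \rho^2 = 0.
\]
Since $|v| = 1$, this is a monic real quadratic equation in $t_i$, so it has at most two distinct real solutions. But $t_0, t_1, t_2$ are three distinct solutions, a contradiction. Hence no such $\rho$ exists, which is the desired conclusion. \qed
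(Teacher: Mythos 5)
Your proof is correct and takes essentially the same approach as the paper: both argue by contradiction using the fact that a line meets a circle in at most two points (the paper phrases it in terms of $\Psi + d_i$ lying on a circle centered at the origin and a line through $\Psi$, whereas you translate by $\Psi$ and work with $d_i$ on a circle centered at $-\Psi$ and a line through the origin — the same geometry). Your additional care about the degenerate cases $\rho = 0$ and $d_i = 0$, and your explicit quadratic-in-$t$ verification of the line-circle intersection bound, are harmless elaborations of the paper's one-line appeal to that fact.
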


\begin{proof}
 Assume to the contrary that there exists $\rho \in \R$ such that $|\Psi + d_i| = \rho$ for every $i \in \{0,1,2\}$.
 In the complex plane,
 consider the circle centered at the origin of radius $\rho$.
 Each $\Psi + d_i$ is a distinct point on this circle as well as a distinct point on a common line through $\Psi$.
 However, the line intersects the circle in at most two points, a contradiction.
\end{proof}

\begin{lemma} \label{lem:binary:norm:same_norm}
 Let $d_0, d_1, d_2, \Psi \in \mathbb{C}$.
 If $d_0$, $d_1$, and $d_2$ have the same complex norm but are distinct and $\Psi \ne 0$,
 then for all $\rho \in \R$, there exists $i \in \{0,1,2\}$ such that $|\Psi + d_i| \ne \rho$.
\end{lemma}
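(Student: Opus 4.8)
The plan is to mimic the geometric argument of Lemma~\ref{lem:binary:norm:same_argument}, replacing ``three distinct points on a common line'' with ``three distinct points on a common circle.'' First I would fix $\rho \in \R$ and argue by contradiction: suppose $|\Psi + d_i| = \rho$ for all $i \in \{0,1,2\}$. Geometrically, each point $\Psi + d_i$ lies on the circle $C_\rho$ of radius $\rho$ centered at the origin. On the other hand, since $d_0, d_1, d_2$ all have the same complex norm, say $|d_i| = s$, each point $\Psi + d_i$ is obtained by translating by $\Psi$ a point on the circle $C_s$ of radius $s$ centered at the origin; equivalently, $\Psi + d_0, \Psi + d_1, \Psi + d_2$ all lie on the circle $C_s'$ of radius $s$ centered at $\Psi$. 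So the three distinct points $\Psi + d_0, \Psi + d_1, \Psi + d_2$ lie on the intersection $C_\rho \cap C_s'$.

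The key step is then: two distinct circles in the plane intersect in at most two points. Since the $d_i$ are distinct, the three translated points are distinct, so we would need $C_\rho \cap C_s'$ to contain at least three points, forcing $C_\rho = C_s'$. But $C_\rho$ is centered at the origin while $C_s'$ is centered at $\Psi \ne 0$, so they have distinct centers and cannot be equal. This contradiction establishes the claim. I would need to handle the degenerate possibility $s = 0$ separately (or rule it out): if $s = 0$ then $d_0 = d_1 = d_2 = 0$, contradicting distinctness, so in fact $s > 0$ and $C_s'$ is a genuine circle; similarly $\rho \ge 0$, and if $\rho = 0$ then $C_\rho$ is a single point, which cannot contain three distinct points, again a contradiction. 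Thus both circles are nondegenerate with distinct centers.

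The main (and only) obstacle is purely expository: making the ``two circles meet in at most two points'' step rigorous without overcomplicating it. One clean way is algebraic: if $|z| = \rho$ and $|z - \Psi| = s$, subtract the squared equations $|z|^2 = \rho^2$ and $|z|^2 - 2\operatorname{Re}(\bar{\Psi} z) + |\Psi|^2 = s^2$ to get $2\operatorname{Re}(\bar{\Psi} z) = \rho^2 + |\Psi|^2 - s^2$, a single real-linear equation in $z$; since $\Psi \ne 0$ this defines a line $L$, so the solutions lie on $L \cap C_\rho$, which has at most two points. Applying this with $z = \Psi + d_i$ gives three distinct points on a line meeting a circle, the same contradiction as in Lemma~\ref{lem:binary:norm:same_argument}. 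I would likely present this algebraic version since it reuses the line-meets-circle fact already invoked in the previous lemma, keeping the two proofs parallel and self-contained.
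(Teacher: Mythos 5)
Your proposal matches the paper's proof: both place the three distinct points $\Psi + d_i$ on the circle of radius $\rho$ centered at the origin and on the circle of radius $|d_0|$ centered at $\Psi$, observe that $\Psi \ne 0$ makes these circles distinct, and derive a contradiction from the fact that two distinct circles meet in at most two points. Your additional remarks on degenerate radii and the algebraic reduction to a line are fine but unnecessary refinements of the same argument.
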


\begin{proof}
 Let $\ell = |d_0|$.
 Assume to the contrary that there exists $\rho \in \R$ such that $|\Psi + d_i| = \rho$ for every $i \in \{0,1,2\}$.
 In the complex plane,
 consider the circle centered at the origin of radius $\rho$ and the circle centered at $\Psi$ of radius $\ell$.
 Since $\Psi \ne 0$, these circles are distinct.
 Each $\Psi + d_i$ is a distinct point on both circles.
 However, these circles intersect in at most two points, a contradiction.
\end{proof}

Now we use Lemma~\ref{lem:binary:norm:same_argument} and Lemma~\ref{lem:binary:norm:same_norm}
as well as our generalization of the anti-gadget technique to establish a crucial lemma.

\begin{lemma} \label{lem:binary:general:root_of_unity}
 Suppose $\kappa \ge 3$ is the domain size and $a,b,c,\omega \in \mathbb{C}$.
 Let $\mathcal{F}$ be a set of signatures containing the succinct binary signature $\langle \omega + \kappa - 1, \omega - 1 \rangle$ of type $\tau_2$
 and the succinct ternary signature $\langle a,b,c \rangle$ of type $\tau_3$.
 If the following three conditions are satisfied:
 \begin{enumerate}
  \item $\omega \not\in \{0, \pm 1\}$, \label{condition:lem:binary:general:omega}
  \item $\mathfrak{B} \ne 0$, and \label{condition:lem:binary:general:a+(k-3)b-(k-2)c!=0}
  \item at least one of the following holds: \label{condition:lem:binary:general:two_cases}
  \begin{enumerate}[label=(\roman*)]
   \item $\mathfrak{C} = 0$ or \label{case:lem:binary:general:frakC=0}
   \item $\mathfrak{C}^2 = \omega^{2 \ell} \mathfrak{B}^2$ for some $\ell \in \{0,1\}$ but either $\mathfrak{C}^2 \ne \mathfrak{A}^2$ or $\kappa \ne 3$, \label{case:lem:binary:general:frakC^2=w^(2ell)frakB^2}
  \end{enumerate}
 \end{enumerate}
 then
 \[
  \PlHolant(\mathcal{F} \union \{\langle x,y \rangle\}) \le_T \PlHolant(\mathcal{F})
 \]
 for any $x,y\in \mathbb{C}$,
 where $\langle x,y \rangle$ is a succinct binary signature of type $\tau_2$.
\end{lemma}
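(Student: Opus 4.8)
The plan is to use the binary recursive construction in Figure~\ref{fig:gadget:k>r:binary:interpolation} with a carefully chosen signature at the vertices, and then invoke Lemma~\ref{lem:k>r:binary:interpolate} (in the form of Corollary~\ref{cor:k>r:binary:interpolate}). First I would build a ternary gadget from $\langle a,b,c \rangle$ and $\langle \omega + \kappa - 1, \omega - 1 \rangle$ whose succinct ternary signature, after a self-loop or a connection, yields an effective binary signature of type $\tau_2$; the eigenvalues of the associated recurrence matrix will be expressed in terms of $\mathfrak{A}$, $\mathfrak{B}$, $\mathfrak{C}$, and $\omega$ via Lemma~\ref{lem:k>r:binary:interpolation:eigenvalues}. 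The point of the hypothesis $\mathfrak{B} \ne 0$ is to guarantee the recurrence matrix is nonsingular (condition~1 of Corollary~\ref{cor:k>r:binary:interpolate}), so the real work is condition~2: showing the matrix has infinite order modulo a scalar, equivalently that the ratio of its two eigenvalues is not a root of unity.

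The heart of the argument is the anti-gadget idea described in the text preceding the lemma. If the naive recurrence matrix $M$ happens to have finite order modulo a scalar, I would not despair: instead I would take the ``last'' distinct gadget signature in the sequence generated by $M$ (the anti-gadget), embed it inside an outer gadget that also uses $\langle \omega + \kappa - 1, \omega - 1 \rangle$ and $\langle a,b,c \rangle$, and thereby produce a whole family of new recurrence matrices. Each such matrix is of the form ``fixed part plus $d_i$ times identity'' for a controlled set of shift values $d_0, d_1, d_2$ coming from the three natural ways of wiring in the auxiliary signatures. The shifts $d_0,d_1,d_2$ will be arranged to be distinct — here is where condition~\ref{condition:lem:binary:general:two_cases} and the precise relations among $\mathfrak{A},\mathfrak{B},\mathfrak{C},\omega$ enter, with the parenthetical exception $\mathfrak{C}^2 \ne \mathfrak{A}^2$ or $\kappa \ne 3$ ruling out the one degenerate coincidence — and to have either the same argument or the same complex norm. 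Then Lemma~\ref{lem:binary:norm:same_argument} or Lemma~\ref{lem:binary:norm:same_norm} applies: for any prospective common norm $\rho$ of the two eigenvalues, at least one index $i$ gives a matrix whose eigenvalue norms are not both $\rho$, hence whose eigenvalue ratio is not a root of unity. That matrix satisfies condition~2 of Corollary~\ref{cor:k>r:binary:interpolate}, and we conclude.

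In more detail, the three cases of condition~\ref{condition:lem:binary:general:two_cases} would be handled as follows. In case~\ref{case:lem:binary:general:frakC=0}, where $\mathfrak{C} = 0$, the eigenvalue structure collapses in a favorable way and one of the three shifted matrices is manifestly nonsingular with distinct eigenvalues of unequal modulus; I would read this off directly from the factored form of the characteristic polynomial. In case~\ref{case:lem:binary:general:frakC^2=w^(2\ell)frakB^2}, the relation $\mathfrak{C}^2 = \omega^{2\ell}\mathfrak{B}^2$ forces the two eigenvalues of the base matrix to have equal norm (since $|\omega|$ need not be $1$, I would track norms carefully here), so the anti-gadget exists; the shifts $d_i$ are then shown distinct using $\mathfrak{B} \ne 0$ and $\omega \notin \{0,\pm 1\}$, and the ``same norm'' Lemma~\ref{lem:binary:norm:same_norm} finishes it provided the shift offset $\Psi \ne 0$, which follows from $\mathfrak{B} \ne 0$. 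The exceptional clause excludes exactly the sub-case $\kappa = 3$ with $\mathfrak{C}^2 = \mathfrak{A}^2$, where the construction would produce coincident shifts; that leftover case is deferred to Lemma~\ref{lem:appendix:binary}.

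\textbf{Main obstacle.} The bookkeeping of the eigenvalues and shift values in terms of $\mathfrak{A}, \mathfrak{B}, \mathfrak{C}, \omega$ is where essentially all the difficulty lies: one must choose the outer gadget so that the three shifts $d_0, d_1, d_2$ are simultaneously (a) distinct, (b) collinear through the origin or co-spherical, and (c) accompanied by a nonzero offset $\Psi$, and verify that the hypotheses $\omega \notin \{0,\pm 1\}$, $\mathfrak{B} \ne 0$, and condition~\ref{condition:lem:binary:general:two_cases} are exactly what is needed for all of this — no more, no less. I expect the bulk of the proof to be an explicit but unenlightening computation of the relevant $2\times 2$ recurrence matrices and their characteristic polynomials, after which Lemmas~\ref{lem:binary:norm:same_argument} and~\ref{lem:binary:norm:same_norm} and Corollary~\ref{cor:k>r:binary:interpolate} apply mechanically.
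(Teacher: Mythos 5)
Your high-level strategy is the one the paper actually uses: interpolate a suitable $\frac{1}{\kappa}\langle\omega+\kappa-1,\omega-1\rangle$ in the chain construction, observe that failure means $\omega$ is a primitive $n$th root of unity ($n\ge 3$), then re-use the finitely many signatures $f_s=\frac{1}{\kappa}\langle\omega^s+\kappa-1,\omega^s-1\rangle$ inside the outer gadget of Figure~\ref{fig:gadget:binary:anti-gadget} to produce a two-parameter family of recurrence matrices $M(r,s)$, whose eigenvalues (by Lemmas~\ref{lem:compute:binary:anti-gadget} and~\ref{lem:k>r:binary:interpolation:eigenvalues}) are $\Phi\omega^{r+s}+\kappa-1$ and $\omega^r+\omega^s+\Psi$ with $\Phi=\mathfrak{C}^2/\mathfrak{B}^2$ and $\Psi=(\kappa-2)\mathfrak{A}^2/\mathfrak{B}^2$, and finally invoke Lemmas~\ref{lem:binary:norm:same_argument} and~\ref{lem:binary:norm:same_norm}. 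That much matches.

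But there are genuine gaps. First, your claim that ``$\Psi\ne 0$ follows from $\mathfrak{B}\ne 0$'' is false: $\mathfrak{B}\ne 0$ only makes $\Psi$ well-defined, and $\Psi=0$ exactly when $\mathfrak{A}=0$ — which is a case that can and does occur (it is precisely the regime in which Lemma~\ref{lem:ternary:construct_abb} fails). The paper handles $\Psi=0$ as a separate, easier case where a single good $(r,s)$ can be exhibited directly. Second, you misattribute the role of condition~\ref{condition:lem:binary:general:two_cases}: the hypothesis $\Phi\in\{0\}\cup\{\omega^{2\ell}\}$ is not about making the shift values $d_i$ distinct (they are distinct automatically since $\omega\notin\{0,\pm1\}$), it is about forcing $\Phi\omega^{r+s}$ to be either $0$ or a root of unity so that $|\Phi\omega^{r+s}+\kappa-1|$ takes only finitely many values $\rho$ as $(r,s)$ varies, which is what Lemma~\ref{lem:binary:norm:same_argument}/\ref{lem:binary:norm:same_norm} need. (It also guarantees $\det M(r,s)\neq0$.) Third, you describe the construction as producing matrices of the form ``fixed part plus $d_i I$,'' which is the EST picture; here the shift only acts on \emph{one} eigenvalue (the other eigenvalue's norm is held in a finite set by the condition on $\Phi$), a subtly different mechanism. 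Fourth, the worst sub-case ($n=3$, $\Phi=\omega^{2\ell}$) needs a genuinely more delicate argument than you sketch: one must rule out that \emph{all six} choices in $S=\{(0,0),(0,1),(0,2),(1,1),(1,2),(2,2)\}$ fail, and that boils down to showing the resulting system of norm equalities forces $\Psi\omega^\ell=1$ and then $\kappa=3$, contradicting the exception in~\ref{case:lem:binary:general:frakC^2=w^(2ell)frakB^2}. Finally, the excluded sub-case ($\kappa=3$, $\mathfrak{C}^2=\mathfrak{A}^2$) is not ``deferred to Lemma~\ref{lem:appendix:binary} within this proof''; the hypothesis of the lemma already rules it out — the deferral happens in the lemmas that \emph{apply} this one. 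You also omit the technical observation that $\omega^r+\omega^s+\Psi=0$ can hold for at most one unordered pair $(r,s)\in\Z_n^2$ when $\Psi\ne 0$, which is what justifies picking the sets $S_j$ so that at least one of them avoids the bad pair.
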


We use this lemma to establish that various $2$-by-$2$ recurrence matrices have infinite order modulo scalars.
When applied,
$\omega$ will be the ratio of two eigenvalues,
one of which is a multiple of $\mathfrak{B}$ or $\mathfrak{B}^2$ by a nonzero function of $\kappa$.

\begin{figure}[t]
 \centering
 \begin{tikzpicture}[scale=\scale,transform shape,node distance=\nodeDist,semithick]
  \node[external] (0)                    {};
  \node[internal] (1) [right       of=0] {};
  \node[square]   (2) [above right of=1] {};
  \node[triangle] (3) [below right of=1] {};
  \node[internal] (4) [below right of=2] {};
  \node[external] (5) [right       of=4] {};
  \path (0) edge             (1)
        (1) edge[bend left]  (2)
            edge[bend right] (3)
        (2) edge[bend left]  (4)
        (3) edge[bend right] (4)
        (4) edge             (5);
  \begin{pgfonlayer}{background}
   \node[draw=\borderColor,thick,rounded corners,inner xsep=12pt,inner ysep=8pt,fit = (1) (2) (3) (4)] {};
  \end{pgfonlayer}
 \end{tikzpicture}
 \caption{Binary gadget that generalizes the anti-gadget technique.
 The circle vertices are assigned $\langle a,b,c \rangle$ while the square and triangle vertices are each assigned the signature of some gadget.}
 \label{fig:gadget:binary:anti-gadget}
\end{figure}
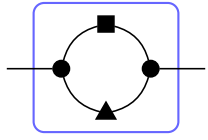

\begin{proof}[Proof of Lemma~\ref{lem:binary:general:root_of_unity}.]
 Let $\Phi = \frac{\mathfrak{C}^2}{\mathfrak{B}^2}$ and $\Psi = \frac{(\kappa - 2) \mathfrak{A}^2}{\mathfrak{B}^2}$.
 Consider the recursive construction in Figure~\ref{fig:gadget:k>r:binary:interpolation}.
 After scaling by a nonzero factor of $\kappa$,
 we assign $f = \frac{1}{\kappa} \langle \omega + \kappa - 1, \omega - 1 \rangle$ to every vertex.
 Let $f_s$ be the succinct binary signature of type $\tau_2$ for the $s$th gadget in this construction.
 We can express $f_s$ as $M^s \left[\begin{smallmatrix} 1 \\ 0 \end{smallmatrix}\right]$,
 where $M = \frac{1}{\kappa} \left[\begin{smallmatrix} \omega + \kappa - 1 & (\kappa - 1) (\omega - 1) \\ \omega - 1 & (\kappa - 1) \omega + 1 \end{smallmatrix}\right]
 = \left[\begin{smallmatrix} 1 & 1 - \kappa \\ 1 & 1 \end{smallmatrix}\right]
 \left[\begin{smallmatrix} \omega & 0 \\ 0 & 1 \end{smallmatrix}\right]
 \left[\begin{smallmatrix} 1 & 1 - \kappa \\ 1 & 1 \end{smallmatrix}\right]^{-1}$
 by Lemma~\ref{lem:k>r:binary:interpolation:eigenvalues}.
 Then $f_s = \frac{1}{\kappa} \langle \omega^s + \kappa - 1, \omega^s - 1 \rangle$.
 The eigenvalues of $M$ are~$1$ and $\omega$,
 so the determinant of $M$ is $\omega \ne 0$.
 If $\omega$ is not a root of unity,
 then we are done by Corollary~\ref{cor:k>r:binary:interpolate}.
 
 Otherwise, suppose $\omega$ is a primitive root of unity of order $n$.
 Since $\omega \not = \pm 1$ by assumption, $n \ge 3$.
 Now consider the gadget in Figure~\ref{fig:gadget:binary:anti-gadget}.
 We assign $\langle a,b,c \rangle$ to the circle vertices,
 $f_r = \frac{1}{\kappa} \langle \omega^r + \kappa - 1, \omega^r - 1 \rangle$ to the square vertex,
 and $f_s = \frac{1}{\kappa} \langle \omega^s + \kappa - 1, \omega^s - 1 \rangle$ to the triangle vertex,
 where $r,s \ge 0$ are parameters of our choice.
 By Lemma~\ref{lem:compute:binary:anti-gadget},
 up to a nonzero factor of $\frac{\mathfrak{B}^2}{\kappa}$,
 this gadget has the succinct binary signature
 \[
  f(r,s) =
  \tfrac{1}{\kappa}
  \langle
   \Phi \omega^{r+s} + (\kappa - 1) (\omega^r + \omega^s + \Psi + 1),\ %
   \Phi \omega^{r+s} -              (\omega^r + \omega^s + \Psi + 1) + \kappa
  \rangle
 \]
 of type $\tau_2$.
 Consider using this gadget in the recursive construction of Figure~\ref{fig:gadget:k>r:binary:interpolation}.
 Let $f_t(r,s)$ be the succinct binary signature of type $\tau_2$ for the $t$th gadget in this recursive construction.
 Then $f_1(r,s) = f(r,s)$ and $f_t(r,s) = (M(r,s))^t \left[\begin{smallmatrix} 1 \\ 0 \end{smallmatrix}\right]$,
 where the eigenvalues of $M(r,s)$ are $\Phi \omega^{r+s} + \kappa - 1$ and $\omega^r + \omega^s + \Psi$ by Lemma~\ref{lem:k>r:binary:interpolation:eigenvalues}.
 Thus, the determinant of $M(r,s)$ is $(\Phi \omega^{r+s} + \kappa - 1) (\omega^r + \omega^s + \Psi)$.
 Since $\Phi$ is either~$0$ or a power of $\omega$ by condition \ref{condition:lem:binary:general:two_cases},
 the first factor is nonzero for any choice of $r$ and $s$.
 However, for some $r$ and $s$,
 it might be that $g(r,s) = \omega^r + \omega^s + \Psi = 0$.
 
 Suppose $\Psi = 0$.
 We consider the two possible cases of $\Phi$ in order to finish the proof under this assumption.
 \begin{enumerate}
  \item Suppose $\Phi = 0$.
  Consider the gadget $M(0, 1)$.
  The determinant of $M(0, 1)$ is nonzero since $g(0, 1) \ne 0$ and the ratio of its eigenvalues is not a root of unity because they have distinct complex norms.
  Thus, we are done by Corollary~\ref{cor:k>r:binary:interpolate}.
  
  \item Suppose $\Phi = \omega^{2 \ell}$ for some $\ell \in \{0,1\}$.
  Consider the gadget $M(n - \ell, n - \ell)$.
  The determinant of $M(n - \ell, n - \ell)$ is nonzero since $g(n - \ell, n - \ell) \ne 0$ and the ratio of its eigenvalues is not a root of unity because they have distinct complex norms.
  Thus, we are done by Corollary~\ref{cor:k>r:binary:interpolate}.
 \end{enumerate}
 
 Otherwise, $\Psi \ne 0$.
 We claim that $g(r,s) = 0$ can hold for at most one choice of $r,s \in \Z_n$ (modulo the swapping of $r$ and $s$).
 To see this, consider $r_1, s_1, r_2, s_2$ such that $g(r_1, s_1) = 0 = g(r_2, s_2)$.
 Then $\omega^{r_1} + \omega^{s_1} = -\Psi = \omega^{r_2} + \omega^{s_2}$.
 By taking complex norms and applying the law of cosines,
 we have $\cos \theta_1 = \cos \theta_2$,
 where $\theta_j = \arg(\omega^{s_j - r_j})$ is the angle from $\omega^{r_j}$ to $\omega^{s_j}$ for $j \in \{1,2\}$.
 Thus, $\theta_1 = \pm \theta_2$.
 Since $\Psi \ne 0$, we have $\theta_1 \ne \pm \pi$.
 If $\theta_1 = \theta_2$,
 then $\omega^{r_1} (1 + e^{i \theta_1}) = \omega^{r_2} (1 + e^{i \theta_1})$.
 Since $\theta_1 \ne \pm \pi$,
 the factor $1 + e^{i \theta_1}$ is nonzero.
 After dividing by this factor,
 we conclude that $r_1 = r_2$ and thus $s_1 = s_2$.
 Otherwise, $\theta_1 = -\theta_2$.
 Then $\omega^{r_1} (1 + e^{i \theta_1}) = \omega^{s_2} (1 + e^{i \theta_1})$ and we conclude that $r_1 = s_2$ and $s_1 = r_2$.
 This proves the claim.
 
 Suppose $n \ge 4$ and let $S_0 = \{(0,0), (1, n-1), (2, n-2)\}$ and $S_1 = \{(1,1), (2, 0), (3, n-1)\}$.
 Then $g(r,s) = 0$ holds for at most one $(r,s) \in S_0 \union S_1$.
 In particular, $g(r,s)$ is either nonzero for all $(r,s) \in S_0$ or nonzero for all $(r,s) \in S_1$.
 Pick $j \in \{0,1\}$ such that $g(r,s)$ is nonzero for all $(r,s) \in S_j$.
 By Lemma~\ref{lem:binary:norm:same_argument} with $d_i = (\omega^i + \omega^{-i}) \omega^j$ and $\rho = |\Phi \omega^{2j} + \kappa - 1|$,
 there exists some $(r,s) \in S_j$ such that the eigenvalues of $M(r, s)$ have distinct complex norms,
 so we are done by Corollary~\ref{cor:k>r:binary:interpolate}.
 
 Otherwise, $n = 3$.
 We consider the two possible cases of $\Phi$ in order to finish the proof.
 \begin{enumerate}
  \item Suppose $\Phi = 0$.
  Let $S_j = \{(0,j), (1,j+1), (2,j+2)\}$.
  Then $g(r,s) = 0$ holds for at most one $(r,s) \in S_0 \union S_1$.
  In particular, $g(r,s)$ is either nonzero for all $(r,s) \in S_0$ or nonzero for all $(r,s) \in S_1$.
  Pick $j \in \{0,1\}$ such that $g(r,s)$ is nonzero for all $(r,s) \in S_j$.
  By Lemma~\ref{lem:binary:norm:same_norm} with $d_i = (1 + \omega^j) \omega^i$ and $\rho = \kappa - 1$,
  there exists some $(r,s) \in S_j$ such that the eigenvalues of $M(r, s)$ have distinct complex norms,
  so we are done by Corollary~\ref{cor:k>r:binary:interpolate}.
  
  \item Suppose $\Phi = \omega^{2 \ell}$ for some $\ell \in \{0,1\}$ but either $\mathfrak{C}^2 \ne \mathfrak{A}^2$ or $\kappa \ne 3$.
  Note that this is equivalent to $\Phi \ne \Psi$ or $\kappa \ne 3$.
  Consider the set $S = \{(0,0), (0,1), (0,2), (1,1), (1,2), (2,2)\}$.
  If there exists some $(r,s) \in S$ such that $g(r,s) \ne 0$ and the eigenvalues of $M(r, s)$ have distinct complex norms,
  then we are done by Corollary~\ref{cor:k>r:binary:interpolate}.
 
  Otherwise, for every $(r,s) \in S$,
  either $g(r,s) = 0$ or the eigenvalues of $M(r,s)$ have the same complex norm.
  If the latter condition were to always hold,
  then we would have
  \begin{alignat*}{2}
   \left|2          + \Psi\right| &= \left|\omega^{2 \ell}     + \kappa - 1\right| &&= \left|{-}1        + \Psi\right|,\\
   \left|2 \omega^2 + \Psi\right| &= \left|\omega^{2 \ell + 1} + \kappa - 1\right| &&= \left|{-}\omega^2 + \Psi\right|, \text{and}\\
   \left|2 \omega   + \Psi\right| &= \left|\omega^{2 \ell + 2} + \kappa - 1\right| &&= \left|{-}\omega   + \Psi\right|,
  \end{alignat*}
  where each equality corresponds to one of the six $M(r,s)$ having eigenvalues of equal complex norm for $(r,s) \in S$.
  Of the six equalities,
  at most one may not hold since $g(r,s) = 0$ for at most one $(r,s) \in S$.
  Since $n = 3$,
  two of the three terms of the form $|\omega^{2 \ell + m} + \kappa - 1|$ must be equal,
  so we can write the stronger condition
  \begin{gather}
   \left|2 \omega^2 + \Psi \omega^\ell\right| = \left|\omega   + \kappa - 1\right| = \left|{-}\omega^2 + \Psi \omega^\ell\right| \notag\\
                            \rotatebox{90}{=} \label{eqn:binary:general}\\
   \left|2 \omega   + \Psi \omega^\ell\right| = \left|\omega^2 + \kappa - 1\right| = \left|{-}\omega   + \Psi \omega^\ell\right|\rlap{.} \notag
  \end{gather}
  As it is, one of the horizontal equalities in~(\ref{eqn:binary:general}) may not hold.
  However, even without one of these equalities, we can still reach a contradiction.
  
  We show that $\Psi \omega^{\ell} \in \R$ even if one of the equalities in~(\ref{eqn:binary:general}) does not hold.
  In fact, either the left or the right half of the equalities in~(\ref{eqn:binary:general}) hold.
  In the first case,
  $|2 \omega^2 + \Psi \omega^\ell| = |2 \omega + \Psi \omega^\ell|$ holds and we get $\Psi \omega^\ell \in \R$.
  Similarly in the second case,
  $|{-}\omega^2 + \Psi \omega^\ell| = |{-}\omega + \Psi \omega^\ell|$ holds and we get $\Psi \omega^\ell \in \R$ as well.
  Next, we use real and imaginary parts to calculate the complex norms even if one of the equalities in~(\ref{eqn:binary:general}) does not hold.
  Either the top half of the equalities hold and thus $|2 \omega^2 + \Psi \omega^\ell| = |{-}\omega^2 + \Psi \omega^\ell|$,
  or the bottom half of the equalities hold and thus $|2 \omega + \Psi \omega^\ell| = |{-}\omega + \Psi \omega^\ell|$.
  In any case,
  it readily follows that $\Psi \omega^\ell = 1$.
  This implies $\Psi = \omega^{2 \ell}$,
  so we can rewrite~(\ref{eqn:binary:general}) as
  \begin{gather*}
   \sqrt{3} = \left|\omega   + \kappa - 1\right| = \sqrt{3}\\
                   \rotatebox{90}{=}\\
   \sqrt{3} = \left|\omega^2 + \kappa - 1\right| = \sqrt{3}\rlap{,}
  \end{gather*}
  where at most one equation may not hold.
  This forces $\kappa = 3$.
  However, $\Phi = \omega^{2 \ell} = \Psi$ and $\kappa = 3$ is a contradiction.
  \qedhere
 \end{enumerate}
\end{proof}

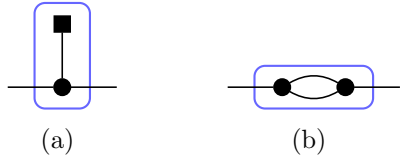
\begin{figure}[t]
 \centering
 \subcaptionbox
 {\label{subfig:gadget:binary:unary}}{
  \begin{tikzpicture}[scale=\scale,transform shape,node distance=\nodeDist,semithick]
   \node[internal] (0)              {};
   \node[external] (1) [left  of=0] {};
   \node[square]   (2) [above of=0] {};
   \node[external] (3) [right of=0] {};
   \path (0) edge (1)
             edge (2)
             edge (3);
   \begin{pgfonlayer}{background}
    \node[draw=\borderColor,thick,rounded corners,inner xsep=12pt,inner ysep=8pt,fit = (0) (2)] {};
   \end{pgfonlayer}
  \end{tikzpicture}}
 \qquad
 \subcaptionbox
 {\label{subfig:gadget:binary:parallel_edges}}{
  \begin{tikzpicture}[scale=\scale,transform shape,node distance=\nodeDist,semithick]
   \node[external] (0)              {};
   \node[internal] (1) [right of=0] {};
   \node[internal] (2) [right of=1] {};
   \node[external] (3) [right of=2] {};
   \path (0) edge             (1)
         (1) edge[bend left]  (2)
             edge[bend right] (2)
         (2) edge             (3);
   \begin{pgfonlayer}{background}
    \node[draw=\borderColor,thick,rounded corners,inner xsep=12pt,inner ysep=8pt,fit = (1) (2)] {};
   \end{pgfonlayer}
  \end{tikzpicture}}
 \caption{Binary gadgets used to interpolate any succinct binary signature of type $\tau_2$.
 The circle vertices are assigned $\langle a,b,c \rangle$ and the square vertex is assigned $\langle 1 \rangle$.}
 \label{fig:gadget:binary}
\end{figure}

The previous lemma is strong enough to handle the typical case.

\begin{lemma} \label{lem:binary:general}
 Suppose $\kappa \ge 3$ is the domain size and $a,b,c \in \mathbb{C}$.
 Let $\mathcal{F}$ be a signature set containing the succinct ternary signature $\langle a,b,c \rangle$ of type $\tau_3$
 and the succinct unary signature $\langle 1 \rangle$ of type $\tau_1$.
 If
 \begin{enumerate}
  \item $\mathfrak{B} \ne 0$,
  \item $\mathfrak{C} \ne 0$,
  \item $\mathfrak{C}^2 \ne \mathfrak{B}^2$, and
  \item either $\mathfrak{C}^2 \ne \mathfrak{A}^2$ or $\kappa \ne 3$,
 \end{enumerate}
 then
 \[
  \PlHolant(\mathcal{F} \union \{\langle x,y \rangle\}) \le_T \PlHolant(\mathcal{F})
 \]
 for any $x,y \in \mathbb{C}$,
 where $\langle x,y \rangle$ is a succinct binary signature of type $\tau_2$.
\end{lemma}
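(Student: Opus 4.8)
The plan is to realize, with one elementary gadget, a binary signature of precisely the form required by Lemma~\ref{lem:binary:general:root_of_unity}, and then quote that lemma; the four numbered hypotheses of the present statement are exactly what is needed to make such an application go through.

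First I would attach the succinct unary signature $\langle 1 \rangle$ of type $\tau_1$ to one input of $\langle a,b,c \rangle$, as in the gadget of Figure~\ref{subfig:gadget:binary:unary}. A direct computation shows that the resulting succinct binary signature of type $\tau_2$ is $\langle\, a+(\kappa-1)b,\ 2b+(\kappa-2)c\,\rangle$: writing $u = a+(\kappa-1)b$ and $v = 2b+(\kappa-2)c$, its signature matrix $(u-v)I_\kappa + vJ_\kappa$ has eigenvalue $u+(\kappa-1)v = \mathfrak{C}$ on the all-ones eigenvector and eigenvalue $u-v = \mathfrak{B}$ on the orthogonal eigenspace. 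Since $\mathfrak{B} \neq 0$ by hypothesis~(1), this signature equals $\tfrac{\mathfrak{B}}{\kappa}\langle\,\omega+\kappa-1,\ \omega-1\,\rangle$ with $\omega = \mathfrak{C}/\mathfrak{B}$, so up to a nonzero scalar we have realized $\langle\,\omega+\kappa-1,\omega-1\,\rangle$ from $\mathcal{F}$. Since this is the signature of an $\mathcal{F}$-gate, adding it to $\mathcal{F}$ does not change the complexity, yielding a set $\mathcal{F}'$ containing both $\langle a,b,c\rangle$ and $\langle\omega+\kappa-1,\omega-1\rangle$ with $\PlHolant(\mathcal{F}') \le_T \PlHolant(\mathcal{F})$.

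Next I would check that the hypotheses of Lemma~\ref{lem:binary:general:root_of_unity} hold for $\mathcal{F}'$ and this $\omega$: condition~(1) holds because $\omega \neq 0$ (as $\mathfrak{C}\neq 0$, our~(2)) and $\omega \neq \pm 1$ (as $\mathfrak{C}^2\neq\mathfrak{B}^2$, our~(3)); condition~(2) is $\mathfrak{B}\neq 0$ (our~(1)); and condition~(3) holds via case~\ref{case:lem:binary:general:frakC^2=w^(2ell)frakB^2} with $\ell=1$, since $\omega=\mathfrak{C}/\mathfrak{B}$ gives $\mathfrak{C}^2=\omega^2\mathfrak{B}^2$, while the side requirement ``$\mathfrak{C}^2\neq\mathfrak{A}^2$ or $\kappa\neq 3$'' is exactly our hypothesis~(4). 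Lemma~\ref{lem:binary:general:root_of_unity} then gives $\PlHolant(\mathcal{F}' \union \{\langle x,y\rangle\}) \le_T \PlHolant(\mathcal{F}')$ for every $x,y\in\mathbb{C}$; composing with $\PlHolant(\mathcal{F}') \le_T \PlHolant(\mathcal{F})$ and the trivial $\PlHolant(\mathcal{F}\union\{\langle x,y\rangle\})\le_T\PlHolant(\mathcal{F}'\union\{\langle x,y\rangle\})$ yields the claim. I do not expect a serious obstacle here: the genuinely delicate case — when $\omega$ is a root of unity, so that the naive recurrence matrix has only finite order modulo scalars — is already handled inside Lemma~\ref{lem:binary:general:root_of_unity} by the ``E Pluribus Unum'' anti-gadget construction, and the only thing to notice at this stage is that the trivial pendant gadget produces exactly the eigenvalue ratio $\omega=\mathfrak{C}/\mathfrak{B}$ that slots into case~\ref{case:lem:binary:general:frakC^2=w^(2ell)frakB^2} of that lemma.
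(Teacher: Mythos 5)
Your proof is correct and takes exactly the same route as the paper: attach $\langle 1 \rangle$ to one input of $\langle a,b,c\rangle$ (Figure~\ref{subfig:gadget:binary:unary}) to obtain $\frac{\mathfrak{B}}{\kappa}\langle \omega+\kappa-1,\omega-1\rangle$ with $\omega=\mathfrak{C}/\mathfrak{B}$, then invoke Lemma~\ref{lem:binary:general:root_of_unity} via case~\ref{case:lem:binary:general:frakC^2=w^(2ell)frakB^2} with $\ell=1$, the four hypotheses matching one-for-one. Your added eigenvalue explanation of why $u-v=\mathfrak{B}$ and $u+(\kappa-1)v=\mathfrak{C}$ are the natural invariants of the pendant gadget is a nice conceptual touch, but the argument is identical.
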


\begin{proof}
 Let $\omega = \frac{\mathfrak{C}}{\mathfrak{B}}$,
 which is well-defined.
 Consider the gadget in Figure~\ref{subfig:gadget:binary:unary}.
 We assign $\langle a,b,c \rangle$ to the circle vertex and $\langle 1 \rangle$ to the square vertex.
 Up to a nonzero factor of $\frac{\mathfrak{B}}{\kappa}$,
 this gadget has the succinct binary signature
 \[
  \frac{\kappa}{\mathfrak{B}} \langle a + (\kappa - 1) b, 2 b + (\kappa - 2) c \rangle = \langle \omega + \kappa - 1, \omega - 1 \rangle
 \]
 of type $\tau_2$.
 Then we are done by Lemma~\ref{lem:binary:general:root_of_unity} with $\ell = 1$
 in case~\ref{case:lem:binary:general:frakC^2=w^(2ell)frakB^2} of condition~\ref{condition:lem:binary:general:two_cases}.
\end{proof}

If $\mathfrak{B} = 0$,
then we already know the complexity by Corollary~\ref{cor:unary:dichotomy:a+(k-3)b-(k-2)c=0}.
The other failure conditions from the previous lemma are:
\begin{align}
 \mathfrak{C} - \mathfrak{B} &= \kappa [2 b + (\kappa - 2) c] = 0; \label{eqn:case:binary:fail:2b+(k-2)c=0}\\
 \mathfrak{C} + \mathfrak{B} &= 2 a + 2 (2 \kappa - 3) b + (\kappa - 2)^2 c = 0; \label{eqn:case:binary:fail:2a+2(2k-3)b+(k-2)^2c=0}\\
 \mathfrak{C} &= 0; \label{eqn:case:binary:fail:a+3(k-1)b+(k-2)(k-1)c=0}\\
 \kappa = 3 \text{ and } \mathfrak{C} - \mathfrak{A} &= 0, \qquad \text{or equivalently} \qquad \kappa = 3 \text{ and } b = 0; \label{eqn:case:binary:fail:k=3_and_b=0}\\
 \kappa = 3 \text{ and } \mathfrak{C} + \mathfrak{A} &= 0, \qquad \text{or equivalently} \qquad \kappa = 3 \text{ and } 2 a + 3 b + 4 c = 0. \label{eqn:case:binary:fail:k=3_and_2a+3b+4c=0}
\end{align}
Notice that these five failure conditions are \emph{linear} in $a,b,c$.

By starting the proof with a different gadget,
Lemma~\ref{lem:binary:general:root_of_unity} can handle the first three failure conditions.
The last two failure conditions require a new idea,
Eigenvalue Shifted Triples,
which we introduce in Section~\ref{subsec:binary:EST}.
In fact, these two cases are equivalent under an orthogonal holographic transformation.

The next lemma considers the failure condition in~(\ref{eqn:case:binary:fail:2b+(k-2)c=0}).
Note that $\mathfrak{C} = \mathfrak{B}$ iff the signature can be written as $\langle 2 a, -(\kappa - 2) c, 2 c \rangle$ up to a factor of~$2$.
The first excluded case in Lemma~\ref{lem:binary:2b+(k-2)c=0} is handled by Corollary~\ref{cor:unary:dichotomy:a+(k-3)b-(k-2)c=0}
and the last two excluded cases are tractable by Corollary~\ref{cor:tractable:100}.

\begin{lemma} \label{lem:binary:2b+(k-2)c=0}
 Suppose $\kappa \ge 3$ is the domain size and $a, c \in \mathbb{C}$.
 Let $\mathcal{F}$ be a signature set containing the succinct ternary signature $\langle 2 a, -(\kappa - 2) c, 2 c \rangle$ of type $\tau_3$
 and the succinct unary signature $\langle 1 \rangle$ of type $\tau_1$.
 If
 \begin{enumerate}
  \item $2 a \ne (\kappa - 1) (\kappa - 2) c$, \label{cond:lem:binary:2b+(k-2)c=0:2a!=(k-1)(k-2)c}
  \item $4 a \ne (\kappa^2 - 6 \kappa + 4) c$, and \label{cond:lem:binary:2b+(k-2)c=0:4a!=(k^2-6k+4)c}
  \item $c \ne 0$, \label{cond:lem:binary:2b+(k-2)c=0:c!=0}
 \end{enumerate}
 then
 \[
  \PlHolant(\mathcal{F} \union \{\langle x,y \rangle\}) \le_T \PlHolant(\mathcal{F})
 \]
 for any $x,y \in \mathbb{C}$,
 where $\langle x,y \rangle$ is a succinct binary signature of type $\tau_2$.
\end{lemma}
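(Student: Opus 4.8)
The plan is to reduce to Lemma~\ref{lem:binary:general:root_of_unity}. Write $f = \langle 2a, -(\kappa-2)c, 2c \rangle$ for the ternary signature in $\mathcal{F}$. A direct computation gives $\mathfrak{A} = 2a + (3\kappa-2)c$ and $\mathfrak{B} = \mathfrak{C} = 2a - (\kappa-1)(\kappa-2)c$; the coincidence $\mathfrak{B} = \mathfrak{C}$ is exactly the algebraic meaning of the hypothesis $2b + (\kappa-2)c = 0$ that puts the signature in this form. The first hypothesis of the present lemma, $2a \ne (\kappa-1)(\kappa-2)c$, says precisely that $\mathfrak{B} \ne 0$, which supplies the condition ``$\mathfrak{B}\ne0$'' needed in Lemma~\ref{lem:binary:general:root_of_unity}. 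Moreover, since $\mathfrak{C} = \mathfrak{B}$ we trivially have $\mathfrak{C}^2 = \mathfrak{B}^2$, so the alternative ``$\mathfrak{C}^2 = \omega^{2\ell}\mathfrak{B}^2$'' in the last condition of Lemma~\ref{lem:binary:general:root_of_unity} holds with $\ell = 0$, and it then suffices to have $\mathfrak{C}^2 \ne \mathfrak{A}^2$: but $\mathfrak{C} - \mathfrak{A} = -\kappa^2 c \ne 0$ by the third hypothesis, and $\mathfrak{C} + \mathfrak{A} = 4a - (\kappa^2 - 6\kappa + 4)c \ne 0$ by the second hypothesis, so indeed $\mathfrak{C}^2 \ne \mathfrak{A}^2$ for every $\kappa \ge 3$. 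Hence the entire remaining burden is to build, from $f$, some succinct binary signature $\langle \omega + \kappa - 1, \omega - 1 \rangle$ of type $\tau_2$ with $\omega \notin \{0,\pm1\}$, after which Lemma~\ref{lem:binary:general:root_of_unity} applies.

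My first attempt is the two-vertex, two-parallel-edge gadget of Figure~\ref{subfig:gadget:binary:parallel_edges} with $f$ on both circle vertices. A routine gadget computation shows that its succinct binary signature of type $\tau_2$ has recurrence-matrix eigenvalues $\lambda_1 = \mathfrak{B}^2$ (on the all-ones eigenvector) and $\lambda_2 = (2a + (\kappa-2)c)^2 + 2\kappa^2(\kappa-2)c^2$ (on the standard representation of $S_\kappa$), so up to a nonzero scalar it equals $\langle \omega_0 + \kappa - 1, \omega_0 - 1 \rangle$ with $\omega_0 = \lambda_1/\lambda_2$. Now $\omega_0 = 0$ would force $\mathfrak{B} = 0$, contrary to the first hypothesis; and $\omega_0 = 1$ would force $\lambda_1 = \lambda_2$, which factors as $-\kappa(\kappa-2)c\bigl[4a - (\kappa^2 - 6\kappa + 4)c\bigr] = 0$, contrary to the second and third hypotheses. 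So whenever $\omega_0 \ne -1$ we are finished by Lemma~\ref{lem:binary:general:root_of_unity}.

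The surviving case $\omega_0 = -1$, equivalently $\mathfrak{B}^2 + \lambda_2 = 0$, is the main obstacle. This locus is genuinely nonempty over $\mathbb{C}$ (e.g.\ $\kappa = 3$, $c = 1$, $a = (1 + 3i\sqrt{5})/4$) and is \emph{not} excluded by the three hypotheses, so a second construction is required. When $\omega_0 = -1$ the gadget above is, up to a scalar, the signature $\langle \kappa - 2, -2 \rangle$ of type $\tau_2$; the next step is to feed it together with $f$ into the generalized anti-gadget of Figure~\ref{fig:gadget:binary:anti-gadget} --- precisely the construction used inside the proof of Lemma~\ref{lem:binary:general:root_of_unity} --- which produces a new succinct binary signature of type $\tau_2$ whose recurrence-matrix eigenvalues are $\kappa$ and $\Psi - 2$, where $\Psi = (\kappa-2)\mathfrak{A}^2/\mathfrak{B}^2$, hence a new eigenvalue ratio $\omega_1 = \kappa/(\Psi - 2)$. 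On the exceptional locus $\mathfrak{B}^2 = -\lambda_2$ one has $\mathfrak{A}/c = \tfrac12\bigl(\kappa(\kappa+2) \pm i\kappa\sqrt{\kappa^2 - 4}\bigr)$, which determines $\mathfrak{A}^2/\mathfrak{B}^2$ explicitly as a function of $\kappa$; substituting this and using $\kappa \ge 3$ together with the hypotheses, one checks that $\omega_1 \notin \{0,\pm1\}$, and then Lemma~\ref{lem:binary:general:root_of_unity} (or directly Corollary~\ref{cor:k>r:binary:interpolate}, if $\omega_1$ already fails to be a root of unity) completes the proof. I expect this last verification --- that $\omega_1 \ne \pm1$ on the $\omega_0 = -1$ locus --- to be the only genuinely delicate point; all the rest is bookkeeping.
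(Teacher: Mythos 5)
Your overall strategy mirrors the paper's: construct a binary signature via the parallel-edges gadget, show it has the form $\langle \omega + \kappa - 1, \omega - 1 \rangle$, and feed it to Lemma~\ref{lem:binary:general:root_of_unity} with $\ell = 0$. The bookkeeping about $\mathfrak{A},\mathfrak{B},\mathfrak{C}$ and the factorization $\lambda_1 - \lambda_2 = -\kappa(\kappa-2)c\bigl[4a - (\kappa^2-6\kappa+4)c\bigr]$ is correct. Your treatment of the $\omega_0=-1$ locus is a genuinely different route from the paper: the paper delegates to a separate diamond-gadget computation (case~\ref{case:lem:appendix:k-22pmikk2-2-2k-24} of Lemma~\ref{lem:appendix:binary}), whereas you re-run the anti-gadget of Lemma~\ref{lem:compute:binary:anti-gadget} with $r=s=1$, getting a new ratio $\omega_1=\kappa/(\Psi-2)$. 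This actually works: on that locus $\mathfrak{A}/c = \tfrac12\kappa\bigl[(\kappa+2)\pm i\sqrt{\kappa^2-4}\bigr]$ and $\mathfrak{B}/c = \tfrac12\kappa\bigl[-(\kappa-2)\pm i\sqrt{\kappa^2-4}\bigr]$, whence $\Psi=-(\kappa+2)$ and $\omega_1 = -\kappa/(\kappa+4)$, a rational number in $(-1,0)$ that cannot be a root of unity; Corollary~\ref{cor:k>r:binary:interpolate} then applies directly. So the ``delicate point'' you flagged is in fact fine and gives a tidier, self-contained argument.

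However, there is a genuine gap: you never address the case $\lambda_2 = 0$. You define $\omega_0 = \lambda_1/\lambda_2$ and claim the gadget's signature is $\langle\omega_0+\kappa-1, \omega_0-1\rangle$ up to a nonzero scalar, but this normalization divides by $\lambda_2 = (2a+(\kappa-2)c)^2 + 2\kappa^2(\kappa-2)c^2$, which can vanish. Indeed $\lambda_2=0$ exactly when $2a = -\bigl[\kappa-2 \mp i\kappa\sqrt{2(\kappa-2)}\bigr]c$, and this is compatible with all three hypotheses of the lemma (in particular, equating it to $(\kappa-1)(\kappa-2)c$ or to $\tfrac12(\kappa^2-6\kappa+4)c$ would require a nonzero imaginary number to be real). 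When $\lambda_2=0$ the parallel-edges output is a degenerate binary signature (a nonzero multiple of $\langle 1,1\rangle$), so it cannot be fed to Lemma~\ref{lem:binary:general:root_of_unity} at all, and you cannot even form the ratio $\omega_0$. The paper handles this locus separately by switching to the larger diamond gadget (case~\ref{case:lem:appendix:k-2pmik2k-2k-2-2} of Lemma~\ref{lem:appendix:binary}); your proof as written is silent on it and therefore incomplete. You need an additional gadget (the diamond or square gadget, or some other construction that does not degenerate on this one-parameter family) to close the argument.
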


\begin{proof}
 Note that when $2 b  = - (\kappa - 2) c$,
 we have $\mathfrak{B} = \mathfrak{C} = 2 a - (\kappa - 1) (\kappa - 2) c$ by~(\ref{eqn:case:binary:fail:2b+(k-2)c=0}),
 which is nonzero by condition~\ref{cond:lem:binary:2b+(k-2)c=0:2a!=(k-1)(k-2)c} of the lemma.
 Let $\omega_0 = 4 a^2 + (\kappa - 2) [4 ac + (2 \kappa^2 + \kappa - 2) c^2]$
 and assume $\omega_0 \ne 0$.
 Then let $\omega = \frac{\mathfrak{B}^2}{\omega_0} \ne 0$.
 By conditions~\ref{cond:lem:binary:2b+(k-2)c=0:4a!=(k^2-6k+4)c} and~\ref{cond:lem:binary:2b+(k-2)c=0:c!=0},
 it follows that $\omega \ne 1$.
 Also we note that when $2 b  = - (\kappa - 2) c$,
 we have $2 \mathfrak{A} = 2 a + (3 \kappa - 2) c$ and $2 \mathfrak{C} = 2 a - (\kappa - 1) (\kappa - 2) c$.
 By the same conditions,~\ref{cond:lem:binary:2b+(k-2)c=0:4a!=(k^2-6k+4)c} and~\ref{cond:lem:binary:2b+(k-2)c=0:c!=0},
 we have $\mathfrak{C}^2 \ne \mathfrak{A}^2$.
 We further assume that $\omega \ne -1$,
 which is equivalent to $8 a^2 - 4 (\kappa - 2)^2 a c + (\kappa - 2) (\kappa^3 - 2 \kappa^2 + 6 \kappa - 4) c^2 \ne 0$.
 
 Consider the gadget in Figure~\ref{subfig:gadget:binary:parallel_edges}.
 We assign $\langle 2 a, -(\kappa - 2) c, 2 c \rangle$ to the vertices.
 Up to a nonzero factor of $\frac{\omega_0}{\kappa}$,
 this gadget has the succinct binary signature
 \[
  \frac{\kappa}{\omega_0}
  \langle
   4 a^2 + (\kappa - 1) (\kappa - 2) (3 \kappa - 2) c^2,
   \quad
   -(\kappa - 2) [4 ac - (\kappa^2 - 6 \kappa + 4) c^2] \rangle
  =
  \langle
   \omega + \kappa - 1,
   \quad
   \omega - 1
  \rangle
 \]
 of type $\tau_2$.
 Then we are done by Lemma~\ref{lem:binary:general:root_of_unity} with $\ell = 0$
 in case~\ref{case:lem:binary:general:frakC^2=w^(2ell)frakB^2} of condition~\ref{condition:lem:binary:general:two_cases}.
 
 Now we deal with the following exceptional cases.
 \begin{enumerate}
  \item If $\omega_0 = 0$,
  then $2 a = -\big[\kappa - 2 \pm i \kappa \sqrt{2 (\kappa - 2)}\big] c$.
  Up to a nonzero factor of $-c$,
  we have $-\frac{1}{c} \langle 2 a, -(\kappa - 2) c, 2 c \rangle = \langle \kappa - 2 \pm i \kappa \sqrt{2 (\kappa - 2)}, \kappa - 2, -2 \rangle$
  and are done by case~\ref{case:lem:appendix:k-2pmik2k-2k-2-2} of Lemma~\ref{lem:appendix:binary}.
  
  \item If $8 a^2 - 4 (\kappa - 2)^2 a c + (\kappa - 2) (\kappa^3 - 2 \kappa^2 + 6 \kappa - 4) c^2 = 0$,
  then $4 a = \big[(\kappa - 2)^2 \pm i \kappa \sqrt{\kappa^2 - 4}\big] c$.
  Up to a nonzero factor of $\frac{c}{2}$,
  we have
  \[
   \frac{2}{c} \langle 2 a, -(\kappa - 2) c, 2 c \rangle
   = \langle (\kappa - 2)^2 \pm i \kappa \sqrt{\kappa^2 - 4}, -2 (\kappa - 2), 4 \rangle
  \]
  and are done by case~\ref{case:lem:appendix:k-22pmikk2-2-2k-24} of Lemma~\ref{lem:appendix:binary}.
  \qedhere
 \end{enumerate}
\end{proof}

The next lemma considers the failure condition in~(\ref{eqn:case:binary:fail:2a+2(2k-3)b+(k-2)^2c=0}).
Note that $\mathfrak{C} = -\mathfrak{B}$ iff the signature can be written as $\langle -2 (2 \kappa - 3) b - (\kappa - 2)^2 c, 2 b, 2 c \rangle$ up to a factor of~$2$.
The first excluded case in Lemma~\ref{lem:binary:2a+2(2k-3)b+(k-2)^2c=0} is handled by Corollary~\ref{cor:unary:dichotomy:a+(k-3)b-(k-2)c=0}
and the last excluded case is tractable by Corollary~\ref{cor:tractable:u^21u_a+5b+2c=5b^2+2bc+c^2=0}.

\begin{lemma} \label{lem:binary:2a+2(2k-3)b+(k-2)^2c=0}
 Suppose $\kappa \ge 3$ is the domain size and $a, b \in \mathbb{C}$.
 Let $\mathcal{F}$ be a signature set containing the succinct ternary signature $\langle -2 (2 \kappa - 3) b - (\kappa - 2)^2 c, 2 b, 2 c \rangle$ of type $\tau_3$
 and the succinct unary signature $\langle 1 \rangle$ of type $\tau_1$.
 If
 \begin{enumerate}
  \item $2 b \ne -(\kappa - 2) c$ and \label{cond:lem:binary:2a+2(2k-3)b+(k-2)^2c=0:2b!=-(k-2)c}
  \item $\kappa \ne 4$ or $5 b^2 + 2 b c + c^2 \ne 0$, \label{cond:lem:binary:2a+2(2k-3)b+(k-2)^2c=0:k!=4_5b^2+2bc+c^2!=0}
 \end{enumerate}
 then
 \[
  \PlHolant(\mathcal{F} \union \{\langle x,y \rangle\}) \le_T \PlHolant(\mathcal{F})
 \]
 for any $x,y \in \mathbb{C}$,
 where $\langle x,y \rangle$ is a succinct binary signature of type $\tau_2$.
\end{lemma}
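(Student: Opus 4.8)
The plan is to build a binary gadget realizing the succinct signature $\langle \omega + \kappa - 1, \omega - 1 \rangle$ of type $\tau_2$ for a suitable $\omega \notin \{0,\pm 1\}$ and then invoke Lemma~\ref{lem:binary:general:root_of_unity}. First I would record the invariants. Writing the signature as $\langle A,B,C \rangle = \langle -2(2\kappa-3)b - (\kappa-2)^2 c,\ 2b,\ 2c \rangle$ and substituting into~(\ref{eqn:ternary:frakA})--(\ref{eqn:ternary:frakC}) gives $\mathfrak{B} = -\kappa[2b + (\kappa-2)c]$ (so hypothesis~1 is exactly $\mathfrak{B} \ne 0$), $\mathfrak{C} = -\mathfrak{B}$ (this is the locus~(\ref{eqn:case:binary:fail:2a+2(2k-3)b+(k-2)^2c=0})), and $\mathfrak{A} = -\kappa[4b + (\kappa-4)c]$. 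Since $\mathfrak{C}/\mathfrak{B} = -1$ is excluded by condition~\ref{condition:lem:binary:general:omega} of Lemma~\ref{lem:binary:general:root_of_unity}, the "connect to $\langle 1\rangle$" gadget of Figure~\ref{subfig:gadget:binary:unary} is useless here, so a quadratic gadget is needed instead.

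Next I would take the parallel-edge gadget of Figure~\ref{subfig:gadget:binary:parallel_edges} with $\langle A,B,C \rangle$ on both circle vertices. By the corresponding compute lemma this realizes a succinct binary signature of type $\tau_2$ whose eigenvalues, in the sense of Lemma~\ref{lem:k>r:binary:interpolation:eigenvalues}, are $\mathfrak{B}^2$ (the all-ones eigenvalue) and $\omega_1 := [4(\kappa-1)b + (\kappa-2)^2 c]^2 + 8(\kappa-2)(b-c)^2$. When $\omega_1 \ne 0$ the gadget equals, up to a nonzero scalar, $\langle \omega + \kappa - 1, \omega - 1 \rangle$ with $\omega = \mathfrak{B}^2/\omega_1$. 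A short factorization gives $\mathfrak{B}^2 - \omega_1 = -4\kappa(\kappa-2)(b-c)[3b + (\kappa-3)c]$, so $\omega = 1$ iff $b = c$ or $3b + (\kappa-3)c = 0$, which is exactly the condition $\mathfrak{A}^2 = \mathfrak{C}^2$. Assuming $\omega_1 \ne 0$ and $\omega \ne \pm 1$ (hence $\omega \ne 0$ since $\mathfrak{B} \ne 0$), hypotheses~\ref{condition:lem:binary:general:omega} and~\ref{condition:lem:binary:general:a+(k-3)b-(k-2)c!=0} of Lemma~\ref{lem:binary:general:root_of_unity} hold, and hypothesis~\ref{condition:lem:binary:general:two_cases} holds via case~\ref{case:lem:binary:general:frakC^2=w^(2ell)frakB^2} with $\ell = 0$: indeed $\mathfrak{C}^2 = \mathfrak{B}^2 = \omega^{0}\mathfrak{B}^2$ trivially, and because $\omega \ne 1$ we have $\mathfrak{A}^2 \ne \mathfrak{C}^2$, so the "but" clause holds for every $\kappa$. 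Lemma~\ref{lem:binary:general:root_of_unity} then yields all $\langle x,y\rangle$ of type $\tau_2$.

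It remains to dispatch the boundary sub-cases. If $2b = -(\kappa-2)c$, i.e.\ $\mathfrak{B} = 0$, the complexity is already determined by Corollary~\ref{cor:unary:dichotomy:a+(k-3)b-(k-2)c=0}, and if $\kappa = 4$ with $5b^2 + 2bc + c^2 = 0$ the problem is tractable by Corollary~\ref{cor:tractable:u^21u_a+5b+2c=5b^2+2bc+c^2=0}; these are precisely the cases excluded in the statement. The remaining exceptional loci are $\omega_1 = 0$; $\omega = -1$, i.e.\ $\mathfrak{B}^2 + \omega_1 = 0$; and $\omega = 1$, i.e.\ $b = c$ or $3b + (\kappa-3)c = 0$. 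On the first two, solving the resulting quadratic in $b/c$ puts $\langle A,B,C \rangle$, after rescaling (and, on the $\omega = -1$ locus, an orthogonal transformation via Theorem~\ref{thm:ortho_holo_trans}), into one of the explicit families handled in Lemma~\ref{lem:appendix:binary} in the appendix---the ones involving $\sqrt{\kappa^2-4}$ and $\sqrt{2(\kappa-2)}$. On the $\omega = 1$ locus the two sub-cases are orthogonally equivalent to each other---$T = \kappa I_\kappa - 2 J_\kappa$ carries $\langle -(\kappa^2-2),2,2\rangle$ to $\langle \kappa^2-6\kappa+6,\ -2(\kappa-3),\ 6\rangle$---and here the parallel-edge gadget degenerates to the identity, so one must produce a second, genuinely different eigenvalue ratio via a larger (triangle- or chain-type) gadget, or reduce directly to the already-established cases. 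The main obstacle is exactly this endgame: showing that the "second" eigenvalue $\omega_1$ of the parallel-edge gadget leaves us stranded only on loci already covered, and in particular closing the $\omega = 1$ case; the rest is the routine but lengthy verification of the compute lemma for the parallel-edge gadget and of the factorizations of $\mathfrak{B}^2 \pm \omega_1$.
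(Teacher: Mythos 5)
Your proposal is correct and follows essentially the same route as the paper: realize $\langle \omega + \kappa - 1, \omega - 1 \rangle$ via the parallel-edge gadget, show $\omega = \mathfrak{B}^2/\omega_1$ with the factorization $\mathfrak{B}^2 - \omega_1 = \kappa Y = -4\kappa(\kappa-2)(b-c)[3b+(\kappa-3)c]$ (which I verified), invoke Lemma~\ref{lem:binary:general:root_of_unity} with $\ell = 0$ using $\mathfrak{C}^2 = \mathfrak{B}^2$ and $\omega \ne 1 \Leftrightarrow \mathfrak{C}^2 \ne \mathfrak{A}^2$, and push the exceptional loci to Lemma~\ref{lem:appendix:binary}.

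The ``endgame'' you flag as an open obstacle is in fact already closed: the two $\omega = 1$ signatures you correctly identify, $\langle -\kappa^2+2, 2, 2 \rangle$ and $\langle \kappa^2-6\kappa+6, -2(\kappa-3), 6 \rangle$, are \emph{verbatim} cases~\ref{case:lem:appendix:-k2+222} and~\ref{case:lem:appendix:k2-6k+6-2k-36} of Lemma~\ref{lem:appendix:binary} (and, reflecting the orthogonal equivalence you note, the appendix handles both by the diamond gadget of Figure~\ref{subfig:gadget:compute:binary:diamond} with the same eigenvalue pair $(\kappa-2)\kappa$ and $\kappa(\kappa+2)$). So ``reduce directly to the already-established cases'' is exactly the right move and requires no new gadget. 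One small inaccuracy: the $\omega = -1$ locus needs no orthogonal transformation either; after rescaling it lands directly on case~\ref{case:lem:appendix:k-3k-22pmik2k-3-k2-4-3k-22mpikk2-425k-6} of Lemma~\ref{lem:appendix:binary}.
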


\begin{proof}
 Note that when $2 a = -2 (2 \kappa - 3) b - (\kappa - 2)^2 c$,
 we have $\mathfrak{B} = -\mathfrak{C}$ by~(\ref{eqn:case:binary:fail:2a+2(2k-3)b+(k-2)^2c=0})
 and $2 \mathfrak{B} = -\kappa [2 b + (\kappa - 2) c]$,
 which is nonzero by condition~\ref{cond:lem:binary:2a+2(2k-3)b+(k-2)^2c=0:2b!=-(k-2)c} of the lemma.
 Let $\omega_0 = 8 (2 \kappa - 3) b^2 + (\kappa - 2) \left[8 (\kappa - 3) b c + (\kappa^2 - 6 \kappa + 12) c^2\right]$
 and assume $\omega_0 \ne 0$.
 Then let $\omega = \frac{\kappa [2 b + (\kappa - 2) c]^2}{\omega_0}$.
 By condition~\ref{cond:lem:binary:2a+2(2k-3)b+(k-2)^2c=0:2b!=-(k-2)c},
 $\omega \ne 0$.
 It can be shown that $\kappa [2 b + (\kappa - 2) c]^2 = \omega_0$ is equivalent to $(b - c) [3 b + (\kappa - 3) c] = 0$.
 Thus, assume $b \ne c$ and $3 b \ne -(\kappa - 3) c$.
 Then $\omega \ne 1$.
 Also we note that when $2 a = -2 (2 \kappa - 3) b - (\kappa - 2)^2 c$,
 we have $2 \mathfrak{A} = -\kappa [4 b + (\kappa - 4) c]$ and $2 \mathfrak{C} = \kappa [2 b + (\kappa - 2) c]$.
 By the same assumptions,
 $b \ne c$ and $3 b \ne -(\kappa - 3) c$,
 we have $\mathfrak{C}^2 \ne \mathfrak{A}^2$.
 Further assume that $\omega \ne -1$,
 which is equivalent to $2 (5 \kappa - 6) b^2 + (\kappa - 2) [6 (\kappa - 2) b c + (\kappa^2 - 4 \kappa + 6) c^2] \ne 0$.
 
 Consider the gadget in Figure~\ref{subfig:gadget:binary:parallel_edges}.
 We assign $\langle -2 (2 \kappa - 3) b - (\kappa - 2)^2 c, 2 b, 2 c \rangle$ to the vertices.
 Up to a nonzero factor of $\frac{\omega_0}{4}$,
 this gadget has the succinct binary signature $\frac{1}{\omega_0} \langle x, y \rangle = \langle \omega + \kappa - 1, \omega - 1 \rangle$ of type $\tau_2$,
 where
 \begin{align*}
  x &= 4 (4 \kappa^2 - 9 \kappa + 6) b^2 + (\kappa - 2) \left[4 (\kappa - 2) (2 \kappa - 3) b c + (\kappa^3 - 6 \kappa^2 + 16 \kappa - 12) c^2\right]
  \qquad \text{and}\\
  y &= -4 (\kappa - 2) \left[3 b^3 + (\kappa - 6) b c - (\kappa - 3) c^2\right].
 \end{align*}
 Then we are done by Lemma~\ref{lem:binary:general:root_of_unity} with $\ell = 0$
 in case~\ref{case:lem:binary:general:frakC^2=w^(2ell)frakB^2} of condition~\ref{condition:lem:binary:general:two_cases}.
 
 Now we deal with the following exceptional cases.
 \begin{enumerate}
  \item If $\omega_0 = 0$,
  then we have $-4 (2 \kappa - 3) b = \big[2 (\kappa - 3) (\kappa - 2) \pm i \kappa \sqrt{2 (\kappa - 2)}\big] c$
  but $\kappa \ne 4$ by condition~\ref{cond:lem:binary:2a+2(2k-3)b+(k-2)^2c=0:k!=4_5b^2+2bc+c^2!=0} since otherwise $\omega_0 = 8 (5 b^2 + 2 b c + c^2) \ne 0$.
  Up to a nonzero factor of $\frac{c}{2 (2 \kappa - 3)}$,
  \begin{multline*}
   \frac{2 (2 \kappa - 3)}{c}
   \langle -2 (2 \kappa - 3) b - (\kappa - 2)^2 c, 2 b, 2 c \rangle\\
   =
   \langle
    -(2 \kappa - 3) \big[2 (\kappa - 2) \mp i \kappa \sqrt{2 (\kappa - 2)}\big],
    \quad
    -2 (\kappa - 3) (\kappa - 2) \mp i \kappa \sqrt{2 (\kappa - 2)},
    \quad
    4 (2 \kappa - 3)
   \rangle
  \end{multline*}
  and are done by case~\ref{case:lem:appendix:-2k-32k-2pmik2k-2-2k-3k-2pmik2k-242k-3} of Lemma~\ref{lem:appendix:binary}.
  
  \item If $b = c$,
  then up to a nonzero factor of $c$,
  we have $\frac{1}{c} \langle -2 (2 \kappa - 3) b - (\kappa - 2)^2 c, 2 b, 2 c \rangle = \langle -\kappa^2 + 2, 2, 2 \rangle$
  and are done by case~\ref{case:lem:appendix:-k2+222} Lemma~\ref{lem:appendix:binary}.
  
  \item If $3 b = -(\kappa - 3) c$,
  then up to a nonzero factor of $\frac{c}{3}$,
  we have $\frac{3}{c} \langle -2 (2 \kappa - 3) b - (\kappa - 2)^2 c, 2 b, 2 c \rangle = \langle \kappa^2 - 6 \kappa + 6, -2 (\kappa - 3), 6 \rangle$
  and are done by case~\ref{case:lem:appendix:k2-6k+6-2k-36} of Lemma~\ref{lem:appendix:binary}.
  
  \item If $2 (5 \kappa - 6) b^2 + (\kappa - 2) [6 (\kappa - 2) b c + (\kappa^2 - 4 \kappa + 6) c^2] = 0$,
  then $-2 (5 \kappa - 6) b = \big[3 (\kappa - 2)^2  \pm i \kappa \sqrt{\kappa^2 - 4}\big] c$.
  Up to a nonzero factor of $\frac{c}{5 \kappa - 6}$,
  \begin{multline*}
   \frac{5 \kappa - 6}{c}
   \langle -2 (2 \kappa - 3) b - (\kappa - 2)^2 c, 2 b, 2 c \rangle\\
   =
   \langle
    (\kappa - 3) (\kappa - 2)^2 \pm i \kappa (2 \kappa - 3) \sqrt{\kappa^2 - 4},
    \quad
    -3 (\kappa - 2)^2 \mp i \kappa \sqrt{\kappa^2 - 4},
    \quad
    2 (5 \kappa - 6)
   \rangle
  \end{multline*}
  and are done by case~\ref{case:lem:appendix:k-3k-22pmik2k-3-k2-4-3k-22mpikk2-425k-6} of Lemma~\ref{lem:appendix:binary}.
  \qedhere
 \end{enumerate}
\end{proof}

The next lemma considers the failure condition in~(\ref{eqn:case:binary:fail:a+3(k-1)b+(k-2)(k-1)c=0}).
Note that $\mathfrak{C} = 0$ iff the signature can be written as $\langle -3 (\kappa - 1) b - (\kappa - 1) (\kappa - 2) c, b, c \rangle$.
The excluded case in Lemma~\ref{lem:binary:a+3(k-1)b+(k-2)(k-1)c=0} is handled by Corollary~\ref{cor:unary:dichotomy:a+(k-3)b-(k-2)c=0}.

\begin{lemma} \label{lem:binary:a+3(k-1)b+(k-2)(k-1)c=0}
 Suppose $\kappa \ge 3$ is the domain size and $b, c \in \mathbb{C}$.
 Let $\mathcal{F}$ be a signature set containing the succinct ternary signature $\langle -3 (\kappa - 1) b - (\kappa - 1) (\kappa - 2) c, b, c \rangle$ of type $\tau_3$
 and the succinct unary signature $\langle 1 \rangle$ of type $\tau_1$.
 If $2 b \ne -(\kappa - 2) c$,
 then
 \[
  \PlHolant(\mathcal{F} \union \{\langle x,y \rangle\}) \le_T \PlHolant(\mathcal{F})
 \]
 for any $x,y \in \mathbb{C}$,
 where $\langle x,y \rangle$ is a succinct binary signature of type $\tau_2$.
\end{lemma}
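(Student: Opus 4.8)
The plan is to construct, from $\langle a,b,c\rangle = \langle -3(\kappa-1)b-(\kappa-1)(\kappa-2)c, b, c\rangle$ alone, a succinct binary signature of type $\tau_2$ of the shape $\langle\omega+\kappa-1,\omega-1\rangle$ with $\omega\notin\{0,\pm1\}$, and then to invoke Lemma~\ref{lem:binary:general:root_of_unity} through case~\ref{case:lem:binary:general:frakC=0} of condition~\ref{condition:lem:binary:general:two_cases}. The point is that the defining relation of the signature is precisely $\mathfrak{C}=a+3(\kappa-1)b+(\kappa-1)(\kappa-2)c=0$, so case~\ref{case:lem:binary:general:frakC=0} is automatic; moreover a short computation gives $\mathfrak{B}=-\kappa[2b+(\kappa-2)c]$, so the hypothesis $2b\neq-(\kappa-2)c$ is exactly condition~\ref{condition:lem:binary:general:a+(k-3)b-(k-2)c!=0}, that is, $\mathfrak{B}\neq0$. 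Thus all that is left to arrange is the binary signature and condition~\ref{condition:lem:binary:general:omega}.

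First I would take the gadget of Figure~\ref{subfig:gadget:binary:parallel_edges} with both circle vertices assigned $\langle a,b,c\rangle$ (note that, unlike the companion Lemmas~\ref{lem:binary:2b+(k-2)c=0} and~\ref{lem:binary:2a+2(2k-3)b+(k-2)^2c=0}, we do not even need $\langle 1\rangle$ for this construction). A direct computation shows that its succinct binary signature of type $\tau_2$ is $\langle g_1, g_2\rangle$ with
\begin{align*}
 g_1 &= a^2 + 3(\kappa-1)b^2 + (\kappa-1)(\kappa-2)c^2,\\
 g_2 &= 2ab + \kappa b^2 + 4(\kappa-2)bc + (\kappa-2)(\kappa-3)c^2.
\end{align*}
By Lemma~\ref{lem:k>r:binary:interpolation:eigenvalues} the eigenvalues of this signature are $g_1+(\kappa-1)g_2$ and $g_1-g_2$. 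Substituting $a=-3(\kappa-1)b-(\kappa-1)(\kappa-2)c$, the first collapses to $\kappa(\kappa-1)[2b+(\kappa-2)c]^2=\tfrac{\kappa-1}{\kappa}\mathfrak{B}^2$, while the second is $\omega_0:=(a-b)^2+2(\kappa-2)(b-c)^2$. Setting $\omega=\tfrac{(\kappa-1)\mathfrak{B}^2}{\kappa\,\omega_0}$, the gadget signature equals $\tfrac{\omega_0}{\kappa}\langle\omega+\kappa-1,\omega-1\rangle$, so up to the scalar $\tfrac{\omega_0}{\kappa}$ (nonzero when $\omega_0\neq0$) it realizes exactly the binary signature we want.

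It remains to verify condition~\ref{condition:lem:binary:general:omega}, namely $\omega\notin\{0,\pm1\}$. Since $\mathfrak{B}\neq0$, we automatically have $\omega\neq0$ whenever $\omega_0\neq0$, and $\omega\neq\pm1$ unless $(\kappa-1)\mathfrak{B}^2=\pm\kappa\,\omega_0$. Each of the three exceptional identities, $\omega_0=0$, $(\kappa-1)\mathfrak{B}^2=\kappa\,\omega_0$, and $(\kappa-1)\mathfrak{B}^2=-\kappa\,\omega_0$, is, after eliminating $a$, a homogeneous quadratic in $(b,c)$ and hence pins $b/c$ down to at most two values; plugging these back produces a short explicit list of signatures $\langle a,b,c\rangle$, each involving a surd in $\kappa$ (together with the degenerate subcase $c=0$), which I expect to coincide up to a nonzero scalar with entries already disposed of in Lemma~\ref{lem:appendix:binary}. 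This last step is where the hypothesis $\langle 1\rangle\in\mathcal{F}$ is used, and the pattern is identical to the corresponding step in Lemmas~\ref{lem:binary:2b+(k-2)c=0} and~\ref{lem:binary:2a+2(2k-3)b+(k-2)^2c=0}. Outside these exceptional identities, feeding $\langle\omega+\kappa-1,\omega-1\rangle$ and $\langle a,b,c\rangle$, now known to satisfy $\omega\notin\{0,\pm1\}$, $\mathfrak{B}\neq0$, and $\mathfrak{C}=0$, into Lemma~\ref{lem:binary:general:root_of_unity} finishes the proof. The only genuine work is bookkeeping: isolating the finitely many exceptional signatures and matching each to the right case of Lemma~\ref{lem:appendix:binary}. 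Conceptually this is the mildest of the binary lemmas, since $\mathfrak{C}=0$ makes condition~\ref{condition:lem:binary:general:two_cases} hold for free and there is no interference from $\mathfrak{A}$ or from the $\kappa=3$ exception.
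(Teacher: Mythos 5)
Your proposal matches the paper's proof essentially line for line: the same parallel-edges gadget, the same identification $\mathfrak{C}=0$ and $\mathfrak{B}\neq0$, the same $\omega=(\kappa-1)[2b+(\kappa-2)c]^2/\omega_0$ (your $\omega_0$ differs from the paper's by a factor of $\kappa$, but the ratio $\omega$ is identical, and your compact form $(a-b)^2+2(\kappa-2)(b-c)^2$ is a nicer way to write it), and the same appeal to Lemma~\ref{lem:binary:general:root_of_unity} via case~\ref{case:lem:binary:general:frakC=0}. The exceptional loci $\omega_0=0$, $\omega=1$, $\omega=-1$ do pin $b/c$ to conjugate surds that land exactly on cases~\ref{case:lem:appendix:-k-15k-2mp3ik2k-2-k-23k-5mpik3k-29k-10}, \ref{case:lem:appendix:k-1k-22k+3mp3kk2-5k+6k-3k-2pmkk2-5k+6-5k+6}, and~\ref{case:lem:appendix:k-1k-22k-7pm3ikk2-k-2-k-25k-7mpikk2-k-213k-14} of Lemma~\ref{lem:appendix:binary}, as you predicted; the only small inaccuracy is your parenthetical remark, since Lemmas~\ref{lem:binary:2b+(k-2)c=0} and~\ref{lem:binary:2a+2(2k-3)b+(k-2)^2c=0} likewise use the parallel-edges gadget and likewise only invoke $\langle 1\rangle$ through Lemma~\ref{lem:appendix:binary} in their exceptional cases.
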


\begin{proof}
 Note that when $a = -3 (\kappa - 1) b - (\kappa - 2) (\kappa - 1) c$,
 we have $\mathfrak{C} = 0$
 and $2 \mathfrak{B} = -\kappa [2 b + (\kappa - 2) c]$,
 which is nonzero by assumption.
 Let $\omega_0 = (9 \kappa - 10) b^2 + (\kappa - 2) [2 (3 \kappa - 5) b c + (\kappa^2 - 4 \kappa + 5) c^2]$
 and assume $\omega_0 \ne 0$.
 Then let $\omega = \frac{(\kappa - 1) [2 b + (\kappa - 2) c]^2}{\omega_0}$.
 By assumption, $\omega \ne 0$.
 Assume $\omega \ne 1$,
 which is equivalent to $-(5 \kappa - 6) b^2 - (\kappa - 3) (\kappa - 2) (2 b - c) c \ne 0$.
 Further assume $\omega \ne -1$,
 which is equivalent to $(13 \kappa - 14) b^2 + (\kappa - 2) [2 (5 \kappa - 7) b c + (2 \kappa^2 - 7 \kappa + 7) c^2] \ne 0$.
 
 Consider the gadget in Figure~\ref{subfig:gadget:binary:parallel_edges}.
 We assign $\langle -3 (\kappa - 1) b - (\kappa - 1) (\kappa - 2) c, b, c \rangle$ to the vertices.
 Up to a nonzero factor of $\omega_0$,
 this gadget has the succinct binary signature $\frac{1}{\omega_0} \langle x,y \rangle = \langle \omega + \kappa - 1, \omega - 1 \rangle$ of type $\tau_2$,
 where
 \begin{align*}
  x &= (\kappa - 1) \left\{3 (3 \kappa - 2) b^2 + (\kappa - 2) \left[6 b c + (\kappa^2 - 3 \kappa + 3) c^2\right]\right\}
  \qquad \text{and}\\
  y &= -(5 \kappa - 6) b^2 - (\kappa - 3) (\kappa - 2) (2 b - c) c.
 \end{align*}
 Then we are done by Lemma~\ref{lem:binary:general:root_of_unity}
 via case~\ref{case:lem:binary:general:frakC=0} of condition~\ref{condition:lem:binary:general:two_cases}.
 
 Now we deal with the following exceptional cases.
 \begin{enumerate}
  \item If $\omega_0 = 0$,
  then $-(9 \kappa - 10) b = [(\kappa - 2) (3 \kappa - 5) \pm i \kappa \sqrt{2 (\kappa - 2)}] c$.
  Up to a nonzero factor of $\frac{c}{9 \kappa - 10}$,
  we have
  \begin{multline*}
   \frac{9 \kappa - 10}{c}
   \langle -3 (\kappa - 1) b - (\kappa - 1) (\kappa - 2) c, b, c \rangle\\
   =
   \langle
    -(\kappa - 1) \big[5 (\kappa - 2) \mp 3 i \kappa \sqrt{2 (\kappa - 2)}\big],
    \quad
    -(\kappa - 2) (3 \kappa - 5) \mp i \kappa \sqrt{2 (\kappa - 2)},
    \quad
    9 \kappa - 10
   \rangle
  \end{multline*}
  and we are done by case~\ref{case:lem:appendix:-k-15k-2mp3ik2k-2-k-23k-5mpik3k-29k-10} of Lemma~\ref{lem:appendix:binary}.
  
  \item If $-(5 \kappa - 6) b^2 - (\kappa - 3) (\kappa - 2) (2 b - c) c = 0$,
  then $-(5 \kappa - 6) b = \big[(\kappa - 3) (\kappa - 2) \pm \kappa \sqrt{\kappa^2 - 5 \kappa + 6}\big] c$.
  Up to a nonzero factor of $-\frac{c}{5 \kappa - 6}$,
  we have
  \begin{multline*}
   -\frac{5 \kappa - 6}{c}
   \langle -3 (\kappa - 1) b - (\kappa - 1) (\kappa - 2) c, b, c \rangle\\
   =
   \langle
    (\kappa - 1) \left[(\kappa - 2) (2 \kappa + 3) \mp 3 \kappa \sqrt{\kappa^2 - 5 \kappa + 6}\right],
    \quad
    (\kappa - 3) (\kappa - 2) \pm \kappa \sqrt{\kappa^2 - 5 \kappa + 6},
    \quad
    -5 \kappa + 6
   \rangle
  \end{multline*}
  and are done by case~\ref{case:lem:appendix:k-1k-22k+3mp3kk2-5k+6k-3k-2pmkk2-5k+6-5k+6} Lemma~\ref{lem:appendix:binary}.
  
  \item If $(13 \kappa - 14) b^2 + (\kappa - 2) [2 (5 \kappa - 7) b c + (2 \kappa^2 - 7 \kappa + 7) c^2] = 0$,
  then $-(13 \kappa - 14) b = \big[(\kappa - 2) (5 \kappa - 7) \pm i \kappa \sqrt{\kappa^2 - \kappa - 2}\big] c$.
  Up to a nonzero factor of $\frac{c}{13 \kappa - 14}$,
  we have
  \begin{multline*}
   \frac{13 \kappa - 14}{c}
   \langle -3 (\kappa - 1) b - (\kappa - 1) (\kappa - 2) c, b, c \rangle\\
   =
   \langle
    (\kappa - 1) \left[(\kappa - 2) (2 \kappa - 7) \pm 3 i \kappa \sqrt{\kappa^2 - \kappa - 2}\right],
    ~~~
    -(\kappa - 2) (5 \kappa - 7) \mp i \kappa \sqrt{\kappa^2 - \kappa - 2},
    ~~~
    13 \kappa - 14
   \rangle
  \end{multline*}
  and are done by case~\ref{case:lem:appendix:k-1k-22k-7pm3ikk2-k-2-k-25k-7mpikk2-k-213k-14} of Lemma~\ref{lem:appendix:binary}.
  \qedhere
 \end{enumerate}
\end{proof}

\subsection{Eigenvalue Shifted Triples} \label{subsec:binary:EST}

To handle failure conditions~(\ref{eqn:case:binary:fail:k=3_and_b=0}) and~(\ref{eqn:case:binary:fail:k=3_and_2a+3b+4c=0}) from Lemma~\ref{lem:binary:general},
we need another technique.
We introduce an Eigenvalue Shifted Triple,
which extends the concept of an Eigenvalue Shifted Pair.

\begin{definition}[Definition~4.6 in~\cite{KC10}]
 A pair of nonsingular matrices $M, M' \in \mathbb{C}^{2 \times 2}$ is called an \emph{Eigenvalue Shifted Pair}
 if $M' = M + \delta I$ for some nonzero $\delta \in \mathbb{C}$,
 and $M$ has distinct eigenvalues.
\end{definition}

Eigenvalue shifted pairs were used in~\cite{KC10} to show that interpolation succeeds in most cases
since these matrices correspond to some recursive gadget constructions and at least one of them usually has eigenvalues with distinct complex norms.
In~\cite{KC10},
it is shown that the interpolation succeeds unless the variables in question take real values.
Then other techniques were developed to handle the real case.
We use Eigenvalue Shifted Pairs in a stronger way.
We exhibit three matrices such that any two form an Eigenvalue Shifted Pair.
Provided these shifts are linearly independent over $\R$,
this is enough to show that interpolation succeeds for both real and complex settings of the variables.
We call this an Eigenvalue Shifted Triple.

\begin{definition}
 A trio of nonsingular matrices $M_0, M_1, M_2 \in \mathbb{C}^{2 \times 2}$ is called an \emph{Eigenvalue Shifted Triple} (EST)
 if $M_0$ has distinct eigenvalues and there exist nonzero $\delta_1, \delta_2 \in \mathbb{C}$ satisfying $\frac{\delta_1}{\delta_2} \not\in \R$
 such that $M_1 = M_0 + \delta_1 I$, and $M_2 = M_0 + \delta_2 I$.
\end{definition}

We note that if $M_0$, $M_1$, and $M_2$ form an Eigenvalue Shifted Triple,
then any permutation of the matrices is also an Eigenvalue Shifted Triple.

The proof of the next lemma is similar to the proof of Lemma~4.7 in~\cite{KC10_arXiv},
the full version of~\cite{KC10}.

\begin{lemma} \label{lem:binary:EST}
 Suppose $\alpha, \beta, \delta_1, \delta_2 \in \mathbb{C}$.
 If $\alpha \ne \beta$, $\delta_1, \delta_2 \ne 0$, and $\frac{\delta_1}{\delta_2} \not\in \R$,
 then $|\alpha| \ne |\beta|$ or $|\alpha + \delta_1| \ne |\beta + \delta_1|$ or $|\alpha + \delta_2| \ne |\beta + \delta_2|$.
\end{lemma}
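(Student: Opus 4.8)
The plan is to argue by contradiction: assume that all three norm equalities hold simultaneously, i.e.
\[
 |\alpha| = |\beta|, \qquad |\alpha + \delta_1| = |\beta + \delta_1|, \qquad |\alpha + \delta_2| = |\beta + \delta_2|,
\]
and derive that $\delta_1/\delta_2 \in \R$, contradicting the hypothesis.

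The first step is to pass to squared norms and expand. Writing $|\alpha + \delta_i|^2 = |\alpha|^2 + 2\operatorname{Re}(\overline{\delta_i}\,\alpha) + |\delta_i|^2$ and similarly for $\beta$, and using the equality $|\alpha|^2 = |\beta|^2$ to cancel the $|\alpha|^2,|\beta|^2,|\delta_i|^2$ terms, the second and third equalities reduce to
\[
 \operatorname{Re}\!\big(\overline{\delta_i}\,(\alpha - \beta)\big) = 0 \qquad \text{for } i = 1, 2.
\]
The second step is the geometric observation: identifying $\mathbb{C}$ with $\R^2$ equipped with the real inner product $\langle z, w\rangle = \operatorname{Re}(\bar z w)$, the displays above say that both $\delta_1$ and $\delta_2$ are orthogonal to the nonzero vector $\gamma := \alpha - \beta$ (nonzero because $\alpha \ne \beta$). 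The orthogonal complement of a nonzero vector in $\R^2$ is a one-dimensional real subspace, so $\delta_1$ and $\delta_2$ are linearly dependent over $\R$; since $\delta_2 \ne 0$, this gives $\delta_1 = t\,\delta_2$ for some $t \in \R$, whence $\delta_1/\delta_2 = t \in \R$, the desired contradiction. (Equivalently, one can phrase this as: $0$, $-\delta_1$, $-\delta_2$ all lie on the perpendicular bisector line of the segment joining $\alpha$ and $\beta$, hence are collinear.)

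There is no real obstacle here; the only thing to be careful about is handling the degenerate possibilities cleanly — namely that $\gamma \ne 0$ is exactly what makes the orthogonal complement one-dimensional rather than all of $\R^2$, and that $\delta_2 \ne 0$ is what lets us divide. Both are guaranteed by hypothesis, so the argument goes through directly without case analysis. I would also note, for the application in Lemma~\ref{lem:binary:general:root_of_unity} and in the EST setup, that by symmetry the same conclusion holds with $\delta_1,\delta_2$ replaced by $\delta_1,\delta_2-\delta_1$ etc., but this is not needed for the statement as given.
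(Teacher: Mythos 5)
Your proof is correct and takes essentially the same approach as the paper's: assume all three norm equalities, expand squared norms, and deduce that $\delta_1$ and $\delta_2$ both satisfy the same real-linear constraint, forcing $\delta_1/\delta_2 \in \R$. The paper first rotates so that $\alpha = \overline{\beta}$, which makes the constraint simply ``$\delta_i \in \R$''; you skip the rotation and phrase the constraint as orthogonality to $\alpha - \beta$ in $\R^2$, which is a mild cosmetic difference but the same argument.
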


\begin{proof}
 Assume for a contradiction that $|\alpha| = |\beta|$, $|\alpha + \delta_1| = |\beta + \delta_1|$, and $|\alpha + \delta_2| = |\beta + \delta_2|$.
 After a rotation in the complex plane,
 we can assume that $\alpha = \overline{\beta}$.
 Note that all of our assumptions are unchanged by this rotation.
 For $i \in \{1,2\}$,
 we have
 \begin{align*}
  (\alpha + \delta_i) \overline{(\alpha + \delta_i)}
  &= |\alpha + \delta_i|^2\\
  &= |\beta + \delta_i|^2\\
  &= (\beta + \delta_i) \overline{(\beta + \delta_i)}
  = (\overline{\alpha} + \delta_i) (\alpha + \overline{\delta_i}).
 \end{align*}
 This implies $(\overline{\alpha} - \alpha) (\overline{\delta_i} - \delta_i) = 0$.
 Since $\alpha \ne \beta = \overline{\alpha}$,
 we have $\delta_i \in \R$.
 Then $\frac{\delta_1}{\delta_2} \in \R$, a contradiction.
\end{proof}

\begin{figure}[t]
 \centering
 \captionsetup[subfigure]{labelformat=empty}
 \subcaptionbox{$N_0$}{
  \begin{tikzpicture}[scale=\scale,transform shape,node distance=\nodeDist,semithick]
   \node[internal] (0)                   {};
   \node[external] (1) [above left of=0] {};
   \node[external] (2) [below left of=0] {};
   \node[external] (3) [left       of=1] {};
   \node[external] (4) [left       of=2] {};
   \node[square] (7) [right      of=0] {};
   \path (0) edge[out= 135, in=0] (3)
             edge[out=-135, in=0] (4)
             edge (7);
   \begin{pgfonlayer}{background}
    \node[draw=\borderColor,thick,rounded corners,inner xsep=12pt,inner ysep=8pt,fit = (1) (2) (7)] {};
   \end{pgfonlayer}
  \end{tikzpicture}}
 \qquad
 \subcaptionbox{$N_1$}{
  \begin{tikzpicture}[scale=\scale,transform shape,node distance=\nodeDist,semithick]
   \node[internal] (0)                   {};
   \node[external] (1) [above left of=0] {};
   \node[external] (2) [below left of=0] {};
   \node[internal] (3) [left       of=1] {};
   \node[internal] (4) [left       of=2] {};
   \node[external] (5) [left       of=3] {};
   \node[external] (6) [left       of=4] {};
   \node[square] (7) [right      of=0] {};
   \path (0) edge[out= 135, in=0] (3)
             edge[out=-135, in=0] (4)
             edge (7)
         (3) edge (4)
             edge (5)
         (4) edge (6);
   \begin{pgfonlayer}{background}
    \node[draw=\borderColor,thick,rounded corners,inner xsep=12pt,inner ysep=8pt,fit = (3) (4) (7)] {};
   \end{pgfonlayer}
  \end{tikzpicture}}
 \qquad
 \subcaptionbox{$N_{s+1}$}{
  \begin{tikzpicture}[scale=\scale,transform shape,node distance=\nodeDist,semithick]
   \node[external]  (0)                   {\Huge $N_s$};
   \node[external]  (1) [above left of=0] {};
   \node[external]  (2) [below left of=0] {};
   \node[internal]  (3) [left       of=1] {};
   \node[internal]  (4) [left       of=2] {};
   \node[external]  (5) [left       of=3] {};
   \node[external]  (6) [left       of=4] {};
   \path (0) edge[out= 135, in=0] (3)
             edge[out=-135, in=0] (4)
         (3) edge (4)
             edge (5)
         (4) edge (6);
   \begin{pgfonlayer}{background}
    \node[draw=\borderColor,thick,densely dashed,rounded corners,fit = (0)] {};
    \node[draw=\borderColor,thick,rounded corners,inner xsep=12pt,inner ysep=8pt,fit = (3) (4) (0)] {};
   \end{pgfonlayer}
  \end{tikzpicture}}
 \caption{Alternative recursive construction to interpolate a binary signature (cf.~Figure~\ref{fig:gadget:k>r:binary:interpolation}).
 The circle vertices are assigned $\langle a,b,c \rangle$ and the square vertex is assigned $\langle 1 \rangle$.}
 \label{fig:gadget:binary:interpolation:alternative}
\end{figure}
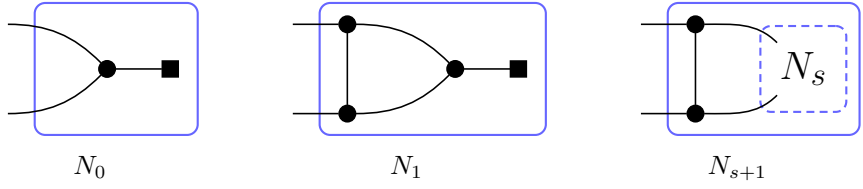

The next lemma considers the failure condition in~(\ref{eqn:case:binary:fail:k=3_and_b=0}),
which is $\kappa = 3$ and $b = 0$,
so the signature has the form $\langle a,0,c \rangle$.
If $a = 0$,
then the problem is already $\SHARPP$-hard by Theorem~\ref{thm:edge_coloring:k=r}.
If $c = 0$,
then the problem is tractable by case~\ref{case:cor:tractable:holant-star:equality} of Corollary~\ref{cor:tractable:holant-star}.
If $a^3 = c^3$,
then the problem is tractable by Corollary~\ref{cor:tractable:k=3_a^3=c^3}.

\begin{lemma} \label{lem:binary:k=3_and_b=0}
 Suppose the domain size is~$3$ and $a, c \in \mathbb{C}$.
 Let $\mathcal{F}$ be a signature set containing the succinct ternary signature $\langle a,0,c \rangle$ of type $\tau_3$
 and the succinct unary signature $\langle 1 \rangle$ of type $\tau_1$.
 If $a c \ne 0$ and $a^3 \ne c^3$,
 then
 \[
  \PlHolant(\mathcal{F} \union \{\langle x,y \rangle\}) \le_T \PlHolant(\mathcal{F})
 \]
 for any $x,y \in \mathbb{C}$,
 where $\langle x,y \rangle$ is a succinct binary signature of type $\tau_2$.
\end{lemma}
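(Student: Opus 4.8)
The plan is to interpolate every succinct binary signature $\langle x,y\rangle$ of type $\tau_2$ by exhibiting an Eigenvalue Shifted Triple of recurrence matrices and then combining Lemma~\ref{lem:binary:EST} with Corollary~\ref{cor:k>r:binary:interpolate}.

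First I would record the base construction. Connecting $\langle 1\rangle$ to one edge of $\langle a,0,c\rangle$ realizes the succinct binary signature $\langle a,c\rangle$ of type $\tau_2$; assigning $\langle a,c\rangle$ to every vertex of the recursive construction in Figure~\ref{fig:gadget:k>r:binary:interpolation} gives, by Lemma~\ref{lem:k>r:binary:interpolation:eigenvalues}, the recurrence matrix $M_0=\left[\begin{smallmatrix} a & 2c \\ c & a+c\end{smallmatrix}\right]$ with eigenvalues $\alpha=a+2c$ and $\beta=a-c$. Since $a^3\neq c^3$ forces $a\neq c$ and $c\neq 0$, the eigenvalues $\alpha,\beta$ are distinct; and since $y=c\neq 0$ in the notation of Lemma~\ref{lem:k>r:binary:interpolation:eigenvalues}, the non-orthogonality condition of Lemma~\ref{lem:k>r:binary:interpolate} holds automatically, so by Corollary~\ref{cor:k>r:binary:interpolate} it remains only to force some recurrence matrix to have nonzero determinant and infinite order modulo a scalar. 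Next I would build two further recursive constructions, using $\langle a,0,c\rangle$ in an essentially ternary way together with $\langle 1\rangle$, whose recurrence matrices are $M_0+\delta_1 I$ and $M_0+\delta_2 I$; a direct gadget computation gives explicit formulas for the shifts $\delta_1,\delta_2$ as polynomials in $a$ and $c$. The crucial point is to check that $\delta_1,\delta_2$ are nonzero with $\delta_1/\delta_2\notin\R$ whenever $ac\neq 0$ and $a^3\neq c^3$.

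Granting that, $M_0$, $M_0+\delta_1 I$, $M_0+\delta_2 I$ form an Eigenvalue Shifted Triple, so Lemma~\ref{lem:binary:EST} (with $\alpha,\beta$ as above) furnishes one of the three matrices whose eigenvalues have distinct complex norms; for that matrix the ratio of eigenvalues is not a root of unity, hence it has infinite order modulo a scalar. A short check handles nonsingularity — a shift by $\delta I$ cannot annihilate both eigenvalues, and the finitely many $(a,c)$ where the chosen matrix is singular are isolated — and then Corollary~\ref{cor:k>r:binary:interpolate} gives $\PlHolant(\mathcal{F}\union\{\langle x,y\rangle\})\le_T\PlHolant(\mathcal{F})$ for all $x,y\in\mathbb{C}$. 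The main obstacle is the middle step: attaching unary signatures to $\langle a,0,c\rangle$ only rescales $\langle a,c\rangle$ and never shifts its diagonal entry relative to the off-diagonal one, so the shifts must come from genuinely ternary gadgets and are quadratic (or higher) in $a,c$; verifying $\delta_1/\delta_2\notin\R$ in all remaining cases is exactly where $a^3\neq c^3$ — the complement of the affine-tractable family of Corollary~\ref{cor:tractable:k=3_a^3=c^3} — is used, keeping the construction off its degenerate locus. A short explicit list of exceptional pairs $(a,c)$ for which even the Eigenvalue Shifted Triple degenerates (for instance those forcing all three matrices singular, or $a=-2c$) is dispatched separately in Lemma~\ref{lem:appendix:binary} in Appendix~\ref{sec:appendix:binary}, as for the other lemmas of this section.
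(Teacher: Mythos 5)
Your high-level plan — realize some $2\times 2$ recurrence matrix, shift it twice by multiples of $I$ in nonreal ratio, invoke Lemma~\ref{lem:binary:EST} and Corollary~\ref{cor:k>r:binary:interpolate} — is the right shape, but the crucial step is left unconstructed and it is precisely the part that is hard.

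You ask the reader to grant that there exist two more recursive gadgets whose recurrence matrices are $M_0+\delta_1 I$ and $M_0+\delta_2 I$ for explicitly computable polynomials $\delta_1,\delta_2$. By Lemma~\ref{lem:k>r:binary:interpolation:eigenvalues} (with $\kappa=3$), any recurrence matrix arising from the construction of Figure~\ref{fig:gadget:k>r:binary:interpolation} has the form $\left[\begin{smallmatrix}x & 2y\\ y & x+y\end{smallmatrix}\right]$ for the assigned binary signature $\langle x,y\rangle$. Such a matrix equals $M_0+\delta I$ exactly when $y=c$ (not merely proportional to $c$ — scaling the signature scales the whole matrix, which destroys the additive-shift relation) and $x=a+\delta$. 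So your ``ternary'' gadgets must produce succinct binary signatures with second entry \emph{exactly} $c$ and varying first entry, and nothing in the proposal exhibits such gadgets. Your own remark that unary attachments only rescale $\langle a,c\rangle$ is accurate, and the parallel-edge or other quadratic gadgets change the second entry as well. This is not a routine verification to be deferred; it is where the proof actually lives, and your guessed exceptional locus (e.g.\ $a=-2c$) is a symptom that the gadgets are not pinned down.

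What the paper does instead is a two-stage maneuver that manufactures the needed matrices for free. Using the alternative recursion of Figure~\ref{fig:gadget:binary:interpolation:alternative} (a parallel-edge gadget closed off by a $\langle 1\rangle$-pendant), it realizes a \emph{diagonal} recurrence matrix $M=\left[\begin{smallmatrix}\omega & 0\\ 0 & 1\end{smallmatrix}\right]$ acting on the initial signature $\langle 1,z\rangle$ with $z=c/a$ and $\omega=\frac{a^2+2c^2}{c(2a+c)}$. If $\omega$ is not a root of unity, Lemma~\ref{lem:k>r:binary:interpolate} finishes at once. Otherwise $\omega$ has order $n\ge 4$ (the conditions $ac\ne 0$, $a^3\ne c^3$, together with a handful of discarded cases, rule out $n\le 3$), and the cyclotomic failure produces \emph{many} gadgets: $f_s=\langle\omega^s,z\rangle$ for $s\in\Z_n$, all sharing the same second entry $z$ because $M$ is diagonal. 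Feeding $f_s$ into the standard recursion of Figure~\ref{fig:gadget:k>r:binary:interpolation} gives matrices $N(s)=\left[\begin{smallmatrix}\omega^s & 2z\\ z & \omega^s+z\end{smallmatrix}\right]$, and $N(s)-N(s')=(\omega^s-\omega^{s'})I$, so \emph{any} three nonsingular $N(s)$ form an Eigenvalue Shifted Triple. The nonreal-ratio requirement reduces to the observation that $\arg(\omega^{k-j}-1)$ for $|\omega|=1$, $\omega^{k-j}\ne 1$, lies strictly in $(\pi/2,3\pi/2)$ and so cannot coincide up to sign for distinct powers; no case analysis on $(a,c)$ is needed. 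Singular $N(s)$ occur for at most one $s$, so three usable indices always exist among $\{0,1,2,3\}$. Finally the discarded algebraic conditions ($2a+c=0$, $a^2+2c^2=0$, $a^2+2ac+3c^2=0$, $a^2+ac+7c^2=0$) are dispatched separately in Lemma~\ref{lem:appendix:binary}, not the ad hoc locus you name.

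If you want to salvage your draft, you should replace the unspecified ``direct gadget computation'' with this diagonalization step, since it is what makes the three shifted matrices come into existence at all; otherwise the proposal asserts without justification the existence of the very objects on which the EST argument depends.
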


\begin{proof}
 Assume $2 a + c \ne 0$ and let $\omega = \frac{a^2 + 2 c^2}{c (2 a + c)}$.
 Assume $a^2 + 2 c^2$ so that $\omega \ne 0$.
 Further assume $a^2 + 2 a c + 3 c^2 \ne 0$ so that $\omega^2 \ne 1$
 as well as $a^2 + a c + 7 c^2 \ne 0$ so that $\omega^3 \ne 1$.
 Note that these conclusions also require $a \ne c$ and $a^3 \ne c^3$ respectively.
 
 Consider using the recursive construction in Figure~\ref{fig:gadget:binary:interpolation:alternative}.
 The circle vertices are assigned $\langle a,0,c \rangle$ and the square vertex is assigned $\langle 1 \rangle$.
 Let $z = \frac{c}{a}$,
 which is well-defined by assumption.
 The succinct signature of type $\tau_2$ for the initial gadget $N_0$ in this construction is $\langle a,c \rangle$.
 Up to a nonzero factor of $a$,
 this signature is $f_0 = \frac{1}{a} \langle a,c \rangle = \langle 1,z \rangle$.
 Then up to a nonzero factor of $c (2 a + c)$,
 the succinct signature of type $\tau_2$ for the $s$th gadget in this construction is $f_s  = \langle \omega^k, z \rangle = M^s f_0$,
 where
 \[
  M =
  \frac{1}{c (2 a + c)}
  \left[
   \begin{smallmatrix}
    a^2 + 2 c^2 & 0 \\
    0           & c (2 a + c)
   \end{smallmatrix}
  \right]
  =
  \left[
   \begin{smallmatrix}
    \omega & 0 \\
    0 & 1
   \end{smallmatrix}
  \right].
 \]
 
 Clearly $M$ is nonsingular.
 The determinant of $[f_0\ M f_0] = \left[\begin{smallmatrix} a & a \omega \\ c & c \end{smallmatrix}\right]$ is $z (1 - \omega) \ne 0$.
 If $\omega$ is not a root of unity,
 then we are done by Lemma~\ref{lem:k>r:binary:interpolate}.
 
 Otherwise, suppose $\omega$ is a primitive root of unity of order $n$.
 By assumption, $n \ge 4$.
 Now consider the recursive construction in Figure~\ref{fig:gadget:k>r:binary:interpolation}.
 We assign $f_s$ to every vertex,
 where $s \ge 0$ is a parameter of our choice.
 Let $g_t(s)$ be the signature of the $t$th gadget in this recursive construction when using $f_s$.
 Then $g_1(s) = f_s$ and $g_t(s) = (N(s))^t \left[\begin{smallmatrix} 1 \\ 0 \end{smallmatrix}\right]$,
 where $N(s) = \left[\begin{smallmatrix} \omega^s & 2 z \\ z & \omega^s + z \end{smallmatrix}\right]$.
 
 By Lemma~\ref{lem:k>r:binary:interpolation:eigenvalues},
 the eigenvalues of $N(s)$ are $\omega^s + 2 z$ and $\omega^s - z$,
 which means the determinant of $N(s)$ is $(\omega^s + 2 z) (\omega^s - z)$.
 Each eigenvalue can vanish for at most one value of $s \in \Z_n$ since both eigenvalues are linear polynomials in $\omega^s$ that are not identically~$0$.
 Furthermore, at least one of the eigenvalues never vanishes for all $s \in \Z_n$ since otherwise $1 = |z| = \frac{1}{2}$.
 
 Thus, at most one matrix among $N(0)$, $N(1)$, $N(2)$, and $N(3)$ can be singular.
 Pick distinct $j, k, \ell \in \{0,1,2,3\}$ such that $N(j)$, $N(k)$, and $N(\ell)$ are nonsingular.
 To finish the proof,
 we show that $N(j)$, $N(k)$, and $N(\ell)$ form an Eigenvalue Shifted Triple.
 Then by Lemma~\ref{lem:binary:EST},
 at least one of the matrices has eigenvalues with distinct complex norms,
 so we are done by Corollary~\ref{cor:k>r:binary:interpolate}.
 
 The eigenvalue shift from $N(j)$ to $N(k)$ is $\delta_{j,k} = \omega^j (\omega^{k-j} - 1)$,
 which is nonzero since $j$ and $k$ are distinct in $\Z_n$.
 Assume for a contradiction that $\frac{\delta_{j,k}}{\delta_{j,\ell}} \in \R$,
 which is equivalent to $\arg(\delta_{j,k}) = \arg(\pm \delta_{j,\ell})$.
 Then we have
 \begin{equation}
  \arg\left(\omega^{k-j} - 1\right) = \arg\left(\pm (\omega^{\ell-j} - 1)\right). \label{eqn:binary:b=0:arg}
 \end{equation}
 In the complex plane,
 any nonzero $x - 1 \in \mathbb{C}$ with $|x| = 1$ lies on the circle of radius~$1$ centered at $(-1,0)$.
 Such $x$ satisfy $\frac{\pi}{2} < \arg(x - 1) < \frac{3 \pi}{2}$.
 Thus, the argument of $x - 1$ is unique, even up to a sign, contradicting~(\ref{eqn:binary:b=0:arg}).
 Therefore, $M_j$, $M_k$, and $M_\ell$ form an Eigenvalue Shifted Triple as claimed.
 
 Now we deal with the following exceptional cases.
 \begin{enumerate}
  \item If $2 a + c = 0$,
  then up to a nonzero factor of $a$,
  we have $\frac{1}{a} \langle a,0,c \rangle = \langle 1, 0, -2 \rangle$
  and are done by case~\ref{case:lem:appendix:10-2} of Lemma~\ref{lem:appendix:binary}.
  
  \item If $a^2 + 2 c^2 = 0$,
  then $a = \pm i \sqrt{2} c$.
  Up to a nonzero factor of $c$,
  we have $\frac{1}{c} \langle a,0,c \rangle = \langle \pm i \sqrt{2}, 0, 1 \rangle$
  and are done by case~\ref{case:lem:appendix:pmi201} of Lemma~\ref{lem:appendix:binary}.
  
  \item If $a^2 + 2 a c + 3 c^2 = 0$,
  then $a = c (-1 \pm i \sqrt{2})$.
  Up to a nonzero factor of $c$,
  we have $\frac{1}{c} \langle a,0,c \rangle = \langle -1 \pm i \sqrt{2}, 0, 1 \rangle$
  and are done by case~\ref{case:lem:appendix:-1pmi201} of Lemma~\ref{lem:appendix:binary}.
  
  \item If $a^2 + a c + 7 c^2 = 0$,
  then $2 a = c (-1 \pm 3 i \sqrt{3})$.
  Up to a nonzero factor of $\frac{c}{2}$,
  we have $\frac{2}{c} \langle a,0,c \rangle = \langle -1 \pm 3 i \sqrt{3}, 0, 2 \rangle$
  and are done by case~\ref{case:lem:appendix:-1pm3i302} of Lemma~\ref{lem:appendix:binary}.
  \qedhere
 \end{enumerate}
\end{proof}

The next lemma considers the failure condition in~(\ref{eqn:case:binary:fail:k=3_and_2a+3b+4c=0}).
Since this failure condition is just a holographic transformation of the failure condition in~(\ref{eqn:case:binary:fail:k=3_and_b=0}),
the excluded cases in this lemma are handled exactly as those preceding Lemma~\ref{lem:binary:k=3_and_b=0}.

\begin{lemma} \label{lem:binary:k=3_and_2a+3b+4c=0}
 Suppose the domain size is~$3$ and $b, c \in \mathbb{C}$.
 Let $\mathcal{F}$ be a signature set containing the succinct ternary signature $\langle -3 b -4 c, 2 b, 2 c \rangle$ of type $\tau_3$
 and the succinct unary signature $\langle 1 \rangle$ of type $\tau_1$.
 Assume $T^{\otimes 3} \langle -3 b -4 c, 2 b, 2 c \rangle = \langle \hat{a}, \hat{b}, \hat{c} \rangle$,
 where $T = \left[\begin{smallmatrix*}[r] 1 & -2 & -2 \\ -2 & 1 & -2 \\ -2 & -2 & 1 \end{smallmatrix*}\right]$.
 If $\hat{a} \hat{c} \ne 0$ and $\hat{a}^3 \ne \hat{c}^3$,
 then
 \[
  \PlHolant(\mathcal{F} \union \{\langle x,y \rangle\}) \le_T \PlHolant(\mathcal{F})
 \]
 for any $x,y \in \mathbb{C}$,
 where $\langle x,y \rangle$ is a succinct binary signature of type $\tau_2$.
\end{lemma}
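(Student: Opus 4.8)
The plan is to reduce to Lemma~\ref{lem:binary:k=3_and_b=0} via the orthogonal holographic transformation defined by $T$. Note that $T = 3 I_3 - 2 J_3$, so $T \transpose{T} = T^2 = 9 I_3 - 12 J_3 + 4 J_3^2 = 9 I_3$ (using $J_3^2 = 3 J_3$); hence $\frac{1}{3} T$ is orthogonal and, since it preserves $=_2$, applying $T^{\otimes n}$ to every signature does not change the complexity of the (planar) Holant problem by Theorem~\ref{thm:ortho_holo_trans}. First I would record how $T$ acts on the signatures at hand. Since $T \mathbf{1} = -3 \mathbf{1}$, the transformation sends the succinct unary signature $\langle 1 \rangle$ of type $\tau_1$ to $-3 \langle 1 \rangle$, i.e.\ to $\langle 1 \rangle$ up to a nonzero scalar. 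A succinct binary signature $\langle x, y \rangle$ of type $\tau_2$ has signature matrix $(x - y) I_3 + y J_3$, and $T \bigl[(x - y) I_3 + y J_3\bigr] \transpose{T} = 9 (x - y) I_3 + 9 y J_3$ because $T I_3 \transpose{T} = 9 I_3$ and $T J_3 \transpose{T} = 9 J_3$; thus $T^{\otimes 2} \langle x, y \rangle$ is again $\langle x, y \rangle$ up to a nonzero scalar. So $T$ fixes $\langle 1 \rangle$ and the whole family of type-$\tau_2$ binary signatures, up to nonzero scalars.

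Next I would check that $\hat b = 0$. The signature $\langle -3b - 4c, 2b, 2c \rangle$ satisfies $2(-3b-4c) + 3(2b) + 4(2c) = 0$, so it lies on the hyperplane cut out by failure condition~(\ref{eqn:case:binary:fail:k=3_and_2a+3b+4c=0}) inside the three-dimensional space of type-$\tau_3$ signatures. Because $T^{\otimes 3}$ is an invertible linear map on this space and is, up to a nonzero scalar, its own inverse, it suffices to observe that it carries the hyperplane $\{b = 0\}$ onto the hyperplane $\{2a + 3b + 4c = 0\}$: the former is spanned by $\langle 1,0,0 \rangle$ and $\langle 0,0,1 \rangle$, and one computes (e.g.\ via Lemma~\ref{lem:compute:ternary:holographic_transformation}, or Corollary~\ref{cor:tractable:100} with $\kappa = 3$) that $T^{\otimes 3}\langle 1,0,0 \rangle$ and $T^{\otimes 3}\langle 0,0,1 \rangle$ are nonzero multiples of $\langle -5,-2,4 \rangle$ and $\langle 8,-4,-1 \rangle$ respectively, both of which satisfy $2a + 3b + 4c = 0$. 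Hence $\langle \hat a, \hat b, \hat c \rangle = T^{\otimes 3}\langle -3b-4c, 2b, 2c \rangle$ has $\hat b = 0$.

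To conclude, I would apply $T$ to the whole signature grid: by Theorem~\ref{thm:ortho_holo_trans} together with the first paragraph, $\PlHolant(\mathcal{F} \union \{\langle x, y \rangle\}) \equiv_T \PlHolant(T\mathcal{F} \union \{\langle x, y \rangle\})$ and $\PlHolant(\mathcal{F}) \equiv_T \PlHolant(T\mathcal{F})$. The set $T\mathcal{F}$ contains $\langle \hat a, 0, \hat c \rangle$ and $\langle 1 \rangle$, and by hypothesis $\hat a \hat c \ne 0$ and $\hat a^3 \ne \hat c^3$; so Lemma~\ref{lem:binary:k=3_and_b=0} gives $\PlHolant(T\mathcal{F} \union \{\langle x, y \rangle\}) \le_T \PlHolant(T\mathcal{F})$ for every $x, y \in \mathbb{C}$. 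Chaining the equivalences yields $\PlHolant(\mathcal{F} \union \{\langle x, y \rangle\}) \le_T \PlHolant(\mathcal{F})$, as desired. There is no genuine obstacle here beyond bookkeeping: the only parts requiring care are verifying that $T$ fixes $\langle 1 \rangle$ and the type-$\tau_2$ signatures up to scalars and that the failure-condition hyperplane is mapped so that $\hat b = 0$; once those are in place, the statement is an immediate application of the preceding lemma.
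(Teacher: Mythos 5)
Your proof is correct and follows essentially the same route as the paper: apply the orthogonal holographic transformation by $T$, check $\hat{b} = 0$, and invoke Lemma~\ref{lem:binary:k=3_and_b=0}. The only (cosmetic) difference is that the paper reads $\hat{b} = -3(2a + 3b + 4c) = 0$ directly off the formula in Lemma~\ref{lem:compute:ternary:holographic_transformation}, whereas you establish it by a linear-algebraic hyperplane argument on basis vectors; both are valid and the extra checks you supply (that $\tfrac{1}{3}T$ is orthogonal and that it fixes $\langle 1\rangle$ and the $\tau_2$ family up to scalars) are details the paper leaves implicit.
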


\begin{proof}
 By Lemma~\ref{lem:compute:ternary:holographic_transformation} with $x = 1$ and $y = -2$,
 we have $\hat{b} = 0$.
 Thus after a holographic transformation by $T$,
 we are in the case covered by Lemma~\ref{lem:binary:k=3_and_b=0}.
 Since $T$ is orthogonal after scaling by $\frac{1}{3}$,
 the complexity of these problems are unchanged by Theorem~\ref{thm:ortho_holo_trans}.
\end{proof}

We summarize this section with the following lemma.

\begin{corollary} \label{cor:binary:interpolate}
 Suppose the domain size is $\kappa \ge 3$ and $a, b, c \in \mathbb{C}$.
 Let $\mathcal{F}$ be a signature set containing the succinct ternary signature $\langle a, b, c \rangle$ of type $\tau_3$
 and the succinct unary signature $\langle 1 \rangle$ of type $\tau_1$.
 Then
 \[
  \PlHolant(\mathcal{F} \union \{\langle x,y \rangle\}) \le_T \PlHolant(\mathcal{F})
 \]
 for any $x,y \in \mathbb{C}$,
 where $\langle x,y \rangle$ is a succinct binary signature of type $\tau_2$,
 unless
 \begin{itemize}
  \item $\mathfrak{B} = 0$ or
  \item there exist $\lambda \in \mathbb{C}$ and $T \in \left\{I_\kappa, \kappa I_\kappa - 2 J_\kappa\right\}$ such that
  \[
   \langle a,b,c \rangle =
   \begin{cases}
    T^{\otimes 3} \lambda \langle 1,0,0 \rangle, \text{ or}\\
    T^{\otimes 3} \lambda \langle 0,0,1 \rangle \text{ and } \kappa = 3, \text{ or}\\
    T^{\otimes 3} \lambda \langle 1,0,\omega  \rangle \text{ and } \kappa = 3 \text{ where } \omega^3 = 1, \text{ or}\\
    T^{\otimes 3} \lambda \langle \mu^2,1,\mu \rangle \text{ and } \kappa = 4 \text{ where } \mu = -1 \pm 2 i.
   \end{cases}
  \]
 \end{itemize}
\end{corollary}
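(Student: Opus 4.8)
The plan is to treat this corollary as the consolidation point for the lemmas of this section: I would run a case analysis driven by the failure conditions of Lemma~\ref{lem:binary:general} and, in each case, invoke the specialized interpolation lemma that was engineered to cover it. Throughout, the standing hypothesis that $\langle 1 \rangle \in \mathcal{F}$ is exactly what those lemmas require.

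First, the generic case. If $\mathfrak{B} \ne 0$, $\mathfrak{C} \ne 0$, $\mathfrak{C}^2 \ne \mathfrak{B}^2$, and either $\mathfrak{C}^2 \ne \mathfrak{A}^2$ or $\kappa \ne 3$, then Lemma~\ref{lem:binary:general} delivers the conclusion outright. So I may assume $\mathfrak{B} \ne 0$ (otherwise we are already in the first bulleted exception) and that one of the remaining conditions of Lemma~\ref{lem:binary:general} fails. As recorded in~(\ref{eqn:case:binary:fail:2b+(k-2)c=0})--(\ref{eqn:case:binary:fail:k=3_and_2a+3b+4c=0}), these residual failure conditions are \emph{linear} in $a,b,c$, and each has a dedicated lemma: $\mathfrak{C} - \mathfrak{B} = 0$ (equivalently $2b + (\kappa-2)c = 0$) is handled by Lemma~\ref{lem:binary:2b+(k-2)c=0}; $\mathfrak{C} + \mathfrak{B} = 0$ by Lemma~\ref{lem:binary:2a+2(2k-3)b+(k-2)^2c=0}; $\mathfrak{C} = 0$ by Lemma~\ref{lem:binary:a+3(k-1)b+(k-2)(k-1)c=0}; and, for $\kappa = 3$, the conditions $b = 0$ and $2a+3b+4c=0$ by Lemmas~\ref{lem:binary:k=3_and_b=0} and~\ref{lem:binary:k=3_and_2a+3b+4c=0} respectively. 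Since several of these linear equations can hold simultaneously (for instance $\mathfrak{C} = \mathfrak{B} = 0 \Rightarrow \mathfrak{B} = 0$), I would fix an order of precedence among the lemmas and take the first applicable one; the overlaps are all absorbed either by $\mathfrak{B} = 0$ or by the sporadic cases handled below.

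Second, I would fold in the exceptional parameter values excluded by each specialized lemma. Each of Lemmas~\ref{lem:binary:2b+(k-2)c=0}--\ref{lem:binary:k=3_and_2a+3b+4c=0} comes with a short finite list of excluded $(a,b,c,\kappa)$, and the proofs of those lemmas already point to the precise sub-case of Lemma~\ref{lem:appendix:binary} that covers the sporadic ones. I would check, value by value, that every such exclusion either forces $\mathfrak{B} = 0$; or, after scaling and an orthogonal holographic transformation by some $T \in \{I_\kappa,\ \kappa I_\kappa - 2 J_\kappa\}$, becomes one of the four canonical forms in the second bullet --- $\lambda\langle 1,0,0\rangle$, $\lambda\langle 0,0,1\rangle$ with $\kappa = 3$, $\lambda\langle 1,0,\omega\rangle$ with $\kappa = 3$ and $\omega^3 = 1$, or $\lambda\langle\mu^2,1,\mu\rangle$ with $\kappa = 4$ and $\mu = -1\pm 2i$; or else is one of the finitely many sporadic signatures for which Lemma~\ref{lem:appendix:binary} still yields the reduction. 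Collecting these, the only $(a,b,c,\kappa)$ for which the reduction is not established are precisely the two bulleted exceptions, as claimed.

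The main obstacle is organizational rather than conceptual: one must verify that the hypotheses of Lemma~\ref{lem:binary:general}, the five linear-condition lemmas, and Lemma~\ref{lem:appendix:binary} genuinely cover the whole parameter space outside the two exceptions, with no value slipping through a crack between two exclusion lists. The most error-prone sub-step is the normalization in the previous paragraph --- taking an excluded value such as a vanishing $\omega_0$ or a vanishing discriminant, solving the resulting quadratic for one of $a,b,c$, and matching the scaled, orthogonally transformed signature against one of the canonical forms $\langle 1,0,0\rangle$, $\langle 0,0,1\rangle$, $\langle 1,0,\omega\rangle$, or $\langle\mu^2,1,\mu\rangle$ --- where a single sign or coefficient error would leave the exception list incorrect.
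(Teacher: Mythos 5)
Your proposal is correct and follows essentially the same case analysis as the paper: apply Lemma~\ref{lem:binary:general} in the generic case, dispatch each of the five linear failure conditions~(\ref{eqn:case:binary:fail:2b+(k-2)c=0})--(\ref{eqn:case:binary:fail:k=3_and_2a+3b+4c=0}) to its dedicated lemma, and verify that the excluded parameter values in each lemma's hypotheses are precisely accounted for by the two bulleted exceptions (with the sporadic sub-cases inside those lemmas' proofs already delegated to Lemma~\ref{lem:appendix:binary}). The paper's proof is more terse but identical in structure.
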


\begin{proof}
 If failure condition~(\ref{eqn:case:binary:fail:2b+(k-2)c=0}),
 (\ref{eqn:case:binary:fail:2a+2(2k-3)b+(k-2)^2c=0}),
 (\ref{eqn:case:binary:fail:a+3(k-1)b+(k-2)(k-1)c=0}),
 (\ref{eqn:case:binary:fail:k=3_and_b=0}),
 or~(\ref{eqn:case:binary:fail:k=3_and_2a+3b+4c=0}) holds,
 then we are done by
 Lemma~\ref{lem:binary:2b+(k-2)c=0},
 Lemma~\ref{lem:binary:2a+2(2k-3)b+(k-2)^2c=0},
 Lemma~\ref{lem:binary:a+3(k-1)b+(k-2)(k-1)c=0},
 Lemma~\ref{lem:binary:k=3_and_b=0}, or
 Lemma~\ref{lem:binary:k=3_and_2a+3b+4c=0} respectively,
 with the various excluded cases listed.
 If none of~(\ref{eqn:case:binary:fail:2b+(k-2)c=0}),
 (\ref{eqn:case:binary:fail:2a+2(2k-3)b+(k-2)^2c=0}),
 (\ref{eqn:case:binary:fail:a+3(k-1)b+(k-2)(k-1)c=0}),
 (\ref{eqn:case:binary:fail:k=3_and_b=0}),
 and~(\ref{eqn:case:binary:fail:k=3_and_2a+3b+4c=0}) hold,
 then we are done by Lemma~\ref{lem:binary:general}.
\end{proof}

\section{The Main Dichotomy} \label{sec:dichotomy}

Now we can prove our main dichotomy theorem.

\begin{theorem} \label{thm:dichotomy}
 Suppose $\kappa \ge 3$ is the domain size and $a,b,c \in \mathbb{C}$.
 Let $\langle a,b,c \rangle$ be a succinct ternary signature of type $\tau_3$.
 Then $\PlHolant(\langle a,b,c \rangle)$ is $\SHARPP$-hard
 unless at least one of the following holds:
 \begin{enumerate}
  \item $a = b = c$; \label{case:thm:dichotomy:degenerate}
  \item $a = c$ and $\kappa = 3$; \label{case:thm:dichotomy:k=3_aba}
 \end{enumerate}
 there exists $\lambda \in \mathbb{C}$ and $T \in \left\{I_\kappa, \kappa I_\kappa - 2 J_\kappa\right\}$ such that
 \begin{enumerate}
  \setcounter{enumi}{2} 
  \item $\langle a,b,c \rangle = T^{\otimes 3} \lambda \langle 1,0,0 \rangle$; \label{case:thm:dichotomy:100}
  \item $\langle a,b,c \rangle = T^{\otimes 3} \lambda \langle 1,0,\omega  \rangle$ and $\kappa = 3$ where $\omega^3 = 1$; \label{case:thm:dichotomy:k=3_10w}
  \item $\langle a,b,c \rangle = T^{\otimes 3} \lambda \langle \mu^2,1,\mu \rangle$ and $\kappa = 4$ where $\mu = -1 \pm 2 i$; \label{case:thm:dichotomy:k=4_u^21u}
 \end{enumerate}
 in which case, the computation can be done in polynomial time.
\end{theorem}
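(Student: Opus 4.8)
\textbf{Proof plan for Theorem~\ref{thm:dichotomy}.}
The plan is to assemble the tractability and hardness results developed throughout the paper into a single case analysis on the parameters $a,b,c$ (and the domain size $\kappa$). First I would dispose of the tractable cases. Case~\ref{case:thm:dichotomy:degenerate} ($a=b=c$) is degenerate: $\langle a,b,c\rangle = a \langle 1\rangle^{\otimes 3}$, so the Holant factors over vertices and is trivially computable. Case~\ref{case:thm:dichotomy:k=3_aba} is exactly case~\ref{case:cor:tractable:holant-star:aba} of Corollary~\ref{cor:tractable:holant-star}. Cases~\ref{case:thm:dichotomy:100} and~\ref{case:thm:dichotomy:k=3_10w} follow from Corollary~\ref{cor:tractable:100} and Corollary~\ref{cor:tractable:k=3_a^3=c^3} respectively (using that $\kappa I_\kappa - 2 J_\kappa$ is orthogonal up to scaling and that $\langle 1,0,\omega\rangle$ with $\omega^3=1$ has $a^3=c^3$, invoking Theorem~\ref{thm:ortho_holo_trans}). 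Case~\ref{case:thm:dichotomy:k=4_u^21u} is Corollary~\ref{cor:tractable:u^21u_a+5b+2c=5b^2+2bc+c^2=0} via Lemma~\ref{lem:tractable:u^21u}.

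For the hardness direction, I would follow the flow chart of Figure~\ref{fig:outline}. Assume $\langle a,b,c\rangle$ is not in any of the five tractable families. The first branch point is whether we can construct the succinct unary signature $\langle 1\rangle$ of type $\tau_1$. By Lemma~\ref{lem:unary:construct_<1>}, this construction succeeds unless both $a+(\kappa-1)b=0$ and $[2b+(\kappa-2)c][b^2-4bc-(\kappa-3)c^2]=0$ hold. In the failure case, either $b^2-4bc-(\kappa-3)c^2=0$, which puts us in Lemma~\ref{lem:unary:dichotomy:AD-like} (the ``$\AD$-like'' signatures, $\SHARPP$-hard unless $c=0$, which forces $a=b=c=0$ and is covered by case~\ref{case:thm:dichotomy:degenerate}), or $2b+(\kappa-2)c=0$, which puts us in Lemma~\ref{lem:unary:<(k-1)(k-2),-(k-2),2>} / Corollary~\ref{cor:unary:dichotomy:a+(k-3)b-(k-2)c=0} (the signature $\langle (\kappa-1)(\kappa-2),-(\kappa-2),2\rangle$, $\SHARPP$-hard unless $\kappa=3$, which falls under case~\ref{case:thm:dichotomy:k=3_aba}). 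So henceforth we may assume $\langle 1\rangle$ is available.

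Next, with $\langle 1\rangle$ in hand, the second branch point is whether $\mathfrak{B}=0$ (cf.~(\ref{eqn:ternary:frakB})). If $\mathfrak{B}=0$, then Corollary~\ref{cor:unary:dichotomy:a+(k-3)b-(k-2)c=0} already gives the dichotomy: $\SHARPP$-hard unless $b=c$ (forcing $a=b=c$, case~\ref{case:thm:dichotomy:degenerate}) or $\kappa=3$ (case~\ref{case:thm:dichotomy:k=3_aba}). If $\mathfrak{B}\ne 0$, then by Corollary~\ref{cor:binary:interpolate} we can interpolate every succinct binary signature $\langle x,y\rangle$ of type $\tau_2$, \emph{unless} $\langle a,b,c\rangle$ lies in one of the four exceptional families listed there---but those four families are precisely the tractable cases~\ref{case:thm:dichotomy:100}--\ref{case:thm:dichotomy:k=4_u^21u} (the $\kappa=3$, $\langle 0,0,1\rangle$ subcase being covered because it is $\AD_{3,3}$, handled by Theorem~\ref{thm:edge_coloring:k=r}, hence $\SHARPP$-hard, and it is not on our tractable list so there is nothing to prove). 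Thus we may now assume all binary signatures of type $\tau_2$ and $\langle 1\rangle$ are available, with $\mathfrak{B}\ne 0$. Then Lemma~\ref{lem:ternary} directly concludes that $\PlHolant(\mathcal{F})$ is $\SHARPP$-hard, and since $\mathcal{F}$ was built entirely from $\langle a,b,c\rangle$ by realization, $\PlHolant(\langle a,b,c\rangle)$ is $\SHARPP$-hard.

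I do not expect a single deep obstacle here, since all the substantive work---the Bobby Fischer gadget, the Puiseux/Siegel analysis, the EST technique, the interpolation lemma---has been done in the earlier sections; the theorem is their synthesis. The main care required is \emph{bookkeeping}: verifying that every failure condition of every ``construct/interpolate'' lemma either feeds into a subsequent hardness lemma or coincides with one of the five tractable families, with no parameter setting slipping through the cracks. In particular I would double-check the boundary overlaps---e.g.\ the cases where $c=0$ collapses an $\AD$-like or $\langle(\kappa-1)(\kappa-2),\ldots\rangle$ signature into the all-zero (degenerate) signature, and the $\kappa=3$ specializations where several distinct-looking families all reduce to the $\langle a,b,a\rangle$ tractable form of case~\ref{case:cor:tractable:holant-star:aba}---and confirm that the $\SHARPP$-hardness of $\langle 0,0,1\rangle = \AD_{3,3}$ on domain size $3$ (Theorem~\ref{thm:edge_coloring:k=r}) is consistent with it being absent from the tractable list. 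Finally, I would note that the decision procedure promised in Theorem~\ref{thm:dichotomy:simple} is immediate from this analysis, since membership in each of the five tractable families is testable by checking a finite list of polynomial identities in $a,b,c$ (and the value of $\kappa$).
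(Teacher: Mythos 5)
Your proposal is correct and follows essentially the same route as the paper: identify the five tractable families via the tractability lemmas, then run the hardness reduction chain through Lemma~\ref{lem:unary:construct_<1>}, Lemma~\ref{lem:unary:dichotomy:AD-like}/Corollary~\ref{cor:unary:dichotomy:a+(k-3)b-(k-2)c=0}, Corollary~\ref{cor:binary:interpolate}, and Lemma~\ref{lem:ternary}. The only difference is cosmetic: you branch first on whether $\langle 1\rangle$ is constructible and then on $\mathfrak{B}$, while the paper disposes of $\mathfrak{B}=0$ and the AD-like case up front before constructing $\langle 1\rangle$; both orderings cover all parameter settings, and your attention to the $c=0$ degenerate collapses and the $\AD_{3,3}$ exception in Corollary~\ref{cor:binary:interpolate} is exactly the bookkeeping the paper also relies on.
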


\begin{proof}
 The signature in case~\ref{case:thm:dichotomy:degenerate} is degenerate,
 which is trivially tractable.
 Case~\ref{case:thm:dichotomy:k=3_aba} is tractable by case~\ref{case:cor:tractable:holant-star:aba} of Corollary~\ref{cor:tractable:holant-star}.
 Case~\ref{case:thm:dichotomy:100} is tractable by Corollary~\ref{cor:tractable:100}.
 Case~\ref{case:thm:dichotomy:k=3_10w} is tractable by Corollary~\ref{cor:tractable:k=3_a^3=c^3}.
 Case~\ref{case:thm:dichotomy:k=4_u^21u} is tractable by Lemma~\ref{lem:tractable:u^21u}.
 
 Otherwise, $\langle a,b,c \rangle$ is none of these tractable cases.
 If $\mathfrak{B} = 0$,
 then we are done by Corollary~\ref{cor:unary:dichotomy:a+(k-3)b-(k-2)c=0},
 so assume that $\mathfrak{B} \ne 0$.
 If $a + (\kappa - 1) b = 0$ and $b^2 - 4 b c - (\kappa - 3) c^2 = 0$,
 then we are done by Lemma~\ref{lem:unary:dichotomy:AD-like},
 so assume that $a + (\kappa - 1) b \ne 0$ or $b^2 - 4 b c - (\kappa - 3) c^2 \ne 0$.
 
 If $a + (\kappa - 1) b \ne 0$,
 then we have the succinct unary signature $\langle 1 \rangle$ of type $\tau_1$ by Lemma~\ref{lem:unary:construct_<1>}.
 Otherwise, $a + (\kappa - 1) b = 0$ and $b^2 - 4 b c - (\kappa - 3) c^2 \ne 0$.
 Since $\mathfrak{B} \ne 0$,
 we have $2 b + (\kappa - 2) c \ne 0$.
 Then again we have $\langle 1 \rangle$ by Lemma~\ref{lem:unary:construct_<1>}.
 Thus, in either case, we have $\langle 1 \rangle$.
 
 By Corollary~\ref{cor:binary:interpolate},
 we have all binary succinct signatures $\langle x,y \rangle$ for any $x,y \in \mathbb{C}$.
 Then we are done by Lemma~\ref{lem:ternary}.
\end{proof}

\paragraph{Acknowledgements}

We thank Joanna Ellis-Monaghan for bringing~\cite{Ell04} to our attention.
We are thankful to Mingji Xia who discussed with us an early version of this work.
We are very grateful to Bjorn Poonen and especially Aaron Levin for sharing their expertise on Runge's method,
and in particular for the auxiliary function $g_2(x, y)$ in the proof of Lemma~\ref{lem:ternary:lattice:no_linear}.
We benefited from discussions with William Whistler on a draft of this work,
whom we thank.

\bibliographystyle{plain}
\bibliography{bib}

\appendix

\section{Computing Gadget Signatures} \label{sec:compute}

In this paper,
some of the more difficult claims to verify are those when we say that a particular $\mathcal{F}$-gate (or gadget) has a particular signature.
This is an essential difficultly that cannot be avoided.
We are proving that $\PlHolant(\mathcal{F})$ is $\SHARPP$-hard for various $\mathcal{F}$
(and computing the signature of an $\mathcal{F}$-gate is a generalization of this problem).
Thus, one should not expect to be able to compute these signatures significantly faster in general than what the naive algorithm can do.

This has always been an issue for any dichotomy theorem about counting problems,
but with larger domain sizes,
we seem to be reaching the limit of what can be computed by hand for the signatures of gadget constructions that are presented in our proofs.
To counter this,
the standard techniques are to utilize the smallest gadgets (that suffice)
or an infinite family of related gadgets with a (small) description of finite size,
which we certainly employ.
Additionally, we point out some tricks, where they exist, to save as much work as possible.

Beyond all this, we also face another problem.
We would like to express the signature of a gadget as a function of the domain size.
To compute the signature of a gadget for each domain size is no longer a finite computation.
However, each entry of the gadget's signature is a polynomial in the domain size of degree at most the number of internal edges in the gadget.
To obtain these polynomials,
one can interpolate them by computing the signature for small domain sizes.
It is easy to write a program to do this.

When computing by hand,
there is another possibility that works quite well.
One partitions the internal edge assignments into a limited number of parts such that the assignments in each part contribute the same quantity to the Holant sum.
This is best explained with some examples.

\begin{figure}[ht]
 \centering
 \subcaptionbox
 {\label{subfig:gadget:compute:unary:self-loop}}{
  \begin{tikzpicture}[scale=\scale,transform shape,node distance=\nodeDist,semithick]
   \node[external] (0)              {};
   \node[internal] (1) [above of=0] {};
   \path (0) edge (1)
         (1) edge[out=150, in=30, looseness=20] node[pos=0.3] (e1) {} node[pos=0.7] (e2) {} (1);
   \begin{pgfonlayer}{background}
    \node[draw=\borderColor,thick,rounded corners,inner xsep=6pt,inner ysep=12pt,fit = (1) (e1) (e2)] {};
   \end{pgfonlayer}
  \end{tikzpicture}}%
 \subcaptionbox
 {\label{subfig:gadget:compute:unary_parallel_edges}}{
  \begin{tikzpicture}[scale=\scale,transform shape,node distance=\nodeDist,semithick]
   \node[external] (0)                    {};
   \node[internal] (1) [above       of=0] {};
   \node[internal] (2) [above left  of=1] {};
   \node[internal] (3) [above right of=1] {};
   \path (0) edge (1)
         (1) edge[bend left ] (2)
             edge[bend right] (3)
         (2) edge[bend left ] coordinate (c1) (3)
             edge[bend right] (3);
   \begin{pgfonlayer}{background}
    \node[draw=\borderColor,thick,rounded corners,inner xsep=6pt,inner ysep=12pt,fit = (1) (2) (3) (c1)] {};
   \end{pgfonlayer}
  \end{tikzpicture}}%
 \subcaptionbox
 {\label{subfig:gadget:compute:unary:self-loop_with_binary}}{
  \begin{tikzpicture}[scale=\scale,transform shape,node distance=\nodeDist,semithick]
   \node[external] (0)              {};
   \node[internal] (1) [above of=0] {};
   \path (0) edge (1)
         (1) edge[out=150, in=30, looseness=20] node[pos=0.3] (e1) {} node[pos=0.5,square] (e2) {} node[pos=0.7] (e3) {} (1);
   \begin{pgfonlayer}{background}
    \node[draw=\borderColor,thick,rounded corners,inner xsep=6pt,inner ysep=12pt,fit = (1) (e1) (e2) (e3)] {};
   \end{pgfonlayer}
  \end{tikzpicture}}%
 \subcaptionbox
 {\label{subfig:gadget:compute:binary_edge}}{
  \begin{tikzpicture}[scale=\scale,transform shape,node distance=\nodeDist,semithick]
   \node[external] (0)              {};
   \node[external] (1) [right of=0] {};
   \node[external] (2) [right of=1] {};
   \node[external] (3) [right of=2] {};
   \node[external] (4) [below of=3] {};
   \path (0) edge             (3);
   \begin{pgfonlayer}{background}
    \node[draw=\borderColor,thick,rounded corners,inner xsep=12pt,inner ysep=8pt,fit = (1) (2)] {};
   \end{pgfonlayer}
  \end{tikzpicture}}%
 \subcaptionbox
 {\label{subfig:gadget:compute:binary:parallel_edges}}{
  \begin{tikzpicture}[scale=\scale,transform shape,node distance=\nodeDist,semithick]
   \node[external] (0)              {};
   \node[internal] (1) [right of=0] {};
   \node[internal] (2) [right of=1] {};
   \node[external] (3) [right of=2] {};
   \node[external] (4) [below of=3] {};
   \path (0) edge             (1)
         (1) edge[bend left]  (2)
             edge[bend right] (2)
         (2) edge             (3);
   \begin{pgfonlayer}{background}
    \node[draw=\borderColor,thick,rounded corners,inner xsep=12pt,inner ysep=8pt,fit = (1) (2)] {};
   \end{pgfonlayer}
  \end{tikzpicture}}
 \caption{Gadgets~(\subref{subfig:gadget:compute:unary:self-loop}) and~(\subref{subfig:gadget:compute:unary_parallel_edges}) are used to construct $\langle 1 \rangle$.
  They are special cases of~(\subref{subfig:gadget:compute:unary:self-loop_with_binary})
  and are obtained by replacing the square in~(\subref{subfig:gadget:compute:unary:self-loop_with_binary})
  with either~(\subref{subfig:gadget:compute:binary_edge}) or~(\subref{subfig:gadget:compute:binary:parallel_edges}) respectively.
  All (circle) vertices are assigned $\langle a,b,c \rangle$.}
 \label{fig:gadget:compute:unary_construction}
\end{figure}

\begin{lemma} \label{lem:compute:unary:construct_<1>}
 Suppose $\kappa \ge 3$ is the domain size and $a,b,c,x,y \in \mathbb{C}$.
 Let $\langle a,b,c \rangle$ be a succinct ternary signature of type $\tau_3$ and let $\langle x,y \rangle$ be a succinct signature of type $\tau_2$.
 If we assign $\langle a,b,c \rangle$ to the circle vertex and $\langle x,y \rangle$ to the square vertex of the gadget in Figure~\ref{subfig:gadget:compute:unary:self-loop_with_binary},
 then the succinct unary signature of type $\tau_1$ of the resulting gadget is $\langle x [a + (\kappa - 1) b] + y (\kappa - 1) [2 b + (\kappa - 2) c] \rangle$.
 
 If the square vertex is replaced by Figure~\ref{subfig:gadget:compute:binary_edge},
 then the resulting signature is $\langle a + (\kappa - 1) b \rangle$.
 If the square vertex is replaced by Figure~\ref{subfig:gadget:compute:binary:parallel_edges}, and $a + (\kappa - 1) b = 0$,
 then the resulting signature is
 \begin{equation} \label{eqn:compute:unary:construct_<1>}
  \langle -(\kappa - 1) (\kappa - 2) [2 b + (\kappa - 2) c] [b^2 - 4 b c - (\kappa - 3) c^2] \rangle.
 \end{equation}
\end{lemma}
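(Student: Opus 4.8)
The plan is to compute each signature directly from the definition of an $\mathcal{F}$-gate, organizing the sum over internal-edge labels by the coincidence pattern of those labels, exactly in the spirit described at the start of this appendix.

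First I would handle the gadget in Figure~\ref{subfig:gadget:compute:unary:self-loop_with_binary} in full generality. Writing $f = \langle a,b,c \rangle$ for the circle vertex and $g = \langle x,y \rangle$ for the square vertex, the dangling edge carries a label $i \in [\kappa]$ and the two loop edges carry labels $j,k$, so the gadget signature is $\Gamma(i) = \sum_{j,k \in [\kappa]} f(i,j,k)\, g(j,k)$. Since $f$ and $g$ are domain invariant, $\Gamma$ is a constant, and I would fix $i$ and partition the $\kappa^2$ pairs $(j,k)$ into four classes: $j=k=i$ (one pair, contributing $ax$); $j=k \ne i$ ($\kappa-1$ pairs, contributing $bx$ each, since the ternary inputs have pattern ``two equal''); exactly one of $j,k$ equal to $i$ ($2(\kappa-1)$ pairs, contributing $by$ each); and $i,j,k$ pairwise distinct ($(\kappa-1)(\kappa-2)$ pairs, contributing $cy$ each). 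Summing gives $\Gamma = x[a+(\kappa-1)b] + y(\kappa-1)[2b+(\kappa-2)c]$, the first claimed formula; a check that the multiplicities $1,\,\kappa-1,\,2(\kappa-1),\,(\kappa-1)(\kappa-2)$ add up to $\kappa^2$ guards against a missed case.

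The specialization to Figure~\ref{subfig:gadget:compute:binary_edge} is immediate: a single edge is the binary equality $=_2$, i.e.\ $\langle x,y \rangle = \langle 1,0 \rangle$, so the formula collapses to $\langle a + (\kappa-1)b \rangle$. For Figure~\ref{subfig:gadget:compute:binary:parallel_edges}, the square is replaced by the two-vertex parallel-edge gadget, whose own binary signature $\langle x',y' \rangle$ of type $\tau_2$ has $x' = g'(u,u)$ and $y' = g'(u,v)$ for $u \ne v$, where $g'(u,v) = \sum_{p,q \in [\kappa]} f(u,p,q)\, f(v,p,q)$. The key observation is that under the hypothesis $a+(\kappa-1)b = 0$ the coefficient of $x'$ in the unary formula vanishes, so I only need $y'$. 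I would compute $y'$ by fixing $u \ne v$ and partitioning the $\kappa^2$ pairs $(p,q)$ by their coincidence pattern with $\{u,v\}$ and with each other (the classes $p=q=u$; $p=q=v$; $p=q \notin \{u,v\}$; $\{p,q\} = \{u,v\}$; exactly one of $p,q$ equal to $u$ and the other outside $\{u,v\}$; the same with $v$ in place of $u$; and $p,q$ outside $\{u,v\}$ and distinct), reading off the value of $f$ on each of the two factors in each class. This yields $y' = 2ab + \kappa b^2 + 4(\kappa-2)bc + (\kappa-2)(\kappa-3)c^2$, and substituting $a = -(\kappa-1)b$ makes it collapse to $y' = -(\kappa-2)[\,b^2 - 4bc - (\kappa-3)c^2\,]$. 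Multiplying by the coefficient $(\kappa-1)[2b+(\kappa-2)c]$ of $y$ in the unary formula produces exactly~(\ref{eqn:compute:unary:construct_<1>}).

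The only real obstacle is the bookkeeping in this last step: getting all seven coincidence classes for $(p,q)$ and the correct value of $f$ on each ternary factor within each class. Since that is precisely the kind of by-hand computation this appendix is meant to streamline, I would double-check it by verifying the class multiplicities sum to $\kappa^2$ and, as a fallback, by evaluating $g'$ symbolically for a few small domain sizes and interpolating in $\kappa$, as the text suggests.
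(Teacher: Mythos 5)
Your proposal is correct and matches the paper's own proof in its essential approach: partition the internal-edge assignments by coincidence pattern, compute the four-class sum $x[a+(\kappa-1)b] + y(\kappa-1)[2b+(\kappa-2)c]$ for the general binary-signature square, then specialize to $\langle 1,0\rangle$ and to the parallel-edges gadget. The only (minor) variation is in the last step: the paper invokes Lemma~\ref{lem:compute:binary:parallel_edges} to substitute both $x'$ and $y'$ and then simplifies under $a + (\kappa-1)b = 0$, whereas you observe up front that the hypothesis kills the $x'$ term, so only $y'$ is needed — a small bookkeeping economy, not a different method.
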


\begin{proof}
 Since $\langle a,b,c \rangle$ and $\langle x,y \rangle$ are domain invariant,
 the signatures of these gadgets are also domain invariant.
 Any domain invariant unary signature has a succinct signature of type $\tau_1$.

 Let $g \in [\kappa]$ be a possible edge assignment,
 which we call a color.
 Suppose the external edge is assigned $g$ and consider all internal edge assignments that assign the same colors to both edges.
 For such assignments,
 $\langle x,y \rangle$ contributes a factor of $x$.
 Now if this color assigned to both internal edges is also $g$,
 then $\langle a,b,c \rangle$ contributes a factor of $a$.
 Thus, the Holant sum includes one factor of $a x$.
 If the two internal edges are assigned any color different from $g$,
 then $\langle a,b,c \rangle$ contributes a factor of $b$.
 Since there are $\kappa - 1$ such colors,
 this adds $(\kappa - 1) b x$ to the Holant sum.
 
 Now consider all internal assignments that assign different colors to the edges.
 For such assignments,
 $\langle x,y \rangle$ contributes a factor of $y$.
 First, suppose that one of the internal edges is assigned $g$.
 There are two ways this could happen and $\langle a,b,c \rangle$ contributes a factor of $b$.
 Since there are $\kappa - 1$ choices for the remaining edge assignment,
 this adds $2 (\kappa - 1) b y$ to the Holant sum.
 Lastly, suppose that the two internal edges are not assigned $g$.
 Then $\langle a,b,c \rangle$ contributes a factor of $c$.
 Since there are $(\kappa - 1) (\kappa - 2)$ such assignments,
 this adds $(\kappa - 1) (\kappa - 2) c y$ to the Holant sum.
 Thus, the resulting signature is $\langle x [a + (\kappa - 1) b] + y (\kappa - 1) [2 b + (\kappa - 2) c] \rangle$ as claimed.
 
 Replacing the square by Figure~\ref{subfig:gadget:compute:binary_edge} is equivalent to setting $x = 1$ and $y = 0$,
 which gives $\langle a + (\kappa - 1) b \rangle$.
 Replacing the square by Figure~\ref{subfig:gadget:compute:binary:parallel_edges} is equivalent to setting $x$ and $y$ to the values given in Lemma~\ref{lem:compute:binary:parallel_edges}.
 The resulting signature is indeed~(\ref{eqn:compute:unary:construct_<1>}).
\end{proof}

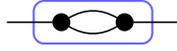
\begin{figure}[ht]
 \centering
 \begin{tikzpicture}[scale=\scale,transform shape,node distance=\nodeDist,semithick]
   \node[external] (0)              {};
   \node[internal] (1) [right of=0] {};
   \node[internal] (2) [right of=1] {};
   \node[external] (3) [right of=2] {};
   \path (0) edge             (1)
         (1) edge[bend left]  (2)
             edge[bend right] (2)
         (2) edge             (3);
   \begin{pgfonlayer}{background}
    \node[draw=\borderColor,thick,rounded corners,inner xsep=12pt,inner ysep=8pt,fit = (1) (2)] {};
  \end{pgfonlayer}
 \end{tikzpicture}
 \caption{A simple binary gadget.}
 \label{fig:gadget:compute:binary:parallel_edges}
\end{figure}

\begin{lemma} \label{lem:compute:binary:parallel_edges}
 Suppose $\kappa \ge 3$ is the domain size and $a,b,c \in \mathbb{C}$.
 Let $\langle a,b,c \rangle$ be a succinct ternary signature of type $\tau_3$.
 If we assign $\langle a,b,c \rangle$ to both vertices of the gadget in Figure~\ref{fig:gadget:compute:binary:parallel_edges},
 then the succinct binary signature of type $\tau_2$ of the resulting gadget is $\langle x, y \rangle$,
 where
 \begin{align*}
  x &=   a^2
       + 3 (\kappa - 1) b^2
       + (\kappa - 1) (\kappa - 2) c^2
       \qquad \text{and}\\
  y &=   2 a b
       + \kappa b^2
       + 4 (\kappa - 2) b c
       + (\kappa - 2) (\kappa - 3) c^2.
 \end{align*}
\end{lemma}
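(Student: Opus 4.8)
The plan is to compute the signature of the two-vertex parallel-edge gadget in Figure~\ref{fig:gadget:compute:binary:parallel_edges} directly, by summing over the colors assigned to the two internal (parallel) edges. Since $\langle a,b,c \rangle$ is domain invariant, the resulting binary signature is domain invariant and symmetric, hence of type $\tau_2$, so it suffices to determine its value $x$ on an input $(g,g)$ (both dangling edges the same color) and its value $y$ on an input $(g,h)$ with $g \ne h$. Fix such an external assignment and let the two internal edges carry colors $p, q \in [\kappa]$; the two vertices then evaluate $\langle a,b,c \rangle$ on the multisets $\{g, p, q\}$ and $\{g, p, q\}$ in the first case, or $\{g, p, q\}$ and $\{h, p, q\}$ in the second, and we sum the product over all $\kappa^2$ choices of $(p,q)$.

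For $x$: both vertices see color $g$ on their dangling edge. I would organize the sum over $(p,q)$ into the cases (i) $p = q = g$, contributing $a^2$ once; (ii) $p = q \ne g$, contributing $b^2$ for each of the $\kappa-1$ such colors; (iii) exactly one of $p, q$ equals $g$, contributing $b \cdot b$ for each of the $2(\kappa-1)$ ordered choices; (iv) $p, q$ both differ from $g$ and from each other, contributing $c \cdot c$ for each of the $(\kappa-1)(\kappa-2)$ such ordered pairs. Collecting gives $x = a^2 + (\kappa-1)b^2 + 2(\kappa-1)b^2 + (\kappa-1)(\kappa-2)c^2 = a^2 + 3(\kappa-1)b^2 + (\kappa-1)(\kappa-2)c^2$, matching the claim. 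For $y$: the two dangling edges carry distinct colors $g \ne h$, so I would split over how $(p,q)$ relates to $\{g,h\}$ — e.g.\ $p=q \in \{g,h\}$ contributes one $a$ and one $b$; $p = q$ equal to one of the $\kappa-2$ other colors contributes $b \cdot b$; configurations where one internal edge matches $g$ and the other matches $h$ contribute $b\cdot b$; where one matches one of $g,h$ and the other is a new color contributes $b \cdot c$; where both internal colors avoid $\{g,h\}$ and are distinct contributes $c \cdot c$ — and then carefully count the ordered multiplicities of each case. Assembling the terms should yield $y = 2ab + \kappa b^2 + 4(\kappa-2)bc + (\kappa-2)(\kappa-3)c^2$.

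The only real obstacle is bookkeeping: getting every multiplicity right in the $y$ computation, since there the two vertices see asymmetric dangling inputs and the internal colors interact with a two-element set rather than a singleton. To guard against errors I would cross-check the final polynomials at a couple of small domain sizes (say $\kappa = 3$ and $\kappa = 4$), where the sum can be enumerated explicitly, and also sanity-check against the degenerate case $a = b = c$, for which the gadget must reproduce $\langle a,b,c \rangle$ composed with itself through two edges and the formulas should collapse to $x = y = \kappa a^2$ (indeed $a^2 + 3(\kappa-1)a^2 + (\kappa-1)(\kappa-2)a^2 = \kappa^2 a^2$ — here one factor of $\kappa$ comes from the free internal-color sum, consistent with the normalization conventions in the paper). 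An alternative, equivalent route would be to write $\langle a,b,c\rangle$ in matrix form as $\alpha J_\kappa^{\otimes\cdots} + \cdots$ or via the decomposition into $I$ and $J$ blocks used elsewhere in the paper and multiply the corresponding signature matrices, but the direct counting argument above is self-contained and short enough that I would present it that way.
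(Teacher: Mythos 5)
Your proposal is correct and takes essentially the same approach as the paper: directly enumerate the $\kappa^2$ internal-edge colorings, partitioned by how the internal colors relate to the external color(s), and the case multiplicities you list for both $x$ and $y$ match the paper's exactly. One small slip in your sanity check: with $a=b=c$ the gadget value should be $\kappa^2 a^2$ (two free internal edges), not $\kappa a^2$ as you initially state, though the computation you then write out does give $\kappa^2 a^2$.
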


\begin{proof}
 Since $\langle a,b,c \rangle$ is domain invariant,
 the signature of this gadget is also domain invariant.
 Any domain invariant binary signature has a succinct signature of type $\tau_2$.
 
 Let $g,r \in [\kappa]$ be distinct edge assignments.
 We have two entries to compute.
 To compute $x$,
 suppose that both external edges are assigned $g$.
 We begin with the case where both internal edges have the same assignment.
 If this assignment is $g$,
 then $a^2$ is contributed to the sum.
 If this assignment is not $g$,
 then $b^2$ is contributed to the sum for a total contribution of $(\kappa - 1) b^2$.
 Now consider the case that the two internal edges have a different assignment.
 If one of these assignments is $g$,
 then $b^2$ is contributed to the sum for a total contribution of $2 (\kappa - 1) b^2$.
 If neither assignment is $g$,
 then $c^2$ is contributed to the sum for a total contribution of $(\kappa - 1) (\kappa - 2) c^2$.
 These total contributions sum to the value for $x$ given in Lemma~\ref{lem:compute:binary:parallel_edges}.
 
 To compute $y$,
 suppose one external edge is assigned $g$ and the other is assigned $r$.
 We begin with the case where both internal edges have the same assignment.
 If this assignment is $g$ or $r$,
 then $a b$ is contributed to the sum for a total contribution of $2 a b$.
 If this assignment is not $g$ or $r$,
 then $b^2$ is contributed to the sum for a total contribution of $(\kappa - 2) b^2$.
 Now consider the case that the two internal edges have a different assignment.
 If both are assigned $g$ or $r$,
 then $b^2$ is contributed to the sum for a total contribution of $2 b^2$.
 If exactly one is assigned $g$ or $r$,
 then $b c$ is contributed to the sum for a total contribution of $4 (\kappa - 2) b c$.
 If neither is assigned $g$ or $r$,
 then $c^3$ is contributed to the sum for a total contribution of $(\kappa - 2) (\kappa - 3) c^3$.
 These total contributions sum to the value for $y$ given in Lemma~\ref{lem:compute:binary:parallel_edges}.
\end{proof}

When checking these proofs,
a concern is that some assignments might not have been counted.
One sanity check to address this concern is to set $a = b = c = 1$ and inspect the resulting expression.
If computed correctly,
the result will be $\kappa^m$,
where $m$ is the number of internal edges,
which is the number of internal edge assignments.

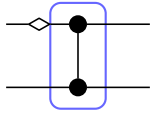
\begin{figure}[ht]
 \centering
 \begin{tikzpicture}[scale=\scale,transform shape,node distance=\nodeDist,semithick]
  \node[external] (0)              {};
  \node[external] (1) [below of=0] {};
  \node[internal] (2) [right of=0] {};
  \node[internal] (3) [right of=1] {};
  \node[external] (4) [right of=2] {};
  \node[external] (5) [right of=3] {};
  \path (0.west) edge[postaction={decorate, decoration={
                                             markings,
                                             mark=at position 0.70  with {\arrow[>=diamond,white] {>}; },
                                             mark=at position 0.70  with {\arrow[>=open diamond]  {>}; } } }] (2)
        (1.west) edge (3)
        (2)      edge (3)
                 edge (4.east)
        (3)      edge (5.east);
  \begin{pgfonlayer}{background}
   \node[draw=\borderColor,thick,rounded corners,inner xsep=12pt,inner ysep=8pt,fit = (2) (3)] {};
  \end{pgfonlayer}
 \end{tikzpicture}
 \caption{A simple quaternary gadget.}
 \label{fig:gadget:compute:quaternary:I}
\end{figure}

\begin{lemma} \label{lem:compute:quaternary:I}
 Suppose $\kappa \ge 3$ is the domain size and $a,b,c \in \mathbb{C}$.
 Let $\langle a,b,c \rangle$ be a succinct ternary signature of type $\tau_3$.
 If we assign $\langle a,b,c \rangle$ to both vertices of the gadget in Figure~\ref{fig:gadget:compute:quaternary:I},
 then the succinct quaternary signature of type $\tau_4$ of the resulting gadget is
 \[
  f =
  \left\langle
   f_{\subMat{1}{1}{1}{1}},
   f_{\subMat{1}{1}{1}{2}},
   f_{\subMat{1}{1}{2}{2}},
   f_{\subMat{1}{1}{2}{3}},
   f_{\subMat{1}{2}{1}{2}},
   f_{\subMat{1}{2}{1}{3}},
   f_{\subMat{1}{2}{2}{1}},
   f_{\subMat{1}{2}{3}{1}},
   f_{\subMat{1}{2}{3}{4}}
  \right\rangle,
 \]
 where
 \begin{align*}
  f_{\subMat{1}{1}{1}{1}} &= a^2 + (\kappa - 1) b^2,\\
  f_{\subMat{1}{1}{1}{2}} &= b [a + b + (\kappa - 2) c],\\
  f_{\subMat{1}{1}{2}{2}} &= 2 b^2 + (\kappa - 2) c^2,\\
  f_{\subMat{1}{1}{2}{3}} &= b^2 + 2 b c + (\kappa - 3) c^2,\\
  f_{\subMat{1}{2}{1}{2}} &= f_{\subMat{1}{1}{2}{2}},\\
  f_{\subMat{1}{2}{1}{3}} &= f_{\subMat{1}{1}{2}{3}},\\
  f_{\subMat{1}{2}{2}{1}} &= b [2 a + (\kappa - 2) b],\\
  f_{\subMat{1}{2}{3}{1}} &= a c + 2 b^2 + (\kappa - 3) b c, \text{ and}\\
  f_{\subMat{1}{2}{3}{4}} &= c [4 b + (\kappa - 4) c].
 \end{align*}
\end{lemma}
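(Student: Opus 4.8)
The plan is to compute the gadget's signature directly by summing over its single internal edge. First I would record why the answer can be written as a succinct signature of type $\tau_4$ at all: since $\langle a,b,c\rangle$ is both symmetric and domain invariant, the $\{\langle a,b,c\rangle\}$-gate of Figure~\ref{fig:gadget:compute:quaternary:I} is domain invariant, and it plainly has both vertical and horizontal symmetry (swapping the two dangling edges at one internal vertex, and swapping the two internal vertices, are both symmetries of the gadget because the two vertices carry the same symmetric signature), so its signature is constant on each part of $\tau_4$. Concretely, label the two internal vertices $u$ and $v$, joined by the internal edge carrying a color $g\in[\kappa]$, where in the counterclockwise order of the dangling edges $u$ is incident to the edges carrying $w$ and $z$ and $v$ is incident to the edges carrying $x$ and $y$. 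Writing $T(\cdot,\cdot,\cdot)$ for the ternary function of $\langle a,b,c\rangle$ (value $a$ when all three arguments agree, $b$ when exactly two agree, $c$ when all three differ), the gadget signature is
\[
 f(w,x,y,z)=\sum_{g\in[\kappa]}T(w,z,g)\,T(x,y,g).
\]

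Next I would evaluate this sum on one representative from each of the nine parts of $\tau_4$. For a fixed assignment to $(w,x,y,z)$, I partition the $\kappa$ choices of $g$ according to which of the (at most four) colors occurring among $w,x,y,z$ the value $g$ coincides with; on each block of this partition both $T(w,z,g)$ and $T(x,y,g)$ are constant, so that block contributes (its size) times the product of those two constants, and summing the blocks yields the claimed closed form. For example, on $(1,1,1,1)$ the block $g=1$ contributes $a^2$ and the block $g\neq 1$ contributes $(\kappa-1)b^2$, giving $f_{\subMat{1}{1}{1}{1}}=a^2+(\kappa-1)b^2$; the representatives with three distinct colors among the dangling edges (such as $\subMat{1}{1}{2}{3}$, $\subMat{1}{2}{1}{3}$, $\subMat{1}{2}{3}{1}$) split $g$ into four blocks, and the all-distinct case $\subMat{1}{2}{3}{4}$ into five. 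The coincidences $f_{\subMat{1}{1}{2}{2}}=f_{\subMat{1}{2}{1}{2}}$ and $f_{\subMat{1}{1}{2}{3}}=f_{\subMat{1}{2}{1}{3}}$ then fall out for free, since passing between the two representatives merely reorders the arguments of the symmetric factors $T(w,z,g)$ and $T(x,y,g)$.

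There is no conceptual obstacle: the only genuine work is the bookkeeping across the nine cases, and the one place to be careful is not to drop or double-count a block of $g$-values, nor to misread the value of $T$ at one of the two vertices. I would guard against this with the standard consistency check: setting $a=b=c=1$ must turn every entry of $f$ into $\kappa^{1}=\kappa$, the number of assignments to the single internal edge, and indeed each of the nine displayed expressions collapses to $\kappa$ under that substitution, which simultaneously certifies that every $g$-assignment has been accounted for in each case. Finally I would note that the reading of the dangling-edge order used above is exactly the counterclockwise order in Figure~\ref{fig:gadget:compute:quaternary:I} (the diamond marking input $w$ on the edge at $u$), so the succinct signature computed is precisely the one asserted in the lemma.
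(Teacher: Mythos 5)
Your proposal is correct and follows the same route the paper takes: reduce to a single sum over the internal edge color, partition those $\kappa$ values by which of the external colors $g$ coincides with, and read off the nine entries; the symmetry and domain-invariance remarks at the start match the paper's preamble. The paper simply writes out all nine case-by-case tallies in full, whereas you display the general formula $\sum_g T(w,z,g)\,T(x,y,g)$, work one representative, and appeal to the $a=b=c=1\Rightarrow\kappa$ check — a fine condensation of the identical computation.
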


\begin{proof}
 Since $\langle a,b,c \rangle$ is domain invariant,
 the signature of this gadget is also domain invariant.
 The vertical and horizontal symmetry of this gadget implies that its signature has a succinct signature of type $\tau_4$.
 
 Let $w,x,y,z \in [\kappa]$ be distinct edge assignments.
 We have nine entries to compute.
 Recall that the edge with the diamond is considered the first input and the rest are ordered counterclockwise.
 \begin{enumerate}
  \item To compute $f_{\subMat{1}{1}{1}{1}}$,
  suppose the external assignment is $(w,w,w,w)$.
  If the internal edge is also assigned $w$,
  then $a^2$ is contributed to the sum.
  If the internal edge is not assigned $w$,
  then $b^2$ is contributed to the sum for a total contribution of $(\kappa - 1) b^2$.
  
  \item To compute $f_{\subMat{1}{1}{1}{2}}$,
  suppose the external assignment is $(w,w,w,x)$.
  If the internal edge is assigned $w$,
  then $a b$ is contributed to the sum.
  If the internal edge is assigned $x$,
  then $b^2$ is contributed to the sum.
  If the internal edge is not assigned $w$ or $x$,
  then $b c$ is contributed to the sum for a total contribution of $(\kappa - 2) b c$.
  
  \item To compute $f_{\subMat{1}{1}{2}{2}}$,
  suppose the external assignment is $(w,w,x,x)$.
  If the internal edge is assigned $w$,
  then $b^2$ is contributed to the sum.
  If the internal edge is assigned $x$,
  then $b^2$ is contributed to the sum.
  If the internal edge is not assigned $w$ or $x$,
  then $c^2$ is contributed to the sum for a total contribution of $(\kappa - 2) c^2$.
  
  \item To compute $f_{\subMat{1}{1}{2}{3}}$,
  suppose the external assignment is $(w,w,x,y)$.
  If the internal edge is assigned $w$,
  then $b^2$ is contributed to the sum.
  If the internal edge is assigned $x$,
  then $b c$ is contributed to the sum.
  If the internal edge is assigned $y$,
  then $b c$ is contributed to the sum.
  If the internal edge is not assigned $w$, $x$ or $y$,
  then $c^2$ is contributed to the sum for a total contribution of $(\kappa - 3) c^2$.
  
  \item To compute $f_{\subMat{1}{1}{2}{3}}$,
  suppose the external assignment is $(w,x,w,x)$.
  This entry is the same as that for $(w,w,x,x)$.
  The reason is that the signature is unchanged if the two external edges of the lower vertex are swapped since $\langle a,b,c \rangle$ is symmetric.
  
  \item To compute $f_{\subMat{1}{2}{1}{3}}$,
  suppose the external assignment is $(w,x,w,y)$.
  This entry is the same as that for $(w,w,x,y)$ for the same reason as the previous entry.
  
  \item To compute $f_{\subMat{1}{2}{2}{1}}$,
  suppose the external assignment is $(w,x,x,w)$.
  If the internal edge is assigned $w$,
  then $a b$ is contributed to the sum.
  If the internal edge is assigned $x$,
  then $a b$ is contributed to the sum.
  If the internal edge is not assigned $w$ or $x$,
  then $b^2$ is contributed to the sum for a total contribution of $(\kappa - 2) b^2$.
  
  \item To compute $f_{\subMat{1}{2}{3}{1}}$,
  suppose the external assignment is $(w,x,y,w)$.
  If the internal edge is assigned $w$,
  then $a c$ is contributed to the sum.
  If the internal edge is assigned $x$,
  then $b^2$ is contributed to the sum.
  If the internal edge is assigned $y$,
  then $b^2$ is contributed to the sum.
  If the internal edge is not assigned $w$, $x$ or $y$,
  then $b c$ is contributed to the sum for a total contribution of $(\kappa - 3) c^2$.
  
  \item To compute $f_{\subMat{1}{2}{3}{4}}$,
  suppose the external assignment is $(w,x,y,z)$.
  If the internal edge is assigned $w$, $x$, $y$, or $z$,
  then $b c$ is contributed to the sum for a total contribution of $4 b c$.
  If the internal edge is not assigned $w$, $x$, $y$ or $z$,
  then $c^2$ is contributed to the sum for a total contribution of $(\kappa - 4) c^2$.
 \end{enumerate}
 These total contributions each sum to their corresponding entry of $f$ given in the statement of Lemma~\ref{lem:compute:quaternary:I}.
\end{proof}

Although possible,
it would be difficult to compute the signature of the gadget in Figure~\ref{subfig:gadget:compute:ternary:triangle}
through partitioning of the internal edge assignments alone.
To simplify matters,
we utilize the calculations from Lemma~\ref{lem:compute:quaternary:I}.
Since composing the gadget in Figure~\ref{subfig:gadget:compute:ternary:triangle_inside}
with the one in Figure~\ref{subfig:gadget:compute:ternary:triangle_outside} gives a symmetric signature,
we refrain from distinguishing the external edges of the gadget in Figure~\ref{subfig:gadget:compute:ternary:triangle_outside}.

\begin{figure}[ht]
 \centering
 \def\capWidth{4.5cm}
 \captionsetup[subfigure]{width=\capWidth}
 \subcaptionbox{\label{subfig:gadget:compute:ternary:triangle_inside}Inner structure}[\capWidth]{
  \begin{tikzpicture}[scale=\scale,transform shape,node distance=\nodeDist,semithick]
  \node[external] (0)              {};
  \node[external] (1) [below of=0] {};
  \node[internal] (2) [right of=0] {};
  \node[internal] (3) [right of=1] {};
  \node[external] (4) [right of=2] {};
  \node[external] (5) [right of=3] {};
  \path (0.west) edge[postaction={decorate, decoration={
                                             markings,
                                             mark=at position 0.70  with {\arrow[>=diamond,white] {>}; },
                                             mark=at position 0.70  with {\arrow[>=open diamond]  {>}; } } }] (2)
        (1.west) edge (3)
        (2)      edge (3)
                 edge (4.east)
        (3)      edge (5.east);
  \begin{pgfonlayer}{background}
   \node[draw=\borderColor,thick,rounded corners,inner xsep=12pt,inner ysep=8pt,fit = (2) (3)] {};
   \end{pgfonlayer}
  \end{tikzpicture}
 }
 \qquad
 \subcaptionbox{\label{subfig:gadget:compute:ternary:triangle_outside}Outer structure}[\capWidth]{
  \begin{tikzpicture}[scale=\scale,transform shape,node distance=\nodeDist,semithick]
   \node[external] (0)                     {};
   \node[external] (1)  [      right of=0] {};
   \node[square]   (2)  [below right of=1] {};
   \node[external] (3)  [below left  of=2] {};
   \node[external] (4)  [      left  of=3] {};
   \node[internal] (5)  [      right of=2] {};
   \node[external] (6)  [      right of=5] {};
   \path (0) edge[out=0, in= 135, postaction={decorate, decoration={
                                                         markings,
                                                         mark=at position 0.99 with {\arrow[>=diamond,white] {>}; },
                                                         mark=at position 0.99 with {\arrow[>=open diamond]  {>}; } } }] (2)
         (4) edge[out=0, in=-135] (2)
         (2) edge[bend left]  (5)
             edge[bend right] (5)
         (5) edge             (6);
   \begin{pgfonlayer}{background}
    \node[draw=\borderColor,thick,rounded corners,inner xsep=12pt,inner ysep=8pt,fit = (1) (3) (5)] {};
   \end{pgfonlayer}
  \end{tikzpicture}
 }
 \qquad
 \subcaptionbox{\label{subfig:gadget:compute:ternary:triangle}Entire binary gadget}[\capWidth]{
  \begin{tikzpicture}[scale=\scale,transform shape,node distance=\nodeDist,semithick]
  \node [external] (0)              {};
  \node [internal] (1) [below of=0] {};
  \path (1) ++(-120:\nodeDist) node [internal] (2) {} ++(-150:\nodeDist) node [external] (3) {};
  \path (1) ++( -60:\nodeDist) node [internal] (4) {} ++( -30:\nodeDist) node [external] (5) {};
  \path (0) edge (1)
        (1) edge (2)
            edge (4)
        (2) edge (3)
            edge (4)
        (4) edge (5);
  \begin{pgfonlayer}{background}
   \node[draw=\borderColor,thick,rounded corners,inner xsep=12pt,inner ysep=12pt,fit = (1) (2) (4)] {};
   \end{pgfonlayer}
  \end{tikzpicture}
 }
 \caption{Decomposition of a ternary gadget.
 All circle vertices are assigned $\langle a,b,c \rangle$
 and the square vertex in~(\subref{subfig:gadget:compute:ternary:triangle_outside})
 is assigned the signature of the gadget in~(\subref{subfig:gadget:compute:ternary:triangle_inside}).}
 \label{fig:gadget:compute:ternary:triangle_construction}
\end{figure}

\begin{lemma} \label{lem:compute:ternary:triangle}
 Suppose $\kappa \ge 3$ is the domain size and $a,b,c \in \mathbb{C}$.
 Let $\langle a,b,c \rangle$ be a succinct ternary signature of type $\tau_3$.
 If we assign $\langle a,b,c \rangle$ to all vertices of the gadget in Figure~\ref{subfig:gadget:compute:ternary:triangle},
 then the succinct ternary signature of type $\tau_3$ of the resulting gadget is $\langle a', b', c' \rangle$,
 where
 \begin{align*}
  a' = {}&  a^3
          + 3 (\kappa - 1) a b^2
          + 4 (\kappa - 1) b^3
          + 3 (\kappa - 1) (\kappa - 2) (b^2 c + b c^2)
          + (\kappa - 1) (\kappa - 2) (\kappa - 3) c^3,\\
  b' = {}&  a^2 b
          + 4 a b^2
          + 2 (\kappa - 2) a b c
          + (\kappa - 2) a c^2
          + (5 \kappa - 7) b^3
          + (\kappa - 2) (\kappa + 5) b^2 c\\
         &+ (\kappa - 2) (7 \kappa - 18) b c^2
          + (\kappa - 2) (\kappa - 3)^2 c^3,
         \qquad \text{and}\\
  c' = {}&  3 a b^2
          + 6 a b c
          + 3 (\kappa - 3) a c^2
          + (\kappa + 5) b^3
          + 3 (7 \kappa - 18) b^2 c
          + 9 (\kappa - 3)^2 b c^2\\
         &+ (\kappa^3 - 9 \kappa^2 + 29 \kappa - 32) c^3.
 \end{align*}
 Furthermore, if $\mathfrak{A} = 0$, then
 \begin{align*}
  a' &= 3 b' - 2 c',\\
  b' &=   (5 \kappa + 14) b^3
        + (\kappa^2 + 9 \kappa - 42) b^2 c
        + (7 \kappa^2 - 33 \kappa + 42) b c^2
        + (\kappa - 2) (\kappa^2 - 6 \kappa + 7) c^3,
        \qquad \text{and}\\
  c' &=   (\kappa + 14) b^3
        + 21 (\kappa - 2) b^2 c
        + 3 (3 \kappa^2 - 15 \kappa + 14) b c^2
        + (\kappa^3 - 9 \kappa^2 + 23 \kappa - 14) c^3.
 \end{align*}
\end{lemma}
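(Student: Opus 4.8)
## Proof Proposal for Lemma~\ref{lem:compute:ternary:triangle}

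The plan is to compute the signature of the triangle gadget in Figure~\ref{subfig:gadget:compute:ternary:triangle} by decomposing it as in Figure~\ref{fig:gadget:compute:ternary:triangle_construction}, reusing the already-computed quaternary signature from Lemma~\ref{lem:compute:quaternary:I}. First I would observe that, since $\langle a,b,c\rangle$ is domain invariant and the triangle gadget has a reflective symmetry, the resulting ternary signature is again symmetric and domain invariant, hence of type $\tau_3$; so it suffices to compute the three values $a'$ (all three external edges the same color), $b'$ (exactly two the same), and $c'$ (all distinct). The decomposition I would use is this: the gadget in Figure~\ref{subfig:gadget:compute:ternary:triangle_inside} is exactly the quaternary gadget of Figure~\ref{fig:gadget:compute:quaternary:I}, whose signature $f$ of type $\tau_4$ is given by Lemma~\ref{lem:compute:quaternary:I}; then Figure~\ref{subfig:gadget:compute:ternary:triangle_outside} connects that quaternary signature to a further copy of $\langle a,b,c\rangle$ via two parallel edges, closing off into a ternary gadget. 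So $\langle a',b',c'\rangle$ is obtained by contracting $f$ (on the two "outgoing" legs that form the double edge) against the ternary signature $\langle a,b,c\rangle$, and then reading off the entries.

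Concretely, the main computation is: for each external coloring pattern, sum over the two colors on the double edge connecting $f$ to the bottom $\langle a,b,c\rangle$ vertex, weighting by the appropriate $\tau_4$-entry of $f$ and the appropriate $\tau_3$-entry of $\langle a,b,c\rangle$. I would organize this by the number of distinct external colors and, within each case, by how many of the two double-edge colors coincide with which external colors. For instance, for $a'$ the two external edges feeding $f$ and the two external edges of the bottom vertex are all the same color $w$; I split on whether the double-edge colors are $(w,w)$, $(w,x)$ for $x\neq w$, $(x,x)$, or $(x,y)$ with $x,y\neq w$ distinct, reading the matching entries $f_{\subMat{1}{1}{1}{1}}$, $f_{\subMat{1}{1}{1}{2}}$, $f_{\subMat{1}{1}{2}{2}}$ (or the horizontally symmetric ones), etc., against $a$, $b$, $b$, $c$ respectively from the bottom vertex, with the correct multiplicities $1$, $2(\kappa-1)$, $(\kappa-1)$, $(\kappa-1)(\kappa-2)$. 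This yields $a'$; analogous (but bushier) case analyses give $b'$ and $c'$. After assembling the three polynomials, I would apply the sanity check mentioned in the text: setting $a=b=c=1$ must give $\kappa^{m}$ where $m$ is the number of internal edges of the triangle gadget (here the three internal edges of the triangle plus the internal edges inside the $\tau_4$ sub-gadget), which validates that no assignment was double-counted or dropped. Finally, for the "Furthermore" clause I would substitute the relation $a=3b-2c$ (which is exactly $\mathfrak{A}=0$, cf.~(\ref{eqn:ternary:frakA})) into the three expressions for $a',b',c'$ and simplify; the identity $a'=3b'-2c'$ then follows by direct algebra (this reflects the set-wise invariance noted after Lemma~\ref{lem:ternary:construct_abb}), and the reduced forms of $b'$ and $c'$ are obtained by collecting terms.

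The main obstacle is purely bookkeeping: the $\tau_4$ signature $f$ has nine distinct entries and the contraction must correctly match the \emph{oriented} legs of $f$ (recall the diamond marks the first input and the rest are counterclockwise), so one must be careful about which of $f_{\subMat{1}{2}{1}{2}}$, $f_{\subMat{1}{2}{2}{1}}$, $f_{\subMat{1}{2}{1}{3}}$, $f_{\subMat{1}{2}{3}{1}}$ governs each configuration when the two double-edge legs are glued. Getting the planar/cyclic edge order right on the $\tau_4$ gadget is where an error is most likely to creep in, and the $a=b=c=1$ check is the safeguard. The remaining steps — expanding products of the degree-$2$ entries of $f$ with the degree-$1$ entries $a,b,c$, and collecting coefficients of $a^ib^jc^k$ with $i+j+k=3$ — are routine polynomial algebra, best done by a short interpolation in $\kappa$ (computing the integer entries for $\kappa = 3,4,5,6$ and fitting a cubic in $\kappa$ with coefficients in $\Z[a,b,c]$) exactly as described in the appendix preamble, rather than by hand expansion.
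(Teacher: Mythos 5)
Your proposal takes exactly the same route as the paper's proof: decompose the triangle gadget as the composition of the quaternary gadget (Figure~\ref{subfig:gadget:compute:ternary:triangle_inside}, whose $\tau_4$ signature $f$ comes from Lemma~\ref{lem:compute:quaternary:I}) with a single copy of $\langle a,b,c\rangle$ glued on by a double edge (Figure~\ref{subfig:gadget:compute:ternary:triangle_outside}), then compute $a',b',c'$ by partitioning the assignments to the two glue edges; the paper carries out this case analysis fully by hand (being careful with the oriented entries $f_{\subMat{1}{2}{2}{1}}$ vs.~$f_{\subMat{1}{2}{1}{2}}$, $f_{\subMat{1}{2}{3}{1}}$ vs.~$f_{\subMat{1}{2}{1}{3}}$, exactly the pitfall you flag), while you would finish with the interpolation-in-$\kappa$ shortcut that the appendix preamble also endorses. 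Your treatment of the ``Furthermore'' clause (substitute $a=3b-2c$ and simplify) is the intended routine check, which the paper leaves implicit.
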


\begin{proof}
 Since $\langle a,b,c \rangle$ is domain invariant,
 the signature of this gadget is also domain invariant.
 As a ternary signature,
 the rotational symmetry of this gadget implies the symmetry of the signature.
 Any symmetric domain invariant ternary signature has a succinct signature of type $\tau_3$.
 
 Consider the gadget in Figure~\ref{subfig:gadget:compute:ternary:triangle_inside}.
 We assign $\langle a,b,c \rangle$ to both vertices.
 Then by Lemma~\ref{lem:compute:quaternary:I},
 the succinct quaternary signature of this gadget is the signature $f$ given in Lemma~\ref{lem:compute:quaternary:I}.

 Now consider the gadget in Figure~\ref{subfig:gadget:compute:ternary:triangle_outside}.
 We assign $\langle a,b,c \rangle$ to the circle vertex and $f$ to the square vertex.
 The resulting gadget is the one in Figure~\ref{subfig:gadget:compute:ternary:triangle},
 which is symmetric.
 Thus, there is no need to distinguish the external edges.
 We have three entries to compute.
 
 Let $g,r,y \in [\kappa]$ be distinct edge assignments.
 To compute $a'$,
 suppose that all external edges are assigned $g$.
 We begin with the case where both internal edges have the same assignment.
 If this assignment is $g$,
 then $a f_{\subMat{1}{1}{1}{1}}$ is contributed to the sum.
 If this assignment is not $g$,
 then $b f_{\subMat{1}{1}{2}{2}}$ is contributed to the sum
 for a total contribution of $(\kappa - 1) b f_{\subMat{1}{1}{2}{2}}$.
 Now consider the case that the two internal edges have a different assignment.
 If one of these assignments is $g$,
 then $b f_{\subMat{1}{1}{1}{2}}$ is contributed to the sum
 for a total contribution of $2 (\kappa - 1) b f_{\subMat{1}{1}{1}{2}}$.
 If neither assignment is $g$,
 then $c f_{\subMat{1}{1}{2}{3}}$ is contributed to the sum
 for a total contribution of $(\kappa - 1) (\kappa - 2) c f_{\subMat{1}{1}{2}{3}}$.
 After substituting for the entries of $f$,
 these total contributions sum to the value for $a'$ given in Lemma~\ref{lem:compute:ternary:triangle}.
 
 To compute $b'$,
 suppose the left external edges are assigned $g$ and the right external edge is assigned $r$.
 We begin with the case where both internal edges have the same assignment.
 If this assignment is $g$,
 then $b f_{\subMat{1}{1}{1}{1}}$ is contributed to the sum.
 If this assignment is $r$,
 then $a f_{\subMat{1}{1}{2}{2}}$ is contributed to the sum.
 If this assignment is not $g$ or $r$,
 then $b f_{\subMat{1}{1}{2}{2}}$ is contributed to the sum
 for a total contribution of $(\kappa - 2) b f_{\subMat{1}{1}{2}{2}}$.
 Now consider the case that the two internal edges have a different assignments.
 If both are assigned $g$ or $r$,
 then $b f_{\subMat{1}{1}{1}{2}}$ is contributed to the sum
 for a total contribution of $2 b f_{\subMat{1}{1}{1}{2}}$.
 If one is assigned $g$ and the other is not assigned $r$,
 then $c f_{\subMat{1}{1}{1}{2}}$ is contributed to the sum
 for a total contribution of $2 (\kappa - 2) c f_{\subMat{1}{1}{1}{2}}$.
 If one is assigned $r$ and the other is not assigned $g$,
 then $b f_{\subMat{1}{1}{2}{3}}$ is contributed to the sum
 for a total contribution of $2 (\kappa - 2) b f_{\subMat{1}{1}{2}{3}}$.
 If neither is assigned $g$ or $r$,
 then $c f_{\subMat{1}{1}{2}{3}}$ is contributed to the sum
 for a total contribution of $(\kappa - 2) (\kappa - 3) c f_{\subMat{1}{1}{2}{3}}$.
 After substituting for the entries of $f$,
 these total contributions sum to the value for $b'$ given in Lemma~\ref{lem:compute:ternary:triangle}.
 
 To compute $c'$,
 suppose the upper-left external edge is assigned $g$,
 the lower-left external edge is assigned $r$,
 and the right external edge is assigned $y$.
 We begin with the case where both internal edges have the same assignment.
 If this assignment is $g$,
 then $b f_{\subMat{1}{1}{1}{2}}$ is contributed to the sum.
 If this assignment is $r$,
 then $b f_{\subMat{1}{1}{1}{2}}$ is contributed to the sum.
 If this assignment is $y$,
 then $a f_{\subMat{1}{1}{2}{3}}$ is contributed to the sum.
 If this assignment is not $g$, $r$, or $y$,
 then $b f_{\subMat{1}{1}{2}{3}}$ is contributed to the sum
 for a total contribution of $(\kappa - 3) b f_{\subMat{1}{1}{2}{3}}$.
 Now consider the case that the two internal edges have a different assignments.
 If the top internal edge is assigned $g$ and the bottom one is assigned $r$,
 then $c f_{\subMat{1}{2}{2}{1}}$ is contributed to the sum.
 If the top internal edge is assigned $r$ and the bottom one is assigned $g$,
 then $c f_{\subMat{1}{2}{1}{2}}$ is contributed to the sum.
 If the top internal edge is assigned $g$ and the bottom one is assigned $y$,
 then $b f_{\subMat{1}{2}{3}{1}}$ is contributed to the sum.
 If the top internal edge is assigned $y$ and the bottom one is assigned $g$,
 then $b f_{\subMat{1}{2}{1}{3}}$ is contributed to the sum.
 If the top internal edge is assigned $r$ and the bottom one is assigned $y$,
 then $b f_{\subMat{1}{2}{1}{3}}$ is contributed to the sum.
 If the top internal edge is assigned $y$ and the bottom one is assigned $r$,
 then $b f_{\subMat{1}{2}{3}{1}}$ is contributed to the sum.
 If the top internal edge is assigned $g$ and the bottom one not assigned $r$ or $y$,
 then $c f_{\subMat{1}{2}{3}{1}}$ is contributed to the sum
 for a total contribution of $(\kappa - 3) c f_{\subMat{1}{2}{3}{1}}$.
 If the bottom internal edge is assigned $g$ and the top one not assigned $r$ or $y$,
 then $c f_{\subMat{1}{2}{1}{3}}$ is contributed to the sum
 for a total contribution of $(\kappa - 3) c f_{\subMat{1}{2}{1}{3}}$.
 If the top internal edge is assigned $r$ and the bottom one not assigned $g$ or $y$,
 then $c f_{\subMat{1}{2}{1}{3}}$ is contributed to the sum
 for a total contribution of $(\kappa - 3) c f_{\subMat{1}{2}{1}{3}}$.
 If the bottom internal edge is assigned $r$ and the top one not assigned $g$ or $y$,
 then $c f_{\subMat{1}{2}{3}{1}}$ is contributed to the sum
 for a total contribution of $(\kappa - 3) c f_{\subMat{1}{2}{3}{1}}$.
 If the one internal edge is assigned $y$ and the other is not assigned $g$ or $r$,
 then $b f_{\subMat{1}{2}{3}{4}}$ is contributed to the sum
 for a total contribution of $2 (\kappa - 3) b f_{\subMat{1}{2}{3}{4}}$.
 If neither internal edge is assigned $g$ $r$, or $y$,
 then $c f_{\subMat{1}{2}{3}{4}}$ is contributed to the sum
 for a total contribution of $(\kappa - 3) (\kappa - 4) c f_{\subMat{1}{2}{3}{4}}$.
 After substituting for the entries of $f$,
 these total contributions sum to the value for $c'$ given in Lemma~\ref{lem:compute:ternary:triangle}.
\end{proof}

The signature of the gadget in Figure~\ref{fig:gadget:compute:binary:anti-gadget} is difficult to compute using gadget compositions and partitioning of internal edge assignments as we have been doing.
Instead, we compute this signature using matrix product, trace, and polynomial interpolation.

\begin{figure}[ht]
 \centering
 \begin{tikzpicture}[scale=\scale,transform shape,node distance=\nodeDist,semithick]
  \node[external] (0)                    {};
  \node[internal] (1) [right       of=0] {};
  \node[square]   (2) [above right of=1] {};
  \node[triangle] (3) [below right of=1] {};
  \node[internal] (4) [below right of=2] {};
  \node[external] (5) [right       of=4] {};
  \path (0) edge             (1)
        (1) edge[bend left]  (2)
            edge[bend right] (3)
        (2) edge[bend left]  (4)
        (3) edge[bend right] (4)
        (4) edge             (5);
  \begin{pgfonlayer}{background}
   \node[draw=\borderColor,thick,rounded corners,inner xsep=12pt,inner ysep=8pt,fit = (1) (2) (3) (4)] {};
  \end{pgfonlayer}
 \end{tikzpicture}
 \caption{A more complicated binary gadget.}
 \label{fig:gadget:compute:binary:anti-gadget}
\end{figure}
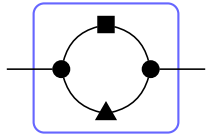

\begin{lemma} \label{lem:compute:binary:anti-gadget}
 Suppose $\kappa \ge 3$ is the domain size and $a,b,c, x_1, y_1, x_2, y_2 \in \mathbb{C}$.
 Let $\langle a,b,c \rangle$ be a succinct ternary signature of type $\tau_3$ and
 $\langle x_1, y_1 \rangle$ and $\langle x_2, y_2 \rangle$ be succinct binary signatures of type $\tau_2$.
 If to the gadget in Figure~\ref{fig:gadget:compute:binary:anti-gadget}
 we assign $\langle a,b,c \rangle$ to the circle vertices,
 $\langle x_1, y_1 \rangle$ to the square vertex, and
 $\langle x_2, y_2 \rangle$ to the triangle vertex,
 then the succinct binary signature of type $\tau_2$ of the resulting gadget is $\langle x, y \rangle$,
 where
 \begin{align*}
  x = {}&  x_1 x_2 a^2
         + 2 (\kappa - 1) (x_1 y_2 + x_2 y_1 + y_1 y_2) a b
         + 2 (\kappa - 1) (\kappa - 2) y_1 y_2 a c\\
        &+ (\kappa - 1) [3 x_1 x_2 + \kappa (x_1 y_2 + x_2 y_1) + (7 \kappa - 12) y_1 y_2] b^2\\
        &+ 2 (\kappa - 1) (\kappa - 2) [2 (x_1 y_2 + x_2 y_1) + (3 \kappa - 7) y_1 y_2] b c\\
        &+ (\kappa - 1) (\kappa - 2) [x_1 x_2 + (\kappa - 3) (x_1 y_2 + x_2 y_1) + (\kappa^2 - 5 \kappa + 7) y_1 y_2] c^2
       \qquad \text{and}\\
  y = {}&  y_1 y_2 a^2
         + 2 [x_1 x_2 + x_1 y_2 + x_2 y_1 + 3 (\kappa - 2) y_1 y_2] a b
         + 2 (\kappa - 2) [x_1 y_2 + x_2 y_1 + (\kappa - 3) y_1 y_2] a c\\
        &+ [\kappa x_1 x_2 + (7 \kappa - 12) (x_1 y_2 + x_2 y_1) + 3 (3 \kappa^2 - 11 \kappa + 11) y_1 y_2] b^2\\
        &+ 2 (\kappa - 2) [2 x_1 x_2 + (3 \kappa - 7) (x_1 y_2 + x_2 y_1) + 3 (\kappa^2 - 4 \kappa + 5) y_1 y_2] b c\\
        &+ (\kappa - 2) [(\kappa - 3) x_1 x_2 + (\kappa^2 - 5 \kappa + 7) (x_1 y_2 + x_2 y_1) + (\kappa^3 - 6 \kappa^2 + 14 \kappa - 13)] c^2.
 \end{align*}
 Furthermore,
 if $\langle x_1, y_1 \rangle = \frac{1}{\kappa} \langle \omega^r + \kappa - 1, \omega^r - 1 \rangle$
 and $\langle x_2, y_2 \rangle = \frac{1}{\kappa} \langle \omega^s + \kappa - 1, \omega^s - 1 \rangle$,
 then
 \begin{align*}
  x = \frac{\mathfrak{B}^2}{\kappa^2} \left[\Phi \omega^{r+s} + (\kappa - 1) (\omega^r + \omega^s + \Psi + 1)\right]
      \quad \text{and} \quad
  y = \frac{\mathfrak{B}^2}{\kappa^2} \left[\Phi \omega^{r+s} -              (\omega^r + \omega^s + \Psi + 1) + \kappa\right],
 \end{align*}
 where $\Phi = \frac{\mathfrak{C}^2}{\mathfrak{B}^2}$ and $\Psi = \frac{(\kappa - 2) \mathfrak{A}^2}{\mathfrak{B}^2}$.
\end{lemma}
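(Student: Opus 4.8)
The plan is to turn the gadget of Figure~\ref{fig:gadget:compute:binary:anti-gadget} into the trace of a product of four $\kappa\times\kappa$ matrices, and then evaluate that trace by rank-one matrix calculus. First I would record matrix forms for the three given signatures. A domain-invariant symmetric binary signature $\langle p,q\rangle$ of type $\tau_2$ has signature matrix $N(p,q)=(p-q)I_\kappa+qJ_\kappa$; set $M_1=N(x_1,y_1)$ and $M_2=N(x_2,y_2)$. For the ternary signature $f=\langle a,b,c\rangle$, fixing its first argument to a domain element $w$ yields the symmetric $\kappa\times\kappa$ matrix $A_w$ with $(A_w)_{i,j}=f(w,i,j)$, and comparing entries against the definition of type $\tau_3$ gives the closed form
\[
 A_w = \mathfrak{A}\,e_w e_w^\top + (b-c)(e_w\mathbf 1^\top+\mathbf 1 e_w^\top) + (b-c)I_\kappa + cJ_\kappa,
\]
where $e_w$ is the $w$-th standard basis vector, $\mathbf 1$ the all-ones vector, and $\mathfrak A=a-3b+2c$ as in~(\ref{eqn:ternary:frakA}).

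Next I would express the gadget's Holant value as a trace. Label the two external edges (one at each circle vertex) by $w$ and $w'$ and sum over the four internal edge labelings. Using the symmetry of $f$ at the second circle vertex, the Holant value equals $\sum_{i,j}(A_w)_{i,j}(M_1 A_{w'}M_2)_{i,j}$, which, since $A_w$, $M_1$, $M_2$ are symmetric, equals $\operatorname{tr}(A_w M_1 A_{w'} M_2)$ (the order of $M_1,M_2$ is immaterial, by the reflective symmetry of the construction). As all signatures involved are domain invariant, the gadget signature is a symmetric binary signature of type $\tau_2$, so it suffices to evaluate this trace in the two cases $w'=w$, which gives $x$, and $w'\ne w$, which gives $y$.

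Then I would evaluate the trace. Substituting the decompositions of $A_w$, $A_{w'}$, $M_1$, $M_2$ expands $\operatorname{tr}(A_w M_1 A_{w'} M_2)$ into a finite sum of traces of products of the building blocks $I_\kappa$, $J_\kappa$, $e_w e_w^\top$, $e_w\mathbf 1^\top$, $\mathbf 1 e_w^\top$, $e_{w'}e_{w'}^\top$, $e_{w'}\mathbf 1^\top$, $\mathbf 1 e_{w'}^\top$. Each such trace collapses via $J_\kappa^2=\kappa J_\kappa$, $\mathbf 1^\top\mathbf 1=\kappa$, $\mathbf 1^\top e_w=e_w^\top\mathbf 1=1$, $e_w^\top e_{w'}=\delta_{w,w'}$, $\operatorname{tr}(I_\kappa)=\operatorname{tr}(J_\kappa)=\kappa$, $\operatorname{tr}(e_w e_{w'}^\top)=\delta_{w,w'}$, and so on; setting $\delta_{w,w'}=1$ yields $x$ and $\delta_{w,w'}=0$ yields $y$. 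Collecting the outcome in the basis $a^2,ab,ac,b^2,bc,c^2$ against $x_1x_2$, $x_1y_2+x_2y_1$, $y_1y_2$ produces the stated formulas; evaluating at $a=b=c=x_i=y_i=1$ must return $\kappa^4$ (the number of internal edge assignments), a convenient guard against dropped terms. One could instead avoid the symbolic trace algebra by computing the gadget signature by brute force for $\kappa=3,4,5,6,7$ and interpolating, since every entry is a polynomial in $\kappa$ of degree at most $4$. For the ``furthermore'' clause I would substitute $\langle x_i,y_i\rangle=\tfrac1\kappa\langle\omega^{r_i}+\kappa-1,\ \omega^{r_i}-1\rangle$ with $r_1=r$, $r_2=s$, so that $x_i-y_i=1$ and $M_i=I_\kappa+\tfrac{\omega^{r_i}-1}{\kappa}J_\kappa$; plugging these into the general bilinear expression and rewriting the coefficients through $\mathfrak B=\mathfrak A+\kappa(b-c)$ and $\mathfrak C=\mathfrak B+\kappa[2b+(\kappa-2)c]$ from~(\ref{eqn:ternary:frakB}) and~(\ref{eqn:ternary:frakC}), together with $\Phi=\mathfrak C^2/\mathfrak B^2$ and $\Psi=(\kappa-2)\mathfrak A^2/\mathfrak B^2$, yields the claimed closed form.

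The conceptual content is entirely in the trace identity; the rest is bookkeeping, and that is where the difficulty lies. The expansion has on the order of sixty terms, and correctly collapsing each via the $I$--$J$--$e_w$ calculus, reassembling them into the six-by-three coefficient table, and then matching the resulting polynomials in $\kappa$ to the compact $\mathfrak A,\mathfrak B,\mathfrak C$ form in the special case, is where essentially all the effort goes.
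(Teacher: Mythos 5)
Your proposal is correct and rests on the same key step as the paper's proof: expressing the gadget value as the trace $\tr(A_w M_1 A_{w'} M_2)$, where $M_i = y_i J_\kappa + (x_i - y_i) I_\kappa$ and $A_w$ is the $\kappa\times\kappa$ signature matrix obtained by pinning one input of $\langle a,b,c\rangle$ to $w$, with $w'=w$ giving $x$ and $w'\neq w$ giving $y$. The paper then evaluates the trace by observing that both entries are polynomials of degree at most $4$ in $\kappa$ and interpolating at $\kappa=3,\dots,7$ --- exactly the alternative you yourself offer --- while your primary route, expanding symbolically via the rank-one decomposition of $A_w$ into $\mathfrak{A}\,e_w e_w^{\top}$, $(b-c)(e_w\mathbf{1}^{\top}+\mathbf{1}e_w^{\top})$, $(b-c)I_\kappa$, and $cJ_\kappa$, is a hand-verifiable refinement of the same computation.
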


\begin{proof}
 Since $\langle a,b,c \rangle$, $\langle x_1, y_1 \rangle$, and $\langle x_2, y_2 \rangle$ are domain invariant,
 the signature of this gadget is also domain invariant.
 Any domain invariant binary signature has a succinct signature of type $\tau_2$.
 
 We compute $a'$, $b'$, and $c'$ using the algorithm for $\Holant(\mathcal{F})$ when every non-degenerate signature in $\mathcal{F}$ is of arity at most~$2$,
 which is to use matrix product and trace.
 Then we finish with polynomial interpolation.
 Let $M_\kappa(t)$ be a $\kappa$-by-$\kappa$ matrix such that
 \[
  (M_\kappa(t))_{i,j} =
  \begin{cases}
   a & i = j = t\\
   b & i = j \ne t\\
   b & i \ne j \text{ and } (i = t \text{ or } j = t)\\
   c & \text{otherwise}.
  \end{cases}
 \]
 For example,
 $M_4(1) = \left[\begin{smallmatrix} a & b & b & b \\ b & b & c & c \\ b & c & b & c \\ b & c & c & b \end{smallmatrix}\right]$.
 If we fix an input of $\langle a,b,c \rangle$ to $t \in [\kappa]$,
 then the resulting binary signature (which is no longer domain invariant) has the signature matrix $M_\kappa(t)$.
 
 Consider $x$ and $y$ as polynomials in $\kappa$ with coefficients in $\Z[a,b,c,x_1,y_1,x_2,y_2]$.
 Then
 \begin{align*}
  x(\kappa) &= \tr\big(M_\kappa(1) [y_1 J_\kappa + (x_1 - y_1) I_\kappa] M_\kappa(1) [y_2 J_\kappa + (x_2 - y_2) I_\kappa]\big)
  \qquad \text{and}\\
  y(\kappa) &= \tr\big(M_\kappa(1) [y_1 J_\kappa + (x_1 - y_1) I_\kappa] M_\kappa(2) [y_2 J_\kappa + (x_2 - y_2) I_\kappa]\big).
 \end{align*}
 Since there are just four internal edges in this gadget,
 both of $x(\kappa)$ and $y(\kappa)$ are of degree at most~$4$ in $\kappa$.
 Therefore, we interpolate each of these polynomials using their evaluations at $3 \le \kappa \le 7$
 and obtain the expressions for $x$ and $y$ given in Lemma~\ref{lem:compute:binary:anti-gadget}.
\end{proof}

\begin{remark}
 Lemma~\ref{lem:compute:binary:parallel_edges} is the special case of Lemma~\ref{lem:compute:binary:anti-gadget}
 with $\langle x_1, y_1 \rangle = \langle x_2, y_2 \rangle = \langle 1,0 \rangle$.
\end{remark}

In order to apply a holographic transformation on a particular signature,
it is convenient to express the signature as a sum of degenerate signatures.
Let $e_{\kappa,i}$ be the standard basis vector of length $\kappa$ with a~$1$ at location $i$ and~$0$ elsewhere.
Also let $\mathbf{1}_\kappa$ be the all $1$'s vector of length $\kappa$.
Then the succinct ternary signature $\langle a,b,c \rangle$ on domain size $\kappa$ can be expressed as
\begin{align}
 \langle a,b,c \rangle
 &=
 c \mathbf{1}_\kappa^{\otimes 3}
 + (a - c) \sum_{i=1}^\kappa e_{\kappa,i}^{\otimes 3}
 + (b - c) \sum_{\substack{i,j \in [\kappa]\\i \ne j}}
 \left(
 \begin{array}{rl}
    & e_{\kappa, i} \otimes e_{\kappa, i} \otimes e_{\kappa, j}\\
  + & e_{\kappa, i} \otimes e_{\kappa, j} \otimes e_{\kappa, i}\\
  + & e_{\kappa, j} \otimes e_{\kappa, i} \otimes e_{\kappa, i}
 \end{array}
 \right) \label{eqn:compute:tensor_rank_c}\\
 &=
 b \mathbf{1}_\kappa^{\otimes 3}
 + (a - b) \sum_{i=1}^\kappa e_{\kappa,i}^{\otimes 3}
 + (c - b) \sum_{\substack{\sigma : {1,2,3} \to [\kappa]\\\sigma \text{ injective}}} e_{\kappa, \sigma(1)} \otimes e_{\kappa, \sigma(2)} \otimes e_{\kappa, \sigma(3)}. \label{eqn:compute:tensor_rank_b}
\end{align}
The expression in~(\ref{eqn:compute:tensor_rank_c}) contains $1 + \kappa + 3 \kappa (\kappa - 1) = 3 \kappa^2 - 2 \kappa + 1$ summands.
In general, this is smaller than the one in~(\ref{eqn:compute:tensor_rank_b}),
which contains $1 + \kappa + \kappa (\kappa - 1) (\kappa - 2) = \kappa^3 - 3 \kappa^2 + 3 \kappa + 1$ summands.
It is advantageous find an expression that minimizes the number of summands.
This leads to less computation in the proof of Lemma~\ref{lem:compute:ternary:holographic_transformation}.
However, determining the fewest number of summands for a given signature is exactly the problem of determining tensor rank,
which is a problem well-known to be difficult~\cite{Has90}.

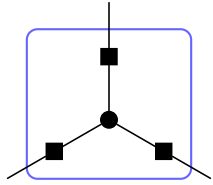
\begin{figure}[ht]
 \centering
 \begin{tikzpicture}[scale=\scale,transform shape,node distance=\nodeDist,semithick]
  \node [external] (0)              {};
  \node [square]   (1) [below of=0] {};
  \node [internal] (2) [below of=1] {};
  \path (2) ++(-150:\nodeDist) node [square] (3) {} ++(-150:\nodeDist) node [external] (4) {};
  \path (2) ++( -30:\nodeDist) node [square] (5) {} ++( -30:\nodeDist) node [external] (6) {};
  \path (0) edge (1)
        (1) edge (2)
        (2) edge (3)
            edge (5)
        (3) edge (4)
        (5) edge (6);
  \begin{pgfonlayer}{background}
   \node[draw=\borderColor,thick,rounded corners,inner xsep=12pt,inner ysep=12pt,fit = (1) (3) (5)] {};
  \end{pgfonlayer}
 \end{tikzpicture}
 \caption{Local holographic transformation gadget construction for a ternary signature.}
 \label{fig:gadget:compute:ternary:local_holographic_transformation}
\end{figure}

There is a gadget construction that mimics the behavior of a holographic transformation.
This construction is called a local holographic transformation~\cite{CLX14}.
For $x,y \in \mathbb{C}$,
let $\langle x,y \rangle$ be a succinct binary signature of type $\tau_2$.
Consider the gadget in Figure~\ref{fig:gadget:compute:ternary:local_holographic_transformation}.
If we assign $\langle a,b,c \rangle$ to the circle vertex and $\langle x,y \rangle$ to the square vertex,
then the resulting signature of this gadget is the same as applying a holographic transformation on $\langle a,b,c \rangle$ with basis
$T = y J_\kappa + (x - y) I_\kappa$.
We use this fact in the following proof.

\begin{lemma} \label{lem:compute:ternary:holographic_transformation}\
 Suppose $\kappa \ge 3$ is the domain size and $a,b,c,x,y \in \mathbb{C}$.
 Let $\langle a,b,c \rangle$ be a succinct signature of type $\tau_3$ and let $T = y J_\kappa + (x - y) I_\kappa$.
 Then $T^{\otimes 3} \langle a,b,c \rangle = \langle a',b',c' \rangle$,
 where
 \begin{align*}
  a' = {}&a \left[x^3 + (\kappa - 1) y^3\right]\\
       {}&+ 3 b (\kappa - 1) \left[x^2 y + x y^2 + (\kappa - 2) y^3\right]\\
       {}&+ c (\kappa - 1) (\kappa - 2) \left[3 x y^2 + (\kappa - 3) y^3\right]\\
  b' = {}&a \left[x^2 y + x y^2 + (\kappa - 2) y^3\right]\\
       {}&+ b \left[x^3 + \kappa x^2 y + (7 \kappa - 12) x y^2 + (3 \kappa^2 - 11 \kappa + 11) y^3\right]\\
       {}&+ c (\kappa - 2) \left[2 x^2 y + (3 \kappa - 7) x y^2 + (\kappa^2 - 4 \kappa + 5) y^3\right], \text{ and}\\
  c' = {}&a \left[3 x y^2 + (\kappa - 3) y^3\right]\\
       {}&+ 3 b \left[2 x^2 y + (3 \kappa - 7) x y^2 + (\kappa^2 - 4 \kappa + 5) y^3\right]\\
       {}&+ c \left[x^3 + 3 (\kappa - 3) x^2 y + 3 (\kappa^2 - 5 \kappa + 7) x y^2 + (\kappa^3 - 6 \kappa^2 + 14 \kappa - 13) y^3\right].
 \end{align*}
 In particular,
 \begin{align*}
  a' - b' = (x - y)^2 [2 \mathfrak{D} + \mathfrak{A} (x - y)]
            \qquad \text{and} \qquad
  b' - c' = (x - y)^2 \mathfrak{D},
 \end{align*}
 where $\mathfrak{D} = (b - c) (x - y) + \mathfrak{B} y$.
 Furthermore, if $\mathfrak{A} = 0$, then
 \begin{align*}
  a' &= 3 b' - 2 c',\\
  b' &= [x + (\kappa - 1) y]
        \left\{
           b x^2
         + 2 [2 b + (\kappa - 3) c] x y
         + [(3 \kappa - 5) b + (\kappa^2 - 5 \kappa + 6) c] y^2
        \right\}
        \qquad \text{and}\\
  c' &= [x + (\kappa - 1) y]
        \left\{
           c x^2
         + 2 [3 b + (\kappa - 4) c] x y
         + [(3 \kappa - 6) b + (\kappa^2 - 5 \kappa + 7) c] y^2
        \right\}.
 \end{align*}
 If $\kappa = 3$, $x = -1$, and $y = 2$, then
 \begin{align*}
  a' =  -3 (5 a + 12 b - 8 c),
        \qquad
  b' =  -3 (2 a +  3 b + 4 c),
        \qquad
        \text{and}
        \qquad
  c' = 3 (4 a - 12 b - c).
 \end{align*}
\end{lemma}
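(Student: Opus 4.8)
The plan is to verify Lemma~\ref{lem:compute:ternary:holographic_transformation} by realizing the holographic transformation as a gadget computation and reducing everything to a finite amount of symmetric bookkeeping. First I would observe, as noted just before the lemma, that the gadget in Figure~\ref{fig:gadget:compute:ternary:local_holographic_transformation}, with $\langle a,b,c\rangle$ on the circle vertex and the succinct binary signature $\langle x,y\rangle$ of type $\tau_2$ on each square vertex, has signature exactly $T^{\otimes 3}\langle a,b,c\rangle$ for $T = yJ_\kappa + (x-y)I_\kappa$. Since $\langle a,b,c\rangle$ and $\langle x,y\rangle$ are domain invariant and the gadget is symmetric in its three external edges, the result is a symmetric domain invariant ternary signature, hence of type $\tau_3$; so it suffices to compute the three values $a' = f(g,g,g)$, $b' = f(g,g,r)$, and $c' = f(g,r,y)$ for distinct colors $g,r,y \in [\kappa]$.

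Each of $a', b', c'$ is a sum over the color of the single internal edge meeting the circle vertex (or, if one prefers, over the three internal edges) of the product of one circle-vertex factor from $\{a,b,c\}$ and one square-vertex factor; because each $T$ is a $\kappa\times\kappa$ matrix whose entries are $x$ (on the diagonal) or $y$ (off-diagonal), the contribution of a square vertex on a given edge pair is $x$ or $y$ according to whether its two endpoints receive equal colors. I would organize the computation exactly as in the proofs of Lemma~\ref{lem:compute:quaternary:I} and Lemma~\ref{lem:compute:ternary:triangle}: partition the internal color assignments according to which of them coincide with $g$, $r$, $y$, count the number of assignments in each part, and multiply by the common contribution. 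Each of $a', b', c'$ is then a polynomial in $\kappa$ of degree at most~$3$ (the number of internal edges), with coefficients in $\Z[a,b,c,x,y]$, so an equally valid route is to compute the entries for $\kappa = 3,4,5,6$ by the naive sum and interpolate in $\kappa$; I would use whichever is less error-prone and cross-check with the sanity check $a=b=c=x=y=1 \Rightarrow a' = b' = c' = \kappa^3$ noted in the text.

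Once $a', b', c'$ are in hand, the factored forms are routine algebra: subtracting and factoring gives $a'-b' = (x-y)^2[2\mathfrak{D} + \mathfrak{A}(x-y)]$ and $b'-c' = (x-y)^2\mathfrak{D}$ with $\mathfrak{D} = (b-c)(x-y) + \mathfrak{B}y$, where $\mathfrak{A}, \mathfrak{B}$ are as in~(\ref{eqn:ternary:frakA}) and~(\ref{eqn:ternary:frakB}); this is most transparent if one expresses $\langle a,b,c\rangle$ in the degenerate-tensor form~(\ref{eqn:compute:tensor_rank_c}) and pushes $T^{\otimes 3}$ through, since $T\mathbf{1}_\kappa = [x+(\kappa-1)y]\mathbf{1}_\kappa$ and $Te_{\kappa,i} = (x-y)e_{\kappa,i} + y\mathbf{1}_\kappa$. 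The specialization $\mathfrak{A}=0$ then follows by substituting $a = 3b-2c$ and collecting, and the factor $[x+(\kappa-1)y]$ appears because in that case the $\sum e_{\kappa,i}^{\otimes 3}$ and the off-diagonal terms recombine; the final numerical case $\kappa=3, x=-1, y=2$ is just direct evaluation.

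\textbf{Main obstacle.} There is no conceptual difficulty here — the only real risk is a miscount among the $O(\kappa)$-many classes of internal assignments (especially for $c'$, where up to three internal edges must be classified against three distinguished colors, as in the sixteen-case breakdown in the proof of Lemma~\ref{lem:compute:ternary:triangle}). I would mitigate this by doing the computation twice, once by hand via the partition method and once by polynomial interpolation in $\kappa$ from small cases, and by applying the $a=b=c=x=y=1$ consistency check to every intermediate polynomial.
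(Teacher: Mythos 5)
Your proposal is correct and matches the paper's approach: the paper also establishes the $\tau_3$ form via the domain invariance and rotational symmetry of the gadget, bounds the degree in $\kappa$ by~$3$, computes the entries for $\kappa=3,\dots,6$ using the tensor decomposition~(\ref{eqn:compute:tensor_rank_c}) or~(\ref{eqn:compute:tensor_rank_b}), and interpolates in $\kappa$. Your added suggestion to derive the factored forms directly from $T\mathbf{1}_\kappa = [x+(\kappa-1)y]\mathbf{1}_\kappa$ and $Te_{\kappa,i} = (x-y)e_{\kappa,i} + y\mathbf{1}_\kappa$ is a cleaner route to the $\mathfrak{D}$, $\mathfrak{A}$, $\mathfrak{B}$ expressions than the paper's unargued ``straightforward to verify,'' but it is the same overall strategy.
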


\begin{proof}
 Let $\widehat{f} = T^{\otimes 3} \langle a,b,c \rangle$.
 Since $\langle a,b,c \rangle$ and $\langle x,y \rangle$ are domain invariant,
 the signature of the gadget in Figure~\ref{fig:gadget:compute:ternary:local_holographic_transformation},
 which is the same signature $\widehat{f}$,
 is also domain invariant.
 As a ternary signature,
 the rotational symmetry of this gadget implies the symmetry of the signature.
 Any symmetric domain invariant ternary signature has a succinct signature of type $\tau_3$.
 
 The entries of $\widehat{f}$ are polynomials in $\kappa$ with coefficients from $\Z[a,b,c,x,y]$.
 The degree of these polynomials is at most~$3$ since the arity of $\langle a,b,c \rangle$ is~$3$.
 We compute the entries of $\widehat{f} = T^{\otimes 3} \langle a,b,c \rangle$ as elements in $\Z[a,b,c,x,y]$
 for domain sizes $3 \le \kappa \le 6$ by replacing $\langle a,b,c \rangle$ with an equivalent expression from either~(\ref{eqn:compute:tensor_rank_c}) or~(\ref{eqn:compute:tensor_rank_b}).
 Then we interpolate the entries of $\widehat{f}$ as elements in $(\Z[a,b,c,x,y])[\kappa]$.
 The resulting expressions for the signature entries are as given in the statement of Lemma~\ref{lem:compute:ternary:holographic_transformation}.
 
 It is straightforward to verify the expressions for $a' - b'$ and $b' - c'$ given those for $a'$, $b'$, and $c'$.
 Recall that $\mathfrak{A} = a - 3 b + 2 c$.
 If $\mathfrak{A} = 0$,
 then it follows that $a' - 3 b' + 2 c' = 0$ as well since
 \begin{align*}
  a' - 3 b' + 2 c'
  &= a' - b' - 2 (b' - c')\\
  &= (x - y)^2 [2 \mathfrak{D} + \mathfrak{A} (x - y)] - 2 (x - y)^2 \mathfrak{D}\\
  &= \mathfrak{A} (x - y)^3
  = 0.
 \end{align*}
 The expressions for $b'$ and $c'$ when $\mathfrak{A} = 0$ directly follow from their general expressions above.
\end{proof}

By composing smaller gadgets,
we can easily compute the signatures of rather large gadgets.

\begin{figure}[ht]
 \centering
 \def\capWidth{4cm}
 \captionsetup[subfigure]{width=\capWidth}
 \subcaptionbox{\label{subfig:gadget:compute:binary:diamond_inside}Inner structure}[\capWidth]{
  \begin{tikzpicture}[scale=\scale,transform shape,node distance=\nodeDist,semithick]
   \node[external] (0)              {};
   \node[external] (1) [below of=0] {};
   \node[internal] (2) [right of=0] {};
   \node[internal] (3) [right of=1] {};
   \node[external] (4) [right of=2] {};
   \node[external] (5) [right of=3] {};
   \path (0.west) edge[postaction={decorate, decoration={
                                              markings,
                                              mark=at position 0.70  with {\arrow[>=diamond,white] {>}; },
                                              mark=at position 0.70  with {\arrow[>=open diamond]  {>}; } } }] (2)
         (1.west) edge (3)
         (2)      edge (3)
                  edge (4.east)
         (3)      edge (5.east);
   \begin{pgfonlayer}{background}
    \node[draw=\borderColor,thick,rounded corners,inner xsep=12pt,inner ysep=8pt,fit = (2) (3)] {};
   \end{pgfonlayer}
  \end{tikzpicture}
 }
 \qquad
 \subcaptionbox{\label{subfig:gadget:compute:binary:diamond_outside}Outer structure}[\capWidth]{
  \begin{tikzpicture}[scale=\scale,transform shape,node distance=\nodeDist,semithick]
   \node[external] (0)              {};
   \node[internal] (1) [right of=0] {};
   \node[square]   (2) [right of=1] {};
   \node[internal] (3) [right of=2] {};
   \node[external] (4) [right of=3] {};
   \path (0) edge             (1)
         (1) edge[bend left, postaction={decorate, decoration={
                                                    markings,
                                                    mark=at position 0.999 with {\arrow[>=diamond,white] {>}; },
                                                    mark=at position 0.999 with {\arrow[>=open diamond]  {>}; } } }] (2)
             edge[bend right] (2)
         (2) edge[bend left]  (3)
             edge[bend right] (3)
         (3) edge             (4);
   \begin{pgfonlayer}{background}
    \node[draw=\borderColor,thick,rounded corners,inner xsep=12pt,inner ysep=12pt,fit = (1) (3)] {};
   \end{pgfonlayer}
  \end{tikzpicture}
 }
 \qquad
 \subcaptionbox{\label{subfig:gadget:compute:binary:diamond}Entire binary gadget}[\capWidth]{
  \begin{tikzpicture}[scale=\scale,transform shape,node distance=\nodeDist,semithick]
   \node[external] (0)                    {};
   \node[internal] (1) [      right of=0] {};
   \node[internal] (2) [above right of=1] {};
   \node[internal] (3) [below right of=1] {};
   \node[internal] (4) [below right of=2] {};
   \node[external] (5) [      right of=4] {};
   \path (0) edge (1)
         (1) edge (2)
             edge (3)
         (2) edge (3)
             edge (4)
         (3) edge (4)
         (4) edge (5);
   \begin{pgfonlayer}{background}
    \node[draw=\borderColor,thick,rounded corners,inner xsep=12pt,inner ysep=8pt,fit = (1) (2) (3) (4)] {};
   \end{pgfonlayer}
  \end{tikzpicture}
 }
 \caption{Decomposition of a binary gadget.
 All circle vertices are assigned $\langle a,b,c \rangle$
 and the square vertex in~(\subref{subfig:gadget:compute:binary:diamond_outside})
 is assigned the signature of the gadget in~(\subref{subfig:gadget:compute:binary:diamond_inside}).}
 \label{fig:gadget:compute:binary:diamond_construction}
\end{figure}

\begin{lemma} \label{lem:compute:binary:diamond:construction}
 Suppose $\kappa \ge 3$ is the domain size and $a,b,c \in \mathbb{C}$.
 Let $\langle a,b,c \rangle$ be a succinct ternary signature of type $\tau_3$.
 If $\langle a,b,c \rangle$ is assigned to every vertex of the gadget in Figure~\ref{subfig:gadget:compute:binary:diamond},
 then the resulting signature is the succinct binary signature $\langle x,y \rangle$ of type $\tau_2$,
 where
 \begin{align*}
  x = {}&  a^4 + 6 (\kappa - 1) a^2 b^2 + 16 (\kappa - 1) a b^3 + 12 (\kappa - 1) (\kappa - 2) a b^2 c + 12 (\kappa - 1) (\kappa - 2) a b c^2\\
        &+ 4 (\kappa - 1) (\kappa - 2) (\kappa - 3) a c^3 + 3 (\kappa - 1) (5 \kappa - 7) b^4 + 4 (\kappa - 1) (\kappa - 2) (\kappa+5) b^3 c\\
        &+ 6 (\kappa - 1) (\kappa - 2) (7 \kappa - 18) b^2 c^2 + 12 (\kappa - 3)^2 (\kappa - 1) (\kappa - 2) b c^3\\
        &+ (\kappa - 1) (\kappa - 2) (\kappa^3 - 9 \kappa^2 + 29 \kappa - 32) c^4
        \qquad \text{and}\\
  y = {}&  2 a^3 b + (\kappa + 4) a^2 b^2 + 4 (\kappa - 2) a^2 b c + (\kappa - 2) a^2 c^2 + 2 (9 \kappa - 11) a b^3 + 2 (\kappa - 2) (3 \kappa + 8) a b^2 c\\
        &+ 2 (\kappa - 2) (12 \kappa - 31) a b c^2 + 2 (\kappa - 2) (2 \kappa^2 - 11 \kappa + 16) a c^3 + (7 \kappa^2 + 3 \kappa - 24) b^4\\
        &+ 2 (\kappa - 2) (\kappa^2 + 31 \kappa - 70) b^3 c + (\kappa - 2) (48 \kappa^2 - 234 \kappa + 301) b^2 c^2\\
        &+ 2 (\kappa - 2) (6 \kappa^3 - 45 \kappa^2 + 121 \kappa - 116) b c^3 + (\kappa - 2) (\kappa - 3) (\kappa^3 - 7 \kappa^2 + 19 \kappa - 20) c^4.
 \end{align*}
\end{lemma}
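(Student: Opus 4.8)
The plan is to exploit the decomposition of the diamond gadget displayed in Figure~\ref{fig:gadget:compute:binary:diamond_construction}. First I would compute the signature of the inner structure in Figure~\ref{subfig:gadget:compute:binary:diamond_inside}: this is precisely the quaternary gadget treated in Lemma~\ref{lem:compute:quaternary:I}, so assigning $\langle a,b,c\rangle$ to its two vertices yields the succinct quaternary signature $f=\langle f_{\subMat{1}{1}{1}{1}},f_{\subMat{1}{1}{1}{2}},\dots,f_{\subMat{1}{2}{3}{4}}\rangle$ of type $\tau_4$, whose nine entries are the explicit polynomials in $a,b,c,\kappa$ recorded there. Since $\langle a,b,c\rangle$ is domain invariant, so is $f$; and since the whole composed diamond gadget is built entirely from domain invariant signatures, its signature is domain invariant and hence — like every domain invariant binary signature — has a succinct representation $\langle x,y\rangle$ of type $\tau_2$, so only the two entries $x$ and $y$ need to be evaluated.

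Next I would treat the outer structure in Figure~\ref{subfig:gadget:compute:binary:diamond_outside} as a gadget built from the two circle vertices (each assigned $\langle a,b,c\rangle$) and the square vertex (assigned $f$), where each circle vertex is joined to the square by a pair of parallel edges and carries one dangling edge. Fixing the dangling colors $g$ (left) and $h$ (right) — with $g=h$ to obtain $x$ and $g\ne h$ to obtain $y$ — the Holant value is the fourfold sum, over colors $i_1,i_2$ on the left edge-pair and $j_1,j_2$ on the right edge-pair, of $\langle a,b,c\rangle(g,i_1,i_2)\cdot f(i_1,i_2,j_2,j_1)\cdot\langle a,b,c\rangle(h,j_1,j_2)$, where $f$ is evaluated at the four internal-edge colors in the order prescribed by the planar embedding. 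I would organize this sum exactly as in the proofs of Lemma~\ref{lem:compute:binary:parallel_edges} and Lemma~\ref{lem:compute:quaternary:I}: partition the choices of $(i_1,i_2)$ and of $(j_1,j_2)$ according to their coincidence pattern with $g$, with $h$, and with each other, so that on each block all three factors are constant; multiply by the number of colorings in the block (a polynomial in $\kappa$); and sum. Substituting the entries of $f$ from Lemma~\ref{lem:compute:quaternary:I} and collecting terms then produces the claimed expressions for $x$ and $y$.

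The main obstacle is the sheer bookkeeping: there are many coincidence patterns to enumerate (each of $(i_1,i_2)$ and $(j_1,j_2)$ may be monochromatic or not, may or may not hit $g$, may or may not hit $h$, and these interact), and several of the quaternary entries of $f$ coincide, which must be tracked carefully to avoid miscounting. As a safeguard I would apply the standard sanity check of setting $a=b=c=1$: the construction has five internal edges, so a correct computation must return $\kappa^5$ for both $x$ and $y$. An alternative and more mechanical route — useful as an independent verification — is to observe that $x$ and $y$ are each polynomials in $\kappa$ of degree at most the number of internal edges, namely $5$; one can therefore compute the diamond gadget's signature symbolically for each of the six domain sizes $3\le\kappa\le 8$ (a finite computation per value of $\kappa$, via the tensor-contraction or matrix-trace techniques used in Lemma~\ref{lem:compute:binary:anti-gadget}) and interpolate the coefficients as elements of $(\Z[a,b,c])[\kappa]$.
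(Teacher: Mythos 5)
Your proposal is correct and follows essentially the same route as the paper's proof: decompose the diamond as in Figure~\ref{fig:gadget:compute:binary:diamond_construction}, compute the inner gadget's succinct quaternary signature $f$ via Lemma~\ref{lem:compute:quaternary:I}, then evaluate the outer gadget by partitioning the assignments to the four parallel internal edges into blocks of equal contribution and substituting the entries of $f$. The sanity check at $a=b=c=1$ (expecting $\kappa^5$) and the alternative of interpolating in $\kappa$ from small domain sizes are both also used elsewhere in the paper's Appendix~\ref{sec:compute} and are reasonable safeguards.
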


\begin{proof}
 Since $\langle a,b,c \rangle$ is domain invariant,
 the signature of this gadget is also domain invariant.
 Any domain invariant binary signature has a succinct signature of type $\tau_2$.

\begin{table}[p]
 \begin{align*}
  x = {}&  a^6 + 9 (\kappa - 1) a^4 b^2
         + 32 (\kappa - 1) a^3 b^3
         + 18 (\kappa - 1) (\kappa - 2) a^3 b^2 c
         + 12 (\kappa - 1) (\kappa - 2) a^3 b c^2\\
        &+ 2 (\kappa - 1) (\kappa - 2) (\kappa - 3) a^3 c^3
         + 3 (\kappa - 1) (16 \kappa - 7) a^2 b^4
         + 6 (\kappa - 1) (\kappa - 2) (\kappa + 19) a^2 b^3 c\\
        &+ 18 (\kappa - 1) (\kappa - 2) (4 \kappa - 7) a^2 b^2 c^2
         + 6 (\kappa - 1) (\kappa - 2) (\kappa^2 + 2 \kappa - 13) a^2 b c^3\\
        &+ 3 (\kappa - 1) (\kappa - 2) (3 \kappa^2 - 17 \kappa + 25) a^2 c^4
         + 6 (\kappa - 1) (\kappa^2 + 27 \kappa - 42) a b^5\\
        &+ 6 (\kappa - 1) (\kappa - 2) (40 \kappa - 41) a b^4 c
         + 24 (\kappa - 1) (\kappa - 2) (3 \kappa^2 + 8 \kappa - 36) a b^3 c^2\\
        &+ 6 (\kappa - 1) (\kappa - 2) (\kappa^3 + 50 \kappa^2 - 285 \kappa + 393) a b^2 c^3\\
        &+ 6 (\kappa - 1) (\kappa - 2) (13 \kappa^3 - 108 \kappa^2 + 311 \kappa - 307) a b c^4\\
        &+ 6 (\kappa - 1) (\kappa - 2) (\kappa - 3) (\kappa^3 - 8 \kappa^2 + 24 \kappa - 26) a c^5\\
        &+ (\kappa - 1) (\kappa^3 + 83 \kappa^2 - 189 \kappa + 81) b^6
         + 18 (\kappa - 1) (\kappa - 2) (4 \kappa^2 + 13 \kappa - 43) b^5 c\\
        &+ 3 (\kappa - 1) (\kappa - 2) (7 \kappa^3 + 222 \kappa^2 - 1156 \kappa + 1442) b^4 c^2\\
        &+ 2 (\kappa - 1) (\kappa - 2) (\kappa^4 + 221 \kappa^3 - 1725 \kappa^2 + 4576 \kappa - 4153) b^3 c^3\\
        &+ 3 (\kappa - 1) (\kappa - 2) (43 \kappa^4 - 441 \kappa^3 + 1791 \kappa^2 - 3393 \kappa + 2505) b^2 c^4\\
        &+ 6 (\kappa - 1) (\kappa - 2) (\kappa - 3) (3 \kappa^4 - 29 \kappa^3 + 116 \kappa^2 - 228 \kappa + 182) b c^5\\
        &+ (\kappa - 1) (\kappa - 2) (\kappa^6 - 15 \kappa^5 + 98 \kappa^4 - 361 \kappa^3 + 798 \kappa^2 - 1004 \kappa + 556) c^6\\
  \text{and}\\
  y = {}&  2 a^5 b
         + (\kappa + 8) a^4 b^2
         + 4 (\kappa - 2) a^4 b c
         + 2 (\kappa - 2) a^4 c^2
         + 4 (9 \kappa - 11) a^3 b^3
         + 2 (\kappa - 2) (3 \kappa + 17) a^3 b^2 c\\
        &+ 4 (\kappa - 2) (7 \kappa - 18) a^3 b c^2
         + 2 (\kappa - 3)^2 (\kappa - 2) a^3 c^3
         + (23 \kappa^2 + 49 \kappa - 114) a^2 b^4\\
        &+ 2 (\kappa - 2) (\kappa^2 + 94 \kappa - 147) a^2 b^3 c
         + 6 (\kappa - 2) (12 \kappa^2 - 34 \kappa + 17) a^2 b^2 c^2\\
        &+ 2 (\kappa - 2) (3 \kappa^3 + 9 \kappa^2 - 97 \kappa + 149) a^2 b c^3
         + (\kappa - 2) (9 \kappa^3 - 68 \kappa^2 + 181 \kappa - 171) a^2 c^4\\
        &+ 2 (3 \kappa^3 + 73 \kappa^2 - 183 \kappa + 99) a b^5
         + 2 (\kappa - 2) (96 \kappa^2 - 43 \kappa - 255) a b^4 c\\
        &+ 4 (\kappa - 2) (16 \kappa^3 + 94 \kappa^2 - 655 \kappa + 855) a b^3 c^2\\
        &+ 2 (\kappa - 2) (3 \kappa^4 + 159 \kappa^3 - 1233 \kappa^2 + 3164 \kappa - 2809) a b^2 c^3\\
        &+ 2 (\kappa - 2) (39 \kappa^4 - 375 \kappa^3 + 1425 \kappa^2 - 2555 \kappa + 1825) a b c^4\\
        &+ 2 (\kappa - 2) (3 \kappa^5 - 36 \kappa^4 + 181 \kappa^3 - 482 \kappa^2 + 686 \kappa - 418) a c^5\\
        &+ (\kappa^4 + 50 \kappa^3 - 17 \kappa^2 - 396 \kappa + 486) b^6\\
        &+ 2 (\kappa - 2) (28 \kappa^3 + 251 \kappa^2 - 1302 \kappa + 1467) b^5 c\\
        &+ (\kappa - 2) (19 \kappa^4 + 745 \kappa^3 - 5374 \kappa^2 + 12664 \kappa - 10320) b^4 c^2\\
        &+ 2 (\kappa - 2) (\kappa^5 + 224 \kappa^4 - 2062 \kappa^3 + 7371 \kappa^2 - 12357 \kappa + 8227) b^3 c^3\\
        &+ (\kappa - 2) (129 \kappa^5 - 1464 \kappa^4 + 6952 \kappa^3 - 17464 \kappa^2 + 23397 \kappa - 13387) b^2 c^4\\
        &+ 2 (\kappa - 2) (9 \kappa^6 - 123 \kappa^5 + 727 \kappa^4 - 2405 \kappa^3 + 4754 \kappa^2 - 5374 \kappa + 2718) b c^5\\
        &+ (\kappa - 3) (\kappa - 2) (\kappa^6 - 13 \kappa^5 + 74 \kappa^4 - 239 \kappa^3 + 470 \kappa^2 - 544 \kappa + 292) c^6.
 \end{align*}
 \caption{The signature of the gadget in Figure~\ref{subfig:gadget:compute:binary:square} is $\langle x,y \rangle$ for the $x$ and $y$ above.}
 \label{tbl:compute:xy}
 \thisfloatpagestyle{empty}
\end{table}
 
 Consider the gadget in Figure~\ref{subfig:gadget:compute:binary:diamond_inside}.
 We assign $\langle a,b,c \rangle$ to both vertices.
 By Lemma~\ref{lem:compute:quaternary:I},
 this gadget has the succinct quaternary signature $f$ of type $\tau_4$,
 where $f$ is given in Lemma~\ref{lem:compute:quaternary:I}.
 
 Now consider the gadget in Figure~\ref{subfig:gadget:compute:binary:diamond_outside}.
 We assign $\langle a,b,c \rangle$ the circle vertices and $f$ to the square vertex.
 By partitioning the internal edge assignments into parts with the same contribution to the sum,
 one can verify that this gadget has the succinct binary signature $\langle x,y \rangle$ of type $\tau_2$,
 where
 \begin{alignat*}{2}
  x = &&{}                                          &f_{\subMat{1}{1}{1}{1}} \left[a^2 + (\kappa - 1) b^2\right]\\
      &&{}+                          4 (\kappa - 1) &f_{\subMat{1}{1}{1}{2}} \left[a b + b^2 + (\kappa - 2) b c\right]\\
      &&{}+                            (\kappa - 1) &f_{\subMat{1}{1}{2}{2}} \left[2 a b + (\kappa - 2) b^2\right]\\
      &&{}+             2 (\kappa^2 - 3 \kappa + 2) &f_{\subMat{1}{1}{2}{3}} \left[a c + 2 b^2 + (\kappa - 3) b c\right]\\
      &&{}+                            (\kappa - 1) &f_{\subMat{1}{2}{1}{2}} \left[2 b^2 + (\kappa - 2) c^2\right]\\
      &&{}+             2 (\kappa^2 - 3 \kappa + 2) &f_{\subMat{1}{2}{1}{3}} \left[b^2 + 2 b c + (\kappa - 3) c^2\right]\\
      &&{}+                            (\kappa - 1) &f_{\subMat{1}{2}{2}{1}} \left[2 b^2 + (\kappa - 2) c^2\right]\\
      &&{}+             2 (\kappa^2 - 3 \kappa + 2) &f_{\subMat{1}{2}{3}{1}} \left[b^2 + 2 b c + (\kappa - 3) c^2\right]\\
      &&{}+ (\kappa^3 - 6 \kappa^2 + 11 \kappa - 6) &f_{\subMat{1}{2}{3}{4}} \left[4 b c + (\kappa - 4) c^2\right]
      \qquad \qquad \text{and} \qquad\\
  y = &&{}                            &f_{\subMat{1}{1}{1}{1}} \left[2 a b + (\kappa - 2) b^2\right]\\
      &&{}+                         4 &f_{\subMat{1}{1}{1}{2}} \left[a b + (\kappa - 2) a c + (2 \kappa - 3) b^2 + (\kappa - 2)^2 b c\right]\\
      &&{}+                           &f_{\subMat{1}{1}{2}{2}} \left[a^2 + 2 (\kappa - 2) a b + (\kappa^2 - 3 \kappa + 3) b^2\right]\\
      &&{}+            2 (\kappa - 2) &f_{\subMat{1}{1}{2}{3}} \left[2 a b + (\kappa - 3) a c + 2 (\kappa - 2) b^2 + (\kappa^2 - 4 \kappa + 5) b c\right]\\
      &&{}+                           &f_{\subMat{1}{2}{1}{2}} \left[2 b^2 + 4 (\kappa - 2) b c + (\kappa^2 - 5 \kappa + 6) c^2\right]\\
      &&{}+            2 (\kappa - 2) &f_{\subMat{1}{2}{1}{3}} \left[3 b^2 + 2 (2 \kappa - 5) b c + (\kappa^2 - 5 \kappa + 7) c^2\right]\\
      &&{}+                           &f_{\subMat{1}{2}{2}{1}} \left[2 b^2 + 4 (\kappa - 2) b c + (\kappa^2 - 5 \kappa + 6) c^2\right]\\
      &&{}+            2 (\kappa - 2) &f_{\subMat{1}{2}{3}{1}} \left[3 b^2 + 2 (2 \kappa - 5) b c + (\kappa^2 - 5 \kappa + 7) c^2\right]\\
      &&{}+ (\kappa^2 - 5 \kappa + 6) &f_{\subMat{1}{2}{3}{4}} \left[4 b^2 + 4 (\kappa - 3) b c + (\kappa^2 - 5 \kappa + 8) c^2\right].
 \end{alignat*}
 Substituting for the entries of $f$ gives the result stated in Lemma~\ref{lem:compute:binary:diamond:construction}.
\end{proof}

\begin{figure}[ht]
 \centering
 \def\capWidth{4.5cm}
 \captionsetup[subfigure]{width=\capWidth}
 \subcaptionbox{\label{subfig:gadget:compute:binary:square_inside}Inner structure}[\capWidth]{
  \begin{tikzpicture}[scale=\scale,transform shape,node distance=\nodeDist,semithick]
  \node [external] (0)              {};
  \node [internal] (1) [below of=0] {};
  \path (1) ++(-120:\nodeDist) node [internal] (2) {} ++(-150:\nodeDist) node [external] (3) {};
  \path (1) ++( -60:\nodeDist) node [internal] (4) {} ++( -30:\nodeDist) node [external] (5) {};
  \path (0) edge (1)
        (1) edge (2)
            edge (4)
        (2) edge (3)
            edge (4)
        (4) edge (5);
  \begin{pgfonlayer}{background}
   \node[draw=\borderColor,thick,rounded corners,inner xsep=12pt,inner ysep=12pt,fit = (1) (2) (4)] {};
   \end{pgfonlayer}
  \end{tikzpicture}
 }
 \qquad
 \subcaptionbox{\label{subfig:gadget:compute:binary:square_outside}Outer structure}[\capWidth]{
  \begin{tikzpicture}[scale=\scale,transform shape,node distance=\nodeDist,semithick]
   \node[external] (0)               {};
   \node[triangle] (1)  [right of=0] {};
   \node[external] (n1) [right of=0] {};
   \node[triangle] (2)  [right of=1] {};
   \node[external] (n2) [right of=1] {};
   \node[external] (3)  [right of=2] {};
   \path (0) edge             (1)
         (1) edge[bend left]  (2)
             edge[bend right] (2)
         (2) edge             (3);
   \begin{pgfonlayer}{background}
    \node[draw=\borderColor,thick,rounded corners,inner xsep=12pt,inner ysep=8pt,fit = (n1) (n2)] {};
   \end{pgfonlayer}
  \end{tikzpicture}
 }
 \qquad
 \subcaptionbox{\label{subfig:gadget:compute:binary:square}Entire binary gadget}[\capWidth]{
  \begin{tikzpicture}[scale=\scale,transform shape,node distance=\nodeDist,semithick]
   \node[internal] (0)                    {};
   \node[internal] (1) [      right of=0] {};
   \node[internal] (2) [below       of=0] {};
   \node[internal] (3) [below       of=1] {};
   \node[external] (4) [      left  of=0] {};
   \node[external] (5) [      right of=1] {};
   \path let
          \p0 = (0),
          \p2 = (2),
          \p4 = (4)
         in
          node[internal] (6) at (\x4, \y0 / 2 + \y2 / 2) {};
   \path let
          \p1 = (1),
          \p3 = (3),
          \p5 = (5)
         in
          node[internal] (7) at (\x5, \y1 / 2 + \y3 / 2) {};
   \node[external] (8) [      left  of=6] {};
   \node[external] (9) [      right of=7] {};
   \path (0) edge (1)
             edge (2)
             edge (6)
         (1) edge (3)
             edge (7)
         (2) edge (3)
             edge (6)
         (3) edge (7)
         (6) edge (8)
         (7) edge (9);
   \begin{pgfonlayer}{background}
    \node[draw=\borderColor,thick,rounded corners,inner xsep=12pt,inner ysep=8pt,fit = (0) (3) (6) (7)] {};
   \end{pgfonlayer}
  \end{tikzpicture}
 }
 \caption{Decomposition of a binary gadget.
 All circle vertices are assigned $\langle a,b,c \rangle$
 and the triangle vertices in~(\subref{subfig:gadget:compute:binary:square_outside})
 is assigned the signature of the gadget in~(\subref{subfig:gadget:compute:binary:square_inside}).}
 \label{fig:gadget:compute:binary:square_construction}
\end{figure}

\begin{lemma} \label{lem:compute:binary:square:construction}
 Suppose $\kappa \ge 3$ is the domain size and $a,b,c \in \mathbb{C}$.
 Let $\langle a,b,c \rangle$ be a succinct ternary signature of type $\tau_3$.
 If $\langle a,b,c \rangle$ is assigned to every vertex of the gadget in Figure~\ref{subfig:gadget:compute:binary:square},
 then the resulting signature is the binary succinct signature $\langle x,y \rangle$ of type $\tau_2$,
 where $x$ and $y$ are given in Table~\ref{tbl:compute:xy}.
\end{lemma}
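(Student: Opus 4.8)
The plan is to compute the signature of the square gadget by composing two smaller gadgets whose signatures are already in hand, following exactly the decomposition displayed in Figure~\ref{fig:gadget:compute:binary:square_construction}. First I would note that, since $\langle a,b,c \rangle$ is domain invariant, the signature of the gadget in Figure~\ref{subfig:gadget:compute:binary:square} is domain invariant as well, and any domain invariant binary signature has a succinct signature of type $\tau_2$; hence it suffices to determine the two values $x$ and $y$.

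Next I would identify the inner structure of Figure~\ref{subfig:gadget:compute:binary:square_inside} with the triangle gadget of Figure~\ref{subfig:gadget:compute:ternary:triangle}. By Lemma~\ref{lem:compute:ternary:triangle}, assigning $\langle a,b,c \rangle$ to its three vertices produces the succinct ternary signature $\langle a', b', c' \rangle$ of type $\tau_3$, where $a', b', c'$ are the (degree-three) polynomials in $a,b,c$ stated there. The outer structure in Figure~\ref{subfig:gadget:compute:binary:square_outside} is then precisely the two-vertex gadget of Figure~\ref{fig:gadget:compute:binary:parallel_edges} with the ternary signature $\langle a',b',c' \rangle$ assigned to each (triangle) vertex. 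Applying Lemma~\ref{lem:compute:binary:parallel_edges} with $\langle a',b',c' \rangle$ in place of $\langle a,b,c \rangle$, the composed gadget has succinct binary signature $\langle x,y \rangle$ of type $\tau_2$ with
\begin{align*}
 x &= (a')^2 + 3 (\kappa - 1) (b')^2 + (\kappa - 1) (\kappa - 2) (c')^2 \qquad \text{and}\\
 y &= 2 a' b' + \kappa (b')^2 + 4 (\kappa - 2) b' c' + (\kappa - 2) (\kappa - 3) (c')^2.
\end{align*}
Substituting the expressions for $a', b', c'$ from Lemma~\ref{lem:compute:ternary:triangle} and expanding yields the formulas recorded in Table~\ref{tbl:compute:xy}.

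The only genuine work is this symbolic expansion: $x$ and $y$ are each a quadratic form evaluated at three cubic polynomials, so the result is a degree-six polynomial in $a,b,c$ whose coefficients are polynomials of degree at most eight in $\kappa$. This bookkeeping is the main obstacle, but it is computational rather than conceptual, and I would carry it out with computer algebra. To guard against a dropped term I would apply two independent checks: (i) setting $a=b=c=1$ must give $x=y=\kappa^{8}$, since the gadget has eight internal edges (one verifies first that the triangle signature specializes to $a'=b'=c'=\kappa^{3}$, and then that the parallel-edges formulas above return $\kappa^{8}$); and (ii) recomputing $x$ and $y$ by the matrix-product-and-trace method used in the proof of Lemma~\ref{lem:compute:binary:anti-gadget}, followed by interpolation in $\kappa$, must reproduce the same polynomials.
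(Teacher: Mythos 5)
Your proposal matches the paper's proof exactly: both decompose the square gadget into the triangle gadget (Lemma~\ref{lem:compute:ternary:triangle}) fed into the parallel-edges gadget (Lemma~\ref{lem:compute:binary:parallel_edges}), plug $a',b',c'$ into the quadratic forms for $x$ and $y$, and expand. The sanity checks you add (the $a=b=c=1$ gives $\kappa^8$ test and the matrix-trace recomputation) are sound extras but do not change the argument.
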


\begin{proof}
 Since $\langle a,b,c \rangle$ is domain invariant,
 the signature of this gadget is also domain invariant.
 Any domain invariant binary signature has a succinct signature of type $\tau_2$.
 
 Consider the gadget in Figure~\ref{subfig:gadget:compute:binary:square_inside}.
 We assign $\langle a,b,c \rangle$ to all vertices.
 By Lemma~\ref{lem:compute:ternary:triangle},
 this gadget has the succinct ternary signature $f = \langle a_0, b_0, c_0 \rangle$ of type $\tau_4$,
 where $a_0$, $b_0$, and $c_0$ are given in the statement of Lemma~\ref{lem:compute:ternary:triangle} as $a'$, $b'$, and $c'$ respectively.
 
 Now consider the gadget in Figure~\ref{subfig:gadget:compute:binary:square_outside}.
 We assign $f$ to the vertices.
 By Lemma~\ref{lem:compute:binary:parallel_edges},
 the resulting gadget has the binary succinct signature $\langle x,y \rangle$ of type $\tau_2$,
 where
 \begin{align*}
  x &=   a_0^2
       + 3 (\kappa - 1) b_0^2
       + (\kappa - 1) (\kappa - 2) c_0^2
       \qquad \text{and}\\
  y &=   2 a_0 b_0
       + \kappa b_0^2
       + 4 (\kappa - 2) b_0 c_0
       + (\kappa - 2) (\kappa - 3) c_0^2.
 \end{align*}
 Substituting for $a_0$, $b_0$, and $c_0$ gives the result in Table~\ref{tbl:compute:xy}.
\end{proof}

Beyond the gadgets in this section,
there are two $9$-by-$9$ recurrence matrices that appear in our proofs (see Table~\ref{tbl:ternary:symmetric_weave_interpolation} and Table~\ref{tbl:unary:matrix:recurrence}).
No entry in those recurrence matrices is any harder to compute than any signature entry appearing in this section.
The difficulty with these recurrence matrices is the sheer number of terms that must be computed.

\section{More Binary Interpolation} \label{sec:appendix:binary}

For some settings of $a,b,c \in \mathbb{C}$,
Lemma~\ref{lem:binary:general:root_of_unity} and Lemma~\ref{lem:binary:k=3_and_b=0} do not apply.
However, these settings are easily handled on a case-by-case basis.

\begin{lemma} \label{lem:appendix:binary}
 Suppose $\kappa \ge 3$ is the domain size.
 Let $\mathcal{F}$ be a signature set containing the succinct unary signature $\langle 1 \rangle$ of type $\tau_1$
 and any of the following succinct ternary signatures of type $\tau_3$:
 \begin{enumerate}
  \item $\langle \kappa - 2 \pm i \kappa \sqrt{2 (\kappa - 2)}, \kappa - 2, -2 \rangle$;%
  \label{case:lem:appendix:k-2pmik2k-2k-2-2}
  
  \item $\langle (\kappa - 2)^2 \pm i \kappa \sqrt{\kappa^2 - 4}, -2 (\kappa - 2), 4 \rangle$;%
  \label{case:lem:appendix:k-22pmikk2-2-2k-24}
  
  \item $\langle -(2 \kappa - 3) \big[2 (\kappa - 2) \pm i \kappa \sqrt{2 (\kappa - 2)}\big], -2 (\kappa - 3) (\kappa - 2) \pm i \kappa \sqrt{2 (\kappa - 2)}, 4 (2 \kappa - 3) \rangle$
  with $\kappa \ne 4$;%
  \label{case:lem:appendix:-2k-32k-2pmik2k-2-2k-3k-2pmik2k-242k-3}
  
  \item $\langle -\kappa^2 + 2, 2, 2 \rangle$;%
  \label{case:lem:appendix:-k2+222}
  
  \item $\langle \kappa^2 - 6 \kappa + 6, -2 (\kappa - 3), 6 \rangle$;%
  \label{case:lem:appendix:k2-6k+6-2k-36}
  
  \item $\langle (\kappa - 3) (\kappa - 2)^2 \pm i \kappa (2 \kappa - 3) \sqrt{\kappa^2 - 4}, -3 (\kappa - 2)^2 \mp i \kappa \sqrt{\kappa^2 - 4}, 2 (5 \kappa - 6) \rangle$;%
  \label{case:lem:appendix:k-3k-22pmik2k-3-k2-4-3k-22mpikk2-425k-6}
  
  \item $\langle -(\kappa - 1) \big[5 (\kappa - 2) \pm 3 i \kappa \sqrt{2 (\kappa - 2)}\big], -(\kappa - 2) (3 \kappa - 5) \pm i \kappa \sqrt{2 (\kappa - 2)}, 9 \kappa - 10 \rangle$;%
  \label{case:lem:appendix:-k-15k-2mp3ik2k-2-k-23k-5mpik3k-29k-10}
  
  \item $\langle (\kappa - 1) \big[(\kappa - 2) (2 \kappa + 3) \pm 3 \kappa \sqrt{\kappa^2 - 5 \kappa + 6}\big], (\kappa - 3) (\kappa - 2) \mp \kappa \sqrt{\kappa^2 - 5 \kappa + 6}, -5 \kappa + 6 \rangle$;%
  \label{case:lem:appendix:k-1k-22k+3mp3kk2-5k+6k-3k-2pmkk2-5k+6-5k+6}
  
  \item $\langle (\kappa - 1) \big[(\kappa - 2) (2 \kappa - 7) \pm 3 i \kappa \sqrt{\kappa^2 - \kappa - 2}\big], -(\kappa - 2) (5 \kappa - 7) \mp i \kappa \sqrt{\kappa^2 - \kappa - 2}, 13 \kappa - 14 \rangle$;%
  \label{case:lem:appendix:k-1k-22k-7pm3ikk2-k-2-k-25k-7mpikk2-k-213k-14}
  
  \item $\langle 1,0,-2 \rangle$ with $\kappa = 3$;
  \label{case:lem:appendix:10-2}
  
  \item $\langle \pm i \sqrt{2}, 0, 1 \rangle$ with $\kappa = 3$;
  \label{case:lem:appendix:pmi201}
  
  \item $\langle -1 \pm i \sqrt{2}, 0, 1 \rangle$ with $\kappa = 3$;
  \label{case:lem:appendix:-1pmi201}
  
  \item $\langle -1 \pm 3 i \sqrt{3}, 0, 2 \rangle$ with $\kappa = 3$;
  \label{case:lem:appendix:-1pm3i302}
  
 \end{enumerate}
 Then
 \[
  \PlHolant(\mathcal{F} \union \{\langle x,y \rangle\}) \le_T \PlHolant(\mathcal{F})
 \]
 for any $x,y \in \mathbb{C}$,
 where $\langle x,y \rangle$ is a succinct binary signature of type $\tau_2$.
\end{lemma}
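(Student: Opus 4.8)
The plan is to dispatch each of the thirteen signature families separately, but always by the same scheme: from $\langle 1 \rangle$ and the given $\langle a,b,c \rangle$ one builds an explicit planar $\mathcal{F}$-gate whose signature is a succinct binary signature of type $\tau_2$, reduces it (up to a nonzero scalar) either to the canonical form $\langle \omega + \kappa - 1,\ \omega - 1 \rangle$ for an explicit algebraic $\omega = \omega(\kappa)$ or to a trio of $2$-by-$2$ recurrence matrices forming an Eigenvalue Shifted Triple, and then concludes by Corollary~\ref{cor:k>r:binary:interpolate}, Lemma~\ref{lem:binary:general:root_of_unity}, or Lemma~\ref{lem:binary:EST}. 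Concretely, for the general-$\kappa$ families (cases~\ref{case:lem:appendix:k-2pmik2k-2k-2-2}--\ref{case:lem:appendix:k-1k-22k-7pm3ikk2-k-2-k-25k-7mpikk2-k-213k-14}) I would feed $\langle a,b,c \rangle$ into one of the binary gadgets computed in Appendix~\ref{sec:compute} — the parallel-edge gadget of Figure~\ref{fig:gadget:compute:binary:parallel_edges} (Lemma~\ref{lem:compute:binary:parallel_edges}), or, when that degenerates, the larger ``diamond'' or ``square'' gadgets of Figures~\ref{subfig:gadget:compute:binary:diamond} and~\ref{subfig:gadget:compute:binary:square} (Lemmas~\ref{lem:compute:binary:diamond:construction} and~\ref{lem:compute:binary:square:construction}) — and then connect the result to $\langle 1 \rangle$ through the gadget of Figure~\ref{subfig:gadget:compute:unary:self-loop_with_binary} (Lemma~\ref{lem:compute:unary:construct_<1>}) to reach the form $\langle \omega + \kappa - 1,\ \omega - 1 \rangle$. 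The point of the specific parameter values listed in the lemma is precisely that the radicals occurring in $a$ or $b$ collapse when these gadget signatures are simplified, leaving an $\omega(\kappa)$ in closed form.

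Once $\langle \omega + \kappa - 1,\ \omega - 1 \rangle$ is available, the recursive construction of Figure~\ref{fig:gadget:k>r:binary:interpolation} has, by Lemma~\ref{lem:k>r:binary:interpolation:eigenvalues}, a recurrence matrix with eigenvalues $1$ and $\omega$ (after scaling by $\kappa$) and determinant $\omega \ne 0$; if $\omega$ is not a root of unity we finish by Corollary~\ref{cor:k>r:binary:interpolate}. If $\omega$ could be a root of unity, I would use that $\omega$ lies in a degree-at-most-$2$ extension $\mathbb{Q}(\sqrt{D})$ of $\mathbb{Q}$, with $D$ the relevant non-square discriminant among $2(\kappa-2)$, $\kappa^2-4$, $\kappa^2-5\kappa+6$, $\kappa^2-\kappa-2$ and the like, so that, being a root of unity, $\omega$ would have to be one of $\pm 1, \pm i, e^{\pm i\pi/3}, e^{\pm 2\pi i/3}$; comparing with the explicit formula for $\omega$ rules these out for every $\kappa$ in the stated range. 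In the few borderline instances one instead checks directly from $a,b,c$ that the three hypotheses $\omega \notin \{0,\pm 1\}$, $\mathfrak{B} \ne 0$, and the $\mathfrak{C}$/$\omega$ condition of Lemma~\ref{lem:binary:general:root_of_unity} hold, and invokes that lemma. For the four $\kappa = 3$ families (cases~\ref{case:lem:appendix:10-2}--\ref{case:lem:appendix:-1pm3i302}), which are exactly the subcases excluded in the proof of Lemma~\ref{lem:binary:k=3_and_b=0}, I would rerun that argument: build the sequence $N(s)$ from the alternative construction of Figure~\ref{fig:gadget:binary:interpolation:alternative}, and either choose $s$ so that $\omega^s$ avoids the degeneracies or, as there, extract from $N(0),\dots,N(3)$ three nonsingular matrices forming an Eigenvalue Shifted Triple and apply Lemma~\ref{lem:binary:EST}.

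The hard part here is organizational rather than conceptual: there is no single gadget that works for all thirteen families, because the most economical construction — connecting $\langle a,b,c \rangle$ directly to $\langle 1 \rangle$ — collapses for several of them (e.g. in case~\ref{case:lem:appendix:k-2pmik2k-2k-2-2} the entry $2b + (\kappa - 2)c$ is already $0$, so the recurrence matrix becomes diagonal, and in case~\ref{case:lem:appendix:10-2} the two eigenvalues come out as $\pm 3$, whose ratio is a root of unity). One therefore has to pick, case by case, a gadget whose recurrence matrix is genuinely two-dimensional and whose eigenvalue ratio provably avoids all roots of unity, carry out the corresponding symbolic computation of the gadget signature as a polynomial in $\kappa$ (and in the sign $\varepsilon = \pm 1$) using the methods of Appendix~\ref{sec:compute}, and then verify the resulting non-root-of-unity condition. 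Each individual verification is routine; the real labor is assembling the thirteen of them and locating a non-degenerate construction in each.
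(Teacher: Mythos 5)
The overall framework you propose---dispatch each case by building a succinct binary signature with an Appendix~\ref{sec:compute} gadget, identifying the eigenvalues $x+(\kappa-1)y$ and $x-y$ of its recurrence matrix, and then verifying that their ratio is not a root of unity so that Corollary~\ref{cor:k>r:binary:interpolate} applies---is exactly the paper's strategy. The paper's proof uses precisely three gadgets (the unary-attached gadget of Figure~\ref{subfig:gadget:binary:unary} and the diamond and square gadgets of Figures~\ref{subfig:gadget:compute:binary:diamond} and~\ref{subfig:gadget:compute:binary:square}), picks one per case, and then simply observes that the two resulting eigenvalues have distinct complex norms, which is enough. Two smaller issues in your write-up: the gadget you cite, Figure~\ref{subfig:gadget:compute:unary:self-loop_with_binary}, produces a \emph{unary} signature (Lemma~\ref{lem:compute:unary:construct_<1>}), not the binary $\langle \omega+\kappa-1,\omega-1\rangle$ you want---what is actually used is the binary gadget of Figure~\ref{subfig:gadget:binary:unary}, and in any case rewriting a binary $\langle x,y\rangle$ in that $\omega$-form is a rescaling, not a gadget; and your Galois fallback (that a root of unity in a quadratic field has order $d$ with $\phi(d)\le 2$) is sound but is heavier than needed, since in every case the distinct-complex-norm check already succeeds.

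The one genuine gap is your treatment of cases~\ref{case:lem:appendix:10-2}--\ref{case:lem:appendix:-1pm3i302}. You propose to re-run the Eigenvalue Shifted Triple argument from Lemma~\ref{lem:binary:k=3_and_b=0}, but these four families are precisely the cases excluded from that lemma, and they were excluded because its hypotheses fail there: case~\ref{case:lem:appendix:10-2} makes the $\omega$ of that proof undefined ($c(2a+c)=0$), and cases~\ref{case:lem:appendix:pmi201}--\ref{case:lem:appendix:-1pm3i302} force $\omega=0$, $\omega^2=1$, or $\omega^3=1$, so $\omega$ does not have order at least~$4$ and the argument that extracts three nonsingular matrices from $N(0),\dotsc,N(3)$ no longer goes through (for instance when $\omega$ has order~$3$ one has $N(3)=N(0)$). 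The paper does not use EST for these cases at all; it applies the same one-shot recipe as in the general-$\kappa$ cases. For case~\ref{case:lem:appendix:10-2} and case~\ref{case:lem:appendix:-1pm3i302} the diamond gadget gives eigenvalues $3,15$ and $5(1\pm i\sqrt3),\,2(5\pm2\sqrt3)$ respectively, with distinct norms; and for cases~\ref{case:lem:appendix:pmi201} and~\ref{case:lem:appendix:-1pmi201} the unary-attached gadget already gives eigenvalues $2\pm i\sqrt2$ vs.\ $-1\pm i\sqrt2$, respectively $1\pm i\sqrt2$ vs.\ $-2\pm i\sqrt2$, again with distinct norms. You would need to replace your EST plan with such a direct construction for those four subcases.
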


\begin{proof}
 In each case,
 we use the recursive construction in Figure~\ref{fig:gadget:k>r:binary:interpolation}.
 We simply state which gadget we use,
 the signature of that gadget,
 and the eigenvalues of its associated recurrence matrix (cf.~Lemma~\ref{lem:k>r:binary:interpolation:eigenvalues}).
 Then the result easily follows from Corollary~\ref{cor:k>r:binary:interpolate} as the eigenvalues have distinct complex norms.
 
 We use four possible gadgets,
 which are in Figure~\ref{subfig:gadget:binary:unary},
 Figure~\ref{subfig:gadget:compute:binary:diamond},
 and Figure~\ref{subfig:gadget:compute:binary:square}.
 The signatures for the last two gadgets are given by
 Lemma~\ref{lem:compute:binary:diamond:construction} and Lemma~\ref{lem:compute:binary:square:construction} respectively.
 
 \begin{enumerate}
  \item For $\langle \kappa - 2 \pm i \kappa \sqrt{2 (\kappa - 2)}, \kappa - 2, -2 \rangle$,
  we first use the gadget in Figure~\ref{subfig:gadget:compute:binary:diamond}.
  Let $\gamma = \pm i \sqrt{2 (\kappa - 2)}$.
  Up to a nonzero factor of $\frac{(\gamma - 2)^7 \gamma^2 (\gamma + 2)^3}{64}$,
  the signature of the gadget is $\langle -1, 1 \rangle$,
  which means the eigenvalues are $\kappa -2$ and $-2$.
  If $\kappa \ne 4$,
  then these eigenvalues have distinct complex norms.
  Otherwise, $\kappa = 4$ and we use the gadget in Figure~\ref{subfig:gadget:compute:binary:square}.
  Up to a factor of $\pm 65536 i$,
  the signature of this gadget is $\langle 1, -3 \rangle$,
  which means the eigenvalues are $-8$ and $4$.
  
  \item For $\langle (\kappa - 2)^2 \pm i \kappa \sqrt{\kappa^2 - 4}, -2 (\kappa - 2), 4 \rangle$,
  we first use the gadget in Figure~\ref{subfig:gadget:compute:binary:diamond}.
  Let $\gamma = \pm i \sqrt{\kappa^2 - 4}$.
  Up to a nonzero factor of $-4 (\kappa - 2) \kappa^3 (\kappa^2 - 4 \gamma - 8)$,
  the signature of this gadget is $\langle \kappa^2 - 6 \kappa + 4, -2 (\kappa - 4) \rangle$,
  which means the eigenvalues are $-(\kappa - 2)^2$ and $\kappa^2 - 4 k - 4$.
  If $\kappa \ge 5$,
  then these eigenvalues have opposite signs but cannot be the negative of each other.
  Thus, they have distinct complex norms.
  The same conclusion holds for $\kappa = 3$ by direct inspection.
  Otherwise, $\kappa = 4$ and we use the gadget in Figure~\ref{subfig:gadget:compute:binary:square}.
  Up to a factor of $2097152$,
  the signature of this gadget is $\langle 5,1 \rangle$,
  which means the eigenvalues are $8$ and $4$.
  
  \item For $\langle -(2 \kappa - 3) \big[2 (\kappa - 2) \pm i \kappa \sqrt{2 (\kappa - 2)}\big], -2 (\kappa - 3) (\kappa - 2) \pm i \kappa \sqrt{2 (\kappa - 2)}, 4 (2 \kappa - 3) \rangle$,
  we have $\kappa \ne 4$.
  We use the gadget in Figure~\ref{subfig:gadget:compute:binary:diamond}.
  Let $\gamma = \pm i \sqrt{2 (\kappa - 2)}$.
  Up to a nonzero factor of $-4 (\kappa - 2) \kappa^6 (3 \kappa - 4) (4 \kappa^2 - 28 \kappa + 41 - 4 \gamma (2 \kappa - 5))$,
  the signature of the gadget is $\frac{1}{\kappa} \langle 3 \kappa - 4, \kappa - 4 \rangle$,
  which means the eigenvalues are $\kappa - 2$ and~$2$.
  
  \item For $\langle -\kappa^2 + 2, 2, 2 \rangle$,
  we use the gadget in Figure~\ref{subfig:gadget:compute:binary:diamond}.
  Up to a nonzero factor of $(\kappa - 2) \kappa^5$,
  the signature for this gadget is $\langle \kappa^2 + 2 \kappa - 4, -4 \rangle$,
  which means the eigenvalues are $(\kappa - 2) \kappa$ and $\kappa (\kappa + 2)$.
  
  \item For $\langle \kappa^2 - 6 \kappa + 6, -2 (\kappa - 3), 6 \rangle$,
  we use the gadget in Figure~\ref{subfig:gadget:compute:binary:diamond}.
  Up to a nonzero factor of $(\kappa - 2) \kappa^5$,
  the signature for this gadget is $\langle \kappa^2 + 2 \kappa - 4, -4 \rangle$,
  which means the eigenvalues are $(\kappa - 2) \kappa$ and $\kappa (\kappa + 2)$.
  
  \item For $\langle (\kappa - 3) (\kappa - 2)^2 \pm i \kappa (2 \kappa - 3) \sqrt{\kappa^2 - 4}, -3 (\kappa - 2)^2 \mp i \kappa \sqrt{\kappa^2 - 4}, 2 (5 \kappa - 6) \rangle$,
  we use the gadget in Figure~\ref{subfig:gadget:compute:binary:diamond}.
  Let $\gamma = \pm i \sqrt{\kappa^2 - 4}$.
  Up to a nonzero factor of $(\gamma - 2)^2 (\gamma + 2)^2 (\kappa - 2) \kappa [7 \kappa^2 + 60 \kappa - 164 + 8 \gamma (3 \kappa - 10)]$,
  the signature of the gadget is $\langle -\kappa^4 + 6 \kappa^3 + 4 \kappa^2 - 24 \kappa + 16, 2 (\kappa^3 - 2 \kappa^2 - 8 \kappa + 8)\rangle$,
  which means the eigenvalues are $\lambda_1 = (\kappa - 2) \kappa (\kappa^2 + 2 \kappa - 4)$ and $\lambda_2 = -\kappa (\kappa + 2) (\kappa^2 - 6 \kappa + 4)$.
  For $3 \le \kappa \le 5$,
  one can directly check that these eigenvalues have distinct complex norms.
  For $\kappa \ge 6$,
  we have $\lambda_2 < 0$,
  so these eigenvalues have the same complex norm preciously when $\lambda_1 = -\lambda_2$.
  However, $\lambda_1 + \lambda_2 = 4 \kappa^3 \ne 0$,
  so the eigenvalues have distinct complex norms.
  
  \item For $\langle -(\kappa - 1) \big[5 (\kappa - 2) \pm 3 i \kappa \sqrt{2 (\kappa - 2)}\big], -(\kappa - 2) (3 \kappa - 5) \pm i \kappa \sqrt{2 (\kappa - 2)}, 9 \kappa - 10 \rangle$,
  we first use the gadget in Figure~\ref{subfig:gadget:compute:binary:diamond}.
  Let $\gamma = \pm i \sqrt{2 (\kappa - 2)}$.
  Up to a nonzero factor of $-(\kappa - 2) (\kappa - 1) \kappa^5 [81 \kappa^2 - 756 \kappa + 1252 - 24 (9 \kappa - 26) \gamma]$,
  the signature of this gadget is $\langle 5 \kappa - 6, \kappa - 6 \rangle$,
  which means the eigenvalues are $\kappa - 2$ and~$4$.
  If $\kappa \ne 6$,
  then these eigenvalues have distinct complex norms.
  Otherwise, $\kappa = 6$ and we use the gadget in Figure~\ref{subfig:gadget:compute:binary:square}.
  Up to a factor of $-17199267840 (1169 \pm 450 i \sqrt{2})$,
  the signature of this gadget is $\langle 7, 13\rangle$,
  which means the eigenvalues are~$72$ and~$-6$.
  
  \item For $\langle (\kappa - 1) \big[(\kappa - 2) (2 \kappa + 3) \pm 3 \kappa \sqrt{\kappa^2 - 5 \kappa + 6}\big], (\kappa - 3) (\kappa - 2) \mp \kappa \sqrt{\kappa^2 - 5 \kappa + 6}, -5 \kappa + 6 \rangle$,
  we first use the gadget in Figure~\ref{subfig:gadget:compute:binary:diamond}.
  Let $\gamma = \pm \sqrt{\kappa^2 - 5 \kappa + 6}$.
  Up to a factor of $(\kappa - 2) (\kappa - 1) \kappa^5 [313 \kappa^2 - 1500 \kappa + 1764 -24 (13 \kappa - 30) \gamma]$,
  the signature of this gadget is $\langle \kappa^3 - 3 \kappa^2 + 3, -\kappa + 3\rangle$,
  which means the eigenvalues are $\lambda_1 = (\kappa - 2)^2 \kappa$ and $\lambda_2 = \kappa (\kappa^2 - 3 \kappa + 1)$.
  If $\kappa \ge 4$,
  these eigenvalues are positive,
  so they have the same complex norm preciously when $\lambda_1 = \lambda_2$.
  However, $\lambda_1 - \lambda_2 = - (\kappa - 3) \kappa \ne 0$,
  so the eigenvalues have distinct complex norms.
  Otherwise, $\kappa = 3$ and we use the gadget in Figure~\ref{subfig:gadget:compute:binary:square}.
  Up to a factor of~$9565938$,
  the signature of this gadget is $\langle 5,2 \rangle$,
  which means the eigenvalues are~$9$ and~$3$.
  
  \item
  For $\langle (\kappa - 1) \big[(\kappa - 2) (2 \kappa - 7) \pm 3 i \kappa \sqrt{\kappa^2 - \kappa - 2}\big], -(\kappa - 2) (5 \kappa - 7) \mp i \kappa \sqrt{\kappa^2 - \kappa - 2}, 13 \kappa - 14 \rangle$,
  we use the gadget in Figure~\ref{subfig:gadget:compute:binary:diamond}.
  Let $\gamma = \pm i \sqrt{\kappa^2 - \kappa - 2}$.
  Up to a nonzero factor of $(\kappa - 2) (\kappa - 1) \kappa^5 [119 \kappa^2 + 76 \kappa - 772 + 24 (5 \kappa - 22) \gamma]$,
  the signature of this gadget is $\langle -\kappa^3 + 7 \kappa^2 - 4 \kappa - 3, 2 \kappa^2 - 7 \kappa - 3 \rangle$,
  which means the eigenvalues are $\lambda_1 = (\kappa - 2) \kappa^2$ and $\lambda_2 = -\kappa (k^2 - 5 \kappa - 3)$.
  For $3 \le \kappa \le 5$,
  one can directly check that these eigenvalues have distinct complex norms.
  For $\kappa \ge 6$,
  we have $\lambda_2 < 0$,
  so these eigenvalues have the same complex norm preciously when $\lambda_1 = -\lambda_2$.
  However, $\lambda_1 + \lambda_2 = 3 \kappa (\kappa + 1) \ne 0$,
  so the eigenvalues have distinct complex norms.
  
  \item For $\langle 1,0,-2 \rangle$ with $\kappa = 3$,
  we use the gadget in Figure~\ref{subfig:gadget:compute:binary:diamond}.
  Up to a factor of~$3$,
  the signature of this gadget is $\langle 11,-4 \rangle$,
  which means the eigenvalues are~$3$ and~$15$.
  
  \item For $\langle \pm i \sqrt{2}, 0, 1 \rangle$ with $\kappa = 3$,
  we use the gadget in Figure~\ref{subfig:gadget:binary:unary}.
  The signature of this gadget is $\langle \pm i \sqrt{2}, 1 \rangle$,
  which means the eigenvalues are $2 \pm i \sqrt{2}$ and $-1 \pm i \sqrt{2}$.
  
  \item For $\langle -1 \pm i \sqrt{2}, 0, 1 \rangle$ with $\kappa = 3$,
  we use the gadget in Figure~\ref{subfig:gadget:binary:unary}.
  The signature of this gadget is $\langle -1 \pm i \sqrt{2}, 1 \rangle$,
  which means the eigenvalues are $1 \pm i \sqrt{2}$ and $-2 \pm i \sqrt{2}$.
  
  \item For $\langle -1 \pm 3 i \sqrt{3}, 0, 2 \rangle$ with $\kappa = 3$,
  we use the gadget in Figure~\ref{subfig:gadget:compute:binary:diamond}.
  Up to a factor of~$72$,
  the signature of this gadget is $\frac{1}{3} \langle 25 \pm 13 \sqrt{3}, -5 \pm i \sqrt{3} \rangle$,
  which means the eigenvalues are $5 (1 \pm i \sqrt{3})$ and $2 (5 \pm 2 \sqrt{3})$.
  \qedhere
 \end{enumerate}
\end{proof}

\section{Invariance Properties from Row Eigenvectors} \label{sec:appendix:invariant}

The purpose of this section is to show how a recursive construction in an interpolation proof
can be used to form a hypothesis about possible invariance properties.
We often find that no matter what constructions one considers,
all signatures they produce satisfy certain invariance.
Instead of defining this notion formally,
we prove the following lemma as an example.
After this lemma and its proof,
we explain that this invariance can be suggested by certain recursive constructions in an alternative proof of Theorem~\ref{thm:edge_coloring:k=r},
that it is $\SHARPP$-hard to count edge $\kappa$-coloring over planar $\kappa$-regular graphs for all $\kappa \ge 3$.
This alternative proof uses the interpolation techniques that we developed in Section~\ref{sec:interpolation}.

\begin{lemma} \label{lem:k=r:invariant}
 Suppose $\kappa \ge 3$ is the domain size.
 If $F$ is a planar $\{\AD_{\kappa,\kappa}\}$-gate with succinct quaternary signature $\langle a,b,c,d,e \rangle$ of type $\tau_\text{color}$,
 then $a + c = b + d$.
\end{lemma}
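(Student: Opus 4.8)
The plan is to exploit a parity/Eulerian structure already present in the $\AD_{\kappa,\kappa}$ signature, mirroring the way Lemma~\ref{lem:coloring:k=r:f(P_0)=0} was proved, but now extracting an \emph{equality} between the surviving entries rather than the vanishing of one entry. First I would fix an $\{\AD_{\kappa,\kappa}\}$-gate $F$ with four dangling edges and recall from Lemma~\ref{lem:coloring:k=r:f(P_0)=0} that its succinct signature of type $\tau_\text{color}$ has a zero entry on $P_0$, so only the five entries $a = f(P_{\subMat{1}{1}{1}{1}})$, $b = f(P_{\subMat{1}{1}{2}{2}})$, $c = f(P_{\subMat{1}{2}{1}{2}})$, $d = f(P_{\subMat{1}{2}{2}{1}})$, $e = f(P_{\subMat{1}{2}{3}{4}})$ are in play. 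The key object is the graph obtained by closing up $F$ along a suitable pairing of its dangling edges, together with the Parity Condition (Lemma~\ref{lem:k=r:parity_condition}) applied to the cut separating $F$ from its mirror image, exactly as in the proof of Lemma~\ref{lem:coloring:k=r:f(P_0)=0}.

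The main step is to produce a single identity by counting the same closed-up signature grid in two different ways. Concretely, I would take the gadget $F$ and connect its four external edges to the four external edges of a second copy of $F$ (or to a fixed small $\{\AD_{\kappa,\kappa}\}$-gate realizing $=_2 \otimes\, =_2$ in two inequivalent planar pairings), obtaining two planar $\kappa$-regular signature grids whose Holant values are, respectively, a quadratic form in $(a,b,c,d,e)$ and a \emph{differently grouped} quadratic form in the same variables. Summing over the internal $\kappa$-colorings, the contribution from colorings in which the four connecting edges receive two distinct colors splits according to whether the matched pairs are monochromatic or not, and the Parity Condition forces the two "crossing patterns" ($xxyy$ versus $xyyx$ versus $xyxy$) to occur with matching multiplicities. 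Equating the two evaluations, or better, differencing them, I expect all terms involving $a^2$, $e^2$, and cross terms with $e$ to cancel, leaving a relation of the shape $(a + c - b - d)\cdot(\text{nonzero combinatorial factor}) = 0$, from which $a + c = b + d$ follows since the factor (a product of falling factorials of $\kappa$) is nonzero for $\kappa \ge 3$.

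An alternative, possibly cleaner route, is the recursive-construction / row-eigenvector method advertised in the section title: view the operation "glue another $\{\AD_{\kappa,\kappa}\}$-vertex onto two of the dangling edges" as multiplication of the succinct signature vector by an explicit $5\times 5$ recurrence matrix $M$ (computable just like the matrices in Table~\ref{tbl:compute:xy} or in the proof of Theorem~\ref{thm:edge_coloring:k=r}), observe that the linear functional $\ell(a,b,c,d,e) = a + c - b - d$ is a \emph{left null vector} of $M$ (equivalently, $(1,-1,1,-1,0)$ is a row eigenvector with eigenvalue $0$), and check that $\ell$ annihilates the base case (two parallel edges, with signature $f_0 = \langle 1,0,0,1,0\rangle$, indeed $1 + 0 - 0 - 1 = 0$). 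Since every planar $\{\AD_{\kappa,\kappa}\}$-gate can be built up from edges by such vertex-additions (and symmetric variants), induction on the number of vertices gives $\ell(f) = 0$ for the signature $f$ of any such gate. I would need to verify the null-vector claim for each of the few "shapes" of vertex-addition (adding a vertex on two dangling edges in the various planar positions, and the self-loop/parallel-edge reductions), which is a finite, mechanical check using the contribution-counting bookkeeping of Appendix~\ref{sec:compute}.

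The hard part will be the bookkeeping that establishes the null-vector property uniformly: one must be careful that the recurrence matrix genuinely captures \emph{all} ways an $\{\AD_{\kappa,\kappa}\}$-gate is assembled, including how the ordering of dangling edges (the diamond-marked first input and the counterclockwise convention) interacts with the partition $\tau_\text{color}$, and that the Parity Condition is invoked correctly to kill the "$P_0$-type" contributions at every inductive step rather than only at the end. Once the single identity $\ell(M\,v) = 0$ whenever $\ell(v)=0$ is nailed down for each elementary move, the rest is a routine induction, and the nonvanishing of the combinatorial normalizing factors for $\kappa \ge 3$ is immediate. I expect the whole argument to be short modulo this one verification, and I would present it via the eigenvector route, relegating the explicit recurrence-matrix entries and the per-move check to the computation appendix.
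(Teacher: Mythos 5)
Your proposal misses the key idea of the paper's proof, which is a direct Kempe-chain bijection: fix two colors $g,y$, follow the unique $g$-$y$-alternating path $\pi_1$ inside $F$ starting at the first external edge; planarity forces $\pi_1$ to exit at the second or fourth external edge, never the third, since otherwise it would have to cross the disjoint alternating path emanating from the second external edge; swapping colors along $\pi_1$ then bijects the full colorings counted by $a$ and $c$ (with either choice of which of $g,y$ is on the first edge) onto those counted by $b$ and $d$, giving $2a+2c = 2b+2d$ directly. Neither of your two routes contains an analogue of this one-move argument, and the Parity Condition (which you lean on) plays no role in the actual proof of this lemma.

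Both of your routes have substantive gaps. In Route~1, gluing two copies of $F$ yields a single Holant value, not a relation; to extract a relation you would have to compare two inequivalent planar pairings, and the claimed cancellation "leaving $(a+c-b-d)\cdot(\text{nonzero factor})=0$" is hoped-for, not derived. Moreover the step where you say the Parity Condition "forces the two crossing patterns to occur with matching multiplicities" misreads Lemma~\ref{lem:k=r:parity_condition}: it constrains the parities of per-color counts across a cut, not the multiplicities of distinct cyclic patterns on a fixed arity-$4$ boundary. In Route~2, the elementary moves that build a planar $\{\AD_{\kappa,\kappa}\}$-gate do not stay within arity $4$: attaching a $\kappa$-regular vertex to two dangling edges of an arity-$4$ gadget produces a gadget of arity $\kappa$, so the induction must carry a coherent family of invariance relations at \emph{every} arity (whose succinct types grow with arity), which is far from the "finite, mechanical check" you describe. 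Indeed the paper notes at the end of Appendix~\ref{sec:appendix:invariant} that an analogous closure property for ternary constructions is one it does not know how to prove, which should be taken as a warning about this route. A smaller slip: $(1,-1,1,-1,0)$ is not a left null vector of the relevant $5\times 5$ recurrence matrix --- in the paper's alternative proof it is a row eigenvector with eigenvalue $-1$ --- although this does not affect the intended use, since $\ell M v = \lambda \,\ell v = 0$ whenever $\ell v = 0$ for any row eigenvector $\ell$.
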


\begin{proof}
 Fix two distinct colors $g,y \in [\kappa]$.
 We define the \emph{swap} of an edge colored $g$ or $y$ to be the opposite of these two colors.
 That is, swapping the color of an edge colored $g$ (resp.~$y$) gives the same edge colored $y$ (resp.~$g$).
 The $i$th external edge of $F$ is the external edge that corresponds to the $i$th input of $F$.
 Recall that the input edges of $F$ are ordered cyclically.
 
 For $1 \le i \le 4$,
 let $S_i$ (resp.~$S_i'$) be the set of colorings of the edges (both internal and external)
 of $F$ with an external coloring in the partition $P_i$ of the succinct signature type $\tau_\text{color}$
 such that the first external edge of $F$ is colored $g$ (resp.~$y$) and the remaining external edges are either colored $g$ or $y$ (as dictated by $P_i$).
 Note that $|S_i| = |S_i'|$ for $1 \le i \le 4$.
 Furthermore, the sizes of these sets do not depend on the choice of $g,y \in [\kappa].$
 Thus, it suffices to show that
 \begin{equation} \label{eqn:k=r:invariant}
  |S_1 \union S_1' \union S_3 \union S_3'|
  =
  |S_2 \union S_2' \union S_4 \union S_4'|.
 \end{equation}
 
 Let $\sigma \in S_1 \union S_1' \union S_3 \union S_3'$ be a coloring of $F$.
 Starting at the first external edge of $F$,
 there is a unique path $\pi_1$ that alternates in edge colors between~$g$ and~$y$ and terminates at another external edge of $F$.
 Suppose for a contradiction that this path terminates at the third external edge of $F$.
 Also consider the unique path $\pi_2$ that starts at the second external edge of $F$,
 alternates in edge colors between~$g$ and~$y$,
 and must terminate at the fourth external edge of $F$.
 These two paths must cross somewhere since their ends are crossed.
 By planarity, they must cross at a vertex, and yet they must be vertex disjoint.
 This is a contradiction.
 Therefore, the path $\pi_1$ either terminates at the second or fourth external edge of $F$.

 Suppose $\pi_1$ terminates at the second external edge of $F$.
 If $\sigma \in S_1$ (resp.~$\sigma \in S_1'$),
 then swapping the colors of every edge in $\pi_1$ gives a new coloring $\pi_1' \in S_2'$ (resp.~$\pi_1' \in S_2$).
 Similarly,
 if $\sigma \in S_3$ (resp.~$\sigma \in S_3'$),
 then swapping the colors of every edge in $\pi_1$ gives a new coloring $\pi_1' \in S_4'$ (resp.~$\pi_1' \in S_4$).
 
 Otherwise, $\pi_1$ terminates at the fourth external edge of $F$.
 If $\sigma \in S_1$ (resp.~$\sigma \in S_1'$),
 then swapping the colors of every edge in $\pi_1$ gives a new coloring $\pi_1' \in S_4'$ (resp.~$\pi_1' \in S_4$).
 Similarly,
 if $\sigma \in S_3$ (resp.~$\sigma \in S_3'$),
 then swapping the colors of every edge in $\pi_1$ gives a new coloring $\pi_1' \in S_2'$ (resp.~$\pi_1' \in S_2$).
 
 Furthermore, this mapping from $S_1 \union S_1' \union S_3 \union S_3'$ to $S_2 \union S_2' \union S_4 \union S_4'$ is invertible.
 Therefore, we have established~(\ref{eqn:k=r:invariant}), as desired.
\end{proof}

Now we give an alternative proof of Theorem~\ref{thm:edge_coloring:k=r}.
The recursive construction in this proof will suggest the invariance in Lemma~\ref{lem:k=r:invariant}.

Let $q(x,\kappa) = x^3 - x^2 + x - (\kappa - 1)$.
First we determine the nature of the roots of $q(x,\kappa)$.
 
\begin{lemma} \label{lem:invariant:roots_nature}
 For all $\kappa \in \Z$,
 the polynomial $q(x,\kappa)$ in $x$ has one real root $r \in \R$ and two nonreal complex conjugate roots $\alpha, \overline{\alpha} \in \mathbb{C}$,
 such that $\alpha + \overline{\alpha} = 1 - r$ and $\alpha \overline{\alpha} = r^2 - r + 1$.
 
 Furthermore,
 if $q(x,\kappa)$ is reducible in $\Q[x]$ and $\kappa \ge 3$,
 then $r \ge 2$ is an integer.
\end{lemma}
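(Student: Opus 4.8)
\textbf{Proof plan for Lemma~\ref{lem:invariant:roots_nature}.}
The plan is to first handle the analytic claims about the root structure by elementary calculus and Vieta's formulas, and then handle the arithmetic claim about reducibility by the rational root theorem. For the first part, I would observe that $q(x,\kappa)$ is a real cubic, so it has either one or three real roots (counting multiplicity). To rule out three real roots, I would compute the discriminant of $q(x,\kappa)$ with respect to $x$ and show it is negative for all integers $\kappa$; alternatively, and perhaps more cleanly, I would note that $q'(x,\kappa) = 3x^2 - 2x + 1$ has discriminant $4 - 12 = -8 < 0$, so $q'(x,\kappa) > 0$ everywhere, meaning $q(x,\kappa)$ is strictly increasing in $x$ and therefore has exactly one real root $r$. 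The remaining two roots $\alpha, \overline{\alpha}$ are then a nonreal complex conjugate pair (they cannot be real, and they cannot be a repeated real root since $q$ is strictly monotone). For the relations $\alpha + \overline{\alpha} = 1 - r$ and $\alpha\overline{\alpha} = r^2 - r + 1$, I would apply Vieta: the sum of all three roots is $1$ (coefficient of $x^2$ is $-1$), so $\alpha + \overline{\alpha} = 1 - r$; the product of all three roots is $\kappa - 1$ (constant term is $-(\kappa-1)$), so $\alpha\overline{\alpha} = (\kappa-1)/r$. To get the stated form $r^2 - r + 1$, I would use the second elementary symmetric function: the sum of pairwise products is $1$ (coefficient of $x$), i.e. $r(\alpha + \overline{\alpha}) + \alpha\overline{\alpha} = 1$, hence $\alpha\overline{\alpha} = 1 - r(1-r) = r^2 - r + 1$.

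For the second part, suppose $q(x,\kappa)$ is reducible over $\Q[x]$ with $\kappa \ge 3$. Since $q$ is a monic cubic with integer coefficients, reducibility over $\Q$ forces a linear factor over $\Q$, and by Gauss's Lemma (or the rational root theorem directly) that factor is $x - r$ with $r \in \Z$ dividing the constant term $-(\kappa - 1)$. Thus $r$ is an integer root of $q(x,\kappa)$. It remains to show $r \ge 2$. Since $q(x,\kappa)$ is strictly increasing in $x$, I would locate $r$ by evaluating $q$ at small integers: $q(1,\kappa) = 1 - 1 + 1 - (\kappa - 1) = 2 - \kappa \le -1 < 0$ for $\kappa \ge 3$, so $r > 1$, hence $r \ge 2$ since $r$ is an integer. (One could also check $q(2,\kappa) = 8 - 4 + 2 - (\kappa-1) = 7 - \kappa$, so $r = 2$ exactly when $\kappa = 7$, and $r > 2$ when $\kappa > 7$; this is not needed but is a useful sanity check.)

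I do not expect any serious obstacle here: the cubic is tame because its derivative has negative discriminant, which simultaneously gives monotonicity (hence the one-real-root claim) and rules out repeated roots. The only mild care needed is bookkeeping with Vieta's formulas to land on the exact algebraic form $r^2 - r + 1$ rather than the equivalent $(\kappa-1)/r$, and making sure the reducibility argument correctly invokes Gauss's Lemma to pass from a rational linear factor to an integer root dividing $\kappa - 1$.
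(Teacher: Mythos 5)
Your proposal is correct, but it takes a somewhat different (and arguably cleaner) route than the paper on both halves. For the root-structure claim, the paper computes the discriminant of $q$ itself, namely $\operatorname{disc}_x(q(x,\kappa)) = -27\kappa^2 + 68\kappa - 44$, and checks it is always negative; you instead observe that $q'(x,\kappa) = 3x^2 - 2x + 1$ has discriminant $-8 < 0$, so $q$ is strictly increasing in $x$. Your route is strictly less computation, and it gives the monotonicity directly as a reusable fact. For the $r \ge 2$ claim, the paper rearranges $q(r,\kappa) = 0$ into $\kappa = r^3 - r^2 + r + 1$ and argues (implicitly using monotonicity of the right-hand side) that $\kappa \ge 3$ forces $r \ge 2$ when $r$ is an integer; you instead note $q(1,\kappa) = 2 - \kappa < 0$ for $\kappa \ge 3$ and invoke the monotonicity you already established to conclude $r > 1$. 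These are essentially the same inequality viewed from two sides, but your version avoids restating the cubic relation and reuses your earlier observation, which makes the proof a bit tighter. One small bookkeeping point to include in a full write-up: for a real cubic, a repeated root is necessarily real, so strict monotonicity does rule out all repeated roots and not just repeated real ones — you mention this but it is worth stating explicitly so the ``two nonreal complex conjugate roots'' conclusion is airtight.
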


\begin{proof}
 The discriminant of $q(x,\kappa)$ with respect to $x$ is $\operatorname{disc}_x(q(x,\kappa)) = -27 \kappa^2 + 68 \kappa - 44 < 0$,
 so $q(x,\kappa)$ has one real root $r \in \R$ and two nonreal complex conjugate roots $\alpha, \overline{\alpha} \in \mathbb{C}$.
 We have
 \begin{align*}
  \alpha + \overline{\alpha} + r& = 1\\
  \alpha \overline{\alpha} + (\alpha + \overline{\alpha}) r &= 1\\
  \alpha \overline{\alpha} r & = \kappa - 1.
 \end{align*}
 It follows that $\alpha + \overline{\alpha} = 1 - r$,
 $\alpha \overline{\alpha} = r^2 - r + 1$,
 and
 \begin{align}
  \kappa &= r^3 - r^2 + r + 1. \label{eqn:appendix:invariant:roots_nature}
 \end{align}
 
 If $q(x,\kappa)$ is reducible in $\Q[x]$ with $\kappa \ge 3$,
 then $r \in \Z$ by Gauss's Lemma
 and so $r \ge 2$ by~(\ref{eqn:appendix:invariant:roots_nature}).
\end{proof}
 
\begin{lemma} \label{lem:appendix:satisfy_LC}
 If $\kappa \ge 3$ is an integer,
 then the roots of $x^3 - x^2 + x - (\kappa - 1)$ satisfy the lattice condition.
\end{lemma}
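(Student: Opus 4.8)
The plan is to reduce this to the machinery we have already assembled, namely Lemma~\ref{lem:lattice_condition:Sn_An} together with the cubic case Lemma~\ref{lem:interpolation:lattice_condition:cubic}, by a careful case analysis on how $q(x) = x^3 - x^2 + x - (\kappa-1)$ factors over $\Q$. By Lemma~\ref{lem:invariant:roots_nature}, $q(x,\kappa)$ always has one real root $r$ and a nonreal complex conjugate pair $\alpha,\overline\alpha$; in particular the three roots never all have the same complex norm unless possibly $|r| = |\alpha|$, so the hypothesis of Lemma~\ref{lem:lattice_condition:Sn_An} on complex norms is essentially automatic once we know the Galois group is $S_3$ or $A_3$. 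So first I would handle the irreducible case: if $q(x,\kappa)$ is irreducible over $\Q$, its Galois group is a transitive subgroup of $S_3$, hence $S_3$ or $A_3$, and since the discriminant $-27\kappa^2 + 68\kappa - 44$ is negative (computed in Lemma~\ref{lem:invariant:roots_nature}) it is not a square, forcing the group to be $S_3$. Then I need to rule out the possibility that all roots have equal norm: if $|r| = |\alpha| = |\overline\alpha|$ then $r^2 = \alpha\overline\alpha = r^2 - r + 1$, giving $r = 1$ and hence $\kappa = 2$ by~(\ref{eqn:appendix:invariant:roots_nature}), contradicting $\kappa \ge 3$. So Lemma~\ref{lem:lattice_condition:Sn_An} applies and we are done in the irreducible case.

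Next I would handle the reducible case. By Gauss's Lemma, if $q(x,\kappa)$ factors over $\Q$ it must have a linear factor $x - r$ with $r \in \Z$, and by Lemma~\ref{lem:invariant:roots_nature} we get $r \ge 2$ and $\kappa = r^3 - r^2 + r + 1$. Then $q(x) = (x-r)(x^2 + bx + c)$ where $b = r - 1$ and $c = r^2 - r + 1$ (the product of the nonreal pair) — both positive since $r \ge 2$, so the quadratic factor $x^2 + bx + c$ is irreducible over $\Q$ (negative... actually it has the nonreal conjugate pair as roots, so it is irreducible). Now suppose the roots $r, \alpha, \overline\alpha$ fail the lattice condition: there is $(i,j,k) \in \Z^3 \setminus \{0\}$ with $i + j + k = 0$ and $r^i \alpha^j \overline\alpha^k = 1$. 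Applying complex conjugation (which is in the Galois group and fixes $r$, swaps $\alpha,\overline\alpha$) gives $r^i \alpha^k \overline\alpha^j = 1$; dividing, $(\alpha/\overline\alpha)^{j-k} = 1$. The key sub-step is then: $\alpha/\overline\alpha$ is not a root of unity. Since $|\alpha| = |\overline\alpha| = \sqrt{c}$ with $c = r^2 - r + 1 > 1$, this is not immediate from a norm argument; instead I would argue that if $\alpha/\overline\alpha$ were a primitive $d$-th root of unity lying in $\Q(\alpha)$, a degree-$2$ extension, then $\phi(d) \mid 2$, so $d \in \{1,2,3,4,6\}$, and then $\alpha^2/\overline\alpha^2 + \text{(its conjugate)}$ or similar symmetric functions force $\alpha^2 + \overline\alpha^2$, or the ratio itself, to take one of finitely many algebraic values incompatible with $\alpha + \overline\alpha = 1 - r$, $\alpha\overline\alpha = r^2 - r + 1$ for integer $r \ge 2$ — this amounts to checking a handful of explicit polynomial conditions in $r$ with no integer solution $\ge 2$. (Concretely: $\alpha/\overline\alpha = e^{i\theta}$ with $\cos\theta = (\alpha^2 + \overline\alpha^2)/(2\alpha\overline\alpha) = ((1-r)^2 - 2(r^2-r+1))/(2(r^2-r+1))$, and requiring $\theta$ to be a rational angle with $\cos\theta \in \{0,\pm\frac12,\pm 1\}$ gives finitely many equations in $r$, each with no solution $r \ge 2$.) Granting this, $j = k$, so $i = -2j$ and $r^{-2j}(\alpha\overline\alpha)^j = 1$, i.e. $(c/r^2)^j = 1$ where $c = r^2 - r + 1 \ne r^2$ for $r \ge 2$; since $c/r^2$ is a positive real $\ne 1$ this forces $j = 0$, hence $i = j = k = 0$, a contradiction. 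Therefore the lattice condition holds.

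The main obstacle I expect is precisely the sub-step that $\alpha/\overline\alpha$ is not a root of unity in the reducible case: the easy norm-based argument is unavailable because the two nonreal roots are conjugate and hence have equal modulus. The cleanest route is probably to observe that $\alpha/\overline\alpha = \alpha^2/(\alpha\overline\alpha) = \alpha^2/c$, so $\alpha/\overline\alpha$ generates the same field $\Q(\alpha) = \Q(\sqrt{-D})$ for the appropriate $D$, and a root of unity in an imaginary quadratic field lies in $\{\pm 1, \pm i, \text{sixth roots}\}$; ruling each out reduces to checking that $c/r^2$, $(\text{Re }\alpha)/|\alpha|$ etc. avoid the finitely many bad values, which is a short finite computation in $r$. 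An alternative, possibly slicker, is to invoke Lemma~\ref{lem:interpolation:lattice_condition:cubic} applied to the full cubic only in the irreducible case and fall back on this direct argument in the reducible case — but either way the reducible analysis is where the real work lies. The rest (discriminant sign, the norm computation $r = 1 \Rightarrow \kappa = 2$, Gauss's Lemma) is bookkeeping already packaged in Lemma~\ref{lem:invariant:roots_nature} and Lemma~\ref{lem:lattice_condition:Sn_An}.
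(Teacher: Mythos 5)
Your proof is correct and follows essentially the same two-part structure as the paper's: irreducible case disposed of quickly, then the reducible case via Lemma~\ref{lem:invariant:roots_nature} (so $r \ge 2$ integer), Galois conjugation to reduce to showing $\alpha/\overline\alpha$ is not a root of unity, the $\phi(d) \mid 2$ argument, a finite check over $d \in \{1,2,3,4,6\}$, and finally coprimality of $r$ and $r^2 - r + 1$ to force triviality. The one place you diverge is the irreducible case, where you reach for Lemma~\ref{lem:lattice_condition:Sn_An} (so you must identify the Galois group as $S_3$ via the negative discriminant, and verify the norms aren't all equal via $r = 1 \Rightarrow \kappa = 2$); the paper instead invokes Lemma~\ref{lem:interpolation:lattice_condition:cubic}, under which the lattice condition for an irreducible cubic is immediate from the presence of the nonzero $x^2$ term, with no need to touch the Galois group or the norms. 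Your route works but does a bit of extra bookkeeping. In the reducible case your $\cos(2\phi) = \frac{-r^2-1}{2(r^2-r+1)} \in \{-1, -\tfrac12\}$ check is a reparametrization of the paper's constraint on $\rho^{-1} = -r + \frac{1}{r-1}$, and both correctly find no admissible integer $r \ge 2$.
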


\begin{proof}
 If $q(x,\kappa)$ is irreducible in $\Q[x]$,
 then its Galois group is $S_3$ or $A_3$ and
 so its roots satisfy the lattice condition by Lemma~\ref{lem:interpolation:lattice_condition:cubic}.
 
 Otherwise,
 $q(x,\kappa)$ is reducible in $\Q[x]$.
 By Lemma~\ref{lem:invariant:roots_nature},
 $q(x,\kappa)$ has one real root $r \in \Z$ satisfying $r \ge 2$
 and two nonreal complex conjugate roots $\alpha, \overline{\alpha} \in \mathbb{C}$
 satisfying $\alpha + \overline{\alpha} = 1 - r$ and $\alpha \overline{\alpha} = r^2 - r + 1$.
 Suppose there exist $i,j,k \in \Z$ such that $\alpha^i \overline{\alpha}^j = r^k$ and $i + j = k$.
 We want to show that $i=j=k=0$.

 There is an element in the Galois group of $q(x,\kappa)$ that fixes $\mathbb{Q}$ pointwise and swaps $\alpha$ and $\overline{\alpha}$.
 Thus $\alpha^j \overline{\alpha}^i = r^k$.
 Dividing these two equations gives $(\alpha / \overline{\alpha})^{i-j} = 1$.
 We claim that $\omega = \alpha / \overline{\alpha}$ cannot be a root of unity and hence $i=j$.
 For a contradiction,
 suppose $\omega$ is a $d$th primitive root of unity.
 Let $f(x) = (x - \alpha) (x - \overline{\alpha}) = x^2 + (r-1) x + (r^2-r+1) \in \mathbb{Z}[x]$.
 Then $\omega$ belongs to the splitting field of $f$ over $\mathbb{Q}$,
 which is a degree~$2$ extension over $\mathbb{Q}$.
 This implies that the Euler totient function $\phi(d) \divides 2$.
 Therefore $d \in \{1,2,3,4,6\}$.
 Let $\rho = \frac{\alpha + \overline{\alpha}}{\alpha \overline{\alpha}} = \frac{1 + \omega}{\omega \overline{\alpha}} = \frac{1-r}{r^2-r+1} \in \mathbb{Q}$.
 Since $r \ge 2$,
 we have $\rho \neq 0$ and hence $d \neq 2$.
 Moreover, $f(x) = x^2 - (2 + \omega + \omega^{-1}) \rho^{-1} x + (2 + \omega + \omega^{-1}) \rho^{-2}$.
 Notice that the quantity $2 + \omega + \omega^{-1}$ is $4,1,2,3$ respectively, when $d = 1,3,4,6$.
 As $(2 + \omega + \omega^{-1}) \rho^{-2} \in \mathbb{Z}$,
 we get that $\rho^{-1}$ must be an integer when $d = 3,4,6$ and half an integer when $d = 1$.
 However $\rho^{-1} = -r + \frac{1}{r-1}$.
 The only possibility is $r = 3$ and $d = 1$;
 yet it is easy to check that $\omega \neq 1$ when this holds.
 This proves the claim.

 From $\alpha \overline{\alpha} = r^2 - r + 1$,
 we have $(r^2 - r + 1)^i = (\alpha \overline{\alpha})^i = r^k$.
 Since $r$ and $r^2 - r + 1$ are relatively prime and $r \ge 2$,
 we must have $i = k = 0$.
\end{proof}

\begin{figure}[t]
 \centering
 \captionsetup[subfigure]{labelformat=empty}
 \subcaptionbox{$N_0$}{
  \begin{tikzpicture}[scale=\scale,transform shape,node distance=\nodeDist,semithick]
   \node[external] (0)                    {};
   \node[external] (1) [right       of=0] {};
   \node[internal] (2) [below right of=1] {};
   \node[external] (3) [below left  of=2] {};
   \node[external] (4) [left        of=3] {};
   \node[external] (5) [above right of=2] {};
   \node[external] (6) [right       of=5] {};
   \node[external] (7) [below right of=2] {};
   \node[external] (8) [right       of=7] {};
    \path (0) edge[out=   0, in=135, postaction={decorate, decoration={
                                                            markings,
                                                            mark=at position 0.4   with {\arrow[>=diamond,white] {>}; },
                                                            mark=at position 0.4   with {\arrow[>=open diamond]  {>}; },
                                                            mark=at position 0.999 with {\arrow[>=diamond,white] {>}; },
                                                            mark=at position 1     with {\arrow[>=open diamond]  {>}; } } }] (2)
          (2) edge[out=-135, in=  0] (4)
              edge[out=  45, in=180] (6)
              edge[out= -45, in=180] (8);
   \begin{pgfonlayer}{background}
    \node[draw=\borderColor,thick,rounded corners,fit = (1) (3) (5) (7)] {};
   \end{pgfonlayer}
  \end{tikzpicture}}
 \qquad
 \subcaptionbox{$N_1$}{
  \begin{tikzpicture}[scale=\scale,transform shape,node distance=\nodeDist,semithick]
   \node[external]  (0)                    {};
   \node[external]  (1) [right of=0]       {};
   \node[internal]  (2) [below right of=1] {};
   \node[external]  (3) [below left  of=2] {};
   \node[external]  (4) [left        of=3] {};
   \node[external]  (5) [right       of=2] {};
   \node[internal]  (6) [right       of=5] {};
   \node[external]  (7) [above right of=6] {};
   \node[external]  (8) [right       of=7] {};
   \node[external]  (9) [below right of=6] {};
   \node[external] (10) [right       of=9] {};
   \path (0) edge[out=  0, in= 135, postaction={decorate, decoration={
                                                           markings,
                                                           mark=at position 0.4   with {\arrow[>=diamond,white] {>}; },
                                                           mark=at position 0.4   with {\arrow[>=open diamond]  {>}; } } }] (2)
         (4) edge[out=  0, in=-135, postaction={decorate, decoration={
                                                           markings,
                                                           mark=at position 0.999 with {\arrow[>=diamond,white] {>}; },
                                                           mark=at position 1     with {\arrow[>=open diamond]  {>}; } } }] (2)
         (2) edge[bend left,        postaction={decorate, decoration={
                                                           markings,
                                                           mark=at position 0.999 with {\arrow[>=diamond,white] {>}; },
                                                           mark=at position 1     with {\arrow[>=open diamond]  {>}; } } }] (6)
             edge[bend right]        (6)
         (6) edge[out= 45, in= 180]  (8)
             edge[out=-45, in= 180] (10);
   \begin{pgfonlayer}{background}
    \node[draw=\borderColor,thick,rounded corners,fit = (1) (3) (7) (9)] {};
   \end{pgfonlayer}
  \end{tikzpicture}}
 \qquad
 \subcaptionbox{$N_{k+1}$}{
  \begin{tikzpicture}[scale=\scale,transform shape,node distance=\nodeDist,semithick]
   \node[external] (0)              {};
   \node[external] (1) [above of=0] {};
   \node[external] (2) [below of=0] {};
   \node[external] (3) [right of=0] {};
   \node[external] (4) [above of=3] {};
   \node[external] (5) [below of=3] {};
   \path let
          \p1 = (0),
          \p2 = (3)
         in
          node[external] at (\x1 / 2 + \x2 / 2, \y1) {\Huge \begin{sideways}$N_k$\end{sideways}};
   \path let
          \p1 = (1),
          \p2 = (4)
         in
          node[external] (6) at (3 * \x1 / 4 + \x2 / 4, \y1) {};
   \path let
          \p1 = (1),
          \p2 = (4)
         in
          node[external] (7) at (\x1 / 4 + 3 * \x2 / 4, \y1) {};
   \path let
          \p1 = (2),
          \p2 = (5)
         in
          node[external] (8) at (3 * \x1 / 4 + \x2 / 4, \y1) {};
   \path let
          \p1 = (2),
          \p2 = (5)
         in
          node[external] (9) at (\x1 / 4 + 3 * \x2 / 4, \y1) {};
   \node[external] (10) [above left  of=6]  {};
   \node[external] (11) [below left  of=8]  {};
   \node[external] (12) [below left  of=10] {};
   \node[external] (13) [above left  of=11] {};
   \node[external] (14) [left        of=12] {};
   \node[external] (15) [left        of=13] {};
   \node[external] (16) [above right of=7]  {};
   \node[external] (17) [below right of=9]  {};
   \node[external] (18) [below right of=16] {};
   \node[external] (19) [above right of=17] {};
   \node[external] (n1) [right       of=18] {};
   \path let
          \p1 = (n1),
          \p2 = (18),
          \p3 = (19)
         in
          node[internal] (20) at (\x1, \y2 / 2 + \y3 / 2) {};
   \node[external] (n2) [right of=20] {};
   \node[external] (n3) [right of=n2] {};
   \path let
          \p1 = (n3),
          \p2 = (14)
         in
          node[external] (21) at (\x1, \y2) {};
   \path let
          \p1 = (n3),
          \p2 = (15)
         in
          node[external] (22) at (\x1, \y2) {};
   \path (6)         edge[out=  90, in=   0]     (10.center)
         (11.center) edge[out=   0, in=-110, postaction={decorate, decoration={
                                                                    markings,
                                                                    mark=at position 0.999 with {\arrow[>=diamond,white] {>}; },
                                                                    mark=at position 1     with {\arrow[>=open diamond]  {>}; } } }] (8)
         (14)        edge[out=   0, in= 180, postaction={decorate, decoration={
                                                                    markings,
                                                                    mark=at position 0.4   with {\arrow[>=diamond,white] {>}; },
                                                                    mark=at position 0.4   with {\arrow[>=open diamond]  {>}; } } }] (10.center)
         (11.center) edge[out= 180, in=   0]     (15)
         (7)         edge[out=  90, in= 180]     (16.center)
         (9)         edge[out= -90, in= 180]     (17.center)
         (16.center) edge[out=   0, in= 135, postaction={decorate, decoration={
                                                                    markings,
                                                                    mark=at position 0.999 with {\arrow[>=diamond,white] {>}; },
                                                                    mark=at position 0.999 with {\arrow[>=open diamond]  {>}; } } }] (20)
         (17.center) edge[out=   0, in=-135]     (20)
         (20)        edge[out=  45, in= 180]     (21)
                     edge[out= -45, in= 180]     (22);
   \begin{pgfonlayer}{background}
    \node[draw=\borderColor,thick,densely dashed,rounded corners,fit = (1) (2) (4) (5)] {};
    \node[draw=\borderColor,thick,rounded corners,fit = (10) (11) (12) (13) (16) (17) (n2)] {};
   \end{pgfonlayer}
  \end{tikzpicture}}
 \caption{Alternate recursive construction to interpolate $\langle 2,1,0,1,0 \rangle$.
 The vertices are assigned the signature of the gadget in Figure~\ref{fig:gadget:k=r:arity_reduction}.}
 \label{fig:gadget:invariant:weave}
\end{figure}
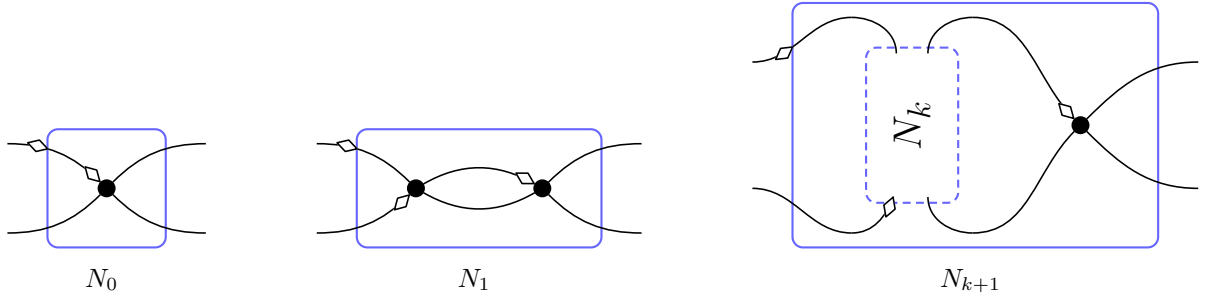

\begin{proof}[Alternative proof of Theorem~\ref{thm:edge_coloring:k=r}]
 As before,
 let $\kappa$ be the domain size of all Holant problems in this proof
 and let $\langle 2,1,0,1,0 \rangle$ be a succinct quaternary signature of type $\tau_\text{color}$.
 We reduce from $\PlHolant(\langle 2,1,0,1,0 \rangle)$ to $\PlHolant(\AD_{\kappa,\kappa})$,
 which denotes the problem of counting edge $\kappa$-colorings in planar $\kappa$-regular graphs as a Holant problem.
 Then by Corollary~\ref{cor:k=r:21010_hard},
 we conclude that $\PlHolant(\AD_{\kappa,\kappa})$ is $\SHARPP$-hard.

 Consider the gadget in Figure~\ref{fig:gadget:k=r:arity_reduction},
 where the bold edge represents $\kappa - 2$ parallel edges.
 We assign $\AD_{\kappa,\kappa}$ to both vertices.
 Up to a nonzero factor of $(\kappa-2)!$,
 this gadget has the succinct quaternary signature $f = \langle 0,1,1,0,0 \rangle$ of type $\tau_\text{color}$.
 Now consider the recursive construction in Figure~\ref{fig:gadget:invariant:weave}.
 All vertices are assigned the signature $f$.
 Let $f_s$ be the succinct quaternary signature of type $\tau_\text{color}$ for the $s$th gadget of the recursive construction.
 Then $f_0 = f$ and $f_s = M^s f_0$,
 where
 \[
  M =
  \begin{bmatrix}
   0 & 0 & 0 & \kappa - 1 & 0 \\
   1 & 0 & 0 & \kappa - 2 & 0 \\
   0 & 1 & 0 &          0 & 0 \\
   0 & 0 & 1 &          0 & 0 \\
   0 & 0 & 0 &          0 & 1
  \end{bmatrix}.
 \]
 The row vectors
 \[
  (1, -1, 1, -1, 0) 
  \qquad \text{and} \qquad
  (0,  0, 0,  0, 1)
 \]
 are linearly independent row eigenvectors of $M$,
 with eigenvalues~$-1$ and~$1$ respectively,
 that are orthogonal to the initial signature $f_0$.
 Note that our target signature $\langle 2,1,0,1,0 \rangle$ is also orthogonal to these two row eigenvectors.
 
 Up to a factor of $(x - 1) (x + 1)$,
 the characteristic polynomial of $M$ is $x^3 - x^2 + x - (\kappa - 1)$.
 The roots of this polynomial satisfy the lattice condition by Lemma~\ref{lem:appendix:satisfy_LC}.
 In particular,
 these three roots are distinct.
 By Lemma~\ref{lem:invariant:roots_nature},
 the only real root is at least~$2$.
 Thus, all five eigenvalues of $M$ are distinct,
 so $M$ is diagonalizable.
 
 The $3$-by-$3$ matrix in the upper-left corner of $[f_0\ M f_0\ \dots\ M^4 f_0]$ is
 $\left[\begin{smallmatrix} 0 & 0 & \kappa - 1 \\ 1 & 0 & \kappa - 2 \\ 1 & 1 & 0 \end{smallmatrix}\right]$.
 Its determinant is $\kappa - 1 \ne 0$.
 Thus, $[f_0\ M f_0\ \dots\ M^4 f_0]$ has rank at least~$3$,
 so by Lemma~\ref{lem:2nd_condition_implication},
 $f_0$ is not orthogonal to the three remaining row eigenvectors of $M$.
 
 Therefore, by Lemma~\ref{lem:interpolate_all_not_orthogonal},
 we can interpolate $\langle 2,1,0,1,0 \rangle$,
 which completes the proof.
\end{proof}

Notice that the row eigenvector $(1,-1,1,-1,0)$ suggests that $a - b + c - d = 0$ is an invariance shared by all signatures of symmetric ternary constructions.
Some row eigenvectors,
like $(0,0,0,0,1)$,
only indicate an invariance present in some recursive constructions.
(When $\kappa = 4$,
there are recursive constructions for which $(0,0,0,0,1)$ is not a row eigenvector of the recurrence matrix.)
The row eigenvector
$(1,-1,1,-1,0)$ is more intrinsic;
it must appear because of the invariance present in all constructions as shown in Lemma~\ref{lem:k=r:invariant}.

This suggests an approach to discover new invariance properties.
Given a set $\mathcal{F}$ of signatures,
create some recursive construction and inspect the row eigenvectors of the resulting recurrence matrix.
For example,
consider the set $\mathcal{F}_{\mathfrak{A}} = \{\langle a,b,c \rangle \st a,b,c \in \mathbb{C} \text{ and } \mathfrak{A} = 0\}$,
where $\mathfrak{A} = a - 3 b + 2 c$.
It seems that $\mathcal{F}_{\mathfrak{A}}$ is closed under symmetric ternary constructions,
such as those in Section~\ref{subsec:ternary:construct}.
In particular,
$(1,-3,2)$ is a row eigenvector of the recurrence matrix for every recursive ternary construction with symmetric signatures that we tried.
However, we do not know how to prove this closure property.

\end{document}